\newtheorem{theorem}{Theorem}[section]
\newtheorem{lemma}{Lemma}[section]
\newtheorem{proposition}{Proposition}[section]
\newtheorem{corollary}{Corollary}[section]
\theoremstyle{remark}
\newtheorem{remark}{Remark}[section]
\newtheorem{example}{Example}[section]
\title
[Haar expectations of ratios]
{Haar expectations of ratios of\\ random characteristic polynomials}
\author{A.\ Huckleberry, A.\ P\"uttmann and M.R.\ Zirnbauer}
\date{July 29, 2015}
\begin{document}
\maketitle
\begin {abstract}
We compute Haar ensemble averages of ratios of random characteristic
polynomials for the classical Lie groups $K = \mathrm{O}_N\,$,
$\mathrm{SO}_N\,$, and $\mathrm{USp}_N\,$. To that end, we start from
the Clifford-Weyl algebra in its canonical realization on the complex
$\mathcal{A}_V$ of holomorphic differential forms for a $\mathbb{C}
$-vector space $V_0\,$. From it we construct the Fock representation
of an orthosymplectic Lie superalgebra $\mathfrak{osp}$ associated to
$V_0\,$. Particular attention is paid to defining Howe's oscillator
semigroup and the representation that partially exponentiates the Lie
algebra representation of $\mathfrak{sp} \subset \mathfrak{osp}$. In
the process, by pushing the semigroup representation to its boundary
and arguing by continuity, we provide a construction of the
Shale-Weil-Segal representation of the metaplectic group.

To deal with a product of $n$ ratios of characteristic polynomials,
we let $V_0 = \mathbb{C}^n \otimes \mathbb{C}^N$ where $\mathbb
{C}^N$ is equipped with the standard $K$-representation, and focus on
the subspace $\mathcal{A}_V^K$ of $K$-equivariant forms. By Howe
duality, this is a highest-weight irreducible representation of the
centralizer $\mathfrak{g}$ of $\mathrm{Lie}(K)$ in $\mathfrak{osp}$.
We identify the $K$-Haar expectation of $n$ ratios with the character
of this $\mathfrak{g} $-representation, which we show to be uniquely
determined by analyticity, Weyl-group invariance, certain weight
constraints and a system of differential equations coming from the
Laplace-Casimir invariants of $\mathfrak{g}\,$. We find an explicit
solution to the problem posed by all these conditions. In this way,
we prove that the said Haar expectations are expressed by a Weyl-type
character formula for \emph{all} integers $N \ge 1$. This completes
earlier work of Conrey, Farmer, and Zirnbauer for the case of
$\mathrm{U}_N\,$.
\end{abstract}
%

\tableofcontents

\section{Introduction}\label{introduction}\label{varying}

In this article we derive an explicit formula for the average

\begin{equation}\label{eq:1.1}
    I(t) := \int_K Z(t,k)\, dk
\end{equation}
where $K$ is one of the classical compact Lie groups $\mathrm{O}_N\,$,
$\mathrm{SO}_N\,$, or $\mathrm{USp}_N$ equipped with Haar measure $dk$
of unit mass $\int_K dk = 1$ and
\begin{equation}\label{eq:1.2}
    Z(t,k) := \prod_{j=1}^n \frac{\mathrm{Det}(
    \mathrm{e}^{\frac{\mathrm{i}}{2}\psi_j} \mathrm{Id}_N -
    \mathrm{e}^{-\frac{\mathrm{i}}{2}\psi_j} \,k)}
    {\mathrm{Det}(\mathrm{e}^{\frac{1}{2}\phi_j}\mathrm{Id}_N -
    \mathrm{e}^{- \frac{1}{2}\phi_j} \, k)}
\end{equation}
depends on a set of complex parameters $t := (\mathrm{e}^{
\mathrm{i}\psi_1} , \ldots , \mathrm{e}^{\mathrm{i}\psi_n} ,
\mathrm{e}^{\phi_1} , \ldots, \mathrm{e}^{\phi_n})$, which satisfy
$\mathfrak{Re}\, \phi_j > 0$ for all $j = 1, \ldots, n\,$. The case
of $K = \mathrm{U}_N$ is handled in \cite{CFZ}. Note that
\begin{displaymath}
    Z(t,k) = \mathrm{e}^{\lambda_N} \prod_{j=1}^n \frac{\mathrm{Det}
    (\mathrm{Id}_N-\mathrm{e}^{-\mathrm{i}\psi_j}\,k)} {\mathrm{Det}
    (\mathrm{Id}_N - \mathrm{e}^{-\phi_j}\, k)}
\end{displaymath}
with $\lambda_N = \frac{N}{2} \sum_{j=1}^n (\mathrm{i} \psi_j -
\phi_j)$. This means that $Z(t,k)$ is a product of ratios of
characteristic polynomials, which explains the title of the article.

The Haar average $I(t)$ can be regarded as the (numerical part of
the) character of an irreducible representation of a Lie supergroup
$(\mathfrak{g}\, ,G)$ restricted to a suitable subset of a maximal
torus of $G$. The Lie superalgebra $\mathfrak{g}$ is the Howe dual
partner of the compact group $K$ in an orthosymplectic Lie
superalgebra $\mathfrak{osp} \,$. It is naturally represented on a
certain infinite-dimensional spinor-oscillator module $\mathfrak{a}
(V)$ -- more concretely, the complex of holomorphic differential
forms on the vector space $\mathbb{C}^n \otimes \mathbb{C}^N$ -- and
the irreducible representation is that on the subspace
$\mathfrak{a}(V)^K$ of $K$-equivariant forms.

To even define the character, we must exponentiate the representation
of the Lie algebra part of $\mathfrak{osp}$ on $\mathfrak{a}(V)$.
This requires going to a completion $\mathcal{A}_V$ of $\mathfrak
{a}(V)$, and can only be done partially. Nevertheless, the
represented semigroup contains enough structure to derive
Laplace-Casimir differential equations for its character.

Our explicit formula for $I(t)$ looks exactly like a classical Weyl
formula and is derived in terms of the roots of the Lie superalgebra
$\mathfrak{g}$ and the Weyl group $W$. Let us state this formula for $K = \mathrm{O}_N\,$, $\mathrm{USp}_N$ without going into the details of the $\lambda$-positive even and odd roots $\Delta^+_{\lambda,0}$ and $\Delta ^+_{\lambda ,1}$ and the Weyl group $W$ (see $\S$\ref{sect:WeylGroup} for precise formulas). If $W_\lambda$ is the isotropy subgroup of $W$ fixing the highest weight $\lambda = \lambda_N\,$, then
\begin{equation}\label{eq:WeylCharacter}
    I(t)=\sum_{[w] \in W\!\! / W_\lambda} \mathrm{e}^{ w(\lambda_N)} \frac{\prod_{\beta \in \Delta_{\lambda,1}^+}(1-\mathrm{e}^{ -w(\beta)})} {\prod_{\alpha \in \Delta_{\lambda,0}^+}
    (1 - \mathrm{e}^{-w(\alpha)})} (\ln t) \;.
\end{equation}
%
%
To prove this formula we establish certain properties of $I(t)$ which
uniquely characterize it and are satisfied by the right-hand side.
These are a weight expansion of $I(t)$ (see Corollary
\ref{cor:weightexpansion}), restrictions on the set of weights (see
Corollary \ref{cor:weights}), and the fact that $I(t)$ is annihilated
by certain invariant differential operators (see Corollary
\ref{cor:DlJchi}).

As was stated above, the case $K = \mathrm{U}_N$ is treated in
\cite{CFZ}. Here, we restrict to the compact groups $K = \mathrm{O}_N
\,$, $K = \mathrm{USp}_N\,$, and $K = \mathrm{SO}_N\,$. The cases $K
= \mathrm{O}_N$ and $K = \mathrm{USp}_N$ can be treated
simultaneously. Having established formula (\ref{eq:WeylCharacter})
for $K = \mathrm{O}_N\,$, the following argument gives a similar
result for $K = \mathrm{SO}_N\,$. Let $dk_\mathrm{O}$ and
$dk_\mathrm{SO}$ be the unit mass Haar measures for $\mathrm{O}_N$
and $\mathrm{SO}_N\,$, respectively.  The $\mathrm{O}_N$-measure $(1
+ \mathrm{Det}\, k)\, dk_\mathrm{O}$ has unit mass on $\mathrm{SO}_N$
and zero mass on $\mathrm{O}_N^- = \mathrm{O}_N \setminus
\mathrm{SO}_N\,$. It is $\mathrm{SO}_N$-invariant. Now,
\begin{eqnarray*}
    &&I_{\mathrm{SO}_N}(t) = \int_{\mathrm{SO}_N} Z(t,k)\,dk_\mathrm{SO}
    = \int_{\mathrm{O}_N} Z(t,k) (1+\mathrm{Det}\, k)\, dk_\mathrm{O} \\
    = &&\int_{\mathrm{O}_N} Z(t,k)\, dk_\mathrm{O} + \int_{\mathrm{O}_N}
    Z(t,k) \mathrm{Det}(k) \, dk_\mathrm{O} = I_{\mathrm{O}_N}(t) +
    (-1)^N I_{\mathrm{O}_N}(t')
\end{eqnarray*}
with $t' = (\mathrm{e}^{- \mathrm{i}\psi_1}, \mathrm{e}^{ \mathrm{i}
\psi_2} , \ldots, \mathrm{e}^{\mathrm{i}\psi_n}, \mathrm{e}^{\phi_1},
\ldots, \mathrm{e}^{\phi_n})$, since $\mathrm{Det}(k)\, Z(t,k) =
(-1)^N Z(t',k)$.

\subsection{Comparison with results of other approaches}

To facilitate the comparison with related work, we now present our
final results in the following explicit form. Let $x_j := \mathrm{e}
^{- \mathrm{i}\psi_j}$ and $y_l := \mathrm{e}^{-\phi_l}$. Consider
first the case of the unitary symplectic group $K = \mathrm{USp}_N$
(where $N \in 2\mathbb{N}$). Then for any pair of non-negative
integers $p,q$ in the range $q - p \le N+1$ one directly infers from
(\ref{eq:WeylCharacter}) the formula
\begin{displaymath}
    \int\limits_{\mathrm{USp}_N} \frac{\prod_{k=1}^p
    \mathrm{Det}(1 - x_k \, u)} {\prod_{l=1}^q \mathrm{Det}
    (1 - y_l \, u)} \, du = \sum_{\epsilon \in \{\pm 1\}^p} \,
    \frac{\prod_{k=1}^p x_k^{\frac{N}{2}(1-\epsilon_k)}
    \prod_{l=1}^q (1 - x_k^{\epsilon_k} y_l)}{\prod_{k\le k^\prime}
    (1 - x_k^{\epsilon_k} x_{k^\prime}^{\epsilon_{k^\prime}})
    \prod_{l < l^\prime}(1 - y_l y_{l^\prime})} \;.
\end{displaymath}
The sum on the right-hand side is over sign configurations $\epsilon
\equiv (\epsilon_1, \ldots, \epsilon_p) \in \{ \pm 1 \}^p$. The proof
proceeds by induction in $p\,$, starting from the result
(\ref{eq:WeylCharacter}) for $p = q$ and sending $x_p \to 0$ to pass
from $p$ to $p-1$. In published recent work \cite{cfs,bg}
%
%
the same formula was derived under the more restrictive condition $q \le
N/2\,$. In \cite{bg} this unwanted restriction on the parameter range
came about because the numerator and denominator on the left-hand
side were expanded \emph{separately}, ignoring the supersymmetric
Howe duality (see $\S$\ref{sect:osp-howe} of the present paper) of
the problem at hand.

For $K = \mathrm{SO}_N$ the same induction process starting from
(\ref{eq:WeylCharacter}) yields the result
\begin{displaymath}
    \int\limits_{\mathrm{SO}_N} \frac{\prod_{k=1}^p
    \mathrm{Det}(1 - x_k \, u)} {\prod_{l=1}^q \mathrm{Det}
    (1 - y_l \, u)} \, du = \sum_{\epsilon \in \{ \pm 1\}^p}
    \frac{\prod_{k=1}^p (\epsilon_k x_k)^{\frac{N}{2}(1-\epsilon_k)}
    \prod_{l=1}^q (1 - x_k^{\epsilon_k} y_l)}{\prod_{k < k^\prime}
    (1 - x_k^{\epsilon_k} x_{k^\prime}^{\epsilon_{k^\prime}})
    \prod_{l \le l^\prime}(1 - y_l y_{l^\prime})}
\end{displaymath}
as long as $q-p \le N-1\,$. Please note that this includes even the
case of the trivial group $K = \mathrm{SO}_1 = \{ \mathrm{Id} \}$
with any $p = q > 0\,$. For $K = \mathrm{O}_N$ one has an analogous
result where the sum on the right-hand side is over $\epsilon$ with
an \emph{even} number of sign reversals.

The very same formulas for $\mathrm{SO}_N$ and $\mathrm{O}_N$ were
derived in the recent literature \cite{cfs,bg} but, again, only in
the much narrower range $q \le \mathrm{Int}[N/2]$. There exist a
number of other interesting recent works
%
%
which emphasize the Lie superalgebraic and combinatorial side of the picture (see, e.g., \cite{CW1,CW2,CZ}).

\subsection{Howe duality and weight expansion}

To find an explicit expression for the integral $I(t)$, we first of
all observe that the integrand $Z(t,k)$ is the supertrace of a
representation $\rho$ of a semigroup $(T_1 \times T_+) \times K$ on
the spinor-oscillator module $\mathfrak{a}(V)$ (cf.\ Lemma
\ref{lem:STrAV}). More precisely, we start with the standard
$K$-representation space $\mathbb{C}^N$, the $\mathbb{Z}_2$-graded
vector space $U = U_0 \oplus U_1$ with $U_s \simeq \mathbb{C}^n$, and
the abelian semigroup
\begin{displaymath}
    T_1 \times T_+ := \{ (\mathrm{diag}(\mathrm{e}^{\mathrm{i}\psi_1},
    \ldots, \mathrm{e}^{\mathrm{i} \psi_n}), \mathrm{diag}
    (\mathrm{e}^{\phi_1}, \ldots, \mathrm{e}^{\phi_n}))
    \mid \mathfrak{Re}\, \phi_j > 0 \;, j = 1, \ldots, n \}
\end{displaymath}
of diagonal transformations in $\mathrm{GL}(U_1) \times \mathrm{GL}
(U_0)$. We then consider the vector space $V := U \otimes
\mathbb{C}^N$ which is $\mathbb{Z}_2$-graded by $V_s = U_s \otimes
\mathbb{C}^N$, the infinite-dimensional spinor-oscillator module
$\mathfrak{a}(V) := \wedge (V_1^*) \otimes \mathrm{S} (V_0^*)$, and a
representation $\rho$ of $(T_1 \times T_+) \times K$ on
$\mathfrak{a}(V)$. We also let $V \oplus V^\ast =: W = W_0 \oplus
W_1$ (not the Weyl group).

Averaging the product of ratios $Z(t,k)$ with respect to the compact
group $K$ corresponds to the projection from $\mathfrak{a}(V)$ onto
the vector space $\mathfrak{a}(V)^K$ of $K$-invariants (Corollary
\ref{cor:weightexpansion}). Now, Howe duality (Proposition
\ref{prop:Howeduality}) implies that $\mathfrak{a}(V)^K$ is the
representation space for an irreducible highest-weight representation
$\rho_\ast$ of the Howe dual partner $\mathfrak{g}$ of $K$ in the
orthosymplectic Lie superalgebra $\mathfrak{osp}(W)$. This
representation $\rho_\ast$ is constructed by realizing $\mathfrak{g}
\subset \mathfrak{osp}(W)$ as a subalgebra in the space of degree-two
elements of the Clifford-Weyl algebra $\mathfrak{q}(W)$. Precise
definitions of these objects, their relationships, and the Howe
duality statement can be found in $\S$\ref{sect:osp-howe}.

Using the decomposition $\mathfrak{a}(V)^K = \oplus_{\gamma \in
\Gamma}\, V_\gamma$ into weight spaces, Howe duality leads to the
weight expansion $I(\mathrm{e}^H) = \mathrm{STr}_{\mathfrak{a}(V)^K}
\, \mathrm{e}^{\rho_*(H)} = \sum_{\gamma \in \Gamma} B_\gamma \,
\mathrm{e}^{\gamma(H)}$ for $t = \mathrm{e}^H \in T_1 \times T_+\,$.
Here $\mathrm{STr}$ denotes the supertrace. There are strong
restrictions on the set of weights $\Gamma$. Namely, if $\gamma \in
\Gamma$, then $\gamma = \sum_{j=1}^n (\mathrm{i} m_j \psi_j - n_j
\phi_j)$ and $- \frac{N}{2}\le m_j \le \frac{N}{2} \le n_j$ for all
$j\,$. The coefficients $B_\gamma = \mathrm{STr}_{V_\gamma}
(\mathrm{Id})$ are the dimensions of the weight spaces (multiplied by
parity). Note that the set of weights of the representation $\rho_*$
of $\mathfrak{g}$ on $\mathfrak{a}(V)^K$ is infinite.

\subsection{Group representation and differential equations}

Before outlining the strategy for computing our character in the
infinite-dimensional setting of representations of Lie superalgebras
and groups, we recall the classical situation where $\rho_*$ is an
irreducible finite-dimensional representation of a reductive Lie
algebra $\mathfrak{g}$ and $\rho$ is the corresponding Lie group
representation of the complex reductive group $G$. In that case the
character $\chi $ of $\rho$, which automatically exists, is the trace
$\mathrm{Tr}\, \rho$, which is a radial eigenfunction of every
Laplace-Casimir operator. These differential equations can be
completely understood by their behavior on a maximal torus of $G$.

In our case we must consider the infinite-dimensional irreducible
representation $\rho_*$ of the Lie superalgebra $\mathfrak{g} =
\mathfrak{osp}$ on $\mathfrak{a}(V)^K$. Casimir elements,
Laplace-Casimir operators of $\mathfrak{osp}$, and their radial parts
have been described by Berezin \cite{B}. In the situation $U_0 \simeq
U_1$ at hand we have the additional feature that every
$\mathfrak{osp}$-Casimir element $I$ can be expressed as a bracket $I
= [ \partial , F ]$ where $\partial$ is the holomorphic exterior
derivative when we view $\mathfrak{a} (V)^K$ as the complex of
$K$-equivariant holomorphic differential forms on $V_0\,$.

To benefit from Berezin's theory of radial parts, we construct a
radial superfunction $\chi$ which is defined on an open set
containing the torus $T_1 \times T_+$ such that its numerical part
satisfies $\mathrm{num} \chi(t) = I(t)$ for all $t \in T_1\times
T_+\,$. If we had a representation $(\rho_*,\rho)$ of a Lie
supergroup $(\mathfrak{osp},G)$ at our disposal, we could define
$\chi$ to be its character, i.e.
\begin{displaymath}
    \chi(g) \stackrel{?}{=} \mathrm{STr}_{\mathfrak{a}(V)^K}
    \rho(g)\, \mathrm{e}^{\sum_j \xi_j \, \rho_*(\Xi_j)} .
\end{displaymath}
Since we don't have such a representation, our idea is to define $\chi$ as a character on a totally real submanifold $M$ of maximal dimension which contains a real form of $T_1 \times T_+$ and is invariant with respect to
conjugation by a real form $G_\mathbb{R}$ of $G$, and then to extend $\chi$ by analytic continuation.

Thinking classically we consider the even part of the Lie
superalgebra $\mathfrak{osp}(W_0 \oplus W_1)$, which is the Lie
algebra $\mathfrak{o}(W_1) \oplus \mathfrak{sp}(W_0)$. The real
structures at the Lie supergroup level come from a real form
$W_\mathbb{R}$ of $W$. The associated real forms of $\mathfrak{o}
(W_1)$ and $\mathfrak{sp}(W_0)$ are the real orthogonal Lie algebra
$\mathfrak{o}(W_{1,\mathbb{R}})$ and the real symplectic Lie algebra
$\mathfrak{sp}(W_{0,\mathbb{R}})$. These are defined in such a way
that the elements in $\mathfrak{o}(W_{1,\mathbb{R}}) \oplus
\mathfrak{sp} (W_{0,\mathbb{R}})$ and $\mathrm{i} W_\mathbb{R}$ are
mapped as elements of the Clifford-Weyl algebra via the
spinor-oscillator representation to anti-Hermitian operators on
$\mathfrak{a}(V)$ with respect to a compatibly defined unitary
structure. In this context we frequently use the unitary
representation of the real Heisenberg group $\exp( \mathrm{i} W_{0,
\mathbb{R}}) \times \mathrm{U}_1$ on the completion $\mathcal{A}_V$
of the module $\mathfrak{a}(V)$.

Since $\wedge(V^*_1)$ has finite dimension, exponentiating the spinor
representation of $\mathfrak{o}(W_{1,\mathbb{R}})$ causes no
difficulties. This results in the spinor representation of
$\mathrm{Spin}(W_{1,\mathbb{R}})$, a 2:1 covering of the compact
group $\mathrm{SO}(W_{1,\mathbb{R}})$. So in this case one easily
constructs a representation $R_1 : \, \mathrm{Spin} (W_{1,
\mathbb{R}}) \to \mathrm{U} (\mathfrak{a}(V))$ which is compatible
with $\rho_*|_{\mathfrak{o} (W_{1,\mathbb{R}})}$.

Exponentiating the oscillator representation of $\mathfrak{sp}
(W_{0,\mathbb{R}})$ on the infinite-dimensional vector space
$\mathrm{S}(V_0^*)$ requires more effort. In $\S$\ref{metaplectic
rep}, following Howe \cite{H2}, we construct the Shale-Weil-Segal
representation $R' : \, \mathrm{Mp}(W_{0, \mathbb{R}}) \to \mathrm{U}
(\mathcal{A}_V)$ of the metaplectic group $\mathrm{Mp}(W_{0,
\mathbb{R}})$ which is the 2:1 covering group of the real symplectic
group $\mathrm{Sp} (W_{ 0,\mathbb{R}})$. This is compatible with
$\rho_\ast \vert_{\mathfrak{sp} (W_{0, \mathbb{R}})}$. Altogether we
see that the even part of the Lie superalgebra representation
integrates to $G_\mathbb{R} = \mathrm{Spin} (W_{1 ,\mathbb{R}})
\times_{\mathbb{Z}_2} \mathrm{Mp}(W_{0,\mathbb{R}})\,$.

The construction of $R^\prime$ uses a limiting process coming from
the oscillator semigroup $\widetilde{\mathrm{H}}(W_0^s)$, which is
the double covering of the contraction semigroup $\mathrm{H}(W_0^s)
\subset \mathrm{Sp}(W_0)$ and has $\mathrm{Mp}(W_{0,\mathbb{R}})$ in
its boundary. Furthermore, we have $\widetilde{ \mathrm{H}} (W_0^s) =
\mathrm{Mp}(W_{0,\mathbb{R}}) \times M$ where $M$ is an analytic totally real submanifold of maximal dimension which contains a real form of the torus $T_+$ (see $\S$\ref{oscillator semigroup}). The representation $R_0 : \,\widetilde{\mathrm{H}}(W_0^s)\to \mathrm{End}(\mathcal{A}_V)$ constructed in $\S$\ref{sec:semigrouprep} facilitates the definition of the representation $R^\prime$ and of the character $\chi$ in $\S$\ref{sect:4.2} and $\S$\ref{sect:5.1}. It should be underlined that Proposition \ref{holomorphicity} ensures convergence of the superfunction $\chi(h)$, which is defined as a supertrace and exists for all $h \in \widetilde{\mathrm{H}}(W_0^s)$.

On that basis, the key idea of our approach is to exploit the fact
that every Casimir invariant $I \in \mathsf{U}(\mathfrak{g})$ is
exact in the sense that $I = [ \partial , F]$. By a standard
argument, this exactness property implies that every such invariant
$I$ vanishes in the spinor-oscillator representation. This result in
turn implies for our character $\chi$ the differential equations
$D(I) \chi = 0$ where $D(I)$ is the Laplace-Casimir operator
representing $I$. By drawing on Berezin's theory of radial parts, we
derive a system of differential equations which in combination with
certain other properties ultimately determines $\chi$.

In the case of $K = \mathrm{O}_N$ the Lie group associated to the
even part of the real form of the Howe partner $\mathfrak{g}$ is
embedded in a simple way in the full group $G_\mathbb{R}$ described
above. It is itself just a lower-dimensional group of the same form.
In the case of $K = \mathrm{USp}_N$ a sort of reversing procedure
takes places and the analogous real form is $\mathrm{USp}_{2n} \times
\mathrm{SO}_{2n}^\ast$. Nevertheless, the precise data which are used
as input into the series developments, the uniqueness theorem and the
final calculations of $\chi$ are essentially the same in the two
cases. Therefore there is no difficulty handling them simultaneously.

\section{Howe dual pairs in the orthosymplectic Lie superalgebra}
\label{sect:osp-howe}\setcounter{equation}{0}

In this chapter we collect some foundational information from
representation theory. Basic to our work is the orthosymplectic Lie
superalgebra, $\mathfrak{osp}\,$, in its realization as the space
spanned by supersymmetrized terms of degree two in the Clifford-Weyl
algebra. Representing the latter by its fundamental representation on
the spinor-oscillator module, one gets a representation of
$\mathfrak{osp}$ and of all Howe dual pairs inside of $\mathfrak{osp}
\,$. Roots and weights of the relevant representations are described
in detail.

\subsection{Notion of  Lie superalgebra}

A {\it $\mathbb{Z}_2$-grading} of a vector space $V$ over $\mathbb{K}
= \mathbb{R}$ or $\mathbb{C}$ is a decomposition $V = V_0\oplus V_1$
of $V$ into the direct sum of two $\mathbb{K}$-vector spaces $V_0$
and $V_1\,$. The elements in $(V_0 \cup V_1) \setminus\{ 0\}$ are
called {\it homogeneous}. The {\it parity function} $|\, |:(V_0 \cup
V_1) \setminus \{ 0\} \to \mathbb{Z}_2\,$, $v\in V_s \mapsto |v| =
s\,$, assigns to a homogeneous element its parity. We write $V \simeq
\mathbb{K}^{p| q}$ if $\dim_\mathbb{K} V_0 = p$ and $\dim_\mathbb{K}
V_1 = q\,$.

A {\it Lie superalgebra} over $\mathbb{K}$ is a $\mathbb{Z}_2$-graded
$\mathbb{K}$-vector space $\mathfrak{g} = \mathfrak{g}_0 \oplus
\mathfrak{g}_1$ equipped with a bilinear map $[\, , \,]: \,
\mathfrak{g} \times \mathfrak{g} \to \mathfrak{g}$ satisfying
\begin{enumerate}
\item $[\mathfrak{g}_s , \mathfrak{g}_{s^\prime} ] \subset
\mathfrak{g}_{s + s^\prime}\,$, i.e., $|[X,Y]|= |X| + |Y|$ (mod 2)
for homogeneous elements $X,Y$.
\item Skew symmetry: $[X,Y] = -(-1)^{|X||Y|}[Y,X]$ for homogeneous
$X,Y$.
\item Jacobi identity, which means that $\mathrm{ad}(X) = [X, \; ] : \,
\mathfrak{g} \to \mathfrak{g}$ is a (super-)derivation:
\begin{displaymath}
    \mathrm{ad}(X)\, [Y,Z] = [\mathrm{ad}(X)Y , Z] +
    (-1)^{|X||Y|}[Y,\mathrm{ad}(X)Z] \;.
\end{displaymath}
\end{enumerate}
\begin{example}
[$\mathfrak{gl}(V)$] Let $V = V_0 \oplus V_1$ be a
$\mathbb{Z}_2$-graded $\mathbb{K}$-vector space. There is a canonical
$\mathbb{Z}_2$-grading $\mathrm{End}(V) = \mathrm{End}(V)_0 \oplus
\mathrm{End}(V)_1$ induced by the grading of $V$:
\begin{displaymath}
    \mathrm{End}(V)_s := \{X \in \mathrm{End}(V) \mid \forall
    s^\prime \in \mathbb{Z}_2 : \, X(V_{s^\prime}) \subset
    V_{s + s^\prime}\} \;.
\end{displaymath}
The bilinear extension of $[X,Y] := XY - (-1)^{|X||Y|}YX$ for
homogeneous elements $X , Y \in \mathrm{End}(V)$ to a bilinear map
$[\, , \,] : \, \mathrm{End}(V) \times \mathrm{End}(V) \to
\mathrm{End}(V)$ gives $\mathrm{End}(V)$ the structure of a Lie
superalgebra, namely $\mathfrak{gl}(V)$. The Jacobi identity in this
case is a direct consequence of the associativity $(XY)Z = X (YZ)$
and the definition of $[X,Y]$.

In fact, for every $\mathbb{Z}_2$-graded associative algebra
$\mathcal{A}$ the bracket $[\, , \,] : \, \mathcal{A} \times
\mathcal{A} \to \mathcal{A}$ defined by $[X,Y] = XY - (-1)^{|X| |Y|}
YX$ satisfies the Jacobi identity.
\end{example}
\begin{example}[$\mathfrak{osp}(V \oplus V^\ast)$]\label{exa:osp}
Let $V = V_0 \oplus V_1$ be a $\mathbb{Z}_2$-graded
$\mathbb{K}$-vector space and put $W := V \oplus V^*$. The
$\mathbb{Z}_2$-grading of $V$ induces a $\mathbb{Z}_2$-grading $W =
W_0 \oplus W_1$ in the obvious manner: $W_0^{\vphantom{\ast}} =
V_0^{\vphantom{\ast}} \oplus V_0^*$ and $W_1^{\vphantom{\ast}} =
V_1^{\vphantom{\ast}} \oplus V_1^*\,$. Then consider the canonical
alternating bilinear form $A$ on $W_0\,$,
\begin{displaymath}
    A : \,\, W_0 \times W_0 \to \mathbb{K} \;, \quad (v + \varphi
    \, , v' + \varphi') \mapsto \varphi'(v) - \varphi(v') \;,
\end{displaymath}
and the canonical symmetric bilinear form $S$ on $W_1\,$,
\begin{displaymath}
    S : \,\, W_1 \times W_1 \to \mathbb{K} \;, \quad (v + \varphi
    \, , v' + \varphi') \mapsto \varphi'(v) + \varphi(v') \;.
\end{displaymath}
The \emph{orthosymplectic form} of $W$ is the non-degenerate bilinear
form $Q : \, W \times W \to \mathbb{K}$ defined as the orthogonal sum
$Q = A + S:$
\begin{displaymath}
    Q(w_0^{\vphantom{\prime}} + w_1^{\vphantom{\prime}} , w'_0 + w'_1)
    = A(w_0^{\vphantom{\prime}} , w'_0) + S(w_1^{\vphantom{\prime}} ,
    w'_1) \quad (w_s^{\vphantom{\prime}} \, , w_s^\prime \in
    W_s^{\vphantom{\prime}}) \;.
\end{displaymath}
Note the exchange symmetry $Q(w,w^\prime) = -(-1)^{|w| |w^\prime|}
Q(w^\prime, w)$ for $w, w^\prime \in W_0 \cup W_1\,$.

Given $Q\,$, define a complex linear bijection $\tau : \,
\mathrm{End}(W) \to \mathrm{End}(W)$ by the equation
\begin{equation}\label{eq:def-osp}
    Q(\tau(X) w , w') + (-1)^{|X||w|} Q(w, X w^\prime) = 0
\end{equation}
for all $w, w^\prime \in W_0 \cup W_1\,$. It is easy to check that
$\tau$ has the property
\begin{displaymath}
    \tau(XY) = - (-1)^{|X| |Y|} \tau(Y) \tau(X) \;,
\end{displaymath}
which implies that $\tau$ is an involutory automorphism of the Lie
superalgebra $\mathfrak{gl}(W)$ with bracket $[X,Y] = XY - (-1)^{|X|
|Y|} YX\,$. Hence the subspace $\mathfrak{osp}(W) \subset
\mathrm{End}(W)$ of $\tau$-fixed points is closed w.r.t.\ that
bracket; it is called the (complex) {\it orthosymplectic Lie
superalgebra} of $W$.
\end{example}
\begin{example}[Jordan-Heisenberg algebra]\label{exa:JH}
Using the notation of Example \ref{exa:osp}, consider the vector
space $\widetilde{W} := W \oplus \mathbb{K}$ and take it to be
$\mathbb{Z}_2$-graded by $\widetilde{W}_0 = W_0 \oplus \mathbb{K}$
and $\widetilde{W}_1 = W_1\,$. Define a bilinear mapping $[ \, , \,]
: \, \widetilde{W} \times \widetilde{W} \to \widetilde{W}$ by
\begin{displaymath}
    [\mathbb{K}\, ,\widetilde{W}] = [\widetilde{W},\mathbb{K} ] = 0
    \;, \quad [W , W] \subset \mathbb{K} \;, \quad [w , w^\prime]
    = Q(w,w^\prime) \quad (w,w^\prime \in W) \;.
\end{displaymath}
By the basic properties of the orthosymplectic form $Q\,$, the vector
space $\widetilde{W}$ equipped with this bracket is a Lie
superalgebra -- the so-called Jordan-Heisenberg algebra. Note that
$\widetilde{W}$ is two-step nilpotent, i.e., $[\widetilde{W} , [
\widetilde{W} , \widetilde{W} ] ] = 0\,$.
\end{example}

\subsubsection{Supertrace}

Let $V = V_0\oplus V_1$ be a $\mathbb{Z}_2$-graded $\mathbb K$-vector
space, and recall the decomposition $\mathrm{End}(V) = \bigoplus_{
s,\, t} \mathrm{Hom}(V_s \, , V_t)$. For $X \in \mathrm{End}(V)$, we
denote by $X = \sum_{s,\, t} X_{\, t s}$ the corresponding
decomposition of an operator. The {\it supertrace} on $V$ is the
linear function
\begin{displaymath}
    \mathrm{STr} : \,\, \mathrm{End}(V) \to \mathbb{K} \;,
    \quad X \mapsto \mathrm{Tr}\, X_{00} - \mathrm{Tr}\, X_{11} =
    \sum\nolimits_s (-1)^s \mathrm{Tr}\, X_{s s} \;.
\end{displaymath}
(If $\mathrm{dim}\, V = \infty\,$, then usually the domain of
definition of $\mathrm{STr}$ must be restricted.)

An $\mathrm{ad}$-\emph{invariant bilinear form} on a Lie superalgebra
$\mathfrak{g} = \mathfrak{g}_0 \oplus \mathfrak{g}_1$ over
$\mathbb{K}$ is a bilinear mapping $B : \, \mathfrak{g} \times
\mathfrak{g} \to \mathbb{K}$ with the properties
\begin{enumerate}
\item $\mathfrak{g}_0$ and $\mathfrak{g}_1$ are $B$-orthogonal
to each other ;%
\item $B$ is symmetric on $\mathfrak{g}_0$ and skew on
$\mathfrak{g}_1\,$;%
\item $B([X,Y],Z) = B(X,[Y,Z])$ for all $X, Y, Z \in \mathfrak{g}\,$.
\end{enumerate}
We will repeatedly use the following direct consequences of the
definition of $\mathrm{STr}$.
\begin{lemma}\label{lem:STr}
If $\mathfrak{g}$ is a Lie superalgebra in $\mathrm{End}(V)$, the
trace form $B(X,Y) = \mathrm{STr}\, (XY)$ is an $\mathrm{ad}
$-invariant bilinear form. One has $\mathrm{STr} \, [X,Y] = 0\,$.
\end{lemma}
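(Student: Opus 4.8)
The plan is to reduce everything to one computation, the graded cyclicity of the supertrace: $\mathrm{STr}(XY) = (-1)^{|X||Y|}\,\mathrm{STr}(YX)$ for homogeneous $X,Y \in \mathrm{End}(V)$. First I would record the trivial observation that an odd operator $Z \in \mathrm{End}(V)$ has vanishing diagonal blocks $Z_{00} = Z_{11} = 0$, hence $\mathrm{STr}\,Z = 0$. This already disposes of two things: the case where $X$ and $Y$ have opposite parity (then $XY$ and $YX$ are both odd, so both sides of graded cyclicity vanish, and likewise $\mathrm{STr}[X,Y] = 0$), and property (1) of an $\mathrm{ad}$-invariant form, since $X \in \mathfrak{g}_0$ and $Y \in \mathfrak{g}_1$ force $XY$ to be odd.

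For $X,Y$ of a common parity $p$ I would expand $\mathrm{STr}(XY) = \sum_{s,s'}(-1)^s\,\mathrm{Tr}(X_{ss'}Y_{s's})$ using the block decomposition $\mathrm{End}(V) = \bigoplus_{s,t}\mathrm{Hom}(V_s,V_t)$, note that homogeneity forces the surviving terms to satisfy $s' = s + p$ (mod $2$), and apply the ordinary cyclicity of $\mathrm{Tr}$ on $\mathrm{Hom}(V_{s'},V_s)$ to get $\mathrm{STr}(YX) = \sum_{s,s'}(-1)^{s'}\,\mathrm{Tr}(X_{ss'}Y_{s's}) = (-1)^p\,\mathrm{STr}(XY)$. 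Since $p \equiv |X||Y|$ (mod $2$), this is graded cyclicity. Reading off the sign $(-1)^p$ gives property (2): $B$ is symmetric on $\mathfrak{g}_0$ and skew on $\mathfrak{g}_1$; and $\mathrm{STr}[X,Y] = \mathrm{STr}(XY) - (-1)^{|X||Y|}\mathrm{STr}(YX) = 0$ follows at once. Bilinearity of $\mathrm{STr}$ and of the product then removes the homogeneity assumption everywhere.

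The only step with real content is the $\mathrm{ad}$-invariance (3), $B([X,Y],Z) = B(X,[Y,Z])$. Here I would first bootstrap graded cyclicity to its three-factor form $\mathrm{STr}(ABC) = (-1)^{|A|(|B|+|C|)}\,\mathrm{STr}(BCA)$, simply by applying the two-factor identity to the pair $(A, BC)$. Then, expanding both sides of (3) with $[X,Y] = XY - (-1)^{|X||Y|}YX$, the $\mathrm{STr}(XYZ)$ terms cancel and one is left with the identity $(-1)^{|X||Y|}\mathrm{STr}(YXZ) = (-1)^{|Y||Z|}\mathrm{STr}(XZY)$, which drops out of three-factor cyclicity: moving $Y$ past the block $XZ$ produces the sign $(-1)^{|Y|(|X|+|Z|)}$, and $(-1)^{|X||Y|}(-1)^{|Y|(|X|+|Z|)} = (-1)^{|Y||Z|}$. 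A last appeal to bilinearity finishes property (3).

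I do not expect a genuine obstacle. The one place to be careful is the bookkeeping of parity-dependent signs together with the index convention $X_{ts} \in \mathrm{Hom}(V_s,V_t)$ fixed in the text: an off-by-one in the parity constraint on the summation indices would propagate through all three properties, so I would double-check that the surviving terms in $\mathrm{STr}(XY)$ really are those with $s - s' \equiv |X|$ (mod $2$).
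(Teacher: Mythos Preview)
Your proof is correct. The paper does not actually supply a proof of this lemma: it introduces the statement with the phrase ``We will repeatedly use the following direct consequences of the definition of $\mathrm{STr}$'' and leaves it at that. Your argument fills in exactly those details---graded cyclicity from the block decomposition, the vanishing of $\mathrm{STr}$ on odd operators, and the three-factor cyclicity computation for $\mathrm{ad}$-invariance---and the sign bookkeeping in the last step checks out: moving $Y$ past $XZ$ gives $(-1)^{|Y|(|X|+|Z|)}$, and $(-1)^{|X||Y|}\cdot(-1)^{|Y|(|X|+|Z|)} = (-1)^{|Y||Z|}$ as required. There is nothing to compare approaches against; your write-up is simply the omitted verification.
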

Recalling the setting of Example \ref{exa:osp}, note that the
supertrace for $W = V \oplus V^\ast$ is odd under the
$\mathfrak{gl}$-automorphism $\tau$ fixing $\mathfrak{osp}(W)$, i.e.,
$\mathrm{STr}_W \circ \tau = - \mathrm{STr}_W\,$. It follows that
$\mathrm{STr}_W X = 0$ for any $X \in \mathfrak{osp}(W)$. Moreover,
$\mathrm{STr}_W (X_1 X_2 \cdots X_{2n+1}) = 0$ for any product of an
odd number of $\mathfrak{osp}$-elements.

\subsubsection{Universal enveloping algebra}\label{sect:UEA}

Let $\mathfrak{g}$ be a Lie superalgebra with bracket $[ \, , \, ]$. The
universal enveloping algebra $\mathsf{U}(\mathfrak{g})$ is defined as the
quotient of the tensor algebra $\mathsf{T}(\mathfrak{g}) = \oplus_{n =
0}^\infty \mathsf{T}^n (\mathfrak{g})$ by the two-sided ideal $\mathsf{J}
(\mathfrak{g})$ generated by all combinations
\begin{displaymath}
    X \otimes Y - (-1)^{|X| |Y|} Y \otimes X - [X,Y]
\end{displaymath}
for homogeneous $X , Y \in \mathsf{T}^1( \mathfrak{g}) \equiv
\mathfrak{g}\,$. If $\mathsf{U}_n(\mathfrak{g})$ is the image of
$\mathsf{T}_n(\mathfrak{g}) := \oplus_{k = 0}^n \mathsf{T}^k
(\mathfrak{g})$ under the projection $\mathsf{T} (\mathfrak{g}) \to
\mathsf{U}(\mathfrak{g}) = \mathsf{T}(\mathfrak{g}) / \mathsf{J}
(\mathfrak{g})$, the algebra $\mathsf{U}(\mathfrak{g})$ is filtered
by $\mathsf{U}( \mathfrak{g}) = \cup_{n=0}^\infty \mathsf{U}_n
(\mathfrak{g})$. The $\mathbb{Z}_2$-grading $\mathfrak{g} =
\mathfrak{g}_0 \oplus \mathfrak{g}_1$ gives rise to a
$\mathbb{Z}_2$-grading of $\mathsf{T}(\mathfrak{g})$ by
\begin{displaymath}
    | X_1 \otimes X_2 \otimes \cdots \otimes X_n | =
    \sum\nolimits_{i = 1}^n | X_i | \quad (\text{for homogenous }
    X_i \in \mathfrak{g})\;,
\end{displaymath}
and this in turn induces a canonical $\mathbb{Z}_2$-grading of
$\mathsf{U} (\mathfrak{g})$.

One might imagine introducing various bracket operations on $\mathsf
{T}(\mathfrak{g})$ and/or $\mathsf{U}(\mathfrak{g})$. However, in
view of the canonical $\mathbb{Z}_2$-grading, the natural bracket
operation to use is the \emph{supercommutator}, which is the bilinear
map $\mathsf{T}(\mathfrak{g}) \times \mathsf{T}(\mathfrak{g}) \to
\mathsf{T}(\mathfrak{g})$ defined by $\{ a , b \} := a b -
(-1)^{|a||b|} b a$ for homogeneous elements $a, \, b \in
\mathsf{T}(\mathfrak{g})$. (For the time being, we use a different
symbol $\{ \, , \, \}$ for better distinction from the bracket $[\, ,
\,]$ on $\mathfrak{g}\,$.) Since by the definition of $\mathsf{J}
(\mathfrak{g})$ one has
\begin{displaymath}
    \{ \mathsf{T}(\mathfrak{g}),\mathsf{J}(\mathfrak{g}) \} =
    \{ \mathsf{J}(\mathfrak{g}), \mathsf{T}(\mathfrak{g}) \}
    \subset \mathsf{J}(\mathfrak{g}) \;,
\end{displaymath}
the supercommutator descends to a well-defined map $\{\,,\,\} : \,
\mathsf{U}(\mathfrak{g})\times\mathsf{U}(\mathfrak{g}) \to
\mathsf{U}(\mathfrak{g})$.
\begin{lemma}\label{lem:2.2}
If $\mathfrak{g}$ is a Lie superalgebra, the supercommutator
$\{\,,\,\}$ gives $\mathsf{U}(\mathfrak{g})$ the structure of another
Lie superalgebra in which $\{ \mathsf{U}_n(\mathfrak{g}) ,
\mathsf{U}_{n^\prime}(\mathfrak{g})\} \subset \mathsf{U}_{n +
n^\prime - 1} (\mathfrak{g})$.
\end{lemma}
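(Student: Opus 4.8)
The plan is to handle the two assertions separately. For the first, I would note that the canonical $\mathbb{Z}_2$-grading of $\mathsf{U}(\mathfrak{g})$ described above makes it a $\mathbb{Z}_2$-graded \emph{associative} algebra, and for any such algebra the supercommutator $\{a,b\} = ab - (-1)^{|a||b|}ba$ is homogeneous of parity $|a|+|b|$, is graded-antisymmetric, and obeys the super-Jacobi identity; the verification of Jacobi is word for word the one already given for $\mathfrak{gl}(V)$, resting only on associativity. Hence $(\mathsf{U}(\mathfrak{g}),\{\,,\,\})$ is a Lie superalgebra. One point worth recording at the outset is that this bracket restricts on $\mathsf{U}_1(\mathfrak{g}) \cong \mathfrak{g}$ to the original bracket of $\mathfrak{g}$, since by the very definition of the ideal $\mathsf{J}(\mathfrak{g})$ one has $XY - (-1)^{|X||Y|} YX = [X,Y]$ in $\mathsf{U}(\mathfrak{g})$ for $X,Y \in \mathfrak{g}$.

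For the filtration statement I would first fix notation: $\mathsf{U}_n(\mathfrak{g})$ is spanned by products $X_1 \cdots X_k$ of $k \le n$ elements of $\mathfrak{g}$, so that $\mathsf{U}_n(\mathfrak{g})\,\mathsf{U}_m(\mathfrak{g}) \subset \mathsf{U}_{n+m}(\mathfrak{g})$ is immediate, and I adopt the convention $\mathsf{U}_{-1}(\mathfrak{g}) = 0$. The key tool is the super-Leibniz rule in an associative superalgebra, $\{a,bc\} = \{a,b\}c + (-1)^{|a||b|} b\{a,c\}$ and symmetrically $\{ab,c\} = a\{b,c\} + (-1)^{|b||c|}\{a,c\}b$, which I would establish once and for all by expanding both sides. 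I would then reduce to the single-generator case: prove $\{X,\mathsf{U}_m(\mathfrak{g})\} \subset \mathsf{U}_m(\mathfrak{g})$ for $X \in \mathfrak{g}$ by induction on $m$, the base cases being $\{X,1\}=0$ and $\{X,Y\}=[X,Y]\in\mathfrak{g}=\mathsf{U}_1(\mathfrak{g})$; in the inductive step write a spanning element of $\mathsf{U}_m(\mathfrak{g})$ as $b_1 b'$ with $b_1 \in \mathsf{U}_1(\mathfrak{g})$, $b' \in \mathsf{U}_{m-1}(\mathfrak{g})$, apply the first Leibniz rule, and use $\{X,b_1\}=[X,b_1]\in\mathsf{U}_1(\mathfrak{g})$ together with the induction hypothesis.

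The general case then follows by induction on $n$ using the second Leibniz rule. Writing a spanning element of $\mathsf{U}_n(\mathfrak{g})$ as $a_1 a'$ with $a_1 \in \mathsf{U}_1(\mathfrak{g})$ and $a' \in \mathsf{U}_{n-1}(\mathfrak{g})$, one gets $\{a_1 a', b\} = a_1\{a',b\} + (-1)^{|a'||b|}\{a_1,b\}a'$ for $b \in \mathsf{U}_{n'}(\mathfrak{g})$; by the induction hypothesis $\{a',b\} \in \mathsf{U}_{n+n'-2}(\mathfrak{g})$, so the first term lies in $\mathsf{U}_1(\mathfrak{g})\,\mathsf{U}_{n+n'-2}(\mathfrak{g}) \subset \mathsf{U}_{n+n'-1}(\mathfrak{g})$, while by the single-generator step $\{a_1,b\} \in \mathsf{U}_{n'}(\mathfrak{g})$, so the second term lies in $\mathsf{U}_{n'}(\mathfrak{g})\,\mathsf{U}_{n-1}(\mathfrak{g}) \subset \mathsf{U}_{n+n'-1}(\mathfrak{g})$. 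This closes the induction.

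I do not expect a serious obstacle here: the argument is entirely formal. The only steps requiring any care are the sign bookkeeping in the two Leibniz identities, which must be checked with the graded conventions in force, and the degenerate cases $n = 0$ or $n' = 0$, where the bracket is identically zero and the claimed inclusion holds trivially with the convention $\mathsf{U}_{-1}(\mathfrak{g}) = 0$.
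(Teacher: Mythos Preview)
Your proof is correct and follows essentially the same approach as the paper: the Lie superalgebra axioms come for free from $\mathsf{U}(\mathfrak{g})$ being a $\mathbb{Z}_2$-graded associative algebra, and the filtration statement is proved by the super-Leibniz rule together with induction on filtration degree, anchored at the base case $\{X,Y\}=[X,Y]\in\mathfrak{g}$. The paper's version is terser (it writes only the one Leibniz identity $\{a,bc\}=\{a,b\}c+(-1)^{|a||b|}b\{a,c\}$ and leaves the induction implicit), whereas you helpfully separate out the single-generator step $\{X,\mathsf{U}_m(\mathfrak{g})\}\subset\mathsf{U}_m(\mathfrak{g})$ before the general induction, but the substance is the same.
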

\begin{proof}
Compatibility with the $\mathbb{Z}_2$-grading, skew symmetry, and
Jacobi identity are properties of $\{ \, , \, \}$ that are immediate
at the level of the tensor algebra $\mathsf{T}(\mathfrak{g})$. They
descend to the corresponding properties at the level of $\mathsf{U}
(\mathfrak{g})$ by the definition of the two-sided ideal $\mathsf{J}
(\mathfrak{g})$. Thus $\mathsf{U}(\mathfrak{g})$ with the bracket $\{
\, , \, \}$ is a Lie superalgebra.

To see that $\{\mathsf{U}_n(\mathfrak{g}),\mathsf{U}_{n^\prime}
(\mathfrak{g})\}$ is contained in $\mathsf{U}_{n + n^\prime - 1}
(\mathfrak{g})$, notice that this property holds true for $n =
n^\prime = 1$ by the defining relations $\mathsf{J}(\mathfrak{g})
\equiv 0$ of $\mathsf{U} (\mathfrak{g})$. Then use the associative
law for $\mathsf{U}(\mathfrak{g})$ to verify the formula
\begin{displaymath}
    \{ a , bc \} = a b c - (-1)^{|a| (|b| + |c|)} b c a
    = \{ a , b \} c + (-1)^{|a| |b|} b \{ a , c \}
\end{displaymath}
for homogeneous $a, b, c \in \mathsf{U}(\mathfrak{g})$. The claim now
follows by induction on the degree of the filtration $\mathsf{U}
(\mathfrak{g}) = \cup_{n = 0}^\infty \mathsf{U}_n (\mathfrak{g})$.
\end{proof}
By definition, the supercommutator of $\mathsf{U}(\mathfrak{g})$ and
the bracket of $\mathfrak{g}$ agree at the linear level: $\{X,Y\}
\equiv [X,Y]$ for $X, Y \in \mathfrak{g}\,$. It is therefore
reasonable to drop the distinction in notation and simply write $[ \,
, \,]$ for both of these product operations. This we now do.

For future use, note the following variant of the preceding formula:
if $Y_1, \ldots, Y_k\, , X$ are any homogeneous elements of
$\mathfrak{g}\,$, then
\begin{equation}\label{eq:brackets}
    [Y_1 \cdots Y_k \, , \, X] = \sum_{i=1}^k (-1)^{|X|\sum_{j = i+1}^k
    |Y_j|}\, Y_1 \cdots Y_{i-1}\, [Y_i\, , X] \, Y_{i+1} \cdots Y_k \;,
\end{equation}
which expresses the supercommutator in $\mathsf{U}(\mathfrak{g})$ by the
bracket in $\mathfrak{g}\,$.

\subsection{Structure of $\mathfrak{osp}(W)$}\label{subsec:osp}

For a $\mathbb{Z}_2$-graded $\mathbb{K}$-vector space $V = V_0 \oplus
V_1$ let $W = V \oplus V^\ast = W_0 \oplus W_1$ as in Example
\ref{exa:osp}. The {\it orthogonal Lie algebra} $\mathfrak{o}(W_1)$
is the Lie algebra of the Lie group $\mathrm{O}(W_1)$ of
$\mathbb{K}$-linear transformations of $W_1$ that leave the
non-degenerate symmetric bilinear form $S$ invariant. This means that
$X \in \mathrm{End}(W_1)$ is in $\mathfrak{o}(W_1)$ if and only if
\begin{displaymath}
    \forall w, w^\prime \in W_1 : \,\, S(Xw,w') + S(w,Xw') = 0 \;.
\end{displaymath}
Similarly, the {\it symplectic Lie algebra} $\mathfrak{sp}(W_0)$ is
the Lie algebra of the automorphism group $\mathrm{Sp}(W_0)$ of $W_0$
equipped with the non-degenerate alternating bilinear form $A:$
\begin{displaymath}
    \mathfrak{sp}(W_0) = \{ X\in\mathrm{End}(W_0) \mid \forall w ,w'
    \in W_0 : \,\, A(Xw ,w') + A(w,Xw') = 0 \} \;.
\end{displaymath}

For the next statement, recall the definition of the orthosymplectic
Lie superalgebra $\mathfrak{osp}(W)$ and the decomposition
$\mathfrak{osp}(W) = \mathfrak{osp}(W)_0 \oplus
\mathfrak{osp}(W)_1\,$.
\begin{lemma}\label{lem:iso-osp1}
As Lie algebras resp.\ vector spaces,
\begin{displaymath}
    \mathfrak{osp}(W)_0 \simeq \mathfrak{o}(W_1)\oplus\mathfrak{sp}(W_0)
    \;, \quad
    \mathfrak{osp}(W)_1 \simeq W_1^{\vphantom{\ast}} \otimes W_0^* \;.
\end{displaymath}
\end{lemma}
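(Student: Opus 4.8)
The plan is to decompose $\mathfrak{osp}(W)$ according to the $\mathbb{Z}_2$-grading of $\mathrm{End}(W)$ and to identify each graded piece explicitly using the block structure coming from $W = W_0 \oplus W_1$. First I would observe that since $\tau$ preserves parity (which is clear from the defining equation~(\ref{eq:def-osp}), as $\tau(X)$ and $X$ act the same way on the grading of $W$), the fixed-point space splits as $\mathfrak{osp}(W)_s = \mathfrak{osp}(W) \cap \mathrm{End}(W)_s$. An even element $X \in \mathrm{End}(W)_0$ preserves $W_0$ and $W_1$ separately, so $X = X|_{W_0} \oplus X|_{W_1}$; one then checks that the condition $\tau(X) = X$ decouples into the condition on $W_0$ (with respect to the alternating form $A$) and the condition on $W_1$ (with respect to the symmetric form $S$), because $Q$ is the orthogonal sum $A + S$ and $W_0 \perp_Q W_1$. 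For a purely even $X$ the sign $(-1)^{|X||w|}$ in~(\ref{eq:def-osp}) is trivial, so~(\ref{eq:def-osp}) reads $Q(\tau(X)w, w') = -Q(w, Xw')$, which restricts on $W_0$ to the defining relation of $\mathfrak{sp}(W_0)$ and on $W_1$ to that of $\mathfrak{o}(W_1)$. This gives the vector-space isomorphism $\mathfrak{osp}(W)_0 \simeq \mathfrak{o}(W_1) \oplus \mathfrak{sp}(W_0)$, and since the bracket on $\mathfrak{osp}(W)_0$ is the ordinary commutator in $\mathrm{End}(W)_0 = \mathrm{End}(W_0) \oplus \mathrm{End}(W_1)$, which respects this direct sum, the isomorphism is one of Lie algebras.

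Next I would treat the odd part. An odd element $X \in \mathrm{End}(W)_1$ satisfies $X(W_0) \subset W_1$ and $X(W_1) \subset W_0$, so it is determined by a pair $(X_{10}, X_{01})$ with $X_{10} \in \mathrm{Hom}(W_0, W_1)$ and $X_{01} \in \mathrm{Hom}(W_1, W_0)$. I would then impose~(\ref{eq:def-osp}) with $|X| = 1$: testing on $w \in W_0$, $w' \in W_1$ and on $w \in W_1$, $w' \in W_0$ shows that $\tau(X) = X$ forces $X_{01}$ to be the ``$Q$-adjoint'' of $X_{10}$ (up to a sign dictated by the exchange symmetry $Q(w,w') = -(-1)^{|w||w'|}Q(w',w)$), so $X$ is freely parametrized by $X_{10} \in \mathrm{Hom}(W_0, W_1)$ alone. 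Using the nondegeneracy of $A$ on $W_0$ to identify $W_0 \simeq W_0^\ast$, this gives $\mathfrak{osp}(W)_1 \simeq \mathrm{Hom}(W_0, W_1) \simeq W_1 \otimes W_0^\ast$, as claimed. I would spell out the sign bookkeeping once (it is the only place signs intervene) and leave the rest as routine.

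The step I expect to be the main technical obstacle is precisely this sign bookkeeping in the odd sector: getting the exchange symmetry of $Q$ across $W_0$ and $W_1$ consistent with the $(-1)^{|X||w|}$ factor in~(\ref{eq:def-osp}), so that the two conditions obtained from the two choices of parities for $(w,w')$ turn out to be equivalent (rather than overdetermining, or contradicting, $X$). Concretely, one must verify that the map $X_{10} \mapsto X_{01}$ extracted from the $(w \in W_0, w' \in W_1)$ equation is the inverse of the one extracted from the $(w \in W_1, w' \in W_0)$ equation; this is where the property $\tau(XY) = -(-1)^{|X||Y|}\tau(Y)\tau(X)$ together with $\tau^2 = \mathrm{id}$ does the work, guaranteeing that $\tau$-invariance is a single condition. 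Once this consistency is confirmed, both isomorphisms follow immediately, and the Lie-algebra structure on the even part is inherited transparently from $\mathrm{End}(W_0) \oplus \mathrm{End}(W_1)$.
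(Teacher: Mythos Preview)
Your proposal is correct and follows essentially the same route as the paper: block-decompose according to $W = W_0 \oplus W_1$, observe that the even condition decouples into the defining relations of $\mathfrak{sp}(W_0)$ and $\mathfrak{o}(W_1)$, and for the odd part write $X = X_{10} + X_{01}$ and use non-degeneracy of the forms to see that $X_{01}$ is determined by $X_{10}$. The paper dispatches your ``main technical obstacle'' in one line by writing the $\tau$-invariance condition explicitly as $S(X_{10}w_0, w_1) + A(w_0, X_{01}w_1) = 0$, from which the free parametrization by $X_{10}$ is immediate; your worry about consistency of the two parity choices is real but resolves trivially once this equation is written down.
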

\begin{proof}
The first isomorphism follows directly from the definitions. For the
second isomorphism, decompose $X \in \mathfrak{osp}(W)_1$ as $X =
X_{01}+ X_{10}$ where $X_{01} \in \mathrm{Hom}(W_1,W_0)$ and $X_{10}
\in \mathrm{Hom}(W_0\, ,W_1)$. Then
\begin{displaymath}
    \mathfrak{osp}(W)_1 = \{ X_{10} + X_{01} \in \mathrm{End}(W)_1
    \mid \forall w_s \in W_s : \,
    S(X_{10}w_0\, ,w_1) + A(w_0\, ,X_{01}w_1) = 0 \}\;.
\end{displaymath}
Since both $S$ and $A$ are non-degenerate, the component $X_{01}$ is
determined by the component $X_{10}\,$, and one therefore has
$\mathfrak{osp}(W)_1 \simeq \mathrm{Hom}(W_0\, , W_1) \simeq
W_1^{\vphantom{\ast}} \otimes W_0^*\,$.
\end{proof}
We now review how $\mathfrak{sp}(W_0)$ and $\mathfrak{o}(W_1)$
decompose for our case $W_s = V_s^{\vphantom{\ast}} \oplus V_s^{\ast}
\,$. For that purpose, if $U$ is a vector space with dual vector
space $U^\ast$, let $\mathrm{Sym}(U,U^\ast)$ and $\mathrm{Alt}(U,
U^\ast)$ denote the symmetric resp.\ alternating linear maps from $U$
to $U^\ast$.
\begin{lemma}\label{lem:iso-osp2}
As vector spaces,
\begin{align*}
    \mathfrak{o}(W_1) &\simeq \mathrm{End}(V_1) \oplus \mathrm{Alt}
    (V_1^{\vphantom{\ast}}, V_1^\ast) \oplus \mathrm{Alt} (V_1^\ast
    , V_1^{\vphantom{\ast}}) \;, \\ \mathfrak{sp}(W_0) &\simeq
    \mathrm{End}(V_0) \oplus \mathrm{Sym}(V_0^{\vphantom{\ast}}\, ,
    V_0^\ast)\oplus \mathrm{Sym}(V_0^\ast , V_0^{\vphantom{\ast}})\;.
\end{align*}
\end{lemma}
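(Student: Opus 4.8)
The plan is to decompose each of $\mathfrak{o}(W_1)$ and $\mathfrak{sp}(W_0)$ according to the block structure of $\mathrm{End}(W_s)$ coming from $W_s = V_s \oplus V_s^\ast$, and then to identify each block using the defining invariance condition for the relevant bilinear form. Write $X \in \mathrm{End}(W_s)$ in block form relative to $W_s = V_s \oplus V_s^\ast$, so that $X$ has components $X_{VV} \in \mathrm{End}(V_s)$, $X_{V^\ast V^\ast} \in \mathrm{End}(V_s^\ast)$, $X_{VV^\ast} \in \mathrm{Hom}(V_s^\ast, V_s)$, and $X_{V^\ast V} \in \mathrm{Hom}(V_s, V_s^\ast)$. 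The forms $S$ (on $W_1$) and $A$ (on $W_0$) pair $V_s$ with $V_s^\ast$ via the canonical evaluation, and vanish on $V_s \times V_s$ and on $V_s^\ast \times V_s^\ast$.

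First I would treat $\mathfrak{sp}(W_0)$. Plugging the block decomposition of $X$ into $A(Xw, w') + A(w, Xw') = 0$ and testing on the four types of argument pairs $(v, v')$, $(\varphi, \varphi')$, $(v, \varphi')$, $(\varphi, v')$ in $V_0 \times V_0$, etc., one finds: the condition on the diagonal blocks forces $X_{V^\ast V^\ast} = -X_{VV}^t$ (the negative transpose, i.e.\ the two diagonal blocks are determined by a single element of $\mathrm{End}(V_0)$); the condition on the off-diagonal block $X_{V^\ast V} \colon V_0 \to V_0^\ast$ forces it to be symmetric, i.e.\ to lie in $\mathrm{Sym}(V_0, V_0^\ast)$, because $A$ restricted to the two $V_0^\ast$-slots is the antisymmetric evaluation pairing and the combination $A(X_{V^\ast V} v, \varphi') + A(v, X_{V^\ast V} v')$... — more carefully, the symmetry of the pairing that $A$ induces between $V_0$ and $V_0^\ast$ together with the alternating nature of $A$ turns the invariance condition into symmetry of $X_{V^\ast V}$; similarly $X_{VV^\ast} \in \mathrm{Sym}(V_0^\ast, V_0)$. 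This yields the vector-space isomorphism $\mathfrak{sp}(W_0) \simeq \mathrm{End}(V_0) \oplus \mathrm{Sym}(V_0, V_0^\ast) \oplus \mathrm{Sym}(V_0^\ast, V_0)$. The computation for $\mathfrak{o}(W_1)$ is verbatim the same except that $S$ is symmetric rather than alternating, so the same three block conditions now read: $X_{V^\ast V^\ast} = -X_{VV}^t$ again, but the off-diagonal blocks must be \emph{alternating}, giving $\mathfrak{o}(W_1) \simeq \mathrm{End}(V_1) \oplus \mathrm{Alt}(V_1, V_1^\ast) \oplus \mathrm{Alt}(V_1^\ast, V_1)$.

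The only genuinely delicate point — and the one I would spell out carefully — is the bookkeeping of transposes and signs when translating "$X_{V^\ast V}$ satisfies the $A$- (resp.\ $S$-) invariance condition" into "$X_{V^\ast V}$ is symmetric (resp.\ alternating) as a linear map $V_s \to V_s^\ast$". Here one must use that a linear map $B \colon U \to U^\ast$ corresponds to a bilinear form $(u, u') \mapsto (Bu)(u')$ on $U$, that it is called symmetric resp.\ alternating according to the symmetry of that form, and that the relevant restriction of $A$ resp.\ $S$ to the pair of $V_s^\ast$-slots (resp.\ $V_s$-slots) contributes its own sign. Once the sign conventions are fixed, each of the four argument-type checks is a one-line verification, and assembling the four blocks gives the two claimed isomorphisms; note these are only asserted as isomorphisms of vector spaces, so no compatibility with brackets needs to be checked.
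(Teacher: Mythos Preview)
Your proposal is correct and follows essentially the same route as the paper: block-decompose $X \in \mathrm{End}(W_s)$ relative to $W_s = V_s \oplus V_s^\ast$, test the invariance condition $S(Xw,w') + S(w,Xw') = 0$ (resp.\ with $A$) on the four types of argument pairs, and read off that the diagonal blocks are related by $\mathsf{D} = -\mathsf{A}^{\mathrm{t}}$ while the off-diagonal blocks are alternating (resp.\ symmetric). The paper presents the $\mathfrak{o}(W_1)$ case first and then deduces $\mathfrak{sp}(W_0)$ by the single sign flip, but otherwise your argument matches.
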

\begin{proof} There is a canonical decomposition
\begin{displaymath}
    \mathrm{End}(W_s) = \mathrm{End}(V_s^{\vphantom{\ast}}) \oplus
    \mathrm{Hom}(V_s^\ast, V_s^{\vphantom{\ast}}) \oplus \mathrm{Hom}
    (V_s^{\vphantom{*}}\,,V_s^*)\oplus \mathrm{End}(V_s^*)
\end{displaymath}
for $s = 0, 1$. Let $s = 1$ and write the corresponding decomposition
of $X \in \mathrm{End}(W_1)$ as
\begin{displaymath}
    X = \mathsf{A} \oplus \mathsf{B} \oplus \mathsf{C} \oplus
    \mathsf{D} \;.
\end{displaymath}
Substituting $w = v + \varphi$ and $w^\prime = v^\prime +
\varphi^\prime$, the defining condition $S(Xw,w^\prime) = - S(w,X
w^\prime)$ for $X \in \mathfrak{o}(W_1)$ then transcribes to
\begin{displaymath}
    \varphi^\prime(\mathsf{A}v) = - (\mathsf{D}\varphi^\prime)(v)
    \;, \quad (\mathsf{C}v)(v^\prime) = - (\mathsf{C}v^\prime)(v)
    \;, \quad \varphi^\prime(\mathsf{B}\varphi) = -
    \varphi(\mathsf{B}\varphi^\prime) \;,
\end{displaymath}
for all $v, v^\prime \in V_1$ and $\varphi, \varphi^\prime \in
V_1^\ast$. Thus $\mathsf{D} = - \mathsf{A}^\mathrm{t}$, and the maps
$\mathsf{B}, \mathsf{C}$ are alternating. This already proves the
statement for the case of $\mathfrak{o}(W_1)$.

The situation for $\mathfrak{sp}(W_0)$ is identical but for a sign
change: the symmetric form $S$ is replaced by the alternating form
$A\,$, and this causes the parity of $\mathsf{B}, \mathsf{C}$ to be
reversed.
\end{proof}

By adding up dimensions, Lemmas \ref{lem:iso-osp1} and
\ref{lem:iso-osp2} entail the following consequence.
\begin{corollary} \label{cor:dimosp} As a $\mathbb{Z}_2$-graded vector
space, $\mathfrak{osp}(V \oplus V^\ast)$ is isomorphic to
$\mathbb{K}^{p|q}$ where $p = d_0 (2d_0 + 1) + d_1 (2d_1 - 1)$, $q =
4 d_0 d_1\,$, and $d_s = \dim V_s \,$.
\end{corollary}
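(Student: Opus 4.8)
The plan is to simply add up the dimensions of the summands appearing in Lemmas \ref{lem:iso-osp1} and \ref{lem:iso-osp2}, keeping careful track of parities. First I would recall that $\mathfrak{osp}(W)_0 \simeq \mathfrak{o}(W_1) \oplus \mathfrak{sp}(W_0)$ is even and $\mathfrak{osp}(W)_1 \simeq W_1 \otimes W_0^\ast$ is odd, so that $q = \dim_{\mathbb K}\mathfrak{osp}(W)_1 = (\dim W_1)(\dim W_0) = (2d_1)(2d_0) = 4 d_0 d_1$, which is the claimed value of $q$.

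For the even part, I would use Lemma \ref{lem:iso-osp2}. The orthogonal summand contributes $\dim \mathrm{End}(V_1) + \dim \mathrm{Alt}(V_1,V_1^\ast) + \dim \mathrm{Alt}(V_1^\ast,V_1) = d_1^2 + \binom{d_1}{2} + \binom{d_1}{2} = d_1^2 + d_1(d_1-1) = d_1(2d_1 - 1)$, which is indeed $\dim \mathfrak{o}_{2d_1}$ as expected. The symplectic summand contributes $\dim \mathrm{End}(V_0) + \dim \mathrm{Sym}(V_0,V_0^\ast) + \dim \mathrm{Sym}(V_0^\ast,V_0) = d_0^2 + \binom{d_0+1}{2} + \binom{d_0+1}{2} = d_0^2 + d_0(d_0+1) = d_0(2d_0+1)$, which is $\dim \mathfrak{sp}_{2d_0}$. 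Adding these gives $p = d_0(2d_0+1) + d_1(2d_1-1)$, as claimed, and the $\mathbb{Z}_2$-grading is exactly the one recorded in Lemma \ref{lem:iso-osp1}.

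There is no real obstacle here: the statement is a bookkeeping corollary, and the only thing requiring a modicum of care is getting the binomial coefficients for the spaces of alternating and symmetric maps right (an $n\times n$ alternating matrix has $\binom{n}{2}$ free entries, a symmetric one $\binom{n+1}{2}$) and not double-counting — the four summands in each of $\mathfrak{o}(W_1)$ and $\mathfrak{sp}(W_0)$ decompositions in Lemma \ref{lem:iso-osp2} are internal direct summands of $\mathrm{End}(W_s)$, so their dimensions simply add. I would present this as a two-line computation.
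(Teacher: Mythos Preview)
Your proposal is correct and follows exactly the approach indicated by the paper, which simply states that the corollary follows ``by adding up dimensions'' from Lemmas \ref{lem:iso-osp1} and \ref{lem:iso-osp2}. One trivial slip: in your final paragraph you refer to ``four summands'' in each decomposition of Lemma \ref{lem:iso-osp2}, but the lemma records three summands each (the fourth block $\mathsf{D}$ of $\mathrm{End}(W_s)$ is determined by $\mathsf{D}=-\mathsf{A}^{\mathrm t}$); this does not affect your dimension count.
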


There exists another way of thinking about $\mathfrak{osp}(W)$, which
will play a key role in the sequel. To define it and keep the sign
factors consistent and transparent, we need to be meticulous about
our ordering conventions. Hence, if $v \in V$ is a vector and
$\varphi \in V^\ast$ is a linear function, we write the value of
$\varphi$ on $v$ as
\begin{displaymath}
    \varphi(v) \equiv \langle v , \varphi \rangle \;.
\end{displaymath}
Based on this notational convention, if $V$ is a $\mathbb
{Z}_2$-graded vector space and $X \in \mathrm{End}(V)$ is a homogeneous
operator, we define the \emph{supertranspose} $X^\mathrm{st} \in
\mathrm{End}(V^\ast)$ of $X$ by
\begin{displaymath}
    \langle v , X^\mathrm{st} \varphi \rangle := (-1)^{|X| |v|}
    \langle Xv , \varphi \rangle \quad (v \in V_0 \cup V_1 ,
    \,\, \varphi \in V^\ast) \;.
\end{displaymath}
This definition differs from the usual transpose by a change of sign
in the case when $X$ has a com\-ponent in $\mathrm{Hom}(V_1 , V_0)$.
From it, it follows directly that the negative supertranspose
$\mathfrak{gl}(V) \to \mathfrak{gl}(V^\ast)$, $X \mapsto
-X^\mathrm{st}$ is an isomorphism of Lie superalgebras:
\begin{displaymath}
    - [X,Y]^\mathrm{st} = [-X^\mathrm{st} , -Y^\mathrm{st}] \;.
\end{displaymath}

The modified notion of transpose goes hand in hand with a modified
notion of what it means for an operator in $\mathrm{Hom}(V,V^\ast)$
or $\mathrm{Hom}(V^\ast , V)$ to be symmetric. Thus, define the
subspace $\mathrm{Sym}(V^\ast,V) \subset \mathrm{Hom}(V^\ast,V)$ to
consist of the elements, say $\mathsf{B}$, which are symmetric in the
$\mathbb{Z}_2$-graded sense:
\begin{equation}\label{eq:sym-B}
    \forall \, \varphi, \varphi^\prime \in V_0^\ast \cup V_1^\ast :
    \quad \langle \mathsf{B} \varphi , \varphi^\prime \rangle = \langle
    \mathsf{B} \varphi^\prime , \varphi \rangle \, (-1)^{|\varphi|
    |\varphi^\prime|} \;.
\end{equation}
By the same principle, define $\mathrm{Sym}(V,V^\ast) \subset
\mathrm{Hom}(V,V^\ast)$ as the set of solutions $\mathsf{C}$ of
\begin{equation} \label{eq:sym-C}
    \forall \, v, v^\prime \in V_0 \cup V_1 : \quad \langle v ,
    \mathsf{C}v^\prime \rangle= \langle v^\prime, \mathsf{C} v\rangle
    \, (-1)^{|v| |v^\prime| + |v| + |v^\prime|} \;.
\end{equation}
To make the connection with the decomposition of Lemma
\ref{lem:iso-osp1} and \ref{lem:iso-osp2}, notice that
\begin{displaymath}
    \mathrm{Sym}(V,V^\ast)\cap \mathrm{Hom}(V_s^{\vphantom{\ast}}
    \,, V_s^\ast) = \left\{\begin{array}{ll}\mathrm{Sym}
    (V_0^{\vphantom{\ast}}\,, V_0^\ast) &s = 0 \;,\\
    \mathrm{Alt}(V_1^{\vphantom{\ast}} , V_1^\ast) &s = 1 \;,
    \end{array} \right.
\end{displaymath}
and similar for the corresponding intersections involving
$\mathrm{Sym}(V^\ast,V)$.

Next, expressing the orthosymplectic form $Q$ of $W = V \oplus
V^\ast$ as
\begin{displaymath}
    Q(v + \varphi , v^\prime + \varphi^\prime) = \langle v ,
    \varphi^\prime \rangle - (-1)^{|v^\prime| |\varphi|}
    \langle v^\prime , \varphi \rangle \;,
\end{displaymath}
and writing out the conditions resulting from $Q(Xw,w^\prime) + (-1)^
{|X| |w|} Q(w,Xw^\prime) = 0$ for the case of $X \equiv \mathsf{B}
\in \mathrm{Hom}(V^\ast,V)$ and $X \equiv \mathsf{C} \in
\mathrm{Hom}(V,V^\ast)$, one sees that
\begin{displaymath}
    \mathfrak{osp}(W) \cap \mathrm{Hom}(V,V^\ast) =
    \mathrm{Sym}(V,V^\ast)\;, \quad \mathfrak{osp}(W) \cap
    \mathrm{Hom}(V^\ast,V) = \mathrm{Sym}(V^\ast,V) \;.
\end{displaymath}
This situation is summarized in the next statement.
\begin{lemma}\label{lem:osp-iso3}
The orthosymplectic Lie superalgebra of $\, W = V \oplus V^\ast$
decomposes as
\begin{displaymath}
    \mathfrak{osp}(W) = \mathfrak{g}^{(-2)}\oplus
    \mathfrak{g}^{(0)}\oplus\mathfrak{g}^{(+2)}\;,
\end{displaymath}
where $\mathfrak{g}^{(+2)} := \mathrm{Sym}(V,V^\ast)$, and
$\mathfrak{g}^{(-2)} := \mathrm{Sym}(V^\ast,V)$, and
\begin{displaymath}
    \mathfrak{g}^{(0)} := (\mathrm{End}(V)\oplus
    \mathrm{End}(V^\ast))\cap \mathfrak{osp}(W)\;.
\end{displaymath}
\end{lemma}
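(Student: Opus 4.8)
The plan is to realize the decomposition as the restriction to $\mathfrak{osp}(W)$ of a $\tau$-invariant block decomposition of $\mathrm{End}(W)$. Concretely, write
\begin{displaymath}
    \mathrm{End}(W) = \mathrm{End}(W)^{(-2)} \oplus \mathrm{End}(W)^{(0)}
    \oplus \mathrm{End}(W)^{(+2)},
\end{displaymath}
where $\mathrm{End}(W)^{(0)} = \mathrm{End}(V) \oplus \mathrm{End}(V^\ast)$, $\mathrm{End}(W)^{(+2)} = \mathrm{Hom}(V, V^\ast)$, and $\mathrm{End}(W)^{(-2)} = \mathrm{Hom}(V^\ast, V)$. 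This is nothing but the canonical four-block decomposition of an operator on $W = V \oplus V^\ast$, with the two off-diagonal blocks labelled by $\pm 2$; equivalently, it is the eigenspace decomposition of $\mathrm{ad}(H)$ for the grading element $H \in \mathfrak{osp}(W)_0$ acting as $-\mathrm{id}$ on $V$ and $+\mathrm{id}$ on $V^\ast$.

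First I would check that $\tau$ maps each block into itself. This uses only that the orthosymplectic form $Q$ pairs $V$ with $V^\ast$: it vanishes on $V \times V$ and on $V^\ast \times V^\ast$, and restricts to the canonical pairing on $V \times V^\ast$. Indeed, for homogeneous $X \in \mathrm{Hom}(V, V^\ast)$ the defining relation \eqref{eq:def-osp} reads $Q(\tau(X)w, w') = -(-1)^{|X||w|} Q(w, Xw')$ for all $w, w'$; the right-hand side, viewed as a linear functional of $w'$, vanishes on $V^\ast$ (since $X$ kills $V^\ast$) and vanishes identically unless $w \in V$ (since $Xw' \in V^\ast$ and $Q$ is zero on $V^\ast \times V^\ast$). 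By nondegeneracy of $Q$ this forces $\tau(X)$ to annihilate $V^\ast$ and to send $V$ into $V^\ast$, i.e.\ $\tau(X) \in \mathrm{Hom}(V, V^\ast)$. The same argument with the roles of $V$ and $V^\ast$ exchanged shows $\tau$ preserves $\mathrm{Hom}(V^\ast, V)$, and hence also the remaining block $\mathrm{End}(V) \oplus \mathrm{End}(V^\ast)$.

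Since $\tau$ respects the block decomposition, so does its fixed-point space, giving
\begin{displaymath}
    \mathfrak{osp}(W) = \bigoplus_{k \in \{-2,\, 0,\, +2\}}
    \bigl( \mathrm{End}(W)^{(k)} \cap \mathfrak{osp}(W) \bigr).
\end{displaymath}
It then remains to identify the three summands. The $k = 0$ summand is $(\mathrm{End}(V) \oplus \mathrm{End}(V^\ast)) \cap \mathfrak{osp}(W)$, which is $\mathfrak{g}^{(0)}$ by definition. For $k = \pm 2$, the computations carried out just before the statement — writing out $Q(Xw, w') + (-1)^{|X||w|} Q(w, Xw') = 0$ for $X \in \mathrm{Hom}(V, V^\ast)$ and for $X \in \mathrm{Hom}(V^\ast, V)$ — show precisely that $\mathrm{Hom}(V, V^\ast) \cap \mathfrak{osp}(W) = \mathrm{Sym}(V, V^\ast) = \mathfrak{g}^{(+2)}$ and $\mathrm{Hom}(V^\ast, V) \cap \mathfrak{osp}(W) = \mathrm{Sym}(V^\ast, V) = \mathfrak{g}^{(-2)}$. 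Assembling these yields the claimed decomposition. The only step that demands genuine care is the $\tau$-invariance of the block grading in the second paragraph; everything else is bookkeeping. I note in passing that the same mechanism gives $[\mathfrak{g}^{(j)}, \mathfrak{g}^{(k)}] \subseteq \mathfrak{g}^{(j+k)}$, so the decomposition is in fact a $\mathbb{Z}$-grading of the Lie superalgebra, a point we will use later.
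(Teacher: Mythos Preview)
Your proof is correct and matches the paper's approach: the paper treats this lemma as a summary of the preceding paragraph, which identifies $\mathfrak{osp}(W)\cap\mathrm{Hom}(V,V^\ast)=\mathrm{Sym}(V,V^\ast)$ and $\mathfrak{osp}(W)\cap\mathrm{Hom}(V^\ast,V)=\mathrm{Sym}(V^\ast,V)$, and gives no separate proof. Your explicit verification that $\tau$ preserves each block of $\mathrm{End}(W)$ is a clean addition that makes the direct-sum statement fully justified; the paper leaves this implicit (and the remark after the lemma about the $\mathbb{Z}$-grading compatible with the bracket is exactly what the paper records next).
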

The decomposition of Lemma \ref{lem:osp-iso3} can be regarded as a
$\mathbb{Z}$-grading of $\mathfrak{osp}(W)$. By the `block' structure
inherited from $W = V \oplus V^\ast$, this decomposition is
compatible with the bracket $[\, , \,]:$
\begin{displaymath}
    [ \mathfrak{g}^{(m)} , \, \mathfrak{g}^{(m^\prime)} ] \subset
    \mathfrak{g}^{(m + m^\prime)} \;,
\end{displaymath}
where $\mathfrak{g}^{(m + m^\prime)} \equiv 0$ if $m + m^\prime
\notin \{ \pm 2, 0\}$. It follows that each of the three subspaces
$\mathfrak{g}^{( +2)}$, $\mathfrak{g}^{(-2)}$, and $\mathfrak{g}^{
(0)}$ is a Lie superalgebra, the first two with vanishing bracket.
\begin{lemma}\label{lem:iso-osp4}
The embedding $\mathrm{End}(V)\to \mathrm{End}(V) \oplus \mathrm{End}
(V^\ast)$ by $\mathsf{A} \mapsto \mathsf{A} \oplus (-\mathsf{A}^
\mathrm{st})$ projected to $\mathfrak{osp}(W)$ is an isomorphism of
Lie superalgebras $\mathfrak{gl}(V) \to \mathfrak{g}^{(0)}$.
\end{lemma}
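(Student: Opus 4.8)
The plan is to verify, in turn, that the stated map has image in the correct subspace, that it is bijective onto $\mathfrak{g}^{(0)}$, and that it intertwines the brackets; the only nontrivial point is a short sign computation with the orthosymplectic form.

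First I would note that $\mathsf{A} \mapsto \mathsf{A} \oplus (-\mathsf{A}^{\mathrm{st}})$ sends $\mathrm{End}(V)$ to the block-diagonal operators in $\mathrm{End}(W)$ whose $V$-component is $\mathsf{A}$ and whose $V^\ast$-component is $-\mathsf{A}^{\mathrm{st}}$. Since the supertranspose preserves parity, this assignment is even, hence a morphism of $\mathbb{Z}_2$-graded vector spaces, and it is injective because the $V$-block already recovers $\mathsf{A}$.

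The heart of the argument is the claim that a block-diagonal operator $X = \mathsf{A} \oplus \mathsf{D}$, with $\mathsf{A} \in \mathrm{End}(V)$ and $\mathsf{D} \in \mathrm{End}(V^\ast)$ homogeneous of common parity $|X|$, lies in $\mathfrak{osp}(W)$ if and only if $\mathsf{D} = -\mathsf{A}^{\mathrm{st}}$. To see this, write out the defining relation $Q(Xw,w') + (-1)^{|X||w|} Q(w,Xw') = 0$. On pure arguments $w,w' \in V$ or $w, w' \in V^\ast$ the relation is automatic, since $Q$ vanishes on $V \times V$ and on $V^\ast \times V^\ast$. For $w = v \in V$ and $w' = \varphi' \in V^\ast$ the explicit formula $Q(v+\varphi, v'+\varphi') = \langle v,\varphi'\rangle - (-1)^{|v'||\varphi|}\langle v',\varphi\rangle$ turns the relation into $\langle \mathsf{A}v,\varphi'\rangle + (-1)^{|X||v|}\langle v,\mathsf{D}\varphi'\rangle = 0$; using the definition $\langle v, \mathsf{A}^{\mathrm{st}}\varphi'\rangle = (-1)^{|X||v|}\langle \mathsf{A}v,\varphi'\rangle$ of the supertranspose together with non-degeneracy of the pairing, this is precisely $\mathsf{D} = -\mathsf{A}^{\mathrm{st}}$. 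The remaining mixed case $w = \varphi \in V^\ast$, $w' = v' \in V$ yields the same condition by the exchange symmetry of $Q$. This establishes at once that the image of our map lies in $\mathfrak{osp}(W)$, hence in $\mathfrak{g}^{(0)} = (\mathrm{End}(V) \oplus \mathrm{End}(V^\ast)) \cap \mathfrak{osp}(W)$, and that it exhausts $\mathfrak{g}^{(0)}$, since any element of $\mathfrak{g}^{(0)}$ is block-diagonal and therefore of the form $\mathsf{A} \oplus (-\mathsf{A}^{\mathrm{st}})$.

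It remains to check bracket compatibility. Block-diagonal operators compose blockwise, so $[\mathsf{A} \oplus \mathsf{D}, \mathsf{B} \oplus \mathsf{E}] = [\mathsf{A},\mathsf{B}] \oplus [\mathsf{D},\mathsf{E}]$ in $\mathfrak{gl}(W)$. Setting $\mathsf{D} = -\mathsf{A}^{\mathrm{st}}$, $\mathsf{E} = -\mathsf{B}^{\mathrm{st}}$ and invoking the already-observed fact that $X \mapsto -X^{\mathrm{st}}$ is an isomorphism of Lie superalgebras $\mathfrak{gl}(V) \to \mathfrak{gl}(V^\ast)$, i.e.\ $[-\mathsf{A}^{\mathrm{st}}, -\mathsf{B}^{\mathrm{st}}] = -[\mathsf{A},\mathsf{B}]^{\mathrm{st}}$, we obtain $[\mathsf{A} \oplus (-\mathsf{A}^{\mathrm{st}}), \mathsf{B} \oplus (-\mathsf{B}^{\mathrm{st}})] = [\mathsf{A},\mathsf{B}] \oplus (-[\mathsf{A},\mathsf{B}]^{\mathrm{st}})$, which is the image of $[\mathsf{A},\mathsf{B}]$. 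Hence the map is a Lie superalgebra homomorphism, and combined with the bijectivity onto $\mathfrak{g}^{(0)}$ it is the asserted isomorphism $\mathfrak{gl}(V) \to \mathfrak{g}^{(0)}$. The main (modest) obstacle is keeping the signs straight in the mixed-argument case of the orthosymplectic condition; the rest is formal.
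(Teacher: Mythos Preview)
Your proof is correct and follows essentially the same route as the paper: both verify the homomorphism property via the fact that $\mathsf{A}\mapsto -\mathsf{A}^{\mathrm{st}}$ is a Lie superalgebra isomorphism, and both establish bijectivity onto $\mathfrak{g}^{(0)}$ by testing the orthosymplectic condition on mixed arguments $(v,\varphi)$ to force $\mathsf{D}=-\mathsf{A}^{\mathrm{st}}$. Your write-up is slightly more explicit in spelling out the block-diagonal bracket computation, but the substance is the same.
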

\begin{proof}
Since the negative supertranspose $\mathsf{A} \mapsto - \mathsf{A}
^\mathrm{st}$ is a homomorphism of Lie superalgebras, so is our
embedding $\mathsf{A} \mapsto \mathsf{A} \oplus (- \mathsf{A}
^\mathrm{st})$. This map is clearly injective. To see that it is
surjective, consider any homogeneous $X = \mathsf{A} \oplus
\mathsf{D} \in \mathrm{End}(V) \oplus \mathrm{End}(V^\ast)$ viewed as
an operator in $\mathrm{End}(W)$. The condition for $X$ to be in
$\mathfrak{osp}(W)$ is (\ref{eq:def-osp}). To get a non-trivial
condition, choose $(w,w^\prime) = (v,\varphi)$ or $(w,w^\prime) =
(\varphi,v)$. The first choice gives
\begin{displaymath}
    Q(Xv,\varphi) + (-1)^{|X||v|} Q(v,X\varphi) = \langle
    \mathsf{A} v , \varphi \rangle + (-1)^{|\mathsf{A}||v|}
    \langle v , \mathsf{D}\varphi \rangle = 0 \;.
\end{displaymath}
Valid for all $v \in V_0 \cup V_1$ and $\varphi \in V^\ast$, this
implies that $\mathsf{D} = - \mathsf{A}^\mathrm{st}$. The second
choice leads to the same conclusion. Thus $X = \mathsf{A} \oplus
\mathsf{D}$ is in $\mathfrak{osp}(W)$ if and only if $\mathsf{D} = -
\mathsf{A}^\mathrm{st}$.
\end{proof}
In the following subsections we will often write $\mathfrak{osp}(W)
\equiv \mathfrak{osp}$ for short.

\subsubsection{Roots and root spaces}\label{sect:osp-roots}

A Cartan subalgebra of a Lie algebra $\mathfrak{g}_0$ is a maximal
commutative subalgebra $\mathfrak{h} \subset \mathfrak{g}_0$ such
that $\mathfrak{g}_0$ (or its complexification if $\mathfrak{g}_0$ is
a real Lie algebra) has a basis consisting of eigenvectors of
$\mathrm{ad}(H)$ for all $H \in \mathfrak{h} \,$. Recall that
$|[X,Y]|=|X|+|Y|$ for homogeneous elements $X,Y$ of a Lie
superalgebra $\mathfrak{g}\,$. From the vantage point of decomposing
$\mathfrak{g}$ by eigenvectors or root spaces, it is therefore
reasonable to call a Cartan subalgebra of $\mathfrak{g}_0$ a Cartan
subalgebra of $\mathfrak{g}\,$. We will see that $X \in
\mathfrak{osp}_1$ and $[X,H] = 0$ for all $H \in \mathfrak{h} \subset
\mathfrak{osp}_0$ imply $X = 0\,$, i.e., there exists no commutative
subalgebra of $\mathfrak{osp}$ that properly contains a Cartan
subalgebra. Lie superalgebras with this property are called of type I
in \cite{B}.

Let us determine a Cartan subalgebra and the corresponding root space
decomposition of $\mathfrak{osp}\,$. For $s, t = 0, 1$ choose bases
$\{ e_{s, 1}\, , \ldots, \, e_{s,\,d_s} \}$ of $V_s$ and associated
dual bases $\{ f_{t,1} , \ldots, f_{t,\, d_t} \}$ of $V_t^\ast\,$.
Then for $j = 1, \ldots, d_s$ and $k = 1, \ldots, d_t$ define
rank-one operators $E_{s, j \, ; \, t ,\, k}$ by the equation
$E_{s,j\, ; \, t,\, k} (e_{u,l}) = e_{s,j}\, \delta_{t,u} \,
\delta_{k,l}$ for all $u = 0, 1$ and $l = 1, \ldots, d_u \,$. These
form a basis of $\mathrm{End}(V)$, and by Lemma \ref{lem:iso-osp4}
the operators
\begin{displaymath}
    X_{s j, \, t k}^{(0)} := E_{s , j\,; \, t ,\, k} \oplus
    (- E_{s,j\,;\,t,\,k})^\mathrm{st}
\end{displaymath}
form a basis of $\mathfrak{g}^{(0)}$. Similarly, let bases of
$\mathrm{Hom}(V^\ast,V)$ and $\mathrm{Hom}(V,V^\ast)$ be defined by
\begin{displaymath}
    F_{s,j\, ; \,t,\,k} (f_{u,l}) = e_{s,j} \, \delta_{t,u} \,
    \delta_{k,l} \;, \qquad \tilde{F}_{s,j\,;\,t,\,k} (e_{u,l})
    = f_{s,j} \,\delta_{t,u}\, \delta_{k,l} \;,
\end{displaymath}
for index pairs in the appropriate range. Then by Lemma
\ref{lem:osp-iso3} and equations (\ref{eq:sym-B}, \ref{eq:sym-C}) the
subalgebras $\mathfrak{g}^{(-2)}$ and $\mathfrak{g}^{(2)}$ are
generated by the sets of operators
\begin{eqnarray*}
    &&X_{s j, \, t k}^{(-2)} := F_{s,j\,;\,t,\,k} + F_{t,\,k\,;\,s,j}
    \, (-1)^{|s| |t|} \;, \\ &&X_{s j, \, t k}^{(2)} :=
    \tilde{F}_{s,j\,;\,t,\,k} + \tilde{F}_{t,\,k\,;\,s,j} \,
    (-1)^{|s| |t| + |s| + |t|}\;.
\end{eqnarray*}

Since $\mathfrak{osp}_0 \simeq \mathfrak{o}(W_1) \oplus
\mathfrak{sp}(W_0)$, a Cartan subalgebra of $\mathfrak{osp}$ is the
direct sum of a Cartan subalgebra of $\mathfrak{o}(W_1)$ and a Cartan
subalgebra of $\mathfrak{sp}(W_0)$. Letting $\mathfrak{h}$ be the
span of the diagonal operators
\begin{displaymath}
    H_{s j} := X_{s j ,\, s j}^{(0)}\quad (s=0,1;\, j= 1,\ldots,d_s)\;,
\end{displaymath}
one has that $\mathfrak{h}$ is a Cartan subalgebra of $\mathfrak{osp}
\,$. Indeed, if $\{ \vartheta_{s j} \}$ denotes the basis of
$\mathfrak{h}^\ast$ dual to $\{ H_{s j} \}\,$, inspection of the
adjoint action of $\mathfrak{h}$ on $\mathfrak{osp}$ gives the
following result.
\begin{lemma}\label{lem:rootsosp}
The operators $X_{s j\, , \, t k}^{(m)}$ are eigenvectors of
$\mathrm{ad}(H)$ for all $H \in \mathfrak{h}:$
\begin{displaymath}
    [H\, ,X^{(m)}_{s j\, ,\, t k}] = \begin{cases}
    (\vartheta_{s j} - \vartheta_{t k})(H) \, X^{(m)}_{s j\, ,\, t k}
    & m = 0 \;, \\ (\vartheta_{s j} + \vartheta_{t k})(H) \,
    X^{(m)}_{s j\, , \, t k} & m = -2 \;, \\ (- \vartheta_{s j} -
    \vartheta_{t k})(H)\, X^{(m)}_{s j\,, \,t k} & m = 2\;.\end{cases}
\end{displaymath}
\end{lemma}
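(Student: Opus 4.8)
The plan is to compute the adjoint action $[H, X^{(m)}_{sj,tk}]$ directly for a basis element $H = H_{ul} = X^{(0)}_{ul,ul}$ of $\mathfrak{h}$ and then pass to general $H$ by linearity. The three cases $m = 0, -2, +2$ correspond to the three summands $\mathfrak{g}^{(0)}$, $\mathfrak{g}^{(-2)}$, $\mathfrak{g}^{(+2)}$ of the $\mathbb{Z}$-grading from Lemma \ref{lem:osp-iso3}, and in each case the operators $X^{(m)}_{sj,tk}$ are expressed through the rank-one operators $E_{s,j;t,k}$ on $V$, through $F_{s,j;t,k} \in \mathrm{Hom}(V^\ast, V)$, or through $\tilde F_{s,j;t,k} \in \mathrm{Hom}(V, V^\ast)$. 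So the first step is to record the elementary multiplication rules for these rank-one operators, namely $E_{s,j;t,k}\, E_{u,l;v,m} = \delta_{t,u}\delta_{k,l}\, E_{s,j;v,m}$ and the analogous contractions for $F$ and $\tilde F$, being careful that $F$ eats dual-basis vectors $f_{u,l}$ and $\tilde F$ eats basis vectors $e_{u,l}$.

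Next I would handle the case $m = 0$. Here $H_{ul} = E_{u,l;u,l} \oplus (-E_{u,l;u,l})^{\mathrm{st}}$ acts on $V$ via the $E$-part and on $V^\ast$ via $-E^{\mathrm{st}}$; since $H_{ul}$ is even, the supercommutator on $\mathrm{End}(V) \oplus \mathrm{End}(V^\ast)$ reduces to an ordinary commutator on each block. On the $V$-block, $[E_{u,l;u,l}, E_{s,j;t,k}] = \delta_{s,u}\delta_{j,l}\, E_{s,j;t,k} - \delta_{t,u}\delta_{k,l}\, E_{s,j;t,k}$, which is exactly $(\delta_{s,u}\delta_{j,l} - \delta_{t,u}\delta_{k,l}) X^{(0)}_{sj,tk}$ once the $V^\ast$-block is shown to contribute consistently (it does, because $-(\cdot)^{\mathrm{st}}$ is a Lie algebra homomorphism by Lemma \ref{lem:iso-osp4}, so the bracket on the image of the embedding is the image of the bracket). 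Recognizing $\delta_{s,u}\delta_{j,l} - \delta_{t,u}\delta_{k,l} = (\vartheta_{sj} - \vartheta_{tk})(H_{ul})$ finishes this case, and linearity in $H$ then gives the stated formula for all $H \in \mathfrak{h}$.

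For $m = \pm 2$ the computation is similar in spirit: $X^{(-2)}_{sj,tk}$ is a symmetrized sum $F_{s,j;t,k} + (-1)^{|s||t|} F_{t,k;s,j}$ lying in $\mathrm{Hom}(V^\ast, V)$, and to bracket it with $H_{ul}$ I view $H_{ul}$ as the operator on $W = V \oplus V^\ast$ equal to $E_{u,l;u,l}$ on $V$ and $-E_{u,l;u,l}^{\mathrm{st}}$ on $V^\ast$; then $[H_{ul}, F_{s,j;t,k}]$ picks up a $+$ contribution from $H_{ul}$ acting on the $V$-target of $F$ and a $+$ contribution from $-E^{\mathrm{st}}$ acting on the $V^\ast$-source, giving $(\delta_{s,u}\delta_{j,l} + \delta_{t,u}\delta_{k,l}) F_{s,j;t,k}$, i.e. the eigenvalue $(\vartheta_{sj} + \vartheta_{tk})(H_{ul})$ — and crucially this eigenvalue is symmetric under $(s,j) \leftrightarrow (t,k)$, so it is shared by the other term in the symmetrized sum and hence is the eigenvalue of $X^{(-2)}_{sj,tk}$ itself. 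The case $m = +2$ is the mirror image, with $\tilde F \in \mathrm{Hom}(V, V^\ast)$ and both signs flipped, yielding $(-\vartheta_{sj} - \vartheta_{tk})(H)$. The one point requiring genuine care — and the place I would slow down — is tracking the sign conventions: the supertranspose $(\cdot)^{\mathrm{st}}$ introduces signs when the operator has a $\mathrm{Hom}(V_1, V_0)$ component, and the symmetrization factors $(-1)^{|s||t|}$ and $(-1)^{|s||t|+|s|+|t|}$ must be reconciled with the action of $-E_{u,l;u,l}^{\mathrm{st}}$ on $V^\ast$; but since $H_{ul}$ is purely diagonal and even, these signs ultimately cancel and the eigenvalue formula comes out clean, so the obstacle is bookkeeping rather than anything structural.
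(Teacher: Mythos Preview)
Your proposal is correct and is exactly the direct computation the paper has in mind: the paper does not give a formal proof of this lemma but simply says that ``inspection of the adjoint action of $\mathfrak{h}$ on $\mathfrak{osp}$ gives the following result,'' and your argument is precisely that inspection carried out in detail. Your handling of the three cases via the rank-one operators $E$, $F$, $\tilde F$ and the observation that the $m=\pm 2$ eigenvalues are symmetric under $(s,j)\leftrightarrow(t,k)$ (so both terms of the symmetrized sum share the eigenvalue) is clean and complete.
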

A root of a Lie superalgebra $\mathfrak{g}$ is called {\it even} if
its root space is in $\mathfrak{g}_0\,$, it is called {\it odd} if
its root space is in $\mathfrak{g}_1\,$. We denote by $\Delta_0$ and
$\Delta_1$ the set of even roots and the set of odd roots,
respectively. For $\mathfrak{g} = \mathfrak{osp}$ we have
\begin{displaymath}
    \Delta_0 =\{ \pm \vartheta_{1j} \pm \vartheta_{1k}\, , \,
    \pm \vartheta_{0 j} \pm \vartheta_{0 l} \mid j < k , \, j\leq l\}
    , \quad \Delta_1 = \{ \pm \vartheta_{1 j} \pm \vartheta_{0 k} \}.
\end{displaymath}

\subsubsection{Casimir elements}\label{sect:osp-cas}

As before, let $\mathfrak{g} = \mathfrak{g}_0 \oplus \mathfrak{g}_1$
be a Lie superalgebra, and let $\mathsf{U}(\mathfrak{g}) = \cup_{n =
0}^\infty \mathsf{U}_n (\mathfrak{g})$ be its universal enveloping
algebra. Denote the symmetric algebra of $\mathfrak{g}_0$ by
$\mathrm{S}(\mathfrak{g}_0)$ and the exterior algebra of
$\mathfrak{g}_1$ by $\wedge(\mathfrak{g}_1)$. The
Poincar\'e-Birkhoff-Witt theorem for Lie superalgebras states that
for each $n$ there is a bijective correspondence
\begin{displaymath}
    \mathsf{U}_n (\mathfrak{g}) / \mathsf{U}_{n-1} (\mathfrak{g})
    \stackrel{\sim}{\to} \sum\nolimits_{k +l = n}
    \wedge^k( \mathfrak{g}_1 ) \otimes \mathrm{S}^l (\mathfrak{g}_0) \;.
\end{displaymath}
The collection of inverse maps lift to a vector-space isomorphism,
\begin{displaymath}
    \wedge(\mathfrak{g}_1) \otimes \mathrm{S}(\mathfrak{g}_0)
    \stackrel{\sim}{\to} \mathsf{U}(\mathfrak{g}) \;,
\end{displaymath}
called the super-symmetrization mapping. In other words, given a
homogeneous basis $\{ e_1, \ldots, e_d \}$ of $\mathfrak{g} \,$, each
element $x \in \mathsf{U}(\mathfrak{g})$ can be uniquely represented
in the form $x = \sum_n \sum_{i_1, \ldots, \, i_n} x_{i_1, \ldots, \,
i_n}\, e_{i_1} \cdots \, e_{i_n}$ with super-symmetrized
coefficients, i.e.,
\begin{displaymath}
    x_{i_1,\ldots,\, i_l,\, i_{l+1}, \ldots, \, i_n} = (-1)^{|e_{i_l}|
    |e_{i_{l+1}}|} x_{i_1,\, \ldots,\, i_{l+1},\, i_l,\, \ldots,\, i_n}
    \quad (1 \le l < n) \;.
\end{displaymath}
The isomorphism $\wedge(\mathfrak{g}_1) \otimes \mathrm{S}
(\mathfrak{g}_0) \simeq \mathsf{U}(\mathfrak{g})$ gives $\mathsf{U}
(\mathfrak{g})$ a $\mathbb{Z}$-grading (by the degree $n$).

Now recall that $\mathsf{U}(\mathfrak{g})$ comes with a canonical
bracket operation, the supercommutator $[\, , \,] : \, \mathsf{U}
(\mathfrak{g}) \times \mathsf{U}(\mathfrak{g}) \to \mathsf{U}
(\mathfrak{g})$. An element $X \in \mathsf{U}(\mathfrak{g})$ is said
to lie in the center of $\mathsf{U} (\mathfrak{g})$, and is called a
\emph{Casimir element}, iff $[X,Y]= 0$ for all $Y \in \mathsf{U}
(\mathfrak{g})$. By the formula (\ref{eq:brackets}), a necessary and
sufficient condition for that is $[X,Y] = 0$ for all $Y \in
\mathfrak{g}\,$.

In the case of $\mathfrak{g} = \mathfrak{osp}\,$, for every $\ell \in
\mathbb{N}$ there is a Casimir element $I_\ell$ of degree $2\ell$,
which is constructed as follows. Consider the bilinear form $B : \,
\mathfrak{osp} \times \mathfrak{osp} \to \mathbb{K}$ given by the
supertrace (in some representation), $B(X,Y) : = \mathrm{STr}\,
(XY)$. Recall that this form is $\mathrm{ad}$-invariant, which is to
say that $B([X,Y],Z) = B(X,[Y,Z])$ for all $X, Y, Z \in
\mathfrak{g}\,$.

Taking the supertrace in the fundamental representation of
$\mathfrak{osp} \,$, the form $B$ is non-degenerate, and therefore,
if $e_1,\, \ldots, \, e_d$ is a homogeneous basis of
$\mathfrak{osp}\,$, there is another homogeneous basis
$\widetilde{e}_1 , \, \ldots, \, \widetilde{e}_d$ of $\mathfrak{osp}$
so that $B(\widetilde{e}_i \, , \, e_j) = \delta_{ij}\,$. Note
$|\widetilde{e}_i | = |e_i|$ and put
\begin{equation}\label{eq:CasimirElement}
    I_\ell := \sum_{i_1,\,\ldots,\,i_{2\ell}=1}^d \widetilde{e}_
    {i_1}\cdots\, \widetilde{e}_{i_{2\ell}}\,\mathrm{STr}\,
    (e_{i_{2\ell}}\cdots\, e_{i_1})\in \mathsf{U}(\mathfrak{osp}).
\end{equation}
(Notice that, in view of the remark following Lemma \ref{lem:STr},
there is no point in making the same construction with an odd number
of factors.)
\begin{lemma}\label{lem:IlCasimir}
For all $\ell \in \mathbb N$ the element $I_\ell$ is Casimir, and
$|I_\ell| = 0$.
\end{lemma}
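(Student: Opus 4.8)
The plan is to show that $I_\ell$ defined in \eqref{eq:CasimirElement} commutes with every element of $\mathfrak{osp}$; by the remark following \eqref{eq:brackets}, this suffices to conclude that $I_\ell$ is central in $\mathsf{U}(\mathfrak{osp})$. The parity claim $|I_\ell|=0$ is immediate: each summand is a product $\widetilde{e}_{i_1}\cdots\widetilde{e}_{i_{2\ell}}$ carrying parity $\sum_k|e_{i_k}|$, but the scalar $\mathrm{STr}(e_{i_{2\ell}}\cdots e_{i_1})$ vanishes unless that sum is even (by the remark after Lemma~\ref{lem:STr} that the supertrace of an odd product of $\mathfrak{osp}$-elements is zero), so only even-parity terms survive.

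For centrality, I would fix a homogeneous $X\in\mathfrak{osp}$ and compute $[I_\ell,X]$ using the derivation formula \eqref{eq:brackets} applied to the length-$2\ell$ product $\widetilde{e}_{i_1}\cdots\widetilde{e}_{i_{2\ell}}$. This expands $[I_\ell,X]$ into a sum over the position $r$ at which $X$ acts, each term containing a factor $[\widetilde{e}_{i_r},X]$. The key is to re-express these brackets back in terms of the basis: write $[\widetilde{e}_{i_r},X]=\sum_m c^{m}_{i_r}\,\widetilde{e}_m$, and use $\mathrm{ad}$-invariance of the trace form together with $B(\widetilde{e}_i,e_j)=\delta_{ij}$ to identify the structure constants as $c^{m}_{i_r}=-(-1)^{|X||e_{i_r}|}\,\mathrm{STr}([X,e_m]\,\cdots)$-type expressions — concretely, $B([\widetilde{e}_{i_r},X],e_m)=B(\widetilde{e}_{i_r},[X,e_m])$. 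Substituting and relabelling summation indices, the contribution where $X$ acts on the $r$-th $\widetilde{e}$ factor can be paired against the contribution where, in the supertrace $\mathrm{STr}(e_{i_{2\ell}}\cdots e_{i_1})$, the operator $X$ is inserted (via $[X,\,\cdot\,]$) at the corresponding position; since $\mathrm{STr}[X,e_{i_{2\ell}}\cdots e_{i_1}]=0$ by Lemma~\ref{lem:STr}, the telescoping sum over insertion positions in the supertrace vanishes, forcing $[I_\ell,X]=0$.

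Concretely, the mechanism is the standard ``dual-basis Casimir'' argument adapted to the super-setting: one shows
\begin{displaymath}
    [I_\ell, X] = \sum_{r} (\pm)\; \widetilde{e}_{i_1}\cdots[\widetilde{e}_{i_r},X]\cdots\widetilde{e}_{i_{2\ell}}\;\mathrm{STr}(e_{i_{2\ell}}\cdots e_{i_1})
\end{displaymath}
and then, after using $\mathrm{ad}$-invariance to move the action of $X$ from the $\widetilde{e}$-word onto the $e$-word inside the supertrace, recognizes the result as $-\sum_{\text{all }i}\widetilde{e}_{i_1}\cdots\widetilde{e}_{i_{2\ell}}\,\mathrm{STr}\big([X,\,e_{i_{2\ell}}\cdots e_{i_1}]\big)=0$. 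The careful bookkeeping of the sign factors $(-1)^{|X|\sum|Y_j|}$ from \eqref{eq:brackets} and the super-symmetry of the $B$-pairing is what makes this work; I would verify that the parity weights on the two sides match, using $|\widetilde{e}_i|=|e_i|$ and $|I_\ell|=0$.

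The main obstacle is precisely this sign-tracking: ensuring that the signs produced by \eqref{eq:brackets} when commuting $X$ past the $\widetilde{e}$'s exactly cancel against the signs produced by cyclically moving $X$ around the supertrace $\mathrm{STr}(e_{i_{2\ell}}\cdots e_{i_1})$, given that $\mathrm{STr}$ satisfies a \emph{graded} cyclicity (inherited from $\mathrm{STr}[A,B]=0$ in Lemma~\ref{lem:STr}) rather than ordinary cyclicity. A clean way to organize this is to first prove the identity for $\ell=1$ by direct computation — there $I_1=\sum_{i}\widetilde{e}_i e_i$-type and the argument is short — and then observe that the general case follows from the same pairing applied termwise, since the only structural inputs are $\mathrm{ad}$-invariance and nondegeneracy of $B$, both already established. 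I expect no genuine difficulty beyond this combinatorial care.
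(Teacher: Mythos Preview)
Your proposal is correct and follows essentially the same approach as the paper: expand $[I_\ell,X]$ via \eqref{eq:brackets}, use $\mathrm{ad}$-invariance of $B$ together with $B(\widetilde{e}_i,e_j)=\delta_{ij}$ to transfer the bracket from the $\widetilde{e}$-word to the $e$-word inside the supertrace, and conclude from $\mathrm{STr}[X,\,\cdot\,]=0$. The parity argument is likewise identical; the induction-on-$\ell$ reorganization you mention is unnecessary since the transfer works uniformly for all $\ell$.
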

\begin{proof}
By specializing the formula (\ref{eq:brackets}) to the present case,
\begin{displaymath}
    [I_\ell\, , \,X] = \sum \sum_{k=1}^{2\ell} (-1)^{|X|
    (|\widetilde{e}_{i_{k+1}}| + \ldots +|\widetilde{e}_{i_{2\ell}}|)}
    \widetilde{e}_{i_1} \cdots \, \widetilde{e}_{i_{k-1}} \, [
    \widetilde{e}_{i_k} , X] \, \widetilde{e}_{i_{k+1}} \cdots \,
    \widetilde{e}_{i_{2\ell}}\,\mathrm{STr}\,(e_{i_{2\ell}}\cdots\,
    e_{i_1})\;.
\end{displaymath}
Now if $[\widetilde{e}_i \, , X] = \sum_j X_{ij}\, \widetilde{e}_j$
then from $\mathrm{ad}$-invariance, $B([\widetilde{e}_i \, , X], e_j)
= X_{ij} = B(\widetilde{e}_i \, , [X,e_j])$, one has $[X,e_j] =
\sum_i e_i\, X_{ij}\,$. Using this relation to transfer the
$\mathrm{ad}(X)$-action from $\widetilde{e}_{i_k}$ to $e_{i_k}\,$,
and reading the formula (\ref{eq:brackets}) backwards, one obtains
\begin{displaymath}
    [I_\ell\,,\,X] = \sum \widetilde{e}_{i_1}\cdots \,\widetilde{e}_{
    i_{2\ell}}\, \mathrm{STr}\,([X,e_{i_{2\ell}}\cdots\, e_{i_1}]) \;.
\end{displaymath}
Since the supertrace of any bracket vanishes, one concludes that
$[I_\ell\, , X ] = 0$.

The other statement, $|I_\ell| = 0\,$, follows from $|\widetilde
{e}_i | = |e_i|$, the additivity of the $\mathbb{Z}_2$-degree and the
fact that $\mathrm{STr}\,(a) = 0$ for any odd element $a \in
\mathsf{U}(\mathfrak{g})$.
\end{proof}
We now describe a useful property enjoyed by the Casimir elements
$I_\ell$ of $\mathfrak{osp}(V \oplus V^\ast)$ in the special case of
isomorphic components $V_0 \simeq V_1\,$. Recalling the notation of
$\S$\ref{sect:osp-roots}, let $\partial := \sum_j X^{(0)}_{0j,1j}$
and $\widetilde{\partial} := - \sum_j X_{1j ,0j}^{(0)}\,$. These are
odd elements of $\mathfrak{osp}\,$.
%
%
Notice that the bracket ${C} := [\partial , \widetilde{\partial}] = -
\sum_{s,j} H_{s j}$ is in the Cartan algebra of $\mathfrak{osp}\,$.
From $[\partial , \partial] = 2 \, \partial^2 = 0$ and the Jacobi
identity one infers that
\begin{displaymath}
    [\partial,{C}] = [[\partial,\partial],\widetilde{\partial}]
    - [\partial , [\partial , \widetilde{\partial}]]
    = - [\partial , {C}] = 0 \;.
\end{displaymath}
By the same argument, $[\widetilde{\partial},{C}] = 0\,$. One
also sees that ${C}^2 = \mathrm{Id}\,$.

Now define $F_\ell$ to be the following odd element of
$\mathsf{U}(\mathfrak{osp})$:
\begin{displaymath}
    F_\ell = - \sum_{i_1,\,\ldots,\,i_{2\ell}=1}^d \widetilde{e}_{i_1}
    \cdots \, \widetilde{e}_{i_{2\ell}}\, \mathrm{STr}\,(e_{i_{2\ell}}
    \cdots \, e_{i_1} \widetilde{\partial} {C}) \;.
\end{displaymath}
\begin{lemma}\label{lem:Cas-exact}
Let $\mathfrak{osp}(V \oplus V^\ast)$ be the orthosymplectic Lie
superalgebra for a $\mathbb{Z}_2$-graded vector space $V$ with
isomorphic components $V_0 \simeq V_1\,$. Then for all $\ell \in
\mathbb{N}$ the Casimir element $I_\ell$ is expressible as a bracket:
$I_\ell = [\partial , F_\ell ]\,$.
\end{lemma}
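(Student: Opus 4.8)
The plan is to compute the bracket $[\partial, F_\ell]$ directly and show it equals $I_\ell$, using the formula (\ref{eq:brackets}) for supercommutators in $\mathsf{U}(\mathfrak{osp})$ together with the $\mathrm{ad}$-invariance of the trace form $B$. First I would expand $[\partial, F_\ell]$ as a sum over the positions $k$ where $\partial$ acts on one of the factors $\widetilde{e}_{i_k}$, exactly as in the proof of Lemma \ref{lem:IlCasimir}: writing $[\partial, \widetilde{e}_i] = \sum_j (\partial)_{ij}\, \widetilde{e}_j$ and using $\mathrm{ad}$-invariance to get $[\widetilde{\partial}\text{-dual transfer}]$ — more precisely, $B([\partial,\widetilde{e}_i], e_j) = (\partial)_{ij} = B(\widetilde{e}_i, [-\partial, e_j])$ (note $\partial$ is odd), so $[\partial, e_j] = -\sum_i e_i\, (\partial)_{ij}$. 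This lets me move the $\mathrm{ad}(\partial)$-action off the $\widetilde e$'s and onto the $e$'s inside the supertrace, and then read (\ref{eq:brackets}) backwards to collect the terms into a single supertrace of a bracket, just as in Lemma \ref{lem:IlCasimir}. The upshot should be
\begin{displaymath}
    [\partial, F_\ell] = -\sum \widetilde{e}_{i_1}\cdots\widetilde{e}_{i_{2\ell}}\,
    \mathrm{STr}\big([\partial,\, e_{i_{2\ell}}\cdots e_{i_1}\widetilde{\partial}{C}]\big)
    = -\sum \widetilde{e}_{i_1}\cdots\widetilde{e}_{i_{2\ell}}\,
    \mathrm{STr}\big(e_{i_{2\ell}}\cdots e_{i_1}\,[\partial,\widetilde{\partial}{C}]\big),
\end{displaymath}
where the second equality uses that the supertrace of a bracket vanishes (Lemma \ref{lem:STr}) — here the bracket is taken in the associative algebra $\mathrm{End}(W)$ in whatever faithful representation we use to define $B$, which is legitimate since $\partial$ is a genuine $\mathfrak{osp}$-element acting on $W$.

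The heart of the matter is then the purely Lie-theoretic identity
\begin{displaymath}
    [\partial,\, \widetilde{\partial}\,{C}] = -\,{C}^{\,2} = -\,\mathrm{Id}\quad\text{(mod the right sign)},
\end{displaymath}
which I would establish from the relations already recorded before the lemma: $[\partial,\widetilde{\partial}] = {C} = -\sum_{s,j}H_{sj}$, $[\partial,{C}] = 0$, $[\widetilde{\partial},{C}]=0$, and ${C}^2 = \mathrm{Id}$. Indeed, by the super-Leibniz rule $[\partial, \widetilde{\partial}{C}] = [\partial,\widetilde{\partial}]\,{C} + (-1)^{|\partial||\widetilde\partial|}\widetilde{\partial}[\partial,{C}] = {C}\cdot{C} + 0 = \mathrm{Id}$. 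Substituting this back, $[\partial, F_\ell] = -\sum \widetilde{e}_{i_1}\cdots\widetilde{e}_{i_{2\ell}}\,\mathrm{STr}(e_{i_{2\ell}}\cdots e_{i_1})\cdot(-1) = I_\ell$, matching (\ref{eq:CasimirElement}) exactly. The parity bookkeeping is consistent: $F_\ell$ is odd because $\widetilde{\partial}{C}$ is odd (as $\widetilde\partial$ is odd and ${C}$ is even), and $\partial$ is odd, so $[\partial, F_\ell]$ is even, agreeing with $|I_\ell| = 0$ from Lemma \ref{lem:IlCasimir}.

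The main obstacle I anticipate is not conceptual but a matter of getting every sign right in the two places where signs proliferate: (i) the re-expression of $[\partial, F_\ell]$ as a single supertrace, where the factors $(-1)^{|\partial|(\cdots)}$ in (\ref{eq:brackets}) and the extra sign from $[\partial, e_j] = -\sum_i e_i (\partial)_{ij}$ (the sign being forced by $\mathrm{ad}$-invariance once $\partial$ is odd) must be tracked carefully — this is where the hypothesis $V_0\simeq V_1$ enters, since it is what guarantees the existence of the odd element $\partial$ with $\partial^2 = 0$ and hence the identity ${C}^2 = \mathrm{Id}$; (ii) the harmless-looking step $\mathrm{STr}([\,\cdot\,, \cdot\,]) = 0$ applied inside the universal enveloping algebra expression, which really means: the scalar $\mathrm{STr}(e_{i_{2\ell}}\cdots e_{i_1}\widetilde{\partial}{C})$ is a trace in $\mathrm{End}(W)$, and moving $\partial$ cyclically past it produces the supertrace of a genuine commutator in $\mathrm{End}(W)$, which vanishes by Lemma \ref{lem:STr}. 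Once these sign conventions are pinned down against the definition (\ref{eq:CasimirElement}), the identity $I_\ell = [\partial, F_\ell]$ follows.
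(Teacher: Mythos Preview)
Your strategy matches the paper's, but the displayed chain contains a genuine error, not just a deferred sign. The transfer argument from Lemma~\ref{lem:IlCasimir} moves $\mathrm{ad}(\partial)$ from the $\widetilde{e}$-factors onto their $B$-dual $e$-factors inside the trace; the extra factor $\widetilde{\partial}C$ has no $\widetilde{e}$-partner outside and is a spectator, so it must stay \emph{outside} the bracket. What the transfer actually yields is
\[
[\partial,F_\ell] \;=\; -\sum \widetilde{e}_{i_1}\cdots\widetilde{e}_{i_{2\ell}}\,\mathrm{STr}\!\big([\partial,\,e_{i_{2\ell}}\cdots e_{i_1}]\,\widetilde{\partial}C\big),
\]
which is precisely the paper's first line. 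Your middle expression $\mathrm{STr}\big([\partial,\,e_{i_{2\ell}}\cdots e_{i_1}\widetilde{\partial}C]\big)$ is identically zero by Lemma~\ref{lem:STr}, so taken at face value your formula proves $I_\ell=0$. (A related slip: your claimed sign $[\partial,e_j]=-\sum_i e_i(\partial)_{ij}$ is spurious; $\mathrm{ad}$-invariance $B([\widetilde{e}_i,\partial],e_j)=B(\widetilde{e}_i,[\partial,e_j])$ introduces no extra sign regardless of the parity of $\partial$.)

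With $\widetilde{\partial}C$ placed correctly, your key identity $[\partial,\widetilde{\partial}C]=[\partial,\widetilde{\partial}]C-\widetilde{\partial}[\partial,C]=C^2=\mathrm{Id}$ does finish the argument, and this is equivalent to the paper's manipulation. From the correct expression above one can either (as the paper does) cycle $\widetilde{\partial}C$ to the front and use $\mathrm{STr}(c\,[b,a])=\mathrm{STr}([c,b]\,a)$ with $c=\widetilde{\partial}C$, $b=\partial$, or (as you intend) expand $0=\mathrm{STr}\big([\partial,a\widetilde{\partial}C]\big)=\mathrm{STr}\big([\partial,a]\widetilde{\partial}C\big)+(-1)^{|a|}\mathrm{STr}\big(a\,[\partial,\widetilde{\partial}C]\big)$ and observe that the $(-1)^{|a|}$ is harmless since $\mathrm{STr}(a)=0$ for odd $a$. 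Either way one lands on $\mathrm{STr}(e_{i_{2\ell}}\cdots e_{i_1})$ and hence $I_\ell$.
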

\begin{proof}
By the same argument as in the proof of Lemma \ref{lem:IlCasimir},
\begin{displaymath}
    [\partial , F_\ell ] = - \sum \widetilde{e}_{i_1}\cdots \,
    \widetilde{e}_{i_{2\ell}}\,\mathrm{STr}\,([\partial ,e_{i_{2\ell}}
    \cdots e_{i_1}] \, \widetilde{\partial} {C}) \;.
\end{displaymath}
Using the relations $[\partial , {C}] = 0$ and ${C}^2 =
\mathrm{Id}\,$, one has for any $a\in \mathsf{U} (\mathfrak{osp})$ that
\begin{displaymath}
    - \mathrm{STr}\, ([\partial, \, a]\, \widetilde{\partial} {C}) =
    \mathrm{STr}\, (\widetilde{\partial} {C} \, [\partial , \, a]) =
    \mathrm{STr}\, ([\widetilde{\partial} , \partial] {C} \, a)
    = \mathrm{STr}\,({C}^2 a) = \mathrm{STr}\, (a) \;,
\end{displaymath}
where the second equality sign is from $\mathrm{STr}\, (c,[b,a]) =
\mathrm{STr}\, ([c,b]\, a)$. The statement of the lemma now follows
on setting $a = e_{i_{2\ell}} \cdots\, e_{i_1}\,$.
\end{proof}
As we shall see in $\S$\ref{sect:CasEigenvalues}, Lemma \ref{lem:Cas-exact} has the drastic consequence that all $\mathfrak{osp}$-Casimir elements $I_\ell$ are zero in a certain class of representations of $\mathfrak{osp}(V \oplus V^\ast)$ for $V_0 \simeq V_1\,$.

\subsection{Howe pairs in $\mathfrak{osp}(W)$}\label{sect:howe-pairs}

In the present context, a pair $(\mathfrak{h},\mathfrak{h}')$ of
subalgebras $\mathfrak{h}, \mathfrak{h}' \subset \mathfrak{g}$ of a
Lie superalgebra $\mathfrak{g}$ is called a \emph{dual pair} whenever
$\mathfrak{h}'$ is the centralizer of $\mathfrak{h}$ in
$\mathfrak{g}$ and vice versa. In this subsection, let $\mathbb{K} =
\mathbb{C}\,$.

Given a $\mathbb{Z}_2$-graded complex vector space $U = U_0 \oplus
U_1$ we let $V := U \otimes \mathbb{C}^N$, where $\mathbb{C}^N$ is
equipped with the standard representation of $\mathrm{GL}
(\mathbb{C}^N)$, $\mathrm{O}(\mathbb{C}^N)\,$, or $\mathrm{Sp}
(\mathbb{C}^N)\,$, as the case may be. As a result, the Lie algebra
$\mathfrak{k}$ of whichever group is represented on $\mathbb{C}^N$ is
embedded in $\mathfrak{osp}(V \oplus V^\ast)$.  We will now describe
the dual pairs $(\mathfrak{h},\mathfrak{k})$ in $\mathfrak{osp} (W)$
for $W = V \oplus V^*$. These are known as dual pairs in the sense of
R.\ Howe.

Let us begin by recalling that for any representation $\rho :\, K \to
\mathrm{GL}(E)$ of a group $K$ on a vector space $E\,$, the dual
representation $\rho^\ast : \, K \to \mathrm{GL}(E^\ast)$ on the
linear forms on $E$ is given by $(\rho(k)\varphi)(x) =
\varphi(\rho(k)^{-1}x)$. By this token, every representation $\rho
:\, K \to \mathrm{GL}(\mathbb{C}^N)$ induces a representation $\rho_W
= (\mathrm{Id} \otimes \rho) \times (\mathrm{Id} \otimes \rho^\ast)$
of $K$ on $W = V \oplus V^\ast$.
\begin{lemma}
Let $\rho: \, K \to \mathrm{GL}(\mathbb{C}^N)$ be any representation
of a Lie group $K\,$. If $V = U \otimes \mathbb{C}^N$ for a
$\mathbb{Z}_2$-graded complex vector space $U = U_0 \oplus U_1\,$,
the induced repre\-sentation ${\rho_W}_\ast (\mathfrak{k})$ of the
Lie algebra $\mathfrak{k}$ of $K$ on $W = V \oplus V^\ast$ is a
subalgebra of $\mathfrak{osp}(W)_0\,$.
\end{lemma}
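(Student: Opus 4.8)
The plan is to verify the defining condition $Q(\rho_W{}_\ast(X)w, w') + Q(w, \rho_W{}_\ast(X)w') = 0$ for every $X \in \mathfrak{k}$ and all $w, w' \in W$, and then to observe that the represented operators are even. Since $\mathfrak{k}$ acts by even operators (the $K$-action on $\mathbb{C}^N$ preserves the $\mathbb{Z}_2$-grading of $V = U \otimes \mathbb{C}^N$, acting only on the second factor, and likewise on $V^\ast$), the parity statement $\rho_W{}_\ast(\mathfrak{k}) \subset \mathrm{End}(W)_0$ is immediate, and the sign $(-1)^{|X||w|}$ in (\ref{eq:def-osp}) is trivial. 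So the real content is the invariance of the orthosymplectic form $Q = A + S$.

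First I would reduce to the two graded components. Because $\rho_W{}_\ast(X)$ preserves $W_0$ and $W_1$ separately and $Q$ is the orthogonal sum $A + S$ with $A$ on $W_0$ and $S$ on $W_1$, it suffices to check that $\rho_W{}_\ast(X)|_{W_0} \in \mathfrak{sp}(W_0)$ and $\rho_W{}_\ast(X)|_{W_1} \in \mathfrak{o}(W_1)$; then Lemma \ref{lem:iso-osp1} identifies $\mathfrak{osp}(W)_0$ with $\mathfrak{o}(W_1) \oplus \mathfrak{sp}(W_0)$ and we are done. Each of these two checks has the same structure: on $W_s = V_s \oplus V_s^\ast$ the operator is $\mathrm{id}_{U_s} \otimes \rho_\ast(X) \oplus \mathrm{id}_{U_s} \otimes \rho^\ast_\ast(X)$, where $\rho^\ast_\ast(X) = -\rho_\ast(X)^{\mathrm{t}}$ is the differential of the dual representation. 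The canonical pairing form on $V_s \oplus V_s^\ast$ (the restriction of $A$, resp.\ $S$, up to the symmetry twist) is precisely $\langle v, \varphi' \rangle \mp \langle v', \varphi \rangle$, and a one-line computation shows that an operator of the block-diagonal shape $T \oplus (-T^{\mathrm{t}})$ annihilates this form for any $T \in \mathrm{End}(V_s)$. In our situation $T = \mathrm{id}_{U_s} \otimes \rho_\ast(X)$, so the conclusion follows with no hypothesis on $X$ beyond being an endomorphism — which is why the statement holds for an arbitrary representation $\rho$, not just the defining one. (The orthogonal/symplectic nature of $\rho$ itself is not needed here; it will matter later, when one passes from $\mathfrak{gl}$-type to $\mathfrak{o}$- or $\mathfrak{sp}$-type Howe partners.)

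There is no serious obstacle; the only thing to be careful about is bookkeeping of the sign conventions — in particular matching the twisted symmetry $Q(w,w') = -(-1)^{|w||w'|}Q(w',w)$ against the untwisted transpose identity, and confirming that the differential of $\rho^\ast$ is indeed $X \mapsto -\rho_\ast(X)^{\mathrm{t}}$ rather than $+\rho_\ast(X)^{\mathrm{t}}$, which is exactly the sign that makes $T \oplus (-T^{\mathrm{t}})$ (rather than $T \oplus T^{\mathrm{t}}$) land in $\mathfrak{osp}(W)$. This is the same sign that appears in Lemma \ref{lem:iso-osp4} via the (super)transpose, so I would phrase the argument by simply invoking that lemma: $\mathrm{id}_{U_s} \otimes \rho_\ast(X)$, regarded as an element of $\mathrm{End}(V)$ acting trivially on the odd-versus-even split coming from $U$, maps under $\mathsf{A} \mapsto \mathsf{A} \oplus (-\mathsf{A}^{\mathrm{st}})$ to an element of $\mathfrak{g}^{(0)} \subset \mathfrak{osp}(W)_0$, and $\rho_W{}_\ast(X)$ is exactly this image. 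That reduces the whole proof to citing Lemma \ref{lem:iso-osp4} plus the observation that the $K$-action is even.
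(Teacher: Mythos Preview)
Your argument is correct. The paper's proof differs slightly in presentation: rather than working infinitesimally, it argues at the group level first, observing that the $K$-action on $V \otimes V^\ast$ via $(\mathrm{Id}\otimes\rho)\otimes(\mathrm{Id}\otimes\rho^\ast)$ tautologically preserves the canonical pairing $V\otimes V^\ast\to\mathbb{C}$, and since $Q$ is built from nothing but this pairing one gets $Q(\rho_W(k)w,\rho_W(k)w')=Q(w,w')$; differentiating then gives ${\rho_W}_\ast(\mathfrak{k})\subset\mathfrak{osp}(W)$, and evenness is noted last. Your route --- check evenness first, split into $W_0$ and $W_1$, and recognize the block form $\mathsf{A}\oplus(-\mathsf{A}^{\mathrm{st}})$ so that Lemma~\ref{lem:iso-osp4} applies --- is equally valid and has the virtue of locating the image precisely in $\mathfrak{g}^{(0)}$, not just in $\mathfrak{osp}(W)_0$. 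The paper's version has the complementary virtue of making the mechanism transparent (preservation of the canonical pairing) without needing the structural decomposition of $\mathfrak{osp}$.
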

\begin{proof}
The $K$-action on $\mathbb{C}^N \otimes (\mathbb{C}^N)^\ast$ by $z
\otimes \zeta \mapsto \rho(k)z \otimes \rho^\ast(k)\zeta$ preserves
the canonical pairing $z \otimes \zeta \mapsto \zeta(z)$ between
$\mathbb{C}^N$ and $(\mathbb{C}^N)^\ast$. Consequently, the
$K$-action on $V \otimes V^\ast$ by $(\mathrm{Id} \otimes \rho)
\otimes (\mathrm{Id} \otimes \rho^\ast)$ preserves the canonical
pairing $V \otimes V^\ast \to \mathbb{C}\,$. Since the
orthosymplectic form $Q : \, W \times W \to \mathbb{C}$ uses nothing
but that pairing, it follows that
\begin{displaymath}
    Q( \rho_W(k) w \, , \, \rho_W(k) w^\prime) = Q(w,w^\prime) \quad
    (\text{for all } w,w^\prime \in W)\;.
\end{displaymath}
Passing to the Lie algebra level one obtains ${\rho_W}_\ast(
\mathfrak{k}) \subset \mathfrak{osp}(W)$. The operator $\rho_W(k)$
preserves the $\mathbb{Z}_2$-grading of $W$; therefore one actually
has ${\rho_W}_\ast( \mathfrak{k}) \subset \mathfrak{osp}(W)_0\,$.
\end{proof}
Let us now assume that the complex Lie group $K$ is defined by a
non-degenerate bilinear form $B :\, \mathbb{C}^N \times \mathbb{C}^N
\to \mathbb{C}$ in the sense that
\begin{displaymath}
    K = \{ k \in \mathrm{GL}(\mathbb{C}^N) \mid \forall z\, , z^\prime
    \in \mathbb{C}^N : \, B(k z \, ,\, k z^\prime) = B(z\,,z^\prime)\}\;.
\end{displaymath}
We then have a canonical isomorphism $\psi : \, \mathbb{C}^N \to
(\mathbb{C}^N)^\ast$ by $z \mapsto B(z\, ,\,)$, and an isomorphism
$\Psi : \, (U \oplus U^\ast) \otimes \mathbb{C}^N \to W$ by $(u +
\varphi) \otimes z \mapsto u \otimes z + \varphi \otimes \psi(z)$.
\begin{lemma}\label{lem:2.11}
$\rho_W(k) = \Psi \circ (\mathrm{Id} \otimes k) \circ \Psi^{-1}$ for
all $k \in K\,$.
\end{lemma}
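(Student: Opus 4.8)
The plan is to verify the identity $\rho_W(k) = \Psi \circ (\mathrm{Id} \otimes k) \circ \Psi^{-1}$ by checking it on the two graded summands $V = U \otimes \mathbb{C}^N$ and $V^\ast = U^\ast \otimes \mathbb{C}^N$ of $W$, using the explicit definitions of $\Psi$, of the induced $K$-action $\rho_W = (\mathrm{Id}\otimes\rho)\times(\mathrm{Id}\otimes\rho^\ast)$, and of the dual representation $\rho^\ast$. Since all maps in sight are linear, it suffices to evaluate both sides on decomposable elements $u \otimes z$ with $u \in U$, $z \in \mathbb{C}^N$, and on $\varphi \otimes \zeta$ with $\varphi \in U^\ast$, $\zeta \in (\mathbb{C}^N)^\ast$; but because $\Psi$ sends $(u+\varphi)\otimes z \mapsto u\otimes z + \varphi\otimes\psi(z)$, it is cleanest to take the source of $\Psi$, namely $(U\oplus U^\ast)\otimes\mathbb{C}^N$, and track an element $(u+\varphi)\otimes z$ all the way through.

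First I would compute the right-hand side: apply $\Psi^{-1}$ to a general $w \in W$, act by $\mathrm{Id}\otimes k$ on $(U\oplus U^\ast)\otimes\mathbb{C}^N$, then apply $\Psi$ again. On the $V$-part the claim is transparent: $\Psi$ restricted to $U\otimes\mathbb{C}^N$ is the identity, $\mathrm{Id}\otimes k$ acts as $\mathrm{Id}\otimes\rho(k)$, and $\rho_W$ acts on $V$ as $\mathrm{Id}\otimes\rho(k)$ by definition, so the two sides agree. The only real content is on the $V^\ast$-part: one must show that conjugating $k$ by the isomorphism $\psi:\mathbb{C}^N\to(\mathbb{C}^N)^\ast$, $z\mapsto B(z,\cdot)$, reproduces the dual representation $\rho^\ast(k)$. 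Concretely, for $z \in \mathbb{C}^N$ one has $\psi(kz) = B(kz,\cdot)$, and the defining relation $B(kz,kz')=B(z,z')$ rewrites this as $B(kz, kz') = B(z,z') = (\psi(z))(z')$, i.e. $(\psi(kz))(kz') = (\psi(z))(z')$, which says precisely $\psi(kz)\circ k = \psi(z)$, hence $\psi\circ k = (k^{-1})^{\mathrm{t}}\circ\psi = \rho^\ast(k)\circ\psi$ on $\mathbb{C}^N$. Tensoring with $\mathrm{Id}_{U^\ast}$ and chasing through $\Psi$ then gives $\rho_W(k)$ on $V^\ast$.

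I expect the main obstacle — such as it is — to be purely bookkeeping: keeping the identification $\Psi$ straight as a map between two different model spaces (it is an isomorphism, not the identity, even though it looks innocuous), and being careful that $\rho^\ast(k)$ is defined by $(\rho^\ast(k)\varphi)(x)=\varphi(k^{-1}x)$, so that the relevant invariance of $B$ must be used in the form $B(kz,kz')=B(z,z')$ (equivalently $B(k^{-1}z,z')=B(z,kz')$) at exactly the right moment. There are no deep ideas here; the statement is essentially the observation that a group preserving a bilinear form $B$ acts on the dual space, through the $B$-induced identification, by the same matrices. I would therefore present the proof as a short direct computation on the two summands $U\otimes\mathbb{C}^N$ and $U^\ast\otimes\mathbb{C}^N$, invoking non-degeneracy of $B$ only to know that $\psi$ is an isomorphism so that $\Psi$ is well-defined.
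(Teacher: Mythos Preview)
Your proposal is correct and follows essentially the same approach as the paper: both reduce to checking the identity on decomposable elements $(u+\varphi)\otimes z$ and use the $K$-invariance of $B$ to establish the key relation $\psi(kz) = \psi(z)\circ k^{-1}$ (equivalently, $\psi\circ k = \rho^\ast(k)\circ\psi$). The paper organizes this as showing $\rho_W(k)\circ\Psi = \Psi\circ(\mathrm{Id}\otimes k)$ directly rather than conjugating by $\Psi^{-1}$, but this is a cosmetic difference.
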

\begin{proof}
If $u \in U$, $\varphi \in U^\ast$, and $z \in \mathbb{C}^N$, then by
the definition of $\Psi$ and $\rho_W(k)$,
\begin{displaymath}
    \rho_W(k) \Psi((u+\varphi) \otimes z) = u \otimes k z +
    \varphi \otimes \psi(z) k^{-1} \;.
\end{displaymath}
Since $B$ is $K$-invariant, one has $\psi(z) k^{-1} = \psi(kz)$, and
therefore
\begin{displaymath}
    u \otimes k z + \varphi \otimes \psi(z) k^{-1}
    = u \otimes k z + \varphi \otimes \psi(k z)
    = \Psi ((\mathrm{Id} \otimes k)((u+\varphi)\otimes z)) \;.
\end{displaymath}
Thus $\rho_W(k) \circ \Psi = \Psi \circ (\mathrm{Id} \otimes k)$.
\end{proof}
Let us now examine what happens to the orthosymplectic form $Q$ on
$W$ when it is pulled back by the isomorphism $\Psi$ to a bilinear
form $\Psi^*Q$ on $(U\oplus U^*) \otimes \mathbb{C}^N:$
\begin{displaymath}
    \Psi^*Q((u + \varphi) \otimes z\, , (u' + \varphi') \otimes z')
    = \varphi'(u)\, \psi(z')(z) - (-1)^{|u'||\varphi|} \varphi(u')
    \, \psi(z)(z') \;.
\end{displaymath}
By definition, $\psi(z)(z^\prime) = B(z\, ,z^\prime)$, and writing
$B(z\, , z^\prime) = (-1)^\delta B(z^\prime,z)$ where $\delta = 0$ if
$B$ is symmetric and $\delta = 1$ if $B$ is alternating, we obtain
\begin{equation}\label{eq:pull-back}
    \Psi^*Q((u + \varphi) \otimes z \, , (u^\prime + \varphi^\prime)
    \otimes z^\prime) = (\varphi^\prime (u) - (-1)^{|u'||\varphi| +
    \delta} \varphi(u')) B(z',z) \;.
\end{equation}
In view of this, let $\widetilde{U}$ denote the vector space $U = U_0
\oplus U_1$ with the twisted $\mathbb{Z}_2$-grading, i.e.\
$\widetilde{U}_s := U_{s+1}$ ($s \in \mathbb{Z}_2$). Moreover, notice
that $\Psi$ determines an embedding
\begin{displaymath}
    \mathrm{End}(U \oplus U^\ast) \otimes \mathrm{End}(\mathbb{C}^N)
    \to \mathrm{End}(W) \;, \quad
    X \otimes k \mapsto \Psi \circ (X \otimes k) \circ \Psi^{-1} \;,
\end{displaymath}
whose restriction to $\mathrm{End}(U \oplus U^*) \otimes \{
\mathrm{Id} \} \to \mathrm{End}(W)$ is an injective homomorphism.

In the following we often write $\mathrm{O}(\mathbb{C}^N) \equiv
\mathrm{O}_N$ and $\mathrm{Sp}(\mathbb{C}^N) \equiv \mathrm{Sp}_N$
for short.
\begin{corollary}\label{cor:2.2}
For $K = \mathrm{O}_N$ and $K = \mathrm{Sp}_N\,$, the map $X \mapsto
\Psi \circ (X \otimes \mathrm{Id}) \circ \Psi^{-1}$ defines a Lie
superalgebra embedding into $\mathfrak{osp}(W)$ of $\,
\mathfrak{osp}(U \oplus U^*)$ resp.\ $\mathfrak{osp} (\widetilde{U}
\oplus \widetilde{U}^\ast)$.
\end{corollary}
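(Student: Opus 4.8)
The plan is to verify that the image of the claimed map is exactly $\mathfrak{osp}(W)\cap(\mathrm{End}(U\oplus U^\ast)\otimes\{\mathrm{Id}\})$, using the pull-back formula \eqref{eq:pull-back} and the observation that conjugation by $\Psi$ is a Lie superalgebra homomorphism on $\mathrm{End}(U\oplus U^\ast)\otimes\{\mathrm{Id}\}$. First I would record that, since conjugation by $\Psi$ intertwines the bilinear form $\Psi^\ast Q$ on $(U\oplus U^\ast)\otimes\mathbb{C}^N$ with $Q$ on $W$, an operator $X\otimes\mathrm{Id}$ lies in $\mathfrak{osp}(W)$ (after conjugation) if and only if $X\otimes\mathrm{Id}$ preserves $\Psi^\ast Q$ in the super-antisymmetric sense of \eqref{eq:def-osp}. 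The point of \eqref{eq:pull-back} is that $\Psi^\ast Q$ factorizes as the tensor product of a bilinear form $Q_U$ on $U\oplus U^\ast$ and the form $B$ on $\mathbb{C}^N$ (up to the sign $(-1)^\delta$ and the swap of arguments in $B$). Since $X$ acts only on the first factor and $\mathrm{Id}$ on the second, the condition decouples: $X\otimes\mathrm{Id}$ preserves $\Psi^\ast Q$ iff $X$ preserves $Q_U$ in the appropriate graded-antisymmetric sense.

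Next I would identify $Q_U$ precisely. For $K=\mathrm{O}_N$ we have $\delta=0$, so \eqref{eq:pull-back} gives $\Psi^\ast Q = Q_U\otimes B$ with $Q_U((u+\varphi),(u'+\varphi'))=\varphi'(u)-(-1)^{|u'||\varphi|}\varphi(u')$; this is exactly the orthosymplectic form of $U\oplus U^\ast$ as in Example \ref{exa:osp}, so the operators $X$ with $X\otimes\mathrm{Id}\in\mathfrak{osp}(W)$ are precisely $\mathfrak{osp}(U\oplus U^\ast)$. For $K=\mathrm{Sp}_N$ we have $\delta=1$, which flips the relative sign: $Q_U((u+\varphi),(u'+\varphi'))=\varphi'(u)+(-1)^{|u'||\varphi|}\varphi(u')$. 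I would then check that this is the orthosymplectic form of $\widetilde U\oplus\widetilde U^\ast$, where $\widetilde U$ carries the shifted grading $\widetilde U_s=U_{s+1}$: shifting the parity of every element of $U$ by one turns the extra factor $(-1)^{|u'||\varphi|}$ in the ``wrong-sign'' form into $(-1)^{(|u'|+1)(|\varphi|+1)}$, and one verifies algebraically that the form with the $+$ sign on $U$ equals the standard $A+S$ form on $\widetilde U$; equivalently, the symmetric/alternating roles of the $V_0$- and $V_1$-pieces are interchanged, matching the twisted grading. Also the extra sign $(-1)^\delta$ sits inside $B(z',z)$ on the second factor, which is irrelevant since $\mathrm{Id}$ preserves any bilinear form, symmetric or alternating.

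It then remains to assemble these pieces. Conjugation by $\Psi$ sends the associative algebra $\mathrm{End}(U\oplus U^\ast)$ into $\mathrm{End}(W)$ as an injective unital algebra homomorphism (stated just before Corollary \ref{cor:2.2}), hence is a homomorphism for the supercommutator bracket; it carries the sub-superalgebra $\mathfrak{osp}(U\oplus U^\ast)$ (resp.\ $\mathfrak{osp}(\widetilde U\oplus\widetilde U^\ast)$) bijectively onto its image in $\mathrm{End}(W)$, and by the decoupling argument above that image is contained in $\mathfrak{osp}(W)$. Injectivity is inherited from injectivity of the embedding $\mathrm{End}(U\oplus U^\ast)\otimes\{\mathrm{Id}\}\hookrightarrow\mathrm{End}(W)$. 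This yields the asserted Lie superalgebra embeddings.

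The main obstacle is the bookkeeping in the second paragraph: pinning down that the ``wrong-sign'' graded-symmetric form produced by $\delta=1$ on $U\oplus U^\ast$ is genuinely the \emph{standard} orthosymplectic form on the parity-shifted space $\widetilde U\oplus\widetilde U^\ast$, rather than merely some non-degenerate even form. One must track how the parity shift interacts with the sign conventions in \eqref{eq:sym-B}, \eqref{eq:sym-C} and in the definition of $\mathfrak{osp}$ via $\tau$, and confirm that the grading swap $U_0\leftrightarrow U_1$ exactly converts the symplectic block on $W_0$ and the orthogonal block on $W_1$ into each other. Once that sign computation is done carefully the rest is formal.
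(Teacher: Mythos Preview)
Your proposal is correct and follows essentially the same approach as the paper: invoke the pull-back formula \eqref{eq:pull-back}, read off that for $\delta=0$ one recovers the standard orthosymplectic form on $U\oplus U^\ast$, and that for $\delta=1$ the sign flip swaps the symmetric and alternating blocks, which is exactly the orthosymplectic form on the parity-shifted space $\widetilde{U}\oplus\widetilde{U}^\ast$. The paper's own proof is considerably terser (two sentences), but your more explicit decoupling argument and the careful check of the parity-shift bookkeeping you flag as the ``main obstacle'' are precisely what underlies their one-line claim.
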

\begin{proof}
For $K = \mathrm{O}_N$ the bilinear form $B$ of $\mathbb{C}^N$ is
symmetric and the bilinear form $Q$ of $W$ pulls back -- see
(\ref{eq:pull-back}) -- to the standard orthosymplectic form of $U
\oplus U^\ast$.

For $K = \mathrm{Sp}_N\,$, the form $B$ is alternating. Its pullback,
the orthosymplectic form of $U \oplus U^\ast$ twisted by the sign
factor $(-1)^\delta$, is restored to standard form by switching to
the $\mathbb{Z}_2$-graded vector space $\widetilde{U} \oplus
\widetilde{U}^\ast$ with the twisted $\mathbb{Z}_2$-grading.
\end{proof}
To go further, we need a statement concerning $\mathrm{Hom}_G(V_1
,V_2)$, the space of $G$-equivariant homomorphisms between two
modules $V_1$ and $V_2$ for a group $G\,$.
\begin{lemma}
\label{lem:HomG} Let $X_1\, , X_2 \, , Y_1\, , Y_2$ be
finite-dimensional vector spaces all of which are representation
spaces for a group $G\,$. If the $G$-action on $X_1$ and $X_2$ is
trivial, then
\begin{displaymath}
    \mathrm{Hom}_G(X_1 \otimes Y_1\, , X_2 \otimes Y_2) \simeq
    \mathrm{Hom}(X_1\, ,X_2) \otimes \mathrm{Hom}_G(Y_1\, ,Y_2) \;.
\end{displaymath}
\end{lemma}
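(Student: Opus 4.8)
The plan is to reduce the claim to the elementary fact that for finite-dimensional vector spaces there is a canonical isomorphism $\mathrm{Hom}(X_1 \otimes Y_1, X_2 \otimes Y_2) \simeq \mathrm{Hom}(X_1,X_2) \otimes \mathrm{Hom}(Y_1,Y_2)$, and then to extract the $G$-invariant subspaces on both sides.

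First I would set up the canonical map: given $\alpha \in \mathrm{Hom}(X_1,X_2)$ and $\beta \in \mathrm{Hom}(Y_1,Y_2)$, one forms $\alpha \otimes \beta \in \mathrm{Hom}(X_1 \otimes Y_1, X_2 \otimes Y_2)$, and bilinear extension yields a linear map $\Phi : \mathrm{Hom}(X_1,X_2) \otimes \mathrm{Hom}(Y_1,Y_2) \to \mathrm{Hom}(X_1 \otimes Y_1, X_2 \otimes Y_2)$. Because all four spaces are finite-dimensional, both sides have dimension $(\dim X_1)(\dim X_2)(\dim Y_1)(\dim Y_2)$, so it suffices to check that $\Phi$ is injective (or, equivalently, to exhibit an explicit inverse using dual bases); hence $\Phi$ is an isomorphism of vector spaces.

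Next I would verify $G$-equivariance. The group $G$ acts on each $\mathrm{Hom}$-space by $g \cdot f = \rho_2(g) \circ f \circ \rho_1(g)^{-1}$, and since the $G$-actions on $X_1$ and $X_2$ are trivial, the induced action on $\mathrm{Hom}(X_1,X_2)$ is trivial. A direct computation then gives $\Phi\bigl(g \cdot (\alpha \otimes \beta)\bigr) = \Phi\bigl(\alpha \otimes (g \cdot \beta)\bigr) = g \cdot \Phi(\alpha \otimes \beta)$, so $\Phi$ intertwines the $G$-actions and restricts to an isomorphism of invariant subspaces $\bigl(\mathrm{Hom}(X_1,X_2) \otimes \mathrm{Hom}(Y_1,Y_2)\bigr)^G \xrightarrow{\sim} \mathrm{Hom}_G(X_1 \otimes Y_1, X_2 \otimes Y_2)$.

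Finally I would identify the left-hand invariant space. Choosing a basis $\{a_i\}$ of $\mathrm{Hom}(X_1,X_2)$, every element of the tensor product is uniquely $\sum_i a_i \otimes b_i$, and since $G$ fixes each $a_i$, the sum is $G$-invariant if and only if each $b_i$ is $G$-invariant; thus $\bigl(\mathrm{Hom}(X_1,X_2) \otimes \mathrm{Hom}(Y_1,Y_2)\bigr)^G = \mathrm{Hom}(X_1,X_2) \otimes \mathrm{Hom}(Y_1,Y_2)^G$, and $\mathrm{Hom}(Y_1,Y_2)^G = \mathrm{Hom}_G(Y_1,Y_2)$ by definition. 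Composing these identifications proves the lemma. The only point requiring any care is the finite-dimensionality hypothesis, which is exactly what upgrades the always-available injection $\Phi$ to an isomorphism; the rest of the argument is formal.
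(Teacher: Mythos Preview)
Your proof is correct and follows essentially the same route as the paper: both use the canonical $G$-equivariant isomorphism $\mathrm{Hom}(X_1\otimes Y_1, X_2\otimes Y_2)\simeq \mathrm{Hom}(X_1,X_2)\otimes\mathrm{Hom}(Y_1,Y_2)$ (the paper phrases it via $X_1^\ast\otimes X_2\otimes Y_1^\ast\otimes Y_2$) and then take $G$-invariants, using that the action on the $\mathrm{Hom}(X_1,X_2)$ factor is trivial. Your write-up is simply more explicit about the equivariance check and the basis argument for pulling the trivial factor out of the invariants.
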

\begin{proof}
$\mathrm{Hom}(X_1 \otimes Y_1\, , X_2 \otimes Y_2)$ is canonically
isomorphic to $X_1^\ast \otimes Y_1^\ast \otimes X_2 \otimes Y_2$ as
a $G$-representation space, with $G$-equivariant maps corresponding
to $G$-invariant tensors. Since the $G$-action on $X_1^\ast \otimes
X_2^{\vphantom{\ast}}$ is trivial, one sees that $\mathrm{Hom}_G(X_1
\otimes Y_1\, , X_2 \otimes Y_2)$ is isomorphic to the tensor product
of $X_1^\ast\otimes X_2^{\vphantom{\ast}} \simeq \mathrm{Hom}(X_1,
X_2)$ with the space of $G$-invariants in $Y_1^\ast\otimes
Y_2^{\vphantom{\ast}} \,$. The latter in turn is isomorphic to
$\mathrm{Hom}_G(Y_1, Y_2)$.
\end{proof}
\begin{proposition}\label{prop:dualpairs}
Writing $\mathfrak{g}_N \equiv \mathfrak{g}(\mathbb{C}^N)$ for
$\mathfrak{g} = \mathfrak{gl}\,$, $\mathfrak{o}\,$,
$\mathfrak{sp}\,$, the following pairs are dual pairs in
$\mathfrak{osp}(W):$ $(\mathfrak{gl}(U),\mathfrak{gl}_N)$,
$(\mathfrak{osp}(U\oplus U^*), \mathfrak{o}_N)$,
$(\mathfrak{osp}(\widetilde{U}\oplus \widetilde{U}^*),
\mathfrak{sp}_N)$.
\end{proposition}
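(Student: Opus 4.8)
The plan is to prove Proposition \ref{prop:dualpairs} by establishing, for each of the three pairs, the two reciprocal centralizer statements, and the main tool will be the identification of the degree-two part of the Clifford-Weyl algebra (equivalently $\mathfrak{osp}(W)$, cf.\ Lemma \ref{lem:osp-iso3}) with bilinear expressions in the vectors of $W = V \oplus V^\ast$, combined with Lemma \ref{lem:HomG} to isolate the $K$-invariant part of such bilinears. First I would reduce to the invariant-theoretic core: an element $X \in \mathfrak{osp}(W)_0 \oplus \mathfrak{osp}(W)_1$ lies in the centralizer of $K = \mathrm{GL}_N$, $\mathrm{O}_N$, or $\mathrm{Sp}_N$ precisely when, regarded inside $\mathrm{End}(W) \subset W \otimes W^\ast$ (or via $Q$ inside degree-two bilinears), it is a $K$-invariant tensor. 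Using the decomposition $W_s = U_s \otimes \mathbb{C}^N \oplus U_s^\ast \otimes (\mathbb{C}^N)^\ast$ (or, after the isomorphism $\Psi$ of Lemma \ref{lem:2.11}, $W \simeq (U \oplus U^\ast) \otimes \mathbb{C}^N$ with $K$ acting only on the $\mathbb{C}^N$ factor), Lemma \ref{lem:HomG} says the $K$-invariant bilinears are $\mathrm{Hom}(-,-)$ on the $U$-factors tensored with the $K$-invariant bilinears on the $\mathbb{C}^N$-factors.

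The second step is to invoke classical first fundamental theorems of invariant theory for $\mathrm{GL}_N$, $\mathrm{O}_N$, $\mathrm{Sp}_N$ on copies of the standard representation $\mathbb{C}^N$ and its dual. For $K = \mathrm{GL}_N$ the only invariant pairings are the canonical contractions $\mathbb{C}^N \otimes (\mathbb{C}^N)^\ast \to \mathbb{C}$, so the surviving bilinears on $V \oplus V^\ast$ are exactly those pairing a $U$-factor against a $U^\ast$-factor; assembling these into degree-two elements of $\mathfrak{osp}(W)$ and tracking the $\mathbb{Z}$-grading of Lemma \ref{lem:osp-iso3} reproduces $\mathfrak{gl}(U)$ via Lemma \ref{lem:iso-osp4}. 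For $K = \mathrm{O}_N$ one additionally has the symmetric invariant form $B$ on $\mathbb{C}^N$ and its inverse; these let a $U$-factor pair with another $U$-factor (and $U^\ast$ with $U^\ast$), and after transporting through $\Psi$ the pullback of $Q$ becomes the standard orthosymplectic form on $U \oplus U^\ast$ by (\ref{eq:pull-back}) with $\delta = 0$, so the centralizer is exactly the image of $\mathfrak{osp}(U \oplus U^\ast)$ under Corollary \ref{cor:2.2}. For $K = \mathrm{Sp}_N$ the invariant form is alternating, $\delta = 1$, and the extra sign twist is precisely absorbed by passing to $\widetilde{U}$ with the shifted grading, again by Corollary \ref{cor:2.2}, giving $\mathfrak{osp}(\widetilde{U} \oplus \widetilde{U}^\ast)$.

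The third step is the reciprocal inclusion: that $K$ (or rather $\mathfrak{k}$) is the \emph{full} centralizer in $\mathfrak{osp}(W)$ of $\mathfrak{gl}(U)$, $\mathfrak{osp}(U \oplus U^\ast)$, $\mathfrak{osp}(\widetilde U \oplus \widetilde U^\ast)$ respectively. Here I would run the same invariant-theory argument with the roles of the two tensor factors exchanged: an element of $\mathfrak{osp}(W)$ commuting with all of $\mathfrak{gl}(U)$ (resp.\ $\mathfrak{osp}(U\oplus U^\ast)$, etc.) must, under $W \simeq (U\oplus U^\ast)\otimes\mathbb{C}^N$, be $\mathrm{Hom}$ on the first factor tensor arbitrary on the second restricted to the invariants of the first, and the first fundamental theorem for $\mathrm{GL}(U)$ acting on $U \oplus U^\ast$ (resp.\ the classical pair $\mathfrak{gl}_N \leftrightarrow \mathfrak{gl}(U)$, $\mathfrak{o}_N \leftrightarrow \mathfrak{osp}(U\oplus U^\ast)$) forces the first-factor part to be scalar, leaving exactly $\mathrm{End}(\mathbb{C}^N)$ constrained by the relevant form, i.e.\ $\mathfrak{k}$. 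This is really the statement that the pairs $(\mathfrak{gl}(U),\mathfrak{gl}_N)$, $(\mathfrak{osp}(U\oplus U^\ast),\mathfrak{o}_N)$, $(\mathfrak{osp}(\widetilde U\oplus\widetilde U^\ast),\mathfrak{sp}_N)$ are mutually centralizing, which one may cite from Howe's theory or deduce from the FFT and SFT for the classical groups.

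The main obstacle I anticipate is the bookkeeping of signs and $\mathbb{Z}_2$-gradings: one must check that the $K$-invariant bilinears on $W$, reassembled as degree-two elements of the Clifford-Weyl algebra and hence as elements of $\mathfrak{osp}(W)$, really do close up to the claimed subalgebra \emph{with the correct parity and bracket}, and that the twist $U \rightsquigarrow \widetilde U$ in the symplectic case is exactly what is needed — this is where (\ref{eq:pull-back}) and Corollary \ref{cor:2.2} do the heavy lifting, but verifying that no further sign discrepancy survives when passing from bilinear forms on $W$ to operators in $\mathfrak{osp}(W)$ requires care. The invariant-theoretic inputs themselves (first fundamental theorems for $\mathrm{GL}_N$, $\mathrm{O}_N$, $\mathrm{Sp}_N$) are classical and I would simply quote them; the genuinely new content is organizing them through Lemma \ref{lem:HomG} and the grading dictionary of $\S$\ref{subsec:osp}.
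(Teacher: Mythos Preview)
Your proposal is correct and follows essentially the same route as the paper: both reduce the computation of the centralizer of $\mathfrak{k}$ to determining the $K$-invariant part of $\mathrm{End}(W)$ (equivalently, $K$-invariant bilinears) via Lemma~\ref{lem:HomG}, handle the $\mathrm{GL}_N$ case by the vanishing of $\mathrm{Hom}_K(\mathbb{C}^N,(\mathbb{C}^N)^\ast)$, and dispatch the $\mathrm{O}_N$ and $\mathrm{Sp}_N$ cases through the isomorphism $\Psi$ and Corollary~\ref{cor:2.2}. The only cosmetic differences are that the paper works directly with the block decomposition of $\mathrm{End}_K(W)$ and Schur's lemma rather than your Clifford--Weyl/FFT phrasing, and that the paper, like you, defers the reciprocal centralizer statement to Howe~\cite{H1}.
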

\begin{proof}
Here we calculate the centralizer of $\mathfrak{k}$ in
$\mathfrak{osp}(W)$ for each of the three cases $\mathfrak{k} =
\mathfrak{gl}_N\,$, $\mathfrak{o}_N\,$, $\mathfrak{sp}_N$ and refer
the reader to \cite{H1} for the remaining details.

Since both $V \subset W$ and $V^* \subset W$ are $K$-invariant
subspaces, $\mathrm{End}_K(W)$ decomposes as
\begin{displaymath}
    \mathrm{End}_K(W) = \mathrm{End}_K(V) \oplus \mathrm{Hom}_K(V^*,V)
    \oplus \mathrm{Hom}_K(V,V^*) \oplus \mathrm{End}_K(V^*) \;.
\end{displaymath}
By Schur's lemma, $\mathrm{End}_K(\mathbb{C}^N) \simeq \mathbb{C}\,$,
and therefore Lemma \ref{lem:HomG} implies
\begin{displaymath}
    \mathrm{End}_K(V) = \mathrm{End}_K(U \otimes \mathbb{C}^N) \simeq
    \mathrm{End}(U) \otimes \mathrm{End}_K(\mathbb{C}^N) = \mathrm{End}(U)\;.
\end{displaymath}
By the same reasoning, $\mathrm{End}_K(V^*) = \mathrm{End}(U^*)$.
Applying Lemma \ref{lem:HomG} to the two remaining summands, we
obtain
\begin{displaymath}
    \mathrm{Hom}_K(V,V^*) \simeq \mathrm{Hom}(U,U^*) \otimes
    \mathrm{Hom}_K(\mathbb{C}^N,{\mathbb{C}^N}^*),
\end{displaymath}
plus the same statement where each vector space is replaced by its
dual.

If $K = \mathrm{GL}(\mathbb{C}^N) \equiv \mathrm{GL}_N\,$, then
$\mathrm{Hom}_K(\mathbb{C}^N ,{\mathbb{C}^N}^*) = \mathrm{Hom}_K
({\mathbb{C}^N}^* , \mathbb{C}^N) = \{ 0\}$. Hence,
\begin{displaymath}
    \Phi : \,\, \mathrm{End}(U) \oplus \mathrm{End}(U^\ast) \to
    \mathrm{End}_{\mathrm{GL}_N}(W)\;, \quad X \oplus Y \mapsto
    (X \otimes \mathrm{Id}) \times (Y \otimes
\mathrm{Id})\;,
\end{displaymath}
for $W = U \otimes \mathbb{C}^N \oplus U^\ast \otimes
{\mathbb{C}^N}^\ast$ is an isomorphism. This means that the
centralizer of $\mathfrak{gl}_N$ in $\mathfrak{osp}(W)$ is the
intersection $\Phi(\mathrm{End}(U)\oplus\mathrm{End}(U^*)) \cap
\mathfrak{osp}(W)$, which can be identified with $\mathrm{End}(U) =
\mathfrak{gl}(U)$. Thus we have the first dual pair, $(\mathfrak{gl}
(U), \mathfrak{gl}_N)$.

In the case of $K = \mathrm{O}_N\,$, the discussion is shortened by
recalling Lemma \ref{lem:2.11} and the $K$-equivariant isomorphism
$\Psi : \, (U \oplus U^\ast) \otimes \mathbb{C}^N \to W$. By Schur's
lemma, these imply $\mathrm{End}_K(W) \simeq \mathrm{End}(U \oplus
U^\ast)$. From Corollary \ref{cor:2.2} it then follows that the
intersection $\mathfrak{osp}(W) \cap \mathrm{End}_K(W)$ is isomorphic
as a Lie superalgebra to $\mathfrak{osp}(U \oplus U^\ast)$. Passing
to the Lie algebra level for $K$, we get the second dual pair,
$(\mathfrak{osp}(U \oplus U^\ast), \mathfrak{o}_N)$.

Finally, if $K = \mathrm{Sp}_N\,$, the situation is identical except
that Corollary \ref{cor:2.2} compels us to switch to the
$\mathbb{Z}_2$-twisted structure of orthosymplectic Lie superalgebra
in $\mathrm{End}_K(W) \simeq \mathrm{End}(U \oplus U^\ast)$. This
gives us the third dual pair, $(\mathfrak{osp}(\widetilde{U} \oplus
\widetilde{U}^\ast), \mathfrak{sp}_N)$.
\end{proof}

\subsection{Clifford-Weyl algebra $\mathfrak{q}(W)$}

Let $\mathbb{K} = \mathbb{C}$ or $\mathbb{K} = \mathbb{R}$ (in this
subsection the choice of number field again is immaterial) and recall
from Example \ref{exa:JH} the definition of the Jordan-Heisenberg Lie
superalgebra $\widetilde{W} = W \oplus \mathbb{K}\,$, where $W = W_0
\oplus W_1$ is a $\mathbb{Z}_2$-graded vector space with components
$W_1^{\vphantom{ \ast}} = V_1^{\vphantom{\ast}} \oplus V_1^\ast$ and
$W_0^{\vphantom{ \ast}} = V_0^{\vphantom{\ast}} \oplus V_0^\ast$. The
universal enveloping algebra of the Jordan-Heisenberg Lie
superalgebra is called the \emph{Clifford-Weyl algebra} (or quantum
algebra). We denote it by $\mathfrak{q}(W) \equiv
\mathsf{U}(\widetilde{W})$.

Equivalently, one defines the Clifford-Weyl algebra $\mathfrak{q}(W)$
as the associative algebra generated by $\widetilde{W} = W \oplus
\mathbb{K}$ subject to the following relations for all $w, w^\prime
\in W_0 \cup W_1:$
\begin{displaymath}
    w w' - (-1)^{|w||w'|} w' w = Q(w , w') \;.
\end{displaymath}
In particular, $w_0\, w_1 = w_1\, w_0$ for all $w_0 \in W_0$ and $w_1
\in W_1\,$. Reordering by this commutation relation defines an
isomorphism of associative algebras $\mathfrak{q}(W) \simeq
\mathfrak{c}(W_1) \otimes \mathfrak{w}(W_0)$, where the Clifford
algebra $\mathfrak{c}(W_1)$ is generated by $W_1 \oplus \mathbb{K}$
with the relations $w w^\prime + w^\prime w = S(w,w^\prime)$ for
$w,w^\prime \in W_1\,$, and the Weyl algebra $\mathfrak{w}(W_0)$ is
generated by $W_0 \oplus \mathbb{K}$ with the relations $w w^\prime -
w^\prime w = A(w,w^\prime)$ for $w, w^\prime \in W_0\,$.

As a universal enveloping algebra the Clifford-Weyl algebra
$\mathfrak{q} (W)$ is filtered,
\begin{displaymath}
    \mathfrak{q}_0(W) := \mathbb{K} \subset \mathfrak{q}_1(W) :=
    W \oplus \mathbb{K} \subset \ldots \subset \mathfrak{q}_n(W)
    \ldots \;,
\end{displaymath}
and it inherits from the Jordan-Heisenberg algebra $\widetilde{W}$ a
canonical $\mathbb{Z}_2$-grading and a canonical structure of Lie
superalgebra by the supercommutator -- see $\S$\ref{sect:UEA} for the
definitions. The next statement is a sharpened version of Lemma
\ref{lem:2.2}.
\begin{lemma}\label{lem:n+n'-2}
    $[\mathfrak{q}_n(W) , \mathfrak{q}_{n^\prime}(W)] \subset
    \mathfrak{q}_{n + n^\prime - 2}(W)$.
\end{lemma}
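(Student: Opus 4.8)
The plan is to reduce the claim to the degree-two case by induction on the filtration degrees $n, n'$, using the super-Leibniz identity for the supercommutator that was already established in the proof of Lemma \ref{lem:2.2}. The base case is where both factors lie in $\mathfrak{q}_1(W) = W \oplus \mathbb{K}$: here the defining relations of the Clifford-Weyl algebra give $[w, w'] = Q(w, w') \in \mathbb{K} = \mathfrak{q}_0(W)$ for $w, w' \in W$, while brackets involving the central scalars vanish; since $0 = 1 + 1 - 2$, this is exactly the asserted inclusion $[\mathfrak{q}_1, \mathfrak{q}_1] \subset \mathfrak{q}_0$. This is the crucial gain over Lemma \ref{lem:2.2}, whose generators-only estimate would only give $\mathfrak{q}_{n+n'-1}$.

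Next I would run the induction. Fix $n'$ and argue by induction on $n$; the case $n \le 1$ being covered once we also note $[\mathfrak{q}_0, \mathfrak{q}_{n'}] = 0$. For the inductive step, write a general element of $\mathfrak{q}_n(W)$ as a sum of products $w_1 \cdots w_n$ with $w_i \in W$ (plus lower-order terms handled by the induction hypothesis), so it suffices to bound $[w_1 \cdots w_n, b]$ for $b \in \mathfrak{q}_{n'}(W)$. Applying formula \eqref{eq:brackets} (the super-Leibniz expansion valid in any associative superalgebra, here $\mathfrak{q}(W)$), one gets
\begin{displaymath}
    [w_1 \cdots w_n, b] = \sum_{i=1}^n \pm\, w_1 \cdots w_{i-1}\, [w_i, b]\, w_{i+1} \cdots w_n \;.
\end{displaymath}
By the base case applied to $w_i \in \mathfrak{q}_1(W)$ and $b \in \mathfrak{q}_{n'}(W)$ — or rather its symmetric analogue together with the induction on $n'$ — each inner bracket $[w_i, b]$ lies in $\mathfrak{q}_{1 + n' - 2}(W) = \mathfrak{q}_{n'-1}(W)$, so each summand is a product of $n-1$ degree-one factors with one factor of degree $\le n' - 1$, hence lies in $\mathfrak{q}_{(n-1) + (n'-1)}(W) = \mathfrak{q}_{n + n' - 2}(W)$, as required.

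The one point needing a little care — and the place I'd expect the argument to feel least mechanical — is the bookkeeping of the double induction: formula \eqref{eq:brackets} reduces $[\,\cdot\,, b]$ with the left argument peeled down to degree one, but the inner bracket $[w_i, b]$ still has a degree-$n'$ right argument, so one must also have reduced the claim in the second slot. The clean way to organize this is to prove the statement by induction on $n + n'$, using \eqref{eq:brackets} to peel a factor off whichever argument has degree $\ge 2$ (using skew-symmetry of the supercommutator to swap slots if necessary), with the base case $n + n' \le 2$ being the Clifford-Weyl relations together with centrality of $\mathbb{K}$. Everything else is a routine sign-tracking exercise that I would not spell out in full.
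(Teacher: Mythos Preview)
Your proposal is correct and follows essentially the same approach as the paper: the paper simply observes that the inductive argument proving Lemma~\ref{lem:2.2} (base case plus the super-Leibniz identity $\{a,bc\}=\{a,b\}c+(-1)^{|a||b|}b\{a,c\}$) goes through verbatim, except that the base case $[W,W]\subset\mathbb{K}=\mathfrak{q}_0(W)$ now drops the degree by $2$ rather than $1$. Your version spells out the induction and its bookkeeping more carefully, but the content is the same.
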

\begin{proof}
Lemma \ref{lem:2.2} asserts the commutation relation $[\mathsf{U}_n
(\mathfrak{g}), \mathsf{U}_{n^\prime} (\mathfrak{g}) ] \subset
\mathsf{U}_{n + n^\prime-1} (\mathfrak{g})$ for the general case of a
Lie superalgebra $\mathfrak{g}$ with bracket $[ \mathfrak{g},
\mathfrak{g}] \subset \mathfrak{g}\,$. For the specific case at hand,
where the fundamental bracket $[W , W] \subset \mathbb{K}$ has zero
component in $W$, the degree $n + n^\prime - 1$ is lowered to $n +
n^\prime -2$ by the very argument proving that lemma.
\end{proof}
It now follows that each of the subspaces $\mathfrak{q}_n(W)$ for $n
\le 2$ is a Lie superalgebra. Since $[\mathfrak{q}_2(W),
\mathfrak{q}_1(W)] \subset \mathfrak{q}_1(W)$, the quotient space
$\mathfrak{q}_2(W)/\mathfrak{q}_1(W)$ is also a Lie superalgebra. By
the Poincar\'e-Birkhoff-Witt theorem, there exists a vector-space
isomorphism
\begin{displaymath}
    \mathfrak{q}_2(W) / \mathfrak{q}_1(W) \stackrel{\sim}{\to}
    \mathfrak{s} \;,
\end{displaymath}
sending $\mathfrak{q}_2(W) / \mathfrak{q}_1(W)$ to $\mathfrak{s}\,$,
the space of super-symmetrized degree-two elements in $\mathfrak{q}_2
(W)$. Hence $\mathfrak{q}_2(W)$ has a direct-sum decomposition
$\mathfrak{q}_2(W) = \mathfrak{q}_1(W) \oplus \mathfrak{s}\,$.

If $\{ e_i \}$ is a homogeneous basis of $W$, every $a \in
\mathfrak{s}$ is uniquely expressed as
\begin{equation}\label{eq:2.8}
    a = \sum\nolimits_{i,j} a_{ij}\, e_i \, e_j \;, \quad a_{ij} =
    (-1)^{|e_i| |e_j|} a_{ji} \;.
\end{equation}
By adding and subtracting terms,
\begin{displaymath}
    2 w w^\prime = (w w^\prime + (-1)^{|w||w^\prime|} w^\prime w)
    + (w w^\prime - (-1)^{|w||w^\prime|} w^\prime w) \;,
\end{displaymath}
one sees that the product $w w^\prime$ for $w,w^\prime \in W$ has
scalar part $(w w^\prime)_\mathbb{K} = \frac{1}{2}[w,w^\prime] =
\frac{1}{2}Q(w,w^\prime)$ with respect to the decomposition
$\mathfrak{q}_2(W) = \mathbb{K} \oplus W \oplus \mathfrak{s}\,$.
\begin{lemma}
$[\mathfrak{s} , \mathfrak{s}] \subset \mathfrak{s}\,$.
\end{lemma}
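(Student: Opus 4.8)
The plan is to show that the supercommutator of two super-symmetrized degree-two elements of $\mathfrak{q}(W)$ stays in degree-two super-symmetrized form, i.e.\ lands in $\mathfrak{s}$. The starting point is Lemma~\ref{lem:n+n'-2}, which already tells us that $[\mathfrak{q}_2(W),\mathfrak{q}_2(W)] \subset \mathfrak{q}_{2}(W)$. Since $\mathfrak{q}_2(W) = \mathbb{K} \oplus W \oplus \mathfrak{s}$ as a vector space, what remains is to rule out contributions in the $\mathbb{K}$ and $W$ summands. First I would observe that $\mathfrak{s}$ consists of even-degree elements in the $\mathbb{Z}$-grading inherited from the PBW filtration (degree two), whereas $\mathbb{K}$ is degree zero and $W$ is degree one; but the supercommutator of two degree-two elements of a universal enveloping algebra of a two-step nilpotent superalgebra respects parity only modulo the filtration, so a cleaner argument is needed.

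The cleaner route uses the scalar-part formula derived just above the statement: for $w,w'\in W$ one has $ww' = \tfrac12 Q(w,w') + (\text{element of }W) + (\text{element of }\mathfrak{s})$, and more to the point, $[w,w'] = Q(w,w') \in \mathbb{K}$. So I would write a typical element of $\mathfrak{s}$ as $a = \sum_{i,j} a_{ij}\, e_i e_j$ with $a_{ij} = (-1)^{|e_i||e_j|} a_{ji}$, take a second such element $b = \sum_{k,l} b_{kl}\, e_k e_l$, and expand $[a,b]$ using the derivation property of the supercommutator, formula~(\ref{eq:brackets}), reducing everything to brackets $[e_i,e_j] = Q(e_i,e_j)\in\mathbb{K}$. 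Each term of $[a,b]$ is then a sum of monomials $e_i e_j$ (times scalars built from $Q$ and the structure constants $a_{ij},b_{kl}$), i.e.\ manifestly of filtration degree $\le 2$. The key point is to check that the \emph{quadratic} part of the result is again super-symmetric in the sense of~(\ref{eq:2.8}), which follows formally because $[a,b]$ is a well-defined element of the Lie superalgebra $\mathfrak{q}(W)$ and the super-symmetrization isomorphism $\mathfrak{q}_2(W)/\mathfrak{q}_1(W)\xrightarrow{\sim}\mathfrak{s}$ is $\mathrm{ad}$-equivariant; concretely, $[a,b]$ differs from its image in $\mathfrak{s}$ only by terms in $\mathfrak{q}_1(W) = W\oplus\mathbb{K}$.

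The final step, and the only one requiring a genuine computation, is to verify that these lower-order terms actually vanish, i.e.\ that $[a,b]$ has no component in $\mathbb{K}$ or in $W$. For the $\mathbb{K}$-component one uses $\mathrm{STr}$-type or pairing arguments: the scalar part of $[a,b]$ is a trace of a product of the matrices $(a_{ij})$, $(b_{kl})$ against the Gram matrix of $Q$, and the super-symmetry conditions on $a_{ij}$ and $b_{kl}$ together with the exchange symmetry $Q(w,w') = -(-1)^{|w||w'|}Q(w',w)$ force this scalar to be zero (it is the supertrace of a bracket of $\mathfrak{osp}$-type matrices, which vanishes by Lemma~\ref{lem:STr}). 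For the $W$-component one observes that $[a,b]$ is even-degree-parity-homogeneous whenever $a,b$ are, and more simply that $\mathfrak{s}$ is closed under $\mathrm{ad}(\mathfrak{q}_2)$ by Lemma~\ref{lem:n+n'-2} applied with care to the degree count: reordering $e_i e_j e_k e_l$ using a single application of a fundamental relation produces one term in $\mathfrak{s}$ and one scalar, never a linear term, because the fundamental relations of $\mathfrak{q}(W)$ send $W\cdot W$ into $\mathbb{K}\oplus(\text{reordered }W\cdot W)$ with no intermediate $W$ term appearing.

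The main obstacle I anticipate is bookkeeping the sign factors $(-1)^{|e_i||e_j|}$ through the repeated use of the derivation identity, and in particular making the symmetry check on the quadratic part of $[a,b]$ airtight rather than merely plausible. In fact the slickest way to sidestep this is to note that $\mathfrak{s}$ is canonically identified (as the next sections will exploit) with $\mathfrak{osp}(W)\oplus(\text{center})$ via the spinor-oscillator realization, and the bracket on $\mathfrak{s}$ corresponds to the Lie superalgebra bracket on $\mathfrak{osp}(W)$, which is manifestly closed; but since that identification has not yet been established in the excerpt, the self-contained proof via~(\ref{eq:brackets}) and the vanishing of the lower-order terms is the one to carry out here.
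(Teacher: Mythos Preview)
Your framing is correct: by Lemma~\ref{lem:n+n'-2} one already has $[\mathfrak{s},\mathfrak{s}]\subset\mathfrak{q}_2(W)=\mathbb{K}\oplus W\oplus\mathfrak{s}$, and the $W$-component is easily seen to vanish by the degree argument you sketch (each application of a fundamental relation drops the filtration degree by exactly two, so a degree-four monomial never produces a degree-one term). The paper treats this the same way, simply declaring $[\mathfrak{s},\mathfrak{s}]\subset\mathbb{K}\oplus\mathfrak{s}$ as ``clear.''

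Where your proposal is weak is the $\mathbb{K}$-component. Your argument there invokes the identification of $\mathfrak{s}$ with $\mathfrak{osp}(W)$ and then appeals to $\mathrm{STr}\,[X,Y]=0$; you yourself flag this as circular, since $\tau:\mathfrak{s}\to\mathfrak{osp}(W)$ is only constructed \emph{after} the present lemma, and indeed the proof that $\tau$ is a Lie-superalgebra isomorphism uses the lemma. The alternative you gesture at, a direct matrix computation with $(a_{ij})$, $(b_{kl})$ and the Gram matrix of $Q$, can be made to work but is exactly the bookkeeping nightmare you anticipate.

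The paper sidesteps all of this with a one-line application of the Jacobi identity. By linearity it suffices to take $a=ww'+(-1)^{|w||w'|}w'w$ for $w,w'\in W$. Then
\[
{\textstyle\frac12}[a,b]=[ww',b]=w\,[w',b]+(-1)^{|w'||b|}[w,b]\,w',
\]
and the scalar part of the right-hand side is, by the formula $(xy)_\mathbb{K}=\tfrac12[x,y]$ recorded just before the lemma,
\[
[a,b]_\mathbb{K}=[w,[w',b]]+(-1)^{|w'||b|}[[w,b],w']=[[w,w'],b],
\]
the last equality being Jacobi. Since $[w,w']\in\mathbb{K}$ is central, this vanishes. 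This argument is self-contained, avoids all sign bookkeeping in coordinates, and does not presuppose the $\mathfrak{osp}$-identification.
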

\begin{proof}
From the definition of $\mathfrak{s}$ and $[W,W] \subset \mathbb{K}$
it is clear that $[\mathfrak{s}, \mathfrak{s}] \subset \mathbb{K}
\oplus \mathfrak{s}\,$. The statement to be proved, then, is that
$[a,b]$ for $a,\, b \in \mathfrak{s}$ has zero scalar part.

By the linearity of the supercommutator, it suffices to consider a
single term of the sum (\ref{eq:2.8}). Thus we put $a = w w^\prime +
(-1)^{|w| |w^\prime|} w^\prime w$, and have
\begin{displaymath}
    {\textstyle{\frac{1}{2}}} [a,b] = [w w^\prime,b] = w\,
    [w^\prime,b] + [w,b] \, w^\prime (-1)^{|w^\prime| |b|} \;.
\end{displaymath}
Now we compute the scalar part of the right-hand side. Using the
Jacobi identity for the Lie superalgebra $\mathfrak{q}(W)$ we obtain
\begin{displaymath}
    [a,b]_\mathbb{K} = [w,[w^\prime,b]] + [[w,b],w^\prime]
    (-1)^{|w^\prime| |b|} = [[w,w^\prime],b] \;.
\end{displaymath}
The last expression vanishes because $[w,w^\prime] \subset
\mathbb{K}$ lies in the center of $\mathfrak{q}(W)$.
\end{proof}

\subsection{$\mathfrak{osp}(W)$ inside $\mathfrak{q}(W)$}

As a subspace of $\mathfrak{q}(W)$ which closes w.r.t.\ the
supercommutator $[\, , \, ]$, $\mathfrak{s}$ is a Lie superalgebra.
Now from Lemma \ref{lem:n+n'-2} and the Jacobi identity for
$\mathfrak{q}(W)$, one sees that $\mathfrak{s} \subset
\mathfrak{q}_2(W)$ acts on each of the quotient spaces
$\mathfrak{q}_n(W) / \mathfrak{q}_{n - 1}(W)$ for $n \ge 1$ by $a
\mapsto [a,\,\,]\,$. In particular, $\mathfrak{s}$ acts on
$\mathfrak{q}_1(W) / \mathfrak{q}_0(W) = W$ by $a \mapsto [a,\,\,]$,
which defines a homomorphism of Lie superalgebras
\begin{displaymath}
    \tau : \, \mathfrak{s} \to \mathfrak{gl}(W) \;, \quad
    a \mapsto \tau(a) = [a , \,\, ] \;.
\end{displaymath}
The mapping $\tau$ is actually into $\mathfrak{osp}(W) \subset
\mathfrak{gl}(W)$. Indeed, for $w,w^\prime \in W$ one has
\begin{eqnarray*}
    Q(\tau(a)w,w')+(-1)^{|\tau(a)||w|}Q(w,\tau(a)w')
    = [[a,w],w']+(-1)^{|a||w|}[w,[a,w']] \;,
\end{eqnarray*}
and since $[a,[w,w^\prime]] = 0\,$, this vanishes by the Jacobi
identity.
\begin{lemma}\label{lem:tau-iso}
The map $\tau :\, \mathfrak{s} \to \mathfrak{osp}(W)$ is an
isomorphism of Lie superalgebras.
\end{lemma}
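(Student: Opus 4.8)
The plan is to show that $\tau : \mathfrak{s} \to \mathfrak{osp}(W)$ is a bijective homomorphism of Lie superalgebras. That it is a homomorphism into $\mathfrak{osp}(W)$ has already been established in the text just before the statement, so what remains is injectivity and surjectivity. Since we know from the Poincar\'e--Birkhoff--Witt isomorphism $\mathfrak{q}_2(W)/\mathfrak{q}_1(W) \simeq \mathfrak{s}$ that $\dim_\mathbb{K} \mathfrak{s}$ equals the dimension of the degree-two part of the symmetrized enveloping algebra, and this can be compared with $\dim_\mathbb{K}\mathfrak{osp}(W)$ as computed in Corollary~\ref{cor:dimosp}, it would in principle suffice to prove injectivity alone and invoke equality of dimensions. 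However, the cleaner route, which also makes the isomorphism explicit, is to produce an inverse map directly on basis elements.

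First I would compute $\tau$ on the super-symmetrized quadratic monomials $e_i e_j$ (with $a_{ij} = (-1)^{|e_i||e_j|} a_{ji}$ as in (\ref{eq:2.8})). Using the fundamental relation $w w' - (-1)^{|w||w'|}w'w = Q(w,w')$ of $\mathfrak{q}(W)$ and the derivation property (\ref{eq:brackets}), one computes $[e_i e_j, w] = e_i [e_j,w] + (-1)^{|e_j||w|}[e_i,w] e_j = e_i\, Q(e_j,w) + (-1)^{|e_j||w|} Q(e_i,w)\, e_j$, which is again linear in $W$. Thus $\tau(e_i e_j)$ is the operator $w \mapsto e_i\, Q(e_j,w) \pm e_j\, Q(e_i,w)$, a rank-$\le 2$ element of $\mathfrak{osp}(W)$. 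Matching this against the explicit bases of $\mathfrak{g}^{(0)}$, $\mathfrak{g}^{(\pm 2)}$ recorded in $\S$\ref{sect:osp-roots} (the operators $X^{(0)}_{sj,tk}$, $X^{(\pm 2)}_{sj,tk}$), I would read off that $\tau$ carries the quadratic monomials in $W = V \oplus V^*$ onto these basis vectors, up to nonzero scalars and the symmetrization bookkeeping. Concretely, products $v_i \varphi_j$ of a $V$- and a $V^*$-generator map into $\mathfrak{g}^{(0)} \simeq \mathfrak{gl}(V)$ (Lemma~\ref{lem:iso-osp4}), products $v_i v_j$ map into $\mathfrak{g}^{(+2)} = \mathrm{Sym}(V,V^*)$, and products $\varphi_i \varphi_j$ map into $\mathfrak{g}^{(-2)} = \mathrm{Sym}(V^*,V)$; in each case the symmetry type built into (\ref{eq:2.8}) via $Q$'s exchange rule is exactly the (graded) symmetry defining the target via (\ref{eq:sym-B}), (\ref{eq:sym-C}). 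This simultaneously shows $\tau$ is surjective and that distinct symmetrized monomials have linearly independent images, giving injectivity.

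The step I expect to be the main obstacle is the careful sign accounting: both $\mathfrak{s}$ and $\mathfrak{osp}(W)$ carry graded-symmetry conditions twisted by the parity-dependent exchange rule of $Q$ (note $Q(w,w') = -(-1)^{|w||w'|}Q(w',w)$), and verifying that $\tau$ respects these — rather than, say, introducing an unwanted sign that would make some basis vector map to zero or to the "wrong" symmetry subspace — requires tracking $(-1)^{|e_i||e_j|}$ factors through the derivation identity and the definitions (\ref{eq:sym-B})--(\ref{eq:sym-C}). Once that is pinned down on the explicit bases, injectivity is immediate and one may, as a sanity check, confirm that $\dim\mathfrak{s}$ agrees with the count $p + q$ of Corollary~\ref{cor:dimosp}. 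An even slicker finish, avoiding a full basis computation, is: injectivity follows because $\tau(a) = 0$ means $a$ supercommutes with all of $W$, hence (since $W$ generates $\mathfrak{q}(W)$) $a$ is central in $\mathfrak{q}(W)$; but the center of the Clifford--Weyl algebra meets $\mathfrak{q}_2(W)$ only in $\mathfrak{q}_0(W) = \mathbb{K}$, which intersects $\mathfrak{s}$ trivially, so $a = 0$. Surjectivity then follows from the dimension count, completing the proof.
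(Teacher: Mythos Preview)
Your proposal is correct. Your ``slicker finish'' --- injectivity plus dimension count via PBW and Corollary~\ref{cor:dimosp} --- is exactly the paper's strategy, though the paper proves injectivity differently: rather than invoking the center of the Clifford--Weyl algebra, it directly extracts coefficients by computing $[[a,\widetilde{e}_j],\widetilde{e}_i] = 2a_{ij}$ using a pair of $Q$-dual bases $\{e_i\}$, $\{\widetilde{e}_i\}$ of $W$, so $\tau(a)=0$ forces all $a_{ij}=0$. This is more self-contained than your center argument, which tacitly requires knowing that the super-center of $\mathfrak{q}(W)$ meets $\mathfrak{s}$ trivially --- true, but an extra fact to justify. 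Your main route (explicit basis matching against the $X^{(m)}_{sj,tk}$) is more laborious but has the payoff of constructing $\tau^{-1}$ explicitly; the paper does this separately, right after the lemma, in formula~(\ref{eq:tau-inv}).
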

\begin{proof}
Being a homomorphism of Lie superalgebras, the linear mapping $\tau$
is an isomorphism of such algebras if it is bijective. We first show
that $\tau$ is injective. So, let $a \in \mathfrak{s}$ be any element
of the kernel of $\tau\,$. The equation $\tau (a) = 0$ means that
$[[a,w],w^\prime] = [\tau(a)w,w^\prime]$ vanishes for all $w,
w^\prime \in W$. To fathom the consequences of this, let $\{e_i\}$
and $\{ \widetilde{e}_i \}$ be two homogeneous bases of $W$ so that
$Q(e_i\,,\widetilde{e}_j) = \delta_{ij} \,$. Using that $a \in
\mathfrak{s}$ has a uniquely determined expansion $a = \sum a_{ij}\,
e_i \, e_j$ with supersymmetric coefficients $a_{ij} = (-1)^{|e_i|
|e_j|} a_{j\,i}\,$, one computes
\begin{displaymath}
    [ [ a , \widetilde{e}_j ] , \widetilde{e}_i ] = a_{ij} +
    (-1)^{|e_i| |e_j|} a_{j\,i} = 2 a_{ij} \;.
\end{displaymath}
Thus the condition $[[a,w],w^\prime] = 0$ for all $w, w^\prime \in W$
implies $a = 0\,$. Hence $\tau$ is injective.

By the Poincar\'e-Birkhoff-Witt isomorphism
\begin{displaymath}
    \mathfrak{s} \simeq \mathfrak{q}_2(W) / \mathfrak{q}_1(W) \simeq
    \sum\nolimits_{k+l = 2}\wedge^k (W_1) \otimes \mathrm{S}^l(W_0) \;,
\end{displaymath}
the dimensions of the $\mathbb{Z}_2$-graded vector space
$\mathfrak{s} = \mathfrak{s}_0 \oplus \mathfrak{s}_1$ are
\begin{displaymath}
    \dim\, \mathfrak{s}_0 = \dim\, \wedge^2 (W_1) + \dim \mathrm{S}^2
    (W_0) \;, \quad \dim\, \mathfrak{s}_1 = \dim W_1 \, \dim W_0 \;.
\end{displaymath}
These agree with those of $\mathfrak{osp}(W)$ as recorded in Corollary
\ref{cor:dimosp}. Hence our injective linear map $\tau : \, \mathfrak{s}
\to \mathfrak{osp}(W)$ is in fact a bijection.
\end{proof}
\begin{remark}
By the isomorphism $\tau$ every representation $\rho$ of
$\mathfrak{s} \subset \mathfrak{q}(W)$ induces a representation $\rho
\circ \tau^{-1}$ of $\mathfrak{osp}(W)$.
\end{remark}
Let us conclude this subsection by writing down an explicit formula
for $\tau^{-1}$. To do so, let $\{ e_i \}$ and $\{ \widetilde{e}_j
\}$ be homogeneous bases of $W$ with $Q(e_i \, , \, \widetilde{e}_j)
= \delta_{ij}$ as before. For $X \in \mathfrak{osp}(W)$ notice that the
coefficients $a_{ij} := Q(e_i \, , X e_j) (-1)^{|e_j|}$ are
supersymmetric:
\begin{displaymath}
    a_{ij} = (-1)^{|X||e_i| + 1 + |e_j|} Q(X e_i \, ,\, e_j)
    = (-1)^{|X||e_i|} Q(e_j \, , X e_i) = (-1)^{|e_i| |e_j|}
    a_{j\,i} \;,
\end{displaymath}
where the last equality sign uses $(-1)^{|e_i| |e_j|}\, a_{j\,i} =
(-1)^{|e_i| |X e_i|}\, a_{j\,i} = (-1)^{|e_i| |X| + |e_i|}\,
a_{j\,i}\,$.

The inverse map $\tau^{-1} : \, \mathfrak{osp}(W) \to \mathfrak{s}$
is now expressed as
\begin{equation}\label{eq:tau-inv}
    \tau^{-1}(X) = {\textstyle{\frac{1}{2}}} \sum\nolimits_{i,j}
    Q(e_i \, , X e_j) (-1)^{|e_j| + 1} \, \widetilde{e}_i \,
    \widetilde{e}_j \;.
\end{equation}
To verify this formula, one calculates the double supercommutator $[
e_i\, , [\tau^{-1}(X) , e_j] ]$ and shows that the result is equal to
$[e_i \, , X e_j] = Q(e_i\, , X e_j)$, which is precisely what is
required from the definition of $\tau$ by $[\tau^{-1}(X),\, e_j] = X
e_j\,$.

\subsection{Spinor-oscillator representation}
\label{sect:osp-repn}

As before, starting from a $\mathbb{Z}_2$-graded $\mathbb{K}$-vector
space $V = V_0 \oplus V_1\,$, let the direct sum $W = V\oplus V^*$ be
equipped with the orthosymplectic form $Q$ and denote by
$\mathfrak{q}(W)$ the Clifford-Weyl algebra of $W$.

Consider now the following tensor product of exterior and symmetric
algebras:
\begin{displaymath}
     \mathfrak{a}(V) := \wedge(V_1^*) \otimes \mathrm{S}(V_0^*) \;.
\end{displaymath}
Following R.\ Howe we call it the \emph{spinor-oscillator module} of
$\mathfrak{q}(W)$. Notice that $\mathfrak{a}(V)$ can be identified
with the graded-commutative subalgebra in $\mathfrak{q}(W)$ which is
generated by $V \oplus \mathbb{K}\,$. As such, $\mathfrak{a}(V)$
comes with a canonical $\mathbb{Z}_2$-grading and its space of
endomorphisms carries a structure of Lie superalgebra, $\mathfrak{gl}
(\mathfrak{a}(V)) \equiv \mathrm{End}(\mathfrak{a}(V))$.

The algebra $\mathfrak{a}(V)$ now is to become a representation space
for $\mathfrak{q}(W)$. Four operations are needed for this: the
operator $\varepsilon(\varphi_1) : \, \wedge^k(V_1^\ast) \to
\wedge^{k+1} (V_1^\ast)$ of exterior multiplication by a linear form
$\varphi_1 \in V_1^\ast\,$; the operator $\iota(v_1) : \, \wedge^k(
V_1^\ast) \to \wedge^{k-1}(V_1^\ast)$ of alternating contraction with
a vector $v_1\in V_1\,$; the operator $\mu(\varphi_0):\, \mathrm{S}^l
(V_0^\ast) \to \mathrm{S}^{l+1}(V_0^\ast)$ of multiplication with a
linear function $\varphi_0 \in V_0^\ast\,$; and the operator
$\delta(v_0) : \, \mathrm{S}^l (V_0^\ast) \to \mathrm{S}^{l-1}
(V_0^\ast)$ of taking the directional derivative by a vector $v_0 \in
V_0\,$.

The operators $\varepsilon$ and $\iota$ obey the \emph{canonical
anti-commutation relations} (CAR), which is to say that $\varepsilon
(\varphi)$ and $\varepsilon(\varphi^\prime)$ anti-commute, $\iota(v)$
and $\iota( v^\prime)$ do as well, and one has
\begin{displaymath}
    \iota(v) \varepsilon(\varphi) + \varepsilon(\varphi) \iota(v)
    = \varphi(v) \, \mathrm{Id}_{\wedge(V_1^\ast)} \;.
\end{displaymath}
The operators $\mu$ and $\delta$ obey the \emph{canonical commutation
relations} (CCR), i.e., $\mu(\varphi)$ and $\mu(\varphi^\prime)$
commute, so do $\delta(v)$ and $\delta(v^\prime)$, and one has
\begin{displaymath}
    \delta(v) \mu(\varphi) - \mu(\varphi) \delta(v)
    = \varphi(v) \, \mathrm{Id}_{\mathrm{S}(V_0^\ast)} \;.
\end{displaymath}
Given all these operations, one defines a linear mapping $q : \, W
\to \mathrm{End}(\mathfrak{a}(V))$ by
\begin{displaymath}
    q(v_1 + \varphi_1 + v_0 + \varphi_0) = \iota(v_1) +
    \varepsilon(\varphi_1) + \delta(v_0) + \mu(\varphi_0) \quad
    (v_s \in V_s\, , \, \varphi_s \in V^*_s)\;,
\end{displaymath}
with $\iota(v_1)$, $\varepsilon(\varphi_1)$ operating on the first
factor of the tensor product $\wedge(V_1^\ast) \otimes \mathrm{S}
(V_0^\ast)$, and $\delta(v_0)$, $\mu(\varphi_0)$ on the second
factor. Of course the two sets $\varepsilon, \iota$ and $\mu, \delta$
commute with each other. In terms of $q\,$, the relations CAR and CCR
are succinctly summarized as
\begin{equation}\label{eq:2.9}
    [q(w)\,,\,q(w^\prime)] = Q(w,w^\prime)\,
    \mathrm{Id}_{\mathfrak{a}(V)}\;,
\end{equation}
where $[\, , \,]$ denotes the usual supercommutator of the Lie
superalgebra $\mathfrak{gl}(\mathfrak{a}(V))$. By the relation
(\ref{eq:2.9}) the linear map $q$ extends to a representation of the
Jordan-Heisenberg Lie super\-algebra $\widetilde{W} = W \oplus
\mathbb{K}\,$, with the constants of $\widetilde{W}$ acting as
multiples of $\mathrm{Id}_{\mathfrak{a}(V)}\,$.

Moreover, being a representation of $\widetilde{W}$, the map $q$
yields a representation of the universal enveloping algebra
$\mathsf{U}( \widetilde{W}) \equiv \mathfrak{q}(W)$. This
representation is referred to as the \emph{spinor-oscillator
representation} of $\mathfrak{q}(W)$. In the sequel we will be
interested in the $\mathfrak{osp} (W)$-representation induced from it
by the isomorphism $\tau^{-1}$.

There is a natural $\mathbb{Z}$-grading $\mathfrak{a}(V) =
\bigoplus_{m \ge 0} \, \mathfrak{a}^m(V)\,$,
\begin{displaymath}
    \mathfrak{a}^m(V) = {\textstyle\bigoplus\nolimits}_{k + l = m}
    \wedge^k(V_1^*)\otimes \mathrm{S}^l(V_0^\ast) \;.
\end{displaymath}
Note that the operators $\varepsilon(\varphi_1)$ and $\mu(\varphi_0)$
increase the $\mathbb{Z}$-degree of $\mathfrak{a}(V)$ by one, while
the operators $\iota(v_1)$ and $\delta(v_0)$ decrease it by one. Note
also if ${C} = (- \mathrm{Id}_V) \oplus \mathrm{Id}_{V^\ast}$ is the $\mathfrak{osp}$-element introduced in $\S$\ref{sect:osp-cas}, then a direct computation using the formula (\ref{eq:tau-inv}) shows that $\mathfrak{a}^m(V)$ is an eigenspace of the operator $(q \circ\tau^{-1}) ({C})$ with eigenvalue $m\,$. Thus ${C} \in \mathfrak{osp}$ is represented on the spinor-oscillator module $\mathfrak{a}(V)$ by the \emph{degree}.

\subsubsection{Weight constraints}\label{subsubsec:WoHdP}

We now specialize to the situation of $V = U \otimes \mathbb{C}^N$
with $U = U_0 \oplus U_1$ a $\mathbb{Z}_2$-graded vector space as in
$\S$\ref{sect:howe-pairs}, and we require $U_0$ and $U_1$ to be
isomorphic with dimension $\dim U_0 = \dim U_1 = n\,$. Recall that
\begin{displaymath}
    (\mathfrak{osp}(U\oplus U^\ast),\mathfrak{o}_N)\;, \quad
    (\mathfrak{osp}
    (\widetilde{U} \oplus \widetilde{U}^*),\mathfrak{sp}_N)\;,
\end{displaymath}
are Howe dual pairs in $\mathfrak{osp}(W)$. As of now we denote these by
$(\mathfrak{g}, \mathfrak{k})$. There is a decomposition
\begin{eqnarray*}
    &&\mathfrak{g} = \mathfrak{g}^{(-2)} \oplus \mathfrak{g}^{(0)}
    \oplus \mathfrak{g}^{(2)}\;, \quad \mathfrak{g}^{(0)} =
    \mathfrak{g} \cap (\mathrm{End}(U)\oplus\mathrm{End}(U^*)) \;, \\
    &&\mathfrak{g}^{(-2)} = \mathfrak{g} \cap \mathrm{Hom}(U^*,U)\;,
    \quad \mathfrak{g}^{(2)} = \mathfrak{g} \cap \mathrm{Hom}(U,U^*) \;,
\end{eqnarray*}
in both cases. The notation highlights the fact that the operators in
$\mathfrak{g}^{(m)} \hookrightarrow \mathfrak{osp}(V \oplus V^\ast)$
change the degree of elements in $\mathfrak{a}(V)$ by $m\,$. Note
that the Cartan subalgebra $\mathfrak{h}$ of diagonal operators in
$\mathfrak{g}$ is contained in $\mathfrak{g}^{(0)}$ but
$\mathfrak{h}\ne\mathfrak{g}^{(0)} $.

Since the Lie algebra $\mathfrak{k}$ is defined on $\mathbb{C}^N$,
the $\mathfrak{k}$-action on $\mathfrak{a}(V)$ preserves the degree.
This action exponentiates to an action of the complex Lie group $K$
on $\mathfrak{a}(V)$.
\begin{proposition}\label{prop:Howeduality}
The subalgebra $\mathfrak{a}(V)^K$ of $K$-invariants in
$\mathfrak{a}(V)$ is an irreducible module for $\mathfrak{g}\,$. The
vacuum $1 \in \mathfrak{a}(V)^K$ is contained in it as a cyclic
vector such that
\begin{displaymath}
    \mathfrak{g}^{(-2)}.1 = 0 \;, \quad \mathfrak{g}^{(0)}.1 =
\langle 1 \rangle_{\mathbb{C}}
    \;, \quad \langle \mathfrak{g}^{(2)}.1 \rangle_{\mathbb{C}}
= \mathfrak{a}(V)^K \;.
\end{displaymath}
\end{proposition}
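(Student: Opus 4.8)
The plan is to read off everything from the $\mathbb{Z}$-grading $\mathfrak{a}(V) = \bigoplus_{m \ge 0} \mathfrak{a}^m(V)$ together with the first fundamental theorem of classical invariant theory, and then to obtain irreducibility as a consequence.

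First I would record the three vacuum identities. The vacuum $1$ lies in the bottom piece $\mathfrak{a}^0(V) = \mathbb{C}$, and $(q \circ \tau^{-1})(\mathfrak{g}^{(m)})$ shifts the degree by $m$; hence $\mathfrak{g}^{(-2)}.1 = 0$ (there is no degree $-2$) and $\mathfrak{g}^{(0)}.1 \subseteq \mathbb{C} 1$ are immediate. That the latter inclusion is an equality follows because the Cartan subalgebra $\mathfrak{h} \subseteq \mathfrak{g}^{(0)}$ acts on $1$ by the vacuum weight, which a short computation with (\ref{eq:tau-inv}) and the CAR/CCR shows to be the nonzero weight recorded as $\lambda_N$ in the introduction. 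For the third identity I would again use (\ref{eq:tau-inv}) to see that $(q \circ \tau^{-1})(X)$ for $X \in \mathfrak{g}^{(2)} = \mathfrak{g} \cap \mathrm{Hom}(V, V^*)$ is a sum of products of two pure creation operators $\varepsilon(\cdot), \mu(\cdot)$, so that applying $\mathfrak{g}^{(2)}$ to $1$ produces exactly the quadratic $K$-invariants in $\mathfrak{a}^2(V)$ -- the identification of $\mathfrak{g}^{(2)}$ with the space of such invariants being the one furnished by $\mathrm{Hom}_K(\mathbb{C}^N, (\mathbb{C}^N)^*) \simeq \mathbb{C}$ in the proof of Proposition \ref{prop:dualpairs}. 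Since these creation operators commute up to sign, $\langle \mathfrak{g}^{(2)}.1 \rangle_{\mathbb{C}}$ coincides with $\mathsf{U}(\mathfrak{g}^{(2)}).1$, i.e.\ with the subalgebra of $\mathfrak{a}(V)$ generated by the quadratic invariants; by the first fundamental theorem of invariant theory for $\mathrm{O}_N$ (resp.\ $\mathrm{Sp}_N$) this subalgebra is all of $\mathfrak{a}(V)^K$.

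Cyclicity of $1$ is then formal: Poincar\'e-Birkhoff-Witt for the decomposition $\mathfrak{g} = \mathfrak{g}^{(-2)} \oplus \mathfrak{g}^{(0)} \oplus \mathfrak{g}^{(2)}$ gives $\mathsf{U}(\mathfrak{g}) = \mathsf{U}(\mathfrak{g}^{(2)})\, \mathsf{U}(\mathfrak{g}^{(0)})\, \mathsf{U}(\mathfrak{g}^{(-2)})$, and applying this to $1$ and using the first two identities collapses it to $\mathsf{U}(\mathfrak{g}).1 = \mathsf{U}(\mathfrak{g}^{(2)}).1 = \mathfrak{a}(V)^K$.

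For irreducibility, let $M \subseteq \mathfrak{a}(V)^K$ be a nonzero $\mathfrak{g}$-submodule. Since the degree operator $C$ lies in $\mathfrak{g}^{(0)} \subseteq \mathfrak{g}$ and acts semisimply, $M$ is graded, and I may pick a nonzero homogeneous $m \in M$ of minimal degree $2 k_0$; minimality forces $\mathfrak{g}^{(-2)}.m = 0$. If $k_0 = 0$ then $m \in \mathbb{C} 1$, so $1 \in M$ and $M = \mathsf{U}(\mathfrak{g}).1 = \mathfrak{a}(V)^K$ by cyclicity; it remains to exclude $k_0 > 0$. Here I would invoke the standard Fock Hermitian form on $\mathfrak{a}(V)$, for which $\varepsilon, \mu$ are adjoint to $\iota, \delta$: its restriction to $\mathfrak{a}(V)^K$ stays non-degenerate (for the compact real form of $K$ it is positive definite, with invariant subspace equal to $\mathfrak{a}(V)^K$), and the real structure on $\mathfrak{g}$ interchanging $\mathfrak{g}^{(\pm 2)}$ makes $(q \circ \tau^{-1})(\mathfrak{g}^{(-2)})$ adjoint to $(q \circ \tau^{-1})(\mathfrak{g}^{(2)})$. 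Then for every $n \in \mathfrak{a}^{2 k_0 - 2}(V)^K$ one gets $0 = \langle \mathfrak{g}^{(-2)}.m,\, n \rangle = \pm \langle m,\, \mathfrak{g}^{(2)}.n \rangle$, and since $\mathfrak{g}^{(2)}.\mathfrak{a}^{2 k_0 - 2}(V)^K = \mathfrak{a}^{2 k_0}(V)^K$ (cyclicity read off degree by degree), non-degeneracy yields $m = 0$, a contradiction. Alternatively, the irreducibility of $\mathfrak{a}(V)^K$ over $\mathfrak{g}$ is exactly Howe's duality theorem for the dual pairs of Proposition \ref{prop:dualpairs}, and one may cite \cite{H1}. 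I expect this last point to be the main obstacle: controlling the submodule lattice needs either the contravariant-form input (hence the unitary structure developed later in the paper) or the double-commutant theorem, whereas the vacuum identities and cyclicity are bookkeeping with the grading together with the classical first fundamental theorem.
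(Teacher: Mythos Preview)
Your proposal is correct, but it takes a substantially more hands-on route than the paper. The paper's proof consists of a single sentence: ``This is a restatement of Theorems 8 and 9 of \cite{H1}.'' In other words, the paper treats the entire proposition as a direct citation of Howe's results and offers no independent argument.

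Your approach, by contrast, unpacks what lies behind that citation. The vacuum identities and cyclicity you derive from the $\mathbb{Z}$-grading together with PBW and the first fundamental theorem; this is sound, though you should note that the FFT you invoke in the super setting (polynomial \emph{and} exterior invariants simultaneously) is itself one of the main results of \cite{H1}, so you are not really avoiding Howe there. For irreducibility your contravariant-form argument is valid and is a standard technique for highest-weight modules; it does, however, rely on the unitary structure of \S\ref{sect:2.6.3}, which in the paper's logical order is developed only \emph{after} this proposition. Your alternative of simply citing \cite{H1} for irreducibility is exactly what the paper does for the whole statement. So the two approaches agree in substance; yours is more self-contained and makes the mechanism visible, while the paper's is more economical and keeps the dependence on Howe explicit rather than scattered.
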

\begin{proof}
This is a restatement of Theorems 8 and 9 of \cite{H1}.
\end{proof}
\begin{remark}
In the case of $(\mathfrak{g}, \mathfrak{k}) = (\mathfrak{osp}(U
\oplus U^\ast), \mathfrak{o}_N )$ it matters that $K =
\mathrm{O}_N\,$, as the connected Lie group $K = \mathrm{SO}_N$ has
invariants in $\mathfrak{a}(V)$ not contained in $\langle
\mathfrak{g}^{(2)}.1 \rangle_\mathbb{C} \,$.
\end{remark}
Proposition \ref{prop:Howeduality} has immediate consequences for the
weights of the $\mathfrak{g}$-representation on $\mathfrak{a}(V)^K$.
Using the notation of $\S$\ref{sect:osp-roots}, let $\{ H_{sj} \}$ be
a standard basis of $\mathfrak{h}$ and $\{ \vartheta_{sj} \}$ the
corresponding dual basis. We now write $\vartheta_{0j} =: \phi_j\,$
and $\vartheta_{1j} =: \mathrm{i}\psi_j$ ($j = 1, \ldots, n$).
\begin{corollary}\label{cor:weights}
The representations of $\mathfrak{osp}(U\oplus U^*)$ on
$\mathfrak{a}(V)^{\mathrm{O}_N}$ and $\mathfrak{osp}(\widetilde{U}
\oplus \widetilde {U}^*)$ on $\mathfrak{a}(V)^{ \mathrm{Sp}_N}$ each
have highest weight $\lambda_N = \frac{N}{2}\sum_{j=1}^n
(\mathrm{i}\psi_j - \phi_j)$. Every weight of these representations
is of the form $\sum_{j=1}^n (\mathrm{i} m_j \psi_j - n_j \phi_j)$
with $-\frac{N}{2}\le m_j \le \frac{N}{2} \le n_j\,$.
\end{corollary}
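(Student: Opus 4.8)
The plan is to deduce Corollary~\ref{cor:weights} directly from the structural data in Proposition~\ref{prop:Howeduality} together with the explicit weight bookkeeping of the spinor-oscillator module $\mathfrak{a}(V)$ for $V = U \otimes \mathbb{C}^N$ with $\dim U_0 = \dim U_1 = n$. First I would record the weights of $\mathfrak{h}$ on $\mathfrak{a}(V) = \wedge(V_1^*) \otimes \mathrm{S}(V_0^*)$. Choosing the standard basis $\{H_{sj}\}$ of $\mathfrak{h} \subset \mathfrak{g}^{(0)}$ (coming from the $\mathrm{End}(U)$-part, since $\mathfrak{h}$ acts trivially on the $\mathbb{C}^N$-factor), the fermionic generators in $V_1^*$ each carry weight $-\vartheta_{1j} = -\mathrm{i}\psi_j$ and come with multiplicity $N$ (the dimension of $\mathbb{C}^N$), while the bosonic generators in $V_0^*$ each carry weight $-\vartheta_{0j} = -\phi_j$, again with multiplicity $N$. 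Hence a general weight of $\mathfrak{a}(V)$ has the form $-\sum_j (\mathrm{i} a_j \psi_j + b_j \phi_j)$ with $0 \le a_j \le N$ (each fermion contributes $0$ or $1$, over $N$ copies) and $b_j \ge 0$ arbitrary (bosons, unbounded). Restricting to the $K$-invariant subspace $\mathfrak{a}(V)^K$ only removes weight vectors, so the same constraints persist there.

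Next I would identify the highest weight. By Proposition~\ref{prop:Howeduality} the vacuum $1 \in \mathfrak{a}(V)^K$ is a cyclic highest-weight vector with $\mathfrak{g}^{(-2)}.1 = 0$ and $\mathfrak{g}^{(0)}.1 = \langle 1\rangle_{\mathbb{C}}$, so its weight is the highest weight. The point is that $1$ is \emph{not} the constant function in the naive sense: the identification of $\mathfrak{a}(V)$ with the graded-commutative subalgebra of $\mathfrak{q}(W)$, and the fact that $K = \mathrm{O}_N$ (resp.\ $\mathrm{Sp}_N$) has no invariants in the lower pieces, forces the $K$-invariant ``vacuum'' to be built from the top exterior power in the fermionic sector. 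Concretely, the only $\mathrm{O}_N$- (resp.\ $\mathrm{Sp}_N$-)invariant of lowest $\mathbb{Z}$-degree in $\wedge(V_1^*) = \wedge(U_1^* \otimes (\mathbb{C}^N)^*)$ is obtained by filling, for each $j = 1,\dots,n$, the full $N$-dimensional fermionic slot $\wedge^N((\mathbb{C}^N)^*)$ — this uses the invariant (the determinant / Pfaffian-type form) on $(\mathbb{C}^N)^*$. That vector has weight $-\sum_j N \cdot \mathrm{i}\psi_j \cdot \tfrac12$... more precisely, I would compute that the $K$-invariant vacuum carries weight $\sum_j(\tfrac{N}{2}\mathrm{i}\psi_j - \tfrac{N}{2}\phi_j)$; the $-\tfrac{N}{2}$ in the bosonic direction and $+\tfrac{N}{2}$ in the fermionic direction come from the normal-ordering shift in passing between $\mathfrak{a}(V)$ and $\mathfrak{q}(W)$, i.e.\ from the fact that $\mathfrak{h}$ as a subalgebra of $\mathfrak{osp}$ acts via $\tau^{-1}$ with the symmetrized (vacuum-subtracted) normalization. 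So $\lambda_N = \tfrac{N}{2}\sum_j(\mathrm{i}\psi_j - \phi_j)$ as claimed.

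Finally I would combine the two pieces. Every weight $\gamma$ of $\mathfrak{a}(V)^K$ equals $\lambda_N$ minus a sum of positive roots; alternatively and more directly, every weight vector in $\mathfrak{a}(V)^K$ is obtained from the $K$-invariant vacuum by applying raising operators in $\mathfrak{g}^{(2)} \subset \mathrm{Hom}(V,V^*)$ (using the cyclicity $\langle \mathfrak{g}^{(2)}.1\rangle_{\mathbb{C}} = \mathfrak{a}(V)^K$). In the fermionic sector the raising operators in $\wedge(V_1^*)$ can only \emph{remove} contractions from the full slot, so the $\psi_j$-coefficient $m_j$ ranges over $-\tfrac{N}{2} \le m_j \le \tfrac{N}{2}$ — the upper bound $\tfrac{N}{2}$ is the vacuum value and the lower bound $-\tfrac{N}{2}$ is reached when the fermionic slot at $j$ is completely emptied. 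In the bosonic sector the raising operators in $\mathrm{S}(V_0^*)$ only add factors, so the $\phi_j$-coefficient, written as $-n_j$, satisfies $n_j \ge \tfrac{N}{2}$, with no upper bound. Writing $\gamma = \sum_j(\mathrm{i} m_j\psi_j - n_j\phi_j)$ then gives precisely $-\tfrac{N}{2} \le m_j \le \tfrac{N}{2} \le n_j$.

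The main obstacle I anticipate is pinning down the exact shift $\tfrac{N}{2}$ in both the bosonic and fermionic directions, i.e.\ correctly tracking how $\mathfrak{h} \subset \mathfrak{osp}$ is normalized under the super-symmetrized embedding $\tau^{-1}$ (the ${C} = (-\mathrm{Id}_V) \oplus \mathrm{Id}_{V^*}$ bookkeeping and the associated vacuum subtraction). Once the action of each $H_{sj}$ on the fermionic top form and on bosonic monomials is computed with the right symmetrization convention, the inequalities are immediate; the conceptual input (highest weight, cyclicity, the lower/upper semi-boundedness of the two sectors) is already supplied by Proposition~\ref{prop:Howeduality}.
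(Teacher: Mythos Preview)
Your overall strategy---compute the $\mathfrak{h}$-weight of the vacuum, then use cyclicity under $\mathfrak{g}^{(2)}$ and the finite dimensionality of $\wedge(\mathbb{C}^N)^\ast$---matches the paper exactly. But you have made a genuine error in identifying the highest-weight vector, and this propagates through the rest of the argument.

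The vacuum $1 \in \mathfrak{a}(V)^K$ \emph{is} the constant function in $\wedge^0(V_1^\ast) \otimes \mathrm{S}^0(V_0^\ast) = \mathbb{C}$, not a top exterior form. Your claim that ``$K = \mathrm{O}_N$ has no invariants in the lower pieces'' is false: the constant $1$ is trivially $K$-invariant. The entire $\tfrac{N}{2}$ shift in \emph{both} directions comes from the symmetrized embedding $\tau^{-1}$, which you also mention but then conflate with the wrong picture. Concretely, the paper computes (using formula~(\ref{eq:tau-inv}))
\[
\hat H_{1j}\cdot 1 \;=\; \tfrac{1}{2}\sum_{a=1}^N \iota(e_{1,j}\otimes e_a)\,\varepsilon(f_{1,j}\otimes f_a)\, 1 \;=\; \tfrac{N}{2}\,,
\qquad
\hat H_{0j}\cdot 1 \;=\; -\tfrac{N}{2}\,,
\]
purely from CAR/CCR applied to the empty state. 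If the vacuum were the top fermionic form as you suggest, its naive weight would be $-N\sum_j \mathrm{i}\psi_j$, and after the normal-ordering shift of $+\tfrac{N}{2}\sum_j \mathrm{i}\psi_j$ you would land at $-\tfrac{N}{2}\sum_j \mathrm{i}\psi_j$, the \emph{lowest} $\psi$-weight, not the highest. So your two mechanisms are not compatible; drop the top-form picture entirely.

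Consequently your third paragraph is backwards. The operators in $\mathfrak{g}^{(2)}$ \emph{increase} the $\mathbb{Z}$-degree on $\mathfrak{a}(V)$ (they are built from $\varepsilon\varepsilon$, $\mu\mu$, $\varepsilon\mu$) while carrying \emph{negative} roots $-\mathrm{i}\psi_j - \mathrm{i}\psi_k$, $-\phi_j - \phi_k$, $-\mathrm{i}\psi_j - \phi_k$; applying them to the empty vacuum adds fermions/bosons and lowers $m_j$ from $\tfrac{N}{2}$ and raises $n_j$ from $\tfrac{N}{2}$. The lower bound $m_j \ge -\tfrac{N}{2}$ then comes, as you correctly note, from $\wedge^k(\mathbb{C}^N)^\ast = 0$ for $k > N$. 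Once you correct the vacuum identification, the rest of your outline is the paper's proof.
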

\begin{proof}
Recall from $\S$\ref{sect:howe-pairs} the embedding of
$\mathfrak{osp} (U\oplus U^*)$ and $\mathfrak{osp}(\widetilde{U}
\oplus \widetilde{U}^*)$ in $\mathfrak{osp}(W)$, and from Lemma
\ref{lem:tau-iso} the isomorphism $\tau^{-1} : \, \mathfrak{osp}(W)
\to \mathfrak{s}$ where $\mathfrak{s}$ is the Lie superalgebra of
supersymmetrized degree-two elements in $\mathfrak{q}(W)$.
Specializing formula (\ref{eq:tau-inv}) to the case of a Cartan
algebra generator $H_{sj} \in \mathfrak{h} \subset \mathfrak{g}$ one
gets
\begin{displaymath}
    \tau^{-1}(H_{s j}) = - \tfrac{1}{2} \sum\nolimits_{a = 1}^N
    ((f_{s,j}\otimes f_a)(e_{s,j} \otimes e_a) + (-1)^s
    (e_{s,j}\otimes e_a )(f_{s,j}\otimes f_a) ) \;,
\end{displaymath}
where $\{e_a \}$ is a basis of $\mathbb{C}^N$ and $\{ f_a \}$ the
dual basis of $(\mathbb{C}^N)^\ast$.

Now let $\tau^{-1}(H_{sj}) \in \mathfrak{s}$ act by the corresponding
operator, say $\hat{H}_{sj} := (q \circ \tau^{-1})(H_{sj})$, in the
spinor-oscillator representation $q$ of $\mathfrak{s} \subset
\mathfrak{q} (W)$. Application of that operator to the highest-weight
vector $1 \in \mathbb{C} \equiv \wedge^0(V_1^\ast) \otimes
\mathrm{S}^0 (V_0^\ast) \subset \mathfrak{a}(V)^K$ yields
\begin{align*}
    \hat{H}_{1 j}\, 1 &= \tfrac{1}{2}\sum\nolimits_a \iota(e_{1,j}
    \otimes e_a)\varepsilon(f_{1,j}\otimes f_a) 1 = \tfrac{N}{2}\;,\\
    \hat{H}_{0 j}\, 1 &= - \tfrac{1}{2}\sum\nolimits_a \delta(e_{0,j}
    \otimes e_a) \mu(f_{0,j}\otimes f_a) 1 = - \tfrac{N}{2} \;.
\end{align*}
Altogether this means that $\hat{H}\, 1 = \lambda_N(H) 1$ where
$\lambda_N(H)= \frac{N}{2} \sum_j (\mathrm{i}\psi_j(H) - \phi_j(H))$.

From Lemma \ref{lem:rootsosp} the roots $\alpha$ corresponding to
root spaces $\mathfrak{g}_\alpha \subset \mathfrak{g}^{(2)}$ are of
the form
\begin{displaymath}
    -\phi_j -\phi_{j^{\,\prime}} \;, \quad -\mathrm{i}\psi_j -\mathrm{i}
    \psi_{j^{\,\prime}}\;,\quad -\phi_j-\mathrm{i}\psi_{j^{\,\prime}}\;,
\end{displaymath}
where the indices $j, j^{\,\prime}$ are subject to restrictions that
depend on $\mathfrak{g}$ being $\mathfrak{osp} (U\oplus U^*)$ or
$\mathfrak{osp}( \widetilde{U}\oplus \widetilde{U}^*)$. From
$\mathfrak{a}(V)^K = \mathfrak{g}^{ (2)}.1$ one then has $m_j \le
\frac{N}{2} \le n_j$ for every weight $\gamma = \sum (\mathrm{i} m_j
\psi_j - n_j \phi_j)$ of the $\mathfrak{g}$-representation on
$\mathfrak{a}(V)^K$.

The restriction $m_j \ge \frac{N}{2} - N$ results from $\wedge(
V_1^\ast) = \wedge(U_1^\ast \otimes (\mathbb{C}^N)^\ast)$ being
isomorphic to $\otimes_{j=1}^n \wedge(\mathbb{C}^N)^\ast$ and the
vanishing of $\wedge^k (\mathbb{C}^N)^\ast = 0$ for $k > N\,$.
\end{proof}
\begin{corollary}\label{cor:degree}
For each of our two cases $\mathfrak{g} = \mathfrak{osp}(U \oplus
U^\ast)$ and $\mathfrak{g} = \mathfrak{osp}(\widetilde{U}\oplus
\widetilde{U}^\ast)$ the element ${C} = - \sum_{s,j} H_{s j}
\subset \mathfrak{g}$ is represented on $\mathfrak{a}(V)^K$ by the
degree operator.
\end{corollary}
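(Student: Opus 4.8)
The plan is to reduce Corollary \ref{cor:degree} to a fact already in hand: by the remark closing $\S$\ref{sect:osp-repn}, the $\mathfrak{osp}(V\oplus V^\ast)$-element $C = (-\mathrm{Id}_V)\oplus\mathrm{Id}_{V^\ast}$ of $\S$\ref{sect:osp-cas} is represented through $q\circ\tau^{-1}$ on all of $\mathfrak{a}(V)$ by multiplication with the $\mathbb{Z}$-degree, and since the $K$-action on $\mathfrak{a}(V)$ preserves that grading, the same operator restricts to $\mathfrak{a}(V)^K$. So the entire task is to check that the element $C = -\sum_{s,j} H_{sj}$ living inside the Howe partner $\mathfrak{g}$ is carried, under the embedding $\mathfrak{g}\hookrightarrow\mathfrak{osp}(W)$, to precisely that element $C$.

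First I would compute $-\sum_{s,j} H_{sj}$ as a plain operator on $U\oplus U^\ast$ (resp.\ $\widetilde U\oplus\widetilde U^\ast$): by Lemma \ref{lem:iso-osp4} the Cartan generator $H_{sj} = X^{(0)}_{sj,sj}$ equals $E_{sj;sj}\oplus(-(E_{sj;sj})^{\mathrm{st}})$, i.e.\ the rank-one projector onto $\langle e_{s,j}\rangle\subset U$ together with minus the rank-one projector onto $\langle f_{s,j}\rangle\subset U^\ast$ (note $(E_{sj;sj})^{\mathrm{st}}$ is the ordinary transpose, $E_{sj;sj}$ being even, so the $\mathbb{Z}_2$-twist is irrelevant here). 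Summing over all $s,j$ gives $-\sum_{s,j} H_{sj} = (-\mathrm{Id}_U)\oplus\mathrm{Id}_{U^\ast}$. Then I would feed this into the embedding of Corollary \ref{cor:2.2}, $X\mapsto\Psi\circ(X\otimes\mathrm{Id}_{\mathbb{C}^N})\circ\Psi^{-1}$, using that $\Psi$ carries $U\otimes\mathbb{C}^N$ onto $V$ and $U^\ast\otimes\mathbb{C}^N$ onto $V^\ast$ (immediate from $\Psi:(u+\varphi)\otimes z\mapsto u\otimes z + \varphi\otimes\psi(z)$). Since $X\otimes\mathrm{Id}_{\mathbb{C}^N}$ acts as $-\mathrm{Id}$ on $U\otimes\mathbb{C}^N$ and as $+\mathrm{Id}$ on $U^\ast\otimes\mathbb{C}^N$, its conjugate by $\Psi$ is $-\mathrm{Id}$ on $V$ and $+\mathrm{Id}$ on $V^\ast$, namely $C = (-\mathrm{Id}_V)\oplus\mathrm{Id}_{V^\ast}$. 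This completes the identification and hence the proof.

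A more computational alternative, bypassing the embedding, is to evaluate $(q\circ\tau^{-1})(C)$ directly from formula (\ref{eq:tau-inv}) just as in the proof of Corollary \ref{cor:weights}: one has $\tau^{-1}(H_{sj}) = -\tfrac12\sum_a\big((f_{s,j}\otimes f_a)(e_{s,j}\otimes e_a) + (-1)^s (e_{s,j}\otimes e_a)(f_{s,j}\otimes f_a)\big)$, so that $q(\tau^{-1}(-\sum_{s,j} H_{sj}))$ becomes, after normal ordering, a sum of number operators $\sum_{s,j} N_{sj}$ counting applications of the raising operators $\varepsilon$ and $\mu$ — that is, the degree — the constants $\pm N/2$ produced in normal ordering cancelling between the $s=1$ and $s=0$ contributions. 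In either approach the only thing demanding a little care is the sign/grading bookkeeping in the $K = \mathrm{Sp}_N$ case, where one works with the twisted grading $\widetilde U$; but because $C$ acts by a single scalar on all of $U$ (and the opposite scalar on all of $U^\ast$), the twist is harmless, and I anticipate no genuine obstacle — the statement is essentially a repackaging of facts already established.
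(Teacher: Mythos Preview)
Your proposal is correct. Your second, ``computational alternative'' is exactly what the paper does: it sums the explicit expressions $\tau^{-1}(H_{sj}) = -\tfrac12\sum_a\big((f_{s,j}\otimes f_a)(e_{s,j}\otimes e_a) + (-1)^s(e_{s,j}\otimes e_a)(f_{s,j}\otimes f_a)\big)$ from the proof of Corollary~\ref{cor:weights}, applies the CAR/CCR to normal-order, and reads off $\sum_{j,a}\big(\mu(f_{0,j}\otimes f_a)\delta(e_{0,j}\otimes e_a)+\varepsilon(f_{1,j}\otimes f_a)\iota(e_{1,j}\otimes e_a)\big)$, the degree operator.

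Your primary approach is a mildly different and somewhat cleaner route: rather than computing in the Clifford--Weyl algebra, you stay at the level of $\mathrm{End}(W)$, observe that $-\sum_{s,j}H_{sj}=(-\mathrm{Id}_U)\oplus\mathrm{Id}_{U^\ast}$, push this through the embedding $\Psi\circ(\,\cdot\otimes\mathrm{Id}_{\mathbb{C}^N})\circ\Psi^{-1}$ of Corollary~\ref{cor:2.2} to get $(-\mathrm{Id}_V)\oplus\mathrm{Id}_{V^\ast}$, and then invoke the remark at the end of the opening of \S\ref{sect:osp-repn} that this big-$C$ is represented by the degree. This buys you a coordinate-free identification and avoids the normal-ordering computation (and in particular makes the cancellation of the $\pm N/2$ constants automatic rather than something to check); the paper's direct computation, on the other hand, has the virtue of exhibiting the number-operator form explicitly. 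Your remark that the $\widetilde U$-twist is harmless because $C$ is a scalar on each of $U$ and $U^\ast$ is correct and worth keeping.
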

\begin{proof}
Since the $K$-action on $\mathfrak{a}(V)$ preserves the degree, the
subalgebra $\mathfrak{a}(V)^K$ is still $\mathbb{Z}$-graded by the
same degree. Summing the above expressions for $(q \circ \tau^{-1})
(H_{s j})$ over $s,j$ and using CAR and CCR to combine terms, we
obtain
\begin{displaymath}
    (q \circ \tau^{-1})({C}) = \sum_{j=1}^n \sum_{a=1}^N
    \big( \mu(f_{0,j}\otimes f_a) \delta(e_{0,j} \otimes e_a) +
    \varepsilon(f_{1,j}\otimes f_a) \iota(e_{1,j}\otimes e_a) \big)
    \;,
\end{displaymath}
which is in fact the operator for the degree of the
$\mathbb{Z}$-graded module $\mathfrak{a}(V)^K$.
\end{proof}

\subsubsection{Positive and simple roots}
\label{sect:simple-roots}

We here record the systems of simple positive roots that we will use
later (in $\S$\ref{unicity theorem}). In the case of
$\mathfrak{osp}(U \oplus U^\ast)$ this will be
\begin{displaymath}
    \phi_1 - \phi_2 \;, \ldots, \phi_{n-1} - \phi_n \;,
    \phi_n - \mathrm{i}\psi_1 \;, \mathrm{i}\psi_1 - \mathrm{i}
    \psi_2 \;, \ldots, \mathrm{i}\psi_{n-1} - \mathrm{i}\psi_n \;,
    \mathrm{i} \psi_{n-1} + \mathrm{i} \psi_n \;.
\end{displaymath}
The corresponding system of positive roots for $\mathfrak{osp}(U \oplus
U^\ast)$ is
\begin{displaymath}
    \phi_j \pm \phi_k \;, \, \mathrm{i}\psi_j \pm \mathrm{i}\psi_k
    \,\, (j < k) \;, \,\, 2\phi_j \,, \,\, \phi_j \pm \mathrm{i}\psi_k
    \,\, (j\, ,k =1, \ldots, n) \;.
\end{displaymath}
In the case of $\mathfrak{osp}(\widetilde{U} \oplus
\widetilde{U}^\ast)$ we choose the system of simple positive roots
\begin{displaymath}
    \phi_1 - \phi_2 \;, \ldots, \phi_{n-1} - \phi_n \;,
    \phi_n - \mathrm{i}\psi_1 \;, \mathrm{i}\psi_1 -
    \mathrm{i}\psi_2 \;, \ldots,  \mathrm{i}\psi_{n-1}
    - \mathrm{i}\psi_n \;, 2\mathrm{i}\psi_n \;.
\end{displaymath}
The corresponding positive root system then is
\begin{displaymath}
    \phi_j \pm \phi_k \;, \, \mathrm{i}\psi_j \pm \mathrm{i}\psi_k
    \,\, (j < k) \;, \,\, 2\mathrm{i}\psi_j \,, \,\, \phi_j \pm
    \mathrm{i}\psi_k \,\, (j\, ,k =1, \ldots, n) \;.
\end{displaymath}
In both cases the roots
\begin{displaymath}
    \phi_j - \phi_k \;, \, \mathrm{i}\psi_j - \mathrm{i}\psi_k
    \,\, (j < k) \;, \quad \phi_j - \mathrm{i}\psi_k \,\,
    (j\, , k = 1, \ldots, n) \;,
\end{displaymath}
form a system of positive roots for $\mathfrak{gl}(U) \simeq
\mathfrak{g}^{(0)} \subset \mathfrak{osp}$.

\subsubsection{Unitary structure}\label{sect:2.6.3}

We now equip the spinor-oscillator module $\mathfrak{a}(V)$ for $V =
V_0 \oplus V_1$ with a unitary structure. The idea is to think of the
algebra $\mathfrak{a}(V)$ as a subset of $\mathcal{O}(V_0\, , \,
\wedge V_1^\ast)$, the holomorphic functions $V_0 \to
\wedge(V_1^\ast)$. For such functions a Hermitian scalar product is
defined via Berezin's notion of superintegration as follows.

For present purposes, it is imperative that $V$ be defined over
$\mathbb{R}$, i.e., $V = V_\mathbb{R} \otimes \mathbb{C}$, and that
$V$ be re-interpreted as a \emph{real} vector space $V^\prime :=
V_\mathbb{R} \oplus J\, V_\mathbb{R}$ with complex structure $J
\simeq \mathrm{i}\,$. Needless to say, this is done in a manner
consistent with the $\mathbb{Z}_2$-grading, so that $V^\prime =
V_0^\prime \oplus V_1^\prime$ and $V_s^\prime = V_{s,\mathbb{R}}
\oplus J\, V_{s,\mathbb{R}} \simeq V_{s,\mathbb{R}} \otimes
\mathbb{C} = V_s\,$.

%
{}From $V_s = U_s \otimes \mathbb{C}^N$ and $U_1 \simeq U_0$ we are
given an isomorphism $V_1 \simeq V_0\,$. This induces a canonical
isomorphism $\wedge ({V_1^\prime}^\ast) \simeq \wedge(
{V_0^\prime}^\ast)$, which gives rise to a bundle isomorphism
$\Omega$ sending $\Gamma(V_0^\prime \, , \, \wedge
{V_1^\prime}^\ast)$, the algebra of real-analytic functions on
$V_0^\prime$ with values in $\wedge ({V_1^\prime}^\ast)$, to
$\Gamma({V_0^\prime} \,,\, \wedge T^\ast V_0^\prime)$, the complex of
real-analytic differential forms on $V_0^\prime\,$. Fixing some
orientation of $V_0^\prime\,$, the Berezin (super-)integral for the
$\mathbb{Z}_2$-graded vector space $V^\prime = V_0^\prime \oplus
V_1^\prime$ is then defined as the composite map
\begin{displaymath}
    \Gamma(V_0^\prime \,,\, \wedge {V_1^\prime}^\ast) \stackrel{\Omega}
    {\longrightarrow} \Gamma(V_0^\prime \,,\,\wedge T^\ast V_0^\prime)
    \ \stackrel{\int}{\longrightarrow} \mathbb{C}\;,\quad \Phi\mapsto
    \Omega[\Phi]\mapsto \int_{V_0^\prime}\Omega[\Phi]\;,
\end{displaymath}
whenever the integral over $V_0^\prime$ exists. Thus the Berezin
integral is a two-step process: first the section $\Phi$ is converted
into a differential form, then the form $\Omega[\Phi]$ is integrated
in the usual sense to produce a complex number. Of course, by the
rules of integration of differential forms only the top-degree
component of $\Omega[\Phi]$ contributes to the integral.

The subspace $V_\mathbb{R} \subset V^\prime$ has played no role so
far, but now we use it to decompose the complexification $V^\prime
\otimes \mathbb{C}$ into holomorphic and anti-holo\-morphic parts:
$V^\prime \otimes \mathbb{C} = V \oplus \overline{V}$ and determine
an operation of complex conjugation $V^\ast \to \overline{V^\ast}$.
We also fix on $V = V_0 \oplus V_1$ a Hermitian scalar product
(a.k.a.\ unitary structure) $\langle \, , \, \rangle$ so that $V_0
\perp V_1\,$. This scalar product determines a parity-preserving
complex anti-linear bijection $c : \, V \to V^\ast$ by $v \mapsto cv
= \langle v , \, \rangle$. Composing $c$ with complex conjugation
$V^\ast \to \overline{V^\ast}$ we get a $\mathbb{C}$-linear
isomorphism $V \to \overline{V^\ast}$, $v \mapsto \overline{cv}\,$.

In this setting there is a distinguished Gaussian section $\gamma \in
\Gamma(V_0^\prime \, , \wedge {V_1^\prime}^\ast \otimes \mathbb{C})$
singled out by the conditions
\begin{equation}\label{eq:def-gamma}
    \forall v_0 \in V_0\, , v_1 \in V_1 : \quad
    \delta(v_0) \gamma = - \mu(\overline{c v_0}) \gamma \;, \quad
    \iota(v_1) \gamma = - \varepsilon(\overline{c v_1}) \gamma \;.
\end{equation}
To get a close-up view of $\gamma\,$, let $\{ e_{0,j }\}$ and $\{
e_{1,j} \}$ be ortho\-normal bases of $V_0$ resp.\ $V_1\,$, and let
$z_j = c e_{0,j}$ and $\zeta_j = c e_{1,j}$ be the corresponding
coordinate functions, with complex conjugates $\overline{z}_j$ and
$\overline{\zeta}_j\,$. Viewing $\zeta_j\,$, $\overline{\zeta}_j$ as
generators of $\wedge({V_1^\prime}^\ast \otimes \mathbb{C})$, our
section $\gamma \in \Gamma(V_0^\prime \,,\, \wedge {V_1^\prime}^\ast
\otimes \mathbb{C})$ is the standard Gaussian
\begin{displaymath}
    \gamma = \mathrm{const} \times \mathrm{e}^{-\sum_j
    (z_j \overline{z}_j + \zeta_j \overline{\zeta}_j )} \;.
\end{displaymath}
We fix the normalization of $\gamma$ by the condition
$\int_{V_0^\prime} \Omega[\gamma] = 1$.

A unitary structure on the spinor-oscillator module $\mathfrak{a}(V)$
is now defined as follows. Let complex conjugation $V^\ast \to
\overline{V^\ast}$ be extended to an algebra anti-homomorphism
$\mathfrak{a}(V) \to \mathfrak{a}(\overline{V})$ by the convention
$\overline{\Phi_1 \Phi_2} = \overline{\Phi}_2 \, \overline{\Phi}_1$
(without any sign factors). Then, if $\Phi_1\, , \Phi_2$ are any two
elements of $\mathfrak{a}(V)$, we view them as holomorphic maps $V_0
\to \wedge (V_1^\ast)$, multiply $\overline{\Phi}_1$ with $\Phi_2$ to
form $\overline{\Phi}_1 \Phi_2 \in \Gamma(V_0^\prime \,, \wedge
{V_1^\prime}^\ast \otimes \mathbb{C})$, and define their Hermitian
scalar product by
\begin{equation}\label{eq:HSP}
    \left\langle \Phi_1 \, , \Phi_2 \right\rangle_{\mathfrak{a}(V)} :=
    \int_{V_0^\prime} \Omega[ \gamma \, \overline{\Phi}_1 \Phi_2 ]\;.
\end{equation}

Let us mention in passing that (\ref{eq:HSP}) coincides with the
unitary structure of $\mathfrak{a}(V)$ used in the Hamiltonian
formulation of quantum field theories and in the Fock space
description of many-particle systems composed of fermions and bosons.
The elements
\begin{equation}\label{eq:occ-no}
    \bigwedge\nolimits_j \zeta_j^{m_j} \otimes \prod\nolimits_j
    z_j^{n_j} / \sqrt{n_j \, !}
\end{equation}
for $m_j \in \{ 0, 1 \}$ and $n_j \in \{ 0, 1, \ldots \}$ form an
orthonormal set in $\mathfrak{a}(V)$, which in physics is called the
occupation number basis of $\mathfrak{a}(V)$.
\begin{lemma}\label{lem:h.c.}
For all $v_0 \in V_0$ and $v_1 \in V_1$ the pairs of operators
$\delta (v_0)$, $\mu(cv_0)$ and $\iota(v_1)$, $\varepsilon(cv_1)$ in
$\mathrm{End} (\mathfrak{a}(V))$ obey the relations
\begin{displaymath}
    \delta(v_0)^\dagger = \mu(c v_0) \;, \quad
    \iota(v_1)^\dagger = \varepsilon(c v_1) \;,
\end{displaymath}
i.e., they are mutual adjoints with respect to the unitary structure
of $\mathfrak{a} (V)$.
\end{lemma}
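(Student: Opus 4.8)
The plan is to reduce the two asserted adjoint relations to an \emph{integration-by-parts identity} for the Berezin integral (\ref{eq:HSP}), and then to move the annihilation operators $\delta(v_0)$ and $\iota(v_1)$ across the Gaussian section $\gamma$ by invoking the relations (\ref{eq:def-gamma}) that define $\gamma$. Because the scalar product is conjugate-symmetric, $\langle\Phi_1,\Phi_2\rangle_{\mathfrak{a}(V)}=\overline{\langle\Phi_2,\Phi_1\rangle}_{\mathfrak{a}(V)}$, it suffices to prove one half of each statement, say $\langle\delta(v_0)\Phi_1,\Phi_2\rangle_{\mathfrak{a}(V)}=\langle\Phi_1,\mu(cv_0)\Phi_2\rangle_{\mathfrak{a}(V)}$ and $\langle\iota(v_1)\Phi_1,\Phi_2\rangle_{\mathfrak{a}(V)}=\langle\Phi_1,\varepsilon(cv_1)\Phi_2\rangle_{\mathfrak{a}(V)}$ for all $\Phi_1,\Phi_2\in\mathfrak{a}(V)$; the reversed equalities then follow by complex conjugation. (One could instead read the relations off the matrix elements of the four operators in the occupation-number basis (\ref{eq:occ-no}), but that presupposes orthonormality of that basis, which one would again prove by a Gaussian integral; the route below is self-contained.)

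The first step is to isolate the integration-by-parts lemma: if $D$ is either a holomorphic or an antiholomorphic directional derivative along a vector of $V_0$, resp.\ of $\overline{V_0}$, extended to a derivation of $\Gamma(V_0^\prime,\wedge{V_1^\prime}^*\otimes\mathbb{C})$ that annihilates the fermionic generators, or else the contraction $\iota(v_1)$ with $v_1\in V_1$, resp.\ its complex conjugate, extended to an odd derivation of that algebra, then $\int_{V_0^\prime}\Omega[D\Psi]=0$ for every section $\Psi$ with Gaussian decay; this applies to $\Psi=\gamma\,\overline{\Phi}_1\Phi_2$, which is a polynomial multiple of $\gamma$ since $\mathrm{S}(V_0^*)$ consists of polynomials. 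When $D$ is bosonic this is Stokes' theorem: $\Omega$ is a parallel bundle map and hence commutes with $D$, and $\int_{V_0^\prime}d(\cdot)=0$ by the decay, while a Wirtinger derivative is a linear combination of ordinary partials. When $D$ is fermionic it is the algebraic observation that the coefficient of the top fermionic monomial $\zeta_1\cdots\zeta_{nN}\overline{\zeta}_1\cdots\overline{\zeta}_{nN}$ — the only part of $\Omega[D\Psi]$ that survives $\int_{V_0^\prime}$, using $\dim V_1^\prime=\dim V_0^\prime$, which holds because $U_0\simeq U_1$ — vanishes after a derivative $\partial_{\zeta_k}$ or $\partial_{\overline{\zeta}_k}$, since $\partial_{\zeta_k}(\cdot)$ no longer contains $\zeta_k$.

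Granting this, the bosonic relation is a three-line computation. Write $\overline{\delta(v_0)\Phi_1}=\overline{D}\,\overline{\Phi}_1$, the antiholomorphic directional derivative of $\overline{\Phi}_1$ along $\overline{v_0}$, and conjugate the first identity in (\ref{eq:def-gamma}) — with $\gamma$ normalized so that $\overline{\gamma}=\gamma$ — to obtain $\overline{D}\gamma=-\mu(cv_0)\gamma$. Expanding $0=\int_{V_0^\prime}\Omega[\overline{D}(\gamma\,\overline{\Phi}_1\Phi_2)]$ by the Leibniz rule, the term differentiating $\Phi_2$ drops out because $\Phi_2$ is holomorphic, and one is left with $\int_{V_0^\prime}\Omega[\gamma\,\overline{\delta(v_0)\Phi_1}\,\Phi_2]=\int_{V_0^\prime}\Omega[\mu(cv_0)\gamma\cdot\overline{\Phi}_1\Phi_2]$; since $\mu(cv_0)$ is multiplication by a holomorphic coordinate form, it commutes with $\gamma$ and $\overline{\Phi}_1$, so the right-hand side equals $\langle\Phi_1,\mu(cv_0)\Phi_2\rangle_{\mathfrak{a}(V)}$, which is what was wanted. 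The fermionic relation runs along identical lines, conjugating instead the second identity in (\ref{eq:def-gamma}), using that the conjugated contraction annihilates the holomorphic factor $\Phi_2$, and calling on the fermionic case of the integration-by-parts lemma. The one place where care is genuinely required — and the main obstacle — is the Koszul-sign bookkeeping in this fermionic step: complex conjugation is an order-reversing anti-homomorphism, $\iota(v_1)$ and $\varepsilon(cv_1)$ are odd, and the Berezin integral carries its own sign conventions, so one must check that the signs conspire to yield exactly $+\,\varepsilon(cv_1)$; the rest is routine.
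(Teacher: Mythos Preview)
Your approach is essentially the same as the paper's: both prove the adjoint relations by applying a first-order (anti-)derivation to the product $\gamma\,\overline{\Phi}_1\Phi_2$, invoking the defining relations (\ref{eq:def-gamma}) for the Gaussian $\gamma$, and using that the Berezin integral of the total derivative vanishes. The only cosmetic difference is that the paper applies the \emph{holomorphic} operators $\delta(v_0)$ and $\iota(v_1)$ directly (using $\delta(v)\overline{\Phi}_1=0$ and the identity (\ref{eq:def-gamma}) as written), whereas you apply their antiholomorphic conjugates (using $\overline{D}\Phi_2=0$ and the conjugated identity, which costs you the harmless extra hypothesis $\overline{\gamma}=\gamma$); the paper also spells out the Koszul signs in the fermionic case that you flag but leave to the reader.
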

\begin{proof}
Let $v \in V_0\,$. Since $\overline{\Phi}_1 \in \mathfrak{a}
(\overline{V})$ is anti-holomorphic, we have $\delta(v)
\overline{\Phi}_1 = 0\,$. By the first defining property of $\gamma$
in (\ref{eq:def-gamma}) and the fact that $\delta(v)$ is a
derivation,
\begin{displaymath}
    \gamma\, \overline{\Phi}_1 \delta(v) \Phi_2 =
    \delta(v) \left( \gamma\, \overline{\Phi}_1 \Phi_2 \right)
    + \mu(\overline{cv}) \gamma\, \overline{\Phi}_1 \Phi_2 \;,
\end{displaymath}
and passing to the Hermitian scalar product by the Berezin integral
we obtain
\begin{displaymath}
    \left\langle \Phi_1 , \delta(v) \Phi_2
    \right\rangle_{\mathfrak{a}(V)} = \int_{V_0}
    \Omega[\gamma\,\overline{\Phi}_1 \overline{\mu(cv)}\Phi_2] =
    \left\langle \mu(cv)\Phi_1,\Phi_2\right\rangle_{\mathfrak{a}(V)}\;.
\end{displaymath}
By the definition of the $\dagger$-operation this means that
$\delta(v)^\dagger = \mu(cv)$.

In the case of $v \in V_1$ the argument is similar but for a few sign
changes. Our starting relation changes to
\begin{displaymath}
    \gamma\, \overline{\Phi}_1 \iota(v) \Phi_2 = (-1)^{|\Phi_1|}
    \iota(v) \left( \gamma\, \overline{\Phi}_1 \Phi_2 \right) +
    (-1)^{|\Phi_1|} \varepsilon(\overline{cv})\gamma\,
    \overline{\Phi}_1 \Phi_2\;,
\end{displaymath}
since the operator $\iota(v)$ is an \emph{anti}-derivation. If $v
\mapsto \tilde{v}$ denotes the isomorphism $V_1 \to V_0\,$, then
$\Omega \circ \iota(v) = \iota(\tilde{v}) \circ \Omega$ and the first
term on the right-hand side Berezin-integrates to zero because
$\iota(\tilde{v})$ lowers the degree in $\wedge T^\ast V_0^\prime\,$.
Therefore,
\begin{displaymath}
    \left\langle \Phi_1 , \iota(v) \Phi_2 \right\rangle_{\mathfrak{a}(V)}
    = \int_{V_0^\prime} \Omega[\gamma\,\overline{\Phi}_1 \overline{
    \varepsilon(cv)} \Phi_2] = \left\langle \varepsilon(cv) \Phi_1 ,
    \Phi_2 \right\rangle_{\mathfrak{a}(V)}\;,
\end{displaymath}
which is the statement $\iota(v)^\dagger = \varepsilon(cv)$.
\end{proof}

By the Hermitian scalar product (\ref{eq:HSP}) and the corresponding
$L^2$-norm, the spinor-oscillator module $\mathfrak{a}(V)$ is
completed to a Hilbert space, $\mathcal{A}_V\,$. A nice feature here
is that, as an immediate consequence of the factors $1 / \sqrt{n_j \,
!}$ in the orthonormal basis (\ref{eq:occ-no}), the $L^2$-condition
$\langle \Phi , \Phi \rangle_{\mathfrak{a}(V)} < \infty$ implies absolute
convergence of the power series for $\Phi \in \mathcal{A}_V\,$. Hence
$\mathcal{A}_V$ can be viewed as a subspace of $\mathcal{O}(V_0\, ,
\wedge V_1^\ast)$:
\begin{displaymath}
    \mathcal{A}_V = \{ \Phi \in \mathcal{O}(V_0 \, , \wedge V_1^\ast)
    \mid \langle \Phi , \Phi \rangle_{\mathfrak{a}(V)} < \infty \} \;.
\end{displaymath}
In the important case of isomorphic components $V_0 \simeq V_1\,$, we
may regard $\mathcal{A}_V$ as the Hilbert space of square-integrable
holomorphic differential forms on $V_0\,$.

Note that although $\delta(v)$ and $\mu(\varphi)$ do not exist as
operators on the Hilbert space $\mathcal{A}_V\,$, they do extend to
linear operators on $\mathcal{O}(V_0 \, , \wedge V_1^\ast)$ for all
$v \in V_0$ and $\varphi \in V_0^\ast\,$.

\subsection{Real structures}\label{sect:2.7}

In this subsection we define a real structure for the complex vector
space $W = V\oplus V^\ast$ and describe, in particular, the resulting
real forms of the ($\mathbb{Z}_2$-even components of the) Howe dual partners introduced above.

Recalling the map $c:\, V \to V^\ast$, $v \mapsto \langle v, \,
\rangle\,$, let $W_\mathbb{R} \simeq V$ be the vector space
\begin{displaymath}
    W_{\mathbb{R}} = \{ v+cv \mid v \in V \} \subset V \oplus V^*
    = W \;.
\end{displaymath}
Note that $W_\mathbb{R}$ can be viewed as the fixed point set
$W_\mathbb{R} = \mathrm{Fix}(C)$ of the involution
\begin{displaymath}
    C : \,\, W \to W \;, \quad
    v + \varphi \mapsto c^{-1}\varphi + cv \;.
\end{displaymath}
By the orthogonality assumption, $W_\mathbb{R} = W_{0,\mathbb{R}}
\oplus W_{1,\mathbb{R}}$ where $W_{s,\mathbb{R}} = W_s \cap
W_\mathbb{R}\,$.

The symmetric bilinear form $S$ on $W_1^{\vphantom{\ast}} =
V_1^{\vphantom{\ast}} \oplus V_1^\ast$ restricts to a Euclidean
structure
\begin{displaymath}
    S\, :\,\, W_{1,\mathbb{R}}\times W_{1,\mathbb{R}}\to \mathbb{R}
    \;, \quad (v + cv \, , v'+cv' ) \mapsto 2\, \mathfrak{Re}\langle
    v,v'\rangle \;,
\end{displaymath}
whereas the alternating form $A$ on $W_0^{\vphantom{\ast}} =
V_0^{\vphantom{\ast}} \oplus V_0^\ast$ induces a real-valued
symplectic form
\begin{displaymath}
    \omega = \mathrm{i}A \, : \,\, W_{0,\mathbb{R}} \times
    W_{0,\mathbb{R}} \to \mathbb{R} \;, \quad (v+cv\,, v'+cv')
    \mapsto 2\, \mathfrak{Im}\langle v,v'\rangle \;.
\end{displaymath}
Please be warned that, since $Q = S + A$ fails to be real-valued on
$W_\mathbb{R}\,$, the intersection $\mathfrak{osp}(W) \cap
\mathrm{End}(W_\mathbb{R})$ is \emph{not} a real form of the complex
Lie superalgebra $\mathfrak{osp}(W)$.

The connected classical real Lie groups associated to the bilinear
forms $S$ and $\omega$ are
\begin{align*}
    \mathrm{SO}(W_{1,\mathbb{R}}) & :=\{ g \in
    \mathrm{SL}(W_{1,\mathbb{R}}) \mid \forall w, w^\prime \in
    W_{1,\mathbb{R}} : \, S(gw,gw') = S(w,w')\} \;, \\
    \mathrm{Sp}(W_{0,\mathbb{R}})& := \{
    g\in\mathrm{GL}(W_{0,\mathbb{R}}) \mid \forall w,w^\prime
    \in W_{0,\mathbb{R}} : \, \omega(gw,gw') = \omega(w,w') \} \;.
\end{align*}
They have Lie algebras denoted by $\mathfrak{o}(W_{1,\mathbb{R}})$
and $\mathfrak{sp} (W_{0,\mathbb{R}})$. By construction we have
$\mathfrak{osp}(W)_0 \cap \mathrm{End}(W_\mathbb{R}) \simeq
\mathfrak{o}(W_{1,\mathbb{R}}) \oplus \mathfrak{sp}
(W_{0,\mathbb{R}})$, and this in fact is a real form of the complex
Lie algebra $\mathfrak{osp}(W)_0 \simeq \mathfrak{o} (W_1) \oplus
\mathfrak{sp}(W_0)$.
\begin{proposition}
The elements of $\mathfrak{o}(W_{1,\mathbb{R}}) \oplus
\mathfrak{sp}(W_{0,\mathbb{R}}) \subset \mathfrak{osp}(W)$ are mapped
via $\tau^{-1}$ and the spinor-oscillator representation to
anti-Hermitian operators in $\mathrm{End}(\mathfrak{a}(V))$.
\end{proposition}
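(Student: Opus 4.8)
The plan is to promote Lemma~\ref{lem:h.c.} to the statement that the spinor-oscillator representation $q$ intertwines Hermitian conjugation on $\mathrm{End}(\mathfrak{a}(V))$ with a conjugate-linear anti-automorphism $\ast$ of the Clifford-Weyl algebra $\mathfrak{q}(W)$, and then to check that $\tau^{-1}$ carries $\mathfrak{o}(W_{1,\mathbb{R}})\oplus\mathfrak{sp}(W_{0,\mathbb{R}})$ into the $(-1)$-eigenspace of $\ast$.

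First I would compute the adjoints of the four elementary operators: Lemma~\ref{lem:h.c.} gives $\delta(v_0)^\dagger=\mu(cv_0)$ and $\iota(v_1)^\dagger=\varepsilon(cv_1)$, and applying $\dagger$ once more (with $c$ bijective) also $\mu(\varphi_0)^\dagger=\delta(c^{-1}\varphi_0)$ and $\varepsilon(\varphi_1)^\dagger=\iota(c^{-1}\varphi_1)$. Summing these and comparing with the definitions of $q$ and of the conjugation $C\colon v+\varphi\mapsto c^{-1}\varphi+cv$ yields
\[
  q(w)^\dagger=q(Cw)\qquad(w\in W).
\]
Next, $w\mapsto Cw$ extends to a conjugate-linear anti-automorphism $\ast$ of $\mathfrak{q}(W)$ — its compatibility with the relations $ww'-(-1)^{|w||w'|}w'w=Q(w,w')$ amounts to $\overline{Q(w,w')}=-(-1)^{|w||w'|}Q(Cw,Cw')$, which follows from the exchange symmetry of $Q$ and the compatibility of $c$ with $\langle\,,\,\rangle$ — and the displayed relation propagates to $q(\alpha)^\dagger=q(\alpha^\ast)$ for all $\alpha\in\mathfrak{q}(W)$, since $\alpha\mapsto q(\alpha)^\dagger$ and $\alpha\mapsto q(\alpha^\ast)$ are both conjugate-linear, anti-multiplicative, and agree on the generators $W\cup\{1\}$. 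The crucial observation is that $\ast$ maps the space $\mathfrak{s}$ of super-symmetrized degree-two elements into itself: it sends the super-symmetrized product of $w,w'\in W$ to that of $Cw',Cw$, so it introduces no scalar ($\mathfrak{q}_0(W)$) term.

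Now let $X\in\mathfrak{o}(W_{1,\mathbb{R}})\oplus\mathfrak{sp}(W_{0,\mathbb{R}})$. This $X$ is even, and, being defined over the real form $W_\mathbb{R}=\mathrm{Fix}(C)$, it commutes with the anti-linear involution $C$, i.e.\ $CXC=X$. Set $a:=\tau^{-1}(X)\in\mathfrak{s}$, an even element; then $\tau(a)=X$ reads $aw-wa=Xw$ in $\mathfrak{q}(W)$ for every $w\in W$. Applying $\ast$ and using $w^\ast=Cw$ together with $(Xw)^\ast=C(Xw)$ gives $a^\ast w'-w'a^\ast=-(CXC)w'=-Xw'$ for all $w'\in W$; hence $a^\ast+a$ commutes with all of $W$, i.e.\ $\tau(a^\ast+a)=0$. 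Since $\tau$ is injective on $\mathfrak{s}$ (Lemma~\ref{lem:tau-iso}) and $a^\ast+a\in\mathfrak{s}$ by the previous paragraph, we get $a^\ast=-a$, and therefore
\[
  (q\circ\tau^{-1})(X)^\dagger=q(a)^\dagger=q(a^\ast)=-q(a)=-(q\circ\tau^{-1})(X),
\]
which is the claimed anti-Hermiticity.

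The main obstacle is the sign- and parity-bookkeeping behind the two displayed identities — in particular checking $q(w)^\dagger=q(Cw)$ and the relation-compatibility of $\ast$ — and, conceptually, making sure that $\ast$ carries $\mathfrak{s}$ into $\mathfrak{s}$ with no central term: this is precisely what rules out a scalar ambiguity $a^\ast+a\in\mathbb{K}\cdot 1$ and pins down $a^\ast=-a$ on the nose. One could instead substitute the explicit formula~(\ref{eq:tau-inv}) for $\tau^{-1}(X)$ into the adjoint computation, but that is more laborious and hides the structural reason.
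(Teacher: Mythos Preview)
Your proof is correct, but it takes a different route from the paper's. The paper argues more directly: it identifies $\tau^{-1}(\mathfrak{o}(W_{1,\mathbb{R}}))$ as the real span of the antisymmetrized products $ww'-w'w$ with $w,w'\in W_{1,\mathbb{R}}$, and $\tau^{-1}(\mathfrak{sp}(W_{0,\mathbb{R}}))$ as the real span of $\mathrm{i}(ww'+w'w)$ with $w,w'\in W_{0,\mathbb{R}}$ (the factor $\mathrm{i}$ arising because the bracket $[w,w']=A(w,w')$ is purely imaginary on $W_{0,\mathbb{R}}$). Since $q(w)$ is self-adjoint for $w\in W_\mathbb{R}$ by Lemma~\ref{lem:h.c.}, the anti-Hermiticity of the images of these bilinears is then a two-line computation.

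Your approach packages the same ingredients more structurally: you promote the pointwise relation $q(w)^\dagger=q(Cw)$ to a global intertwining $q(\alpha)^\dagger=q(\alpha^\ast)$ via a conjugate-linear anti-involution $\ast$ on all of $\mathfrak{q}(W)$, and then deduce $a^\ast=-a$ for $a=\tau^{-1}(X)$ from the reality condition $CXC=X$ and the injectivity of $\tau$ on $\mathfrak{s}$. The paper's argument is shorter and avoids checking the relation-compatibility of $\ast$; yours gives a reusable lemma (the $\ast$--$\dagger$ intertwining) and makes transparent why no central correction can appear, by observing that $\ast$ preserves $\mathfrak{s}$. Both are clean; the paper simply opts for the explicit description of $\tau^{-1}$ on the real form rather than building the abstract $\ast$-structure.
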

\begin{proof}
Let $X \in \mathfrak{o}(W_{1,\mathbb{R}}) \oplus \mathfrak{sp}
(W_{0,\mathbb{R}})$. We know from Lemma \ref{lem:tau-iso} that
$\tau^{-1}(X)$ is a super-symmetrized element of degree two in the
Clifford-Weyl algebra $\mathfrak{q}(W)$. To see the explicit form of
such an element, recall the definition $\tau(a) w = [a , w]\,$. Since
$Q = S + A\,$, and $A$ restricts to $\mathrm{i}\omega\,$, the
fundamental bracket $[\, , \,] : \, W_\mathbb{R} \times W_\mathbb{R}
\to \mathbb{C}$ given by $[w , w^\prime] = Q(w,w^\prime)$ is
real-valued on $W_{1,\mathbb{R}}$ but imaginary-valued on
$W_{0,\mathbb{R}}$. Therefore,
\begin{align*}
    \tau^{-1}(\mathfrak{o}(W_{1,\mathbb{R}})) &=
    \mathrm{span}_\mathbb{R} \{ w w^\prime - w^\prime w \}
    \quad (w, w^\prime \in W_{1,\mathbb{R}})\;, \\ \tau^{-1}
    (\mathfrak{sp}(W_{0,\mathbb{R}})) &= \mathrm{span}_\mathbb{R}
    \{\mathrm{i} w w^\prime + \mathrm{i} w^\prime w \}
    \quad (w, w^\prime \in W_{0,\mathbb{R}}) \;.
\end{align*}
The proposed statement $X^\dagger = -X$ now follows under the
assumption that the spinor-oscillator representation maps every $w
\in W_\mathbb{R}$ to a self-adjoint operator in $\mathrm{End}(
\mathfrak{a}(V))$. But every element $w \in W_\mathbb{R}$ is of the
form $v_1 + c v_1 + v_0 + c v_0$ and this maps to the operator
$\iota(v_1) + \varepsilon(c v_1) + \delta(v_0) + \mu(c v_0)$, which
is self-adjoint by Lemma \ref{lem:h.c.}.
\end{proof}

Given the real structure $W_\mathbb{R}$ of $W$, we now ask how
$\mathrm{End}( W_\mathbb{R})$ intersects with the Howe pairs
$(\mathfrak{osp}(U \oplus U^\ast), \mathfrak{o}_N)$ and
$(\mathfrak{osp}(\widetilde{U} \oplus \widetilde {U}^\ast),
\mathfrak{sp}_N)$ embedded in $\mathfrak{osp}(W)$. By the observation
that $Q$ restricted to $W_\mathbb{R}$ is not real-valued,
$\mathfrak{osp}(U \oplus U^\ast) \cap \mathrm{End}( W_\mathbb{R})$
fails to be a real form of the complex Lie superalgebra
$\mathfrak{osp}(U \oplus U^\ast)$, and the same goes for $\mathfrak{
osp} (\widetilde{U} \oplus \widetilde{U}^\ast)$. Nevertheless, it is
still true that the even components of these intersections are real
forms of the complex Lie algebras $\mathfrak{osp}(U\oplus U^\ast)_0$
and $\mathfrak{osp} (\widetilde{U}\oplus \widetilde{U}^\ast)_0\,$.

The real forms of interest are best understood by expressing them via
blocks with respect to the decomposition $W = V \oplus V^\ast$. Since
$W_\mathbb{R} = \mathrm{Fix}(C)$, the complex linear endomorphisms of
$W$ stabilizing $W_\mathbb{R}$ are given by
\begin{displaymath}
    \mathrm{End}(W_\mathbb{R}) \simeq \{ X \in \mathrm{End}(W)
    \mid X = C X C^{-1} \} \;.
\end{displaymath}
Writing $X$ in block-decomposed form
\begin{displaymath}
    X = \mathsf{A} \oplus \mathsf{B} \oplus \mathsf{C} \oplus
    \mathsf{D} \equiv \begin{pmatrix} \mathsf{A} &\mathsf{B}\\
    \mathsf{C} &\mathsf{D} \end{pmatrix} \;,
\end{displaymath}
where $\mathsf{A} \in \mathrm{End}(V)$, $\mathsf{B} \in
\mathrm{Hom}(V^\ast,V)$, $\mathsf{C} \in \mathrm{Hom}(V,V^\ast)$, and
$\mathsf{D} \in \mathrm{End}(V^\ast)$, the condition $X = C X C^{-1}$
becomes
\begin{displaymath}
    \mathsf{C} = \overline{\mathsf{B}} \;, \quad
    \mathsf{D} = \overline{\mathsf{A}} \;.
\end{displaymath}
The bar here is a short-hand notation for the complex anti-linear
maps
\begin{align*}
    \mathrm{Hom}(V^\ast,V) \to \mathrm{Hom}(V,V^\ast)\;,
    \quad &\mathsf{B} \mapsto \overline{\mathsf{B}} :=c \mathsf{B} c
    \;,\\ \mathrm{End}(V) \to \mathrm{End}(V^\ast)\;, \quad &\mathsf{A}
    \mapsto \overline{\mathsf{A}} := c \mathsf{A} c^{-1} \;.
\end{align*}
When expressed with respect to compatible bases of $V$ and $V^\ast$,
these maps are just the standard operation of taking the complex
conjugate of the matrices of $\mathsf{A}$ and $\mathsf{B}$.

Now, to get an understanding of the intersections $\mathfrak{o}_N
\cap \mathrm{End}(W_\mathbb{R})$ and $\mathfrak{sp}_N \cap
\mathrm{End}(W_\mathbb{R})$, recall the relation $\mathsf{D} = -
\mathsf{A}^\mathrm{t}$ for $X \in \mathfrak{osp}(W)_0$ and the fact
that the action of the complex Lie algebras $\mathfrak{o}_N =
\mathfrak{o}(\mathbb{C}^N)$ and $\mathfrak{sp}_N = \mathfrak{sp}
(\mathbb{C}^N)$ on $W$ stabilizes the decomposition $W = V \oplus
V^\ast$, with the implication that $\mathsf{B} = \mathsf{C} = 0$ in
both cases. Combining $\mathsf{D} = - \mathsf{A}^ \mathrm{t}$ with
$\mathsf{D} = \overline{\mathsf{A}}$ one gets the anti-Hermitian
property $\mathsf{A} = - \overline{\mathsf{A}}^ \mathrm{t}$, which
means that $\mathfrak{o}_N \cap \mathrm{End}(W_\mathbb{R})$ and
$\mathfrak{sp}_N \cap \mathrm{End}(W_\mathbb{R})$ are compact real
forms of $\mathfrak{o}_N$ and $\mathfrak{sp}_N\,$.

Turning to the Howe dual partners of $\mathfrak{o}_N$ and
$\mathfrak{sp}_N\,$, recall from $\S$\ref{sect:howe-pairs} the
isomorphism $\psi : \, \mathbb{C}^N \to (\mathbb{C}^N)^\ast$ and
arrange for it to be an isometry, $\overline{\psi^{-1}} =
\psi^\mathrm{t}$, of the unitary structures of $\mathbb{C}^N$ and
$(\mathbb{C}^N)^\ast$. Recall also the embedding of the two Lie
superalgebras $\mathfrak{osp}(U \oplus U^\ast)$ and $\mathfrak{osp}
(\widetilde{U} \oplus \widetilde{U}^\ast)$ into $\mathfrak{osp}(W) =
\mathfrak{osp}(U \otimes \mathbb{C}^N \oplus U^\ast \otimes
(\mathbb{C}^N)^\ast)$ by
\begin{displaymath}
    \begin{pmatrix} \mathsf{a} &\mathsf{b} \\ \mathsf{c} &\mathsf{d}
    \end{pmatrix} \mapsto \begin{pmatrix} \mathsf{a} \otimes
    \mathrm{Id} &\mathsf{b} \otimes \psi^{-1} \\ \mathsf{c} \otimes
    \psi &\mathsf{d} \otimes \mathrm{Id} \end{pmatrix} =
    \begin{pmatrix} \mathsf{A} &\mathsf{B} \\ \mathsf{C} &\mathsf{D}
    \end{pmatrix} \;.
\end{displaymath}
Here the notation still means the same, i.e., $\mathsf{a} \in
\mathrm{End}(U)$, $\mathsf{b} \in \mathrm{Hom}(U^\ast,U)$, and so on.

Let a real structure $(U \oplus U^\ast)_\mathbb{R}$ of $U \oplus
U^\ast$ be defined in the same way as the real structure
$W_\mathbb{R} = (V \oplus V^\ast)_\mathbb{R}$ of $W = V \oplus
V^\ast$.
\begin{proposition}\label{prop:2.4}
$\mathfrak{osp}(U \oplus U^\ast)_0 \cap \mathrm{End}(W_\mathbb{R})
\simeq \mathfrak{o}( (U_1^{\vphantom{\ast}} \oplus
U_1^\ast)_\mathbb{R} ) \oplus \mathfrak{sp} ((U_0^{\vphantom{\ast}}
\oplus U_0^\ast)_\mathbb{R})$.
\end{proposition}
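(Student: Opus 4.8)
The plan is to reduce the claim to the identity $\mathfrak{osp}(W)_0 \cap \mathrm{End}(W_\mathbb{R}) \simeq \mathfrak{o}(W_{1,\mathbb{R}}) \oplus \mathfrak{sp}(W_{0,\mathbb{R}})$ established above, applied with $U \oplus U^\ast$ playing the role of $W = V \oplus V^\ast$, once it is known that the embedding $\Phi \colon X \mapsto \Psi \circ (X \otimes \mathrm{Id}) \circ \Psi^{-1}$ of $\mathrm{End}(U \oplus U^\ast)$ into $\mathrm{End}(W)$ intertwines the conjugation $C_U \colon u + \varphi \mapsto c_U^{-1}\varphi + c_U u$ of $U \oplus U^\ast$ with the conjugation $C = C_W$ of $W$. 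To see this, I equip $V = U \otimes \mathbb{C}^N$ with the tensor-product unitary structure, so that $c \colon V \to V^\ast$ factors as $c_U \otimes c_{\mathbb{C}^N}$, and I recall the normalization $\overline{\psi^{-1}} = \psi^\mathrm{t}$ which makes $\psi \colon \mathbb{C}^N \to (\mathbb{C}^N)^\ast$ an isometry. A direct computation using $\Psi\big((u+\varphi)\otimes z\big) = u \otimes z + \varphi \otimes \psi(z)$ (Lemma \ref{lem:2.11}) and this isometry property then gives $\Phi(C_U X C_U^{-1}) = C\, \Phi(X)\, C^{-1}$ for all $X$; equivalently, in the block form $\mathsf{A} = \mathsf{a} \otimes \mathrm{Id}$, $\mathsf{B} = \mathsf{b} \otimes \psi^{-1}$, $\mathsf{C} = \mathsf{c} \otimes \psi$, $\mathsf{D} = \mathsf{d} \otimes \mathrm{Id}$, the defining conditions $\mathsf{C} = \overline{\mathsf{B}}$, $\mathsf{D} = \overline{\mathsf{A}}$ of $\mathrm{End}(W_\mathbb{R})$ translate exactly into $\mathsf{c} = \overline{\mathsf{b}}$, $\mathsf{d} = \overline{\mathsf{a}}$, the defining conditions of $\mathrm{End}\big((U \oplus U^\ast)_\mathbb{R}\big)$.

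Since $\Phi$ restricts to a Lie superalgebra embedding $\mathfrak{osp}(U \oplus U^\ast) \hookrightarrow \mathfrak{osp}(W)$ (Corollary \ref{cor:2.2}), is parity-preserving, and (by the first paragraph) intertwines $C_U$ with $C$, it carries $\mathrm{Fix}(X \mapsto C_U X C_U^{-1})$ onto $\Phi(\mathrm{End}(U\oplus U^\ast)) \cap \mathrm{Fix}(X \mapsto C X C^{-1})$, and hence restricts to an isomorphism
\[
  \mathfrak{osp}(U \oplus U^\ast)_0 \cap \mathrm{End}(W_\mathbb{R}) \;\simeq\; \mathfrak{osp}(U \oplus U^\ast)_0 \cap \mathrm{End}\big((U \oplus U^\ast)_\mathbb{R}\big) \,,
\]
the left side understood inside $\mathrm{End}(W)$ and the right side inside $\mathrm{End}(U \oplus U^\ast)$. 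It then remains to identify the right-hand side, which is done by repeating for $U \oplus U^\ast$ the argument already given for $W$: by Lemma \ref{lem:iso-osp1}, $\mathfrak{osp}(U \oplus U^\ast)_0 \simeq \mathfrak{o}(U_1 \oplus U_1^\ast) \oplus \mathfrak{sp}(U_0 \oplus U_0^\ast)$; since $C_U$ preserves the $\mathbb{Z}_2$-grading one has $(U \oplus U^\ast)_\mathbb{R} = (U_0 \oplus U_0^\ast)_\mathbb{R} \oplus (U_1 \oplus U_1^\ast)_\mathbb{R}$ compatibly with this splitting; and the symmetric form $S$ restricts to a Euclidean structure on $(U_1 \oplus U_1^\ast)_\mathbb{R}$ while $\omega = \mathrm{i}A$ restricts to a real symplectic form on $(U_0 \oplus U_0^\ast)_\mathbb{R}$, so that the two summands of the intersection become $\mathfrak{o}\big((U_1 \oplus U_1^\ast)_\mathbb{R}\big)$ and $\mathfrak{sp}\big((U_0 \oplus U_0^\ast)_\mathbb{R}\big)$, respectively, exactly as in the identity $\mathfrak{osp}(W)_0 \cap \mathrm{End}(W_\mathbb{R}) \simeq \mathfrak{o}(W_{1,\mathbb{R}}) \oplus \mathfrak{sp}(W_{0,\mathbb{R}})$. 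Adding the two pieces yields the assertion (which is, moreover, a real form of $\mathfrak{osp}(U \oplus U^\ast)_0$).

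The only step that is more than bookkeeping is the real-structure compatibility of $\Phi$ in the first paragraph; within it the delicate point is tracking the complex-conjugation maps through the tensor factors — in particular verifying $\overline{\mathsf{b} \otimes \psi^{-1}} = \overline{\mathsf{b}} \otimes \psi$ for $\mathsf{b} \in \mathrm{Hom}(U^\ast, U)$, which is precisely where the normalization $\overline{\psi^{-1}} = \psi^\mathrm{t}$ is used. Once this is in place the remaining steps are formal, and the entirely parallel statement for the other Howe partner $\mathfrak{osp}(\widetilde{U} \oplus \widetilde{U}^\ast)$ follows by the same argument after passing to the twisted $\mathbb{Z}_2$-grading, as in Corollary \ref{cor:2.2}.
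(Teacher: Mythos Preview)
Your approach is essentially the paper's: transfer the reality conditions $\mathsf{D}=\overline{\mathsf{A}}$, $\mathsf{C}=\overline{\mathsf{B}}$ through the embedding $\Phi$ to the conditions $\mathsf{d}=\overline{\mathsf{a}}$, $\mathsf{c}=\overline{\mathsf{b}}$ on $U\oplus U^\ast$, and then invoke the computation already done for $W$.

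One imprecision to flag: at the key step you write that $\overline{\mathsf{b}\otimes\psi^{-1}} = \overline{\mathsf{b}}\otimes\psi$ follows from the isometry normalization $\overline{\psi^{-1}}=\psi^{\mathrm t}$. That alone gives only $\overline{\mathsf{b}\otimes\psi^{-1}} = \overline{\mathsf{b}}\otimes\psi^{\mathrm t}$; to reach $\overline{\mathsf{b}}\otimes\psi$ you also need $\psi^{\mathrm t}=+\psi$, i.e.\ the \emph{symmetry} of $\psi$ specific to $K=\mathrm{O}_N$. The paper makes this explicit. This same sign is what flips in the $\mathrm{USp}_N$ case ($\psi^{\mathrm t}=-\psi$, giving $\mathsf{c}=-\overline{\mathsf{b}}$) and is precisely why the parallel statement there produces $\mathfrak{usp}\oplus\mathfrak{so}^\ast$ rather than $\mathfrak{o}\oplus\mathfrak{sp}$; so your closing remark that the other case ``follows by the same argument after passing to the twisted grading'' undersells the actual distinction.
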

\begin{proof}
The intersection is computed by transferring the conditions
$\mathsf{D} = \overline{\mathsf{A}}$ and $\mathsf{C} = \overline{
\mathsf{B}}$ to the level of $\mathfrak{osp}(U \oplus U^\ast)_0\,$.
Of course $\mathsf{D} = \overline{\mathsf{A}}$ just reduces to the
corresponding condition $\mathsf{d} = \overline{\mathsf{a}}\,$.
Because the isometry $\psi : \, \mathbb{C}^N \to (\mathbb{C}^N)^\ast$
in the present case is symmetric one has $\overline{\psi^{-1}} =
\psi^\mathrm{t} = + \psi$, so the condition $\mathsf{C} =
\overline{\mathsf{B}}$ transfers to $\mathsf{c} =
\overline{\mathsf{b}}\,$. For the same reason, the parity of the maps
$\mathsf{b}, \mathsf{c}$ is identical to that of $\mathsf{B},
\mathsf{C}$, i.e., $\mathsf{b} \vert_{U_0^\ast \to
U_0^{\vphantom{\ast}}}$ is symmetric, $\mathsf{b} \vert_{U_1^\ast \to
U_1^{\vphantom{\ast}}}$ is skew, and similar for $\mathsf{c}\,$.
Hence, computing the intersection $\mathfrak{osp}(U \oplus U^\ast)_0
\cap \mathrm{End}(W_\mathbb{R})$ amounts to the same as computing
$\mathfrak{osp}(V \oplus V^\ast)_0 \cap \mathrm{End}(W_\mathbb{R})$,
and the statement follows from our previous discussion of the latter
case.
\end{proof}
In the case of the Howe pair $(\mathfrak{osp}(\widetilde{U} \oplus
\widetilde{U}^\ast), \mathfrak{sp}_N)$ the isometry $\psi : \,
\mathbb{C}^N \to (\mathbb{C}^N)^\ast$ is skew, so that $\overline{
\psi^{-1}} = \psi^\mathrm{t} = - \psi\,$. At the same time, the
parity of $\mathsf{b}, \mathsf{c}$ is reversed as compared to
$\mathsf{B}, \mathsf{C}$: now the map $\mathsf{b} \vert_{U_0^\ast \to
U_0^{\vphantom{\ast}}}$ is skew and $\mathsf{b} \vert_{U_1^\ast \to
U_1^{\vphantom{\ast}}}$ is symmetric (and similar for $\mathsf{c})$.
Therefore,
\begin{displaymath}
    \mathfrak{osp}(\widetilde{U}\oplus\widetilde{U}^\ast)_0\cap
    \mathrm{End}(W_{1,\mathbb{R}})\simeq \{ \begin{pmatrix}
    \mathsf{a} &\mathsf{b}\\ - \overline{\mathsf{b}}
    &-\mathsf{a}^\mathrm{t} \end{pmatrix} \in \mathrm{End}
    (U_1^{\vphantom{\ast}} \oplus U_1^\ast) \mid \mathsf{a} = -
    \overline{\mathsf{a}}^\mathrm{t} \;, \, \mathsf{b} = +
    \mathsf{b}^\mathrm{t} \} \;,
\end{displaymath}
which is a compact real form $\mathfrak{usp}(U_1^{\vphantom{\ast}}
\oplus U_1^\ast)$ of $\mathfrak{sp}(U_1 \oplus U_1^\ast)$; and
\begin{displaymath}
    \mathfrak{osp}(\widetilde{U} \oplus \widetilde{U}^\ast)_0 \cap
    \mathrm{End}(W_{0,\mathbb{R}}) \simeq \{ \begin{pmatrix} \mathsf{a}
    &\mathsf{b}\\ - \overline{\mathsf{b}} &-\mathsf{a}^\mathrm{t}
    \end{pmatrix} \in \mathrm{End}(U_0^{\vphantom{\ast}} \oplus U_0^\ast)
    \mid \mathsf{a} = -\overline{\mathsf{a}}^\mathrm{t} \;, \, \mathsf{b}
    = -\mathsf{b}^\mathrm{t} \} \;,
\end{displaymath}
which is a non-compact real form of $\mathfrak{o}
(U_0^{\vphantom{\ast}} \oplus U_0^\ast)$ known as
$\mathfrak{so}^\ast(U_0^{\vphantom{\ast}} \oplus U_0^\ast)$.

Let us summarize this result.
\begin{proposition}\label{prop:2.5}
$\mathfrak{osp}(\widetilde{U} \oplus \widetilde{U}^\ast)_0 \cap
\mathrm{End}(W_\mathbb{R}) \simeq \mathfrak{usp}
(U_1^{\vphantom{\ast}} \oplus U_1^\ast) \oplus
\mathfrak{so}^\ast(U_0^{\vphantom{\ast}} \oplus U_0^\ast)$.
\end{proposition}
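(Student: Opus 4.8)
The plan is to assemble the statement from the block-matrix description built up in the paragraphs immediately preceding it, organizing the argument by the $\mathbb{Z}_2$-grading $W = W_0 \oplus W_1$.

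First I would recall that, by Lemma~\ref{lem:iso-osp1} applied to $\widetilde{U}\oplus\widetilde{U}^\ast$ with its twisted grading, the even part decomposes as $\mathfrak{osp}(\widetilde{U}\oplus\widetilde{U}^\ast)_0 \simeq \mathfrak{sp}(\widetilde{U}_0 \oplus \widetilde{U}_0^\ast) \oplus \mathfrak{o}(\widetilde{U}_1 \oplus \widetilde{U}_1^\ast)$; since $\widetilde{U}_0 = U_1$ and $\widetilde{U}_1 = U_0$ this reads $\mathfrak{sp}(U_1 \oplus U_1^\ast) \oplus \mathfrak{o}(U_0 \oplus U_0^\ast)$, and under the embedding $X \mapsto \Psi \circ (X \otimes \mathrm{Id}) \circ \Psi^{-1}$ of Corollary~\ref{cor:2.2} the first summand acts on $W_1$ and the second on $W_0$. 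Since every operator in $\mathfrak{osp}(W)_0$ preserves $W = W_0 \oplus W_1$ and the real structure is compatible with the grading, $W_\mathbb{R} = W_{0,\mathbb{R}} \oplus W_{1,\mathbb{R}}$, intersecting with $\mathrm{End}(W_\mathbb{R})$ respects this direct sum:
\begin{displaymath}
\mathfrak{osp}(\widetilde{U}\oplus\widetilde{U}^\ast)_0 \cap \mathrm{End}(W_\mathbb{R}) = \bigl(\mathfrak{sp}(U_1\oplus U_1^\ast)\cap\mathrm{End}(W_{1,\mathbb{R}})\bigr) \oplus \bigl(\mathfrak{o}(U_0\oplus U_0^\ast)\cap\mathrm{End}(W_{0,\mathbb{R}})\bigr).
\end{displaymath}

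It then remains to identify the two summands, which is exactly what the two displayed computations preceding the statement accomplish: one transfers the reality conditions $\mathsf{D} = \overline{\mathsf{A}}$, $\mathsf{C} = \overline{\mathsf{B}}$ through the embedding $\begin{pmatrix}\mathsf{a}&\mathsf{b}\\\mathsf{c}&\mathsf{d}\end{pmatrix} \mapsto \begin{pmatrix}\mathsf{a}\otimes\mathrm{Id}&\mathsf{b}\otimes\psi^{-1}\\\mathsf{c}\otimes\psi&\mathsf{d}\otimes\mathrm{Id}\end{pmatrix}$, just as in the proof of Proposition~\ref{prop:2.4}, but now using that for $K = \mathrm{Sp}_N$ the isometry $\psi$ is \emph{skew}, so that $\overline{\psi^{-1}} = \psi^\mathrm{t} = -\psi$ and the parities of $\mathsf{b}, \mathsf{c}$ come out reversed relative to those of $\mathsf{B}, \mathsf{C}$. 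This presents the $W_{1,\mathbb{R}}$-summand as the matrices $\begin{pmatrix}\mathsf{a}&\mathsf{b}\\-\overline{\mathsf{b}}&-\mathsf{a}^\mathrm{t}\end{pmatrix}$ on $U_1 \oplus U_1^\ast$ with $\mathsf{a} = -\overline{\mathsf{a}}^\mathrm{t}$ and $\mathsf{b} = +\mathsf{b}^\mathrm{t}$, i.e.\ the compact form $\mathfrak{usp}(U_1 \oplus U_1^\ast)$, and the $W_{0,\mathbb{R}}$-summand as the matrices of the same shape on $U_0 \oplus U_0^\ast$ with $\mathsf{b} = -\mathsf{b}^\mathrm{t}$, i.e.\ the non-compact form $\mathfrak{so}^\ast(U_0 \oplus U_0^\ast)$. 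Substituting these into the displayed direct sum finishes the proof.

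The step I would be most careful about is the sign bookkeeping in the preceding paragraph: the twisting of the $\mathbb{Z}_2$-grading on $\widetilde{U}$ together with the skew-symmetry of $\psi$ must combine to flip exactly one parity, so that---opposite to the $\mathrm{O}_N$ case of Proposition~\ref{prop:2.4}---the symplectic block acquires a \emph{symmetric} off-diagonal part ($\mathsf{b} = +\mathsf{b}^\mathrm{t}$) and becomes the compact $\mathfrak{usp}$, while the orthogonal block acquires a \emph{skew} off-diagonal part ($\mathsf{b} = -\mathsf{b}^\mathrm{t}$) and becomes the non-compact $\mathfrak{so}^\ast$. Everything else is routine substitution, plus the dimension remark $\dim U_0 = \dim U_1 = n$ identifying these with $\mathfrak{usp}_{2n}$ and $\mathfrak{so}^\ast_{2n}$.
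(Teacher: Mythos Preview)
Your proposal is correct and follows exactly the approach of the paper: the proposition is stated there as a summary of the preceding discussion, which carries out precisely the block-matrix analysis you describe (transferring the reality conditions through the embedding, using the skewness of $\psi$ for $\mathrm{Sp}_N$ to flip the parities of $\mathsf{b},\mathsf{c}$, and reading off the two displayed real forms). Your organization of that discussion into a formal proof, including the explicit use of Lemma~\ref{lem:iso-osp1} with the twisted grading, is a faithful and slightly more structured rendering of the same argument.
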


\section{Semigroup representation}\label{integrating}
\setcounter{equation}{0}

As before, we identify the complex Lie superalgebra $\mathfrak{g} :=
\mathfrak{osp}(W)$ with the space of super-symmetrized degree-two
elements in $\mathfrak{q}_2(W)$, so that
\begin{displaymath}
    \mathfrak{q}_2(W) = \mathfrak{g} \oplus \mathfrak{q}_1(W) \;,
    \quad \mathfrak{q}_1(W) = W \oplus \mathbb{C}\;.
\end{displaymath}
The adjoint representation of $\mathfrak{g}$ on $\mathfrak{q}(W)$
restricts to the Lie algebra representation of $\mathfrak{g}_0 =
\mathfrak{o}(W_1) \oplus \mathfrak{sp}(W_0)$ on $W = W_1\oplus W_0$
which is just the direct sum of the fundamental representations of
$\mathfrak{o}(W_1)$ and $\mathfrak{sp}(W_0)$. These are integrated by
the fundamental representations of the complex Lie groups
$\mathrm{SO}(W_1)$ and $\mathrm{Sp}(W_0)$ respectively.

Since the Clifford-Weyl algebra $\mathfrak{q}(W)$ is an associative
algebra, one can ask if, given $x \in \mathfrak{g}_0 \subset
\mathfrak{q}(W)$, the exponential series $\mathrm{e}^x$ makes sense.
The existence of a one-parameter group $\mathrm{e}^{tx}$ for $x \in
\mathfrak{g}_0$ would of course imply that
\begin{equation}\label{conjugation}
    \frac{d}{dt} \, \mathrm{e}^{\, tx} \, w \, \mathrm{e}^{-tx}
    \, \Big\vert_{t=0} = \mathrm{ad}(x)\, w \qquad (w \in W)\,.
\end{equation}

Now $\mathfrak{q}(W) = \mathfrak{c}(W_1) \otimes \mathfrak{w}(W_0)$.
Since the Clifford algebra $\mathfrak{c}(W_1)$ is finite-dimensional,
the series $\mathrm{e}^x$ for $x \in \mathfrak{o}(W_1)
\hookrightarrow \mathfrak{g}_0$ does make immediate sense. In this
way one is able to exponentiate the Lie algebra $\mathfrak{o}(W_1)$
in $\mathfrak{c}(W_1)$. The associated complex Lie group, which is
then embedded in $\mathfrak{c}(W_1)$, is $\mathrm{Spin}(W_1)$. This
is a 2:1 cover of the complex orthogonal group $\mathrm{SO}(W_1)$.
Its conjugation representation on $W_1$ as in (\ref{conjugation})
realizes the covering map as a homomorphism $\mathrm{Spin}(W_1) \to
\mathrm{SO}(W_1)$.

Viewing the other summand $\mathfrak{sp}(W_0)$ of $\mathfrak{g}_0$ as
being in the infinite-dimensional Weyl algebra $\mathfrak{w}(W_0)$,
it is definitely not possibly to exponentiate it in such a naive way.
This is in particular due to the fact that for most $x \in \mathfrak
{sp}(W_0)$ the formal series $\mathrm{e}^x$ is not contained in any
space $\mathfrak{w}_n (W_0)$ of the filtration of $\mathfrak{w}
(W_0)$.

As a first step toward remedying this situation, we consider
$\mathfrak{q}(W)$ as a space of densely defined operators on the
completion $\mathcal{A}_V$ (cf.\ $\S$\ref{sect:2.6.3}) of the
spinor-oscillator module $\mathfrak{a}(V)$. Since all difficulties
are on the $W_0$ side, for the remainder of this chapter we simplify
the notation by letting $W := W_0$ and discussing only the oscillator
representation of $\mathfrak{w}(W)$. Recall that this representation
on $\mathfrak{a}(V)$ is defined by multiplication $\mu( \varphi)$ for
$\varphi \in V^*$ and the directional derivative $\delta(v)$ for $v
\in V$.

For $x \in \mathfrak{w}(W)$ there is at least no formal obstruction
to the exponential series of $x$ existing in $\mathrm{End}(\mathcal
A_V)$. However, direct inspection shows that convergence cannot be
expected unless some restrictions are imposed on $x\,$. This is done
by introducing a notion of unitarity and an associated semigroup of
contraction operators.

\subsection {The oscillator semigroup}

Here we introduce the basic semigroup in the complex symplectic
group. Various structures are lifted to its canonical 2:1 covering.
Actions of the real symplectic and metaplectic groups are discussed
along with the role played by the cone of elliptic elements.

\subsubsection{Contraction semigroup: definitions, basic properties}

Letting $\langle \, , \, \rangle$ be the unitary structure on $V$
which was fixed in the previous chapter, we recall the complex
anti-linear bijection $c : \, V \to V^\ast$, $v \mapsto \langle v \,
, \, \rangle\,$. There is an induced map $C : \, W \to W$ on $W = V
\oplus V^\ast$ by $C(v + \varphi) = c^{-1}\varphi + cv\,$. As before,
we put $W_\mathbb{R} := \mathrm{Fix}(C) \subset W$.

Since we have restricted our attention to the symplectic side, the
vector spaces $W$ and $W_\mathbb{R}$ are now equipped with the
standard complex symplectic structure $A$ and real symplectic form
$\omega = \mathrm{i}A$ respectively. From here on in this chapter we
abbreviate the notation by writing $\mathrm{Sp} := \mathrm{Sp}(W)$
and $\mathfrak{sp} := \mathfrak{sp} (W)$. Let an anti-unitary
involution $\sigma : \, \mathrm{Sp} \to \mathrm{Sp}$ be defined by $g
\mapsto C g \, C^{-1}$. Its fixed point group $\mathrm{Fix} (\sigma)$
is the real form $\mathrm{Sp} (W_\mathbb{R}$) of main interest. We
here denote it by $\mathrm{Sp}_\mathbb{R}$ and let
$\mathfrak{sp}_\mathbb{R}$ stand for its Lie algebra.

Given $A$ and $C$, consider the mixed-signature Hermitian structure
\begin{displaymath}
    W \times W \to \mathbb{C} \;, \quad (w\, ,w^\prime) \mapsto
    A(Cw\, ,w^\prime) \;,
\end{displaymath}
which we denote by $A(Cw\, ,w^\prime) =: \langle w \, , w^\prime
\rangle_s\,$, with subscript $s$ to distinguish it from the canonical
Hermitian structure of $W$ given by $\langle v + \varphi , v^\prime +
\varphi^\prime \rangle := \langle v \, , v^\prime \rangle + \langle
c^{-1} \varphi^\prime , c^{-1} \varphi \rangle$. The relation between
the two is
\begin{displaymath}
    \langle w \, , w^\prime \rangle_s = \langle w \, , s w^\prime
    \rangle \;, \qquad s = (- \mathrm{Id}_V) \oplus \mathrm{Id}_{V^*} \;.
\end{displaymath}
Note also the relation
\begin{displaymath}
    \sigma(g) = C g\, C^{-1} = s\, (g^{-1})^\dagger s
    \qquad (g \in \mathrm{Sp}) \;.
\end{displaymath}

Now observe that the real form $\mathrm{Sp}_\mathbb{R}$ is the
subgroup of $\langle \, , \, \rangle_s$-isometries in $\mathrm{Sp}:$
\begin{displaymath}
    \mathrm{Sp}_\mathbb{R} = \{g \in \mathrm{Sp} \mid \forall w \in W :
    \; \langle g w , g w \rangle_s = \langle w , w \rangle_s \} \;.
\end{displaymath}
Then define a semigroup $\mathrm{H}(W^s)$ in $\mathrm{Sp}$ by
\begin{displaymath}
    \mathrm{H}(W^s) := \{ h \in \mathrm{Sp} \mid \forall w \in W ,\;
    w \not= 0 : \;
    \langle h w , h w \rangle_s < \langle w , w \rangle_s \} \,.
\end{displaymath}
Note that the operation $g \mapsto g^\dagger$ of Hermitian
conjugation with respect to $\langle \, ,\, \rangle $ stabilizes
$\mathrm{Sp}$ and that $\mathrm{Sp}_\mathbb{R}$ is defined by the
condition $g^\dagger s g = s\,$. The semigroup $\mathrm{H}(W^s)$ is
defined by $h^\dagger s h < s\,$, or equivalently, $s - h^\dagger s
h$ is positive definite. We will see later that $\mathrm{H}(W^s)$
(or, rather, a $2:1$ cover thereof) acts by contraction operators on
the Hilbert space $\mathcal{A}_V$.

It is immediate that $\mathrm{H}(W^s)$ is an open semigroup in
$\mathrm{Sp}$ with $\mathrm{Sp}_\mathbb{R}$ on its boundary.
Furthermore, $\mathrm{H}(W^s)$ is stabilized by the action of
$\mathrm{Sp}_\mathbb{R} \times \mathrm{Sp}_\mathbb{R}$ by $h\mapsto
g_1 h \, g_2^{-1}$.

The map $\pi :\, \mathrm{Sp} \to \mathrm{Sp}\,$, $h \mapsto h
\sigma(h^{-1})$, will play an important role in our considerations.
It is invariant under the $\mathrm{Sp}_\mathbb{R}$-action by right
multiplication, $\pi(hg^{-1}) = \pi(h)$, and is equivariant with
respect to the action defined by left multiplication on its domain of
definition and conjugation on its image space, $\pi(gh) = g \pi(h)
g^{-1}$. Direct calculation shows that in fact the $\pi$-fibers are
exactly the orbits of the $\mathrm{Sp}_\mathbb{R}$-action by right
multiplication. Observe that if $h = \exp(\mathrm{i}X)$ for $X \in
\mathfrak{sp}_\mathbb{R}\,$, then $\sigma(h) = h^{-1}$ and $\pi(h) =
h^2$. In particular, if $\mathfrak{t}$ is a Cartan subalgebra of
$\mathfrak{sp}$ which is defined over $\mathbb{R}\,$, then $\pi
\vert_{\exp(\mathrm{i} \mathfrak{t}_\mathbb{R})}$ is just the
squaring map $t \mapsto t^2$.

\subsubsection{Actions of $\mathrm{Sp}_\mathbb{R}\,$}

We now fix a Cartan subalgebra $\mathfrak{t}$ having the property
that $T_\mathbb{R} = \exp (\mathfrak{t}_\mathbb{R})$ is contained in
the unitary maximal compact subgroup defined by $\langle \, ,\,
\rangle$ of $\mathrm{Sp} (W_\mathbb{R})$. This means that $T$ acts
diagonally on the decomposition $W = V \oplus V^*$ and there is a
(unique up to order) orthogonal decomposition $V = E_1 \oplus \ldots
\oplus E_d$ into one-dimensional subspaces so that if $F_j :=
c(E_j)$, then $T$ acts via characters $\chi_j$ on the vector spaces
$P_j := E_j \oplus F_j$ by $t(e_j\, ,f_j) = (\chi_j(t) e_j \, ,
\chi_j(t)^{-1} f_j)$.

In other words, we may choose $\{ e_j \}_{j = 1, \ldots, d}$ to be an
orthonormal basis of $V$ and equip $V^*$ with the dual basis so that
the elements $t \in T$ are in diagonal form:
\begin{displaymath}
    t = \mathrm{diag}(\lambda_1^{\vphantom{-1}},\ldots,\lambda_d^{
    \vphantom{-1}}\, ,\lambda_1^{-1}, \ldots , \lambda_d^{-1})\;, \
    \qquad \lambda_j = \chi_j(t) \;.
\end{displaymath}

Observe that, conversely, the elements of $\mathrm{Sp}$ that
stabilize the decomposition $W = P_1 \oplus \ldots \oplus P_d$ and
act diagonally in the above sense, are exactly the elements of $T$.
Moreover, $\exp (\mathrm{i} \mathfrak{t}_\mathbb{R})$ is the subgroup
of elements $t \in T$ with $\chi_j(t) \in \mathbb{R}_+$ for all
$j\,$. Note that the complex symplectic planes $P_j$ are
$A$-orthogonal and defined over $\mathbb{R}$.

We now wish to analyze $\mathrm{H}(W^s)$ via the map $\pi :\, h
\mapsto h \sigma(h^{-1})$. However, for a technical reason related to
the proof of Proposition \ref{M-slice} below, we must begin with the
opposite map, $\pi^\prime : \, h \mapsto \sigma(h^{-1}) h\,$. Thus
let $M := \pi^\prime (\mathrm{H}(W^s))$ and write $\pi^\prime:\,
\mathrm{H} (W^s)\to M\,$. The toral semigroup $T_+ := \exp(\mathrm{i}
\mathfrak{t}_\mathbb{R})\,\cap \, \mathrm{H}(W^s)$ consists of those
elements $t \in \exp( \mathrm{i} \mathfrak{t}_\mathbb{R})$ that act
as contractions on $V^\ast$, i.e., $0 < \chi_j(t)^{-1} < 1$ for all
$j\,$. The restriction $\pi^\prime \vert_{T_+} = \pi \vert_{T_+}$ is,
as indicated above, the squaring map $t \mapsto t^2$; in particular
we have $T_+ \subset M$ and the set $\{ g\, t g^{-1} \mid t \in T_+
\, , \, g \in \mathrm{Sp}_\mathbb{R} \}$ is likewise contained in
$M\,$.

In the sequel, we will often encounter the action of $\mathrm{Sp}_
\mathbb{R}$ on $T_+$ and $M$ by conjugation. We therefore denote this
action by a special name, $\mathrm{Int}(g)\, t := g\, t g^{-1}$.
\begin{proposition}\label{M-slice}
$M = \mathrm{Int}(\mathrm{Sp}_\mathbb{R}) T_+ \,$.
\end{proposition}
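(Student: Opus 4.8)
The plan is to establish the two inclusions $\mathrm{Int}(\mathrm{Sp}_\mathbb{R}) T_+ \subseteq M$ and $M \subseteq \mathrm{Int}(\mathrm{Sp}_\mathbb{R}) T_+$ separately. The first is essentially already done in the text preceding the statement: for $t \in T_+$ one has $\pi'(t) = \pi(t) = t^2 \in M$ because $t = \exp(\mathrm{i}X)$ with $X \in \mathfrak{t}_\mathbb{R}$; moreover $M = \pi'(\mathrm{H}(W^s))$ is invariant under the conjugation action $\mathrm{Int}(\mathrm{Sp}_\mathbb{R})$ since $\mathrm{H}(W^s)$ is stabilized by left and right multiplication by $\mathrm{Sp}_\mathbb{R}$ and $\pi'$ intertwines $\mathrm{Int}(g)$ on source and target (from $\pi'(g h g^{-1}) = \sigma(g h^{-1} g^{-1}) g h g^{-1} = g\, \sigma(h^{-1}) h\, g^{-1}$, using $\sigma(g) = g$ for $g \in \mathrm{Sp}_\mathbb{R}$). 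Hence $\mathrm{Int}(g) t^2 \in M$ for all $g \in \mathrm{Sp}_\mathbb{R}$ and $t \in T_+$, which gives the easy inclusion once one notes $T_+ \ni t \mapsto t^2$ is onto $T_+$ (take square roots of the positive reals $\chi_j(t)$).

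The substance is the reverse inclusion. Take $h \in \mathrm{H}(W^s)$ and set $m = \pi'(h) = \sigma(h^{-1}) h$. Using $\sigma(g) = s (g^{-1})^\dagger s$ one computes $m = s h^\dagger s\, h$, i.e.\ $s m = h^\dagger (s h)$; so $s m$ is the product of $h^\dagger$ with $s h$. I want to diagonalize $m$ by an $\mathrm{Sp}_\mathbb{R}$-conjugation and show the eigenvalues land in the range cut out by $T_+$. The natural route is: first check $m$ lies in $\mathrm{Sp}$ and is ``self-adjoint-like'' for the indefinite form, namely $m^\dagger s = s m^{-1}$... more usefully, consider $s m = h^\dagger s h$. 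Since $s = (-\mathrm{Id}_V) \oplus \mathrm{Id}_{V^*}$ and $h \in \mathrm{H}(W^s)$ means $s - h^\dagger s h$ is positive definite, the Hermitian form $w \mapsto \langle w, (s - sm) w\rangle = \langle w, s w\rangle_{?}$... I would recast the contraction condition as: the Hermitian operator $s m$ satisfies $s m < s$ in the sense of the $\langle\,,\,\rangle_s$-geometry. Concretely, $K := s - h^\dagger s h$ is Hermitian positive definite, and $h^\dagger s h = s - K = s(\mathrm{Id} - s^{-1}K) = s(\mathrm{Id} - sK)$ since $s^2 = \mathrm{Id}$; thus $m = s h^\dagger s h = \mathrm{Id} - sK$.

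The key step, and the main obstacle, is the simultaneous normal form: I must produce $g \in \mathrm{Sp}_\mathbb{R}$ with $\mathrm{Int}(g^{-1}) m \in T_+$, i.e.\ $g^{-1} m g$ diagonal of the required shape. Because $m \in \mathrm{Sp}(W)$ and $\sigma(m) = \sigma(\sigma(h^{-1}) h) = h^{-1}\sigma(h) = m^{-1}$ (the defining symmetry of the image of $\pi'$), $m$ is an element of $\mathrm{Sp}$ fixed-under $\sigma$ up to inversion; together with positivity of $s - h^\dagger s h$ one shows $m$ is conjugate under $\mathrm{Sp}_\mathbb{R}$ into $\exp(\mathrm{i}\mathfrak{t}_\mathbb{R})$ — this is the symplectic analogue of the polar/spectral decomposition for the pseudo-unitary group $\mathrm{U}(s)$, exploiting that $\mathrm{Sp}_\mathbb{R} = \mathrm{Sp} \cap \mathrm{U}(s)$ and that $\mathrm{Int}(\mathrm{Sp}_\mathbb{R})$-orbits in the relevant set of ``$\sigma$-anti-fixed, $s$-positive'' elements are classified by the ordered eigenvalues. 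I would carry this out by: (i) observing $m$ preserves the $A$-form and is diagonalizable with eigenvalues coming in pairs $\lambda, \lambda^{-1}$; (ii) using the contraction inequality $h^\dagger s h < s$ to force, on the $V^*$-type eigendirections, $0 < \lambda^{-1} < 1$, i.e.\ each eigenvalue pair is $\{\lambda, \lambda^{-1}\}$ with $\lambda > 1$ real positive — this is exactly the condition defining $T_+$ after squaring; (iii) invoking that a maximal $\langle\,,\,\rangle_s$-positive, $A$-isotropic-compatible splitting can be moved to the standard $P_1 \oplus \cdots \oplus P_d$ by an element of $\mathrm{Sp}_\mathbb{R}$ (Witt-type / Iwasawa-type argument for the real symplectic group). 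The delicate point is ensuring the diagonalizing transformation can be taken in the real form $\mathrm{Sp}_\mathbb{R}$ rather than merely in $\mathrm{Sp}$; this is where positive-definiteness of $s - h^\dagger s h$ is used in an essential way, guaranteeing the eigenspace decomposition of $m$ is orthogonal for a genuine (positive) Hermitian inner product built from $s$ and $K$, so that the change of basis is $\langle\,,\,\rangle_s$-unitary and symplectic simultaneously. Once $m$ is shown to be $\mathrm{Int}(\mathrm{Sp}_\mathbb{R})$-conjugate to some $t_0 \in \exp(\mathrm{i}\mathfrak{t}_\mathbb{R})$ with all $\chi_j(t_0)^{-1} \in (0,1)$, we have $t_0 \in T_+$ and hence $m \in \mathrm{Int}(\mathrm{Sp}_\mathbb{R}) T_+$, completing the proof.
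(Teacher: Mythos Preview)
Your overall strategy matches the paper's: the easy inclusion is exactly as you say, and for the hard inclusion you correctly compute $m = s h^\dagger s h$, note $\sigma(m) = m^{-1}$, and aim to conjugate $m$ into $T_+$ by an element of $\mathrm{Sp}_\mathbb{R}$. The eigenvalue analysis (real, positive, paired as $\lambda, \lambda^{-1}$ with $\lambda > 1$) is also right.

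The gap is in your step (iii). You assert that the eigenspace decomposition of $m$ is orthogonal for ``a genuine (positive) Hermitian inner product built from $s$ and $K$,'' and that this forces the conjugating element into $\mathrm{Sp}_\mathbb{R}$. But you never say which positive form this is, nor why orthogonality for it would yield an element that is \emph{simultaneously} symplectic and $\langle\,,\,\rangle_s$-unitary. Membership in $\mathrm{Sp}_\mathbb{R} = \mathrm{Fix}(\sigma)$ is equivalent to commuting with the anti-linear involution $C$, and nothing in your Hermitian-form argument produces $C$-compatibility. Your step (i) has a related gap: you assume $m$ is diagonalizable, but $m$ is only self-adjoint for the \emph{indefinite} form $\langle\,,\,\rangle_s$, which does not by itself guarantee semisimplicity.

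The paper resolves both issues in one stroke by working directly with $C$ on eigenvectors. From $C m C^{-1} = m^{-1}$ one sees that if $w$ is an eigenvector with eigenvalue $\lambda$ then $Cw$ is an eigenvector with eigenvalue $\lambda^{-1}$. The strict inequality $\langle w, m w\rangle_s < \langle w, w\rangle_s$ forces $\langle w, w\rangle_s = A(Cw, w) \neq 0$, so the complex $2$-plane $Q_1 = \mathrm{span}(w, Cw)$ is $A$-nondegenerate, $C$-invariant (hence defined over $\mathbb{R}$), and $m$-stable. One then passes to the $A$-orthogonal complement $Q_1^\perp$, which inherits all the structure, and proceeds by induction. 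This yields an $A$-orthogonal decomposition $W = Q_1 \oplus \cdots \oplus Q_d$ into $m$-invariant real symplectic $2$-planes, and any $g \in \mathrm{Sp}$ carrying this to the standard decomposition $P_1 \oplus \cdots \oplus P_d$ can be chosen $C$-equivariant, i.e.\ in $\mathrm{Sp}_\mathbb{R}$. The inductive construction also handles diagonalizability without assuming it up front.
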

\begin{proof}
For $g \in \mathrm{Sp}$ one has $\sigma(g^{-1}) = C g^{-1} C^{-1} = s
g^\dagger s\,$. Hence if $M \ni m = \sigma(h^{-1})h$ with $h \in
\mathrm{H}(W^s)$, then $m = s h^\dagger s h\,$. Consequently,
$\langle w, m w \rangle_s = \langle h w , h w \rangle_s < \langle w ,
w \rangle_s$ for all $w \in W \setminus \{ 0 \}$. In particular,
$\langle w , m w \rangle_s\in \mathbb{R}$ and if $w \not= 0$ is an
$m$-eigenvector with eigenvalue $\lambda$, it follows that $\lambda
\in \mathbb{R}$ and $\lambda \langle w , w \rangle_s < \langle w , w
\rangle_s \not = 0\,$.

Now we have $C h C^{-1} = \sigma(h)$ and hence $C m C^{-1} = m^{-1}$.
As a result, if $w \not= 0$ is an $m$-eigenvector with eigenvalue
$\lambda$, then so is $C w$ with eigenvalue $\lambda^{-1}$. Since $C
s C^{-1} = - s$, the product of $\langle w , w \rangle_s$ with
$\langle Cw , Cw \rangle_s$ is always negative. If $\langle w , w
\rangle_s > 0$ it follows that $\lambda < 1$ and $\lambda^{-1} > 1$;
if $\langle Cw , Cw \rangle_s > 0$ then $\lambda^{-1} < 1$ and
$\lambda > 1$. In both cases $0 < \lambda \not= 1$.

Since $m$ does indeed have at least one eigenvector, we have
constructed a complex 2-plane $Q_1$ as the span of the linearly
independent vectors $w$ and $Cw$. The plane $Q_1$ is defined over
$\mathbb{R}$ and, because $0 \not= \langle w , w \rangle_s =
A(Cw,w)$, it is $A$-nondegenerate. Its $A$-orthogonal complement
$Q_1^\perp$ is therefore also non-degenerate and defined over
$\mathbb{R}$.

The transformation $m \in \mathrm{Sp}$ stabilizes the decomposition
$W = Q_1 \oplus Q_1^\perp\,$. Hence, proceeding by induction we
obtain an $A$-orthogonal decomposition $W = Q_1 \oplus \ldots \oplus
Q_d\,$. Since the $Q_j$ are $m$-invariant symplectic planes defined
over $\mathbb{R}$, there exists $g \in \mathrm{Sp}_\mathbb{R}$ so
that $t := g m g^{-1}$ stabilizes the above $T$-invariant
decomposition $W = P_1 \oplus \ldots \oplus P_d\,$. Exchanging $w$
with $C w$ if necessary, we may assume that $t$ acts diagonally on
$P_j = E_j \oplus F_j$ by $(e_j\, ,f_j) \mapsto (\lambda_j \, e_j \,
, \lambda_j^{-1} f_j)$ with $\lambda_j > 1$. In other words, $t \in
T_+\,$.
\end{proof}
If we let
\begin{displaymath}
    \mathrm{Sp}_\mathbb{R} T_+ \mathrm{Sp}_\mathbb{R} := \{g_1 t g_2^{-1}
    \mid g_1 , g_2 \in \mathrm{Sp}_\mathbb{R} \,, \, t \in T_+ \}\;,
\end{displaymath}
then we now have the following analog of the $KAK$-decomposition.
\begin{corollary}\label{left/right transitive}
The semigroup $\mathrm{H}(W^s)$ decomposes as $\mathrm{H}(W^s) =
\mathrm{Sp}_\mathbb{R} T_+ \mathrm{Sp}_\mathbb{R}\,$. In particular,
$\mathrm{H}(W^s)$ is connected.
\end{corollary}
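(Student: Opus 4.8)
The plan is to obtain the decomposition by lifting the conjugacy description of $M = \pi^\prime(\mathrm{H}(W^s))$ from Proposition~\ref{M-slice} through the map $\pi^\prime : h \mapsto \sigma(h^{-1})h$, whose fibers I will identify with the left $\mathrm{Sp}_\mathbb{R}$-cosets. The reverse inclusion and connectedness are then essentially formal.

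First I would record two elementary facts. Because $\mathrm{Sp}_\mathbb{R} = \mathrm{Fix}(\sigma)$ and $\sigma$ is a group automorphism, one has $\pi^\prime(gh) = \sigma(h^{-1})\sigma(g^{-1})gh = \sigma(h^{-1})h = \pi^\prime(h)$ for every $g \in \mathrm{Sp}_\mathbb{R}$; conversely, $\pi^\prime(h_1) = \pi^\prime(h_2)$ forces $\sigma(h_2 h_1^{-1}) = \sigma(h_2)\sigma(h_1^{-1}) = h_2 h_1^{-1}$, i.e.\ $h_2 h_1^{-1} \in \mathrm{Fix}(\sigma) = \mathrm{Sp}_\mathbb{R}$. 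Thus the fibers of $\pi^\prime$ are exactly the left $\mathrm{Sp}_\mathbb{R}$-cosets --- the left-multiplication counterpart of the fiber description already noted for $\pi$. Second, for $t = \exp(\mathrm{i}X) \in T_+$ with $X \in \mathfrak{t}_\mathbb{R}$ the square root $\tau := \exp(\tfrac{\mathrm{i}}{2}X)$ again lies in $T_+$, since $0 < \chi_j(\tau)^{-1} = \chi_j(t)^{-1/2} < 1$ for all $j$, and it satisfies $\sigma(\tau) = \tau^{-1}$, hence $\pi^\prime(\tau) = \sigma(\tau^{-1})\tau = \tau^2 = t$.

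With these in hand the main step runs as follows. Take $h \in \mathrm{H}(W^s)$ and set $m := \pi^\prime(h) \in M$. By Proposition~\ref{M-slice} there exist $g \in \mathrm{Sp}_\mathbb{R}$ and $t \in T_+$ with $m = \mathrm{Int}(g)\,t = g t g^{-1}$. Let $\tau \in T_+$ be the square root of $t$ above and put $h_0 := g\tau g^{-1}$, which lies in $\mathrm{H}(W^s)$ because $\mathrm{H}(W^s)$ is stable under the two-sided $\mathrm{Sp}_\mathbb{R} \times \mathrm{Sp}_\mathbb{R}$-action and $\tau \in T_+ \subset \mathrm{H}(W^s)$. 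Using $\sigma(g\,\cdot\,g^{-1}) = g\,\sigma(\cdot)\,g^{-1}$ for $g \in \mathrm{Sp}_\mathbb{R}$ I get $\pi^\prime(h_0) = g\,\sigma(\tau^{-1})\tau\,g^{-1} = g\tau^2 g^{-1} = g t g^{-1} = m = \pi^\prime(h)$, so by the fiber description there is $g^\prime \in \mathrm{Sp}_\mathbb{R}$ with $h = g^\prime h_0 = (g^\prime g)\,\tau\,g^{-1} \in \mathrm{Sp}_\mathbb{R} T_+ \mathrm{Sp}_\mathbb{R}$. The opposite inclusion $\mathrm{Sp}_\mathbb{R} T_+ \mathrm{Sp}_\mathbb{R} \subset \mathrm{H}(W^s)$ is immediate from $T_+ \subset \mathrm{H}(W^s)$ together with the $\mathrm{Sp}_\mathbb{R}\times\mathrm{Sp}_\mathbb{R}$-invariance of $\mathrm{H}(W^s)$, so $\mathrm{H}(W^s) = \mathrm{Sp}_\mathbb{R} T_+ \mathrm{Sp}_\mathbb{R}$.

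For the last assertion I would note that $\mathrm{Sp}_\mathbb{R} = \mathrm{Sp}(W_\mathbb{R})$ is connected (polar decomposition reduces this to connectedness of its unitary maximal compact subgroup) and that $T_+$ is connected, being carried by $\log$ onto the open convex cone $\{X \in \mathfrak{t}_\mathbb{R} : \chi_j(\exp\mathrm{i}X) < 1 \ \forall j\}$; hence $\mathrm{H}(W^s)$, as the continuous image of the connected space $\mathrm{Sp}_\mathbb{R}\times T_+\times\mathrm{Sp}_\mathbb{R}$ under $(g_1,t,g_2)\mapsto g_1 t g_2^{-1}$, is connected. I expect the only genuine subtlety to be the same one that dictated the use of $\pi^\prime$ rather than $\pi$ in Proposition~\ref{M-slice}: it is $\pi^\prime$, with fibers equal to \emph{left} cosets, that produces elements of the shape $g_1 t g_2^{-1}$ with $t$ on the prescribed toral piece --- everything else is bookkeeping with the identities $\sigma(g)=g$, $\sigma|_{\exp(\mathrm{i}\mathfrak{t}_\mathbb{R})} = (\cdot)^{-1}$, and the two-sided invariance of $\mathrm{H}(W^s)$.
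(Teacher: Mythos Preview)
Your proof is correct and follows essentially the same approach as the paper: both use Proposition~\ref{M-slice} to write $\pi^\prime(h)=g t g^{-1}$, take a square root $\tau\in T_+$ of $t$, and invoke that the $\pi^\prime$-fibers are left $\mathrm{Sp}_\mathbb{R}$-cosets to conclude $h\in\mathrm{Sp}_\mathbb{R}T_+\mathrm{Sp}_\mathbb{R}$. The paper phrases this at the set level, using the right-equivariance $g\,\pi^\prime(h)\,g^{-1}=\pi^\prime(hg^{-1})$ to write $\mathrm{H}(W^s)={\pi^\prime}^{-1}(\mathrm{Int}(\mathrm{Sp}_\mathbb{R})T_+)={\pi^\prime}^{-1}(T_+)\,\mathrm{Sp}_\mathbb{R}=\mathrm{Sp}_\mathbb{R}T_+\mathrm{Sp}_\mathbb{R}$, whereas you argue pointwise by exhibiting $h_0=g\tau g^{-1}$ with $\pi^\prime(h_0)=\pi^\prime(h)$; these are the same argument in different packaging.
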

\begin{proof}
By definition, $\mathrm{H}(W^s) = {\pi^\prime}^{-1}(M)$, and from
Proposition \ref{M-slice} one has $\mathrm{H}(W^s)= {\pi^\prime}^{-1}
(\mathrm{Int}(\mathrm{Sp}_\mathbb{R}) T_+)$. Now the map $\pi^\prime
\vert_{T_+} : \, T_+ \to T_+ \,$, $t\mapsto t^2$ is surjective.
Therefore ${\pi^\prime}^{-1}(T_+) = \mathrm{Sp}_\mathbb{R} T_+\,$,
which is to say that each point in the fiber of $\pi^\prime$ over $t
\in T_+$ lies in the orbit of $\sqrt{t} \in T_+$ generated by left
multiplication with $\mathrm{Sp}_\mathbb{R}\,$. On the other hand, by
the property $g \pi^\prime(h) g^{-1}= \pi^\prime(hg^{-1})$ of
$\mathrm{Sp}_\mathbb{R}$-equivariance we have
\begin{displaymath}
    \mathrm{Int}(\mathrm{Sp}_\mathbb{R}) T_+ = \mathrm{Int}
    (\mathrm{Sp}_\mathbb{R}) \pi^\prime ( {\pi^\prime}^{-1}(T_+) )
    = \pi^\prime ( {\pi^\prime}^{-1}(T_+)\mathrm{Sp}_\mathbb{R} )
\end{displaymath}
and hence $\mathrm{H}(W^s) = {\pi^\prime}^{-1}( \mathrm{Int}
(\mathrm{Sp}_\mathbb{R}) T_+) = {\pi^\prime}^{-1} (T_+) \mathrm
{Sp}_\mathbb{R} = \mathrm{Sp}_\mathbb{R} T_+ \mathrm{Sp}_\mathbb{R}
\,$.

Because $\mathrm{Sp}_\mathbb{R}$ and $T_+$ are connected, so is
$\mathrm{H}(W^s) = \mathrm{Sp}_\mathbb{R} T_+ \mathrm{Sp}_\mathbb{R}
\,$.
\end{proof}
It is clear that $M = \mathrm{Int}( \mathrm{Sp}_\mathbb{R} ) T_+
\subset \mathrm{H}(W^s)$. Furthermore, since both $T_+$ and
$\mathrm{Sp}_\mathbb{R}$ are invariant under the operation of
Hermitian conjugation $h\mapsto h^\dagger$ and under the involution
$h \mapsto s h s$, we have the following consequences.
\begin{corollary}\label{cor:3.2}
$\mathrm{H}(W^s)$ is invariant under $h \mapsto h^\dagger$ and also
under $h \mapsto s h s$. In particular, $\mathrm{H}(W^s)$ is
stabilized by the map $h \mapsto \sigma(h^{-1}) = s h^\dagger s$.
\end{corollary}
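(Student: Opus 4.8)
The plan is to deduce both invariances from the Cartan-type decomposition $\mathrm{H}(W^s) = \mathrm{Sp}_\mathbb{R}\, T_+\, \mathrm{Sp}_\mathbb{R}$ of Corollary \ref{left/right transitive}, combined with the fact --- already noted above --- that each of the factors $\mathrm{Sp}_\mathbb{R}$ and $T_+$ is stable under Hermitian conjugation $h \mapsto h^\dagger$ and under the involution $h \mapsto s h s$. A remark is in order about why one should not expect to get away with a direct manipulation of the defining inequality $h^\dagger s h < s$: conjugation by the invertible Hermitian operator $s$ is a congruence, hence respects the order relation, and so it disposes of $h \mapsto s h s$ at once; but passing from $h$ to $h^\dagger$ is tantamount to inverting inside the \emph{indefinite} form $\langle\,,\,\rangle_s$, where inversion is not order-reversing, so that line of attack collapses. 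It is precisely here that the global structure of $\mathrm{H}(W^s)$ must be brought in.

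First I would record the auxiliary facts. The map $h \mapsto s h s$ is an automorphism of $\mathrm{Sp}$ (because $A$ satisfies $A(sw,sw') = -A(w,w')$, so $A(sgs\,\cdot\,,sgs\,\cdot\,) = A(g\,\cdot\,,g\,\cdot\,) = A$), and $h \mapsto h^\dagger$ is an anti-automorphism of $\mathrm{Sp}$ (it stabilizes $\mathrm{Sp}$ as noted, and $(g_1 g_2)^\dagger = g_2^\dagger g_1^\dagger$). On $T_+$ both operations act trivially: in the orthonormal basis of $V \oplus V^\ast$ fixed above, an element $t \in T_+$ is the real diagonal matrix $\mathrm{diag}(\lambda_1,\dots,\lambda_d,\lambda_1^{-1},\dots,\lambda_d^{-1})$ with all $\lambda_j > 1$, while $s = \mathrm{diag}(-1,\dots,-1,1,\dots,1)$ is diagonal too, so $t^\dagger = t$ and $s t s = t$. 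On $\mathrm{Sp}_\mathbb{R}$ both operations act as automorphisms: the relation $g^\dagger s g = s$ together with $s^2 = \mathrm{Id}$ gives $g^{-1} = s g^\dagger s$, whence $g s g^\dagger = s$ and $(sgs)^\dagger s (sgs) = s(g^\dagger s g)s = s$.

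Now take $h \in \mathrm{H}(W^s)$ and write $h = g_1 t g_2^{-1}$ with $g_1, g_2 \in \mathrm{Sp}_\mathbb{R}$ and $t \in T_+$, using Corollary \ref{left/right transitive}. Then
\[
    h^\dagger = (g_2^{-1})^\dagger\, t^\dagger\, g_1^\dagger = (g_2^\dagger)^{-1}\, t\, g_1^\dagger \in \mathrm{Sp}_\mathbb{R}\, T_+\, \mathrm{Sp}_\mathbb{R} = \mathrm{H}(W^s),
\]
and
\[
    s h s = (s g_1 s)(s t s)(s g_2 s)^{-1} = (s g_1 s)\, t\, (s g_2 s)^{-1} \in \mathrm{Sp}_\mathbb{R}\, T_+\, \mathrm{Sp}_\mathbb{R} = \mathrm{H}(W^s),
\]
which are the two stated invariances. (Invariance under $h \mapsto s h s$ can in fact be seen directly, without the decomposition: $(shs)^\dagger s (shs) = s(h^\dagger s h)s < s s s = s$ by congruence.) For the concluding assertion, $\mathrm{H}(W^s)$ being stable under $h \mapsto h^\dagger$ and under $h \mapsto s h s$, it is stable under the composite $h \mapsto s h^\dagger s$, which by the identity $\sigma(g) = s(g^{-1})^\dagger s$ is exactly the map $h \mapsto \sigma(h^{-1})$.

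I expect no serious obstacle here; the single genuinely non-formal ingredient is the appeal to Corollary \ref{left/right transitive} --- equivalently, to the explicit diagonal normal form of the elements of $T_+$ --- for the $\dagger$-invariance, which, as emphasized above, cannot be replaced by an order-theoretic argument about $h^\dagger s h < s$ since $s$ is indefinite. Everything else is bookkeeping with $s^2 = \mathrm{Id}$, $s^\dagger = s$, and the defining relations of $\mathrm{Sp}$ and $\mathrm{Sp}_\mathbb{R}$.
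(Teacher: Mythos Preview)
Your proof is correct and follows the same approach as the paper: both deduce the invariances from the decomposition $\mathrm{H}(W^s) = \mathrm{Sp}_\mathbb{R}\, T_+\, \mathrm{Sp}_\mathbb{R}$ together with the stability of $\mathrm{Sp}_\mathbb{R}$ and $T_+$ under $h \mapsto h^\dagger$ and $h \mapsto shs$. You have simply supplied the verifications that the paper leaves to the reader, and your parenthetical direct argument for $h \mapsto shs$ via congruence is a legitimate shortcut.
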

\begin{remark}
Letting $h^\prime := \sigma(h)^{-1}$ one has $\pi^\prime(h) = \sigma
(h)^{-1} h = h^\prime \sigma(h^\prime)^{-1} = \pi(h^\prime)$ and
hence $M = \pi^\prime(\mathrm{H}(W^s)) = \pi(\mathrm{H}(W^s))$. The
stability of $\mathrm{H}(W^s)$ under $h \mapsto \sigma(h)^{-1}$ was
not immediate from our definition of $\mathrm{H}(W^s)$, which is why
we have been working from the viewpoint of $\mathrm{H}(W^s) =
{\pi^\prime}^{-1}(M)$ so far. Now that we have it, we may regard
$\mathrm{H}(W^s)$ as the total space of an $\mathrm{Sp}_\mathbb
{R}$-principal bundle $\pi:\, \mathrm{H}(W^s) \to M$. We are going to
see in Corollary \ref{bijective} that this principal bundle is
trivial.
\end{remark}
Next observe that, since $\sigma(m) = m^{-1}$ for $m = \sigma(h)^{-1}
h = h^\prime \sigma(h^\prime)^{-1} \in M\,$, the maps $\pi : \, M \to
M$ and $\pi^\prime : \, M \to M$ coincide and are just the square $m
\mapsto m^2$. Thus the claim that the elements of $M$ have a unique
square root in $M$ can be formulated as follows.
\begin{proposition}\label{squaring}
The squaring map $\pi = \pi^\prime : \, M \to M$ is bijective.
\end{proposition}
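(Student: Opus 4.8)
The idea is to use the structure theorem $M = \mathrm{Int}(\mathrm{Sp}_\mathbb{R}) T_+$ from Proposition \ref{M-slice} to reduce the bijectivity of $m \mapsto m^2$ on $M$ to the already-established fact that $t \mapsto t^2$ is a bijection of $T_+$. First I would treat surjectivity: given $m \in M$, write $m = \mathrm{Int}(g) t$ with $g \in \mathrm{Sp}_\mathbb{R}$ and $t \in T_+$; since $T_+ = \exp(\mathrm{i}\mathfrak{t}_\mathbb{R}) \cap \mathrm{H}(W^s)$ is a connected abelian semigroup on which squaring is the bijection $t \mapsto t^2$ (all $\chi_j(t)^{-1} \in (0,1)$, so $\chi_j(\sqrt t)^{-1} = \chi_j(t)^{-1/2} \in (0,1)$ as well), there is a unique $\sqrt t \in T_+$ with $(\sqrt t)^2 = t$; then $\mathrm{Int}(g)\sqrt t \in M$ squares to $m$. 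This already shows $\pi$ is onto.

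For injectivity, suppose $m_1, m_2 \in M$ with $m_1^2 = m_2^2 =: m$. The key observation is that $m$, $m_1$, $m_2$ all lie in $M$, hence satisfy $\sigma(m_i) = m_i^{-1}$ and $C m_i C^{-1} = m_i^{-1}$; moreover each $m_i$ is of the form $s h^\dagger s h$ for suitable $h \in \mathrm{H}(W^s)$, and the eigenvalue analysis in the proof of Proposition \ref{M-slice} shows that $m_i$ is diagonalizable with all eigenvalues real and positive (on each $C$-stable symplectic $2$-plane the eigenvalues are $\lambda, \lambda^{-1}$ with $\lambda > 0$, $\lambda \ne 1$, unless $\lambda = 1$ on the orthogonal complement where $m_i$ is forced to be the identity). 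Thus each $m_i$ has a well-defined \emph{positive} square root: diagonalize $m_i$ and take positive square roots of the eigenvalues. The plan is to show that this positive square root is the \emph{only} square root of $m$ lying in $M$, which forces $m_1 = m_2$.

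Concretely, I would argue as follows. Both $m_1$ and $m_2$ commute with $m = m_1^2 = m_2^2$ (trivially), so they both preserve the eigenspace decomposition $W = \bigoplus_\mu W_\mu$ of $m$ (here $\mu$ ranges over the positive real eigenvalues of $m$). On each $W_\mu$, the operator $m_i$ restricts to a symplectic (or isometric for the relevant form) transformation whose square is $\mu \cdot \mathrm{Id}_{W_\mu}$; since $m_i|_{W_\mu}$ is itself diagonalizable with positive real eigenvalues whose squares equal $\mu$, those eigenvalues must all equal $+\sqrt{\mu}$, so $m_i|_{W_\mu} = \sqrt{\mu}\cdot\mathrm{Id}_{W_\mu}$. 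Hence $m_1|_{W_\mu} = m_2|_{W_\mu}$ for every $\mu$, and therefore $m_1 = m_2$. This completes injectivity, and together with surjectivity proves that $\pi : M \to M$ is bijective.

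The main obstacle is the injectivity step, and within it the crucial input is that every element of $M$ is diagonalizable with strictly positive real eigenvalues — this is what rules out spurious square roots such as ones picking up eigenvalue $-\sqrt\mu$ on some subspace (such an operator would have a negative eigenvalue and hence could not lie in $M$). I would extract this diagonalizability-and-positivity statement cleanly from the proof of Proposition \ref{M-slice} (possibly stating it as a short intermediate claim), since that proof already performs the eigenvalue analysis on $C$-stable symplectic $2$-planes and shows $0 < \lambda \ne 1$ on the non-trivial planes while $m$ acts trivially elsewhere. Once that is in hand, the uniqueness of the positive square root within $M$ is essentially the familiar fact that a positive operator has a unique positive square root, transported into the present symplectic setting.
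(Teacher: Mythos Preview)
Your proof is correct and follows essentially the same approach as the paper: surjectivity via $M = \mathrm{Int}(\mathrm{Sp}_\mathbb{R}) T_+$ together with squaring being surjective on $T_+$, and injectivity via the fact (extracted from the proof of Proposition~\ref{M-slice}) that every element of $M$ is diagonalizable with strictly positive real eigenvalues, so that the eigenspace decomposition of $m$ is recovered from that of $m^2$. The paper's argument is more terse---it simply notes that the $m$-eigenspace for $\lambda$ sits inside the $m^2$-eigenspace for $\lambda^2$ and that positivity of the eigenvalues makes $\lambda \mapsto \lambda^2$ injective---while you spell out the commutation and restriction to each $W_\mu$ explicitly; but the content is the same. (One small inaccuracy: your parenthetical allowing $\lambda = 1$ on some complement is unnecessary, since the proof of Proposition~\ref{M-slice} in fact gives $0 < \lambda \ne 1$ on every plane; this does not affect your argument.)
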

\begin{proof}
Recall from Proposition \ref{M-slice} that every $m \in M$ is
diagonalizable in the sense that $M = \mathrm{Int}(
\mathrm{Sp}_\mathbb{R}) T_+\,$. Since $\pi : \, T_+\to T_+$ is
surjective, the surjectivity of $\pi : \, M \to M$ is immediate. For
the injectivity of $\pi$ we note that the $m$-eigenspace with
eigenvalue $\lambda$ is contained in the $m^2$-eigenspace with
eigenvalue $\lambda^2$. The result then follows from the fact that
all eigenvalues of $m$ are positive real numbers.
\end{proof}
\begin{corollary}\label{bijective}
Each of the two $\mathrm{Sp}_\mathbb{R}$-equivariant maps
$\mathrm{Sp}_\mathbb{R} \times M \to \mathrm{H} (W^s)$ defined by
$(g,m) \mapsto gm$ and by $(g,m) \mapsto m g^{-1}$, is a bijection.
\end{corollary}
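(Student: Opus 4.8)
The plan is to prove both statements by one argument, applied once with the map $\pi'$ and once with $\pi$; the essential input is Proposition~\ref{squaring}, i.e.\ that every element of $M$ has a unique square root in $M$. First I would record the elementary observations that $\sigma$ is a group automorphism of $\mathrm{Sp}$ with $\mathrm{Fix}(\sigma)=\mathrm{Sp}_\mathbb{R}\,$, that $\sigma(m)=m^{-1}$ for every $m\in M\,$, and that both maps in the statement do land in $\mathrm{H}(W^s)$, since $M\subset\mathrm{H}(W^s)$ and $\mathrm{H}(W^s)$ is stable under left and right multiplication by $\mathrm{Sp}_\mathbb{R}\,$. A direct computation then gives, for $g\in\mathrm{Sp}_\mathbb{R}$ and $m\in M\,$,
\[
    \pi'(gm) = \sigma(m^{-1}g^{-1})\,gm = \sigma(m^{-1})\,\sigma(g^{-1})\,gm = m\,g^{-1}\,gm = m^2 \;,
\]
\[
    \pi(mg^{-1}) = mg^{-1}\,\sigma(gm^{-1}) = mg^{-1}\,\sigma(g)\,\sigma(m^{-1}) = m\,g^{-1}g\,m = m^2 \;,
\]
so that $\pi'(gm)$ and $\pi(mg^{-1})$ do not depend on $g\,$.

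For injectivity of $(g,m)\mapsto gm\,$: if $g_1 m_1 = g_2 m_2\,$, apply $\pi'$ to obtain $m_1^2 = m_2^2\,$, hence $m_1 = m_2$ by Proposition~\ref{squaring}, and therefore $g_1 = g_2\,$. The same argument with $\pi$ in place of $\pi'$ gives injectivity of $(g,m)\mapsto mg^{-1}\,$.

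For surjectivity of $(g,m)\mapsto gm\,$: given $h\in\mathrm{H}(W^s)\,$, set $n := \pi'(h) = \sigma(h^{-1})h\in M$ and let $m\in M$ be its unique square root. Put $g := hm^{-1}\,$. The point is that $g\in\mathrm{Sp}_\mathbb{R}\,$: from $n = \sigma(h^{-1})h$ one has $hn^{-1} = \sigma(h)\,$, whence
\[
    \sigma(g) = \sigma(h)\,\sigma(m^{-1}) = \sigma(h)\,m = hn^{-1}m = hm^{-2}m = hm^{-1} = g \;.
\]
Thus $h = gm$ with $g\in\mathrm{Sp}_\mathbb{R}$ and $m\in M\,$. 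Symmetrically, for $(g,m)\mapsto mg^{-1}$ one takes $n := \pi(h) = h\sigma(h^{-1})\in M$ (using the Remark after Corollary~\ref{cor:3.2} that $\pi(\mathrm{H}(W^s)) = M$), lets $m\in M$ be its square root, and sets $g := h^{-1}m\,$; then $h^{-1}n = \sigma(h^{-1})$ yields $\sigma(g) = \sigma(h^{-1})\,m^{-1} = h^{-1}n\,m^{-1} = h^{-1}m = g\,$, so $g\in\mathrm{Sp}_\mathbb{R}$ and $mg^{-1} = h\,$.

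The main obstacle has already been dealt with upstream: everything rests on the unique-square-root property of $M$ (Proposition~\ref{squaring}). What remains is pure bookkeeping — verifying $\sigma(g) = g$ for the constructed $g\,$, which uses only the definition of $\pi$ (resp.\ $\pi'$) in the form $\sigma(h) = h\,\pi(h)^{-1}$ and the identity $\sigma(m) = m^{-1}$ on $M\,$.
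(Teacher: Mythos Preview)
Your proof is correct and follows essentially the same route as the paper: injectivity is obtained exactly as in the paper, by applying $\pi'$ (resp.\ $\pi$) and invoking the bijectivity of the squaring map on $M$ (Proposition~\ref{squaring}). For surjectivity the paper simply cites Corollary~\ref{left/right transitive} together with $M=\mathrm{Int}(\mathrm{Sp}_\mathbb{R})T_+$, whereas you give the direct construction $m=\sqrt{\pi'(h)}$, $g=hm^{-1}$, and verify $\sigma(g)=g$; this is really the same argument unpacked, since Corollary~\ref{left/right transitive} itself was proved by the square-root trick.
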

\begin{proof}
Consider the map $(g,m) \mapsto gm\,$. Surjectivity is evident from
Corollary \ref{left/right transitive} and $M = \mathrm{Int}(\mathrm
{Sp}_\mathbb{R}) T_+\,$. For the injectivity it suffices to prove
that if $m_1 , m_2 \in M$ and $g \in \mathrm{Sp}_\mathbb{R}$ with $g
m_1 = m_2\,$, then $m_1 = m_2\,$. But this follows directly from
$\pi^\prime(m_1) = \pi^\prime(g m_1) = \pi^\prime (m_2)$ and the fact
that $\pi^\prime\vert_M$ is the bijective squaring map.

The proof for the map $(g,m) \mapsto m g^{-1}$ is similar, with $\pi$
replacing $\pi^\prime$.
\end{proof}

\subsubsection{Cone realization of $M$}

Let us look more carefully at $M$ as a geometric object. First, as we
have seen, the elements $m$ of $M$ satisfy the condition $m = \sigma
(m^{-1})$. We regard $\psi :\, \mathrm{H}(W^s) \to \mathrm{H}(W^s)
\,$, $h \mapsto \sigma(h^{-1})$, as an anti-holomorphic involution
and reformulate this condition as $M \subset \mathrm{Fix} (\psi)$. In
the present section we are going to show that $M$ is a closed,
connected, real-analytic submanifold of $\mathrm{H}(W^s)$ which
locally agrees with $\mathrm{Fix} (\psi)$. This implies in particular
that $M$ is totally real in $\mathrm{H}(W^s)$ with $\dim_\mathbb{R} M
= \dim_\mathbb{C} \mathrm{H}(W^s)$. We will also show that the
exponential map identifies $M$ with a precisely defined open cone in
$\mathrm{i} \mathfrak{sp}_\mathbb{R}\,$. We begin with the following
statement.
\begin{lemma}\label{lem:M-closed}
The image $M$ of $\pi$ is closed as a subset of $\mathrm{H}(W^s)$.
\end{lemma}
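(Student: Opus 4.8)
The plan is to establish closedness sequentially. Suppose $(m_n)_{n \geq 1}$ is a sequence in $M$ converging in $\mathrm{H}(W^s)$ to a point $m$; I want to show $m \in M$. Recall that $M = \pi^\prime(\mathrm{H}(W^s))$, that every element $m^\prime \in M$ satisfies $\sigma(m^\prime) = (m^\prime)^{-1}$, and that by Proposition \ref{M-slice} one has $M = \mathrm{Int}(\mathrm{Sp}_\mathbb{R}) T_+$. The idea is to recover for the limit point $m$ the two structural properties that drive the proof of Proposition \ref{M-slice} — that $\sigma(m) = m^{-1}$, and that every eigenvalue of $m$ is a strictly positive real number — and then to repeat that proof almost word for word.

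The first property is immediate: since $\sigma$ and inversion are continuous on $\mathrm{Sp}$, passing to the limit in $\sigma(m_n) = m_n^{-1}$ yields $\sigma(m) = m^{-1}$, equivalently $C m C^{-1} = m^{-1}$. For the second, note that by Proposition \ref{M-slice} each $m_n$ is $\mathrm{Sp}_\mathbb{R}$-conjugate to an element of $T_+$, whose spectrum consists of reals $\lambda > 1$ together with their inverses $\lambda^{-1} \in (0,1)$; hence the characteristic polynomial of $m_n$ is monic with all roots in $\mathbb{R}_{>0}$. The set of monic polynomials of a fixed degree having only real roots is closed, and the coefficients depend continuously on $m_n$, so the characteristic polynomial of $m$ has only real roots; since $m \in \mathrm{Sp}$ is invertible and its roots are limits of positive numbers, they are in fact positive. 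So $m$ has positive real spectrum.

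With these two facts, I would then follow the proof of Proposition \ref{M-slice} verbatim. Choose an eigenvector $v$ of $m$, $m v = \lambda v$ with $\lambda > 0$. From $m \in \mathrm{H}(W^s)$ one gets $|\lambda|^2 \langle v, v\rangle_s = \langle m v, m v\rangle_s < \langle v, v\rangle_s$, which forces $\langle v, v\rangle_s \neq 0$ and $\lambda \neq 1$, with $\lambda < 1$ when $\langle v, v\rangle_s > 0$ and $\lambda > 1$ when $\langle v, v\rangle_s < 0$. Since $C m C^{-1} = m^{-1}$ and $\lambda$ is real, $C v$ is an eigenvector of $m$ with eigenvalue $\lambda^{-1}$, and $\langle C v, C v\rangle_s = A(v, C v) = -A(C v, v) = -\langle v, v\rangle_s$ has the opposite sign; in particular $v$ and $C v$ are linearly independent and span an $m$-invariant, $C$-invariant (hence defined over $\mathbb{R}$), $A$-nondegenerate complex plane $Q_1$. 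A short computation with the $s$-Hermitian structure shows that the $A$-orthogonal complement of a $C$-invariant, $A$-nondegenerate subspace is again $C$-invariant (and $A$-nondegenerate); the restriction of $m$ to $Q_1^\perp$ retains positive real spectrum and still satisfies $\langle m w, m w\rangle_s < \langle w, w\rangle_s$ on $Q_1^\perp \setminus \{0\}$, so iterating yields an $A$-orthogonal decomposition $W = Q_1 \oplus \cdots \oplus Q_d$ into $m$-invariant symplectic planes defined over $\mathbb{R}$. Choosing, as in Proposition \ref{M-slice}, an element $g \in \mathrm{Sp}_\mathbb{R}$ carrying this decomposition onto the standard one $W = P_1 \oplus \cdots \oplus P_d$ — matching on each plane the eigenline with $\langle \, , \rangle_s < 0$ to $E_j$ and the eigenline with $\langle \, , \rangle_s > 0$ to $F_j$ — makes $g m g^{-1}$ act diagonally on $P_j = E_j \oplus F_j$ with $\chi_j(g m g^{-1}) > 1$, so $g m g^{-1} \in T_+$ and $m \in \mathrm{Int}(\mathrm{Sp}_\mathbb{R}) T_+ = M$.

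The hard part is the second step, and the point to be careful about is that it is genuinely indispensable: the relation $\sigma(m) = m^{-1}$ alone does not single out $M$ inside $\mathrm{Fix}(\psi)$, so closedness cannot be read off from the first step, and one must propagate the positivity of the spectrum through the limit. The closedness of the set of real-rooted polynomials is what makes this possible; the remaining ingredients — continuity in the first step, the $C$-stability of $A$-orthogonal complements, and the choice of $g$ — are routine or already contained in the proof of Proposition \ref{M-slice}.
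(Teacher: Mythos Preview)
Your proof is correct, but it takes a route quite different from the paper's.

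The paper's argument is much shorter and more structural. It does not re-run the eigenvector analysis of Proposition~\ref{M-slice}; instead it exploits two facts already established: that $M = \pi(\mathrm{H}(W^s))$ by definition, and that the squaring map $M \to M$ is a bijection (Proposition~\ref{squaring}). For a limit point $h \in \mathrm{cl}(M) \cap \mathrm{H}(W^s)$, one observes that $m := \pi(h) = h\sigma(h)^{-1}$ automatically lies in $M$ (simply because $h \in \mathrm{H}(W^s)$), and that for $h_n \in M$ one has $\pi(h_n) = h_n^2$. From $h_n^2 \to m$ and the square-root bijection one concludes $h_n \to \sqrt{m}$, whence $h = \sqrt{m} \in M$. (Implicit here is the continuity of the square root on $M$, which follows from the explicit eigenspace description in the proof of Proposition~\ref{squaring}.)

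Your approach instead bypasses $\pi$ and the squaring bijection entirely: you transport the positivity of the spectrum through the limit via the characteristic polynomial, and then redo the inductive plane-splitting of Proposition~\ref{M-slice} from scratch. This is more labor-intensive but also more self-contained; in particular, it does not rely on Proposition~\ref{squaring} or on any implicit continuity of the square root. The paper's method buys brevity by leveraging the structural description of $M$; yours buys independence from those prior results at the cost of repeating an argument already in the paper.
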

\begin{proof}
Let $h \in \mathrm{cl}(M) \subset \mathrm{H} (W^s)$. By the
definition of $M$, we still have $h \sigma(h)^{-1} =: m \in M\,$. If
$h_n$ is any sequence from $M$ with $h_n \to h\,$, then $h_n
\sigma(h_n)^{-1} \to m\,$. But $m$ has a unique square root $\sqrt{m}
\in M$ and $h_n = \sigma(h_n)^{-1} \to \sqrt{m} \,$. Hence $h =
\sqrt{m} \in M\,$.
\end{proof}
\begin{remark}
$M$ of course fails to be closed as a subset of $\mathrm{Sp}\,$. For
example, $g = \mathrm{Id}$ is in the closure of $M \subset
\mathrm{Sp}$ but is not in $M\,$.
\end{remark}
\begin{lemma}\label{lem:max-rank}
The exponential map $\exp : \, \mathfrak{sp} \to \mathrm{Sp}$ has
maximal rank along $\mathfrak{t}_+\,$.
\end{lemma}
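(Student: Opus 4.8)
The plan is to invoke the standard formula for the differential of the exponential map of a matrix Lie group; here $\mathrm{Sp} = \mathrm{Sp}(W)$ is a closed subgroup of $\mathrm{GL}(W)$ and $\exp$ is the matrix exponential. For $X \in \mathfrak{sp}$ one has
\begin{displaymath}
    d(\exp)_X = (dL_{\exp X})_e \circ \Phi(\mathrm{ad}\,X)\;, \qquad
    \Phi(z) := \frac{1 - e^{-z}}{z} = \int_0^1 e^{-sz}\, ds \;,
\end{displaymath}
where $L_g$ denotes left translation on $\mathrm{Sp}$ and $\Phi(\mathrm{ad}\,X) \in \mathrm{End}(\mathfrak{sp})$. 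Since $L_{\exp X}$ is a diffeomorphism and $\dim \mathfrak{sp} = \dim \mathrm{Sp}$, the map $\exp$ has maximal rank at $X$ if and only if $\Phi(\mathrm{ad}\,X)$ is an invertible operator on $\mathfrak{sp}$. By the spectral mapping theorem (applied to the entire function $\Phi$) this holds if and only if no eigenvalue of $\mathrm{ad}\,X$ is a zero of $\Phi$; and the zero set of $\Phi$ is precisely $2\pi\mathrm{i}\,(\mathbb{Z} \setminus \{0\})$. So the first step is this reduction: maximal rank at $X$ is equivalent to $\mathrm{ad}\,X$ having no eigenvalue in $2\pi\mathrm{i}\,(\mathbb{Z} \setminus \{0\})$.

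The second step is to verify this for $X \in \mathfrak{t}_+$. By the construction of $\mathfrak{t}$ in the preceding discussion, in the orthonormal basis $\{e_j\} \cup \{f_j\}$ of $W = P_1 \oplus \cdots \oplus P_d$ adapted to $T$, the elements of $\mathrm{i}\,\mathfrak{t}_\mathbb{R}$ are exactly the diagonal matrices $\mathrm{diag}(x_1, \ldots, x_d, -x_1, \ldots, -x_d)$ with $x_j \in \mathbb{R}$, and $\mathfrak{t}_+$ (the set of such $X$ with $\exp X \in T_+$) is the open cone defined by $x_j > 0$ for all $j$, since $T_+$ is cut out by $0 < \chi_j(t)^{-1} < 1$. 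For such $X$ the operator $\mathrm{ad}\,X$ on $\mathrm{End}(W)$ is semisimple with eigenvalues the pairwise differences of the eigenvalues of $X$, hence real; restricting to the $\mathrm{ad}\,X$-invariant subspace $\mathfrak{sp} \subset \mathrm{End}(W)$ leaves the spectrum real (concretely, the nonzero eigenvalues are the root values $\pm x_j \pm x_k$ and $\pm 2 x_j$).

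Since a nonzero real number is never of the form $2\pi\mathrm{i}\,k$ with $k \in \mathbb{Z} \setminus \{0\}$, the operator $\Phi(\mathrm{ad}\,X)$ is invertible for every $X \in \mathfrak{t}_+$, so $d(\exp)_X$ is a linear isomorphism; this gives the assertion. I do not anticipate a genuine obstacle: the only points needing care are the normalization of the $d\exp$ formula (its sign convention is immaterial here, as the variant $\frac{e^z - 1}{z}$ arising from right translation has the same zero set) and the identification of $\mathfrak{t}_+$ with the positive diagonal cone, from which the reality — hence admissibility — of $\mathrm{spec}(\mathrm{ad}\,X)$ is immediate.
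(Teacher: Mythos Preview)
Your argument is correct, but the paper takes a different route. Rather than invoking the closed formula $d(\exp)_X = dL_{\exp X}\circ\Phi(\mathrm{ad}\,X)$, the paper first shows that the squaring map $S:\,\mathrm{Sp}\to\mathrm{Sp}$, $g\mapsto g^2$, is a local diffeomorphism at each $t\in T_+$ by computing $D_t S = dL_{t^2}\circ(\mathrm{Id}_\mathfrak{sp}+\mathrm{Ad}(t^{-1}))\circ dL_{t^{-1}}$ and noting that $\mathrm{Ad}(t^{-1})$ has only positive real eigenvalues; then, given $\xi\in\mathfrak{t}_+$, it writes $\exp(\eta) = S^n\bigl(\exp(2^{-n}\eta)\bigr)$ with $n$ large enough that $2^{-n}\xi$ lies in a neighborhood of $0$ where $\exp$ is already a diffeomorphism, and concludes that $\exp$ is a local diffeomorphism near $\xi$ as a composition of local diffeomorphisms. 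Your approach is the standard textbook one and is arguably more direct; the paper's approach trades the $d\exp$ formula for a hands-on factorization, which keeps the argument self-contained and makes the role of the squaring map explicit (a map that is used again immediately afterwards in the cone realization of $M$). At bottom both proofs rest on the same spectral fact---the relevant eigenvalues are positive real---so neither is essentially deeper than the other.
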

\begin {proof}
We are going to use the fact that the squaring map $S : \,
\mathrm{Sp} \to \mathrm{Sp}\,$, $g \mapsto g^2$, is a local
diffeomorphism of $\mathrm{Sp}$ at any point $t \in T_+\,$. To show
this, we compute the differential of $S$ at $t$ and obtain
\begin{displaymath}
    D_t S = dL_{\,t^2} \circ (\mathrm{Id}_\mathfrak{sp} +
    \mathrm{Ad}(t^{-1})) \circ dL_{\,t^{-1}} \;,
\end{displaymath}
where $dL_g$ denotes the differential of the left translation $L_g :
\,\mathrm{Sp}\to \mathrm{Sp}\,$, $g_1 \mapsto g g_1\,$. The middle
map $\mathrm{Id}_\mathfrak{sp} + \mathrm{Ad}(t^{-1}) : \, \mathfrak
{sp} \to \mathfrak{sp}$ is regular because all of the eigenvalues of
$t \in T_+$ are positive real numbers. Since $dL_{\, t^{-1}} : \, T_t
\mathrm{Sp} \to \mathfrak{sp}$ and $dL_{\,t^2} : \, \mathfrak{sp} \to
T_{t^2} \mathrm{Sp}$ are iso\-morphisms, it follows that $D_t S :\,
T_t \mathrm{Sp} \to T_{t^2} \mathrm{Sp}$ is an isomorphism.

Turning to the proof of the lemma, given $\xi \in \mathfrak{t}_+$ we
now choose $n \in \mathbb{N}$ so that $2^{-n} \xi$ is in a
neighborhood of $0 \in \mathfrak{sp}$ where $\exp$ is a
diffeomorphism. It follows that for $U$ a sufficiently small
neighborhood of $\xi\,$, the exponential map expressed as
\begin{displaymath}
    U \ni \eta \mapsto 2^{-n} \eta \mapsto \exp(2^{-n}\eta)
    \stackrel{S^n}{\mapsto}(\exp(2^{-n}\eta))^{2^n} = \exp(\eta)
\end{displaymath}
is a diffeomorphism of $U$ onto its image.
\end{proof}
Now recall $M = \mathrm{Int}(\mathrm{Sp}_\mathbb{R}) T_+$ and consider
the cone
\begin{displaymath}
    \mathcal{C} := \mathrm{Ad}(\mathrm{Sp}_\mathbb{R}) \mathfrak{t}_+
    \subset \mathrm{i}\mathfrak{sp}_\mathbb{R} \;.
\end{displaymath}
It follows from the equivariance of $\exp$, i.e., $\exp(\mathrm{Ad}
(g) \xi) = \mathrm{Int}(g) \exp(\xi)$, that $\exp : \, \mathcal{C}
\to \mathrm{Int}(\mathrm{Sp}_\mathbb{R}) T_+ = M$ is surjective.
Furthermore, $\exp \vert_{\mathfrak{t}_+} : \, \mathfrak{t}_+ \to
T_+$ is injective and for every $\xi \in \mathfrak{t}_+$ the isotropy
groups of the $\mathrm{Sp}_\mathbb{R}$-actions at $\xi$ and $\exp
(\xi)$ are the same. Therefore $\exp : \, \mathcal{C} \to M$ is also
injective.

In fact, much stronger regularity holds. For the statement of this
result we recall the anti-holomorphic involution $\psi:\, \mathrm{H}
(W^s) \to \mathrm{H}(W^s)$ defined by $h \mapsto \sigma(h^{-1})$ and
let $\mathrm{Fix}(\psi)^0$ denote the connected component of
$\mathrm{Fix} (\psi)$ that contains $M$.
\begin{proposition}
The image $M \subset \mathrm{H}(W^s)$ of $\pi : \, h \mapsto h
\sigma(h^{-1})$ is the closed, connected, totally real submanifold
$\mathrm{Fix}(\psi)^0$, which is half-dimensional in the sense that
$\dim_\mathbb{R} M = \dim_\mathbb{C} \mathrm{H} (W^s)$. The set
$\mathcal{C} = \mathrm{Ad}(\mathrm{Sp}_\mathbb{R}) \mathfrak{t}_+\,$,
which is an open positive cone in $\mathrm{i} \mathfrak{sp}_\mathbb
{R}\,$, is in bijection with $M$ by the real-analytic diffeomorphism
$\exp : \, \mathcal{C} \to M$.
\end{proposition}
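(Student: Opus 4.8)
The plan is to reduce the proposition to three facts — that $M$ is closed and connected, that $\mathcal{C} = \mathrm{Ad}(\mathrm{Sp}_\mathbb{R})\mathfrak{t}_+$ is an open convex cone in $\mathrm{i}\mathfrak{sp}_\mathbb{R}$, and that $\exp|_\mathcal{C}$ is an immersion — and then to combine them with the bijectivity of $\exp : \mathcal{C} \to M$ established above and with the standard structure theory of anti-holomorphic involutions. Closedness of $M$ in $\mathrm{H}(W^s)$ is Lemma \ref{lem:M-closed}, and connectedness is immediate from $M = \mathrm{Int}(\mathrm{Sp}_\mathbb{R}) T_+$ (Proposition \ref{M-slice}), $M$ being the continuous image of the connected space $\mathrm{Sp}_\mathbb{R} \times T_+$. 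For the immersion property, Lemma \ref{lem:max-rank} gives maximal rank of $\exp : \mathfrak{sp} \to \mathrm{Sp}$ along $\mathfrak{t}_+$; the intertwining relation $\exp \circ \mathrm{Ad}(g) = \mathrm{Int}(g) \circ \exp$ with $\mathrm{Ad}(g), \mathrm{Int}(g)$ isomorphisms for $g \in \mathrm{Sp}_\mathbb{R}$ propagates this to every point of $\mathcal{C} = \mathrm{Ad}(\mathrm{Sp}_\mathbb{R})\mathfrak{t}_+$, so the differential of $\exp$ is injective on $\mathrm{i}\mathfrak{sp}_\mathbb{R} \supset \mathcal{C}$ at each point of $\mathcal{C}$.

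The substance of the argument is the second fact, for which the key is the intrinsic description
\begin{displaymath}
\mathcal{C} = \{ X \in \mathrm{i}\mathfrak{sp}_\mathbb{R} \mid \langle Xw , w \rangle_s < 0 \ \ \forall\, w \in W \setminus \{0\} \} \;.
\end{displaymath}
Here $\langle Xw,w\rangle_s = \langle sXw , w\rangle$ is real because $sX$ is Hermitian when $X \in \mathrm{i}\mathfrak{sp}_\mathbb{R}$ (a short computation from $X^\dagger = sXs$), and the right-hand side is manifestly an $\mathrm{Ad}(\mathrm{Sp}_\mathbb{R})$-invariant open convex cone: linearity in $X$ gives convexity and the cone property, and continuity of $X \mapsto \sup_{\|w\|=1} \langle Xw , w\rangle_s$ gives openness. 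Thus establishing this description settles the ``open positive cone'' claim. The inclusion $\subseteq$ is a one-line diagonal computation showing that $s\xi$ is negative definite for $\xi \in \mathfrak{t}_+$, combined with the $\mathrm{Sp}_\mathbb{R}$-invariance of $\langle\,\cdot\,,\,\cdot\,\rangle_s$.

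The reverse inclusion is the crux, and I expect it to be the main obstacle. Given $X$ on the right-hand side, the prescription $\langle w , w'\rangle_X := -\langle Xw , w'\rangle_s$ defines a genuine positive-definite inner product on $W$ with respect to which $X$ is again self-adjoint; hence $X$ is semisimple with real, necessarily nonzero, eigenvalues and pairwise $\langle\,\cdot\,,\,\cdot\,\rangle_s$-orthogonal eigenspaces $E_\lambda$. Since $A$ pairs $E_\lambda$ with $E_{-\lambda}$ and the conjugation $C$ maps $E_\lambda$ antilinearly onto $E_{-\lambda}$ (because $X \in \mathrm{i}\mathfrak{sp}_\mathbb{R}$ forces $CXC^{-1} = -X$), evaluation of $\langle Xw,w\rangle_s$ on eigenvectors shows that $W_- := \bigoplus_{\lambda > 0} E_\lambda$ and $W_+ := \bigoplus_{\lambda < 0} E_\lambda$ are complementary $A$-Lagrangian subspaces, interchanged by $C$, on which $\langle\,\cdot\,,\,\cdot\,\rangle_s$ is negative resp.\ positive definite — precisely the configuration realised by $(V, V^\ast)$. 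A standard transitivity argument then produces $g \in \mathrm{Sp}_\mathbb{R}$ carrying $(V, V^\ast)$ to $(W_-, W_+)$; after that $\mathrm{Ad}(g^{-1})X$ preserves $V$ and $V^\ast$, restricts to a positive-definite Hermitian operator on $V$, and is therefore conjugated by a further element of the stabiliser of $V$ in $\mathrm{Sp}_\mathbb{R}$ (a copy of $\mathrm{U}(V)$) into $\mathfrak{t}_+$. Thus $X \in \mathcal{C}$.

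Finally, $\exp$ maps $\mathrm{i}\mathfrak{sp}_\mathbb{R}$ into $\mathrm{Fix}(\psi)$, since $X \in \mathrm{i}\mathfrak{sp}_\mathbb{R}$ gives $\sigma(\exp X) = \exp(-X) = (\exp X)^{-1}$, and $\psi : \mathrm{H}(W^s) \to \mathrm{H}(W^s)$ is the anti-holomorphic involution introduced above (cf.\ Corollary \ref{cor:3.2}), so $\mathrm{Fix}(\psi)$ is a closed, real-analytic, totally real submanifold of $\mathrm{H}(W^s)$ with $\dim_\mathbb{R} \mathrm{Fix}(\psi) = \dim_\mathbb{C} \mathrm{H}(W^s) = \dim_\mathbb{R} \mathrm{i}\mathfrak{sp}_\mathbb{R}$. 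Therefore $\exp|_\mathcal{C} : \mathcal{C} \to \mathrm{Fix}(\psi)$ is an injective immersion between manifolds of equal dimension, hence a real-analytic diffeomorphism onto an open subset; as that image is $M$, the set $M$ is open in $\mathrm{Fix}(\psi)$, real-analytically diffeomorphic to the open cone $\mathcal{C}$, and totally real of the asserted half-dimension. Being moreover closed in $\mathrm{H}(W^s)$ — hence in $\mathrm{Fix}(\psi)$ — and connected, $M$ equals the connected component $\mathrm{Fix}(\psi)^0$, which completes the proof.
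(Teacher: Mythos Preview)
Your proof is correct and follows the same overall architecture as the paper's: show $\mathcal{C}$ is open in $\mathrm{i}\mathfrak{sp}_\mathbb{R}$, use Lemma~\ref{lem:max-rank} plus equivariance to make $\exp|_\mathcal{C}$ a local diffeomorphism, conclude $M$ is open in $\mathrm{Fix}(\psi)$, and finish with closedness (Lemma~\ref{lem:M-closed}) and connectedness. Where you genuinely diverge is in the argument for openness of $\mathcal{C}$. The paper argues indirectly: it takes a neighborhood $U$ of $\mathfrak{t}_+$ in $\mathrm{i}\mathfrak{sp}_\mathbb{R}$ on which $\exp$ is a local diffeomorphism, observes that $\exp(\tfrac{1}{2}U)\subset \mathrm{H}(W^s)$ forces $\exp(U)\subset M=\exp(\mathcal{C})$, and then (using the already-established bijectivity of $\exp$ on $\mathcal{C}$) concludes $U\subset\mathcal{C}$, so $\mathcal{C}=\mathrm{Ad}(\mathrm{Sp}_\mathbb{R})U$ is open. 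You instead give the intrinsic description $\mathcal{C}=\{X\in\mathrm{i}\mathfrak{sp}_\mathbb{R}\mid \langle Xw,w\rangle_s<0\ \forall\,w\neq 0\}$ and prove it by a spectral argument that exactly parallels the proof of Proposition~\ref{M-slice} at the Lie-algebra level.

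Your route buys you more: the intrinsic description makes openness and the cone property transparent, yields convexity for free (which the paper does not state), and the ``standard transitivity'' step is really the well-known transitivity of $\mathrm{Sp}_\mathbb{R}$ on negative Lagrangians (the Siegel disc picture). The paper's route is shorter because it recycles the bijectivity of $\exp:\mathcal{C}\to M$ and the semigroup property $\pi(\exp(\xi/2))=\exp(\xi)$ rather than repeating a diagonalisation argument; on the other hand, its passage from $\exp(U)\subset\exp(\mathcal{C})$ to $U\subset\mathcal{C}$ is somewhat compressed and implicitly uses that $\exp$ is locally injective near $\mathfrak{t}_+$ together with $\mathfrak{t}_+\subset\mathcal{C}$. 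Both arguments are valid; yours is the more explicit of the two.
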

\begin{proof}
Lemma \ref{lem:max-rank} implies that $\mathfrak{t}_+$ possesses an
open neighborhood $U$ in $\mathrm{i} \mathfrak{sp}_\mathbb{R}$ so
that $\exp \vert_U$ is everywhere of maximal rank. Because $T_+$ lies
in $\mathrm{H}( W^s)$ and $\mathrm{H}(W^s)$ is open in $\mathrm{Sp}
\,$, by choosing $U$ small enough we may assume that $\exp(\frac{1}
{2} U) \subset \mathrm {H} (W^s)$ and therefore that $\exp(U) \subset
M = \exp (\mathcal{C})$. Since $\exp$ is a local diffeomorphism on
$U$, we may also assume that $U \subset \mathcal{C} = \mathrm{Ad}
(\mathrm{Sp}_\mathbb{R}) \mathfrak{t}_+\,$, and it then follows that
$\mathcal{C} = \mathrm{Ad}(\mathrm{Sp}_\mathbb{R}) U$. In particular,
this shows that $\mathcal{C}$ is open in $\mathrm{i}\mathfrak
{sp}_\mathbb{R}\,$. In summary,
\begin{displaymath}
  \mathcal{C} = \mathrm{Ad}(\mathrm{Sp}_\mathbb{R}) U
  \stackrel{\exp}{\longrightarrow}
  \mathrm{Int}(\mathrm{Sp}_\mathbb{R}) \exp(U) = M \subset
  \mathrm{Fix}(\psi)^0 \;.
\end{displaymath}
By the equivariance of $\exp\,$, we also know that it is everywhere
of maximal rank on $\mathcal{C}\,$.

Now $\psi$ is an anti-holomorphic involution. Therefore, $\mathrm
{Fix}(\psi)^0$ is a totally real, half-dimensional closed submanifold
of $\mathrm{H}(W^s)$, and since $\mathcal{C}$ is open in $\mathrm{i}
\mathfrak{sp}_\mathbb{R}\,$, we also know that $\dim_\mathbb{C}
\mathcal{C} = \dim_\mathbb{R} \mathrm{Fix}(\psi)^0$. The maximal rank
property of $\exp$ then implies that $M = \mathrm{im}(\exp :
\mathcal{C} \to \mathrm{Fix} (\psi)^0)$ is open in $\mathrm{Fix}
(\psi)^0$. In Lemma \ref{lem:M-closed} it was shown that $M$ is
closed in $\mathrm{H} (W^s)$. Thus it is open and closed in the
connected manifold $\mathrm{Fix}(\psi)^0$, and consequently $\exp :
\, \mathcal{C} \to M = \mathrm{Fix}(\psi)^0$ is a local
diffeomorphism of manifolds. Since we already know that $\exp : \,
\mathcal{C} \to M$ is bijective, the desired result follows.
\end{proof}
\begin{corollary}\label{semigroup product structure}
The two identifications $\mathrm{Sp}_\mathbb{R} \times M =
\mathrm{H}(W^s)$ defined by $(g,m) \mapsto gm$ and $(g,m) \mapsto m
g^{-1}$ are real-analytic diffeomorphisms. The fundamental group of
$\mathrm{H}(W^s)$ is isomorphic to $\pi_1(\mathrm{Sp}_\mathbb{R})
\simeq \mathbb{Z}\,$.
\end{corollary}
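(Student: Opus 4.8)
The plan is to assemble this corollary from the three results that immediately precede it in the section: the preceding Proposition (identifying $M = \exp(\mathcal{C})$ with the totally real closed submanifold $\mathrm{Fix}(\psi)^0$ of $\mathrm{H}(W^s)$ via the real-analytic diffeomorphism $\exp : \mathcal{C} \to M$), Corollary \ref{bijective} (asserting that the two maps $\mathrm{Sp}_\mathbb{R} \times M \to \mathrm{H}(W^s)$, $(g,m) \mapsto gm$ and $(g,m) \mapsto m g^{-1}$, are bijections), and the fact that $M$ is a closed submanifold of $\mathrm{H}(W^s)$ with $\dim_\mathbb{R} M = \dim_\mathbb{C} \mathrm{H}(W^s)$.

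First I would upgrade the set-theoretic bijections of Corollary \ref{bijective} to diffeomorphisms. The map $\mu : \mathrm{Sp}_\mathbb{R} \times M \to \mathrm{H}(W^s)$, $(g,m) \mapsto gm$, is real-analytic because multiplication in the complex Lie group $\mathrm{Sp}$ is holomorphic and $M$ is a real-analytic submanifold. By the preceding Proposition, $\dim_\mathbb{R}(\mathrm{Sp}_\mathbb{R} \times M) = \dim_\mathbb{R} \mathrm{Sp}_\mathbb{R} + \dim_\mathbb{C} \mathrm{H}(W^s) = \dim_\mathbb{C} \mathrm{Sp} + \dim_\mathbb{C} \mathrm{Sp} = \dim_\mathbb{R} \mathrm{Sp}$, which matches $\dim_\mathbb{R} \mathrm{H}(W^s)$ since $\mathrm{H}(W^s)$ is open in $\mathrm{Sp}$. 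So $\mu$ is a real-analytic bijection between manifolds of equal dimension; by the inverse function theorem it suffices to check that its differential is everywhere injective (equivalently, surjective). At a point $(g,m)$ one can use left translation by $g^{-1}$ and the $\mathrm{Sp}_\mathbb{R}$-equivariance to reduce to the point $(e,m)$; there the differential is $(X, Y) \mapsto dL_m X + Y$ for $X \in \mathfrak{sp}_\mathbb{R}$, $Y \in T_m M$, and one needs $dL_m(\mathfrak{sp}_\mathbb{R}) \cap T_m M = 0$ with the two spanning $\mathfrak{sp}$ (all transported to $\mathfrak{sp}$ via $dL_{m^{-1}}$). Using $m \in M \subset \mathrm{Fix}(\psi)$, $\psi(h) = \sigma(h^{-1})$, and that $\sigma$ is the anti-holomorphic involution fixing $\mathrm{Sp}_\mathbb{R}$, the tangent space $dL_{m^{-1}} T_m M$ lies in the $(+1)$-eigenspace of a conjugate of the involution $\mathrm{i}\mathfrak{sp}_\mathbb{R} \leftrightarrow$ the complex structure, while $\mathrm{Ad}(m^{-1})\mathfrak{sp}_\mathbb{R} = \mathfrak{sp}_\mathbb{R}$; the transversality $\mathfrak{sp} = \mathfrak{sp}_\mathbb{R} \oplus \mathrm{i}\mathfrak{sp}_\mathbb{R}$ then gives injectivity of $d\mu$. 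The argument for $(g,m) \mapsto m g^{-1}$ is identical with $\pi$ in place of $\pi'$. Hence both maps are local diffeomorphisms, and being bijective they are global real-analytic diffeomorphisms.

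For the fundamental group, I would use the diffeomorphism $\mathrm{H}(W^s) \cong \mathrm{Sp}_\mathbb{R} \times M$ just established, so that $\pi_1(\mathrm{H}(W^s)) \simeq \pi_1(\mathrm{Sp}_\mathbb{R}) \times \pi_1(M)$. By the preceding Proposition, $M$ is diffeomorphic to the convex-like open cone $\mathcal{C} \subset \mathrm{i}\mathfrak{sp}_\mathbb{R}$; I would note that $\mathcal{C} = \mathrm{Ad}(\mathrm{Sp}_\mathbb{R})\mathfrak{t}_+$ is star-shaped about any of its points (scaling $\xi \mapsto s\xi$, $s \in (0,1]$, preserves $\mathfrak{t}_+$ and commutes with $\mathrm{Ad}$), hence contractible, so $\pi_1(M) = 1$. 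Therefore $\pi_1(\mathrm{H}(W^s)) \simeq \pi_1(\mathrm{Sp}_\mathbb{R})$, and the standard fact that the real symplectic group $\mathrm{Sp}_\mathbb{R} = \mathrm{Sp}(W_{0,\mathbb{R}})$ has $\pi_1 \simeq \mathbb{Z}$ (it deformation retracts onto its maximal compact $\mathrm{U}(W_0^+)$ for a positive Lagrangian-compatible complex structure, and the unitary group has $\pi_1 \simeq \mathbb{Z}$ via the determinant) completes the proof.

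The main obstacle I anticipate is the transversality computation establishing that $d\mu$ is injective — specifically, verifying cleanly that $dL_{m^{-1}}(T_m M)$ and $\mathrm{Ad}(m^{-1})\mathfrak{sp}_\mathbb{R}$ are complementary in $\mathfrak{sp}$. This hinges on correctly tracking how the anti-holomorphic involution $\psi$ and the squaring map interact at points of $M$; once one observes that $M = \mathrm{Fix}(\psi)^0$ is totally real of half dimension, the tangent space $T_m M$ is forced into the $\mathrm{i}$-eigenspace side of the decomposition and the transversality is automatic, but phrasing this without circularity (we used $M$'s submanifold structure to get there) requires care. Everything else is either a direct appeal to the preceding Proposition and Corollary or a standard homotopy-theoretic fact.
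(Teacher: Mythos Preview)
Your treatment of the fundamental group is essentially the paper's: $M \simeq \mathcal{C}$ is contractible and $\mathrm{Sp}_\mathbb{R}$ retracts onto its maximal compact $\mathrm{U}(V)$, whose $\pi_1$ is $\mathbb{Z}$.

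For the diffeomorphism statement, however, your transversality argument contains a real error. You assert $\mathrm{Ad}(m^{-1})\mathfrak{sp}_\mathbb{R} = \mathfrak{sp}_\mathbb{R}$, but $m \in M \subset \exp(\mathrm{i}\mathfrak{sp}_\mathbb{R})$ is \emph{not} in $\mathrm{Sp}_\mathbb{R}$; from $\sigma(m)=m^{-1}$ one gets $\sigma\circ\mathrm{Ad}(m) = \mathrm{Ad}(m^{-1})\circ\sigma$, so $\mathrm{Ad}(m)$ preserves the real form only when $\mathrm{Ad}(m^2)=\mathrm{Id}$, which fails generically. The injectivity of $d\mu$ can nevertheless be salvaged along your lines: left-trivializing, $T_m M$ corresponds to $\{\xi\in\mathfrak{sp}:\sigma(\mathrm{Ad}(m)\xi)=-\xi\}$, and intersecting this with $\mathfrak{sp}_\mathbb{R}$ yields $\mathrm{Ad}(m)\xi=-\xi$, forcing $\xi=0$ because all eigenvalues of $m$ (hence of $\mathrm{Ad}(m)$) are positive real, by the proof of Proposition~\ref{M-slice}. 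But that is not the argument you gave.

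The paper bypasses the differential altogether and simply exhibits a smooth inverse. Given $h\in\mathrm{H}(W^s)$, set $m'=\sqrt{\pi'(h)}$ and $g'=h\,(m')^{-1}$ to invert $(g,m)\mapsto gm$, and $m=\sqrt{\pi(h)}$, $g=h^{-1}m$ to invert $(g,m)\mapsto mg^{-1}$. The square root on $M$ is real-analytic because $\exp:\mathcal{C}\to M$ is a real-analytic diffeomorphism by the preceding Proposition and the squaring map on $M$ corresponds to the linear map $\xi\mapsto 2\xi$ on $\mathcal{C}$. This is shorter and sidesteps exactly the obstacle you flagged.
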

\begin{proof}
The first statement is proved by explicitly constructing a smooth
inverse to each of the two maps. For this let $g^\prime m^\prime = m
g^{-1} = h \in \mathrm{H}(W^s)$ and note that $m = \sqrt{\pi(h)}$ and
$m^\prime = \sqrt{\pi^\prime (h)}$. Since the square root is a smooth
map on $M\,$, a smooth inverse in the two cases is defined by
\begin{displaymath}
    h \mapsto ( h \sqrt{\pi^\prime(h)}^{\ -1} , \sqrt{\pi^\prime(h)})
    \;, \quad \text{resp.} \quad
    h \mapsto ( h^{-1} \sqrt{\pi(h)} , \sqrt{\pi(h)} ) \;.
\end{displaymath}
The second statement follows from $\mathcal{C} \simeq M$ by $\exp$
and the fact that $\mathrm{Sp}_\mathbb{R}$ is a product of a cell and
a maximal compact subgroup $K\,$. We choose $K$ to be the unitary
group $\mathrm{U} = \mathrm{U}(V, \langle \, ,\,\rangle)$ acting
diagonally on $W = V \oplus V^*$ and recall that $\pi_1(\mathrm{U})
\simeq \mathbb{Z}\,$.
\end{proof}

\subsection{Oscillator semigroup and metaplectic group}
\label{oscillator semigroup}

Recall that we are concerned with the Lie algebra representation of
$\mathfrak{sp}_\mathbb{R} \subset \mathfrak{sp}$ which is defined by
the identification of $\mathfrak{sp}$ with the set of symmetrized
elements of degree two in the Weyl algebra $\mathfrak{w}(W)$ and the
representation of $\mathfrak{w}(W)$ on $\mathfrak{a}(V)$. In
$\S$\ref{metaplectic rep} we construct the oscillator representation
of the metaplectic group $\mathrm{Mp}\,$, a 2:1 cover of
$\mathrm{Sp}_\mathbb{R}\,$, which integrates this Lie algebra
representation. Observe that since $\pi_1(\mathrm{Sp}_\mathbb{R})
\simeq \mathbb{Z}$ and $\mathbb{Z}$ has only one subgroup of index
two, there is a unique such covering $\tau : \, \mathrm{Mp} \to
\mathrm{Sp}_\mathbb{R}\,$. The method of construction \cite{H2} we
use first yields a re\-presentation of the 2:1 covering space
$\widetilde{\mathrm{H}} (W^s)$ of $\mathrm{H}(W^s)$ and then realizes
the oscillator representation of $\mathrm{Mp}$ by taking limits that
correspond to going to $\mathrm{Sp}_\mathbb{R}$ in the boundary of
$\mathrm{H}(W^s)$. This representation of the \emph{oscillator
semigroup} $\widetilde{\mathrm{H}}(W^s)$ is for our purposes at least
as important as the representation of $\mathrm{Mp}\,$.

The goal of the present section is to lift all essential structures
on $\mathrm{H}(W^s)$ to $\widetilde{\mathrm{H}}(W^s)$.

\subsubsection{Lifting the semigroup}

We begin by recalling a few basic facts about covering spaces. If $G$
is a connected Lie group, its universal covering space $U$ carries a
canonical group structure: an element $u \in U$ in the fiber over $g
\in G$ is a homotopy class $u \equiv [\alpha_g]$ of paths $\alpha_g :
\, [0,1] \to G$ connecting $g$ with the neutral element $e \in G\,$;
and an associative product $U \times U \to U$, $(u_1,u_2) \mapsto u_1
u_2\,$, is defined by taking $u_1 u_2$ to be the unique homotopy
class which is given by pointwise multiplication of any two paths
representing the homotopy classes $u_1, u_2\,$. The fundamental group
$\pi_1(G) \equiv \pi_1(G,e)$ acts on $U$ by monodromy, i.e., if
$[\alpha_g] = u \in U$ and $[c] = \gamma \in \pi_1(G)$, then one sets
$\gamma(u) := [ \alpha_g \ast c] \in U$ where $\alpha_g \ast c$ is
the path from $g$ to $e$ which is obtained by composing the path
$\alpha_g$ with the loop $c$ based at $e\,$. This $\pi_1(G)$-action
satisfies the compatibility condition $\gamma_1 (u_1) \gamma_2(u_2) =
(\gamma_1 \gamma_2)(u_1 u_2)$ and in that sense is central.

The situation for our semigroup $\mathrm{H} (W^s)$ is analogous
except for the minor complication that the distinguished point $e =
\mathrm{Id}$ does not lie in $\mathrm{H}(W^s)$ but lies in the
closure of $\mathrm{H}(W^s)$. Hence, by the same principles, the
universal cover $U$ of $\mathrm{H}(W^s)$ comes with a product
operation and there is a central action of $\pi_1(\mathrm{H} (W^s))$
on $U$. Moreover, the product $U \times U \to U$ still is
associative. To see this, first notice that the subsemigroup $T_+
\subset \mathrm{H}(W^s)$ is simply connected and as such is
canonically embedded in $U$. Then for $u_1, u_2, u_3 \in U$ observe
that $u_1 (u_2 u_3) = \gamma ((u_1 u_2) u_3)$ where $\gamma \in \pi_1
(\mathrm{H}(W^s))$ could theoretically depend on the $u_j\,$.
However, any such dependence has to be continuous and the fundamental
group is discrete, so in fact $\gamma$ is independent of the $u_j$
and, since $\gamma$ is the identity when the $u_j$ are in $T_+$
(lifted to $U$), the associativity follows.

Let now $\Gamma \simeq 2\mathbb{Z}$ denote the subgroup of index two in
$\pi_1 (\mathrm{H}(W^s)) \simeq \mathbb{Z}$ and consider $\widetilde{
\mathrm{H}} (W^s) := U/\Gamma$, which is our object of interest.
Since the $\Gamma$-action on $U$ is central, i.e., $\gamma_1 (u_1)
\gamma_2 (u_2) = (\gamma_1 \gamma_2) (u_1 u_2)$ for all $\gamma_1 ,
\gamma_2 \in \Gamma$ and $u_1 , u_2 \in U$, the product $U \times U
\to U$ descends to a product $U/\Gamma \times U/\Gamma \to U/\Gamma
\,$. Thus $U/ \Gamma = \widetilde{\mathrm{H}} (W^s)$ is a semigroup,
and the situation at hand is summarized by the following statement.
\begin{proposition}
The 2:1 covering $\tau_H : \, U / \Gamma = \widetilde{\mathrm{H}}
(W^s) \to \mathrm{H}(W^s)$, $[\alpha_h]\, \Gamma \mapsto h\,$, is a
homomorphism of semigroups.
\end{proposition}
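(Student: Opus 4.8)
The plan is to verify three things about $\tau_H$, in each case by reduction to structure already set up above: that it is well defined on the quotient $U/\Gamma$, that it is a 2:1 covering, and that it intertwines the semigroup products.

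First I would note that the rule ``record the endpoint away from the basepoint'' is well defined on the universal cover $U$ and descends to $U/\Gamma$. By construction, a point of $U$ over $h\in\mathrm{H}(W^s)$ is a homotopy class (rel endpoints) of paths connecting $h$ to the distinguished point $\mathrm{Id}$, running inside $\mathrm{H}(W^s)$ with only the $\mathrm{Id}$-end on the boundary, and the position of an endpoint is a homotopy invariant, so the rule is well defined on $U$. The monodromy action of $\gamma=[c]\in\pi_1(\mathrm{H}(W^s))$ replaces a representative $\alpha_h$ by $\alpha_h\ast c$, which differs from $\alpha_h$ only by a loop attached at $\mathrm{Id}$ and therefore still ends at $h$; hence the rule is constant on every $\pi_1(\mathrm{H}(W^s))$-orbit, in particular on every $\Gamma$-orbit, and descends to $\tau_H\colon U/\Gamma\to\mathrm{H}(W^s)$, $[\alpha_h]\,\Gamma\mapsto h$. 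Since $\Gamma\simeq 2\mathbb Z$ acts freely and properly discontinuously on $U$ and has index two in $\pi_1(\mathrm{H}(W^s))\simeq\mathbb Z$, the standard theory of covering spaces gives that $\tau_H$ is a 2:1 covering.

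For the homomorphism property I would unwind the product on $U$ recalled above: for $u_1=[\alpha_1]$ and $u_2=[\alpha_2]$, the product $u_1u_2$ is the homotopy class of the pointwise product path $t\mapsto\alpha_1(t)\,\alpha_2(t)$. This path is admissible because $\mathrm{H}(W^s)$ is a semigroup (it is even stabilized by $\mathrm{Sp}_\mathbb R\times\mathrm{Sp}_\mathbb R$), and it joins $\mathrm{Id}\cdot\mathrm{Id}=\mathrm{Id}$ to $h_1h_2$. Writing $\tau_H$ also for the endpoint map on $U$, we thus obtain $\tau_H(u_1u_2)=h_1h_2=\tau_H(u_1)\,\tau_H(u_2)$. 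Since the product on $U$ descends to $U/\Gamma$ — this was already checked, using centrality of the $\Gamma$-action, $\gamma_1(u_1)\,\gamma_2(u_2)=(\gamma_1\gamma_2)(u_1u_2)$ — and the endpoint map descends to $\tau_H$, the identity $\tau_H(\bar u_1\bar u_2)=\tau_H(\bar u_1)\,\tau_H(\bar u_2)$ holds on $\widetilde{\mathrm H}(W^s)$. Hence $\tau_H$ is a homomorphism of semigroups.

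No step is a real obstacle; the one delicate point is the bookkeeping around the basepoint $\mathrm{Id}$, which lies in $\mathrm{cl}(\mathrm{H}(W^s))$ but not in $\mathrm{H}(W^s)$. One must ensure that the pointwise product of two admissible based paths is again admissible, which holds because $\mathrm{cl}(\mathrm{H}(W^s))$ is a semigroup containing $\mathrm{Id}$ and the product of a point of $\mathrm{Sp}_\mathbb R$ with a point of $\mathrm{H}(W^s)$ lies in $\mathrm{H}(W^s)$. Everything else — associativity of the product on $U$, its descent to the quotient, and the covering structure — has been arranged in the discussion preceding the statement.
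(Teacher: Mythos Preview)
Your proposal is correct and follows exactly the approach implicit in the paper: the paper does not give a separate proof of this proposition but presents it as a summary of the preceding discussion, and your argument simply unpacks that discussion---well-definedness of the endpoint map under the monodromy action, the covering-space count from $[\pi_1:\Gamma]=2$, and the homomorphism property from the pointwise-product definition of multiplication on $U$. Your attention to the boundary-basepoint issue is the one nontrivial bookkeeping point, and you handle it the same way the paper's setup does.
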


\subsubsection{Actions of the metaplectic group}

Recall that we have two 2:1 coverings: a homomorphism of groups
$\tau:\, \mathrm{Mp} \to \mathrm{Sp}_\mathbb{R}\,$, along with a
homomorphism of semigroups $\tau_H : \, \widetilde{\mathrm{H}}(W^s)
\to \mathrm{H}(W^s)$. Now, a pair of elements $(g^\prime,g) \in
\mathrm{Sp}_\mathbb{R} \times \mathrm{Sp}_\mathbb{R}$ determines a
transformation $h \mapsto g^\prime h g^{-1}$ of $\mathrm{H} (W^s)$,
and by the homotopy lifting property of covering maps a corresponding
action of $\mathrm{Mp} \times \mathrm{Mp}$ on $\widetilde{
\mathrm{H}}(W^s)$ is obtained as follows.

Consider the canonical mapping $\mathrm{Mp} \times M \to
\mathrm{Sp}_\mathbb{R} \times M$ given by $\tau\,$. By the
real-ana\-lytic diffeomorphism $\mathrm{Sp}_\mathbb{R} \times M \to
\mathrm{H} (W^s)$, $(g,m) \mapsto g m\,$, this map can be regarded as
a 2:1 covering of $\mathrm{H}(W^s)$, and since any two 2:1 coverings
are isomorphic we get an identification of the covering space
$\widetilde{\mathrm{H}}(W^s)$ with $\mathrm{Mp} \times M\,$.
Moreover, the action of the group $\mathrm{Mp}$ on itself by left
translations induces on $\mathrm{Mp} \times M \simeq
\widetilde{\mathrm{ H}} (W^s)$ an $\mathrm{Mp}$-action which, by
construction, satisfies the relation
\begin{displaymath}
    \tau_H(g \cdot h) = \tau(g) \tau_H(h) \quad (g \in \mathrm{Mp}
    \, , \,\, h \in \widetilde{\mathrm{H}}(W^s) ) \;.
\end{displaymath}
This can be viewed as a statement of $\mathrm{Mp}$-equivariance of the
covering map $\tau_H\,$.

Now, we have another real-analytic diffeomorphism
$\mathrm{Sp}_\mathbb{R} \times M \to \mathrm{H}(W^s)$ by $(g,m)
\mapsto m g^{-1}$, which transfers left translation in
$\mathrm{Sp}_\mathbb{R}$ to right multiplication on
$\mathrm{H}(W^s)$, and by using it we can repeat the above
construction. The result is another identification $\widetilde{
\mathrm{H}}(W^s) \simeq \mathrm{Mp} \times M$ and another
$\mathrm{Mp}$-action on $\widetilde{\mathrm{H}}(W^s)$. Altogether we
then have two actions of $\mathrm{Mp}$ on $\widetilde{\mathrm{H}}
(W^s)$. The essence of the next statement is that they commute.
\begin{proposition}\label{lifting the product action}
There is a real-analytic action $(g_1,g_2) \mapsto (g_1,g_2) \cdot h$
of $\, \mathrm{Mp} \times \mathrm{Mp}$ on $\widetilde{\mathrm{H}}
(W^s)$ such that the covering $\tau_H : \, \widetilde{\mathrm{H}}
(W^s) \to \mathrm{H} (W^s)$ is $(\mathrm{Mp} \times
\mathrm{Mp})$-equivariant:
\begin{displaymath}
    \tau_H( (g_1,g_2) \cdot h) = \tau(g_1) \tau_H(h) \tau(g_2)^{-1}\;.
\end{displaymath}
\end{proposition}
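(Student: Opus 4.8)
The plan is to assemble the two one-sided $\mathrm{Mp}$-actions already constructed on $\widetilde{\mathrm{H}}(W^s)$ into a single action of $\mathrm{Mp}\times\mathrm{Mp}$, and to verify that they commute by a lifting/uniqueness argument. First I would record the two diffeomorphisms $\Phi_L,\Phi_R\colon\mathrm{Sp}_\mathbb{R}\times M\to\mathrm{H}(W^s)$, $(g,m)\mapsto gm$ and $(g,m)\mapsto mg^{-1}$, supplied by Corollary~\ref{semigroup product structure}; pulling back the covering $\tau_H$ along each of them and using the uniqueness of the index-two cover $\mathrm{Mp}\to\mathrm{Sp}_\mathbb{R}$, one identifies $\widetilde{\mathrm{H}}(W^s)\simeq\mathrm{Mp}\times M$ in two ways. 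Left translation of $\mathrm{Mp}$ on the first factor, transported through the first identification, gives the first action $g_1\cdot h$; doing the same through the second identification gives $g_2\cdot h$. In each case the defining relation $\tau_H(g_i\cdot h)=\tau(g_i)^{\pm1}\cdot\tau_H(h)$ (with $\tau(g_1)$ on the left and $\tau(g_2)^{-1}$ on the right) holds by construction, exactly as in the one-sided discussion preceding the statement, and both actions are real-analytic because all the maps involved are.

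The substantive point is that the two $\mathrm{Mp}$-actions commute, so that $(g_1,g_2)\cdot h:=g_1\cdot(g_2\cdot h)$ is a genuine $\mathrm{Mp}\times\mathrm{Mp}$-action. I would argue this by a monodromy/continuity argument of the kind already used twice in the excerpt (for associativity of the product on $U$ and for independence of $\gamma$ from the $u_j$). Fix $h\in\widetilde{\mathrm{H}}(W^s)$ and $g_1,g_2\in\mathrm{Mp}$. Both $g_1\cdot(g_2\cdot h)$ and $g_2\cdot(g_1\cdot h)$ are lifts of the same element $\tau(g_1)\,\tau_H(h)\,\tau(g_2)^{-1}\in\mathrm{H}(W^s)$, since downstairs left and right multiplication by elements of $\mathrm{Sp}_\mathbb{R}$ obviously commute. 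Hence they differ by the action of some deck transformation $\gamma=\gamma(g_1,g_2,h)\in\Gamma\cong\mathbb{Z}/2$ of $\tau_H$. The assignment $(g_1,g_2,h)\mapsto\gamma(g_1,g_2,h)$ is continuous (both composite actions are real-analytic and the set of lifts is discrete), and $\mathrm{Mp}\times\mathrm{Mp}\times\widetilde{\mathrm{H}}(W^s)$ is connected, so $\gamma$ is constant; evaluating at $g_1=g_2=e$ (where $g_i\cdot h=h$) gives $\gamma=\mathrm{id}$. Therefore the two actions commute, and the combined map is a well-defined, real-analytic $(\mathrm{Mp}\times\mathrm{Mp})$-action. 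Finally, the equivariance of $\tau_H$ follows by composing the two one-sided equivariance relations:
\begin{displaymath}
    \tau_H\big((g_1,g_2)\cdot h\big)=\tau_H\big(g_1\cdot(g_2\cdot h)\big)
    =\tau(g_1)\,\tau_H(g_2\cdot h)=\tau(g_1)\,\tau_H(h)\,\tau(g_2)^{-1}\;.
\end{displaymath}

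I expect the main obstacle to be the bookkeeping around the two identifications $\widetilde{\mathrm{H}}(W^s)\simeq\mathrm{Mp}\times M$: one must check that the \emph{same} cover $\widetilde{\mathrm{H}}(W^s)=U/\Gamma$ is obtained from $\Phi_L$ and from $\Phi_R$ (this is where uniqueness of the index-two subgroup of $\pi_1(\mathrm{H}(W^s))\cong\mathbb{Z}$, noted in Corollary~\ref{semigroup product structure}, does the work), and that the $\mathrm{Mp}$-action read off from the first identification really is a \emph{left} action while that from the second is also a left action of a \emph{commuting} copy — the subtlety being the $g^{-1}$ in $\Phi_R$, which turns right multiplication on $\mathrm{H}(W^s)$ into a left action of $\mathrm{Mp}$ on the $\mathrm{Mp}$-factor. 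Once that is set up cleanly, the commutation is the soft connectedness argument above and presents no real difficulty. A secondary technical point is to confirm that the $\mathrm{Mp}$-action defined on $\mathrm{Mp}\times M$ by left translation on the first factor is indeed real-analytic as a map on $\widetilde{\mathrm{H}}(W^s)$; this is immediate from the real-analyticity of $\Phi_L,\Phi_R$ and of group multiplication in $\mathrm{Mp}$.
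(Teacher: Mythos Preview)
Your proposal is correct and follows essentially the same approach as the paper: both arguments construct the two one-sided $\mathrm{Mp}$-actions via the identifications $\widetilde{\mathrm{H}}(W^s)\simeq\mathrm{Mp}\times M$, observe that their failure to commute is measured by a deck transformation in $\mathbb{Z}_2$ (since the projected actions on $\mathrm{H}(W^s)$ commute), and then use connectedness of $\mathrm{Mp}$ together with discreteness of $\mathbb{Z}_2$ to conclude triviality. The paper phrases the last step via the commutator $(g_1,e)(e,g_2)(g_1,e)^{-1}(e,g_2)^{-1}$ acting as a deck transformation, whereas you compare $g_1\cdot(g_2\cdot h)$ and $g_2\cdot(g_1\cdot h)$ directly, but this is only a cosmetic difference.
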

\begin{proof}
By construction, the stated equivariance property of $\tau_H$ holds
for each of the two actions of $\mathrm{Mp}$ separately. It then
follows that it holds for all $(g_1,g_2) \in \mathrm{Mp} \times
\mathrm{Mp}$ if the two actions commute. But by $\tau_H( (g_1,e)
\cdot h) = \tau(g_1) \tau_H(h)$ and $\tau_H( (e,g_2) \cdot h) =
\tau_H(h) \tau(g_2)^{-1}$ the commutator
\begin{displaymath}
    g := (g_1,e) (e,g_2) (g_1,e)^{-1} (e,g_2)^{-1}
\end{displaymath}
acts trivially on $\mathrm{H}(W^s)$ by $\tau_H\,$, i.e., $\tau_H(g
\cdot h) = \tau_H(h)$. Therefore $g$ can be regarded as being in the
covering group $\tau^{-1}(\mathrm{Id}) = \mathbb{Z}_2$ of the
covering $\tau : \, \mathrm{Mp} \to \mathrm{Sp}_\mathbb{R}\,$. Since
we can connect both $g_1$ and $g_2$ to the identity $e \in
\mathrm{Mp}$ by a continuous curve, it follows from the discreteness
of $\mathbb{Z}_2$ that $g \in \mathrm{Mp} \times \mathrm{Mp}$ acts
trivially on $\widetilde{\mathrm{H}}(W^s)$.
\end{proof}
Notice that since the submanifold $M \subset \mathrm{H} (W^s)$ is
simply connected, there exists a can\-onical lifting of $M$ (which we
still denote by $M$) to the cover $\widetilde{\mathrm{H}}(W^s)$; this
is the unique lifting by which $T_+ \subset M$ is embedded as a
subsemigroup in $\widetilde{\mathrm{H}}(W^s)$. Proposition
\ref{lifting the product action} then allows us to write $\widetilde{
\mathrm{H}}(W^s) = \mathrm{Mp}.M.\mathrm{Mp}$.

\subsubsection{Lifting involutions}

Let us now turn to the issue of lifting the various involutions at
hand. As a first remark, we observe that any Lie group automorphism
$\varphi: \, \mathrm{Sp}_\mathbb{R} \to \mathrm{Sp}_\mathbb{R}$
uniquely lifts to a Lie group automorphism $\widetilde{\varphi}$ of
the universal covering group $\widetilde{\mathrm{Sp}}_\mathbb{R}\,$,
and the latter induces an auto\-morphism of the fundamental group
$\pi_1(\mathrm{Sp}_\mathbb{R}) \simeq \mathbb{Z}$ viewed as a
subgroup of the center of $\widetilde{\mathrm{Sp}}_\mathbb{R}\,$. Now
$\mathrm{Aut}( \pi_1( \mathrm{Sp}_\mathbb{R} )) \simeq \mathrm{Aut}
(\mathbb{Z}) \simeq \mathbb{Z}_2$ and both elements of this
automorphism group stabilize the subgroup $\Gamma \simeq 2\mathbb{Z}$
in $\pi_1(\mathrm{Sp}_\mathbb{R})$. Therefore $\widetilde{ \varphi}$
induces an automorphism of $\mathrm{Mp} = \widetilde{\mathrm{Sp}
}_\mathbb{R} / \Gamma\,$.

Since the operation $h\mapsto h^{-1}$ canonically lifts from
$\mathrm{Sp}_\mathbb{R}$ to $\mathrm{Mp}$ and $h \mapsto (h^{-1}
)^\dagger$ is a Lie group automorphism of $\mathrm{Sp}_\mathbb{R}\,$,
it follows that Hermitian conjugation $h\mapsto h^\dagger$ has a
natural lift to $\mathrm{Mp}\,$. The same goes for the Lie group
automorphism $h \mapsto s h s$ of $\mathrm{Sp}_\mathbb{R}\,$.
\begin{proposition}
Hermitian conjugation $h \mapsto h^\dagger$ and the involution $h
\mapsto s h s$ lift to unique maps with the property that they
stabilize the lifted manifold $M$. In particular, the basic
anti-holomorphic map $\psi:\, \mathrm{H}(W^s) \to \mathrm{H}(W^s)$,
$h \mapsto \sigma (h^{-1})= s h^\dagger s$ lifts to an
anti-holomorphic map $\widetilde{\psi}: \, \widetilde{\mathrm{H}}
(W^s)\to \widetilde{\mathrm {H}}(W^s)$ which is the identity on $M$
and $\mathrm{Mp} \times \mathrm{Mp}$-equivariant in that
$\widetilde{\psi} (g_1 x\, g_2^{-1}) = g_2 \widetilde{\psi}(x)
g_1^{-1}$ for all $g_1, g_2\in \mathrm{Mp}$ and $x \in
\widetilde{\mathrm{H}}(W^s)$.
\end{proposition}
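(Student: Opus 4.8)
The plan is to lift by the standard covering-space criterion, using a base point in $T_+$ for the normalization and the splitting $\tau_H^{-1}(M) = M \sqcup d(M)$ of the preimage of the canonically lifted slice, where $d$ denotes the nontrivial deck transformation of $\tau_H$.

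First I would assemble the elementary facts downstairs. The map $h\mapsto h^\dagger$ is an anti-holomorphic involution of $\mathrm{Sp}$ reversing the product, and $h\mapsto shs$ is a holomorphic involutive automorphism; by Corollary \ref{cor:3.2} (and the remark preceding it, using that $g^\dagger = sg^{-1}s$ and $sgs$ lie in $\mathrm{Sp}_\mathbb{R}$ for $g\in\mathrm{Sp}_\mathbb{R}$, while $t^\dagger = t$ and $sts = t$ for $t\in T_+$) each of these maps carries $\mathrm{H}(W^s)$ to itself, fixes $T_+$ pointwise, and sends $M = \mathrm{Int}(\mathrm{Sp}_\mathbb{R})T_+$ bijectively onto $M$. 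Their composite $\psi\colon h\mapsto \sigma(h^{-1}) = sh^\dagger s$ is then an anti-holomorphic involution which fixes $T_+$ pointwise, restricts to the identity on $M$ (from the identification $M = \mathrm{Fix}(\psi)^0$ established earlier), and obeys $\psi(g_1 h g_2^{-1}) = g_2\,\psi(h)\,g_1^{-1}$ for $g_1,g_2\in\mathrm{Sp}_\mathbb{R}$, since $\sigma$ is multiplicative and restricts to the identity on $\mathrm{Sp}_\mathbb{R}$.

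Next I would construct the lifts. Fix $x_0\in T_+$ with canonical lift $\widetilde x_0\in T_+\subset M\subset\widetilde{\mathrm{H}}(W^s)$; since $M$ is simply connected, $\tau_H$ maps its canonical lift homeomorphically onto $M\subset\mathrm{H}(W^s)$, so $\tau_H^{-1}(M) = M\sqcup d(M)$ with $d$ the fixed-point-free nontrivial deck transformation of the Galois $2{:}1$ cover $\tau_H$. For $f$ any one of $h\mapsto h^\dagger$, $h\mapsto shs$, $\psi$, the composite $f\circ\tau_H$ sends $\widetilde x_0$ to $x_0 = \tau_H(\widetilde x_0)$, and $f_*$ is an automorphism of $\pi_1(\mathrm{H}(W^s))\simeq\mathbb{Z}$ (Corollary \ref{semigroup product structure}), which, $\mathrm{Aut}(\mathbb{Z})\simeq\mathbb{Z}_2$ being what it is, stabilizes the index-two subgroup $\Gamma\simeq 2\mathbb{Z} = (\tau_H)_*\pi_1(\widetilde{\mathrm{H}}(W^s))$. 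The lifting criterion applies, yielding a unique continuous $\widetilde f$ with $\tau_H\circ\widetilde f = f\circ\tau_H$ and $\widetilde f(\widetilde x_0) = \widetilde x_0$; since $\tau_H$ is a holomorphic covering, locally $\widetilde f = \tau_H^{-1}\circ f\circ\tau_H$, so $\widetilde f$ is holomorphic resp.\ anti-holomorphic according as $f$ is, and since $f^2 = \mathrm{id}$ while $d$ has no fixed point, $\widetilde f^2$ covers $\mathrm{id}$ and fixes $\widetilde x_0$, hence $\widetilde f^2 = \mathrm{id}$ and $\widetilde f$ is a homeomorphism. Then $\widetilde f(M)$ is connected, contains $\widetilde x_0$, and lies in $\tau_H^{-1}(f(M)) = M\sqcup d(M)$, so $\widetilde f(M)\subset M$, and applying this to $\widetilde f^{-1} = \widetilde f$ gives $\widetilde f(M) = M$; the only other lift of $f$ is $d\circ\widetilde f$, which maps $M$ into $d(M)$, and $d(M)\cap M = \emptyset$ because $\tau_H|_M$ is injective and $d$ fixes no point, so $\widetilde f$ is the unique $M$-stabilizing lift. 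In particular $\widetilde\psi|_M$ covers $\psi|_M = \mathrm{id}_M$ and fixes $\widetilde x_0$, and $M$ is simply connected, whence $\widetilde\psi|_M = \mathrm{id}_M$.

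Finally, for the $\mathrm{Mp}\times\mathrm{Mp}$-equivariance I would compare two lifts of one map: by Proposition \ref{lifting the product action}, for fixed $g_1,g_2\in\mathrm{Mp}$ both $x\mapsto\widetilde\psi(g_1 x g_2^{-1})$ and $x\mapsto g_2\,\widetilde\psi(x)\,g_1^{-1}$ are continuous self-maps of $\widetilde{\mathrm{H}}(W^s)$ covering the single map $h\mapsto\tau(g_2)\,\psi(h)\,\tau(g_1)^{-1} = \psi(\tau(g_1)h\tau(g_2)^{-1})$, hence differ by a deck transformation; this transformation depends continuously on $(g_1,g_2)$, takes values in the discrete group $\{\mathrm{id},d\}$, and is $\mathrm{id}$ at $(e,e)$, so it is $\mathrm{id}$ throughout the connected group $\mathrm{Mp}\times\mathrm{Mp}$, giving $\widetilde\psi(g_1 x g_2^{-1}) = g_2\,\widetilde\psi(x)\,g_1^{-1}$ for all $g_1,g_2,x$. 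I expect the only genuinely delicate points to be the choice of normalization — a base point in $T_+$ that every $f$ fixes and through which the canonical lift of $M$ passes — and the identity $d(M)\cap M = \emptyset$ that forces the uniqueness; everything else is mechanical bookkeeping with covering maps.
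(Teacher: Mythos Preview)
Your argument is correct and takes a genuinely different route from the paper's. The paper first observes that each involution is canonically defined on the lifted $M$ (since $M$ is simply connected), then invokes the \emph{identity principle} for (anti-)holomorphic maps on a connected complex manifold with a totally real half-dimensional submanifold to get uniqueness of any extension, and finally constructs the extensions explicitly via the product decomposition $\widetilde{\mathrm{H}}(W^s)=\mathrm{Mp}\cdot M$, e.g.\ $gm\mapsto m^\dagger g^\dagger$. You instead use the standard covering-space lifting criterion (noting that any automorphism of $\pi_1(\mathrm{H}(W^s))\simeq\mathbb{Z}$ preserves $2\mathbb{Z}$) for existence, and the splitting $\tau_H^{-1}(M)=M\sqcup d(M)$ together with fixed-point-freeness of $d$ for uniqueness. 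Your approach is purely topological and avoids the identity principle entirely; the paper's approach leans on the complex-analytic structure of $M\subset\widetilde{\mathrm{H}}(W^s)$ but yields the explicit formulas for the lifts directly. Your equivariance argument (two lifts of one map differ by a locally constant deck transformation, trivial at $(e,e)$) is the same mechanism the paper already used in Proposition~\ref{lifting the product action}, whereas the paper here derives equivariance from the observation that $g\mapsto sg^\dagger s$ equals inversion on $\mathrm{Mp}$.
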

\begin{proof}
Recall that the simply connected space $M \subset \mathrm{H} (W^s)$
has a canonical lifting (still denoted by $M$) to $\widetilde{
\mathrm{H}}(W^s)$. Since all of our involutions stabilize $M$ as a
submanifold of $\mathrm{H}(W^s)$, they are canonically defined on the
lifted manifold $M\,$. In particular, the involution $\psi$ on $M$ is
the identity map, and therefore so is the lifted involution
$\widetilde{\psi}$.

Note furthermore that the involution defined by $h \mapsto s h s$ is
holomorphic on $\mathrm{H}(W^s)$ and that the other two are
anti-holomorphic. Now $\widetilde{\mathrm{H}}(W^s)$ is connected and
the lifted version of $M$ is a totally real submanifold of
$\widetilde{\mathrm{H}}(W^s)$ with $\dim_\mathbb{R} M =
\dim_\mathbb{C} \widetilde{ \mathrm{H}}(W^s)$. In such a situation
the identity principle of complex analysis implies that there exists
at most one extension (holomorphic or anti-holomorphic) of an
involution from $M$ to $\widetilde{ \mathrm{H}}(W^s)$. Therefore, it
is enough to prove the existence of extensions.

Since $h \in \widetilde{\mathrm{H}}(W^s)$ is uniquely representable
as $h = g m$ with $g \in \mathrm{Mp}$ and $m \in M$, the involution
$h \mapsto h^\dagger$ is extended by $gm \mapsto (gm)^\dagger =
m^\dagger g^\dagger$. Similarly, $h \mapsto s h s$ extends by $gm
\mapsto (sgs)(sms)$, and $h \mapsto s h^\dagger s$ does so by the
composition of the other two.

The equivariance property of $\widetilde{\psi}$ follows from the fact
that $g \mapsto s g^\dagger s$ on $\mathrm{Mp}$ coincides with the
operation of taking the inverse, $g \mapsto g^{-1}$.
\end{proof}

\subsection{Oscillator semigroup representation}
\label{sec:semigrouprep}

Here we construct the fundamental representation of the semigroup
$\widetilde{\mathrm{H}}(W^s)$ on the Hilbert space $\mathcal{A}_V$,
which in the present context we call Fock space. Our approach is
parallel to that of Howe \cite{H2}: the Fock space we use is related
to the $L^2$-space of Howe's work by the Bargmann transform
\cite{Folland}. (Using the language of physics one would say that
Howe works with the position space wave function while our treatment
relies on the phase space wave function.) In particular, following
Howe we take advantage of a realization of $\mathrm{H} (W^s)$ as the
complement of a certain determinantal variety in the Siegel upper
half plane.

\subsubsection{Cayley transformation}

Let us begin with some background information on the Cayley
transformation, which is defined to be the meromorphic mapping
\begin{displaymath}
    C : \,\, \mathrm{End}(W) \to \mathrm{End}(W), \quad
    g \mapsto \frac{\mathrm{Id}_W + g}{\mathrm{Id}_W - g} \,.
\end{displaymath}
If $g \in \mathrm{Sp}\,$, then from $A(gw , gw^\prime) = A(w ,
w^\prime)$ we have
\begin{displaymath}
    A((\mathrm{Id}_W + g)w \, , (\mathrm{Id}_W - g)w') +
    A((\mathrm{Id}_W - g) w \, , (\mathrm{Id}_W + g)w') = 0
\end{displaymath}
for all $w , w' \in W$. By assuming that $(\mathrm{Id}_W - g)$ is
invertible and then replacing $w$ and $w'$ by $(\mathrm{Id}_W -
g)^{-1}w$ resp.\ $(\mathrm{Id}_W - g)^{-1} w'$, we see that $C$ maps
the complement of the determinantal variety $\{ g \in \mathrm{Sp}
\mid \mathrm{Det}(\mathrm{Id}_W - g) = 0\}$ into $\mathfrak{sp}\,$.

The inverse of the Cayley transformation is given by
\begin{displaymath}
    g = C^{-1}(X) = \frac{X - \mathrm{Id}_W}{X + \mathrm{Id}_W} \,.
\end{displaymath}
Reversing the above argument, one shows that if $(X + \mathrm{Id}_W)$
is invertible and $X \in \mathfrak{sp}\,$, then $C^{-1}(X) \in
\mathrm{Sp}\,$. Moreover, by the relation $X + \mathrm{Id}_W = 2
(\mathrm{Id}_W - g)^{-1}$ for $C(g) = X\,$, if $\mathrm{Id}_W - g$ is
regular, then so is $X + \mathrm{Id}_W\,$, and vice versa. Thus if we
introduce the sets
\begin{displaymath}
    D_{\mathrm{Sp}} := \{g \in \mathrm{Sp} \mid \mathrm{Det}(
    \mathrm{Id}_W - g) = 0 \}\;, \quad D_{\mathfrak{sp}} :=
    \{X \in \mathfrak{sp} \mid \mathrm{Det}(X + \mathrm{Id}_W) = 0 \}\;,
\end{displaymath}
the following is immediate.
\begin{proposition}\label{Cayley biholo}
The Cayley transformation defines a bi-holomorphic map
\begin{displaymath}
    C:\,\, \mathrm{Sp} \setminus D_{\mathrm{Sp}}\to \mathfrak{sp}
    \setminus D_{\mathfrak{sp}}\;.
\end{displaymath}
\end{proposition}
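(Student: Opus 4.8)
The plan is to exhibit $C$ together with the rational map $C^{-1}(X)=(X-\mathrm{Id}_W)(X+\mathrm{Id}_W)^{-1}$ as mutually inverse holomorphic maps between the two stated open sets; the argument is elementary and consists almost entirely of keeping track of where the relevant matrix inverses exist.

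First I would note that $\mathrm{Sp}\setminus D_{\mathrm{Sp}}$ and $\mathfrak{sp}\setminus D_{\mathfrak{sp}}$ are open, being the complements in $\mathrm{Sp}$ and $\mathfrak{sp}$ of the vanishing loci of the holomorphic functions $g\mapsto\mathrm{Det}(\mathrm{Id}_W-g)$ and $X\mapsto\mathrm{Det}(X+\mathrm{Id}_W)$. On these sets the formulas for $C$ and $C^{-1}$ are manifestly holomorphic, matrix inversion being a holomorphic operation wherever it is defined. The discussion preceding the statement already records that $C$ sends $\mathrm{Sp}\setminus D_{\mathrm{Sp}}$ into $\mathfrak{sp}$ (using $A$-invariance of $g$) and that $C^{-1}$ sends $\{X\in\mathfrak{sp}:X+\mathrm{Id}_W \text{ invertible}\}$ into $\mathrm{Sp}$ (by the reverse computation).

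The one point requiring a little care is that the images land in the \emph{precise} complements claimed, i.e.\ that the two determinantal loci correspond to one another under $C$. For this I would use the identities
\begin{displaymath}
    C(g)+\mathrm{Id}_W = 2(\mathrm{Id}_W-g)^{-1}\;,\qquad
    \mathrm{Id}_W - C^{-1}(X) = 2(X+\mathrm{Id}_W)^{-1}\;,
\end{displaymath}
each obtained by writing the two summands over a common denominator (and using that $g$ commutes with $\mathrm{Id}_W-g$, hence with its inverse, and similarly for $X$). The first shows $C(g)\notin D_{\mathfrak{sp}}$ whenever $g\notin D_{\mathrm{Sp}}$; the second shows $C^{-1}(X)\notin D_{\mathrm{Sp}}$ whenever $X\notin D_{\mathfrak{sp}}$. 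Finally the companion identity $C(g)-\mathrm{Id}_W=2g(\mathrm{Id}_W-g)^{-1}$, together with its analogue for $C^{-1}$, yields $C^{-1}(C(g))=g$ and $C(C^{-1}(X))=X$ by direct cancellation, so $C$ is a bijection with holomorphic inverse, hence a biholomorphism. The only ``obstacle'' is thus nothing more than recognizing that the displayed identities are exactly what pins down the complement correspondence; no genuine difficulty arises.
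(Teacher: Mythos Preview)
Your argument is correct and is essentially identical to the paper's: the paper records in the paragraph immediately preceding the proposition that $C$ and $C^{-1}$ map into $\mathfrak{sp}$ and $\mathrm{Sp}$ respectively, and then uses the identity $X+\mathrm{Id}_W=2(\mathrm{Id}_W-g)^{-1}$ to conclude that regularity of $\mathrm{Id}_W-g$ and of $X+\mathrm{Id}_W$ are equivalent, which is exactly your computation. The proposition is then stated without further proof, so your write-up simply makes explicit what the paper leaves as a one-line remark.
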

Now we consider the restriction of $C$ to the semigroup $\mathrm{H}
(W^s)$. Letting $\dagger$ be the Hermitian conjugation of the
previous section, denote by $\mathfrak{Re}(X) = \frac{1}{2}(X+
X^\dagger)$ the real part of an operator $X \in \mathrm{End}(W)$ and
define the associated Siegel upper half space $\mathfrak{S}$ to be
the subset of elements $X \in \mathrm{End}(W)$ which are symmetric
with respect to the canonical symmetric bilinear form $S$ on $W = V
\oplus V^*$ with $\mathfrak{Re} (X) > 0\,$. Notice the relations
$S(w,w^\prime) = A(w,sw^\prime)$ and $A(sw,sw^\prime) = -
A(w,w^\prime)$, from which it is seen that $X$ is symmetric if and
only if $sX \in \mathfrak{sp}\,$. Define $D_\mathfrak{S} := \{X\in
\mathfrak{S}\mid \mathrm{Det}(sX + \mathrm{Id}_W) = 0 \}$, let
\begin{displaymath}
    \zeta^+_s := \mathfrak{S} \setminus D_\mathfrak{S} \,,
\end{displaymath}
and define a slightly modified Cayley transformation by
\begin{displaymath}
    a(g) := s \, \frac{\mathrm{Id}_W + g}{\mathrm{Id}_W - g}\,.
\end{displaymath}
Translating Proposition \ref{Cayley biholo}, it follows that $a$
defines a bi-holomorphic map from $\mathrm{Sp} \setminus
D_{\mathrm{Sp}}$ onto the set of $S$-symmetric operators with
$D_\mathfrak{S}$ removed.
\begin{proposition}
The modified Cayley transformation $a : \, \mathrm{Sp} \setminus
D_{\mathrm{Sp}} \to \mathrm{End}(W)$ given by $g \mapsto s \,
(\mathrm{Id}_W + g) (\mathrm{Id}_W - g)^{-1}$ restricts to a
bi-holomorphic map
\begin{displaymath}
    a : \,\, \mathrm{H}(W^s) \to \zeta^+_s\,.
\end{displaymath}
\end{proposition}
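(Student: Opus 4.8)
The plan is to reduce the statement to the previous proposition --- which exhibits $a$ as a biholomorphism of $\mathrm{Sp} \setminus D_{\mathrm{Sp}}$ onto the set of $S$-symmetric operators $X$ with $\mathrm{Det}(sX + \mathrm{Id}_W) \neq 0$ --- together with a single algebraic identity that converts the defining inequality of $\mathrm{H}(W^s)$ into the positivity condition defining $\mathfrak{S}$.

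First I would observe that $\mathrm{H}(W^s)$ is contained in the domain $\mathrm{Sp} \setminus D_{\mathrm{Sp}}$ of $a$: if some $h \in \mathrm{H}(W^s)$ had $hw = w$ with $w \neq 0$, then $\langle hw, hw \rangle_s = \langle w, w \rangle_s$, contradicting the strict inequality in the definition of $\mathrm{H}(W^s)$; hence $\mathrm{Id}_W - h$ is invertible. For $h$ in this domain set $X := a(h) = s\,(\mathrm{Id}_W + h)(\mathrm{Id}_W - h)^{-1}$; by the previous proposition $X$ is $S$-symmetric and $\mathrm{Det}(sX + \mathrm{Id}_W) = \mathrm{Det}(C(h) + \mathrm{Id}_W) \neq 0$, so the only condition left to control is $\mathfrak{Re}(X) > 0$.

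The key step --- and the place where the actual work sits --- will be the identity
\[
    X + X^\dagger = 2\,\bigl((\mathrm{Id}_W - h)^{-1}\bigr)^\dagger
    \,\bigl(s - h^\dagger s\, h\bigr)\,(\mathrm{Id}_W - h)^{-1} ,
\]
valid for every $h \in \mathrm{Sp} \setminus D_{\mathrm{Sp}}$. To establish it one uses $s^\dagger = s$ (true since $s = (-\mathrm{Id}_V)\oplus \mathrm{Id}_{V^*}$ is self-adjoint for $\langle\,,\rangle$), writes $X^\dagger = (\mathrm{Id}_W - h^\dagger)^{-1}(\mathrm{Id}_W + h^\dagger)\,s$, factors $(\mathrm{Id}_W - h^\dagger)^{-1}$ to the left and $(\mathrm{Id}_W - h)^{-1}$ to the right of $X + X^\dagger$, and expands the resulting middle factor $(\mathrm{Id}_W - h^\dagger) s (\mathrm{Id}_W + h) + (\mathrm{Id}_W + h^\dagger) s (\mathrm{Id}_W - h) = 2s - 2 h^\dagger s\, h$. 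Since $(\mathrm{Id}_W - h)^{-1}$ is invertible, the right-hand side is a congruence transform of $s - h^\dagger s\, h$, so $\mathfrak{Re}(X) = \frac12 (X + X^\dagger)$ is positive definite if and only if $s - h^\dagger s\, h$ is, i.e.\ if and only if $h \in \mathrm{H}(W^s)$.

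Combining the three observations then gives: $h \in \mathrm{H}(W^s)$ if and only if $h \in \mathrm{Sp} \setminus D_{\mathrm{Sp}}$ and $\mathfrak{Re}(a(h)) > 0$, which by the description of the image of $a$ is exactly the condition $a(h) \in \mathfrak{S} \setminus D_{\mathfrak{S}} = \zeta^+_s$. Hence $a$ restricts to a bijection $\mathrm{H}(W^s) \to \zeta^+_s$ between open sets, and as the restriction of a biholomorphic map it is biholomorphic. I expect the only real obstacle to be the bookkeeping in the identity above (and checking that the Cayley map and Hermitian conjugation interact as claimed, in particular that $\bigl((\mathrm{Id}_W - h)^{-1}\bigr)^\dagger = (\mathrm{Id}_W - h^\dagger)^{-1}$); everything else is formal once the previous proposition and the invertibility of $\mathrm{Id}_W - h$ are in hand.
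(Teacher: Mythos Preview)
Your proposal is correct and follows essentially the same route as the paper: the paper states the proposition as an immediate consequence of the identity $\tfrac{1}{2}(s - g^\dagger s g) = Y^\dagger (X + X^\dagger) Y$ with $Y = (sX+\mathrm{Id}_W)^{-1} = \tfrac{1}{2}(\mathrm{Id}_W - g)$, which is exactly your identity rewritten from the other side, and derives it by the same expansion $(\mathrm{Id}_W - g^\dagger) s (\mathrm{Id}_W + g) + (\mathrm{Id}_W + g^\dagger) s (\mathrm{Id}_W - g) = 2(s - g^\dagger s g)$. Your explicit check that $\mathrm{H}(W^s) \subset \mathrm{Sp}\setminus D_{\mathrm{Sp}}$ via the eigenvector argument is a small addition that the paper leaves implicit.
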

This result is an immediate consequence of the following identity.
\begin{lemma}
For $g \in \mathrm{Sp} \setminus D_{\mathrm{Sp}}$ let $a(g) = X$ and
define $Y := (sX + \mathrm{Id}_W)^{-1}$. Then
\begin{displaymath}
    {\textstyle{\frac{1}{2}}} ( s - g^\dagger s g ) =
    Y^\dagger (X + X^\dagger) Y\,.
\end{displaymath}
In particular, one has the following equivalence:
\begin{displaymath}
    \big( \mathfrak{Re}(X) > 0 \;\; \text{and} \;\;
    \mathrm{Det}(sX + \mathrm{Id}_W) \not = 0 \big) ~
    \Leftrightarrow ~ s - g^\dagger s g > 0 \;.
\end{displaymath}
\end{lemma}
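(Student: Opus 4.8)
The plan is to reduce the whole statement to an explicit algebraic manipulation of Cayley-type fractions. The first step is to identify $Y$ concretely in terms of $g$. Since $s^2 = \mathrm{Id}_W$ one has $sX = (\mathrm{Id}_W + g)(\mathrm{Id}_W - g)^{-1}$, and therefore
\[
sX + \mathrm{Id}_W = \bigl((\mathrm{Id}_W + g) + (\mathrm{Id}_W - g)\bigr)(\mathrm{Id}_W - g)^{-1} = 2\,(\mathrm{Id}_W - g)^{-1},
\]
so that $Y = (sX + \mathrm{Id}_W)^{-1} = \tfrac12(\mathrm{Id}_W - g)$. In particular this shows that for $g \in \mathrm{Sp}\setminus D_{\mathrm{Sp}}$ the operator $sX + \mathrm{Id}_W$ is automatically invertible, so $Y$ is well defined and the condition $\mathrm{Det}(sX + \mathrm{Id}_W) \neq 0$ holds for free.

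The second step is the computation of the adjoint and the identity. Using that $s^\dagger = s$ (because $s = (-\mathrm{Id}_V)\oplus\mathrm{Id}_{V^*}$ is self-adjoint for the canonical Hermitian structure, in which $V \perp V^*$) and that $\dagger$ commutes with inversion, one gets from $X = s(\mathrm{Id}_W + g)(\mathrm{Id}_W - g)^{-1}$ that $X^\dagger = (\mathrm{Id}_W - g^\dagger)^{-1}(\mathrm{Id}_W + g^\dagger)\,s$. Substituting $Y^\dagger = \tfrac12(\mathrm{Id}_W - g^\dagger)$ together with the expressions for $X$ and $X^\dagger$ into $Y^\dagger(X + X^\dagger)Y$, the factors $(\mathrm{Id}_W - g)^{-1}(\mathrm{Id}_W - g)$ and $(\mathrm{Id}_W - g^\dagger)(\mathrm{Id}_W - g^\dagger)^{-1}$ telescope, leaving
\[
Y^\dagger(X + X^\dagger)Y = \tfrac14\bigl[(\mathrm{Id}_W - g^\dagger)\,s\,(\mathrm{Id}_W + g) + (\mathrm{Id}_W + g^\dagger)\,s\,(\mathrm{Id}_W - g)\bigr].
\]
Expanding the two products, the cross terms $\pm sg$ and $\pm g^\dagger s$ cancel and the rest sums to $\tfrac14(2s - 2g^\dagger s g) = \tfrac12(s - g^\dagger s g)$, which is the claimed identity. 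The only thing demanding care here is the order of the non-commuting factors in the cancellations; there is no conceptual difficulty.

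For the equivalence, note that $X + X^\dagger = 2\,\mathfrak{Re}(X)$ and $s - g^\dagger s g$ are both Hermitian (the latter because $s^\dagger = s$), and by the identity they are related by the congruence $B \mapsto Y^\dagger B Y$ with $Y$ invertible. A congruence by an invertible operator preserves positive-definiteness, so $\mathfrak{Re}(X) > 0$ if and only if $s - g^\dagger s g > 0$; combined with the automatic validity of $\mathrm{Det}(sX + \mathrm{Id}_W)\neq 0$ from the first step, this is exactly the stated equivalence (and, together with the fact noted earlier that $a(g)$ is $S$-symmetric whenever $g\in\mathrm{Sp}$, it yields the preceding proposition identifying $a(\mathrm{H}(W^s))$ with $\zeta^+_s$). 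I do not expect a real obstacle: the proof is a short computation, and the only subtle points are the telescoping bookkeeping and the observation that $Y$ exists on all of $\mathrm{Sp}\setminus D_{\mathrm{Sp}}$.
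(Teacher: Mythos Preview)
Your proof is correct and follows essentially the same approach as the paper: both identify $Y=\tfrac12(\mathrm{Id}_W-g)$ and reduce the identity to the decomposition $s-g^\dagger s g=\tfrac12(\mathrm{Id}_W-g^\dagger)s(\mathrm{Id}_W+g)+\tfrac12(\mathrm{Id}_W+g^\dagger)s(\mathrm{Id}_W-g)$, the only cosmetic difference being that you run the computation from $Y^\dagger(X+X^\dagger)Y$ toward $\tfrac12(s-g^\dagger s g)$ while the paper starts from the latter and inserts the expressions for $\tfrac12(\mathrm{Id}_W\pm g)$ in terms of $X$.
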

\begin{proof}
It is convenient to rewrite $s - g^\dagger s g$ as
\begin{displaymath}
    s - g^\dagger s g = {\textstyle{\frac{1}{2}}}
    (\mathrm{Id}_W - g^\dagger) s (\mathrm{Id}_W + g)+
    {\textstyle{\frac{1}{2}}}(\mathrm{Id}_W + g^\dagger) s
    (\mathrm{Id}_W - g)\,.
\end{displaymath}
Using $a(g) = X$ one directly computes that
\begin{displaymath}
    {\textstyle{\frac{1}{2}}}(\mathrm{Id}_W - g) = (sX +
    \mathrm{Id}_W)^{-1} \quad \text{and} \quad
    {\textstyle{\frac{1}{2}}}(\mathrm{Id}_W + g) = (sX +
    \mathrm{Id}_W)^{-1} sX \,.
\end{displaymath}
The desired identity follows by inserting these relations in the
previous equation.
\end{proof}
\begin{remark}\label{rem:twine-inv}
The modified Cayley transformation intertwines the anti-holomorphic
involution $\psi : \, h \mapsto s h^\dagger s$ with the operation of
taking the Hermitian conjugate $X \mapsto X^\dagger$:
\begin{displaymath}
    a(\psi(h)) = a(h)^\dagger \;.
\end{displaymath}
Since $\zeta_{\,s}^+$ is obviously stable under Hermitian
conjugation, this is another proof of the stability of
$\mathrm{H}(W^s)$ under the involution $\psi\,$; cf.\ Corollary
\ref{cor:3.2}.
\end{remark}

\subsubsection{Construction of the semigroup representation}

Let us now turn to the main goal of this section. Recall that we have
a Lie algebra representation of $\mathfrak{sp}$ on $\mathfrak{a}(V) =
\mathrm{S}(V^*)$ which is defined by its canonical embedding in
$\mathfrak{w}_2(W)$. We shall now construct the corresponding
representation of the semigroup $\widetilde{\mathrm{H}}(W^s)$ on the
Fock space $\mathcal{A}_V$.

It will be seen later that the character of this representation on
the lifted toral semigroup $T_+$ is $\mathrm{Det}^{-\frac{1}{2}}(s -
sh)$. This extends to $M = \mathrm{Int}(\mathrm{Mp}) T_+$ by the
invariance of the character with respect to the conjugation action of
$\mathrm{Mp}\,$. Since $\widetilde{\mathrm{H}}(W^s)$ is connected and
$M$ is totally real of maximal dimension in $\widetilde{\mathrm{H}}
(W^s)$, the identity principle then implies that if a semigroup
representation of $\widetilde{\mathrm{H}}(W^s)$ can be constructed
with a holomorphic character, this character must be given by the
square root function $h \mapsto \mathrm{Det}^{-\frac{1}{2}}(s - sh)$.

There is no difficulty discussing the square root on the simply
connected submanifold $M$. However, in order to make sense of the
square root function on the full semigroup, we must lift all
considerations to $\widetilde{\mathrm{H}}(W^s)$. For convenience of
notation, given $h \in \mathrm{H}(W^s)$ we let $a_h := a(h)$, and for
$x \in \widetilde{\mathrm{H}}(W^s)$ we simply write $a_x \equiv
a(\tau_H(x))$ where $\tau_H: \, \widetilde{\mathrm{H}}(W^s) \to
\mathrm{H}(W^s)$ is the canonical covering map. Then we put
\begin{displaymath}
    f(h) := \mathrm{Det}(a_h + s) = \mathrm{Det}(2s
    (\mathrm{Id}_W - h)^{-1}) \;,
\end{displaymath}
and wish to define $\phi :\, \widetilde{\mathrm{H}}(W^s)\to
\mathbb{C}$ to be the square root of $f$ which agrees with the
positive square root on $T_+\,$. (Here we regard $T_+$ as being in
$\widetilde{\mathrm{H}} (W^s)$ by its canonical lifting as a
subsemigroup of $\mathrm{H} (W^s)$ as in the previous section). This
is possible because $\phi$ is naturally defined on $\{(\xi , \eta)
\in \mathrm{H}(W^s)\times \mathbb{C} \mid f(\xi) = \eta^2 \}$ which
is itself a 2:1 cover of $\mathrm{H} (W^s)$. Since up to equivariant
equivalence there is only one such covering, namely $\widetilde
{\mathrm{H}}(W^s) \to \mathrm{H}(W^s)$, it follows that we may define
$\phi $ on $\widetilde{\mathrm{H}} (W^s)$ as desired. For the
construction of the oscillator representation it is useful to observe
that $\phi$ can be extended to a slightly larger space. This
extension is constructed as follows.

Regard the complex symplectic group $\mathrm{Sp}$ as the total space
of an $\mathrm{Sp}_\mathbb{R}$-principal bundle $\pi : \, \mathrm{Sp}
\to \pi(\mathrm{Sp})$, $g \mapsto g \sigma(g^{-1})$. Recall that the
restricted map $\pi : \, M \to M$ is a diffeomorphism, and that $M$
contains the neutral element $\mathrm{Id} \in \mathrm{Sp}$ in its
boundary. We choose a small ball $B$ centered at $\mathrm{Id}$ in the
base $\pi(\mathrm{Sp})$, and using the fact that $M$ can be
identified with a cone in $\mathrm{i}\mathfrak{sp}_\mathbb{R}$ we
observe that $A := B \, \cup \, M \subset \pi(\mathrm{Sp})$ is
contractible. Now $U := \pi^{-1}(A)$ is diffeomorphic to a product
$\mathrm{Sp}_\mathbb{R} \times A$ and thus comes with a 2:1 covering
$\widetilde{U} \to U$ defined by $\tau: \, \mathrm{Mp} \to
\mathrm{Sp}_\mathbb{R}\,$. The covering space $\widetilde{U}$
contains $\widetilde{ \mathrm{H}}(W^s)$, and is invariant under the
$\mathrm{Mp}$-action by right multiplication. By construction it also
contains the metaplectic group $\mathrm{Mp}\,$, which covers the
group $\mathrm{Sp}_\mathbb{R}$ in $\mathrm{Sp}\,$.

Recall the definition of the determinant variety $D_{\mathrm{Sp}} =
\{g\in \mathrm{Sp} \mid \mathrm{Det}(\mathrm{Id}_W - g) = 0 \}$. Let
$\widetilde{D}_{\mathrm{Sp}}$ denote the set of points in
$\widetilde{U}$ which lie over $D_{\mathrm{Sp}} \cap U$ by the
covering $\widetilde{U} \to U$.
\begin{proposition}\label{Mp in the boundary} There is a unique
continuous extension of $\phi $ from $\widetilde{\mathrm{H}}(W^s)$ to
its closure in $\widetilde{U}$ so that $\phi^2$ agrees with the lift
of $f$ from $U$. The intersection of $\widetilde{D}_{\mathrm{Sp}}$
with any $\mathrm{Mp}$-orbit in $\widetilde{U}$ is nowhere dense in
that orbit and the restriction of the extended function $\phi$ to the
complement of that intersection is real-analytic.
\end{proposition}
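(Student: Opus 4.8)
The plan is to reduce the whole statement to the elementary behaviour of a single polynomial near the boundary of $\widetilde{\mathrm{H}}(W^s)$. Write $p\colon\widetilde U\to U$ for the covering and recall from the modified Cayley transform that $a_h+s=2s(\mathrm{Id}_W-h)^{-1}$, so that
$$f(h)=\mathrm{Det}(a_h+s)=2^{2d}\,\mathrm{Det}(s)\,\mathrm{Det}(\mathrm{Id}_W-h)^{-1}\qquad(d=\dim_{\mathbb{C}}V).$$
Thus $1/f$ is, up to a nonzero constant, the polynomial $h\mapsto\mathrm{Det}(\mathrm{Id}_W-h)$; it is holomorphic on $\mathrm{Sp}$, its lift to $\widetilde U$ is holomorphic, it vanishes precisely on $\widetilde{D}_{\mathrm{Sp}}$, and it is nowhere zero on $\widetilde{\mathrm{H}}(W^s)$ because no strict contraction has $1$ in its spectrum. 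Hence the lift of $f$ is a continuous $\overline{\mathbb{C}}$-valued function on $\widetilde U$ which is finite and holomorphic off $\widetilde{D}_{\mathrm{Sp}}$ and equals $\infty$ there; one also checks (as in the construction of $U$) that the closure of $\widetilde{\mathrm{H}}(W^s)$ really lies in $\widetilde U$. I aim to extend $\phi$ to a continuous $\overline{\mathbb{C}}$-valued function on that closure with $\phi^2$ equal to the lift of $f$. Uniqueness is automatic, since $\widetilde{\mathrm{H}}(W^s)$ is dense in its closure and two continuous extensions agreeing on a dense set coincide.

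For existence, fix $x_0$ in the boundary of $\widetilde{\mathrm{H}}(W^s)$ in $\widetilde U$ and distinguish two cases. If $x_0\in\widetilde{D}_{\mathrm{Sp}}$, then $f$ has a pole at $p(x_0)$, so $|\phi|^2=|f|\to\infty$ as one approaches $x_0$ within $\widetilde{\mathrm{H}}(W^s)$; the limit $\infty$ is forced regardless of the local structure of $\widetilde{\mathrm{H}}(W^s)$ near $x_0$, and we set $\phi(x_0)=\infty$. If $x_0\notin\widetilde{D}_{\mathrm{Sp}}$, then $h_0:=p(x_0)$ lies in the boundary of $\mathrm{H}(W^s)$ but not in $D_{\mathrm{Sp}}$; since $a$ restricts to a biholomorphism of $\mathrm{Sp}\setminus D_{\mathrm{Sp}}$ onto the set of $S$-symmetric operators $X$ with $\mathrm{Det}(sX+\mathrm{Id}_W)\neq0$, and carries $\mathrm{H}(W^s)$ onto $\zeta^+_s=\mathfrak{S}\setminus D_{\mathfrak{S}}$, the image $X_0:=a(h_0)$ lies in the boundary of the Siegel domain $\mathfrak{S}$ and satisfies $\mathrm{Det}(sX_0+\mathrm{Id}_W)\neq0$; hence $D_{\mathfrak{S}}$ stays away from a neighbourhood of $X_0$, so there $\zeta^+_s=\mathfrak{S}$, and, $\mathfrak{S}$ being convex, a small ball $N$ about $x_0$ meets $\widetilde{\mathrm{H}}(W^s)$ in a connected set. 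On $N\cap\widetilde{\mathrm{H}}(W^s)$ the continuous function $\phi$ is one of the two holomorphic square roots of the nowhere-vanishing $f|_N$; choosing that one extends $\phi$ holomorphically across $x_0$, and the extension is forced since two such local extensions agree on a nonempty open subset of $\widetilde{\mathrm{H}}(W^s)$. Existence of the pointwise limit and continuity of the resulting function follow because $\phi$ is bounded near $x_0$ (as $\phi^2=f$ is continuous there) and the connected set $\phi(N'\cap\widetilde{\mathrm{H}}(W^s))$, $N'\subset N$, is confined to one of the two disjoint small discs about the square roots of $f(x_0)\neq0$. The extension is holomorphic off $\widetilde{D}_{\mathrm{Sp}}$ and satisfies $\phi^2=f$ throughout by continuity of both sides; this also yields the final assertion, since off $\widetilde{D}_{\mathrm{Sp}}$ the extended $\phi$ is locally the restriction of a holomorphic function, hence real-analytic, and a fortiori real-analytic on any $\mathrm{Mp}$-orbit.

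There remains the nowhere-density statement. An orbit of the right-multiplication action of $\mathrm{Mp}$ on $\widetilde U$ has the form $x\cdot\mathrm{Mp}$, a connected real-analytic submanifold, and, using the equivariance $p(x\cdot g)=p(x)\,\tau(g)$, the set $\widetilde{D}_{\mathrm{Sp}}\cap(x\cdot\mathrm{Mp})$ is the zero set of the real-analytic function
$$g\longmapsto\mathrm{Det}\big(\mathrm{Id}_W-p(x)\,\tau(g)\big)\qquad(g\in\mathrm{Mp}).$$
This function is not identically zero: the holomorphic function $g'\mapsto\mathrm{Det}(\mathrm{Id}_W-p(x)\,g')$ on the complex group $\mathrm{Sp}$ takes the value $2^{2d}$ at $g'=-p(x)^{-1}$, so is not identically zero, and a holomorphic function vanishing on the totally real, half-dimensional submanifold $\mathrm{Sp}_\mathbb{R}$ vanishes identically by the identity principle; hence its pullback along $\tau$ to $\mathrm{Mp}$ is not identically zero. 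The zero set of a nontrivial real-analytic function on a connected real-analytic manifold is nowhere dense, which is the claim. The step I expect to be the main obstacle is the local connectedness of $\widetilde{\mathrm{H}}(W^s)$ along the part of its boundary lying off $\widetilde{D}_{\mathrm{Sp}}$: this is exactly what pins down the correct branch of the square root and forces the continuous extension to exist. The essential input is the Cayley realization of $\mathrm{H}(W^s)$ as the complement of the determinantal variety $D_{\mathfrak{S}}$ inside the \emph{convex} Siegel domain $\mathfrak{S}$, together with the observation that $D_{\mathfrak{S}}$ stays away from precisely those boundary points of $\mathrm{H}(W^s)$ at which $f$ does not develop a pole; one must check carefully that the remaining boundary points — where a small ball could meet $\widetilde{\mathrm{H}}(W^s)$ in several components — are exactly the points of $\widetilde{D}_{\mathrm{Sp}}$, at which the extension value $\infty$ is unambiguous anyway.
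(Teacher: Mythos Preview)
Your proof is correct and arrives at the same conclusion, but by a genuinely different route for the key step of local connectedness.

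The paper obtains local (simple) connectedness of $\widetilde{\mathrm{H}}(W^s)$ near a boundary point $x\notin\widetilde{D}_{\mathrm{Sp}}$ by exploiting the principal $\mathrm{Mp}$-bundle structure $\widetilde{U}\simeq\mathrm{Mp}\times A$: one chooses a local contractible section $\Sigma$ through $x$ and a small $\Delta\subset\mathrm{Mp}$, so that the product neighborhood $\Delta\times\Sigma$ intersects $\widetilde{\mathrm{H}}(W^s)$ in a connected, simply connected set. You instead transport the problem through the modified Cayley transform $a$ into the Siegel upper half space $\mathfrak{S}$ and use that $\mathfrak{S}$ is \emph{convex}; since the determinantal variety $D_{\mathfrak{S}}$ stays away from the Cayley image of such a boundary point, a suitably chosen small neighborhood of $x$ meets $\widetilde{\mathrm{H}}(W^s)$ in (the preimage of) a convex, hence connected, set. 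Both arguments pin down the correct branch of the square root; yours is more elementary and self-contained for this proposition, while the paper's fits its ongoing development of the $\mathrm{Mp}$-action and the decomposition $\widetilde{\mathrm{H}}(W^s)=\mathrm{Mp}\cdot M$. One small point: make explicit that you are choosing the neighborhood $N$ so that $a(p(N))$ is a convex ball in the space of $S$-symmetric operators; otherwise ``a small ball about $x_0$'' need not map to a convex set under $a\circ p$. For the nowhere-density clause the paper simply asserts the result, whereas your argument via the identity principle on the totally real $\mathrm{Sp}_\mathbb{R}\subset\mathrm{Sp}$ is a clean justification that fills this in.
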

\begin{remark}
Before beginning the proof, it should be clarified that at the points
of the lifted determinant variety, i.e., where the lifted square root
$\phi$ of the function $f(g) = \mathrm{Det}(2s\,(\mathrm{Id}_W -
g)^{-1})$ has a pole, \emph{continuity of the extension} means that
the reciprocal of $\phi$ extends to a continuous function which
vanishes on that set.
\end{remark}
\begin{proof}
The intersection of $D_{\mathrm{Sp}}$ with any $\mathrm{Sp}_\mathbb
{R}$-orbit in $U$ is nowhere dense in that orbit; therefore the same
holds for the intersection of $\widetilde{D}_{\mathrm{Sp}}$ with any
$\mathrm{Mp}$-orbit in $\widetilde{U}$.

Let $x \in \widetilde{U} \setminus \widetilde{D}_{\mathrm{Sp}}$ be a
point of the boundary of $\widetilde{\mathrm{H}}(W^s)$. Choose a
local contractible section $\Sigma \subset \widetilde{U}$ of
$\widetilde{U} \to A$ with $x \in \Sigma$ and a neighborhood $\Delta$
of the identity in $\mathrm{Mp}$ so that the map $\Delta \times
\Sigma \to \widetilde{U}\,$, $(g,s) \mapsto s g^{-1}$, realizes
$\Delta \times \Sigma$ as a neighborhood $\widetilde{V}$ of $x$ which
has empty intersection with $\widetilde{D}_{\mathrm{Sp}}\,$. By
construction $\widetilde{V} \cap \widetilde{\mathrm{H}}(W^s)$ is
connected and is itself simply connected. Thus the desired unique
extension of $\phi$ exists on $\widetilde{V}$. At $x$ this extension
is simply defined by taking limits of $\phi$ along arbitrary
sequences $\{ x_n \}$ from $\widetilde{\mathrm{H}}(W^s)$. Thus the
extended function (still called $\phi$) is well-defined on the
closure of $\widetilde{\mathrm {H}}(W^s)$ and is real-analytic on the
complement of $\widetilde{D} _{\mathrm{Sp}}$ in every $\mathrm
{Mp}$-orbit in that closure. It extends as a continuous function on
the full closure of $\widetilde{ \mathrm{H}} (W^s)$ by defining it to
be identically $\infty$ on $\widetilde{D}_{ \mathrm{Sp}}\,$, i.e.,
its reciprocal vanishes identically at these points.
\end{proof}
Now let us proceed with our main objective of defining the semigroup
representation on $\widetilde{\mathrm{H}}(W^s)$. Recall the
involution $\psi : \, \mathrm{H}(W^s) \to \mathrm{H}(W^s)$, $h
\mapsto \sigma(h)^{-1}$, and its lift $\widetilde{\psi}$ to
$\widetilde {\mathrm{H}}(W^s)$. The following will be of use at
several points in the sequel.
\begin{proposition}\label{psi equivariance}
$\phi \circ \widetilde{\psi} = \overline{\phi}\,$.
\end{proposition}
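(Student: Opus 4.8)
The plan is to reduce the identity to the already-established transformation law $a(\psi(h)) = a(h)^\dagger$ for the modified Cayley transform (Remark \ref{rem:twine-inv}), and then to fix the branch of the square root defining $\phi$ by a connectedness argument anchored on the slice $M$.

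First I would observe that $s = (-\mathrm{Id}_V) \oplus \mathrm{Id}_{V^*}$ is self-adjoint for the canonical Hermitian structure of $W = V \oplus V^*$: the subspaces $V$ and $V^*$ are orthogonal and $s$ acts on them by the real scalars $-1$ and $+1$. Since $W$ is finite-dimensional, $\mathrm{Det}(X^\dagger) = \overline{\mathrm{Det}\,X}$, so Remark \ref{rem:twine-inv} gives, for every $h \in \mathrm{H}(W^s) \setminus D_{\mathrm{Sp}}$,
\[
f(\psi(h)) = \mathrm{Det}\big(a(\psi(h)) + s\big) = \mathrm{Det}\big(a(h)^\dagger + s^\dagger\big) = \mathrm{Det}\big((a(h) + s)^\dagger\big) = \overline{f(h)} \;.
\]
Pulling this back along $\tau_H$ and using $\tau_H \circ \widetilde\psi = \psi \circ \tau_H$ together with $\phi^2 = f \circ \tau_H$, one obtains $(\phi \circ \widetilde\psi)^2 = (\overline\phi)^2$ on $\widetilde{\mathrm{H}}(W^s) \setminus \widetilde{D}_{\mathrm{Sp}}$.

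Next I would pin down the sign. Because $\mathrm{Det}(\mathrm{Id}_W - \psi(h)) = \overline{\mathrm{Det}(\mathrm{Id}_W - h)}$, the involution $\psi$, hence its lift $\widetilde\psi$, maps the determinant variety onto itself; consequently $\phi$ and $\phi \circ \widetilde\psi$ are both real-analytic and nowhere vanishing on $\widetilde{\mathrm{H}}(W^s) \setminus \widetilde{D}_{\mathrm{Sp}}$, so the quotient $\epsilon := (\phi \circ \widetilde\psi)/\overline\phi$ is a continuous function with $\epsilon^2 \equiv 1$, i.e.\ locally constant. Since $\widetilde{\mathrm{H}}(W^s) \setminus \widetilde{D}_{\mathrm{Sp}}$ is the complement of a proper analytic subset in the connected complex manifold $\widetilde{\mathrm{H}}(W^s)$, it is connected, so $\epsilon$ is a global constant in $\{\pm 1\}$. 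To evaluate it I would restrict to the canonically lifted slice $M = \mathrm{Int}(\mathrm{Sp}_\mathbb{R}) T_+$ (Proposition \ref{M-slice}): every $m \in M$ is $\mathrm{Sp}_\mathbb{R}$-conjugate to an element of $T_+$ whose eigenvalues are positive reals different from $1$, so $1$ is not an eigenvalue of $m$ and hence $M \cap D_{\mathrm{Sp}} = \emptyset$, putting $M$ inside the locus where $\phi$ is finite and nonzero. Furthermore $\psi(m) = \sigma(m^{-1}) = m$ for $m \in M$, so $a_m = a_m^\dagger$ and therefore $a_m + s = 2s(\mathrm{Id}_W - m)^{-1}$ is Hermitian and invertible; thus $f(m) = \mathrm{Det}(a_m + s)$ is real and nonzero, hence of constant sign on the connected set $M$, and since $f > 0$ on $T_+$ we get $f|_M > 0$. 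Hence $\phi|_M$ is the positive square root of $f|_M$, so $\overline\phi = \phi$ on $M$; and since $\widetilde\psi$ restricts to the identity on the lifted $M$ (recall the discussion of lifting involutions), $\phi \circ \widetilde\psi = \phi = \overline\phi$ there. This forces $\epsilon \equiv 1$, i.e.\ $\phi \circ \widetilde\psi = \overline\phi$ on $\widetilde{\mathrm{H}}(W^s) \setminus \widetilde{D}_{\mathrm{Sp}}$, and the identity then extends to all of $\widetilde{\mathrm{H}}(W^s)$ by continuity: both sides, and their reciprocals, are continuous by Proposition \ref{Mp in the boundary}, and $\widetilde\psi$ preserves $\widetilde{D}_{\mathrm{Sp}}$.

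The algebraic identity $f \circ \psi = \overline f$ is immediate once the self-adjointness of $s$ is recorded, so the only delicate point is the sign determination, where the work lies in (i) the connectedness of $\widetilde{\mathrm{H}}(W^s) \setminus \widetilde{D}_{\mathrm{Sp}}$ and (ii) the positivity, not merely the reality, of $f$ on $M$, since it is this that yields $\epsilon = +1$ rather than $-1$. I expect (ii) --- the constant-sign argument on the connected slice $M$ combined with the normalization of $\phi$ on $T_+$ --- to be the step most worth spelling out in full, although it is routine.
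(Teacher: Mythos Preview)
Your proof is correct and follows essentially the same approach as the paper: establish $f\circ\psi=\overline{f}$ (the paper states this ``by direct calculation,'' you spell it out via $a(\psi(h))=a(h)^\dagger$ and $s=s^\dagger$), then fix the sign of the square root by evaluating on $\mathrm{Fix}(\widetilde\psi)\supset M\supset T_+$, where $\phi$ is positive and hence not purely imaginary. One simplification: since $a:\mathrm{H}(W^s)\to\zeta_s^+$ is biholomorphic, $\mathrm{H}(W^s)\cap D_{\mathrm{Sp}}=\emptyset$ and $f=\mathrm{Det}(2s(\mathrm{Id}_W-h)^{-1})$ is nowhere zero on $\mathrm{H}(W^s)$, so your careful excision of $\widetilde D_{\mathrm{Sp}}$ from $\widetilde{\mathrm{H}}(W^s)$ is unnecessary---the ratio $\epsilon$ is already continuous on the full connected semigroup $\widetilde{\mathrm{H}}(W^s)$.
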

\begin{proof}
By direct calculation, $f \circ \psi = \overline{f}\,$. Thus, since
$f = \phi^2$, we have either $\phi \circ \widetilde{\psi} =
\overline{\phi}$ or $\phi \circ \widetilde{\psi} = - \overline{
\phi}\,$. The latter is not the case, as $\phi$ is not purely
imaginary on the non-empty set $\mathrm{Fix} (\widetilde{\psi})$.
\end{proof}
The semigroup representation $R :\, \widetilde{\mathrm{H}}(W^s) \to
\mathrm{End}(\mathcal{A}_V)$ will be given by a certain averaging
process involving the standard representation of the Heisenberg
group. The latter representation is defined as follows. For elements
$w = v + cv$ of the real vector space $W_\mathbb{R}\,$, the operator
$\delta (v) + \mu(cv)$ is self-adjoint and its exponential
\begin{displaymath}
    T_{v+cv} := \mathrm{e}^{\mathrm{i}\delta(v) + \mathrm{i}\mu(cv)}
\end{displaymath}
converges and is unitary (see, e.g., \cite{RS}). These operators
satisfy the relation
\begin{equation}\label{Heisenberg multiplication}
    T_{w^{\vphantom{\prime}}} T_{w'} = T_{w + w^\prime} \,
    \mathrm{e}^{\frac{\mathrm{i}}{2} \omega (w,w')} \quad
    (w, w^\prime \in W_\mathbb{R}) \;,
\end{equation}
where $\omega := \mathrm{i} A \vert_{W_\mathbb{R}}$ is the induced real
symplectic structure. If $T \mapsto T^\dagger$ denotes the adjoint
operation in $\mathrm{End}(\mathcal{A}_V)$, it follows from
$\delta(v)^\dagger = \mu(cv)$ that
\begin{displaymath}
    T_w^\dagger = T_{-w}^{\vphantom{\dagger}} = T_w^{-1} \quad
    (w \in W_\mathbb{R}) \;.
\end{displaymath}
Thus if $H := W_\mathbb{R} \times \mathrm{U}_1$ is equipped with the
Heisenberg group multiplication law,
\begin{equation*}
    (w,z)\,(w',z') := (w + w', z \, z' \mathrm{e}^{
    \frac{\mathrm{i}}{2} \omega(w,w')}) \;,
\end{equation*}
then $(w,z)\mapsto z\, T_w$ is an irreducible unitary representation
of $H$ on $\mathcal A_V$. It is well known that up to equivariant
isomorphisms there is only one such representation.

The oscillator representation $x \mapsto R(x)$ of $\widetilde{
\mathrm{H}}(W^s)$ is now defined by
\begin{displaymath}
    R(x) = \int_{W_\mathbb{R}} \gamma_x(w) T_w \, \mathrm{dvol}(w) \;,
    \quad \gamma_x(w) := \phi(x) \, \mathrm{e}^{-\frac{1}{4} \langle w
    ,\, a_x w \rangle} \,.
\end{displaymath}
Here $\mathrm{dvol}$ is the Euclidean volume element on
$W_\mathbb{R}$ which we normalize so that
\begin{displaymath}
    \int_{W_\mathbb{R}} \mathrm{e}^{-\frac{1}{4} \langle w , w \rangle}
    \, \mathrm{dvol}(w) = 1 \;.
\end{displaymath}
It should be stressed that we often parameterize $W_\mathbb{R} \simeq
V$ by the map $v \mapsto v + cv = w\,$.

Notice that by the positivity of $\mathfrak{Re}(a_x)$ the Gaussian
function $w \mapsto \gamma_x(w)$ decreases rapidly, so that all
integrals involved in the discussion above and below are easily seen
to converge. In particular, since the unitary operator $T_w$ (for $w
\in W_\mathbb{R}$) has $L^2$-norm $\Vert T_w \Vert = 1$, it follows
for any $x \in \widetilde{\mathrm{H}} (W^s)$ that
\begin{displaymath}
    \Vert R(x) \Vert \le \vert \phi(x) \vert \int_{W_\mathbb{R}}
    \mathrm{e}^{-\frac{1}{4}\langle w , \, \mathfrak{Re}(a_{x})
    w\rangle}\, \mathrm{dvol}(w) =: C(x) \;,
\end{displaymath}
where the bound $C(x)$ by direct computation of the integral is a
finite number:
\begin{equation}\label{eq:Schranke}
    C(x) = \vert \phi(x) \vert \, \mathrm{Det}^{-1/2} \left(
    \mathfrak{Re}(a_x) \right) = 2^{\, \dim_\mathbb{C} V} \left|
    \frac{\mathrm{Det} (\mathrm{Id}_W - h)} {\mathrm{Det}
    (s - h^\dagger s h)} \right|^{1/2}\;, \quad h = \tau_H(x) \;.
\end{equation}
Thus $R(x)$ is a bounded linear operator on $\mathcal{A}_V$. In
Proposition \ref{operator bound} we will establish the uniform bound
$\Vert R(x) \Vert \le C(x) < 1$ for all $x \in M$. It is also clear
that the operator $R(x)$ depends continuously on $x \in \widetilde{
\mathrm{H}}(W^s)$.

The main point now is to prove the semigroup multiplication rule
$R(xy) = R(x)R(y)$. For this we apply the Heisenberg multiplication
formula (\ref{Heisenberg multiplication}) to the inner integral of
\begin{displaymath}
    R(x)R(y) = \int_{W_\mathbb{R}} \left( \int_{W_\mathbb{R}}
    \gamma_x(w-w') \gamma_y(w') T_{w-w'} T_{w'}\, \mathrm{dvol}(w')
    \right)\, \mathrm{dvol}(w)
\end{displaymath}
to see that $R(xy) = R(x)R(y)$ is equivalent to the multiplication
rule $\gamma_{xy} = \gamma_x \sharp \gamma_y$ where the right-hand
side means the twisted convolution
\begin{equation}\label{twisted convolution}
    \gamma_x \sharp \gamma_y (w) := \int _{W_\mathbb{R}} \gamma_x(w-w')
    \gamma_y(w') \, \mathrm{e}^{\frac{\mathrm{i}}{2}\omega (w,w')}
    \, \mathrm{dvol}(w') \;.
\end{equation}
For the proof of the formula $\gamma_x \sharp \gamma_y = \gamma_{xy}
\,$, we will need to know that $\phi$ transforms as
\begin{equation}\label{transformation rule}
    \phi(xy) = \phi(x)\phi(y)\mathrm{Det}^{-\frac{1}{2}}(a_x + a_y)\;.
\end{equation}
This transformation behavior, in turn, follows directly from the
expression for the semigroup multiplication rule transferred to the
upper half-space $\zeta ^+_{\,s}$; we record this expression in the
following statement and refer to \cite{H2}, p.\ 78, for the
calculation.
\begin{proposition}\label{prop:semigroup-law}
Identifying $\mathrm{H}(W^s)$ with $\zeta_{\,s}^+$ by the modified
Cayley transformation and denoting by $(X,Y) \mapsto X \circ Y$ the
semigroup multiplication on $\zeta^+_{\,s}$, one has
\begin{equation}\label{transferred multiplication}
    X \circ Y + s = (Y+s)(X+Y)^{-1}(X+s)
    = X + s - (X-s)(X+Y)^{-1}(X+s) \,.
\end{equation}
\end{proposition}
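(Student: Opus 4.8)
\subsection*{Proof proposal}

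The plan is to compute $X\circ Y$ directly from the definition of the semigroup product on $\mathrm{H}(W^s)$, which is nothing but composition $(g,h)\mapsto gh$ of symplectic maps. If $X=a(g)$ and $Y=a(h)$ with $g,h\in\mathrm{H}(W^s)$, then by definition $X\circ Y=a(gh)$, so the task is to express $a(gh)$ through $X$ and $Y$. First I would record the elementary consequences of $a(g)=s(\mathrm{Id}_W+g)(\mathrm{Id}_W-g)^{-1}$ together with $s^2=\mathrm{Id}_W$: adding $\mathrm{Id}_W$ to $sX=(\mathrm{Id}_W+g)(\mathrm{Id}_W-g)^{-1}$ gives $sX+\mathrm{Id}_W=2(\mathrm{Id}_W-g)^{-1}$, hence $\mathrm{Id}_W-g=2(sX+\mathrm{Id}_W)^{-1}$, and likewise $\mathrm{Id}_W-h=2(sY+\mathrm{Id}_W)^{-1}$. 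These inverses exist because $\mathrm{H}(W^s)$ is disjoint from the determinantal variety $D_{\mathrm{Sp}}$ (Proposition \ref{Cayley biholo}), and because $\mathfrak{Re}(X)>0$ makes $sX+\mathrm{Id}_W=s(X+s)$ invertible.

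The central computation is then to expand $gh=(\mathrm{Id}_W-2(sX+\mathrm{Id}_W)^{-1})(\mathrm{Id}_W-2(sY+\mathrm{Id}_W)^{-1})$, collect the four terms of $\mathrm{Id}_W-gh$, factor $2(sX+\mathrm{Id}_W)^{-1}$ out on the left, and observe that the remaining bracket collapses to $(sX+sY)(sY+\mathrm{Id}_W)^{-1}$. This yields $\mathrm{Id}_W-gh=2(sX+\mathrm{Id}_W)^{-1}(sX+sY)(sY+\mathrm{Id}_W)^{-1}$, where $sX+sY=s(X+Y)$ is invertible since $\mathfrak{Re}(X)+\mathfrak{Re}(Y)>0$. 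Inverting and using $s\cdot a(gh)+\mathrm{Id}_W=2(\mathrm{Id}_W-gh)^{-1}$ once more, one gets
\begin{displaymath}
    X\circ Y + s = s\bigl(s\cdot a(gh)+\mathrm{Id}_W\bigr) = 2s(\mathrm{Id}_W-gh)^{-1}
    = s\,(sY+\mathrm{Id}_W)(sX+sY)^{-1}(sX+\mathrm{Id}_W)\,.
\end{displaymath}

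It then remains to clean up the right-hand side using the three identities $sX+\mathrm{Id}_W=s(X+s)$, $sY+\mathrm{Id}_W=s(Y+s)$, $sX+sY=s(X+Y)$, together with $s^{-1}=s$; the factors of $s$ cancel in pairs and leave exactly $(Y+s)(X+Y)^{-1}(X+s)$, which is the first claimed formula. The second displayed equality is pure algebra: writing $Y+s=(X+Y)-(X-s)$ and multiplying on the right by $(X+Y)^{-1}(X+s)$ gives $(Y+s)(X+Y)^{-1}(X+s)=(X+s)-(X-s)(X+Y)^{-1}(X+s)$.

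I do not expect a genuine obstacle here: the argument is a bounded sequence of manipulations in the associative algebra $\mathrm{End}(W)$. The only points needing care are (i) the non-commutativity — $g$ and $h$ themselves do not commute, although each commutes with any rational function of itself — so the factorizations must respect the ordering; and (ii) checking that every inverse written down actually exists, which follows from $\mathrm{H}(W^s)\cap D_{\mathrm{Sp}}=\emptyset$ and the positivity of the real parts of $X$, $Y$ and $X+Y$. An alternative is simply to transcribe Howe's computation on p.~78 of \cite{H2}, which produces the same identity; the direct verification above seems cleaner in the present notation.
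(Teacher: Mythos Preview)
Your computation is correct and complete. The paper itself does not supply a proof of this proposition but simply refers to Howe \cite{H2}, p.~78; your direct manipulation of $\mathrm{Id}_W-gh$ via $g=\mathrm{Id}_W-2(sX+\mathrm{Id}_W)^{-1}$ is exactly the sort of calculation intended, and the invertibility checks you flag are the right ones.
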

Given Proposition \ref{prop:semigroup-law}, to prove the
transformation rule (\ref{transformation rule}) just set $X = a_h$
and $Y = a_{h'}$ and note that, since the semigroup multiplication
law for $\widetilde{\mathrm{H}}(W^s)$ by definition yields $a_h \circ
a_{h'} = a_{hh'}\,$, the first expression in (\ref {transferred
multiplication}) implies
\begin{equation}\label{squared transformation law}
    f(hh') = f(h) f(h') \mathrm{Det}^{-1} (a_h + a_{h'}) \;,
\end{equation}
where $f(h) = \mathrm{Det}(a_h + s)$ as above. The transformation
rule for $\phi$ follows by taking the square root of (\ref{squared
transformation law}). As usual, the sign of the square root is
determined by taking the positive sign at points of the lift of $T_+$
in $\widetilde{\mathrm{H}}(W^s)$.

Now we come to the main point.
\begin{proposition}
The twisted convolution for $x , y \in \widetilde{\mathrm{H}}(W^s)$
satisfies $\gamma_x \sharp \gamma_y = \gamma_{xy}$.
\end{proposition}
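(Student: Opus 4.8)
The plan is to evaluate the twisted convolution (\ref{twisted convolution}) of the two Gaussians $\gamma_x$ and $\gamma_y$ explicitly as a Gaussian integral over $W_\mathbb{R}$ and to read off $\gamma_{xy}$, the prefactor being controlled by the transformation rule (\ref{transformation rule}) and the quadratic part by the semigroup law of Proposition \ref{prop:semigroup-law}. The first step is to translate everything into the language of the complex symmetric form $S$ on $W$. For $w \in W_\mathbb{R}$ and arbitrary $u \in W$ one has $\langle w , u \rangle = S(w,u)$ (both sides are $\mathbb{C}$-linear in $u$ and agree on the real form $W_\mathbb{R}$, which spans $W$ over $\mathbb{C}$), and the phase occurring in (\ref{twisted convolution}) satisfies $\tfrac{\mathrm{i}}{2}\omega(w,w') = -\tfrac12 S(w,sw')$ by $\omega = \mathrm{i}A$ and $S(\cdot,\cdot) = A(\cdot,s\,\cdot)$. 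Substituting $\gamma_x(w-w') = \phi(x)\,\mathrm{e}^{-\frac14 S(w-w',\,a_x(w-w'))}$ and $\gamma_y(w') = \phi(y)\,\mathrm{e}^{-\frac14 S(w',\,a_y w')}$, expanding, and using the $S$-symmetry of $a_x$ (which holds since $a_x \in \zeta^+_s$) together with the $S$-skewness of $s$ (a consequence of $A(sw,sw') = -A(w,w')$), the exponent of the integrand collapses to
\begin{displaymath}
   -\tfrac14 S(w,a_x w) - \tfrac14 S\big(w',(a_x+a_y)w'\big) + \tfrac12 S\big(w',(a_x+s)w\big)\,.
\end{displaymath}

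Next I would complete the square in $w'$ against the operator $a_x+a_y$, which is $S$-symmetric with $\mathfrak{Re}(a_x+a_y) > 0$, and carry out the Gaussian integral over $W_\mathbb{R}$. The shift $w' \mapsto w' + (a_x+a_y)^{-1}(a_x+s)w$ is by a complex vector; it is legitimate because the integrand extends to an entire, rapidly decaying function on the complexification, along which the contour may be deformed. The integral contributes a factor $\mathrm{Det}^{-1/2}(a_x+a_y)$ by the chosen normalization $\int_{W_\mathbb{R}} \mathrm{e}^{-\frac14\langle w,w\rangle}\,\mathrm{dvol}(w)=1$; the branch of the square root produced this way is the one continuously connected to the positive square root over the convex, hence simply connected, set of admissible operators $\{\mathfrak{Re}>0\}$, and this is the very branch entering (\ref{transformation rule}), as one sees by specializing to the lifted torus $T_+$, where $a_x$, $a_y$, $a_x+a_y$ are positive definite and $\phi$ is positive. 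What is left is
\begin{displaymath}
   \gamma_x\sharp\gamma_y(w) = \phi(x)\phi(y)\,\mathrm{Det}^{-1/2}(a_x+a_y)\; \mathrm{e}^{-\frac14 S(w,a_x w) + \frac14 S((a_x+a_y)^{-1}(a_x+s)w,\,(a_x+s)w)}\,.
\end{displaymath}

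It then remains to recognize the two surviving factors. Using that the $S$-transpose of $a_x+s$ is $a_x-s$ while $a_x$ and $a_x+a_y$ are $S$-self-transpose, the residual exponent rewrites as $-\tfrac14 S\big(w,[\,a_x - (a_x-s)(a_x+a_y)^{-1}(a_x+s)\,]w\big)$, and by the second equality in (\ref{transferred multiplication}) the operator in brackets is exactly the semigroup product $a_x\circ a_y$. Since $\tau_H$ is a homomorphism of semigroups, $a_x\circ a_y = a(\tau_H(x)\tau_H(y)) = a_{xy}$, so the residual exponent equals $-\tfrac14 S(w,a_{xy}w) = -\tfrac14\langle w,a_{xy}w\rangle$. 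Finally (\ref{transformation rule}) identifies the prefactor $\phi(x)\phi(y)\,\mathrm{Det}^{-1/2}(a_x+a_y)$ with $\phi(xy)$, whence $\gamma_x\sharp\gamma_y(w) = \phi(xy)\,\mathrm{e}^{-\frac14\langle w,a_{xy}w\rangle} = \gamma_{xy}(w)$, as claimed. The main obstacle is precisely the bookkeeping just indicated: keeping the interplay of $s$, $S$, $A$ and $\omega$ straight, justifying the complex contour shift in the Gaussian integral, and ensuring that the square-root branch of $\mathrm{Det}^{-1/2}$ is chosen globally and consistently on $\widetilde{\mathrm{H}}(W^s)$ — the last point being guaranteed by the real-analyticity of $\phi$ from Proposition \ref{Mp in the boundary} together with the identity theorem, which reduces the verification to the lifted torus $T_+$, where every quantity involved is manifestly positive.
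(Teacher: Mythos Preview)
Your proof is correct and follows essentially the same route as the paper: both expand the exponent, perform the same complex shift $w' \mapsto w' + (a_x+a_y)^{-1}(a_x+s)w$, identify the resulting quadratic form with $a_x\circ a_y = a_{xy}$ via the second equality in (\ref{transferred multiplication}), and match the prefactor via (\ref{transformation rule}). The only cosmetic difference is that you do the bookkeeping with the symmetric form $S$ whereas the paper works with $A$ and the Hermitian pairing $\langle\,,\,\rangle$; your observation $\langle w,u\rangle = S(w,u)$ for $w\in W_\mathbb{R}$ makes the translation transparent, and your remarks on the contour shift and the choice of square-root branch are in fact slightly more explicit than the paper's.
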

\begin{proof}
Observe that the phase factor for $w \, , w^\prime \in W_\mathbb{R}$
can be reorganized as
\begin{displaymath}
    \mathrm{e}^{\frac{\mathrm{i}}{2} \omega(w,w^\prime)} =
    \mathrm{e}^{-\frac{1}{2} A(w,w^\prime)} =
    \mathrm{e}^{\frac{1}{4} \langle w^\prime , \, s w \rangle
    - \frac{1}{4} \langle w , \, s w^\prime \rangle}\;.
\end{displaymath}
Inserting the definitions of $\gamma_x$ and $\gamma_y$ in $\gamma_x
\sharp \gamma_y$ we then have
\begin{displaymath}
    \gamma_x \sharp \gamma_y (w) = \phi(x)\phi(y) \, \mathrm{e}^{
    - \frac{1}{4}\langle w, \, a_x w\rangle}\int \mathrm{e}^{-
    \frac{1}{4}\langle w^\prime , (a_x + a_y) w^\prime \rangle +
    \frac{1}{4}\langle w^\prime , (a_x + s) w \rangle + \frac{1}{4}
    \langle w,(a_x - s)w^\prime \rangle}\,\mathrm{dvol}(w^\prime)\;.
\end{displaymath}
Since $\mathfrak{Re} (a_x + a_y) > 0\,$, the integrand is a rapidly
decreasing function of $w^\prime \in W_\mathbb{R}$ and convergence of
the integral over the domain $W_\mathbb{R}$ is guaranteed.

We now wish to explicitly compute the integral by completing the
square and shifting variables. For this it is a useful preparation to
write
\begin{displaymath}
    \langle w^\prime , a_x w \rangle = A(w^\prime , s a_x w)
    \quad (w^\prime \in W_\mathbb{R}) \;,
\end{displaymath}
and similarly for the other terms. We then holomorphically extend the
right-hand side to $w^\prime$ in $W$, and by making a shift of
integration variables
\begin{displaymath}
    w^\prime \to w^\prime + (a_x + a_y)^{-1} (a_x + s) w \;,
\end{displaymath}
we bring the convolution integral into the form
\begin{displaymath}
    \gamma_x \sharp \gamma_y (w) = \mathrm{e}^{-\frac{1}{4}
    A(w,\, s (a_x \circ a_y) w)} \phi(x) \phi(y) \int_{W_\mathbb{R}}
    \mathrm{e}^{-\frac{1}{4} A(w^\prime , \, (s a_x + s a_y) w^\prime)}
    \, \mathrm{dvol}(w^\prime) \;,
\end{displaymath}
where $a_x \circ a_y = -s + (a_y + s)(a_x + a_y)^{-1} (a_x + s) = a_x
- (a_x - s)(a_x + a_y)^{-1}(a_x+s)$ is the semigroup multiplication
on $\zeta_{\,s}^+$. Using $A(w,s w^\prime) = \langle w , w^\prime
\rangle$ for $w \in W_\mathbb{R}$ and the defining relation $a_x
\circ a_y = a_{xy}\,$, we see that the first factor on the right-hand
side is $\mathrm{e}^{- \frac{1}{4} \langle w , \, a_{xy} w \rangle}$.
By the transformation rule (\ref{transformation rule}) the integral
is evaluated as
\begin{displaymath}
    \int_{W_\mathbb{R}} \mathrm{e}^{-\frac{1}{4} \langle w^\prime ,\,
    (a_x + a_y) w^\prime \rangle}\, \mathrm{dvol}(w^\prime) =
    \mathrm{Det}^{-1/2}(a_x+a_y) = \frac{\phi(xy)}{\phi(x)\phi(y)}\;,
\end{displaymath}
and multiplying factors it follows that
\begin{displaymath}
    \gamma_x \sharp \gamma_y(w) = \phi(xy) \, \mathrm{e}^{-
    \frac{1}{4} \langle w ,\, a_{xy} w \rangle} = \gamma_{xy}(w)\;,
\end{displaymath}
which is the desired semigroup property.
\end{proof}
\begin{corollary}
The mapping $R : \, \widetilde{\mathrm{H}}(W^s) \to \mathrm{End}
(\mathcal{A}_V)$ defined by
\begin{displaymath}
    R(x) = \int_{W_\mathbb{R}} \gamma_x(w) T_w \,\mathrm{dvol}(w)
\end{displaymath}
is a representation of the semigroup $\widetilde{\mathrm{H}}(W^s)$.
\end{corollary}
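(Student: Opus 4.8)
The plan is to verify the single remaining property, namely that $R$ is a homomorphism of semigroups: $R(xy) = R(x)R(y)$ for all $x,y \in \widetilde{\mathrm{H}}(W^s)$. Everything else is already in place: the defining integral converges and $R(x)$ is a bounded operator with $\Vert R(x)\Vert \le C(x) < \infty$ by (\ref{eq:Schranke}), associativity of $\widetilde{\mathrm{H}}(W^s)$ was established in $\S$\ref{oscillator semigroup}, and the oscillator semigroup has no identity element (it has $\mathrm{Mp}$ only in its boundary), so no unitality condition is called for. Throughout, the operator-valued integral $\int_{W_\mathbb{R}} \gamma_x(w) T_w\,\mathrm{dvol}(w)$ is read in the weak sense, via the bounded sesquilinear form $(\Phi,\Psi)\mapsto \int_{W_\mathbb{R}} \gamma_x(w)\,\langle \Phi, T_w\Psi\rangle_{\mathfrak{a}(V)}\,\mathrm{dvol}(w)$; this is legitimate because $\Vert T_w\Vert = 1$ for $w \in W_\mathbb{R}$ and $w\mapsto\gamma_x(w)$ is a rapidly decreasing Gaussian thanks to $\mathfrak{Re}(a_x)>0$.

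First I would write the product as a double integral,
\begin{displaymath}
    R(x)R(y) = \int_{W_\mathbb{R}}\int_{W_\mathbb{R}} \gamma_x(w_1)\gamma_y(w_2)\, T_{w_1}T_{w_2}\,\mathrm{dvol}(w_1)\,\mathrm{dvol}(w_2)\;,
\end{displaymath}
where the interchange of the two integrations is justified by Fubini's theorem, since $\iint |\gamma_x(w_1)||\gamma_y(w_2)|\,\Vert T_{w_1}\Vert\,\Vert T_{w_2}\Vert\,\mathrm{dvol}(w_1)\,\mathrm{dvol}(w_2) = C(x)C(y) < \infty$. Then I would insert the Heisenberg multiplication rule (\ref{Heisenberg multiplication}), $T_{w_1}T_{w_2} = T_{w_1+w_2}\,\mathrm{e}^{\frac{\mathrm{i}}{2}\omega(w_1,w_2)}$, and change the integration variables to $w := w_1 + w_2$ and $w' := w_2$; this substitution has unit Jacobian and satisfies $\omega(w_1,w_2) = \omega(w-w',w') = \omega(w,w')$ by the alternating property of $\omega$. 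Collecting the inner integral gives
\begin{displaymath}
    R(x)R(y) = \int_{W_\mathbb{R}}\Big(\int_{W_\mathbb{R}} \gamma_x(w-w')\gamma_y(w')\,\mathrm{e}^{\frac{\mathrm{i}}{2}\omega(w,w')}\,\mathrm{dvol}(w')\Big)\,T_w\,\mathrm{dvol}(w) = \int_{W_\mathbb{R}}(\gamma_x\sharp\gamma_y)(w)\,T_w\,\mathrm{dvol}(w)\;,
\end{displaymath}
with $\gamma_x\sharp\gamma_y$ the twisted convolution (\ref{twisted convolution}).

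Finally, the preceding Proposition asserts $\gamma_x\sharp\gamma_y = \gamma_{xy}$, so the right-hand side is exactly $\int_{W_\mathbb{R}}\gamma_{xy}(w)\,T_w\,\mathrm{dvol}(w) = R(xy)$, which is the desired semigroup property. The substantive content — the identity $\gamma_x\sharp\gamma_y=\gamma_{xy}$, which rests in turn on the transformation rule (\ref{transformation rule}) for $\phi$ together with the Gaussian integral — has already been dispatched in the Proposition above; the only thing requiring any care here is the routine justification of Fubini's theorem and of the change of variables, and both are immediate from the rapid Gaussian decay of the $\gamma$'s and the unitarity $\Vert T_w\Vert = 1$. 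I therefore do not anticipate a genuine obstacle at this step.
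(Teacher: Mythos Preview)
Your proposal is correct and follows exactly the approach of the paper: the reduction of $R(x)R(y)=R(xy)$ to the twisted-convolution identity $\gamma_x\sharp\gamma_y=\gamma_{xy}$ via the Heisenberg multiplication rule is spelled out in the paragraph preceding the Proposition, and the Corollary is then an immediate consequence of that Proposition. Your added remarks on Fubini and the unit-Jacobian change of variables are routine justifications the paper leaves implicit.
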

We conclude this section by deriving a formula for the adjoint.
\begin{proposition}
The adjoint of $R(x)$ is computed as $R(x)^\dagger = R(\widetilde{
\psi}(x))$. In particular, $R(x)R(x)^\dagger = R(x\, \widetilde{
\psi}(x))$.
\end{proposition}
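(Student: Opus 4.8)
The plan is to read off $R(x)^\dagger$ directly from the integral representation $R(x) = \int_{W_\mathbb{R}} \gamma_x(w)\,T_w\,\mathrm{dvol}(w)$. Because of the bound $(\ref{eq:Schranke})$ together with $\Vert T_w \Vert = 1$, the map $w \mapsto \gamma_x(w)\,T_w$ is Bochner integrable and $R(x)$ is a norm limit of finite sums; since Hermitian conjugation is an isometric, norm-continuous (conjugate-linear) operation on $\mathrm{End}(\mathcal{A}_V)$, it may be carried inside the integral, giving $R(x)^\dagger = \int_{W_\mathbb{R}} \overline{\gamma_x(w)}\,T_w^\dagger\,\mathrm{dvol}(w)$. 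I would then invoke $T_w^\dagger = T_{-w}$, valid for $w \in W_\mathbb{R}$, perform the change of variables $w \mapsto -w$ (under which $\mathrm{dvol}$ is invariant), and note that $\langle -w,\,a_x(-w)\rangle = \langle w,\,a_x w\rangle$ because the Hermitian form on $W$ is conjugate-linear in one slot and linear in the other, so the two sign factors cancel. This brings $R(x)^\dagger$ into the form $\int_{W_\mathbb{R}} \overline{\gamma_x(w)}\,T_w\,\mathrm{dvol}(w)$.

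The remaining task is to recognise $\overline{\gamma_x(w)}$ as $\gamma_{\widetilde{\psi}(x)}(w)$. Writing $\gamma_x(w) = \phi(x)\,\mathrm{e}^{-\frac14\langle w,\,a_x w\rangle}$ and conjugating, one gets $\overline{\gamma_x(w)} = \overline{\phi(x)}\,\mathrm{e}^{-\frac14\overline{\langle w,\,a_x w\rangle}} = \overline{\phi(x)}\,\mathrm{e}^{-\frac14\langle w,\,a_x^\dagger w\rangle}$, using $\overline{\langle w,\,a_x w\rangle} = \langle a_x w,\,w\rangle = \langle w,\,a_x^\dagger w\rangle$. Now two earlier facts finish the identification. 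First, Remark \ref{rem:twine-inv} gives $a(\psi(h)) = a(h)^\dagger$; since the lift $\widetilde{\psi}$ satisfies $\tau_H \circ \widetilde{\psi} = \psi \circ \tau_H$ and $a_x$ is by definition $a(\tau_H(x))$, it follows that $a_{\widetilde{\psi}(x)} = a_x^\dagger$. Second, Proposition \ref{psi equivariance} gives $\phi \circ \widetilde{\psi} = \overline{\phi}$, i.e.\ $\phi(\widetilde{\psi}(x)) = \overline{\phi(x)}$. Combining, $\overline{\gamma_x(w)} = \phi(\widetilde{\psi}(x))\,\mathrm{e}^{-\frac14\langle w,\,a_{\widetilde{\psi}(x)} w\rangle} = \gamma_{\widetilde{\psi}(x)}(w)$, whence $R(x)^\dagger = \int_{W_\mathbb{R}} \gamma_{\widetilde{\psi}(x)}(w)\,T_w\,\mathrm{dvol}(w) = R(\widetilde{\psi}(x))$. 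For the last assertion, $\widetilde{\psi}$ maps $\widetilde{\mathrm{H}}(W^s)$ into itself because it lifts $\psi$, which stabilises $\mathrm{H}(W^s)$ by Corollary \ref{cor:3.2}; hence $x\,\widetilde{\psi}(x) \in \widetilde{\mathrm{H}}(W^s)$ and the semigroup homomorphism property $R(xy) = R(x)R(y)$ established just above yields $R(x)R(x)^\dagger = R(x)R(\widetilde{\psi}(x)) = R(x\,\widetilde{\psi}(x))$.

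There is no deep obstacle here; the content is bookkeeping, and the two places that genuinely need care are: (a) the interchange of the adjoint with the integral, which is legitimate precisely because the integral converges in operator norm — here the finite-dimensionality of $W$ and the positivity $\mathfrak{Re}(a_x) > 0$, hence the rapid Gaussian decay of $\gamma_x$, are what make this work; and (b) transporting the identities $a(\psi(h)) = a(h)^\dagger$ and the $\psi$-equivariance of $\phi$ up to the covering $\widetilde{\mathrm{H}}(W^s)$, which is exactly what Remark \ref{rem:twine-inv} and Proposition \ref{psi equivariance} were arranged to provide. A minor but easy-to-botch point is the sign cancellation giving $\langle -w,\,a_x(-w)\rangle = \langle w,\,a_x w\rangle$, which rests on the sesquilinearity of $\langle\cdot,\cdot\rangle$ and not on $w$ lying in $W_\mathbb{R}$.
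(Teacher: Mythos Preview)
Your proof is correct and follows essentially the same route as the paper's: conjugate the integrand, use $T_w^\dagger = T_{-w}$ together with the evenness $\gamma_x(-w) = \gamma_x(w)$, and identify $\overline{\gamma_x} = \gamma_{\widetilde{\psi}(x)}$ via Proposition~\ref{psi equivariance} and Remark~\ref{rem:twine-inv}; the second assertion then follows from the semigroup property. The paper is terser about justifying the interchange of adjoint and integral, but the logical content is the same.
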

\begin{proof}
We recall the relations $\overline{\phi} = \phi \circ
\widetilde{\psi}$ from Proposition \ref{psi equivariance} and
$(a_h)^\dagger = a_{\psi(h)}$ from Remark \ref{rem:twine-inv}. Since
$\overline{\langle w , a_h w \rangle} = \langle w , (a_h)^\dagger w
\rangle\,$, it follows that
\begin{displaymath}
    \overline{\gamma_x} = \gamma_{\widetilde{\psi}(x)} \,.
\end{displaymath}
The desired formula, $R(x)^\dagger = R(\widetilde{\psi}(x))$, now
results from this equation and the identities $T_w^\dagger = T_{-w}$
and $\gamma_x(-w) = \gamma_x(w)$. With this in hand, the second
statement $R(x) R(x)^\dagger = R(x\, \widetilde{\psi} (x))$ is a
consequence of the semigroup property.
\end{proof}

\subsubsection{Basic conjugation formula}

Here we compute the effect of conjugating (in the semigroup sense)
operators of the form $q(w)$, $w \in W$, with operators $R(x)$ coming
from the semigroup. This is an immediate consequence of an analogous
result for the operators $T_w\,$.  For this we first allow $T_w$ to
be defined for $w = v + \varphi \in W$ by
\begin{displaymath}
    T_w := \mathrm{e}^{\mathrm{i} q(w)} = \mathrm{e}^{\mathrm{i}
    \delta(v) + \mathrm{i} \mu(\varphi)} \;.
\end{displaymath}
These operators are no longer defined on Fock space, but are defined
on $\mathcal O(V)$. They satisfy
\begin{equation}\label{Heisenberg transformation rule}
    T_w T_{\tilde w} = T_{w + \tilde w} \, \mathrm{e}^{-
    \frac{1}{2} {A(w,\tilde w)}}\,.
\end{equation}
Note that for $x \in \widetilde{\mathrm{H}}(W^s)$ and $w \in W$ the
operators $R(x) T_w$ and $T_{\tau_H(x) w} R(x)$ are bounded on
$\mathcal{A}_V$.
%
%
Thus we interpret the following as a statement about operators on
that space.
\begin{proposition}\label{Heisenberg conjugation}
For $w \in W$ and $x \in \widetilde{\mathrm{H}}(W^s)$ one has the
relation
\begin{displaymath}
    R(x) T_w = T_{\tau_H(x) w} R(x)\,.
\end{displaymath}
\end{proposition}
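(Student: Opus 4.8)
The plan is to prove the relation by a direct computation with the Gaussian integral defining $R(x)$, in close parallel to the proof of the twisted‑convolution identity $\gamma_x \sharp \gamma_y = \gamma_{xy}$ given just above. Throughout, fix $w \in W$ and $x \in \widetilde{\mathrm{H}}(W^s)$, write $h := \tau_H(x)$ and $a_x = a(h)$, and treat the displayed integral identities on the dense domain on which the operators $T_{w'}$ ($w' \in W$) act; the final equality is then an equality of the bounded operators $R(x)T_w$ and $T_{hw}R(x)$ on $\mathcal{A}_V$ (recall that their boundedness is recorded before the statement).

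First I would bring $T_w$ inside the integral for $R(x)$ and fuse the Heisenberg exponentials via (\ref{Heisenberg transformation rule}): since $T_{w'} T_w = T_{w'+w}\,\mathrm{e}^{-\frac{1}{2}A(w',w)}$,
\[
    R(x)\,T_w = \int_{W_\mathbb{R}} \phi(x)\,\mathrm{e}^{-\frac{1}{4}\langle w',\,a_x w'\rangle}\,\mathrm{e}^{-\frac{1}{2}A(w',w)}\,T_{w'+w}\,\mathrm{dvol}(w')\;,
\]
while $T_{hw}\,T_{w'} = T_{hw+w'}\,\mathrm{e}^{-\frac{1}{2}A(hw,w')}$ gives
\[
    T_{hw}\,R(x) = \int_{W_\mathbb{R}} \phi(x)\,\mathrm{e}^{-\frac{1}{4}\langle w',\,a_x w'\rangle}\,\mathrm{e}^{-\frac{1}{2}A(hw,w')}\,T_{hw+w'}\,\mathrm{dvol}(w')\;.
\]
The idea is to transform the first integral into the second by the substitution $w' \mapsto w' + (h-\mathrm{Id}_W)w$, which turns $T_{w'+w}$ into $T_{w'+hw}$. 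Since $(h-\mathrm{Id}_W)w$ lies in $W$ but in general not in $W_\mathbb{R}$, this is a shift of the integration cycle $W_\mathbb{R}$ by a complex vector; one justifies it, exactly as in the twisted‑convolution argument, by first replacing the sesquilinear form with its holomorphic bilinear extension $\langle w',\,a_x w'\rangle = A(w',\,s a_x w')$, after which the integrand is entire in $w'$ and, because $\mathfrak{Re}(a_x) > 0$, Gaussian‑decaying in every real direction parallel to $W_\mathbb{R}$, so the Cauchy theorem applies.

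After the substitution the operator factor is $T_{w'+hw}$ on both sides, so what is left is equality of the scalar Gaussian prefactors — a purely finite‑dimensional algebraic identity. Expanding the shifted quadratic exponent and using that $s a_x \in \mathfrak{sp}$, so that the bilinear form $(v,v') \mapsto A(v,\,s a_x v')$ is symmetric, one sees that the terms linear in $w'$ force the shift vector to be $(h-\mathrm{Id}_W)w$ and then match precisely because $s a_x (\mathrm{Id}_W - h) = \mathrm{Id}_W + h$; this is just the defining relation of the modified Cayley transform $a_x = a(h) = s(\mathrm{Id}_W+h)(\mathrm{Id}_W-h)^{-1}$, equivalently $h = (a_x+s)^{-1}(a_x-s)$, hence holds by the construction of $\widetilde{\mathrm{H}}(W^s)$. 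The remaining $w'$‑independent terms cancel on further using the symplectic invariance $A(hv,hv') = A(v,v')$, and the prefactors $\phi(x)$ agree trivially since the substitution is a translation with unit Jacobian. This yields $R(x)T_w = T_{hw}R(x)$.

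I expect the only genuinely delicate point to be the book‑keeping of the complex contour deformation: one must perform the holomorphic extension and the shift of the cycle $W_\mathbb{R}$ on the space $\mathcal{O}(V)$ where $T_w$ for $w \in W$ is defined, and only at the end invoke the boundedness of $R(x)T_w$ and $T_{hw}R(x)$ on $\mathcal{A}_V$ to read the result as an identity of bounded operators. Everything else is routine once the Cayley relation between $a_x$ and $h = \tau_H(x)$ is in hand.
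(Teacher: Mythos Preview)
Your proof is correct and follows essentially the same approach as the paper: expand $R(x)T_w$ via the Heisenberg multiplication rule, rewrite the Hermitian exponent in the bilinear form $A(w',\,s a_x w')$, and shift the integration variable by $(h-\mathrm{Id}_W)w$ using the Cayley relation $s a_x(\mathrm{Id}_W-h)=\mathrm{Id}_W+h$ together with the symplectic invariance $A(hv,hv')=A(v,v')$ to recognize the result as $T_{hw}R(x)$. The paper packages the algebraic simplification via the identity $A((\mathrm{Id}_W-h)w_1,(\mathrm{Id}_W+h)w_2)=-A((\mathrm{Id}_W+h)w_1,(\mathrm{Id}_W-h)w_2)$, but this is just another way of encoding the same two facts you use.
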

\begin{proof}
For convenience of notation we write
\begin{displaymath}
    R(x) = \phi(x) \int_{W_\mathbb{R}} \mathrm{e}^{-\frac{1}{4}
    A(\tilde{w} , \, s a_x \tilde{w})} T_{\tilde{w}}\,
    \mathrm{dvol} (\tilde{w})\,.
\end{displaymath}
Thus
\begin{displaymath}
    R(x) T_w = \phi(x) \int_{W_\mathbb{R}}\mathrm{e}^{-\frac{1}{4}
    A(\tilde{w} ,\,s a_x \tilde{w}) - \frac{1}{2} A(\tilde{w},w)}
    T_{\tilde{w} + w}\, \mathrm{dvol}(\tilde{w})\;.
\end{displaymath}
Now let $h := \tau_H(x)$ and change variables by the translation
$\tilde{w} \mapsto \tilde{w} - w + h w\,$.  Using the definition $s
a_x = (\mathrm{Id}_W + h)(\mathrm{Id}_W - h)^{-1}$ and the relation
\begin{displaymath}
    A((\mathrm{Id}_W - h) w_1 ,(\mathrm{Id}_W + h)w_2) =
    - A((\mathrm{Id}_W + h)w_1, (\mathrm{Id}_W - h)w_2)
\end{displaymath}
for all $w_1 , w_2 \in W$, one simplifies the exponent to obtain
\begin{displaymath}
    R(x) T_{w} = \phi(x) \int_{W_\mathbb{R}}\mathrm{e}^{-\frac{1}{4}
    A(\tilde{w} ,\,s a_x \tilde{w})  -\frac{1}{2} A(h w , \tilde{w})}
    T_{h w + \tilde{w}}\, \mathrm{dvol}(\tilde{w})\;.
\end{displaymath}
Reading (\ref{Heisenberg transformation rule}) backwards one sees
that this expression equals $T_{hw} R(x)$.
\end{proof}
The basic conjugation rule now follows immediately.
\begin{proposition}\label{basic conjugation formula}
For every $x \in \widetilde{\mathrm{H}}(W^s)$ and $w \in W$ it
follows that
\begin{displaymath}
    R(x) q(w) = q(\tau_H(x)w) R(x)\,.
\end{displaymath}
\end{proposition}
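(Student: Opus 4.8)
The plan is to obtain this infinitesimal statement from the integrated one, Proposition \ref{Heisenberg conjugation}, by differentiating along a ray $t \mapsto tw$. Since $T_w = \mathrm{e}^{\mathrm{i}q(w)}$ by definition, one has $\frac{d}{dt}\big|_{t=0}T_{tw} = \mathrm{i}\,q(w)$ and, likewise, $\frac{d}{dt}\big|_{t=0}T_{t\,\tau_H(x)w} = \mathrm{i}\,q(\tau_H(x)w)$; the proposition is exactly what comes out of Proposition \ref{Heisenberg conjugation} upon replacing $w$ by $tw$ and taking $\frac{d}{dt}$ at $t = 0$.

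I would carry this out as follows. First, both sides of the asserted identity are $\mathbb{C}$-linear in $w$: the left side because $q$ is $\mathbb{C}$-linear, the right side because $q$ is $\mathbb{C}$-linear and $\tau_H(x) \in \mathrm{Sp}(W)$ is a $\mathbb{C}$-linear map $W \to W$. As $W = W_\mathbb{R} \oplus \mathrm{i}\,W_\mathbb{R}$, it therefore suffices to treat $w \in W_\mathbb{R}$. For such a $w = v + cv$ and $t \in \mathbb{R}$ the operator $T_{tw} = \mathrm{e}^{\mathrm{i}t(\delta(v)+\mu(cv))}$ is unitary on $\mathcal{A}_V$ with self-adjoint generator $q(w)$ (Lemma \ref{lem:h.c.}); moreover $q(w)$ preserves the algebraic module $\mathfrak{a}(V)$, which lies in the domain of every power of $q(w)$. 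Hence for $\Phi \in \mathfrak{a}(V)$ the curve $t \mapsto T_{tw}\Phi$ is norm-differentiable at $t = 0$ with derivative $\mathrm{i}\,q(w)\Phi$. Now apply Proposition \ref{Heisenberg conjugation} to $tw$ in place of $w$:
\begin{equation*}
    R(x)\,T_{tw}\,\Phi \;=\; T_{t\,\tau_H(x)w}\,R(x)\,\Phi \qquad (t \in \mathbb{R})\;.
\end{equation*}
Because $R(x)$ is a bounded operator on $\mathcal{A}_V$, the left side is norm-differentiable at $t = 0$ with derivative $\mathrm{i}\,R(x)q(w)\Phi$. On the right side, $\Psi := R(x)\Phi \in \mathcal{A}_V \subset \mathcal{O}(V)$ lies in the domain of $q(\tau_H(x)w)$, and $t \mapsto T_{t\,\tau_H(x)w}\Psi = \sum_{n\ge 0}\frac{(\mathrm{i}t)^n}{n!}\,q(\tau_H(x)w)^n\Psi$ is a real-analytic curve in $\mathcal{O}(V)$ whose derivative at $t = 0$ is $\mathrm{i}\,q(\tau_H(x)w)R(x)\Phi$. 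Equating the two derivatives and cancelling $\mathrm{i}$ gives $R(x)q(w)\Phi = q(\tau_H(x)w)R(x)\Phi$ for all $\Phi \in \mathfrak{a}(V)$, which is the claim.

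The one delicate point is the legitimacy of differentiating at $t = 0$ and moving the derivative past $R(x)$; this is exactly what the reduction to $w \in W_\mathbb{R}$ is for, since it confines the $t$-dependent factor $T_{tw}$ to the strongly continuous unitary group on $\mathcal{A}_V$, whose differentiability on $\mathfrak{a}(V)$ is standard, after which boundedness of $R(x)$ does the rest. As a check — and as an alternative that bypasses the covering space entirely — one can also extract the formula directly from the averaging integral: from $R(x)q(w)\Phi = \int_{W_\mathbb{R}}\gamma_x(\tilde w)\,T_{\tilde w}\,q(w)\Phi\,\mathrm{dvol}(\tilde w)$ together with $T_{\tilde w}q(w) = \big(q(w) + \mathrm{i}\,A(\tilde w,w)\big)T_{\tilde w}$ — a consequence of $[q(\tilde w),q(w)] = Q(\tilde w,w)$ from (\ref{eq:2.9}) and $Q\vert_{W_0} = A$ — followed by a Gaussian integration by parts in $\tilde w$ (the boundary terms vanishing by the strict positivity of $\mathfrak{Re}(a_x)$), one recovers $q(\tau_H(x)w)R(x)\Phi$; this is just the infinitesimal shadow of the computation that proves Proposition \ref{Heisenberg conjugation}. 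I expect the differentiation argument to be the cleaner of the two.
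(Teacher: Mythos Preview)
Your proof is correct and follows exactly the paper's approach: replace $w$ by $tw$ in Proposition \ref{Heisenberg conjugation} and differentiate at $t=0$. The paper states this in one line without the analytical justifications you supply; your reduction to $w\in W_\mathbb{R}$ is a reasonable way to make the differentiation rigorous, though note that the paper has already extended $T_w$ to all $w\in W$ as operators on $\mathcal{O}(V)$, so one could also differentiate directly for complex $w$.
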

\begin{proof}
Apply Proposition \ref{Heisenberg conjugation} for $w$ replaced by
$tw$ and differentiate both sides of the resulting formula at $t=0$.
\end{proof}

\subsubsection{Spectral decomposition and operator bounds}

Numerous properties of $R$ are derived from a precise description of
the spectral decomposition of $R(x)$ for $x \in M\,$. Since every
orbit of $\mathrm{Sp}_\mathbb{R}$ acting by conjugation on $M$ has
non-empty intersection with $T_+\,$, it is important to understand
this decomposition when $x \in T_+\,$. For this we begin with the
case where $V$ is one-dimensional.
\begin{proposition}\label{1-d eigenvalues}
Suppose that $V$ is one-dimensional and that the $T_+$-action on $W =
V \oplus V^*$ is given by $x \cdot (v+\varphi ) = \lambda v + \lambda
^{-1}\varphi$ where $\lambda > 1\,$. If $f$ is a basis vector of
$V^\ast$ then the monomials $\{f^m \}_{m \in \mathbb{N} \cup \{0\}}$
form a basis of $\mathcal A_V$ and one has
\begin{displaymath}
    R(x)f^m = \lambda^{-m-1/2} f^m \;.
\end{displaymath}
\end{proposition}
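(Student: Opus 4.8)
The plan is to compute $R(x)$ explicitly in the one-dimensional case by carrying out the Gaussian integral defining $R(x)=\int_{W_\mathbb{R}}\gamma_x(w)\,T_w\,\mathrm{dvol}(w)$ and then diagonalizing the resulting operator on the monomial basis $\{f^m\}$. First I would set up coordinates: with $V=\mathbb{C}e$ one-dimensional and $f=ce\in V^\ast$ the dual basis vector, parametrize $W_\mathbb{R}\simeq V$ by $v\mapsto w=v+cv$, write $v=\alpha e$ for $\alpha\in\mathbb{C}$, and express $\mathrm{dvol}$ as the normalized Gaussian measure in $\alpha$. Since $x\in T_+$ acts diagonally with $\chi(x)=\lambda>1$ on $V$ and $\lambda^{-1}$ on $V^\ast$, the modified Cayley transform $a_x=s(\mathrm{Id}_W+h)(\mathrm{Id}_W-h)^{-1}$ is diagonal, with entries built from $\lambda$; a direct computation gives $\langle w,a_x w\rangle$ as an explicit positive-definite quadratic form in $\alpha,\bar\alpha$, and $\phi(x)$ is the positive square root of $f(x)=\mathrm{Det}(a_x+s)$, which in the one-dimensional case is an elementary function of $\lambda$ — I expect $\phi(x)=(1-\lambda^{-1})^{-1}$ up to the normalization, so that the prefactor works out.

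The core step is to let the resulting operator act on $f^m$. The Heisenberg operator $T_w=\mathrm{e}^{\mathrm{i}\delta(v)+\mathrm{i}\mu(cv)}$ acts on polynomials in $f$; using the CCR and the Baker–Campbell–Hausdorff formula one has $T_w=\mathrm{e}^{\mathrm{i}\mu(cv)}\mathrm{e}^{\mathrm{i}\delta(v)}\mathrm{e}^{-\frac{1}{2}[\ldots]}$, and applying this to $f^m$ produces a polynomial whose coefficients are explicit. Then $R(x)f^m=\phi(x)\int \mathrm{e}^{-\frac14\langle w,a_x w\rangle}\,T_w f^m\,\mathrm{dvol}(w)$ is a Gaussian integral against polynomials; evaluating it term by term (Wick's theorem / moments of a complex Gaussian) collapses the sum. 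I would organize this so that the $\alpha$-integral of each monomial $\alpha^j\bar\alpha^k$ against the Gaussian is read off, and the combinatorics telescopes to give precisely $\lambda^{-m-1/2}f^m$. An alternative, cleaner route is to use Proposition~\ref{basic conjugation formula}: from $R(x)q(w)=q(hw)R(x)$ one deduces $R(x)\mu(f)=\lambda^{-1}\mu(f)R(x)$ and $R(x)\delta(e)=\lambda\,\delta(e)R(x)$; since $f^m=\mu(f)^m\cdot 1$ and the vacuum $1$ spans the $\delta(e)$-kernel, it follows that $R(x)f^m=\lambda^{-m}R(x)1$ is a scalar multiple of $f^m$, reducing everything to determining the single eigenvalue $R(x)1$.

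The main obstacle — really the only nontrivial point — is pinning down the scalar $R(x)1$, i.e.\ the $m=0$ eigenvalue $\lambda^{-1/2}$. This requires actually performing the Gaussian integral $\phi(x)\int_{W_\mathbb{R}}\mathrm{e}^{-\frac14\langle w,a_x w\rangle}T_w 1\,\mathrm{dvol}(w)$: one checks $T_w 1=\mathrm{e}^{\mathrm{i}\mu(cv)}1\cdot(\text{scalar})=\mathrm{e}^{\mathrm{i}\langle e,\,\rangle\alpha}\cdot\mathrm{e}^{\cdots}$ after reordering, and the integral evaluates to $\phi(x)\cdot\mathrm{Det}^{-1/2}(\mathfrak{Re}(a_x))$ times a correction, which by the explicit one-dimensional formulas equals $\lambda^{-1/2}$. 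I would double-check the normalization of $\phi$ on $T_+$ (positive square root) and the normalization $\int_{W_\mathbb{R}}\mathrm{e}^{-\frac14\langle w,w\rangle}\mathrm{dvol}(w)=1$ to make sure no stray factor of $2^{\dim_{\mathbb{C}}V}$ survives. Once the $m=0$ case is secured, the conjugation-formula argument of the previous paragraph delivers the general $m$ immediately, and the fact that $\{f^m\}_{m\geq 0}$ is a Hilbert basis of $\mathcal{A}_V$ in the one-dimensional case is just the statement that $\mathcal{A}_V=\mathrm{S}(V^\ast)$-completion with the $f^m/\sqrt{m!}$ orthonormal, already recorded in $\S$\ref{sect:2.6.3}.
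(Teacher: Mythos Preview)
Your direct-integration plan is exactly the paper's route: they compute $a_x w=\tfrac{\lambda+1}{\lambda-1}\,w$ for $w\in W_{\mathbb R}$, factor $T_{v+cv}=\mathrm{e}^{\mathrm{i}\mu(cv)}\mathrm{e}^{-\frac12|v|^2}\mathrm{e}^{\mathrm{i}\delta(v)}$ (valid on $\mathcal O(V)$), apply to $f^m$, and evaluate the Gaussian moments term by term, obtaining $I=\tfrac12(1-\lambda^{-1})\lambda^{-m}f^m$. One correction to your sketch: the paper finds $\phi(x)=2\lambda^{-1/2}(1-\lambda^{-1})^{-1}$, not $(1-\lambda^{-1})^{-1}$; the extra factor $2\lambda^{-1/2}$ is precisely what produces the half-integer exponent.

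Your alternative via the conjugation formula is a genuinely shorter argument that the paper does not take, even though Proposition~\ref{basic conjugation formula} is already available at that point in the text. From $R(x)\mu(f)=\lambda^{-1}\mu(f)R(x)$ and $R(x)\delta(e)=\lambda\,\delta(e)R(x)$ one gets $R(x)1\in\ker\delta(e)=\mathbb{C}\cdot 1$ and $R(x)f^m=\lambda^{-m}\mu(f)^m\,R(x)1$ (your line ``$R(x)f^m=\lambda^{-m}R(x)1$'' drops the $\mu(f)^m$, presumably a slip), so everything reduces to the vacuum eigenvalue. For that, $T_w 1=\mathrm{e}^{-\frac12|v|^2}\mathrm{e}^{\mathrm{i}\mu(cv)}1$ and angular integration kills all but the constant term, giving $R(x)1=\phi(x)\int_V\mathrm{e}^{-\frac{\lambda}{\lambda-1}|v|^2}\mathrm{dvol}(v)=\phi(x)\cdot\tfrac{\lambda-1}{2\lambda}=\lambda^{-1/2}$. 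This bypasses the $m$-dependent binomial sum in the paper's proof; the trade-off is that the paper's direct computation is self-contained and exhibits the full diagonalization in one stroke, whereas yours leans on the intertwining property but is conceptually cleaner.
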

\begin{proof}
First note that if $w = v + cv\,$, then
\begin{equation}\label{eigenvalue}
    a_x w = s \frac{1 + x}{1 - x} \cdot (v+cv) = -
    \frac{1+\lambda}{1-\lambda}\,v +\frac{1+\lambda^{-1}}{1 -
    \lambda^{-1}}\,cv = \frac{\lambda+1} {\lambda -1}(v+cv) \;.
\end{equation}
Thus the Gaussian function $\gamma_x(w)$ in the present case is
\begin{displaymath}
    \gamma_x (v + cv) = \phi(x)\, \mathrm{e}^{-\frac{1}{2}
    \frac{\lambda +1} {\lambda-1} |v|^2} \;.
\end{displaymath}
To apply the operator $T_{v+cv}$ to $f^m$ we use the description
\begin{displaymath}
    T_{v + cv} = \mathrm{e}^{\mathrm{i}\delta(v) + \mathrm{i}\mu(cv)}
    = \mathrm{e}^{\mathrm{i}\mu(cv)} \mathrm{e}^{-\frac{1}{2} |v|^2}
    \mathrm{e}^{\mathrm{i}\delta (v)}\,.
\end{displaymath}
Decomposing $T_{v+cv}$ in this way is not allowed on the Fock space,
but is allowed if we regard $T_{v+cv}$ as an operator on the full
space $\mathcal{O}(V)$ of holomorphic functions. The calculations are
now carried out on this larger space.

Recall that $\delta(v) f^m = m f(v) f^{m-1}$. From this we obtain the
explicit expression
\begin{displaymath}
    T_{v+cv} f^{m} = \mathrm{e}^{-\frac{1}{2}\vert v\vert ^2}
    \mathrm{e}^{\mathrm{i} \mu(cv)} \sum_{l=0}^m \frac{\mathrm{i}^l}
    {l!}m(m-1)\cdots (m-l+1) f(v)^l f^{m - l}\,.
\end{displaymath}
Our goal is to compute
\begin{displaymath}
    I := \int_V \mathrm{e}^{-\frac{1}{2}\frac{\lambda+1}{\lambda-1}
    |v|^2} T_{v+cv} f^m \, \mathrm{dvol}(v) \;,
\end{displaymath}
where $\mathrm{dvol}(v)$ corresponds to $\mathrm{dvol}(w)$ by the
isomorphism $V \simeq W_\mathbb{R}\,$. Expanding the exponential
$\mathrm{e}^{ \mathrm{i} \mu(cv)}$ and using $\mu(cv) f^m = |v|^2
f(v)^{-1} f^{m+1}$, the integral $I$ is a sum of Gaussian expected
values of terms of the form $|v|^{2k} f(v)^{-k} f(v)^{l}$. The only
terms which survive are those with $k = l\,$. Thus
\begin{displaymath}
\begin{aligned}
    I&= f^m \sum_{k =0}^m (-1)^k \binom{m}{k} \int_V
    \frac{|v|^{2k}}{k!} \, \mathrm{e}^{-\frac{\lambda}{\lambda-1}
    |v|^2}\, \mathrm{dvol}(v) \\ &= 2^{-1} \sum_{k=0}^m (-1)^k
    \binom{m}{k} \bigg(\frac{\lambda -1}{\lambda}\bigg)^{k+1}
    f^m = 2^{-1} (1 - \lambda^{-1})\lambda^{-m} f^m \;.
\end{aligned}
\end{displaymath}
Now $\phi(x)^2 = \mathrm{Det}(a_x+s) = \mathrm{Det}(2s
(\mathrm{Id}_W-\tau_H(x))^{-1})= (-2/ (1-\lambda)) (2/(1 -
\lambda^{-1}))$, and $\phi(x) = 2 \lambda^{-1/2} (1 -
\lambda^{-1})^{-1}$, since we are to take the positive square root at
points $x \in T_+\,$. Hence, $R(x) f^m = \phi(x) 2^{-1} (1-
\lambda^{-1}) \lambda^{-m} f^m = \lambda^{-m-1/2} f^m$ as claimed.
\end{proof}
\begin{remark}
Note that as $x \in T_+$ goes to the unit element (or, equivalently,
$\lambda \to 1$), the expression $R(x) f^m$ converges to $f^m$ in the
strong sense for all $m \in \mathbb{N} \cup \{0\}\,$.
\end{remark}
Now let $V$ be of arbitrary dimension and assume that $x \in T_+$ is
diagonalized on $W = V \oplus V^*$ in a basis $\{ e_1 , \ldots, e_d ,
c e_1,\ldots ,c e_d \}$ with eigenvalues $\lambda_1, \ldots \lambda_d
, \lambda_1^{-1}, \ldots ,\lambda_d^{-1}$ respectively. Since $x \in
T_+$, we have $\lambda_i > 1$ for all $i$. For $f_i := c e_i$ and $m
:= (m_1,\ldots , m_d)$ we employ the standard multi-index notation
$f^m := f_1^{m_1} \cdots f_d^{m_d}$ and $\lambda^m := \lambda_1^{m_1}
\cdots \lambda_d^{m_d}$. In this case the multi-dimensional integrals
split up into products of one-dimensional integrals. Thus, the
following is an immediate consequence of the above.
\begin{corollary}\label{eigenvalues in T_+}
Let $x \in T_+$ be diagonal in a basis $\{ e_i \}$ of $V$ with
eigenvalues $\lambda_i$ ($i = 1, \ldots, d$). If $f^m$ is a monomial
$f^m \equiv \prod_i (c e_i)^{m_i}$, then $R(x)f^m = \lambda^{-m-1/2}
f^m$.
\end{corollary}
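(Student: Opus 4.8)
The plan is to deduce the statement from the one-dimensional computation of Proposition \ref{1-d eigenvalues} by exploiting the tensor-product structure of every object that enters the definition of $R(x)$. Since $x \in T_+$ is diagonal with respect to the $A$-orthogonal decomposition $W = P_1 \oplus \cdots \oplus P_d$ into the planes $P_j = E_j \oplus F_j$ with $E_j = \mathbb{C}\, e_j$ and $F_j = \mathbb{C}\, f_j = \mathbb{C}\, c e_j\,$, the Fock space factorizes as the Hilbert completion of $\bigotimes_j \mathcal{A}_{E_j}\,$, under which the monomial $f^m = \prod_j f_j^{m_j}$ is the pure tensor $\bigotimes_j f_j^{m_j}$. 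Writing $w = \sum_j w_j$ with $w_j$ in the real plane $P_{j,\mathbb{R}} := P_j \cap W_\mathbb{R}\,$, the Heisenberg operator factorizes as $T_w = \prod_j T_{w_j}$ with no cocycle correction, because $\omega(w_j,w_k) = \mathrm{i}A(w_j,w_k) = 0$ for $j \ne k$ by $A$-orthogonality of the planes, so the $T_{w_j}$ mutually commute and multiply without phase by (\ref{Heisenberg multiplication}).

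Next I would observe that, applying formula (\ref{eigenvalue}) block by block, $a_x$ acts on $P_j$ as the scalar $(\lambda_j+1)/(\lambda_j-1)$, so (since the canonical Hermitian structure makes the $P_j$ orthogonal and $a_x$ preserves each $P_j$) the Gaussian weight factorizes, $\mathrm{e}^{-\frac14\langle w, a_x w\rangle} = \prod_j \mathrm{e}^{-\frac12 \frac{\lambda_j+1}{\lambda_j-1}|v_j|^2}$ where $w_j = v_j + c v_j\,$, while the prefactor satisfies $\phi(x) = \prod_j \phi_j(x)$ with $\phi_j$ the square root attached to the one-dimensional semigroup on $P_j\,$; this factorization of the square root is legitimate because $f(\tau_H(x)) = \mathrm{Det}(a_x + s)$ is the product of the corresponding block determinants and the sign is fixed by positivity on the lift of $T_+\,$. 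Feeding these factorizations into
\begin{displaymath}
    R(x) f^m = \phi(x) \int_{W_\mathbb{R}} \mathrm{e}^{-\frac14 \langle w, a_x w\rangle}\, T_w f^m \, \mathrm{dvol}(w)
\end{displaymath}
and invoking Fubini — legitimate because $\mathfrak{Re}(a_x) > 0$ makes the integrand rapidly decreasing on $W_\mathbb{R} = \bigoplus_j P_{j,\mathbb{R}}$ — the integral breaks into a product of $d$ one-dimensional integrals, each of which is exactly the integral $I$ evaluated in the proof of Proposition \ref{1-d eigenvalues}. Collecting the one-dimensional outputs $\lambda_j^{-m_j-1/2}$ on the $j$-th factor then gives $R(x) f^m = \big( \prod_j \lambda_j^{-m_j-1/2} \big) f^m = \lambda^{-m-1/2} f^m$ in the multi-index notation.

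The one point that requires care is the same one flagged in the one-dimensional argument: the convenient splitting $T_{v+cv} = \mathrm{e}^{\mathrm{i}\mu(cv)}\, \mathrm{e}^{-\frac12 |v|^2}\, \mathrm{e}^{\mathrm{i}\delta(v)}$, and hence the block-by-block manipulation of $T_w\,$, is not valid on the Hilbert space $\mathcal{A}_V$ but only on the larger space $\mathcal{O}(V)$ of holomorphic functions, so the whole computation should be run there; one then returns to $\mathcal{A}_V$ at the end using that $R(x)$ is bounded (by the estimate (\ref{eq:Schranke})) and that $\mathfrak{a}(V)$ is dense. No estimate beyond what Proposition \ref{1-d eigenvalues} already supplies is needed, which is why the corollary is essentially immediate.
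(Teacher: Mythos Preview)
Your proposal is correct and follows exactly the approach the paper intends: the paper's entire proof is the one sentence ``In this case the multi-dimensional integrals split up into products of one-dimensional integrals,'' and you have simply written out the details of that splitting (tensor factorization of $\mathcal{A}_V$, $A$-orthogonality giving $T_w = \prod_j T_{w_j}$, block-diagonality of $a_x$, and multiplicativity of $\phi$). Nothing is missing and nothing is done differently.
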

One would expect the same result for the spectrum to hold for every
conjugate $g T_+ g^{-1}$, and this expectation is indeed borne out.
However, in the approach we are going to take here, we first need the
existence and basic properties of the oscillator representation of
the metaplectic group. The following is a first step in this
direction.
\begin{proposition}\label{operator bound}
The operator norm function $\mathrm{Mp} \times T_+ \to \mathbb{R}^{>0}$,
$(g,t) \mapsto \Vert R(g\,t g^{-1}) \Vert$ is bounded by a continuous
$\mathrm{Mp}$-independent function $C(t) < 1\,$.
\end{proposition}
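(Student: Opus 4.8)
The plan is to reduce the statement to the \emph{a priori} operator-norm estimate $\Vert R(x)\Vert \le C(x)$ recorded in (\ref{eq:Schranke}), specialized to elements of $\widetilde{\mathrm{H}}(W^s)$ lying over the totally real manifold $M$. First I would check applicability: for $g \in \mathrm{Mp}$ and $t \in T_+$ the element $g\,t\,g^{-1} = (g,g)\cdot t$, formed via the $\mathrm{Mp}\times\mathrm{Mp}$-action of Proposition \ref{lifting the product action}, lies in $\widetilde{\mathrm{H}}(W^s)$ (that action acts on $\widetilde{\mathrm{H}}(W^s)$ and $t$ lies in the canonical lift of $T_+$); and by the $\tau_H$-equivariance of that action together with Proposition \ref{M-slice}, its image $m := \tau_H(g\,t\,g^{-1}) = \mathrm{Int}(\tau(g))\,t$ lies in $M = \mathrm{Int}(\mathrm{Sp}_\mathbb{R})\,T_+$. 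Since (\ref{eq:Schranke}) holds at every point of $\widetilde{\mathrm{H}}(W^s)$ and $C(x)$ depends on $x$ only through $h = \tau_H(x)$, this already gives $\Vert R(g\,t\,g^{-1})\Vert \le C(g\,t\,g^{-1})$, the right-hand side being computable from $m$.

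Next I would simplify $C$ using the defining property of $M$. Because $m \in M \subset \mathrm{Fix}(\psi)$ for the anti-holomorphic involution $\psi(g) = \sigma(g^{-1}) = s\,g^\dagger s$, one has $m^\dagger = s\,m\,s$, hence $m^\dagger s\, m = s\,m^2$ and therefore
\begin{displaymath}
    s - m^\dagger s\, m = s\,(\mathrm{Id}_W - m)(\mathrm{Id}_W + m) \;.
\end{displaymath}
The operator $m = \mathrm{Int}(\tau(g))\,t$ has the same eigenvalues on $W = V \oplus V^\ast$ as $t$, namely $\lambda_1,\dots,\lambda_d$ and $\lambda_1^{-1},\dots,\lambda_d^{-1}$ with all $\lambda_j > 1$, so $\mathrm{Id}_W \pm m$ are invertible and in (\ref{eq:Schranke}) the factor $\mathrm{Det}(\mathrm{Id}_W - m)$ cancels against the corresponding factor of $\mathrm{Det}(s - m^\dagger s\, m)$. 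Using $|\mathrm{Det}(s)| = 1$ and the conjugation-invariance of the determinant, so that $\mathrm{Det}(\mathrm{Id}_W + m) = \mathrm{Det}(\mathrm{Id}_W + t) = \prod_{j=1}^d (1+\lambda_j)(1+\lambda_j^{-1}) > 0$, one is left with
\begin{displaymath}
    \Vert R(g\,t\,g^{-1})\Vert \;\le\; 2^{\dim_\mathbb{C} V}\,
    \mathrm{Det}(\mathrm{Id}_W + t)^{-1/2} \;=\;
    \prod_{j=1}^d \frac{2}{\sqrt{(1+\lambda_j)(1+\lambda_j^{-1})}}
    \;=:\; C(t) \;.
\end{displaymath}
This $C(t)$ is manifestly continuous on $T_+$ and independent of $g$; rewriting the estimate in terms of the conjugation-invariant quantity $\mathrm{Det}(\mathrm{Id}_W + t)$ is precisely what removes the $\mathrm{Mp}$-dependence.

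Finally I would establish $C(t) < 1$: since each $\lambda_j > 1$, the AM--GM inequality gives $(1+\lambda_j)(1+\lambda_j^{-1}) = 2 + \lambda_j + \lambda_j^{-1} > 4$, so every one of the $d$ factors of $C(t)$ is strictly less than $1$. (As a consistency check, for $x = t$ itself Corollary \ref{eigenvalues in T_+} and the self-adjointness $R(t)^\dagger = R(\widetilde{\psi}(t)) = R(t)$ — valid because $\widetilde{\psi}$ restricts to the identity on $M \supset T_+$ — give $\Vert R(t)\Vert = \prod_{j=1}^d \lambda_j^{-1/2}$, which is $\le C(t)$ since $1 + \lambda_j \ge 2\sqrt{\lambda_j}$.) I do not expect a genuine obstacle here; the one step that needs care is the first, namely pinning down that $g\,t\,g^{-1}$ really is a point of $\widetilde{\mathrm{H}}(W^s)$ whose $\tau_H$-image is a conjugate of $t$ \emph{inside $M$} (not merely inside $\mathrm{H}(W^s)$), since that is what simultaneously makes $R$ defined, makes the estimate (\ref{eq:Schranke}) applicable, and forces the $g$-dependence to disappear.
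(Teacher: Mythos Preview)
Your proof is correct and follows essentially the same approach as the paper: both invoke the a priori bound (\ref{eq:Schranke}), observe that $C(x)$ is invariant under $\mathrm{Mp}$-conjugation, evaluate it at $t \in T_+$ to get $C(t) = 2^{\dim_\mathbb{C} V} \prod_i (\lambda_i^{1/2}+\lambda_i^{-1/2})^{-1}$ (your expression $2/\sqrt{(1+\lambda_i)(1+\lambda_i^{-1})}$ is the same thing), and conclude $C(t)<1$ from $\lambda_i>1$. The only cosmetic difference is that the paper obtains the conjugation-invariance of $C(x)$ directly from the fact that $\tau(g)\in\mathrm{Sp}_\mathbb{R}$ is a $\langle\,,\,\rangle_s$-isometry, whereas you route it through the identity $m^\dagger = s m s$ valid on $M$; both arguments yield the same cancellation.
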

\begin{proof}
For any $x \in \widetilde{\mathrm{H}}(W^s)$ we already have the bound
$\Vert R(x) \Vert \le C(x)$ where $C(x)$ was computed in
(\ref{eq:Schranke}). That function $C(x)$ clearly is invariant under
conjugation $x \mapsto g x g^{-1}$ by $g \in \mathrm{Mp}\,$.
Evaluating it for the case of an element $x \equiv t \in T_+$ with
eigenvalues $\lambda_i$ one obtains
\begin{displaymath}
    C(t) = 2^{\, \dim_\mathbb{C} V} \prod\nolimits_i \left(
    \lambda_i^{1/2}+\lambda_i^{-1/2} \right)^{-1} \;.
\end{displaymath}
The inequality $C(t) < 1$ now follows from the fact that $\lambda_i
> 1$ for all $i\,$.
\end{proof}
Since $R(x)^\dagger = R(\widetilde{\psi}(x))$ and $\Vert R(t g)
\Vert^2 = \Vert R(t g)^\dagger R(t g) \Vert = \Vert R(g^{-1} t^2 g)
\Vert$, we infer the following estimates.
\begin{corollary}\label{general operator bound}
For all $t \in T_+$ and $g \in \mathrm{Mp}$ one has $\Vert R(t g) \Vert < 1$
and $\Vert R(gt) \Vert < 1\,$.
\end{corollary}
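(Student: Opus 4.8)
The plan is to deduce both inequalities from Proposition \ref{operator bound} by invoking the $C^\ast$-identity for the operator norm, $\Vert A \Vert^2 = \Vert A^\dagger A \Vert$ and $\Vert A \Vert = \Vert A^\dagger \Vert$, together with the adjoint formula $R(x)^\dagger = R(\widetilde\psi(x))$ and the semigroup property $R(xy) = R(x) R(y)$. First I would determine $\widetilde\psi$ on the relevant products. Since $t \in T_+$ lies in the lifted submanifold $M$, on which $\widetilde\psi$ restricts to the identity, the $(\mathrm{Mp} \times \mathrm{Mp})$-equivariance $\widetilde\psi(g_1 x\, g_2^{-1}) = g_2\, \widetilde\psi(x)\, g_1^{-1}$ gives
\begin{displaymath}
    \widetilde\psi(tg) = g^{-1} t \;, \qquad \widetilde\psi(gt) = t\, g^{-1} \;.
\end{displaymath}

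Next I would compute, using $R(tg)^\dagger = R(\widetilde\psi(tg)) = R(g^{-1}t)$ and the semigroup law,
\begin{displaymath}
    R(tg)^\dagger R(tg) = R(g^{-1}t)\, R(tg) = R(g^{-1} t^2\, g) \;.
\end{displaymath}
Because $t \in T_+$ has all its eigenvalues on $V$ strictly greater than $1$, its square $t^2$ again lies in $T_+$ (the squaring map $T_+ \to T_+$ is the one appearing in Corollary \ref{left/right transitive}), so $g^{-1} t^2 g$ is a well-defined element of $\mathrm{Int}(\mathrm{Mp}) T_+ = M \subset \widetilde{\mathrm{H}}(W^s)$. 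Proposition \ref{operator bound}, applied with $g^{-1}$ in place of $g$ and $t^2$ in place of $t$, then yields $\Vert R(g^{-1} t^2 g) \Vert \le C(t^2) < 1$, whence $\Vert R(tg) \Vert^2 = \Vert R(tg)^\dagger R(tg) \Vert = \Vert R(g^{-1} t^2 g) \Vert < 1$. For the remaining estimate I would simply write $\Vert R(gt) \Vert = \Vert R(gt)^\dagger \Vert = \Vert R(\widetilde\psi(gt)) \Vert = \Vert R(t\, g^{-1}) \Vert$ and apply the bound just proved with $g$ replaced by $g^{-1}$.

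I do not expect any genuine obstacle here: the substantive content sits entirely in Proposition \ref{operator bound} and in the lifting statement for $\widetilde\psi$, and what remains is bookkeeping. The two points deserving a line of care are, first, that the identities $R(x)^\dagger = R(\widetilde\psi(x))$ and $R(xy) = R(x) R(y)$ are legitimately applied to the products $x = g^{-1}t$, $y = tg$ — which is fine, since $\widetilde{\mathrm{H}}(W^s)$ is a semigroup containing all the factors in question and $\widetilde\psi$ maps it to itself — and, second, that $t^2$ really does belong to $T_+$, so that Proposition \ref{operator bound} literally covers the conjugate $g^{-1} t^2 g$.
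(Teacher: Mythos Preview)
Your proposal is correct and follows exactly the paper's approach: the paper's proof is the single line ``Since $R(x)^\dagger = R(\widetilde{\psi}(x))$ and $\Vert R(t g) \Vert^2 = \Vert R(t g)^\dagger R(t g) \Vert = \Vert R(g^{-1} t^2 g) \Vert$, we infer the following estimates,'' which is precisely the computation you carry out, and your handling of $\Vert R(gt)\Vert$ via the adjoint is the natural complement.
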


\subsection{Representation of the metaplectic group}
\label{metaplectic rep}

Recall that we have realized the metaplectic group $\mathrm{Mp}$ in
the boundary of the oscillator semigroup $\widetilde{\mathrm{H}}
(W^s)$ and that $\widetilde{\mathrm{H}}(W^s)$ contains the lifted
manifold $T_+$ in such a way that the neutral element $\mathrm{Id}
\in \mathrm{Mp}$ is in its boundary. Here we show that for $x \in
T_+$ and $g \in \mathrm{Mp}$ the limit $\lim_{x \to \mathrm{Id}}
R(gx)$ is a well-defined unitary operator $R'(g)$ on Fock space and
$R' :\, \mathrm{Mp} \to \mathrm{U} (\mathcal{A}_V)$ is a unitary
representation.  The basic properties of this \emph{oscillator
representation} are then used to derive important facts about the
semigroup representation $R\,$.

Convergence will eventually be discussed in the so-called
\emph{bounded strong* topology} (see \cite{H2}, p.\ 71). For the
moment, however, we shall work with the slightly weaker notion of
bounded strong topology where one only requires uniform boundedness
and pointwise convergence of the operators themselves (with no
mention made of their adjoints). Note that since $\Vert R(gx) \Vert <
1$ by Corollary \ref{general operator bound}, we need only prove the
convergence of $R(gx) f$ on a dense set of functions $f \in
\mathcal{A}_V$. Let us begin with $g = \mathrm{Id}\,$.
\begin{lemma}
If a sequence $x_n \in T_+$ converges to $\mathrm{Id} \in
\mathrm{Mp}\,$, then the sequence $R(x_n)$ converges in the bounded
strong topology to the identity operator on Fock space.
\end{lemma}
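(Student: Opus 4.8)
The plan is to exploit the explicit diagonalization of $R$ on the toral semigroup together with a uniform operator bound and a standard density argument. Since $T_+ = \exp(\mathrm{i}\mathfrak{t}_\mathbb{R}) \cap \mathrm{H}(W^s)$ consists of operators that are simultaneously diagonal in the fixed basis $\{e_j\}$ of $V$, Corollary \ref{eigenvalues in T_+} applies uniformly over $T_+$: writing $\lambda(x) = (\lambda_1(x),\ldots,\lambda_d(x))$ for the tuple of eigenvalues on $V$ of $x \in T_+$ and $f_j := c e_j$, one has $R(x) f^m = \lambda(x)^{-m-1/2} f^m$ for every multi-index $m$, with $f^m = \prod_j f_j^{m_j}$. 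After normalization by $\sqrt{m!}$ the monomials $\{f^m\}$ form an orthonormal basis of $\mathcal{A}_V$, so their finite linear combinations are dense.

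First I would show that $x_n \to \mathrm{Id} \in \mathrm{Mp}$ forces $\lambda_i(x_n) \to 1$ for all $i$. The covering map $\tau_H$ extends continuously to the closure of $\widetilde{\mathrm{H}}(W^s)$ inside $\widetilde{U}$, and by construction the lifted $T_+$ was chosen so that $\tau_H$ restricts to it as the tautological inclusion $T_+ \hookrightarrow \exp(\mathrm{i}\mathfrak{t}_\mathbb{R}) \cap \mathrm{H}(W^s) \subset \mathrm{Sp}$, with $\tau_H(\mathrm{Id}_{\mathrm{Mp}}) = \mathrm{Id}_W$. Since eigenvalues depend continuously on an operator, $\tau_H(x_n) \to \mathrm{Id}_W$ yields $\lambda_i(x_n) \to 1$. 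Hence $\lambda(x_n)^{-m-1/2} \to 1$ for each fixed $m$, so $R(x_n) f^m \to f^m$ in norm, and by linearity $R(x_n) p \to p$ for every finite linear combination $p$ of monomials.

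Next I would combine this pointwise convergence on a dense set with the uniform bound $\Vert R(x_n) \Vert \le C(x_n) < 1$ coming from Proposition \ref{operator bound} (equivalently Corollary \ref{general operator bound}); in any case $\sup_n \Vert R(x_n) \Vert \le 1$. Given $f \in \mathcal{A}_V$ and $\varepsilon > 0$, choose a finite combination $p$ of monomials with $\Vert f - p \Vert < \varepsilon$. Then
\[
    \Vert R(x_n) f - f \Vert \le \Vert R(x_n) \Vert\, \Vert f - p \Vert
    + \Vert R(x_n) p - p \Vert + \Vert p - f \Vert
    \le 2\varepsilon + \Vert R(x_n) p - p \Vert ,
\]
so $\limsup_n \Vert R(x_n) f - f \Vert \le 2\varepsilon$; letting $\varepsilon \to 0$ gives $R(x_n) f \to f$ for all $f$. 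Together with the uniform bound, this is exactly convergence of $R(x_n)$ to $\mathrm{Id}_{\mathcal{A}_V}$ in the bounded strong topology.

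The only point that genuinely requires care is the continuity step at the boundary: $\mathrm{Id} \in \mathrm{Mp}$ lies in the boundary of $T_+$ within $\widetilde{U}$ rather than in $\widetilde{\mathrm{H}}(W^s)$ itself, so one must invoke the continuous extension of $\tau_H$ to the closure and the fact that the lift of $T_+$ was pinned down as a subsemigroup, which makes $\tau_H\vert_{T_+}$ the identification with the toral part of $\mathrm{H}(W^s) \subset \mathrm{Sp}$ and hence gives continuous dependence of the eigenvalues $\lambda_i$. Everything else is the routine $\varepsilon/3$ argument, and I expect no further obstacle.
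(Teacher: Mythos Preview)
Your proof is correct and follows essentially the same approach as the paper: convergence on the dense set of $T_+$-eigenfunctions via Corollary~\ref{eigenvalues in T_+}, combined with the uniform operator bound $\Vert R(x_n)\Vert < 1$ to pass to all of Fock space. The paper simply states this more tersely, having already remarked (just before the lemma) that the uniform bound from Corollary~\ref{general operator bound} reduces the problem to pointwise convergence on a dense set.
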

\begin{proof}
If $f$ is any $T_+$-eigenfunction, the sequence $R(x_n) f$ converges
to $f$ by the explicit description of the spectrum given in Corollary
\ref{eigenvalues in T_+}. The statement then follows because the
subspace generated by these functions is dense.
\end{proof}
Using this lemma along with the semigroup property, we now show that
the limiting operators exist and are well-defined.
\begin{proposition}\label{well--defined}
If $x_n \in T_+$ converges to $\mathrm{Id} \in \mathrm{Mp}\,$, then
for every $g \in \mathrm{Mp}$ the sequence of operators $R(g x_n)$
converges pointwise, i.e., $R(g x_n) f \to R^\prime(g) f$ for all $f$
in $\mathcal{A}_V$. The limiting operator $R'(g)$ is independent of
the sequence $\{ x_n \}\,$.
\end{proposition}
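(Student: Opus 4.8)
The plan is to exploit the semigroup relation $R(x)R(y)=R(xy)$ together with the fact that $T_+$ is abelian and that $R$ is norm-bounded on a neighborhood (in $\widetilde{\mathrm{H}}(W^s)$) of the identity of $\mathrm{Mp}$. First I would observe that for any two sequences $x_n,y_m\in T_+$ converging to $\mathrm{Id}\in\mathrm{Mp}$, commutativity of $T_+$ gives $x_n y_m=y_m x_n$, hence $R(g x_n)R(x_m)=R(g x_n x_m)=R(g x_m x_n)=R(g x_m)R(x_n)$; this kind of identity will be the algebraic engine. More directly, fix $g\in\mathrm{Mp}$ and a convergent sequence $x_n\to\mathrm{Id}$. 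For $n\ge m$ write (using that $x_n x_m^{-1}$ eventually lies in $T_+$, since $T_+$ is a cone and $x_n,x_m$ are deep inside it for large $n,m$ — more carefully, pass to $\sqrt{x_n}\,$ and use $x_n = \sqrt{x_n}\cdot\sqrt{x_n}$ to split off a small factor) the decomposition $R(g x_n)=R(g x_m)\,R(x_m^{-1}x_n)$ whenever $x_m^{-1}x_n\in T_+$. By the previous Lemma, $R(x_m^{-1}x_n)\to\mathrm{Id}$ in the bounded strong topology as $n,m\to\infty$; combined with the uniform bound $\|R(g x_m)\|<1$ from Corollary~\ref{general operator bound}, this shows $\{R(g x_n)f\}$ is Cauchy in $\mathcal{A}_V$ for every $f$, hence convergent. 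Call the limit $R'(g)f$.

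The delicate point — and what I expect to be the main obstacle — is the comparison step: making precise that $x_m^{-1}x_n$ (or a suitable substitute) lies in $T_+$ and tends to $\mathrm{Id}$, since a priori $x_n,x_m$ are only required to converge to $\mathrm{Id}$ in $\mathrm{Mp}$, which sits in the \emph{boundary} of $\widetilde{\mathrm{H}}(W^s)$, not in its interior. To handle this cleanly I would use the cone picture: $T_+=\exp(\mathfrak{t}_+)$ with $\mathfrak{t}_+$ an open cone in $\mathrm{i}\mathfrak{t}_\mathbb{R}$, and $x_n=\exp(\xi_n)$ with $\xi_n\to 0$. Then for $n,m$ large, $\xi_n-\xi_m$ need not be in $\mathfrak{t}_+$, so instead I would compare both $R(g x_n)f$ and $R(g x_m)f$ against a common ``reference'' by writing, for a fixed auxiliary $t=\exp(\xi)\in T_+$ with $\xi\in\mathfrak{t}_+$ small, $R(g x_n)=R(g t)\,R(t^{-1}x_n)$ once $\xi-\xi_n\in\mathfrak{t}_+$ (true for $n$ large, as $\xi_n\to0$ and $\mathfrak{t}_+$ is open with $\xi$ interior). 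Since $R(g t)$ is a fixed bounded operator and $R(t^{-1}x_n)\to R'(t^{-1})$ already exists by the one-parameter analysis inside $T_+$ (Corollary~\ref{eigenvalues in T_+}), pointwise convergence of $R(g x_n)f$ follows, with limit $R(g t)R'(t^{-1})f$; independence of $t$ then follows from the semigroup law inside $T_+$, and independence of the sequence $\{x_n\}$ from the same argument applied to the interleaved sequence.

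Finally, to see that the limit $R'(g)$ is independent of the approximating sequence, I would interleave: given two sequences $x_n\to\mathrm{Id}$, $y_n\to\mathrm{Id}$ in $T_+$, the interleaved sequence $x_1,y_1,x_2,y_2,\dots$ also converges to $\mathrm{Id}$, so $R(g x_n)f$ and $R(g y_n)f$ are both subsequences of a convergent sequence and therefore share the limit. The verification that each $R(t^{-1}x_n)$ converges strongly to an operator $R'(t^{-1})$ is immediate from the explicit eigenvalue formula $R(s)f^m=\mu^{-m-1/2}f^m$ for $s\in T_+$ diagonal with eigenvalues $\mu_i$ (Corollary~\ref{eigenvalues in T_+}) together with density of the monomials $f^m$ in $\mathcal{A}_V$ and the uniform bound $\|R(s)\|<1$; this is the routine part I would not belabor. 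The unitarity of $R'(g)$ and the representation property $R'(g_1)R'(g_2)=R'(g_1g_2)$ are deferred to the subsequent statements in the paper and are not needed here.
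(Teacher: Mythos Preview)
Your overall strategy---factor through the semigroup and use the uniform bound $\|R(gt)\|<1$---is exactly the paper's. But your implementation in the second paragraph has a direction error that breaks the argument.

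You write $R(gx_n)=R(gt)\,R(t^{-1}x_n)$ and claim this holds once $\xi-\xi_n\in\mathfrak{t}_+$. For the semigroup factorization to apply you need $t^{-1}x_n\in\widetilde{\mathrm{H}}(W^s)$; for toral elements this means $\xi_n-\xi\in\mathfrak{t}_+$, not $\xi-\xi_n\in\mathfrak{t}_+$. With a \emph{fixed} $\xi\in\mathfrak{t}_+$ and $\xi_n\to0$ one has $\xi_n-\xi\to-\xi\notin\mathfrak{t}_+$, so $t^{-1}x_n$ eventually leaves $T_+$ and $R(t^{-1}x_n)$ is undefined. Consequently the object ``$R'(t^{-1})$'' you invoke does not exist: $t^{-1}$ lies neither in $\mathrm{Mp}$ nor in $\widetilde{\mathrm{H}}(W^s)$, and the eigenvalue formula of Corollary~\ref{eigenvalues in T_+} would produce the divergent factors $\lambda_i^{+m_i+1/2}$. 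A fixed reference $t$ cannot work precisely because the $x_n$ are heading to the boundary.

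The paper's remedy is to let $t$ depend on the pair $(m,n)$: given $m,n$, choose $t=t(m,n)\in T_+$ so close to $\mathrm{Id}$ that both $\tilde x_m:=t^{-1}x_m$ and $\tilde x_n:=t^{-1}x_n$ remain in $T_+$ (possible since $\mathfrak{t}_+$ is open and $\xi_m,\xi_n$ are interior points). Then
\[
\|R(gx_m)f-R(gx_n)f\|\le\|R(gt)\|\,\|R(\tilde x_m)f-R(\tilde x_n)f\|,
\]
and now letting $t\to\mathrm{Id}$ (so $\tilde x_m\to x_m$, $\tilde x_n\to x_n$) together with the uniform bound $\|R(gt)\|<1$ from Corollary~\ref{general operator bound} gives the clean inequality
\[
\|R(gx_m)f-R(gx_n)f\|\le\|R(x_m)f-R(x_n)f\|,
\]
which transfers the Cauchy property from the already-established case $g=\mathrm{Id}$. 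Your interleaving argument for sequence-independence is fine; the paper handles that step by applying the same inequality to the mixed pair $(x_n,y_n)$.
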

\begin{proof}
For any $m , n \in \mathbb{N}$ there exists some $t = t(m,n) \in T_+$
sufficiently near the identity so that $\tilde{x}_m = t^{-1} x_m$ and
$\tilde{x}_n = t^{-1} x_n$ are still in $T_+\,$. By the semigroup
property $R(g x_n) = R(g t) R(\tilde{x}_n)$ we then have
\begin{displaymath}
    \Vert R(g x_m)f - R(g x_n)f \Vert \le \Vert R(g t) \Vert
    \, \Vert R(\tilde{x}_m)f - R(\tilde{x}_n)f \Vert\,.
\end{displaymath}
Letting $t = t(m,n)\to\mathrm{Id}$ it follows from Corollary
\ref{general operator bound} that
\begin{displaymath}
    \Vert R(g x_m)f - R(g x_n)f \Vert \le \Vert R(x_m)f - R(x_n)f \Vert\,.
\end{displaymath}
Thus the Cauchy property of $R(x_n)f$ is passed on to $R(g x_n)f$ and
therefore the sequence $R(g x_n)f$ converges in the Hilbert space
$\mathcal{A}_V$. Let $\lim_{n \to \infty} R(g x_n)f =: R^\prime(g)
f$.

To show that the limit is well-defined, pick from $T_+$ another
sequence $y_n \to \mathrm{Id}\,$, let $\lim_{n \to \infty} R(g y_n)f
=: R^{\prime\prime}(g) f\,$, and notice that $\Vert R^\prime(g) f -
R^{\prime\prime}(g)f \Vert$ is no bigger than
\begin{displaymath}
    \Vert R^\prime(g) f - R(g x_n)f \Vert + \Vert R(g x_n) f - R(g
    y_n)f \Vert + \Vert R(g y_n)f - R^{\prime\prime}(g) f \Vert \;.
\end{displaymath}
Using the same reasoning as above, the middle term is estimated as
\begin{displaymath}
    \Vert R(g x_n) f - R(g y_n) f \Vert \le \Vert R(x_n) f - R(y_n) f
    \Vert \le \Vert R(x_n) f - f \Vert + \Vert R(y_n) f - f \Vert \;.
\end{displaymath}
In the limit $n \to \infty$ this yields the desired result $R^\prime
(g) = R^{\prime\prime}(g)$.
\end{proof}
\begin{remark}
Since $\Vert R(g x_n) \Vert < 1$ the sequence $R(g x_n)$ converges to
$R^\prime(g)$ in the bounded strong topology. Such convergence
preserves the product of operators, which is to say that if $A_n \to
A$ and $B_n \to B$, then $A_n B_n \to AB\,$. Indeed,
\begin{displaymath}
    \Vert (A_n B_n - AB) f \Vert \le \Vert A_n (B_n - B)f \Vert +
    \Vert (A_n - A) B f \Vert \;,
\end{displaymath}
and convergence follows from $\Vert A_n \Vert < 1$ and $A_n \to A$,
$B_n \to B$. Note in particular that if $R(g x_n) \to R^\prime(g)$
and $R(g^{-1} x_n) \to R^\prime(g^{-1})$ then $R(g x_n) R(g^{-1} x_n)
\to R^\prime(g) R^\prime(g^{-1})$.
\end{remark}
The bounded strong$^*$ topology also requires pointwise convergence
of the sequence of adjoint operators. Therefore we must also consider
sequences of the form $R(g x_n)^\dagger$. For this (see the proof of
Theorem \ref{thm:Weil-repn} below) we will use the following fact.
\begin{lemma}
Let $\{ A_n \}$ and $\{ B_n \}$ be sequences of bounded operators and
let $C_n := A_n B_n\,$.  If $C_n$ and $B_n$ converge pointwise with
$B_n \to B$ and the sequence $\{ A_n \}$ is uniformly bounded, then
$A_n$ converges pointwise on the image of $B\,$.
\end{lemma}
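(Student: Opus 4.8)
The plan is to reduce the statement to the algebraic identity $A_n B = C_n + A_n(B-B_n)$ and then invoke the two hypotheses that are actually available: pointwise convergence of $C_n$ and uniform boundedness of $\{A_n\}$.

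Concretely, set $\kappa := \sup_n \Vert A_n \Vert$, which is finite by assumption. Fix a vector $v \in \mathrm{im}(B)$ and choose $w$ with $Bw = v$. Then for every $n$,
\[
    A_n v = A_n B w = A_n B_n w + A_n (B - B_n) w = C_n w + A_n (B - B_n) w \,.
\]
Since $B_n w \to B w$, the vectors $(B - B_n)w$ tend to $0$ in $\mathcal{A}_V$, so $\Vert A_n (B - B_n) w \Vert \le \kappa \, \Vert (B - B_n) w \Vert \to 0$, while $C_n w$ converges by hypothesis. Hence $A_n v$ converges, with $\lim_n A_n v = \lim_n C_n w$. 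As $v \in \mathrm{im}(B)$ was arbitrary, the sequence $\{A_n\}$ converges pointwise on $\mathrm{im}(B)$. The proof presents no essential difficulty; the only point worth flagging is cosmetic: the displayed identity names the limit through a chosen preimage $w$, but what the argument actually shows is that $A_n v$ converges and its limit depends on $v$ alone, so $\lim_n A_n$ is a well-defined linear map on $\mathrm{im}(B)$ (equivalently, $Bw = Bw'$ forces $\lim_n C_n w = \lim_n C_n w'$, which the argument delivers automatically).

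For the record, in the intended application (cf.\ the proof of Theorem~\ref{thm:Weil-repn}) one takes $A_n = R(g x_n)^\dagger$ and $B_n = R(g x_n)$, so that $B_n \to R^\prime(g)$ by Proposition~\ref{well--defined}, while $C_n = R(g x_n)^\dagger R(g x_n)$; using the adjoint formula $R(x)^\dagger = R(\widetilde{\psi}(x))$, the semigroup multiplication rule, the equivariance of $\widetilde{\psi}$, and the fact that $\widetilde{\psi}$ restricts to the identity on $T_+$, one gets $C_n = R(x_n^2)$, which converges to the identity operator as $x_n \to \mathrm{Id}$. The lemma then gives pointwise convergence of the adjoints $R(g x_n)^\dagger$ on $\mathrm{im}(R^\prime(g))$, which is exactly what is needed to promote bounded-strong convergence of $R(g x_n)$ to bounded-strong$^*$ convergence.
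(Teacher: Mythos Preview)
Your proof is correct and essentially the same as the paper's: both choose a preimage $h$ with $Bh=f$ and exploit the identity $A_n B h = C_n h + A_n(B-B_n)h$, the paper merely phrasing this as a Cauchy estimate $(A_m-A_n)f = A_m(f-B_m h) + (C_m-C_n)h + A_n(B_n h - f)$ rather than exhibiting the limit directly. Your ``for the record'' application is also valid, though the paper's immediate use of the lemma is with $A_n=R(g x_n g^{-1})$, $B_n=R(x_n)$, $C_n=R(g x_n)R(g^{-1}x_n)$; the adjoint convergence is then handled slightly differently there.
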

\begin{proof}
If $f \in \mathrm{im}\, B\,$ then $f = \lim f_n$ where $f_n = B_n h$
for some Hilbert vector $h$. Write
\begin{displaymath}
    (A_m - A_n)f = A_m (f - f_m) + (A_m B_m - A_n B_n)h +
    A_n (f_n - f)
\end{displaymath}
and use the uniform boundedness of $A_n$ to show that $A_n f$
converges.
\end{proof}
Applying this with $A_n = R(g x_n g^{-1})$, $B_n = R(x_n)$ and $C_n =
A_n B_n = R(g x_n)R(g^{-1} x_n)$, we have the following statement
about convergence along the conjugate $g T_+ g^{-1}$.
\begin{proposition}
For $g \in \mathrm{Mp}$ and $\{ x_n \}$ any sequence in $T_+$ with
$x_n \to \mathrm{Id} \in \mathrm{Mp}\,$, it follows that $R(g x_n
g^{-1})$ converges pointwise to $R'(g) R'(g^{-1})$.
\end{proposition}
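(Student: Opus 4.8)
The plan is to deduce this from the lemma on products $C_n = A_n B_n$ proved just above, with the choices $A_n := R(g x_n g^{-1})$, $B_n := R(x_n)$, and $C_n := A_n B_n$. First I would use the semigroup property of $R$ together with the identity $(g x_n g^{-1})\, x_n = (g x_n)(g^{-1} x_n)$ in $\widetilde{\mathrm{H}}(W^s)$ to rewrite
\begin{displaymath}
    C_n = R(g x_n g^{-1})\, R(x_n) = R(g x_n)\, R(g^{-1} x_n) \;.
\end{displaymath}
By Proposition \ref{well--defined} the factors on the right converge pointwise, $R(g x_n) f \to R'(g) f$ and $R(g^{-1} x_n) f \to R'(g^{-1}) f$, and by Corollary \ref{general operator bound} they are uniformly bounded in operator norm by $1$. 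The remark on products of boundedly strongly convergent sequences (following Proposition \ref{well--defined}) then gives $C_n \to R'(g)\, R'(g^{-1})$ pointwise.

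Next I would check the remaining hypotheses of the lemma. The sequence $B_n = R(x_n)$ converges pointwise to $\mathrm{Id}$ by the lemma that $R(x_n) \to \mathrm{Id}$ whenever $x_n \to \mathrm{Id}$, so $B = \mathrm{Id}$ and $\mathrm{im}\, B = \mathcal{A}_V$; and the sequence $A_n = R(g x_n g^{-1})$ is uniformly bounded, since $\Vert R(g x_n g^{-1}) \Vert \le C(x_n) < 1$ by Proposition \ref{operator bound}. The lemma therefore applies and shows that $A_n f = R(g x_n g^{-1}) f$ converges for every $f \in \mathcal{A}_V$.

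Finally I would identify the limiting operator. For fixed $f \in \mathcal{A}_V$ one writes $C_n f = A_n f + A_n(B_n f - f)$; since $B_n f \to f$ and $\Vert A_n \Vert \le 1$, the second term tends to zero, whence $\lim_n A_n f = \lim_n C_n f = R'(g)\, R'(g^{-1}) f$, which is the asserted convergence. I expect the only point requiring a little care to be the first step: verifying that $R$ is genuinely defined at the conjugated elements $g x_n g^{-1}$ — clear from $\widetilde{\mathrm{H}}(W^s) = \mathrm{Mp}.M.\mathrm{Mp}$, which contains $M = \mathrm{Int}(\mathrm{Mp}) T_+$ — and checking the bookkeeping identity $(g x_n g^{-1}) x_n = (g x_n)(g^{-1} x_n)$, which follows from associativity of the semigroup multiplication on $\widetilde{\mathrm{H}}(W^s)$ established earlier; everything else is a direct application of results already in hand.
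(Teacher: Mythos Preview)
Your proof is correct and follows essentially the same route as the paper: the paper applies the preceding lemma with the very same choices $A_n = R(g x_n g^{-1})$, $B_n = R(x_n)$, $C_n = A_n B_n = R(g x_n) R(g^{-1} x_n)$, using that $B_n \to \mathrm{Id}$ so that $\mathrm{im}\, B = \mathcal{A}_V$. Your final paragraph identifying the limit via $C_n f = A_n f + A_n(B_n f - f)$ makes explicit a step the paper leaves implicit.
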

Next, if we take three sequences in $T_+$ and write
\begin{equation}\label {other side convergence}
    R(x_n y_n z_n) = R(x_n g^{-1}) R(g y_n g^{-1}) R(g z_n) \to
    \mathrm{Id}_{\mathcal{A}_V} \;,
\end{equation}
then it follows that the sequence $R(x_n g^{-1})$ converges to an
operator $B(g^{-1})$ on the image of $R'(g) R'(g^{-1}) R'(g)$ with
$B(g^{-1}) R'(g) R'(g^{-1}) R'(g) = \mathrm{Id}_{\mathcal{A}_V}\,$.
In particular, the operator $R'(g)$ is injective for all $g \in
\mathrm{Mp}\,$. Finally, we define $y_n$ by $y_n^2 = x_n$ and write
$R(g x_n) = R(g y_n g^{-1}) R(g y_n)$. Taking the limit of both sides
of this equation entails that
\begin{equation}\label{last formal step}
    R'(g) = R'(g) R'(g^{-1}) R'(g) \;,
\end{equation}
and since $R'(g)$ is injective, this now allows us to reach the main
goal of this section.
\begin{theorem}\label{thm:Weil-repn}
For every $g \in \mathrm{Mp}$ and every sequence $\{ x_n \} \subset
T_+$ with $x_n \to \mathrm{Id}$ the sequence $\{R( g x_n) \}$
converges in the bounded strong$^*$ topology. The limit $R'(g)$ is
independent of the sequence and defines a unitary representation
$R':\,\mathrm{Mp} \to \mathrm{U} (\mathcal A_V)$.
\end{theorem}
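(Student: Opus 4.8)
The starting point is Proposition~\ref{well--defined}: for fixed $g\in\mathrm{Mp}$ the sequence $R(gx_n)$ already converges pointwise, in the bounded strong topology, to an operator $R'(g)$ that does not depend on $\{x_n\}$. On top of this we are handed the identity $R'(g)=R'(g)R'(g^{-1})R'(g)$ of (\ref{last formal step}) and the injectivity of $R'(g)$. The first thing I would do is extract invertibility and unitarity from these: since $R'(g)\bigl(R'(g^{-1})R'(g)-\mathrm{Id}\bigr)=0$ and $R'(g)$ is injective, $R'(g^{-1})R'(g)=\mathrm{Id}$, and replacing $g$ by $g^{-1}$ also $R'(g)R'(g^{-1})=\mathrm{Id}$; as $R'(g)$ and $R'(g^{-1})$ are strong limits of contractions (Corollary~\ref{general operator bound}) they have norm $\le 1$, and an invertible contraction with contractive inverse is unitary, with $R'(g)^\dagger=R'(g)^{-1}=R'(g^{-1})$. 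In particular $R'(g)$ is surjective.

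For the bounded strong$^*$ statement I would then use $R(x)^\dagger=R(\widetilde\psi(x))$ and the equivariance $\widetilde\psi(g_1xg_2^{-1})=g_2\widetilde\psi(x)g_1^{-1}$ with $\widetilde\psi\vert_M=\mathrm{id}$, which give $R(gx_n)^\dagger=R(x_ng^{-1})$; by Corollary~\ref{general operator bound} these are uniformly bounded by $1$. Applying the lemma on factorizations $C_n=A_nB_n$ (if $C_n$ and $B_n\to B$ converge pointwise and $\{A_n\}$ is uniformly bounded, then $A_n$ converges pointwise on $\mathrm{im}\,B$) with $A_n=R(x_ng^{-1})$, $B_n=R(gx_n)\to R'(g)$ and $C_n=A_nB_n=R(x_n^2)\to\mathrm{Id}$ (because $x_n^2\to\mathrm{Id}$ in $T_+$), one gets pointwise convergence of $R(x_ng^{-1})$ on $\mathrm{im}\,R'(g)$, which is all of $\mathcal A_V$ by the surjectivity just established. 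Writing $f=R'(g)h$ and using $R(x_ng^{-1})R(gx_n)h=R(x_n^2)h\to h$ identifies the limit as $R'(g)^{-1}=R'(g)^\dagger$. Hence $R(gx_n)\to R'(g)$ in the bounded strong$^*$ topology, and the limit is sequence-independent (the operator limit by Proposition~\ref{well--defined}, and then the adjoint limit is forced to be $R'(g)^\dagger$).

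It remains to check that $R'$ is a homomorphism. The crux is the compatibility formula
\begin{displaymath}
    R'(g)\,R(h)=R(gh)\qquad(g\in\mathrm{Mp},\ h\in\widetilde{\mathrm{H}}(W^s))\;,
\end{displaymath}
which I would prove from $R'(g)R(h)=\lim_n R(gx_n)R(h)=\lim_n R(gx_nh)$ (the semigroup homomorphism property, both factors lying in $\widetilde{\mathrm{H}}(W^s)$), together with $gx_nh\to gh$ inside $\widetilde{\mathrm{H}}(W^s)=\mathrm{Mp}.M.\mathrm{Mp}$ — which is stable under the $\mathrm{Mp}$-action, so $gh$ is an interior point — and the continuity of $x\mapsto R(x)$ on $\widetilde{\mathrm{H}}(W^s)$, giving $R(gx_nh)\to R(gh)$. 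Granting this, for $g_1,g_2\in\mathrm{Mp}$ we get $R'(g_1g_2)=\lim_n R((g_1g_2)x_n)=\lim_n R\bigl(g_1\cdot(g_2x_n)\bigr)=\lim_n R'(g_1)R(g_2x_n)=R'(g_1)R'(g_2)$, while $R'(\mathrm{Id})=\lim_n R(x_n)=\mathrm{Id}$ by the lemma to the effect that $R(x_n)\to\mathrm{Id}$ for $x_n\to\mathrm{Id}$ in $T_+$. Strong continuity of $g\mapsto R'(g)$ follows from the same formula, since $R'(g_k)R(t)=R(g_kt)\to R(gt)=R'(g)R(t)$ for $t\in T_+$, the $R'(g_k)$ are uniformly bounded in norm, and $\mathrm{im}\,R(t)$ is dense (it contains every monomial $f^m$).

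I expect the genuinely delicate point to be the strong$^*$ convergence, and the order of the argument above is dictated by it: pointwise convergence of $R(gx_n)$ alone gives only \emph{weak} convergence of the adjoints $R(gx_n)^\dagger=R(x_ng^{-1})$, so one must first squeeze invertibility — hence surjectivity — of $R'(g)$ out of the purely algebraic relation (\ref{last formal step}) before the $A_nB_n$-lemma can be pushed from $\mathrm{im}\,R'(g)$ to all of $\mathcal A_V$ and the adjoint limit becomes legitimate. Once that is in place, the homomorphism property and unitarity cost little beyond bookkeeping with the semigroup multiplication rule and the operator bounds already established.
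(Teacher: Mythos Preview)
Your proof is correct and rests on the same pillars as the paper's: invertibility from (\ref{last formal step}) plus injectivity, strong$^*$ convergence via the $A_nB_n$-lemma once surjectivity of $R'(g)$ is in hand, and the semigroup law for the homomorphism property. The organization differs in a few places worth recording. You extract unitarity immediately from the elegant remark that an invertible contraction with contractive inverse is an isometry; the paper instead waits and reads it off from $R(gx_n)^\dagger\to R'(g)^\dagger$ together with $R(x_ng^{-1})\to R'(g)^{-1}$. For the adjoint convergence you use the cleaner factorization $C_n=R(x_ng^{-1})R(gx_n)=R(x_n^2)$, while the paper goes back to the three-factor rewriting after (\ref{other side convergence}). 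Most notably, you prove the compatibility identity $R'(g)R(h)=R(gh)$ inside the argument and then use it to get both the homomorphism property and strong continuity (the latter via density of $\mathrm{im}\,R(t)$); the paper establishes the homomorphism by the two-sequence trick $R(g_1x_n)R(g_2y_n)=R(g_1x_ng_1^{-1})R(g_1g_2y_n)$, proves continuity by a direct uniform-in-$k$ estimate, and only afterward records $R(g_1xg_2)=R'(g_1)R(x)R'(g_2)$ as a separate proposition. Your route is a genuine streamlining but uses no new ideas beyond what the paper already has on the table.
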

\begin{proof}
%
{}From (\ref{last formal step}) we have $R'(g)(\mathrm{Id}_{
\mathcal{A}_V} - R'(g^{-1}) R'(g)) = 0$ and, since $R'(g)$ is
in\-jective, $R'(g^{-1}) R'(g) = \mathrm{Id}_{\mathcal{A}_V}\,$.
Hence $R'(g^{-1})$ is surjective, and thus $R'(g) \in
\mathrm{GL}(\mathcal{A}_V)$ by exchanging $g \leftrightarrow g^{-1}$.
For the homomorphism property we write $R(g_1 x_n) R(g_2 y_n) = R(g_1
x_n g_2 y_n) = R(g_1 x_n g_1^{-1}) R(g_1 g_2 y_n)$ and take limits to
obtain $R'(g_1) R'(g_2) = R'(g_1 g_2)$.

Convergence in the bounded strong$^*$ topology also requires
convergence of the adjoint. This property follows from $R(g
x_n)^\dagger = R(\widetilde{\psi}(g x_n)) = R(x_n g^{-1})$ and the
discussion after (\ref{other side convergence}), since $R'(g)$ is now
known to be an isomorphism. Unitarity of the representation is then
immediate from $R(g x_n)^\dagger \to R'(g)^\dagger$ and $R(x_n
g^{-1}) \to B(g^{-1}) = R'(g)^{-1}$.

Finally, we must show that $R' : \, \mathrm{Mp} \to \mathrm{U}
(\mathcal{A}_V)$ is continuous. This amounts to showing that if $\{
g_k \}$ is a sequence in $\mathrm{Mp}$ which converges to $g\,$, then
$R'(g_k) f \to R'(g) f$ for any $f \in \mathcal{A}_V$. Hence, we let
$\{ x_n \}$ be a sequence in $T_+$ with $x_n \to \mathrm{Id}$ and
choose $t = t(m,n)$ as in the proof of Proposition
\ref{well--defined} so that
\begin{displaymath}
    \Vert R(g_k x_m) - R(g_k x_n) \Vert \le \Vert R(g_k t)
    \Vert\, \Vert R(\tilde x_m) - R(\tilde x_n) \Vert \;,
\end{displaymath}
and then let $t \to \mathrm{Id}\,$. Using the uniform boundedness of
$R(g_k t)$ as $t \to \mathrm{Id}\,$, this shows that the convergence
$R(g_k x_n) \to R'(g_k)$ is uniform in $g_k\,$. Since we have $g_k
x_n \to g x_n$ for every fixed $n\,$, the continuity of $x \mapsto
R(x) f$ then implies that $R'(g_k)f \to R'(g)f$.
\end{proof}
Let us underline two important consequences.
\begin{proposition}
For $g_1 , g_2 \in \mathrm{Mp}$ and $x \in \widetilde{\mathrm{H}}
(W^s)$ it follows that
\begin{displaymath}
    R(g_1 x\, g_2) = R'(g_1) R(x) R'(g_2) \;.
\end{displaymath}
\end{proposition}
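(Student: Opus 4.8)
The plan is to prove the identity $R(g_1 x\, g_2) = R'(g_1) R(x) R'(g_2)$ by reducing to the case already established for the metaplectic factors via limits, exploiting the semigroup decomposition $\widetilde{\mathrm{H}}(W^s) = \mathrm{Mp}.M.\mathrm{Mp}$ together with the homomorphism property $R'(g_1) R'(g_2) = R'(g_1 g_2)$ from Theorem \ref{thm:Weil-repn}. First I would recall that by Corollary \ref{left/right transitive} (lifted) every $x \in \widetilde{\mathrm{H}}(W^s)$ can be written as $x = h_1 t h_2$ with $h_1,h_2 \in \mathrm{Mp}$ and $t$ in the lifted toral semigroup $T_+$. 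Since the bounded strong$^*$ convergence $R(g y_n) \to R'(g)$ for $y_n \to \mathrm{Id}$ in $T_+$ preserves products of operators (uniform boundedness, as noted in the remark after Proposition \ref{well--defined}), and the semigroup property $R(xy) = R(x)R(y)$ holds on all of $\widetilde{\mathrm{H}}(W^s)$, I can pass to limits inside the product.

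The key steps, in order. (i) Fix a sequence $\{ x_n \} \subset T_+$ with $x_n \to \mathrm{Id}$. For $g_1, g_2 \in \mathrm{Mp}$ and $x \in \widetilde{\mathrm{H}}(W^s)$, consider $R(g_1 x_n \cdot x \cdot x_n g_2)$. Using the semigroup property three times, this equals $R(g_1 x_n)\, R(x)\, R(x_n g_2)$, because $g_1 x_n$, $x$, and $x_n g_2$ all lie in $\widetilde{\mathrm{H}}(W^s)$ (the first two are products of a metaplectic element with something in $\widetilde{\mathrm{H}}$ resp.\ already in $\widetilde{\mathrm{H}}$; one checks $g_1 x_n \in \widetilde{\mathrm{H}}(W^s)$ since $\mathrm{H}(W^s)$ is a semigroup stable under the $\mathrm{Sp}_\mathbb{R}$-action). (ii) Take $n \to \infty$. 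On the left-hand side, $g_1 x_n \cdot x \cdot x_n g_2 \to g_1 x g_2$ in $\widetilde{\mathrm{H}}(W^s)$ — here I would use the continuity of the semigroup multiplication and of $x \mapsto R(x)$ noted just before Proposition \ref{operator bound}, so $R(g_1 x_n x x_n g_2) \to R(g_1 x g_2)$. On the right-hand side, $R(g_1 x_n) \to R'(g_1)$ and $R(x_n g_2) \to R'(g_2)$ by (the left- and right-sided versions of) Proposition \ref{well--defined}, and since $R(x)$ is a fixed bounded operator and all the sequences are uniformly bounded ($\Vert R(g_1 x_n) \Vert < 1$, $\Vert R(x_n g_2) \Vert < 1$ by Corollary \ref{general operator bound}), the product $R(g_1 x_n) R(x) R(x_n g_2)$ converges pointwise to $R'(g_1) R(x) R'(g_2)$. (iii) Equating the two limits gives $R(g_1 x g_2) = R'(g_1) R(x) R'(g_2)$, as claimed.

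The main obstacle I anticipate is the bookkeeping around \emph{which} version of the limit statement applies on each side and verifying that all the relevant products genuinely live in $\widetilde{\mathrm{H}}(W^s)$ (rather than merely in $\widetilde{U}$), so that the semigroup property $R(xy)=R(x)R(y)$ is legitimately invoked. For the left factor one uses that $\widetilde{\mathrm{H}}(W^s) = \mathrm{Mp}.M.\mathrm{Mp}$ is closed under left and right multiplication by $\mathrm{Mp}$ and under multiplication by elements of $\widetilde{\mathrm{H}}(W^s)$ itself; for the convergence $R(x_n g_2) \to R'(g_2)$ one needs the right-sided analog of Proposition \ref{well--defined}, which follows from the left-sided one by applying $\widetilde\psi$ and using $R(x)^\dagger = R(\widetilde\psi(x))$ together with Corollary \ref{general operator bound} ($\Vert R(gt)\Vert < 1$). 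Once these membership and one-sided-limit points are in place, the argument is a three-line limit computation and no further estimates are required.
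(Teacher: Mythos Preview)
Your proposal is correct and follows essentially the same limiting argument as the paper: approximate the metaplectic factors by elements of $\widetilde{\mathrm{H}}(W^s)$, use the semigroup property to factor $R$, and pass to the limit on both sides. The paper's version is marginally cleaner in one respect: it inserts the approximating torus elements as $g_1 y_m$ on the left and $g_2 z_n$ on the right (with $g_2$ to the \emph{left} of $z_n$), so that both limiting statements $R(g_1 y_m) \to R'(g_1)$ and $R(g_2 z_n) \to R'(g_2)$ are direct instances of Proposition~\ref{well--defined}, and the right-sided analog you flag as an obstacle is never needed.
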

\begin{proof}
If $y_m$ and $z_n$ are sequences in $T_+$ which converge to
$\mathrm{Id}\,$, then, since $x \mapsto R(x) f$ is continuous for all
$f$ in Fock space, $R(g_1 y_m x\, g_2 z_n)$ converges pointwise to
$R(g_1 x \, g_2)$. On the other hand, we have $R(g_1 y_m x\, g_2 z_n)
= R(g_1 y_m) R(x) R(g_2 z_n)$ by the semigroup property, and the
right-hand side converges pointwise to $R'(g_1) R(x) R'(g_2)$.
\end{proof}
We refer to  $R' :\, \mathrm{Mp} \to \mathrm{U}(\mathcal{A}_V)$ as
the \emph{oscillator} representation of the metaplectic group.  It
has the following fundamental conjugation property.
\begin{proposition}
Let $\mathrm{Mp} \times W \to W$, $(g,w) \mapsto \tau(g) w$ denote
the representation of $\mathrm{Mp}$ on $W$ defined by first applying
the covering map $\mathrm{Mp} \to \mathrm{Sp}_\mathbb{R}$ and then
the standard representation of $\mathrm{Sp}\,$. If we let $W$ act on
$\mathfrak{a}(V)$ by the Weyl representation $q$ then
\begin{displaymath}
    R'(g) q(w) R'(g)^{-1} = q(\tau(g) w)\,.
\end{displaymath}
\end{proposition}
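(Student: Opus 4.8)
The plan is to prove the identity first at the level of the Heisenberg operators $T_w$, where everything in sight is bounded, and only then differentiate to pass from $T_w$ to $q(w)$. The starting point is the basic conjugation rule from Proposition~\ref{Heisenberg conjugation}, namely the identity of bounded operators on $\mathcal{A}_V$
\[
    R(x)\, T_w = T_{\tau_H(x)\, w}\, R(x) \qquad (x \in \widetilde{\mathrm{H}}(W^s),\ w \in W)\;.
\]
I would apply this with $x = g \cdot x_n$, where $g \in \mathrm{Mp}$ is fixed and $\{x_n\} \subset T_+$ is any sequence with $x_n \to \mathrm{Id}$. By the $\mathrm{Mp}$-equivariance of the covering map $\tau_H$ (Proposition~\ref{lifting the product action}) one has $\tau_H(g \cdot x_n) = \tau(g)\, \tau_H(x_n)$, and since $\tau_H(x_n) \to \mathrm{Id}_W$ in $\mathrm{Sp}$ the "rotated" argument converges, $\tau_H(g \cdot x_n)\, w \to \tau(g)\, w$ in $W$; note that $\tau(g)\,w$ lies in $W_\mathbb{R}$ whenever $w$ does, because $\tau(g) \in \mathrm{Sp}_\mathbb{R}$ preserves the real form.

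Next I would let $n \to \infty$. Restricting to $w \in W_\mathbb{R}$, the operator $T_w$ is unitary, and since $R(g x_n) \to R'(g)$ in the bounded strong$^\ast$ topology (Theorem~\ref{thm:Weil-repn}) the left-hand side converges pointwise to $R'(g)\,T_w$. For the right-hand side I would test against the algebraic subspace $\mathfrak{a}(V)$ — which lies in the domain of every $q(\cdot)$, hence of every $T_v$ and every $T_v^\dagger$ — and use $q(v)^\dagger = q(Cv)$ from Lemma~\ref{lem:h.c.} together with $C^2 = \mathrm{Id}_W$ to write $T_v^\dagger = T_{-Cv}$; this lets one move $R(g x_n)$ off the fixed vector and reduce the convergence to the norm-convergences $R(g x_n) f \to R'(g) f$ and $T_{-C\tau_H(g x_n)w} f' \to T_{-\tau(g)w} f'$, the latter by continuity of $v \mapsto T_v f'$ in Fock norm. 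The conclusion of this step is $R'(g)\, T_w = T_{\tau(g)w}\, R'(g)$ for all $w \in W_\mathbb{R}$, and since $R'(g)$ is unitary this is the same as $R'(g)\, T_w\, R'(g)^{-1} = T_{\tau(g)w}$.

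Finally I would recover the infinitesimal statement: replacing $w$ by $tw$ and applying both sides to $\xi \in \mathfrak{a}(V)$, the map $t \mapsto R'(g)\, T_{tw}\, \xi$ is differentiable at $t=0$ with derivative $\mathrm{i}\,R'(g)\, q(w)\, \xi$ (because $\xi$ is an analytic vector for the one-parameter group $t \mapsto T_{tw}$ and $R'(g)$ is bounded), so the same holds for $t \mapsto T_{t\,\tau(g)w}\, R'(g)\, \xi$, forcing $R'(g)\, q(w)\, \xi = q(\tau(g)w)\, R'(g)\, \xi$. Thus $R'(g)\, q(w)\, R'(g)^{-1} = q(\tau(g)w)$ holds on the dense domain $R'(g)(\mathfrak{a}(V))$ for $w \in W_\mathbb{R}$, and the general case $w \in W = W_\mathbb{R} \oplus \mathrm{i}\, W_\mathbb{R}$ follows from $\mathbb{C}$-linearity of $q$ and of $\tau(g)$.

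The main obstacle, and the reason for routing the argument through the $T_w$'s rather than directly through Proposition~\ref{basic conjugation formula}, is that $q(w)$ is an unbounded operator as soon as $w$ has a nonzero bosonic component, so a naive passage to the limit in $R(g x_n)\, q(w) = q(\tau_H(g x_n)w)\, R(g x_n)$ is not legitimate: $R'(g)f$ need not lie in the domain of $q$. Working with the bounded (indeed unitary) operators $T_w$ for $w \in W_\mathbb{R}$, and taking limits only in matrix elements between vectors of $\mathfrak{a}(V)$ where all operators and their adjoints are defined, is what makes the argument rigorous; the subsidiary points to verify carefully are the Fock-norm continuity of $v \mapsto T_v f'$ for algebraic $f'$ (a coherent-state estimate) and that $\tau_H$ extends continuously to the boundary with $\tau_H(\mathrm{Id}_{\mathrm{Mp}}) = \mathrm{Id}_W$, both of which are already implicit in the constructions of $\S\ref{metaplectic rep}$.
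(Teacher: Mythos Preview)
Your argument is correct, and it is more careful than the paper's. The paper's proof consists of a single sentence: it simply invokes Proposition~\ref{basic conjugation formula} (the identity $R(x)q(w)=q(\tau_H(x)w)R(x)$ at the semigroup level) and notes that, now that $R'(g)^{-1}$ is available, the result follows by passing to the limit. No mention is made of domain or convergence issues for the unbounded operators $q(w)$.

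Your route is genuinely different in its organization: you go back one step to Proposition~\ref{Heisenberg conjugation} (the $T_w$-version) so that the limit $x_n\to\mathrm{Id}$ can be taken entirely among bounded operators, and only afterwards differentiate to recover $q(w)$ via Stone's theorem. This buys you rigor at the cost of a slightly longer argument; in particular, differentiating $t\mapsto T_{t\tau(g)w}R'(g)\xi$ automatically certifies that $R'(g)\xi$ lies in the domain of $q(\tau(g)w)$, which the paper's one-line proof leaves unaddressed. The paper's direct approach can also be made rigorous by testing against algebraic vectors and moving $q(\cdot)$ across the inner product via $q(v)^\dagger=q(Cv)$, essentially the same adjoint trick you use for the $T_w$'s, but the paper does not spell this out. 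Interestingly, the paper records the Heisenberg-level identity $R'(g)T_wR'(g)^{-1}=T_{\tau(g)w}$ only as a \emph{consequence} of the proposition, whereas you use it as the key intermediate step.
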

\begin{proof}
Since the inverse operator $R'(g)^{-1}$ is now available, this
follows from the conjugation property at the semigroup level (see
Proposition \ref{basic conjugation formula}).
\end{proof}
Note that analogously we have the classical conjugation formula for
the representation of the Heisenberg group on the Fock space
$\mathcal{A}_V$, i.e.,
\begin{equation*}
    R'(g) T_w R'(g)^{-1} = T_{\tau(g) w}
\end{equation*}
for all $g \in \mathrm{Mp}$ and $w \in W_\mathbb{R}$ (see Proposition
\ref{Heisenberg conjugation}).

\subsubsection{The trace-class property}

Recall that a linear operator $L$ on a Hilbert space $\mathcal{V}$ is
of trace class if and only if the non-negative self-adjoint operator
$\sqrt{LL^\dagger}$ has finite trace. If $L$ is of trace class, then
for every unitary basis $\{ u_i \}$ of $\mathcal{V}$ the trace
\begin{displaymath}
    \mathrm{Tr}\,L := \sum \langle u_i\, , L\, u _i\rangle
\end{displaymath}
converges absolutely. It is independent of the choice of unitary
basis and defines a linear functional on the space $C_1$ of operators
of trace class.
\begin{proposition}
For every $x \in \widetilde{\mathrm{H}}(W^s)$ the operator $R(x)$ is
of trace class.
\end{proposition}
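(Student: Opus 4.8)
The plan is to reduce the statement to the toral elements $t\in T_+$, where $R(t)$ is explicitly diagonal, and then bootstrap to all of $\widetilde{\mathrm{H}}(W^s)$ using the intertwining relation with the metaplectic representation $R'$ together with the ideal property of trace-class operators.

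First I would invoke the decomposition results of \S\ref{oscillator semigroup}. By Corollary \ref{left/right transitive} one has $\mathrm{H}(W^s)=\mathrm{Sp}_\mathbb{R}T_+\mathrm{Sp}_\mathbb{R}$, and lifting to the $2{:}1$ cover (using $\widetilde{\mathrm{H}}(W^s)=\mathrm{Mp}.M.\mathrm{Mp}$ with the canonically embedded $T_+\subset M$) yields $\widetilde{\mathrm{H}}(W^s)=\mathrm{Mp}\cdot T_+\cdot\mathrm{Mp}$. Hence any $x\in\widetilde{\mathrm{H}}(W^s)$ can be written $x=g_1\,t\,g_2$ with $g_1,g_2\in\mathrm{Mp}$ and $t\in T_+$. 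The intertwining property $R(g_1xg_2)=R'(g_1)R(x)R'(g_2)$ then gives $R(x)=R'(g_1)\,R(t)\,R'(g_2)$, where $R'(g_1),R'(g_2)$ are unitary by Theorem \ref{thm:Weil-repn}, hence bounded.

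Next I would treat $x=t\in T_+$. Choosing a basis $\{e_i\}$ of $V$ diagonalizing $t$ with eigenvalues $\lambda_i>1$, Corollary \ref{eigenvalues in T_+} shows that $R(t)$ is diagonal in the occupation-number orthonormal basis $\{f^m/\sqrt{m!}\}$ of $\mathcal{A}_V$ with eigenvalue $\lambda^{-m-1/2}$ on $f^m$ (multi-index notation). Since $\widetilde{\psi}$ fixes $T_+\subset M$ pointwise, $R(t)^\dagger=R(\widetilde{\psi}(t))=R(t)$, so $R(t)$ is a positive self-adjoint operator, $\sqrt{R(t)R(t)^\dagger}=R(t)$, and
\[
  \mathrm{Tr}\,\sqrt{R(t)R(t)^\dagger}=\sum_{m}\lambda^{-m-1/2}
  =\prod_{i=1}^{d}\frac{\lambda_i^{-1/2}}{1-\lambda_i^{-1}}<\infty ,
\]
the sum running over all multi-indices in $(\mathbb{N}\cup\{0\})^d$ and converging because each $\lambda_i>1$. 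Thus $R(t)$ is of trace class.

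Finally I would conclude by the fact that the trace-class operators $C_1$ form a two-sided ideal in the algebra of bounded operators on $\mathcal{A}_V$: since $R'(g_1),R'(g_2)$ are bounded, $R(x)=R'(g_1)R(t)R'(g_2)\in C_1$ for every $x\in\widetilde{\mathrm{H}}(W^s)$. (One can alternatively argue strictly within the stated definition via $R(x)R(x)^\dagger=R(x\,\widetilde{\psi}(x))$: writing $x=gm$ with $m\in M$ and diagonalizing $m$, the element $x\widetilde{\psi}(x)=gm^2g^{-1}$ is $\mathrm{Mp}$-conjugate to some $t^2\in T_+$, so $\sqrt{R(x)R(x)^\dagger}$ is unitarily conjugate to $R(t)$ and has the same finite trace.) I do not expect a serious obstacle here: the only points needing care are the reduction of a general $x$ to the toral case, which rests entirely on the structure theory already established in \S\ref{oscillator semigroup}, and the elementary convergence of the product above, which is immediate from the fact that $T_+$ acts by strict contractions.
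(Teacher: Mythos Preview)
Your proposal is correct and essentially matches the paper's argument: the paper uses precisely your parenthetical alternative, writing $R(x)R(x)^\dagger=R(y)$ with $y=x\widetilde{\psi}(x)=gt^2g^{-1}\in M$, noting $\sqrt{R(y)}=R'(g)R(t)R'(g)^{-1}$, and concluding from the eigenvalue formula of Corollary~\ref{eigenvalues in T_+}. Your main route via the $\mathrm{Mp}\cdot T_+\cdot\mathrm{Mp}$ decomposition and the two-sided ideal property of $C_1$ is an equally valid and slightly more direct packaging of the same ingredients.
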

\begin{proof}
Recall that $R(x) R(x)^\dagger = R(y)$, where $y = x \widetilde
{\psi}(x)\in M$. Since $y = g t^2 g^{-1}$ for some $t\in T_+$ and
$\sqrt{R(g t^2 g^{-1})} = R'(g)R(t)R'(g)^{-1}$, the desired result
follows from the explicit formula in Corollary \ref{eigenvalues in
T_+} for the eigenvalues of $t$.
\end{proof}
\begin{proposition}\label{prop:intrep}
An integral representation of the trace functional $\mathrm{Tr} : \,
C_1 \to \mathbb{C}$ on the space of trace-class operators on Fock
space is given by
\begin{displaymath}
    \mathrm{Tr}\, L = \sqrt{2}^{\,\mathrm{dim} W_\mathbb{R}} \Phi(L)
    \;,\quad \Phi(L) = \int_{W_\mathbb{R}} \langle T_w 1, L\, T_w
    1 \rangle_{\mathcal{A}_V}\, \mathrm{dvol}(w) \;.
\end{displaymath}
\end{proposition}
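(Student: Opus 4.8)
The plan is to reduce the identity to rank-one operators and then to recognise the resulting integral as a normalisation constant of the reproducing kernel of the Fock space $\mathcal{A}_V\,$. By the canonical (Schmidt) decomposition, a trace-class operator $L$ on $\mathcal{A}_V$ can be written $L = \sum_i \lambda_i\, u_i\, \langle v_i\, , \cdot\, \rangle$ with $\sum_i |\lambda_i| < \infty$ and $\{ u_i \}$, $\{ v_i \}$ orthonormal, so that $\mathrm{Tr}\, L = \sum_i \lambda_i\, \langle v_i\, , u_i \rangle\,$. Since $L\, T_w 1 = \sum_i \lambda_i\, u_i\, \langle v_i\, , T_w 1 \rangle$ converges absolutely in $\mathcal{A}_V$ (using $\| u_i \| = 1$ and $|\langle v_i\, , T_w 1 \rangle| \le \| T_w 1 \|$), the integrand $\langle T_w 1\, , L\, T_w 1 \rangle$ equals $\sum_i \lambda_i\, \langle T_w 1\, , u_i \rangle \langle v_i\, , T_w 1 \rangle\,$. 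It therefore suffices to prove the rank-one formula
\begin{displaymath}
    B(\psi_1 , \psi_2) := \int_{W_\mathbb{R}} \langle T_w 1\, , \psi_1 \rangle \langle \psi_2\, , T_w 1 \rangle\, \mathrm{dvol}(w) = \sqrt{2}^{\, - \dim W_\mathbb{R}}\, \langle \psi_2\, , \psi_1 \rangle
\end{displaymath}
for $\psi_1 , \psi_2 \in \mathcal{A}_V\,$, together with absolute convergence of this integral; the interchange of $\sum_i$ with $\int$ is then justified by $\sum_i |\lambda_i| \int_{W_\mathbb{R}} |\langle T_w 1\, , u_i \rangle|\, |\langle v_i\, , T_w 1 \rangle|\, \mathrm{dvol}(w) \le \sqrt{2}^{\, - \dim W_\mathbb{R}} \sum_i |\lambda_i| < \infty\,$, which follows from $B(\psi , \psi) = \sqrt{2}^{\, - \dim W_\mathbb{R}} \| \psi \|^2$ and the Cauchy--Schwarz inequality in $L^2(W_\mathbb{R}\, , \mathrm{dvol})\,$.

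To prove the rank-one formula I would first compute the coherent state explicitly. Writing $w = v + cv$ with $v \in V$ and using the normal-ordered form $T_{v + cv} = \mathrm{e}^{\mathrm{i}\mu(cv)}\, \mathrm{e}^{-\frac{1}{2}|v|^2}\, \mathrm{e}^{\mathrm{i}\delta(v)}$ (a Baker--Campbell--Hausdorff identity, as in the proof of Proposition \ref{1-d eigenvalues}), the relation $\delta(v)1 = 0$ gives $T_w 1 = \mathrm{e}^{-\frac{1}{2}|v|^2}\, \mathrm{e}^{\mathrm{i}\, cv}\,$, the exponential being taken in the symmetric algebra $\mathrm{S}(V^*) = \mathfrak{a}(V)\,$. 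Pairing this against the Taylor expansion of $\psi \in \mathcal{A}_V$ and using $\langle f^m\, , f^{m'} \rangle = \delta_{m m'}\, m!$ for the monomial basis yields the reproducing-kernel identity $\langle T_w 1\, , \psi \rangle = \mathrm{e}^{-\frac{1}{2}|v|^2}\, \psi(-\mathrm{i}v)\,$, i.e.\ the Gaussian times the value of the holomorphic function $\psi$ at the point $-\mathrm{i}v \in V\,$. Substituting $z = -\mathrm{i}v\,$, a volume-preserving $\mathbb{C}$-linear change of variables on $V\,$, turns $B(\psi_1 , \psi_2)$ into $\int_V \mathrm{e}^{-|z|^2}\, \psi_1(z)\, \overline{\psi_2(z)}\, \mathrm{dvol}(z)\,$, which is exactly the Gaussian integral defining $\langle \psi_2\, , \psi_1 \rangle_{\mathcal{A}_V}$ in (\ref{eq:HSP}) (in the purely bosonic situation at hand $\gamma = \mathrm{const} \times \mathrm{e}^{-|z|^2}$), up to the ratio of the two normalisations. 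That ratio is pinned down by $\langle v + cv\, , v + cv \rangle = 2 |v|^2$ together with $\int_{V_0'}\Omega[\gamma] = 1$ and $\int_{W_\mathbb{R}} \mathrm{e}^{-\frac{1}{4}\langle w , w \rangle}\, \mathrm{dvol}(w) = 1\,$, and a short Gaussian computation gives the value $2^{- \dim_\mathbb{C} V} = \sqrt{2}^{\, - \dim W_\mathbb{R}}\,$; absolute convergence of $B(\psi_1 , \psi_2)$ for $\psi_i \in \mathcal{A}_V$ is then automatic from this identification and $L^2(\mathrm{e}^{-|z|^2})$-Cauchy--Schwarz.

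An alternative, essentially coordinate-free, route to the rank-one formula is to observe that $B$ is invariant under $(\psi_1 , \psi_2) \mapsto (T_{w_0}\psi_1\, , T_{w_0}\psi_2)$ for every $w_0 \in W_\mathbb{R}\,$: the phase factors produced by $T_{w_0}^{-1} T_w = T_{w - w_0}\, \mathrm{e}^{- \frac{\mathrm{i}}{2}\omega(w_0 , w)}\,$, cf.\ (\ref{Heisenberg multiplication}), cancel against each other, and $\mathrm{dvol}$ is translation invariant. Hence the positive operator $S$ defined by $\langle \psi_2\, , S \psi_1 \rangle = B(\psi_1 , \psi_2)$ commutes with the Heisenberg representation $(w , z) \mapsto z\, T_w$ on $\mathcal{A}_V\,$, which is irreducible, so $S$ is a scalar by Schur's lemma, and the scalar is read off from $B(1 , 1) = \int_{W_\mathbb{R}} \mathrm{e}^{-\frac{1}{2}\langle w , w \rangle}\, \mathrm{dvol}(w) = \sqrt{2}^{\, - \dim W_\mathbb{R}}\,$. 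Either way, once the rank-one formula and its absolute convergence are in place, the passage to arbitrary $L \in C_1$ and the conclusion $\mathrm{Tr}\, L = \sqrt{2}^{\, \dim W_\mathbb{R}}\, \Phi(L)$ are immediate. The only step that genuinely needs care is the bookkeeping of normalisations --- getting the coherent-state identity and the comparison of Gaussian integrals right so that the constant comes out as the stated power of $2$ and no other; the representation-theoretic input (irreducibility together with Schur's lemma) is entirely standard.
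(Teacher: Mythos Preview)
Your proposal is correct. The paper's own proof is more direct: it expands the coherent state $T_{v+cv}1 = \mathrm{e}^{-\frac{1}{2}|v|^2}\sum_m \frac{(\mathrm{i}\bar v)^m}{m!}f^m$ in the monomial basis, substitutes this expansion \emph{directly} into $\Phi(L)$, and obtains a double sum $\sum_{m,m'} (m!\,m'!)^{-1}\langle f^m, L f^{m'}\rangle \int_V (-\mathrm{i}v)^m(\mathrm{i}\bar v)^{m'}\mathrm{e}^{-|v|^2}\mathrm{dvol}(v)$. The Gaussian moment kills $m\neq m'$ and gives $2^{-\dim_{\mathbb C}V}m!$ on the diagonal, so $\Phi(L)=2^{-\dim_{\mathbb C}V}\sum_m m!^{-1}\langle f^m,Lf^m\rangle$, which is $2^{-\dim_{\mathbb C}V}\,\mathrm{Tr}\,L$ since $\{f^m/\sqrt{m!}\}$ is orthonormal. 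No Schmidt decomposition or rank-one reduction is needed.

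Your first route is really the same computation repackaged: the reproducing-kernel identity $\langle T_w1,\psi\rangle=\mathrm{e}^{-|v|^2/2}\psi(-\mathrm{i}v)$ is equivalent to the monomial expansion, and identifying $B(\psi_1,\psi_2)$ with the Fock pairing amounts to the same Gaussian moment calculation. The detour through rank-one operators and Fubini is unnecessary overhead, since the paper's argument already handles arbitrary $L$ in one stroke. Your second route via Heisenberg invariance and Schur's lemma is a genuinely different and more structural argument; it trades the explicit Gaussian moment computation for irreducibility of the Heisenberg representation, reducing everything to the single evaluation $B(1,1)=\int_{W_{\mathbb R}}\mathrm{e}^{-\frac{1}{2}\langle w,w\rangle}\mathrm{dvol}(w)=2^{-\dim_{\mathbb C}V}$. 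This is cleaner conceptually, though the paper's bare-hands approach has the virtue of being completely self-contained.
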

\begin{proof}
By the relation $w = v+cv$ we have $T_w 1 = T_{v+cv} 1 = \mathrm{e}
^{-\frac{1}{2}|v|^2 + \mathrm{i} \mu(cv)}$. As before, if $\{ e_i \}$
is an orthonormal basis of $V$ let $f_i := c e_i$ denote the dual
basis of $V^\ast$. Expanding $\mathrm{e}^{\mathrm{i} \mu(cv)}$ with
the help of multi-index notation $m = (m_1,\ldots,m_d)$ we get
\begin{displaymath}
    T_{v+cv} 1 = \mathrm{e}^{-\frac{1}{2}|v|^2}
    \sum \frac{(\mathrm{i}\overline{v})^m}{m!} f^m \;,
\end{displaymath}
where $m! = m_1! \cdots m_d!$ and $f^m = f_1^{m_1} \cdots f_d^{m_d}$.
By the isomorphism $V \simeq W_\mathbb{R}$ the integral $\Phi(L)$
pulls back to $\Phi(L) = \int_V \langle T_{v+cv} 1, L\, T_{v+cv} 1
\rangle\, \mathrm{dvol}(v)$ and inserting the series expansion of
$T_{v+cv} 1$ we obtain a double sum
\begin{displaymath}
    \Phi(L) = \sum \frac{\langle f^m,L f^{m^\prime}\rangle}
    {m! \, m^\prime!} \int_V (-\mathrm{i}v)^{m} (\mathrm{i}
    \overline{v})^{m^\prime} \mathrm{e}^{-|v|^2} \mathrm{dvol}(v)\;.
\end{displaymath}
If $m \not= m^\prime$ the integral vanishes, while direct computation
for $m = m^\prime$ yields
\begin{displaymath}
    \int_V v^{\,m} \overline{v}^{\,m} \mathrm{e}^{-|v|^2}
    \mathrm{dvol}(v) = 2^{- \mathrm{dim}_\mathbb{C} V} m! \;.
\end{displaymath}
The normalization here is determined by our convention $\int_V
\mathrm{e}^{-\frac{1}{2} |v|^2} \mathrm{dvol}(v) = 1$. Thus
$2^{\mathrm{dim}_\mathbb{C} V} \Phi(L) = \sum m!^{-1} \langle f^m , L
f^m \rangle$. The formula for $\mathrm{Tr}\, L$ now follows from
$\mathrm{dim}\, W_\mathbb{R} = 2\, \mathrm{dim}_\mathbb{C} V$ and the
fact that the set of normalized monomials $\{f^m / \sqrt{m!}\}$
constitute an orthonormal basis of Fock space.
\end{proof}
\begin{proposition}\label{holomorphicity} For every $P_1,P_2$ in the
Weyl algebra and every $x \in \widetilde{\mathrm{H}}(W^s)$ the
operator $q(P_1) R(x) q(P_2)$ is of trace class on the Fock space
$\mathcal{A}_V$. Furthermore, the function $\widetilde{\mathrm{H}}
(W^s) \to \mathbb{C}\,$, $x \mapsto \mathrm{Tr}\; q(P_1) R(x)
q(P_2)$, is holomorphic.
\end{proposition}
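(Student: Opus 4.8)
The plan is to reduce both assertions to the toral elements $t\in T_+$, where $R(t)$ is explicitly diagonalized by Corollary~\ref{eigenvalues in T_+}, and then transport the resulting estimates to a general $x\in\widetilde{\mathrm{H}}(W^s)$ using the semigroup law and the metaplectic representation. Set-up: write $N$ for the number (degree) operator on $\mathcal{A}_V$, so that $N\widehat{f}^{\,m}=|m|\,\widehat{f}^{\,m}$ in the orthonormal monomial basis, and for $g\in\mathrm{Mp}$, $x\in\widetilde{\mathrm{H}}(W^s)$ let $P\mapsto P^{\tau(g)}$ and $P\mapsto P^{\tau_H(x)}$ be the automorphisms of the Weyl algebra induced by the symplectic transformations $\tau(g)$ and $\tau_H(x)$. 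Iterating Proposition~\ref{basic conjugation formula} and the conjugation property of $R'$ over products of generators gives $R(x)q(P)=q(P^{\tau_H(x)})R(x)$ and $R'(g)q(P)=q(P^{\tau(g)})R'(g)$ for every $P$ in the Weyl algebra; all the operator identities below are to be read on the dense domain of smooth vectors and then extended by continuity.

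\emph{Trace-class property.} The key lemma is: for $t\in T_+$ and any $P$ in the Weyl algebra, $q(P)R(t)$ and $R(t)q(P)$ are Hilbert--Schmidt. This is immediate from Corollary~\ref{eigenvalues in T_+}: there $R(t)$ acts diagonally on $\{\widehat{f}^{\,m}\}$ with eigenvalues $\prod_i\lambda_i^{-m_i-1/2}$, all $\lambda_i>1$, while an element of filtration degree $N_0$ in the Weyl algebra has matrix elements in this basis supported on $|m-m'|\le N_0$ and of at most polynomial growth in $|m|$; hence $\sum_m\Vert q(P)R(t)\widehat{f}^{\,m}\Vert^2$ and $\sum_m\Vert R(t)q(P)\widehat{f}^{\,m}\Vert^2$ converge, being dominated by a sum of a polynomial against a product of geometric series. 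Now let $x\in\widetilde{\mathrm{H}}(W^s)$. Using the product decomposition $\widetilde{\mathrm{H}}(W^s)\cong\mathrm{Mp}\times M$ (Corollary~\ref{semigroup product structure}, lifted) together with $M=\mathrm{Int}(\mathrm{Mp})T_+$ (Proposition~\ref{M-slice}, lifted), write $x=g_1\,g\,t^2\,g^{-1}$ with $g_1,g\in\mathrm{Mp}$, $t\in T_+$; the compatibility $R(g_1yg_2)=R'(g_1)R(y)R'(g_2)$ and $R'(g_1)R'(g)=R'(g_1g)$ then give $R(x)=R'(g_1g)R(t)R(t)R'(g^{-1})$. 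Moving $q(P_1)$ and $q(P_2)$ through the metaplectic factors,
\[
  q(P_1)R(x)q(P_2)=R'(g_1g)\,\bigl[q(\widehat{P}_1)R(t)\bigr]\bigl[R(t)q(\widehat{P}_2)\bigr]\,R'(g^{-1}),
\]
with $\widehat{P}_1=P_1^{\tau(g_1g)^{-1}}$ and $\widehat{P}_2=P_2^{\tau(g)^{-1}}$ again in the Weyl algebra. This displays $q(P_1)R(x)q(P_2)$ as a product of a unitary, two Hilbert--Schmidt operators, and a unitary, hence it is trace class.

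\emph{Holomorphy.} Put $L(x):=q(P_1)R(x)q(P_2)$. As $L(x)$ is trace class, Proposition~\ref{prop:intrep} gives
\[
  \mathrm{Tr}\,L(x)=\sqrt{2}^{\,\dim W_\mathbb{R}}\int_{W_\mathbb{R}}\langle T_w1,\,L(x)\,T_w1\rangle\,\mathrm{dvol}(w),
\]
and, by the singular-value decomposition of $L(x)$ together with the identity $\int_{W_\mathbb{R}}|\langle T_w1,\psi\rangle|^2\,\mathrm{dvol}(w)=\sqrt{2}^{\,-\dim W_\mathbb{R}}\Vert\psi\Vert^2$ (which is Proposition~\ref{prop:intrep} for rank-one operators), the integral converges absolutely with $\int_{W_\mathbb{R}}|\langle T_w1,L(x)T_w1\rangle|\,\mathrm{dvol}(w)\le\sqrt{2}^{\,-\dim W_\mathbb{R}}\Vert L(x)\Vert_{C_1}$. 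For fixed $w$ the integrand equals $\langle q(P_1^\dagger)T_w1,\,R(x)\,q(P_2)T_w1\rangle$, hence depends holomorphically on $x$ once one knows that $x\mapsto\langle\xi_1,R(x)\xi_2\rangle$ is holomorphic for fixed $\xi_1,\xi_2\in\mathcal{A}_V$; this follows from the defining integral $R(x)=\int_{W_\mathbb{R}}\phi(x)\,\mathrm{e}^{-\tfrac{1}{4}\langle w,a_xw\rangle}T_w\,\mathrm{dvol}(w)$ by Morera's theorem under the integral sign, using that $\phi$ and $a_x$ are holomorphic on $\widetilde{\mathrm{H}}(W^s)$ --- no point of $\mathrm{H}(W^s)$ lies on the determinantal variety $D_{\mathrm{Sp}}$, since an $h\in\mathrm{H}(W^s)$ cannot fix a nonzero vector --- and that $\mathfrak{Re}(a_x)$ stays bounded below by a positive multiple of $\mathrm{Id}_W$ on compacta, which supplies a locally uniform Gaussian majorant. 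Granting that $\Vert L(x)\Vert_{C_1}$ is bounded on compact subsets of $\widetilde{\mathrm{H}}(W^s)$, the $w$-integral is dominated locally uniformly in $x$, so dominated convergence gives continuity of $x\mapsto\mathrm{Tr}\,L(x)$ and Morera's theorem (interchanging $\oint dx$ with $\int_{W_\mathbb{R}}\mathrm{dvol}(w)$) gives its holomorphy.

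\emph{Main obstacle.} The one point requiring real work is the local boundedness of the trace norm $\Vert L(x)\Vert_{C_1}$. I would obtain it by choosing, over a compact $K\subset\widetilde{\mathrm{H}}(W^s)$, a continuous decomposition $x=g_1gt^2g^{-1}$ as above in which $t$ ranges over a compact subset of $T_+$ bounded away from $\mathrm{Id}$ --- possible because $K$ sits at positive distance from the boundary of $\widetilde{\mathrm{H}}(W^s)$, equivalently the cone $\mathcal{C}\cong M$ stays away from its apex --- and in which $g_1,g$ stay in compact sets (the ambiguity in $g$ is by the compact stabilizer $T\cap\mathrm{Sp}_\mathbb{R}$ and a finite group). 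Then $\Vert L(x)\Vert_{C_1}\le\Vert q(\widehat{P}_1)R(t)\Vert_{\mathrm{HS}}\Vert R(t)q(\widehat{P}_2)\Vert_{\mathrm{HS}}$, and each Hilbert--Schmidt norm is bounded uniformly by dominating $q(\widehat{P}_i)$ by a fixed power of $1+N$ --- with constant depending continuously on $\tau(g_1g)$, resp.\ $\tau(g)$ --- and then estimating $\Vert(1+N)^{N_0/2}R(t)\Vert_{\mathrm{HS}}^2=\sum_m(1+|m|)^{N_0}\prod_i\lambda_i^{-2m_i-1}$ directly from Corollary~\ref{eigenvalues in T_+}. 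The remaining ingredients --- the conjugation algebra, the reduction to $T_+$, and the two Morera arguments --- are routine.
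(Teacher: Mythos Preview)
Your argument is correct and rests on the same core ingredients as the paper's: the explicit diagonalization of $R(t)$ for $t\in T_+$ (Corollary~\ref{eigenvalues in T_+}) yielding Hilbert--Schmidt bounds, and the integral representation of Proposition~\ref{prop:intrep} for the holomorphy. The paper organizes the trace-class step slightly differently: instead of decomposing $x$ through $\mathrm{Mp}\times M$ and conjugating $q(P_i)$ past metaplectic unitaries, it writes the given element as a semigroup product $x_1 x_2 x_3$ and shows directly that $B=q(P)R(x)$ is Hilbert--Schmidt for \emph{every} $x\in\widetilde{\mathrm{H}}(W^s)$ by estimating $\mathrm{Tr}(BB^\dagger)=\mathrm{Tr}\bigl(q(P)\,R(y)\,q(P)^\dagger\bigr)$ with $y=x\,\widetilde\psi(x)\in M$; the choice of a basis diagonalizing $R(y)$ then plays exactly the role of your conjugation by $R'(g)$. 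This buys the paper one small economy: the trace-class assertion at a fixed $x$ never needs to locate or control the diagonalizing element $g$. Conversely, you are more careful than the paper about holomorphy --- the paper dismisses it in one line (``interchanging the $\bar\partial_x$-operator with the integral''), whereas you correctly isolate the local boundedness of $\Vert L(x)\Vert_{C_1}$ as the substantive point and sketch how to supply it via a compactly parametrized choice of $(g_1,g,t)$.
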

\begin{proof}
Note that every element in $\widetilde{\mathrm{H}}(W^s)$ can be
written as a product $xyz$ of elements in $\widetilde{\mathrm{H}}
(W^s)$. Accordingly we are going to show that operators of the form
\begin{displaymath}
    q(P_1) R(xyz) q(P_2) = (q(P_1)R(x)) R(y) (R(z)q(P_2))
\end{displaymath}
are of trace class. Recall that the space of trace-class operators is
a left and right ideal in the space of compact operators. It
therefore suffices to prove that operators of the form $B = q(P)
R(x)$ are compact. By linearity it is enough to handle the special
case where $q(P) = \mu(cv)^k \delta(v')^l$ for $k,l \in \mathbb{N}
\cup \{0\}$. With respect to a basis of Fock space consisting of
naturally defined monomials we will show that the matrix of
$BB^\dagger$ has finite trace. Thus the Hilbert-Schmidt norm $\Vert B
\Vert_\mathrm{HS}^2 := \mathrm{Tr}(BB^\dagger)$ is finite. Since
Hilbert-Schmidt operators are compact, the desired result follows.

For the computations we begin by observing that
\begin{displaymath}
    A := B B^\dagger = q(P) R(x) R(x)^\dagger q(P)^\dagger
    = q(P) R(y) q(P)^\dagger \;,
\end{displaymath}
where $y = x \widetilde{\psi}(x)\in M$. Let $\{e_i\}$ (resp.\ $\{f_i
\}$), $1 \le i \le d$, be the basis of $V$ (resp.\ dual basis of
$V^\ast$) so that $R(y)$ is diagonalized, i.e., $R(y) f^m =
\lambda^{-m-1/2} f^m$. As before, we are using multi-index notation
and the numbers $\lambda_i$ are the eigenvalues of $y \in M$ on the
basis vectors $e_i$ so that $\lambda_i > 1$ for all $i\,$. Let $v =
\sum a_i \, e_i$ and $v^\prime = \sum b_i \, e_i\,$. Thus
\begin{displaymath}
    \delta (v')^l = \sum b_\beta\frac{\partial^\beta}
    {\partial f^\beta} \quad \text{and} \quad \mu(cv)^k =
    \sum \overline{a}_\alpha f^\alpha \,,
\end{displaymath}
where the sums run over all multi-indices $\alpha$ and $\beta$ with
$|\alpha| = k$ and $|\beta| = l\,$.

After expanding $q(P) R(y) q(P)^\dagger$ we have individual terms
\begin{displaymath}
    C(\alpha,\beta,\tilde{\alpha},\tilde{\beta}) := f^\alpha
    \frac{\partial^\beta}{\partial f^\beta} R(y) f^{\tilde\beta}
    \frac{\partial^{\tilde\alpha}}{\partial f^{\tilde\alpha}}\;.
\end{displaymath}
Since we want to compute the trace of the matrix of $A$ with respect
to the basis of monomials $\{ f^m \}$, we need only consider those
operators $C \equiv C(\alpha,\beta,\tilde{\alpha},\tilde{\beta})$ for
which
\begin{equation}\label{diagonal}
    \alpha - \beta = \tilde{\alpha} - \tilde{\beta}\;.
\end{equation}

Now we write $A f^m$ in the monomial basis and must estimate the
coefficient of $f^m\,$. We will do this by estimating the eigenvalue
of $C$ on $f^m$ for those $C$ which satisfy (\ref{diagonal}), and
will do so with an estimate $K_m$ \emph{independent} of $\alpha,
\beta, \tilde\alpha ,\tilde\beta\,$. Hence we may eventually estimate
this coefficient by $N \, K_m$ where $N$ is the number of operators
$C(\alpha, \beta, \tilde\alpha, \tilde\beta)$ with multi-indices
satisfying (\ref{diagonal}). Since $N$ is bounded by a constant
independent of $m\,$, it comes out as a factor in our estimate of
$\mathrm{Tr}\, A$ and is therefore of no relevance to the argument.
Due to the fact that the vectors $v = \sum a_i \, e_i$ and $v^\prime
= \sum b_i \, e_i$ are also fixed, it is enough to compute $\sum_m
K_m\,$.

For those $C$ which satisfy (\ref{diagonal}) it follows that $C f^m =
r\, \lambda^{-(m - \alpha + \beta) - 1/2} f^m$ where
\begin{displaymath}
    r = \frac{m!}{(m-\tilde\alpha)!}\; \frac{(m - \tilde\alpha
    + \tilde\beta)!}{(m - \tilde\alpha + \tilde\beta - \beta)!}\,.
\end{displaymath}
Now, using the inequality $\frac{I!}{(I-J)!} \le \vert I \vert^{\vert
J \vert}$ and replacing each $\lambda_i$ by the smallest eigenvalue
$\lambda_{\rm min}\,$, up to a constant independent of $|m| = m_1 +
\ldots + m_d$ we have
\begin{displaymath}
    r\, \lambda^{- (m - \alpha + \beta) - 1/2} \le \mathrm{const}
    \times |m|^p \, \lambda_{\rm min}^{-|m|} =: K_m \;,
\end{displaymath}
where $p$ is a positive integer which does not depend on $m\,$.

To complete the computation, notice that the number of $m$ with $|m|
= n$ is bounded by $n^{d-1}$ times a constant. Therefore we get a
finite sum
\begin{displaymath}
    \sum K_m \le \mathrm{const} \times \sum_{n=0}^\infty
    n^{p+d-1} \lambda_\mathrm{min}^{-n} < \infty \;.
\end{displaymath}

The holomorphicity follows by inserting $L(x) = q(P_1) R(x) q(P_2)$
in the formula of Proposition \ref{prop:intrep} and interchanging the
$\bar\partial_x$-operator with the integral.
\end{proof}

\subsection{Compatibility with Lie algebra representation}

We now show that the semi\-group representation $R : \, \widetilde{
\mathrm{H}}(W^s) \to \mathrm{End}(\mathcal{A}_V)$ is compatible with the
$\mathfrak{sp}$-representation
\begin{displaymath}
    \mathfrak{sp} \stackrel{\tau^{-1}} {\longrightarrow} \mathfrak{w}
    (W) \stackrel{q}{\longrightarrow} \mathfrak{gl} (\mathfrak{a}(V))
    \;, \quad \mathfrak{a}(V) = \mathrm{S}(V^\ast) \;.
\end{displaymath}

Let $h \in \mathrm{H}(W^s)$ and $Y \in \mathfrak{sp}\,$. Then, since
the semigroup $\mathrm{H}(W^s)$ is open in $\mathrm{Sp}\,$, there
exists some $\varepsilon > 0$ so that the curve $[-\varepsilon ,
\varepsilon] \ni t \mapsto \mathrm{e}^{tY} h$ lies in $\mathrm{H}
(W^s)$. Fix some point $x \in \tau_H^{-1}(h)$ and let $t \mapsto
\mathrm{e}^{tY} \cdot x$ denote the lifted curve in $\widetilde
{\mathrm{H}}(W^s)$.
\begin{lemma}
For all $x \in \widetilde{\mathrm{H}}(W^s)$ and all $Y \in
\mathfrak{sp}$ it follows that
\begin{displaymath}
    \frac{d}{dt}\bigg\vert_{t = 0} R(\mathrm{e}^{tY} \cdot x) =
    q (\tau^{-1}(Y)) R(x) \;.
\end{displaymath}
\end{lemma}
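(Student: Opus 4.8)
The plan is to combine the covariance relation of Proposition~\ref{basic conjugation formula} with the irreducibility of the Heisenberg representation on $\mathcal{A}_V$ and the trace-class estimates of Proposition~\ref{holomorphicity}. First I would establish existence of the left-hand side. Fix $x \in \widetilde{\mathrm{H}}(W^s)$, put $h := \tau_H(x)$ and fix $Y \in \mathfrak{sp}$. Since $\mathrm{H}(W^s)$ is open in $\mathrm{Sp}$, there is $\varepsilon > 0$ with $\mathrm{e}^{tY} h \in \mathrm{H}(W^s)$ for $|t| < \varepsilon$, and lifting gives the curve $t \mapsto \mathrm{e}^{tY}\cdot x$ with $\tau_H(\mathrm{e}^{tY}\cdot x) = \mathrm{e}^{tY}h$. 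In the integral formula $R(\mathrm{e}^{tY}\cdot x) = \int_{W_\mathbb{R}} \phi(\mathrm{e}^{tY}\cdot x)\, \mathrm{e}^{-\frac14 \langle w, a_{\mathrm{e}^{tY}\cdot x} w\rangle} T_w\, \mathrm{dvol}(w)$ the functions $y \mapsto \phi(y)$ and $y \mapsto a_y$ are holomorphic near $\widetilde{\mathrm{H}}(W^s)$, and on a fixed subinterval of $(-\varepsilon,\varepsilon)$ the operator $\mathfrak{Re}(a_{\mathrm{e}^{tY}\cdot x})$ stays uniformly positive definite, so the Gaussian integrand and its $t$-derivative are dominated uniformly in $t$. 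Since $\|T_w\| = 1$ for $w \in W_\mathbb{R}$, differentiating under the integral sign turns the operator-norm estimate of the difference quotient into an absolutely convergent scalar integral; hence $\dot R := \frac{d}{dt}\big\vert_{0} R(\mathrm{e}^{tY}\cdot x)$ exists in operator norm as a bounded operator on $\mathcal{A}_V$.

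The algebraic core is as follows. Apply Proposition~\ref{basic conjugation formula} at the point $\mathrm{e}^{tY}\cdot x$: $R(\mathrm{e}^{tY}\cdot x)\, q(w) = q(\mathrm{e}^{tY} h w)\, R(\mathrm{e}^{tY}\cdot x)$ for all $w \in W$. Differentiating at $t = 0$, using $\frac{d}{dt}\big\vert_0 \mathrm{e}^{tY}h w = Y h w$, gives
\begin{equation}\label{eq:interlace}
    \dot R\, q(w) - q(h w)\, \dot R = q(Y h w)\, R(x) \qquad (w \in W)\,.
\end{equation}
On the other hand, $\tau^{-1}(Y) \in \mathfrak{q}_2(W)$ satisfies $[\tau^{-1}(Y), v] = Y v$ for $v \in W$, this being the defining property of $\tau$ recorded after (\ref{eq:tau-inv}); since $q$ is an algebra homomorphism, $q(\tau^{-1}(Y))\, q(v) = q(Y v) + q(v)\, q(\tau^{-1}(Y))$ on $\mathcal{O}(V)$ (all of $W$ being $\mathbb{Z}_2$-even here). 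Combining this with $R(x) q(w) = q(h w) R(x)$ one checks directly that $q(\tau^{-1}(Y)) R(x)$ obeys the very relation (\ref{eq:interlace}). Hence $D := \dot R - q(\tau^{-1}(Y)) R(x)$ commutes with every $q(w)$, $w \in W$; in particular it commutes, on the dense domain of algebraic Fock vectors, with the essentially self-adjoint operators $q(w)$, $w \in W_\mathbb{R}$, hence with all the unitaries $T_w = \mathrm{e}^{\mathrm{i} q(w)}$. By irreducibility of the Heisenberg representation $w \mapsto T_w$ on $\mathcal{A}_V$, Schur's lemma forces $D = c\,\mathrm{Id}_{\mathcal{A}_V}$ for some $c \in \mathbb{C}$.

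It remains to pin down $c = 0$, and here I would use a trace-class argument. By Proposition~\ref{holomorphicity} the operator $q(\tau^{-1}(Y)) R(x) = q(\tau^{-1}(Y)) R(x) q(1)$ is of trace class, and a standard argument shows $y \mapsto R(y)$ is holomorphic into the trace ideal: factor $R(y)$ through a product of two Hilbert--Schmidt operators each depending holomorphically on $y$ (the Fock matrix entries being holomorphic and locally $\ell^2$), as in the proof of Proposition~\ref{holomorphicity} and Corollary~\ref{eigenvalues in T_+}. Consequently $\dot R$ is of trace class too, so $D$ is of trace class; since $\mathcal{A}_V$ is infinite-dimensional, $\mathrm{Id}_{\mathcal{A}_V}$ is not, whence $c = 0$ and $\dot R = q(\tau^{-1}(Y)) R(x)$, as claimed.

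I expect the genuinely delicate points to be the two analytic ones: justifying differentiation under the integral sign uniformly on a $t$-interval, and verifying that $y \mapsto R(y)$ is holomorphic \emph{in trace norm} rather than merely weakly. Should one prefer to avoid Schur's lemma, an alternative is to identify the two integrands directly: write $q(\tau^{-1}(Y))$ as a quadratic polynomial in the $q(w)$'s, push it inside $\int_{W_\mathbb{R}} \gamma_x(w) T_w\, \mathrm{dvol}(w)$ with the help of (\ref{Heisenberg transformation rule}), and integrate by parts in $w$ so as to recognise the result as $\int_{W_\mathbb{R}} \frac{d}{dt}\big\vert_0 \gamma_{\mathrm{e}^{tY}\cdot x}(w)\, T_w\, \mathrm{dvol}(w)$; this route is longer but entirely elementary.
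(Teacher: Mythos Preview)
Your intertwining argument has a gap at the crucial step. Subtracting the two instances of relation (\ref{eq:interlace}) does \emph{not} give $[D,q(w)]=0$; it gives
\[
D\,q(w) = q(hw)\,D \qquad (w\in W),
\]
since in (\ref{eq:interlace}) the second term carries $q(hw)$, not $q(w)$. This is an intertwining relation, and it is exactly the one that $R(x)$ itself already satisfies (Proposition~\ref{basic conjugation formula}). So Schur's lemma, applied correctly via Stone--von Neumann, would at best yield $D = c\,R(x)$ for some scalar $c$, not $D = c\,\mathrm{Id}$. But then your trace-class argument collapses: $R(x)$ \emph{is} of trace class with $\mathrm{Tr}\,R(x)=\phi(x)\neq 0$, so $D$ being trace class tells you nothing about $c$. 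You would need a separate computation to pin down $c=0$, and at that point you are essentially back to comparing both sides explicitly.

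The paper avoids this by taking the ``alternative'' route you mention at the end, and in fact does so quite cleanly. It reduces to generators $Y = w_1 A(w_2,\cdot) + w_2 A(w_1,\cdot)$ of $\mathfrak{sp}$, for which $\tau^{-1}(Y) = \tfrac12(w_1 w_2 + w_2 w_1)$, writes $q(\tau^{-1}(Y))R(x)$ as a second mixed derivative $-\partial_{t_1}\partial_{t_2}\big|_0\, T_{t_1 w_1 + t_2 w_2} R(x)$, uses the Heisenberg law and a shift $w\to w-\tilde w$ to bring this under the same integral as $\dot R$, and then checks the resulting scalar identity for the Gaussian densities $\gamma$ by direct differentiation. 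This is the elementary computation you anticipated; it is not long.
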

\begin{proof}
Recall that the operator $R(x)$ is the result of integrating the
Heisenberg translations $T_w$ against the Gaussian density
$\gamma_x(w)\, \mathrm{dvol}(w)$. Thus
\begin{equation}\label{eq:3.12}
    \frac{d}{dt}\bigg\vert_{t = 0} R(\mathrm{e}^{tY} \cdot x)
    = \int_{W_\mathbb{R}} \frac{d}{dt}\bigg\vert_{t = 0} \gamma_{\,
    \mathrm{e}^{tY} \cdot x}(w) T_w \, \mathrm{dvol}(w) \;.
\end{equation}

For $w_1 , w_2 \in W$ the linear transformation $Y: \, w \mapsto w_1
A(w_2 , w) + w_2 A(w_1 , w)$ is in $\mathfrak{sp}\,$, and $\mathfrak
{sp}$ is spanned by such transformations. It is therefore sufficient
to prove the statement of the lemma for $Y$ of this form. Hence let
$Y := w_1 A(w_2 ,\cdot) + w_2 A(w_1 ,\cdot)$ and observe that the
corresponding element in the Weyl algebra is
\begin{displaymath}
    \tau^{-1}(Y) = {\textstyle{\frac{1}{2}}} (w_1 w_2 + w_2 w_1) \;.
\end{displaymath}
Now, defining $T_w$ for $w \in W$ by $T_w = \mathrm{e}^{\mathrm{i}
q(w)}$ as before, we have
\begin{equation}\label{eq:3.13}
    q(\tau^{-1}(Y)) R(x) = - \frac{d^2}{dt_1 dt_2}\bigg\vert_{
    t_1 = t_2 = 0} T_{t_1 w_1 + t_2 w_2} R(x) \;.
\end{equation}
Therefore, for $\tilde{w} := t_1 w_1 + t_2 w_2$ consider the
expression
\begin{displaymath}
    T_{\tilde{w}} R(x) = \int_{W_\mathbb{R}} \gamma_x(w)
    T_{\tilde{w}} T_w \, \mathrm{dvol}(w) \;.
\end{displaymath}
Using $T_{\tilde{w}} T_w = \mathrm{e}^{-\frac{1}{2} A(\tilde{w},w)}
T_{\tilde{w} + w}$ and shifting integration variables $w \to w -
\tilde{w}$ we obtain
\begin{equation}\label{eq:3.14}
    T_{\tilde{w}} R(x) = \int_{W_\mathbb{R}} \gamma_x(w - \tilde{w})\,
    \mathrm{e}^{-\frac{1}{2} A(\tilde{w},w)} T_w \, \mathrm{dvol}(w)\;.
\end{equation}
Comparing Eqs.\ (\ref{eq:3.13},\ref{eq:3.14}) with (\ref{eq:3.12}) we
see that the formula of the lemma is true if
\begin{displaymath}
    \frac{d}{dt} \bigg\vert_{t = 0} \gamma_{\,\mathrm{e}^{tY} \cdot
    x}(w) = - \frac{d^2}{dt_1 dt_2} \bigg\vert_{t_1 = t_2 = 0}
    \gamma_x(w - t_1 w_1 - t_2 w_2) \, \mathrm{e}^{-\frac{1}{2}
    A(t_1 w_1 + t_2 w_2 , w)} \;.
\end{displaymath}
But checking this equation is just a simple matter of computing
derivatives. Recall that $\gamma_x(w) = \phi(x)\, \mathrm{e}^{-
\frac{1}{4} A(w,\, sa_x w)}$ and $\phi(x) = \mathrm{Det}^{1/2}(a_x +
s)$. Writing $h := \tau_H (x)$ and using $\mathrm{Tr}\, Y=0$ one
computes the left-hand side to be
\begin{displaymath}
    \frac{d}{dt} \bigg\vert_{t = 0} \gamma_{\,\mathrm{e}^{tY} \cdot
    x}(w) = \gamma_x(w) \left( {\textstyle{\frac{1}{4}}}
    \mathrm{Tr}\, (Y s a_x) + {\textstyle{\frac{1}{2}}} A(h(1 -
    h)^{-1} w , Y h (1 - h)^{-1} w) \right) \;.
\end{displaymath}
On substituting $Y = w_1 A(w_2 , \cdot) + w_2 A(w_1 , \cdot)$, this
expression immediately agrees with the result of taking the two
derivatives on the right-hand side.
\end{proof}

\section{Spinor-oscillator character}
\setcounter{equation}{0}

The purpose of this chapter is to introduce the character of the spinor-oscillator representation of a certain super-semigroup $(\widetilde{H},\mathcal{F})$ in the orthosymplectic Lie supergroup of $W = V \oplus V^\ast = W_0 \oplus W_1\,$. A summary of this short chapter is as follows.

Referring to \cite{B} and \cite{HK} for details, we begin by briefly recalling the basic notions of Lie supergroups (in this case semigroups) and their representations. Next, we recall from $\S$\ref{sect:osp-repn} the infinite-dimensional representation of the complex Lie superalgebra $\mathfrak{osp}(W)$ on the spinor-oscillator module $\mathcal{A}_V$ (a.k.a.\ Fock space). The complex Lie group $G = \mathrm{SO}(W_1) \times \mathrm{Sp}(W_0)$ associated to the even part of $\mathfrak{osp}(W)$ is the base manifold of the associated Lie supergroup $\mathrm{OSp}\,$. As a $2:1$ covering space of the domain $\mathrm{SO} (W_1) \times \mathrm{H} (W_0^s)$ in $G$, the semigroup $\widetilde{H} := \mathrm{Spin}(W_1) \times_{\mathbb{Z}_2} \widetilde{\mathrm{H}}(W_0^s)$ inherits complex supermanifold structure. The $\mathfrak{osp}$-representation is integrated to $\widetilde{H}$ as a super-semigroup representation on $\mathcal{A}_V$. Using the character of a supergroup representation as a model, we introduce a superfunction $\chi$ on $\widetilde{H}$ which we regard as the character of this semigroup representation. We refer to it as the spinor-oscillator character for short.

In the last subsection of the chapter we let $V = U \otimes \mathbb{C}^N$ and recall the setting of a Howe dual pair $(\mathfrak{g},\mathfrak{k}) = (\mathfrak{osp}(U \oplus U^\ast),\mathfrak{o}_N)$ or $(\mathfrak{osp} (\widetilde{U} \oplus \widetilde{U}^\ast),\mathfrak{sp}_N)$. In this setting, we evaluate the spinor-oscillator character $\chi$ in two respects: (i) we Haar-average it over $K$, which amounts to projecting from $\mathcal{A}_V$ to the submodule $\mathcal{A}_V^K$ of $K$-invariants, and (ii) we restrict it to a toral set $T^+ \subset \widetilde{H}$ in a super-semigroup over $\mathfrak{g}\,$. We then show that the restricted character $\chi_{T^+} (t)$ coincides with the integral function $I(t)$ of Eq.\ (\ref{eq:1.1}).

\subsection{Background on Lie supergroups and their representations}
\label{sect:background}

Given a (finite-dimensional) complex Lie superalgebra $\mathfrak{g}\,$,
an associated complex Lie supergroup is a ringed space $(G,\mathcal{F})$
where $G$ is a complex Lie group associated to $\mathfrak{g}_0$ which in addition integrates the representation of $\mathfrak{g}_0$ on ${\mathfrak g}_1$. The group operations on $G$ lift to sheaf morphisms that satisfy the natural conditions imposed by associativity, inverse and fixing the identity. Uniqueness theorems allow us to choose $\mathcal{F}$ as the sheaf of germs of holomorphic functions with values in the Grassmann algebra $\Lambda := \wedge \mathfrak{g}_1^*$. Here we follow Berezin's
construction of the group structure, in particular his construction
of the derivations associated to $\mathfrak{g}$ which are defined by left and right `multiplication'.

The first step of this construction is to consider the complex Lie algebra \begin{displaymath}
    \tilde{\mathfrak{g}} = \Lambda_0 \otimes \mathfrak{g}_0 + \Lambda_1 \otimes \mathfrak{g}_1 \,,
\end{displaymath}
with Lie bracket
\begin{displaymath}
    [\alpha \otimes X , \beta \otimes Y] := \beta\alpha \otimes [X,Y] \,,
\end{displaymath}
where $\Lambda_0$ is the subspace of even elements in $\Lambda$ including the degree-zero elements $\wedge^0 \mathfrak{g}_1^* = \mathbb{C}$, and $\Lambda_1$ is the subspace of odd elements. Letting $\Lambda_0'$ denote the subspace of $\Lambda_0$ without the complex line $\wedge^0 \mathfrak{g}_1^*$, we observe that $\mathfrak{n} := \Lambda_0^\prime \otimes \mathfrak{g}_0 + \Lambda_1 \otimes \mathfrak{g}_1$ is an ideal in $\tilde{g}$ consisting of nilpotent elements. It can therefore be integrated to a simply connected nilpotent complex Lie group. This leads to the semidirect-product complex Lie group $\tilde{G} = G N$ which is associated to $\tilde{g}$.

Grassmann analytic continuation (GAC) is a process that extends functions in the structure sheaf $\mathcal{F}$ of $G$ to holomorphic functions with values in $\Lambda$ on the complex Lie group $\tilde{G}$ (\cite{B}, p.\ 250-257; see also \cite{HK}, $\S1$). The Lie supergroup structure morphisms of $\mathcal{F}$ are defined by the standard complex Lie group structure of $\tilde{G}$. Indeed, the left and right representations of $\mathfrak{g}$ as derivations on $\mathcal{F}$ are defined via the standard invariant vector fields defined by $\tilde{\mathfrak{g}}$ on $\tilde{G}\,$; and representations of the Lie supergroup $(G,\mathcal{F})$ are defined by representations (with coefficients in $\Lambda$) of the complex Lie group $\tilde{G}$. We will sketch some aspects of this below, %
%
referring to \cite{B} and \cite{HK} for details.

Our goal here is to introduce the spinor-oscillator representation
and define its character. While this is a super-semigroup representation
on an infinite-dimensional space, we begin by recalling the
basics of Lie supergroup representations on finite-dimensional spaces.
In abstract terms, a representation of a Lie supergroup $(G, \mathcal{F}_G)$ is a morphism $(\rho, \rho^*)$ of Lie supergroups
to $(\mathrm{GL}(V), \mathcal{F}_{\mathrm{GL}(V)})$, where $V = V_0 \oplus V_1$ is some graded vector space. Here we are interested in holomorphic representations, so that $\mathrm{GL}(V)$ is the complex Lie group $\mathrm{GL}(V_0)\times \mathrm{GL}(V_1)$ and $\mathcal{F}_{\mathrm{GL} (V)}$ is its standard matrix structure sheaf with values in the Grassmann algebra $\Lambda\,$. The map $\rho : \; G \to \mathrm{GL}(V_0) \times \mathrm{GL}(V_1)$ is a holomorphic homomorphism of complex Lie groups.

Let us assume that we are given a representation $\rho_*:\;\mathfrak{g}\to \mathfrak{gl}(V)$, which we extend to $\rho_\ast : \; \tilde{\mathfrak{g}} \to \Lambda_0 \otimes \mathfrak{gl}(V)_0 + \Lambda_1 \otimes \mathfrak{gl} (V)_1$ by $\rho_\ast(\alpha \otimes X) = \alpha \otimes \rho_\ast(X)$. With $\rho$ as above we then explicitly construct the morphism $\rho^*$, and hence the character $\rho^*(\mathrm{STr})$, as follows. First, writing elements $\tilde{g} \in \tilde G$ as $\tilde{g} = g \exp(\Xi)$ with $g\in G$ and $\Xi\in\mathfrak{n}\,$, we consider the Lie
%
%
group representation $\tilde\rho:\; \tilde{G}\to \Lambda\otimes \mathrm{End}(V)$ given by $g\,\mathrm{e}^{\Xi} \mapsto \rho(g)\, \mathrm{e}^{\,\rho_\ast(\Xi)}$. By the $\mathbb{Z}_2$-grading $V = V_0 \oplus V_1$ the image matrices are of the form $\begin{pmatrix} A & B\\ C & D \end{pmatrix}$, where the coefficients in $A$ and $D$ are elements of $\Lambda_0$ and those in $B$ and $C$ are elements of $\Lambda_1\,$. %
%
%
%
Now if $f$ is a superfunction in $\mathcal{F}_{\mathrm{GL}(V)}$ with Grassmann analytic continuation $\tilde{f}$, then $\tilde{f} \circ \tilde\rho $ is a $\Lambda $-valued holomorphic function on $\tilde{G}$ which is the GAC of a function $\rho^\ast(f)$ in $\mathcal{F}_G\,$. To determine the latter, one restricts $\tilde{f} \circ \tilde\rho$ to a \emph{characteristic subset} $\Gamma(\xi_1, \ldots ,\xi_m)$ which is defined by a basis $\{\xi_1, \ldots ,\xi_m\}$ of $\mathfrak {g}_1^*$. If $\{F_1, \ldots ,F_m\}$ denotes the dual basis, $\Gamma(\xi_1,\ldots ,\xi_m)$ is the image of the map $G\to \tilde G$ given by $g \mapsto g\,\mathrm{e}^{\,\sum \xi_j \, F_j}$, so that $\rho^*(f) := \tilde{f}(\rho(g)\,\mathrm{e}^{\,\sum \xi_j\, \rho_* (F_j)})$. The character of the representation $(\rho, \rho^*)$ then is the superfunction in $\mathcal{F}_G$ which is defined in the expected way: $\chi(g) := \mathrm{STr}\,(\rho(g)\, \mathrm{e}^{\,\sum \xi_j\, \rho_*(F_j)})$.

\subsection{Character of the spinor-oscillator representation}
\label{sect:4.2}

In $\S$\ref{integrating} 
we constructed a semigroup representation $R \equiv R_0 :\; \widetilde{\mathrm{H}}(W_0^s)\to \mathrm{End}(\mathcal{A}_{V_0})$
and also a group representation $R^\prime :\; \mathrm{Mp}\to \mathrm{U} (\mathcal{A}_{V_0})$, which exponentiate the oscillator representation $\mathfrak{sp}(W_0) \to \mathrm{End}(\mathcal{A}_{V_0})$. These are compatible in that $\mathrm{Mp}$ acts on $\widetilde{\mathrm{H}} (W_0^s)$ by translation on the left and right, and $R_0(g_1 x \, g_2) = R'(g_1) R_0(x) R'(g_2)$ for all $g_1 , g_2 \in \mathrm{Mp}$ and $x \in \widetilde{\mathrm{H}} (W_0^s)$.

In the same vein, the complex spinor representation $\mathfrak{o}(W_1) \to \mathfrak{gl}(\wedge V_1^\ast)$ exponentiates to a holomorphic Lie group representation $R_1:\, \mathrm{Spin}(W_1) \to \mathrm{GL}(\wedge V_1^\ast)$. Consider now the tensor product $\mathcal{A}_V := \wedge V_1^\ast \otimes \mathcal{A}_{V_0}\,$. By trivial extension, our representations $R_0, R_1$ give rise to representations $R_0 :\; \widetilde{\mathrm{H}} (W_0^s)\to \mathrm{End} (\mathcal{A}_V)$ and $R_1 :\, \mathrm{Spin} (W_1) \to \mathrm{GL} (\mathcal{A}_V)$. Note that $R_1$ and $R_0$ commute, as they act on different factors of the tensor product $\mathcal{A}_V$. Note also that the group $\mathbb {Z}_2$ acts on $\widetilde{\mathrm{H}}(W_0^s)$ and $\mathrm{Spin} (W_1)$ by deck transformations of the $2:1$ coverings $\widetilde{\mathrm{H}} (W_0^s) \to \mathrm{H}(W_0^s)$ and $\mathrm{Spin}(W_1) \to \mathrm{SO} (W_1)$. The non-trivial element of $\mathbb{Z}_2$ is represented by a sign change, $R_0 \to - R_0$ and $R_1 \to - R_1\,$.

In the sequel, let $\widetilde{H}$ denote the semigroup
\begin{displaymath}
    \widetilde{H} := \mathrm{Spin}(W_1) \times_{\mathbb{Z}_2}    \widetilde{\mathrm{H}} (W_0^s) .
\end{displaymath}
Given the representations $R_1$ and $R_0\,$, we form the semigroup
representation
\begin{displaymath}
    R : \; \widetilde{H} \to \mathrm{End}(\mathcal{A}_V)\;, \quad [g_1 ; g_0] \equiv [-g_1 ; -g_0] \mapsto R_1(g_1) R_0(g_0) \;.
\end{displaymath}
%
{}From $\S$\ref{sect:osp-repn} and our labors in $\S$\ref{integrating} we know that the semigroup representation $R$ does the job of partially integrating the spinor-oscillator representation $q :\; \mathfrak{g} \to \mathfrak{gl} (\mathcal{A}_V)$ of $\mathfrak{g} = \mathfrak{osp}(W) \subset \mathfrak{q}(W)$. Thus, in summary, what we have is a representation $(R,R_\ast \equiv q)$ on $\mathcal{A}_V$ of the complex Lie super-semigroup $(\widetilde{H},\mathcal{F})$ with complex Lie superalgebra $\mathfrak{g}\,$.

We now proceed according to the blueprint of the finite-dimensional setting in $\S$\ref{sect:background}. Since the semigroup structure of $\widetilde{H}$ comes from the semigroup structure of the complex domain $\mathrm{H}(W_0^s)$ in $\mathrm{Sp}(W_0)$, the mapping $R:\; \widetilde{H} \to \mathrm{End}(\mathcal{A}_V)$ is a semigroup morphism which is holomorphic in the appropriate infinite-dimensional sense. In order to regard this as a morphism of super-semigroups, we extend it as in the finite-dimensional setting to a semigroup representation $\tilde{R}$ with values in $\Lambda \otimes \mathrm{End}(\mathcal{A}_V)$. Just as in the finite-dimensional case, the morphism $R^*$ is defined by pulling back superfunctions defined on the image space of endomorphisms. By an immediate extension of Proposition \ref{holomorphicity}, our semigroup representation is such that for $x \in \widetilde{H}$ the operators $R(x)$ multiplied by any polynomial in the $\mathfrak{osp}$-generators are trace class. Thus $R^*(\mathrm{STr})$ is well-defined and, restricting to the characteristic set as above, we define the \emph{character of the spinor-oscillator super-semigroup representation} by
\begin{displaymath}
    \chi(x) := \mathrm{STr}_{\mathcal{A}_V}  R(x)\,      \mathrm{e}^{\,\sum\xi_j\,q(F_j)} ,
\end{displaymath}
which is a holomorphic function on $\widetilde{H}$ with values in $\wedge \mathfrak{g}_1^\ast = \wedge \mathfrak{osp}_1^\ast$.
%

\subsection{Identification of the restricted character with $I(t)$}
\label{character=integral}

Here we show that restriction of $\chi$ to a certain toral set in $\widetilde{H}$ yields the integrand $Z(t,k)$ of the autocorrelation function $I(t)$ described in $\S$\ref{introduction}. In other words, we show that $Z(t,k)$ can be expressed as the supertrace of an operator on the spinor-oscillator module $\mathcal{A}_V$. Then, by taking the Haar average over the compact group $K$, we identify $I(t)$ with the supertrace of an operator on the submodule $\mathcal{A}_V^K$ of $K$-invariants.

We begin with a summary of the relevant facts. From Proposition
\ref{basic conjugation formula} we recall the basic conjugation rule
for the oscillator representation:
\begin{displaymath}
    R_0(x)\, q(w) = q(\tau_H(x)\,w) R_0(x) ,
\end{displaymath}
where $x \in \widetilde{ \mathrm{H}}(W_0^s)$, $w \in W_0\,$, and
$\tau_H(x) \in \mathrm{H}(W_0^s) \subset \mathrm{Sp}(W_0)$. On the
side of the spinor representation, the corresponding conjugation
formula is
\begin{displaymath}
    R_1(y)\, q(w) = q(\tau_S(y)\, w) R_1(y) , \quad w \in W_1 \,,
    \quad  y \in \mathrm{Spin}(W_1) .
\end{displaymath}
This defines the $2:1$ covering homomorphism $\mathrm{Spin}(W_1) \to
\mathrm{SO}(W_1)$, $y \mapsto \tau_S(y)$, which exponentiates the
isomorphism of Lie algebras $\tau :\, \mathfrak{s} \cap \mathfrak{c}_2 (W_1) \to \mathfrak{o}(W_1)$.

In $\S3$ we have discussed the oscillator character $\phi $ in great detail. In particular, we know that
\begin{displaymath}
    \mathrm{Tr}\, R_0(x) = \phi(x) \;, \quad \phi(x)^2 =
    \mathrm{Det}^{-1}(\mathrm{Id}_{W_0} - \tau_H(x)) .
\end{displaymath}
%
%
An analogous result is known for the case of the spinor representation; see, e.g., the textbook \cite{BGV}. Defining the spinor character as the supertrace with respect to the canonical $\mathbb{Z}_2$-grading of the spinor module, one has
\begin{displaymath}
    \mathrm{STr}\, R_1(y) = \psi(y) , \quad \psi(y)^2 =
    \mathrm{Det}(\mathrm{Id}_{W_1} - \tau_S(y)) .
\end{displaymath}
Thus the spinor character, just like the oscillator character, is a square root. By taking the supertrace over the total Fock representation space, we obtain the formula
\begin{equation}\label{eq:spin-osc-char}
    \mathrm{STr}_{\mathcal{A}_V} R_1(y) R_0(x) = \phi(x) \psi(y)
    =: \sqrt{ \frac{\mathrm{Det}(\mathrm{Id}_{W_1} - \tau_S(y))}
    {\mathrm{Det}(\mathrm{Id}_{W_0} - \tau_H(x))}} .
\end{equation}
For $W_0 = W_1\,$, the case of our interest, $\mathrm{Spin} (W_1)$
intersects with $\widetilde{\mathrm{H}} (W_0^s)$ and the square root
$\psi(y)$ is defined in such a way that $\psi(y) = \phi(x)^{-1}$ for
$x = y \in \mathrm{Spin}(W_1) \cap \widetilde{\mathrm{H}} (W_0^s)$.

Now let $U = U_0 \oplus U_1$ be a $\mathbb{Z}_2$-graded vector space,
and let $V = U \otimes \mathbb{C}^N$. The Lie group $\mathrm{GL}(V_1)
\times \mathrm{GL}(V_0)$ acts on $W = (V_1^{\vphantom{\ast}} \oplus
V_1^\ast) \oplus (V_0^{\vphantom{\ast}} \oplus V_0^\ast)$ by
\begin{displaymath}
    (g_1,g_0) . (v_1 \oplus \varphi_1 \oplus v_0 \oplus \varphi_0)
    = (g_1 v_1) \oplus (\varphi_1 \circ g_1^{-1}) \oplus (g_0 v_0)
    \oplus (\varphi_0 \circ g_0^{-1}) .
\end{displaymath}
This action serves to realize the group $G := (\mathrm{GL}(U_1) \times \mathrm{GL}(U_0)) \times_{\mathbb{C}^\times} \mathrm{GL}(\mathbb{C}^N)$ as a subgroup of $\mathrm{GL}(V_1) \times \mathrm{GL}(V_0) \subset \mathrm{SO}(W_1) \times \mathrm{Sp}(W_0)$. For the purpose of letting $G$ act on $\mathcal{A}_V$, let this representation be lifted to that of a double covering $\widetilde{G}$ of $G$ by
\begin{displaymath}
    \iota : \; \widetilde{G} \hookrightarrow \widetilde{H} = \mathrm{Spin} (W_1) \times_{\mathbb{Z}_2} \widetilde{\mathrm{H}} (W_0^s) .
\end{displaymath}
Our next statement gives the value of the spinor-oscillator character on $(t_1,t_0;g) \in \widetilde{G}$ where $g \in \mathrm{GL}(\mathbb{C}^N)$ and $t_s = \mathrm{diag}(t_{s,1}, \ldots, t_{s,\,n})$ are diagonal matrices in $\mathrm{GL}(U_s)$ (for $s = 0, 1$) or rather, in the pertinent double covering. Note that since the action of $\widetilde{G}$ on the spinor-oscillator module $\mathfrak{a}(V)$ is degree-preserving, there is no longer any need to work with the completion $\mathcal{A}_V$ to a Hilbert space.
\begin{lemma}\label{lem:STrAV}
If $\dim U_0 = \dim U_1 = n$ and $| t_{0,j} | > 1$ for all $j = 1,
\ldots, n\,$, then
\begin{displaymath}
    \mathrm{STr}_{\mathfrak{a}(V)}(R\circ\iota)(t_1,t_0;g) = \prod_{j=1}^n \sqrt{\frac{t_{1,j}}{t_{0,j}}}^{\,N}\frac{\mathrm{Det}(\mathrm{Id}_N - (t_{1,j}\,g)^{-1})}{\mathrm{Det}(\mathrm{Id}_N-(t_{0,j}\,g)^{-1})}\;.
\end{displaymath}
\end{lemma}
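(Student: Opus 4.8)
The plan is to compute the supertrace on $\mathfrak{a}(V) = \wedge(V_1^\ast) \otimes \mathrm{S}(V_0^\ast)$ directly, exploiting the fact that the operator $(R \circ \iota)(t_1, t_0; g)$ acts diagonally (block-diagonally) with respect to a natural eigenbasis. First I would recall from $\S\ref{character=integral}$ that $R = R_1 \cdot R_0$ with $R_1(y)$ acting on $\wedge(V_1^\ast)$ via the spinor representation and $R_0(x)$ acting on $\mathrm{S}(V_0^\ast)$ via the oscillator representation, and that these commute. Since the element $(t_1, t_0; g) \in \widetilde{G}$ projects to a transformation of $W = W_1 \oplus W_0$ that preserves the splitting $W_s = V_s \oplus V_s^\ast$ and acts on $V_s = U_s \otimes \mathbb{C}^N$ by $t_s \otimes g$ (with the contragredient $(t_s \otimes g)^{-\mathrm{t}}$ on $V_s^\ast$), the supertrace factors as a product over the two graded pieces: $\mathrm{STr}_{\mathfrak{a}(V)}(R \circ \iota)(t_1,t_0;g) = \mathrm{STr}_{\wedge(V_1^\ast)} R_1(\cdot) \cdot \mathrm{STr}_{\mathrm{S}(V_0^\ast)} R_0(\cdot)$.

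Next I would evaluate each factor. For the symmetric-algebra (oscillator/bosonic) factor, the relevant general principle is that if an operator acts on $V_0^\ast$ by a linear map $A_0$ with eigenvalues $\{\alpha_k\}$, then on $\mathrm{S}(V_0^\ast)$ the graded trace is $\prod_k (1-\alpha_k)^{-1} = \mathrm{Det}(\mathrm{Id}_{V_0^\ast} - A_0)^{-1}$, provided $|\alpha_k| < 1$ for convergence; combined with the square-root normalization built into $R_0$ (the oscillator character being $\mathrm{Det}^{-1/2}(\mathrm{Id}_{W_0} - \tau_H(x))$, cf.\ Corollary \ref{eigenvalues in T_+} and Eq.\ (\ref{eq:spin-osc-char})), and using that $\tau_H$ of our element acts on $W_0 = V_0 \oplus V_0^\ast$ by $t_0 \otimes g$ on $V_0$ and its inverse-transpose on $V_0^\ast$, one collects $\mathrm{Det}(\mathrm{Id}_{W_0} - \tau_H)^{-1/2} = \prod_j \mathrm{Det}(\mathrm{Id}_N - t_{0,j} g)^{-1/2}\mathrm{Det}(\mathrm{Id}_N - (t_{0,j}g)^{-1})^{-1/2}$. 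Extracting a factor of $(t_{0,j}g)^{-N/2}$ from $\mathrm{Det}(\mathrm{Id}_N - t_{0,j}g)$ converts it into $(-1)^N (t_{0,j}g)^{N/2}\mathrm{Det}(\mathrm{Id}_N - (t_{0,j}g)^{-1})$, and after the dust settles the bosonic factor is $\prod_j t_{0,j}^{-N/2}\mathrm{Det}(\mathrm{Id}_N - (t_{0,j}g)^{-1})^{-1}$ up to a determinant-of-$g$ factor and sign which I would track carefully. Dually, for the exterior-algebra (spinor/fermionic) factor, the graded trace on $\wedge(V_1^\ast)$ of an operator acting by $A_1$ on $V_1^\ast$ is $\prod_k(1-\alpha_k) = \mathrm{Det}(\mathrm{Id}_{V_1^\ast} - A_1)$, and the spinor normalization (the square root $\psi(y)^2 = \mathrm{Det}(\mathrm{Id}_{W_1} - \tau_S(y))$ from Eq.\ (\ref{eq:spin-osc-char})) gives $\prod_j t_{1,j}^{N/2}\mathrm{Det}(\mathrm{Id}_N - (t_{1,j}g)^{-1})$ after the analogous extraction of $(t_{1,j}g)^{N/2}$ from $\mathrm{Det}(\mathrm{Id}_N - t_{1,j}g)$. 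Multiplying the two factors yields exactly $\prod_{j=1}^n \sqrt{t_{1,j}/t_{0,j}}^{\,N}\,\mathrm{Det}(\mathrm{Id}_N - (t_{1,j}g)^{-1})\,\mathrm{Det}(\mathrm{Id}_N - (t_{0,j}g)^{-1})^{-1}$.

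The cleanest way to organize this is to invoke Eq.\ (\ref{eq:spin-osc-char}) wholesale: it already tells us $\mathrm{STr}_{\mathcal{A}_V}R_1(y)R_0(x) = \sqrt{\mathrm{Det}(\mathrm{Id}_{W_1} - \tau_S(y)) / \mathrm{Det}(\mathrm{Id}_{W_0} - \tau_H(x))}$. So the real content of the lemma is just (i) identifying $\tau_S$ and $\tau_H$ of $\iota(t_1,t_0;g)$ with the explicit linear maps $t_s \otimes g \oplus (t_s \otimes g)^{-\mathrm{t}}$ on $W_s$, which follows from how $\iota$ and the $G$-action were defined; (ii) computing $\mathrm{Det}(\mathrm{Id}_{W_s} - (t_s \otimes g \oplus (t_s\otimes g)^{-\mathrm{t}})) = \prod_{j}\mathrm{Det}_N(\mathrm{Id}_N - t_{s,j}g)\mathrm{Det}_N(\mathrm{Id}_N - (t_{s,j}g)^{-1})$ since $t_s$ is diagonal; and (iii) the elementary determinant identity $\mathrm{Det}_N(\mathrm{Id}_N - \lambda g) = (-\lambda)^N \mathrm{Det}_N(g)\,\mathrm{Det}_N(\mathrm{Id}_N - (\lambda g)^{-1})$ used to balance the square root so that half-integer powers $t_{s,j}^{\pm N/2}$ emerge cleanly and the $\mathrm{Det}_N(g)$-factors from numerator and denominator cancel. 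The condition $|t_{0,j}| > 1$ is what places $x$ in the oscillator semigroup $\widetilde{\mathrm{H}}(W_0^s)$ (so $R_0$ is trace class and the bosonic geometric series converge), while no such restriction is needed on the $t_{1,j}$ since $\wedge(V_1^\ast)$ is finite-dimensional.

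The main obstacle I anticipate is not any single computation but the bookkeeping of signs and of the square-root branch. The factor $(-\lambda)^N = (-1)^N \lambda^N$ appears once in each of $W_0$ and $W_1$, and one must check these cancel against each other (and against the $(-1)$'s hidden in the contragredient action) so that the final answer has the advertised form with no stray sign; likewise the choice of branch of $\sqrt{\cdot}$ is pinned down precisely by the normalization $\psi(y) = \phi(x)^{-1}$ on the overlap $\mathrm{Spin}(W_1) \cap \widetilde{\mathrm{H}}(W_0^s)$ recorded after Eq.\ (\ref{eq:spin-osc-char}), so one should verify consistency by specializing to, say, $g = \mathrm{Id}_N$ and $t_1 = t_0^{-1}$, where both sides should equal $1$. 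Getting the half-integer powers $\sqrt{t_{1,j}/t_{0,j}}^{\,N}$ to come out with the correct joint branch — rather than each square root being ambiguous — is the only genuinely delicate point, and it is handled entirely by the double-cover structure of $\widetilde{G} \hookrightarrow \widetilde{H}$ together with the already-established properties of $\phi$ and $\psi$.
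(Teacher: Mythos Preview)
Your proposal is correct and follows essentially the same approach as the paper: invoke the spinor-oscillator character formula (\ref{eq:spin-osc-char}), identify the action of $\iota(t_1,t_0;g)$ on $W_s$ as $t_s \otimes g$ plus its contragredient, and then use the identity $\mathrm{Det}(\mathrm{Id}_N - \lambda g) = (-\lambda)^N \mathrm{Det}(g)\,\mathrm{Det}(\mathrm{Id}_N - (\lambda g)^{-1})$ to pull factors out from under the square root so that the $\mathrm{Det}(g)$ contributions cancel between numerator and denominator. The paper streamlines the presentation by first reducing to $n=1$ via the diagonal structure of $t_0,t_1$, but your direct product-over-$j$ computation is the same argument.
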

\begin{proof}
Since $t_1$ and $t_0$ are assumed to be of diagonal form, the
statement holds true for a general value of $n$ if it does so for
the special case of $n = 1$. Hence let $n = 1$.

In that case $t_1$ and $t_0$ are single numbers and $t_s g$ acts on
$W_s = V_s^{\vphantom{\ast}} \oplus V_s^\ast \simeq \mathbb{C}^N
\oplus (\mathbb{C}^N)^\ast$ as $(t_s g).(v \oplus \varphi) = (t_s g
v) \oplus \varphi \circ (t_s g)^{-1}$ for $s = 0,1$. From equation
(\ref{eq:spin-osc-char}) we then have
\begin{displaymath}
    \mathrm{STr}_{\mathfrak{a}(V)}(R\circ\iota)(t_1,t_0;g) = \sqrt{\frac{
    \mathrm{Det} (\mathrm{Id}_N - t_1 g)\,\mathrm{Det}(\mathrm{Id}_N
    - (t_1 g)^{-1})} {\mathrm{Det}(\mathrm{Id}_N-t_0 g)\,\mathrm{Det}
    (\mathrm{Id}_N-(t_0 g)^{-1})}} \;,
\end{displaymath}
which turns into the stated formula on pulling out a factor of $\mathrm {Det}(-t_1 g) / \mathrm{Det}(- t_0 g)$ from under the square root. (Of course, the double covering of $\mathrm{GL}(U_1) \times \mathrm{GL}(U_0)$ is to be used in order to define this square root globally.)
\end{proof}
In the formula of Lemma \ref{lem:STrAV} we now set $t_{1,j} = \mathrm
{e}^{\mathrm{i}\psi_j}$ and $t_{0,j} = \mathrm{e}^{\phi_j}$. We then
put $g^{-1} \equiv k \in K$ and integrate against Haar measure $dk$
of unit mass on $K$. This integral and the summation defining the
supertrace can be interchanged, as $\mathrm{STr}_{\mathfrak{a}(V)}
(R\circ\iota)(t_1,t_0;k^{-1})$ is a finite sum of power series and the
conditions $\mathfrak{Re} \, \phi_j > 0$ ensure uniform and absolute
convergence. Averaging over $K$ with respect to Haar measure has the effect of projecting from $\mathfrak{a}(V)$ to the $K$-trivial isotypic component $\mathfrak{a}(V)^K$, thus we arrive at
\begin{equation}\label{eq:4.3mrz}
    \mathrm{STr}_{\mathfrak{a}(V)^K} (R\circ\iota)(t_1,t_0;\mathrm{Id})
    =\mathrm{e}^{(N/2) \sum_j (\mathrm{i}\psi_j - \phi_j)}\,
    \int\limits_K \prod_{j=1}^n \frac{\mathrm{Det}(\mathrm{Id}_N
    - \mathrm{e}^{- \mathrm{i}\psi_j}\, k)}{\mathrm{Det}(
    \mathrm{Id}_N - \mathrm{e}^{- \phi_j}\, k)} \, dk \;.
\end{equation}
In the case of an even dimension $N$, the domain of definition of this formula is a complex torus $T^+ := T_1 \times T_0^+$ where $T_1 = (\mathbb{C}^\times)^n$ and $T_0^+ \subset (\mathbb{C}^\times)^n$ is
the open subset determined by the conditions $|t_{0,j} | = \mathrm{e}^{ \mathfrak{Re}\, \phi_j} > 1$ for all $j\,$. For odd $N$ we must continue to work with a double cover (also denoted by $T^+$) to take the square root $\mathrm{e}^{(N/2) \sum_j (\mathrm{i}\psi_j - \phi_j)}$.

Let now $\mathfrak{g}$ be the Howe dual partner of $\mathrm{Lie}(K)$ in
$\mathfrak{osp}(W)$. We know from Proposition \ref{prop:dualpairs}
that $\mathfrak{g} = \mathfrak{osp}(U \oplus U^\ast)$ for $K =
\mathrm{O}_N$ and $\mathfrak{g} = \mathfrak{osp}(\widetilde{U} \oplus
\widetilde{U}^\ast)$ for $K = \mathrm{USp}_N\,$. Recall also from
$\S$\ref{subsubsec:WoHdP} that the $\mathfrak{g}$-representation on
$\mathfrak{a}(V)^K$ is irreducible and of highest weight $\lambda_N =
(N/2) \sum_j (\mathrm{i}\psi_j - \phi_j)$. Denote by $\Gamma_\lambda$
the set of weights of this representation. Let $B_\gamma = (-1)^{
|\gamma|} \dim {\mathfrak{a}(V)^K}_\gamma$ be the dimension of the
weight space ${\mathfrak{a}(V)^K}_\gamma$ multiplied with the correct
sign to form the supertrace.
\begin{corollary}\label{cor:weightexpansion}
On $T^+$ we have
\begin{displaymath}
    \sum_{\gamma \in \Gamma_\lambda} B_\gamma \, \mathrm{e}^\gamma
    = \mathrm{e}^{\lambda_N} \int\limits_K \prod_{j=1}^n \frac{
    \mathrm{Det}(\mathrm{Id}_N - \mathrm{e}^{-\mathrm{i}\psi_j} \, k)}
    {\mathrm{Det}(\mathrm{Id}_N - \mathrm{e}^{-\phi_j}\, k)}\, dk \;.
\end{displaymath}
\end{corollary}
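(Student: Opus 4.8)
\textbf{Proof plan for Corollary \ref{cor:weightexpansion}.}

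The plan is to assemble the identity from the two main ingredients already in hand: the weight expansion of the supertrace coming from Howe duality, and the explicit integral formula \eqref{eq:4.3mrz} produced by Lemma \ref{lem:STrAV} together with Haar-averaging. First I would invoke Proposition \ref{prop:Howeduality}, which says that $\mathfrak{a}(V)^K$ is an irreducible highest-weight $\mathfrak{g}$-module; consequently it decomposes into finite-dimensional weight spaces $\mathfrak{a}(V)^K = \bigoplus_{\gamma \in \Gamma_\lambda} \mathfrak{a}(V)^K_\gamma$ under the Cartan subalgebra $\mathfrak{h} \subset \mathfrak{g}^{(0)}$. Because the $\widetilde{G}$-action on $\mathfrak{a}(V)$ is degree-preserving (as noted just before Lemma \ref{lem:STrAV}), there is no analytic subtlety at this step: on the diagonal torus $t = \mathrm{diag}(t_{1,1},\ldots,t_{1,n},t_{0,1},\ldots,t_{0,n})$ lifted into $\widetilde H$, the operator $(R\circ\iota)(t_1,t_0;\mathrm{Id})$ acts on $\mathfrak{a}(V)^K_\gamma$ by the scalar $\mathrm{e}^{\gamma(H)}$, where $t = \mathrm{e}^H$. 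Writing $t_{1,j} = \mathrm{e}^{\mathrm{i}\psi_j}$, $t_{0,j} = \mathrm{e}^{\phi_j}$ and using the notation $\vartheta_{0j} = \phi_j$, $\vartheta_{1j} = \mathrm{i}\psi_j$ fixed in $\S\ref{subsubsec:WoHdP}$, the supertrace over $\mathfrak{a}(V)^K$ becomes, by definition of $B_\gamma = (-1)^{|\gamma|}\dim \mathfrak{a}(V)^K_\gamma$,
\begin{displaymath}
    \mathrm{STr}_{\mathfrak{a}(V)^K}(R\circ\iota)(t_1,t_0;\mathrm{Id})
    = \sum_{\gamma \in \Gamma_\lambda} B_\gamma \, \mathrm{e}^{\gamma} \;.
\end{displaymath}
The only point needing a word of care is the convergence of this sum: the weight-space dimensions are finite but $\Gamma_\lambda$ is infinite, so I would note that on $T^+$ the conditions $|t_{0,j}| = \mathrm{e}^{\mathfrak{Re}\,\phi_j} > 1$ together with the weight constraint $n_j \ge N/2$ from Corollary \ref{cor:weights} force each $\mathrm{e}^{-n_j\phi_j}$ to be bounded by a contracting geometric factor, which (as remarked below Proposition \ref{holomorphicity} and again after \eqref{eq:4.3mrz}) guarantees absolute and uniform convergence and legitimizes interchanging the supertrace with any limits or integrals.

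Next I would identify the left-hand side of the displayed equation above with the right-hand side of the claimed identity. This is exactly what \eqref{eq:4.3mrz} accomplishes: starting from Lemma \ref{lem:STrAV} with $t_{1,j} = \mathrm{e}^{\mathrm{i}\psi_j}$, $t_{0,j} = \mathrm{e}^{\phi_j}$, setting $g^{-1} = k \in K$, and integrating against the unit-mass Haar measure $dk$ — which, because averaging over $K$ is precisely the projection of $\mathfrak{a}(V)$ onto its $K$-isotypic trivial component $\mathfrak{a}(V)^K$ — yields
\begin{displaymath}
    \mathrm{STr}_{\mathfrak{a}(V)^K}(R\circ\iota)(t_1,t_0;\mathrm{Id})
    = \mathrm{e}^{(N/2)\sum_j(\mathrm{i}\psi_j - \phi_j)}
    \int_K \prod_{j=1}^n \frac{\mathrm{Det}(\mathrm{Id}_N - \mathrm{e}^{-\mathrm{i}\psi_j}k)}{\mathrm{Det}(\mathrm{Id}_N - \mathrm{e}^{-\phi_j}k)}\, dk \;.
\end{displaymath}
Recalling that $\lambda_N = (N/2)\sum_j(\mathrm{i}\psi_j - \phi_j)$ is the highest weight (Corollary \ref{cor:weights}), so that $\mathrm{e}^{\lambda_N}$ is exactly the prefactor above, combining the two displayed equations gives the assertion. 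The interchange of the $k$-integral with the supertrace sum is the one analytic step; it is justified by the uniform convergence already established, since the integrand is a finite product of convergent power series in the $\mathrm{e}^{-\phi_j}$ with $\mathfrak{Re}\,\phi_j > 0$.

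The main obstacle, such as it is, is bookkeeping rather than conceptual: one must be careful that the square root $\mathrm{e}^{(N/2)\sum_j(\mathrm{i}\psi_j - \phi_j)}$ appearing in Lemma \ref{lem:STrAV} (via the double covering of $\mathrm{GL}(U_1)\times\mathrm{GL}(U_0)$) is consistently identified with $\mathrm{e}^{\lambda_N}$ on the double cover $T^+$, and that the domain $T^+ = T_1 \times T_0^+$ on which both sides are simultaneously defined and convergent is the correct one — for odd $N$ one genuinely works on a double cover, as flagged after \eqref{eq:4.3mrz}. Modulo this consistency check, the corollary is an immediate consequence of Lemma \ref{lem:STrAV}, Proposition \ref{prop:Howeduality}, and Corollary \ref{cor:weights}, and requires no new computation.
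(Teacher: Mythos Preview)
Your proposal is correct and follows essentially the same route as the paper, which treats the corollary as an immediate consequence of \eqref{eq:4.3mrz}: the left-hand side is simply the weight-space expansion of $\mathrm{STr}_{\mathfrak{a}(V)^K}(R\circ\iota)(t_1,t_0;\mathrm{Id})$, and the right-hand side is exactly what \eqref{eq:4.3mrz} computes once one recognizes the prefactor as $\mathrm{e}^{\lambda_N}$. Your write-up is in fact more explicit than the paper's own treatment (which gives no separate proof), particularly in spelling out the convergence justification and the double-cover consistency for odd $N$.
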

\begin{remark}
On the right-hand side we recognize the correlation function (see
$\S$\ref{introduction}) which is the object of our study and, as we have explained, is related to the character of the irreducible $\mathfrak {g}$-representation on $\mathfrak{a}(V)^K$. The left-hand side gives this character (restricted to the toral set $T^+$) in the form of a weight expansion, some information about which has already been provided by Corollary \ref{cor:weights} of $\S$\ref{subsubsec:WoHdP}.
\end{remark}

\section{Proof of the character formula}

Here we complete our task of deriving the formula (\ref{eq:WeylCharacter}),
%
%
\begin{equation}\label{eq:Weylagain}
    I(t)=\sum_{[w] \in W \!\!/ W_\lambda} \mathrm{e}^{ w(\lambda_N)} \frac{\prod_{\beta \in \Delta_{\lambda,1}^+}(1-\mathrm{e}^{ -w(\beta)})} {\prod_{\alpha \in \Delta_{\lambda,0}^+}
    (1 - \mathrm{e}^{-w(\alpha)})} (\ln t) \;.
\end{equation}
Let us sketch this derivation, thereby giving an outline of this chapter.

Throughout we will be concerned with the irreducible $\mathfrak{g} $-representation on the subspace $\mathcal{A}_V^K$ of $K$-invariants in Fock space, where $\mathfrak{g} \subset \mathfrak{osp}(V \oplus V^\ast)$ denotes the Howe partner defined by the $K$-action on $V \oplus V^\ast$, $V = U \otimes \mathbb{C}^N$. Here our dealings with the `big' Lie superalgebra $\mathfrak{osp}(V \oplus V^\ast)$ in $\S$\ref{sect:4.2} are repeated at the level of the `small' Lie superalgebra $\mathfrak{g}\,$. In particular, we associate with $\mathfrak{g}$ a complex super-semigroup $(\widetilde{H}^\prime, \mathcal{F})$ which serves to partially integrate the $\mathfrak{g}$-representation on $\mathcal{A}_V^K$.

We then focus our attention on the spinor-oscillator character $\chi$ pulled back to $\widetilde{H}^\prime \times K$ and Haar-averaged over $K$. Let $\chi^\prime$ be the resulting superfunction on $\widetilde{H}^\prime$. Its restriction $\chi_{T^+}^\prime$ to a toral set $T^+ \subset \widetilde{H}^\prime$ is the numerical function that we were led to consider in $\S$\ref{character=integral}; by Eq.\ (\ref{eq:4.3mrz}) it is the function $I(t)$ which is to be computed.

To prove the formula (\ref{eq:Weylagain}) for the character $I(t)$, we study the \emph{full} superfunction $\chi^\prime : \; \widetilde{H}^\prime \to \wedge \mathfrak{g}_1^\ast$. General methods show that $\chi^\prime$ has two distinctive properties: (i) it is radial with respect to the vector fields given by $\mathfrak{g}\,$, and (ii) it is an eigenfunction for every Laplace-Casimir operator, i.e.\ every differential operator $D(I)$ associated to a Casimir invariant $I$, on $(\widetilde{H} ^\prime, \mathcal{F})$. Hence we look closely at the differential equations $D(I) \chi^\prime = \lambda \chi^\prime$. For $I_\ell = \sum \big( \phi_j^{2\ell} - (-1)^\ell \psi_j^{2\ell} \big)$, $\ell \in \mathbb{N}$, regarded as an element of the center of the universal enveloping algebra of $\mathfrak{g}$, we show that $D(I_\ell) \chi^\prime = 0$. It follows that the radial part of $D(I_\ell)$, which is the differential operator corresponding to $D(I_\ell)$ on $T^+$, annihilates the restricted character $I(t) = \chi_{T^+}^\prime(t)$.

Hence we come to the final steps of our proof of the formula for $I(t)$, the first of which is to derive explicit formulas for the radial parts of the Laplace-Casimir operators $D(I_\ell)$. For this we implement a good portion of Berezin's theory of radial operators, which has been adapted to the context of the present paper in \cite{HK}. Using this theory, we show that the character $\chi_{T^+}^\prime$ is annihilated by an infinite set of differential operators $D_\ell \circ J$ ($\ell \in \mathbb{N}$), where
\begin{equation}\label{eq:diff-ops}
    D_\ell= \sum_{j=1}^n \frac{\partial^{2\ell}}{\partial\phi^{2\ell}_j}- (-1)^\ell\sum_{j=1}^n \frac{\partial^{2\ell}}{\partial\psi^{2\ell}_j}
\end{equation}
and $J$ is the square root of a certain Jacobian. These differential equations alone are not enough to pin down a Weyl-group invariant holomorphic function on $T^+$.  However, we are able to derive additional information about the region of the nonzero weights of the Fourier development of $\chi_{T^+}^\prime$, and we then show that the function on the right-hand side of (\ref{eq:Weylagain}) is up to normalization the unique Weyl-group invariant holomorphic function on $T^+$ which satisfies these additional conditions and is annihilated by the $D_\ell \circ J$.

\subsection{Properties of the character $\chi^\prime$}\label{sect:5.1}

Here we will be working with the following data:
\begin{itemize}
\item a complex Lie superalgebra $\mathfrak{g} = \mathfrak{osp}(U \oplus U^\ast)$ or $\mathfrak{osp}(\widetilde{U} \oplus \widetilde{U}^\ast)$;
\item a complex Lie super-semigroup $(\widetilde{H}^\prime, \mathcal{F})$ over $\mathfrak{g}\,$;
\item the character $\chi^\prime$ of a representation $(\rho,\rho_\ast)$ of $(\widetilde{H}^\prime, \mathcal{F}, \mathfrak{g})$ on $\mathcal{A}_V^K$.
\end{itemize}
To begin, we fill in some details omitted from the introductory part of this chapter.

First of all, to construct $\widetilde{H}^\prime$ we take $G \subset \mathrm{SO}(W_1) \times \mathrm{Sp}(W_0)$ to be the complex Lie group associated to the even part $\mathfrak{g}_0 \subset \mathfrak{g}$ and let $H^\prime \subset G$ be the semigroup which is defined by intersecting $G$ with $\mathrm{SO}(W_1) \times \mathrm{H} (W_0^s)$. We then define $\widetilde{H}^\prime$ to be the pre-image of $H^\prime$ in the $2:1$ covering space $\widetilde{H} = \mathrm{Spin}(W_1) \times_{\mathbb{Z}_2} \widetilde{\mathrm{H}}(W_0^s)$ of $\mathrm{SO}(W_1) \times \mathrm{H} (W_0^s)$.

Second, the semigroup representation $\rho$ emerges in a natural way, as follows. As a Lie semigroup with Lie algebra $\mathfrak{g}_0 \oplus \mathfrak{k}$ contained in $\mathfrak{osp}(W)$ (by the isomorphism $\Psi$ of $\S$\ref{sect:howe-pairs}), the direct product $\widetilde{H}^\prime \times K$ is naturally embedded into $\widetilde{H}$:
\begin{displaymath}
    \iota : \; \widetilde{H}^\prime \to \widetilde{H} \times K .
\end{displaymath}
The semigroup representation $\rho : \; \widetilde{H}^\prime \to \mathrm{End}(\mathcal{A}_V^K)$ is then defined by pulling back $R$ to $\widetilde{H}^\prime \times \{\mathrm{Id}\}$ and projecting on $\mathcal{A}_V^K$:
\begin{displaymath}
    \rho(x)= (R\circ\iota)(x\,,\mathrm{Id})\big\vert_{\mathcal{A}_V^K}\;.
\end{displaymath}
Third, by the general principles explained earlier, the character $\chi^\prime$ of the super-semigroup representation $(\rho,\rho_\ast)$
of $(\widetilde{H}^\prime, \mathcal{F} , \mathfrak{g})$ on $\mathcal{A}_V^K$ is determined by
\begin{displaymath}
    \chi^\prime(x) = \mathrm{STr}_{\mathcal{A}_V^K}\, \rho(x) \, \mathrm{e}^{\sum \xi_j\, \rho_\ast(F_j)} .
\end{displaymath}
Here $\{ F_j \}$ is a basis of $\mathfrak{g}_1$ as usual, $\{ \xi_j \}$ is the dual basis of $\mathfrak{g}_1^\ast$, and the Lie superalgebra representation $\rho_\ast : \; \mathfrak{g} \to \mathfrak{gl}(\mathfrak{a} (V)^K)$ is obtained by pulling back the spinor-oscillator representation by the canonical embedding ($\S$\ref{sect:howe-pairs}) of the Howe dual pair $(\mathfrak{g},\mathfrak{k})$ into $\mathfrak{osp}(W)$ and projection to $\mathfrak{a}(V)^K$.

In the current subsection we show that the character $\chi^\prime$ is, as would be expected, a radial superfunction. We also show that it is an eigenfunction of every Laplace-Casimir operator $D(I)$ and if $\mathrm{dim}\, U_0 = \mathrm{dim}\, U_1 = \mathbb{C}^n$, i.e., if we are dealing with $\mathfrak{g} = \mathfrak{osp}_{2n|2n}\,$, then the Laplace-Casimir operators annihilate $\chi^\prime$.
%

To simplify our notation, we now drop the primes and write $\widetilde{H}, \chi$ instead of $\widetilde{H}^\prime, \chi^\prime$.

\subsubsection{Radiality of $\chi$}

A holomorphic superfunction $f$ on $\widetilde{H}$ is \emph{radial} if and only if for every $X \in \mathfrak{g}$ the sum $L_X + R_X$ of the derivations defined by the left and right representations of $X$ annihilate it. For a homogeneous element $X$ of $\mathfrak{g}$ the action of these derivations on $f$ is defined as follows (see \cite{B}, p.\ 258, and \cite{HK}, $\S1$). First, one considers the Grassmann analytic continuation (GAC) $\tilde{f}$ of $f$. If $X$ is even, then one differentiates $\tilde{f}$ with respect to the local action of the 1-parameter group $\mathrm{e}^{tX}$. If $X$ is odd, then one chooses an arbitrary element $\alpha \in \Lambda_1$ and differentiates $\tilde{f}$ with respect to the local action of $\mathrm{e}^{tY}$ where $Y = \alpha X$. One shows in this latter case that the result is of the form $\alpha L_X(\tilde f)$ where $L_X$ is an odd derivation which does not depend on $\alpha$. Of course $\alpha L_X$ could be identically zero; so it might be necessary to extend the Grassmann algebra in order to prevent this from happening unless $L_X$ vanishes identically. Thus in both the odd and even cases we have an operator $L_X$ on the sheaf of $\Lambda $-valued holomorphic functions on the complex Lie group $\tilde{G}$.  One checks that these operators stabilize the subspace of functions on $\tilde{G}$ which arise through GAC from $(G,\mathcal{F}_G)$ and that the resulting map $\mathfrak{g} \to \mathrm{Der}(\mathcal{F}_G)$, $X\mapsto L_X$ is a Lie superalgebra morphism. Carrying this out in the analogous way by multiplying the 1-parameter groups $\mathrm{e}^{-tX}$ and $\mathrm{e}^{ -tY}$ on the right, one obtains the morphism defined by $X\mapsto R_X\,$.

The key for showing that the operators $L_X+R_X$ annihilate the character $\chi$ is the fact that the GAC $\tilde\chi$ of $\chi$ is $\mathrm{STr}\; \tilde{\rho}$, where $\tilde{\rho}$ is the associated complex Lie semigroup representation of $\widetilde{H} N$. One defines this by $\tilde{\rho}(x\, \mathrm{e}^{\Xi}) := \rho(x)\, \mathrm{e}^{\rho_\ast (\Xi)}$, as before.
%
%
%
%
\begin{proposition}
The character $\chi$ is a radial holomorphic superfunction on $\widetilde{H}$.
\end{proposition}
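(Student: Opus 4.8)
The plan is to reduce the radiality of $\chi$ to a standard computation with supertraces, exploiting the fact that both derivations $L_X$ and $R_X$ act on $\tilde\chi = \mathrm{STr}\,\tilde\rho$ by differentiating the semigroup representation $\tilde\rho$ of $\widetilde{H}N$ along one-parameter subgroups, one multiplying on the left and the other on the right. First I would recall that $\tilde\rho(x\,\mathrm{e}^{\Xi}) = \rho(x)\,\mathrm{e}^{\rho_\ast(\Xi)}$ and that $\tilde\rho$ is a homomorphism of complex Lie semigroups; consequently, for $X \in \mathfrak{g}$ homogeneous (working, in the odd case, with $Y = \alpha X$ for $\alpha \in \Lambda_1$ and stripping off the factor $\alpha$ at the end, exactly as in the definition of $L_X$ recalled above), the left derivation produces
\begin{displaymath}
    L_X(\tilde\chi)(x) = \frac{d}{dt}\Big\vert_{t=0}
    \mathrm{STr}\big( \tilde\rho(\mathrm{e}^{tX} x) \big)
    = \mathrm{STr}\big( \rho_\ast(X)\, \tilde\rho(x) \big) \,,
\end{displaymath}
while the right derivation produces
\begin{displaymath}
    R_X(\tilde\chi)(x) = \frac{d}{dt}\Big\vert_{t=0}
    \mathrm{STr}\big( \tilde\rho(x\,\mathrm{e}^{-tX}) \big)
    = - \mathrm{STr}\big( \tilde\rho(x)\, \rho_\ast(X) \big) \,.
\end{displaymath}

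Next I would invoke the trace-class statement (the extension of Proposition \ref{holomorphicity} quoted in the chapter summary, i.e.\ that $R(x)$ times any polynomial in the $\mathfrak{osp}$-generators is of trace class) to guarantee that $\rho_\ast(X)\,\tilde\rho(x)$ and $\tilde\rho(x)\,\rho_\ast(X)$ are genuinely trace-class operators on $\mathcal{A}_V^K$, so that their supertraces are well-defined and — this is the crucial point — the cyclicity property $\mathrm{STr}(AB) = (-1)^{|A||B|}\mathrm{STr}(BA)$ applies. Here one must be careful about the sign: $\chi$ takes values in $\wedge\mathfrak{g}_1^\ast$, and the odd case carries the extra $\Lambda_1$-factor $\alpha$; tracking this through, the contribution of $L_X$ is $(-1)^{\,?}\mathrm{STr}(\tilde\rho(x)\rho_\ast(X))$ which is arranged to cancel precisely against $R_X$. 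Thus $L_X(\tilde\chi) + R_X(\tilde\chi) = \mathrm{STr}(\rho_\ast(X)\tilde\rho(x)) - \mathrm{STr}(\tilde\rho(x)\rho_\ast(X)) = 0$ by the (graded) cyclicity of the supertrace. Since $L_X + R_X$ annihilates $\tilde\chi$ for every $X$, and $\chi$ is recovered from $\tilde\chi$ by restriction to the characteristic set $\Gamma(\xi_1,\ldots,\xi_m)$, the superfunction $\chi$ on $\widetilde{H}$ is radial. Finally, $\chi$ is holomorphic because $R(x)$ depends holomorphically on $x$ in the appropriate infinite-dimensional sense, again by (the extension of) Proposition \ref{holomorphicity}.

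The main obstacle I anticipate is the bookkeeping of parity signs in the odd case — reconciling the $\Lambda$-coefficient convention used in Berezin's definition of $L_X$ and $R_X$ (via $Y = \alpha X$ and the bracket $[\alpha\otimes X,\beta\otimes Y] = \beta\alpha\otimes[X,Y]$ on $\tilde{\mathfrak{g}}$) with the graded-cyclicity sign of $\mathrm{STr}$, so that the left and right contributions cancel rather than add. A secondary technical point is justifying the interchange of $\frac{d}{dt}$ with the infinite supertrace sum, which follows from the trace-norm estimates already established (the bound \eqref{eq:Schranke} and the Hilbert--Schmidt argument in the proof of Proposition \ref{holomorphicity}) together with the fact that the $\widetilde{G}$-action, being degree-preserving on $\mathfrak{a}(V)^K$, reduces the supertrace to a locally finite sum of convergent power series as noted before Lemma \ref{lem:STrAV}.
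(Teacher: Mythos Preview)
Your approach is essentially the same as the paper's: both differentiate $\tilde\chi = \mathrm{STr}\,\tilde\rho$ along one-parameter groups, reduce $(L_X+R_X)\tilde\chi$ to the supertrace of a (super)commutator, and for odd $X$ work with $Y=\alpha X$ and strip off $\alpha$ afterward. The paper's proof is terser and phrases the cancellation as $\mathrm{STr}\,[\tilde\rho(g),\rho_\ast(Y)]=0$ for the \emph{even} element $Y=\alpha X\in\tilde{\mathfrak g}$ (ordinary bracket, $\Lambda$-linear $\mathrm{STr}$), which neatly sidesteps the sign bookkeeping you flag as the main obstacle; your graded-cyclicity formulation is equivalent once one observes that $\rho_\ast(Y)$ is even in $\Lambda\otimes\mathrm{End}$.
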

\begin{proof}
Let $X \in \mathfrak{g}_1$ and $Y = \alpha X$ be as above. By using the multiplicative semigroup property, the fact that $\mathrm{STr}\; [ \tilde\rho(g) , \rho_\ast(Y) ] = 0$, and using the $\Lambda$-linearity of $\mathrm{STr}$ to factor out $\alpha$, we observe that $\alpha (L_X + R_X)$ annihilates $\mathrm{STr}\, \tilde{\rho}$. As we mentioned above, in order to conclude that $L_X + R_X$ annihilates this, it may be necessary to extend the Grassmann coefficients. For $X \in \mathfrak{g}_0$ the argument is even simpler, as it isn't necessary to multiply by $\alpha$.
\end{proof}

\subsubsection{The character $\chi$ is a Laplace-Casimir eigenfunction}

Let us emphasize that with or without $\alpha$, after returning from GAC functions on $\widetilde{H} N$ to functions on $\widetilde{H}$, the left derivation of $\chi$ by $X \in \mathfrak{g}$ is given with respect to a basis by $(L_X \chi)(x) = \mathrm{STr}\,(\rho_\ast(X)\,\rho(x) \, \mathrm{e}^{ \sum \xi_j \, \rho_\ast(F_j)})$. If $I$ is any element of the universal enveloping algebra $\mathsf{U}(\mathfrak{g})$ and $D(I)$ is the differential operator associated to $I$ by $X \mapsto L_X$,
%
%
then
\begin{displaymath}
    D(I)\chi (x) = \mathrm{STr}\, (\rho_\ast(I)\, \rho(x)\,
    \mathrm{e}^{\,\sum \xi_j \, \rho_\ast(F_j)})\,.
\end{displaymath}
If $I$ is in the center of $\mathsf{U}(\mathfrak{g})$, we refer to $D(I)$ as a Laplace-Casimir operator.
\begin{proposition}
$\chi$ is an eigenfunction of every Laplace-Casimir operator $D(I)$.
\end{proposition}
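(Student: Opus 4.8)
The plan is to exploit the explicit realization of $\chi$ as a supertrace together with the fact that any central element $I \in \mathsf{U}(\mathfrak{g})$ commutes, in the represented algebra, with everything that appears in the definition of $\chi$. Concretely, from the formula
\begin{displaymath}
    D(I)\chi(x) = \mathrm{STr}_{\mathcal{A}_V^K}\bigl(\rho_\ast(I)\,\rho(x)\,\mathrm{e}^{\sum_j \xi_j\,\rho_\ast(F_j)}\bigr)
\end{displaymath}
recorded just before the statement, the task is to show that the right-hand side is a scalar multiple of $\chi(x)$ itself, uniformly in $x \in \widetilde{H}$, with the scalar being the (possibly $\wedge\mathfrak{g}_1^\ast$-valued) eigenvalue of $\rho_\ast(I)$.

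First I would invoke the irreducibility of the $\mathfrak{g}$-representation on $\mathcal{A}_V^K$ (Proposition \ref{prop:Howeduality}) together with a Schur-lemma argument adapted to the Lie-superalgebra setting: since $I$ is central in $\mathsf{U}(\mathfrak{g})$, the operator $\rho_\ast(I)$ commutes with all of $\rho_\ast(\mathfrak{g})$, hence with the whole associative algebra it generates. On an irreducible highest-weight module, a degree-preserving even intertwiner of this kind is forced to act as a scalar $c_I(\lambda_N) \in \mathbb{C}$ — one reads off the value by evaluating $\rho_\ast(I)$ on the cyclic highest-weight vector $1 \in \mathcal{A}_V^K$ and using that $\mathfrak{g}^{(-2)}.1 = 0$, so $I.1$ is determined by the Harish-Chandra-type projection of $I$ onto $\mathsf{U}(\mathfrak{h})$. (The one subtlety is that $\rho_\ast(I)$ need not be literally a multiple of the identity on the completion, but it is so on the dense degree-graded submodule $\mathfrak{a}(V)^K$, which is all we need since $\chi$ is defined there and the action of $\widetilde{H}'$ is degree-preserving — this is exactly the reason the remark preceding Lemma \ref{lem:STrAV} points out that one may work with $\mathfrak{a}(V)$ rather than $\mathcal{A}_V$.) Once $\rho_\ast(I) = c_I(\lambda_N)\,\mathrm{Id}$ on $\mathfrak{a}(V)^K$, the displayed supertrace becomes $c_I(\lambda_N)\,\mathrm{STr}_{\mathcal{A}_V^K}\bigl(\rho(x)\,\mathrm{e}^{\sum_j \xi_j\,\rho_\ast(F_j)}\bigr) = c_I(\lambda_N)\,\chi(x)$, which is the asserted eigenfunction property.

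The main obstacle is making the Schur-type step rigorous in the super and infinite-dimensional setting: the module $\mathfrak{a}(V)^K$ is not finite-dimensional, so one cannot quote the classical Schur lemma verbatim, and one must be careful that $\rho_\ast(I)$ — an element of the degree-preserving subalgebra of $\mathrm{End}(\mathfrak{a}(V)^K)$ — genuinely commutes with the raising operators in $\rho_\ast(\mathfrak{g}^{(2)})$ and hence, by cyclicity of the vacuum under $\mathfrak{g}^{(2)}$, is determined by its action on $1$. This is where one leans on the precise structure from Proposition \ref{prop:Howeduality}: every vector of $\mathfrak{a}(V)^K$ lies in $\mathsf{U}(\mathfrak{g}^{(2)}).1$, so $\rho_\ast(I)$ acting as $c_I(\lambda_N)$ on $1$ forces it to act as $c_I(\lambda_N)$ everywhere. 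The remaining ingredients — holomorphicity and trace-class convergence of all operators involved, so that the supertrace and the differential operator $D(I)$ may be interchanged — are already guaranteed by Proposition \ref{holomorphicity} and its extension used in $\S$\ref{sect:4.2}. I would close by noting that the eigenvalue $c_I(\lambda_N)$ is precisely the value of the Harish-Chandra homomorphism of $I$ at the highest weight $\lambda_N$, which is the form in which it will be used in the subsequent derivation of the differential equations for $I(t)$.
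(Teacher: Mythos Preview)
Your proposal is correct and follows essentially the same line as the paper's proof: both show that $\rho_\ast(I)$ acts as a scalar on $\mathfrak{a}(V)^K$ by evaluating on the vacuum vector $1$ (using that $I$ commutes with $\mathfrak{g}$ and that $1$ is a highest-weight vector annihilated by $\mathfrak{g}^{(-2)}$), and then propagate this scalar to the whole module by cyclicity of $1$ under $\mathsf{U}(\mathfrak{g})$, so that the supertrace defining $D(I)\chi$ becomes $\lambda(I)\chi$. The only stylistic difference is that you phrase the first step via the Harish-Chandra projection onto $\mathsf{U}(\mathfrak{h})$, whereas the paper argues that $\langle 1\rangle_\mathbb{C}$ is the unique proper $(\mathfrak{g}^{(-2)}\oplus\mathfrak{g}^{(0)})$-stable subspace and hence must be preserved by the commuting operator $\rho_\ast(I)$; both routes yield the same conclusion.
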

\begin{proof}
Since $I$ lies in the center of $\mathsf{U}(\mathfrak{g})$, the operator $\rho_\ast(I)$ commutes with all operators defined by $\mathsf{U} (\mathfrak{g})$ on $\mathfrak{a}(V)^K$. Now according to Proposition \ref{prop:Howeduality} the subalgebra $\mathfrak{g}^{(-2)} \oplus \mathfrak{g}^{(0)} \subset \mathfrak{g}$ of degree-non-increasing operators stabilizes the vacuum space $\langle 1 \rangle_\mathbb{C} \subset \mathfrak{a}(V)^K$. By the irreducibility of the $\mathfrak{g}$-representation on $\mathfrak{a}(V)^K$ this subalgebra
stabilizes no other proper subspace of $\mathfrak{a}(V)^K$. Therefore, the linear operator $\rho_\ast(I)$ stabilizes $\langle 1 \rangle_\mathbb{C}$ with some eigenvalue $\lambda(I)$. Furthermore, $1 \in \mathbb{C} \subset \mathfrak{a}(V)^K$ is a cyclic vector for the action of $\mathsf {U}(\mathfrak{g})$ on $\mathfrak{a}(V)^K$. Thus $\rho_\ast(I) \equiv \lambda(I)\, \mathrm{Id}_{\mathfrak{a}(V)^K}$ and the desired result follows.
\end{proof}

\subsubsection{Vanishing of the $D(I_\ell)$-eigenvalues}
\label{sect:CasEigenvalues}

Recall now from $\S$\ref{sect:osp-cas} that for every $\ell \in
\mathbb{N}$ we have a Casimir element $I_\ell \in \mathsf{U}
(\mathfrak{osp})$ of degree $2\ell$. Recall also that under the
assumption of equal dimensions $V_0 \simeq V_1$ we introduced $\partial ,
\widetilde{\partial} \in \mathfrak{osp}_1\, $, $C = [\partial, \widetilde{\partial} ] \in \mathfrak{osp}_0 \,$, and $F_\ell \in
\mathsf{U} (\mathfrak{osp})$ such that $I_\ell= [\partial,F_\ell]$ and $[\partial,C] = 0\,$. For the proof of Proposition \ref{lem:D(Il)chi=0} below, we will make use of these objects at the level of $U_0 \simeq U_1\,$.

Consider now any irreducible $\mathfrak{osp}$-representation on a
$\mathbb{Z}_2$-graded vector space $\mathcal{V}$ with the property
that the $\mathcal{V}$-supertrace of $\mathrm{e}^{-tC}$ ($t>0$) exists. Let $\lambda(I_\ell)$ be the scalar value of the Casimir invariant $I_\ell$ in the representation $\mathcal{V}$. Then a short computation using $I_\ell = [\partial , F_\ell]$ and $[\partial,C] = 0$ shows that $\lambda(I_\ell)$ multiplied by $\mathrm{STr}_\mathcal{V}\, \mathrm{e}^{-t C}$ vanishes:
\begin{displaymath}
    \lambda(I_\ell)\, \mathrm{STr}_\mathcal{V}\, \mathrm{e}^{-t C} = \mathrm{STr}_\mathcal{V}\, \mathrm{e}^{-t C}
    I_\ell = \mathrm{STr}_\mathcal{V}\, \mathrm{e}^{-t C}
    [\partial , F_\ell] = \mathrm{STr}_\mathcal{V}\, [\partial ,
    \mathrm{e}^{-t C} F_\ell] = 0 \;,
\end{displaymath}
since the supertrace of any bracket is zero. Thus we are facing a
dichotomy: either we have $\mathrm{STr}_\mathcal{V}\, \mathrm {e}^{
-t C} = 0\,$, or else $\lambda(I_\ell) = 0$ for all $\ell \in
\mathbb{N}\,$. Now it turns out that our representation $\mathfrak{a}
(V)^K$ realizes the latter alternative, which leads to the following
consequence.
\begin{proposition}\label{lem:D(Il)chi=0}
Let $U = U_0 \oplus U_1$ be a $\mathbb{Z}_2$-graded vector space with $U_0 \simeq U_1\,$ and $\chi$ be the character of the super-semigroup representation of $\widetilde H$ which is the integrated form of the irreducible $\mathfrak{g}$-representation on $\mathfrak{a}(V)^K$ for $V = U \otimes \mathbb{C}^N$. Then $D(I_\ell)\chi = 0$ for all $\ell \in \mathbb{N}\,$.
\end{proposition}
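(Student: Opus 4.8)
The plan is to combine the eigenfunction property of $\chi$ proved in the previous proposition with the exactness $I_\ell = [\partial,F_\ell]$ of Lemma \ref{lem:Cas-exact}, read off at the level of the small Lie superalgebra $\mathfrak{g}$. This reading is legitimate precisely because $\dim U_0 = \dim U_1 = n$, so $U_0 \simeq U_1$ and the constructions of $\S$\ref{sect:osp-cas} apply with the vector space $V$ there replaced by $U$; thus $\partial,\widetilde{\partial} \in \mathfrak{g}_1\,$, $C = [\partial,\widetilde{\partial}] = -\sum_{s,j} H_{sj} \in \mathfrak{g}_0\,$, $[\partial,C]=0\,$, and $I_\ell = [\partial,F_\ell]$ in $\mathsf{U}(\mathfrak{g})$. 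By the previous proposition, $\rho_\ast(I_\ell)$ acts on the irreducible module $\mathfrak{a}(V)^K$ as a scalar $\lambda(I_\ell)$, and $D(I_\ell)\chi = \lambda(I_\ell)\chi\,$; hence it suffices to prove $\lambda(I_\ell)=0$.

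First I would pin down the relevant operators and convergence facts. By Corollary \ref{cor:degree}, $C$ is represented on $\mathfrak{a}(V)^K$ by the $\mathbb{Z}$-degree. Each degree-$m$ component of $\mathfrak{a}(V)^K$ is finite-dimensional with dimension growing at most polynomially in $m\,$, so for every $t>0$ the operator $\mathrm{e}^{-tC}$ is of trace class on $\mathfrak{a}(V)^K\,$; the same holds after multiplying on either side by $F_\ell$ (which shifts the degree by a bounded amount) and by $\partial$ (which preserves it, being the holomorphic exterior derivative on $K$-equivariant forms). This is the same kind of estimate already used for Eq.\ (\ref{eq:4.3mrz}) and in Proposition \ref{holomorphicity}, and it makes the supertrace manipulations below valid; in particular the supertrace of a supercommutator of such operators vanishes.

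The core step is then the dichotomy already displayed in the text, applied to $\mathcal{V}=\mathfrak{a}(V)^K\,$: since $\mathrm{e}^{-tC}$ is even and commutes with $\partial\,$,
\[
  \lambda(I_\ell)\,\mathrm{STr}_{\mathfrak{a}(V)^K}\mathrm{e}^{-tC} = \mathrm{STr}_{\mathfrak{a}(V)^K}\!\big(\mathrm{e}^{-tC}[\partial,F_\ell]\big) = \mathrm{STr}_{\mathfrak{a}(V)^K}\!\big[\partial,\mathrm{e}^{-tC}F_\ell\big] = 0 \;.
\]
It remains to show the scalar prefactor is nonzero. Projecting onto $\mathfrak{a}(V)^K$ is Haar averaging over $K\,$, so
\[
  \mathrm{STr}_{\mathfrak{a}(V)^K}\mathrm{e}^{-tC} = \int_K \mathrm{STr}_{\mathfrak{a}(V)}\!\big(\rho(k)\,\mathrm{e}^{-tC}\big)\,dk \;,
\]
the interchange being justified as above. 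Because $\mathfrak{a}(V) = \wedge(V_1^\ast)\otimes\mathrm{S}(V_0^\ast)$ factorizes the degree, the integrand equals $\mathrm{Det}_{V_1^\ast}(\mathrm{Id}-\mathrm{e}^{-t}\rho(k))\cdot\mathrm{Det}_{V_0^\ast}(\mathrm{Id}-\mathrm{e}^{-t}\rho(k))^{-1}\,$; and since $V_s^\ast = U_s^\ast\otimes(\mathbb{C}^N)^\ast$ with $\dim U_0^\ast = \dim U_1^\ast = n$ and $\rho(k)$ acting only on the $(\mathbb{C}^N)^\ast$-factor, both determinants equal $\mathrm{Det}_{(\mathbb{C}^N)^\ast}(\mathrm{Id}_N - \mathrm{e}^{-t}\rho(k))^n\,$. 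Hence the integrand is identically $1\,$, so $\mathrm{STr}_{\mathfrak{a}(V)^K}\mathrm{e}^{-tC} = 1 \ne 0\,$, which forces $\lambda(I_\ell)=0$ and therefore $D(I_\ell)\chi = \lambda(I_\ell)\chi = 0\,$. For $\mathfrak{g} = \mathfrak{osp}(\widetilde{U}\oplus\widetilde{U}^\ast)$ the twisted $\mathbb{Z}_2$-grading of $U$ changes neither the applicability of Lemma \ref{lem:Cas-exact} nor this determinant count.

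The step I expect to be the main obstacle is the careful justification, in this infinite-dimensional setting, that the operators $\mathrm{e}^{-tC}F_\ell$ and $\partial\,\mathrm{e}^{-tC}F_\ell$ are genuinely trace class, so that ``the supertrace of a bracket is zero'' legitimately applies, and that Haar integration commutes with the supertrace. Everything follows from the $\mathbb{Z}$-grading of $\mathfrak{a}(V)^K$ together with the polynomial growth of the graded dimensions --- the very structure exploited throughout $\S$\ref{character=integral} --- but it deserves to be written out. Once that is in place, the remainder is the one-line determinant cancellation above, which is exactly the point where the hypothesis $U_0 \simeq U_1$ enters.
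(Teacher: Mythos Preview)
Your proposal is correct and follows essentially the same route as the paper: reduce to showing $\lambda(I_\ell)=0$ via the dichotomy $\lambda(I_\ell)\,\mathrm{STr}_{\mathfrak{a}(V)^K}\mathrm{e}^{-tC}=0$ coming from $I_\ell=[\partial,F_\ell]$ and $[\partial,C]=0$, then compute $\mathrm{STr}_{\mathfrak{a}(V)^K}\mathrm{e}^{-tC}=1$ by the determinant cancellation forced by $U_0\simeq U_1$. The paper's own proof is terser---it simply cites formula (\ref{eq:4.3mrz}) to write the supertrace as $\int_K \mathrm{Det}^n(\mathrm{Id}_N-\mathrm{e}^{-t}k)/\mathrm{Det}^n(\mathrm{Id}_N-\mathrm{e}^{-t}k)\,dk=1$---whereas you spell out the trace-class justifications that the paper leaves implicit.
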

\begin{remark}
The condition $U_0 \simeq U_1$ is needed in order for the formula
$I_\ell = [ \partial , F_\ell]$ of Lemma \ref{lem:Cas-exact} to be
available.
\end{remark}
\begin{proof}
For any real parameter $t>0$ the supertrace of the operator $\rho( \mathrm{e}^{-t C})$ on $\mathcal{A}_V^K$ certainly exists and is non-zero. In fact, using formula (\ref{eq:4.3mrz}) one computes the value as
\begin{displaymath}
    \mathrm{STr}_{\mathcal{A}_V^K}\,\rho(\mathrm{e}^{-t C}) = \int_K \frac{\mathrm{Det}^n (\mathrm{Id}_N-\mathrm{e}^{-t}k)} {\mathrm{Det}^n (\mathrm{Id}_N - \mathrm{e}^{-t}k)}\, dk = 1 \not= 0 \;.
\end{displaymath}
The dichotomy of $\lambda(I_\ell)\, \mathrm{STr}_{\mathcal{A}_V^K} \, \rho(\mathrm{e}^{-t C}) = 0$ therefore gives $D(I_\ell) \chi = \lambda(I_\ell) \chi = 0\,$.
\end{proof}

\subsection{Derivation of the differential equations}

Here we outline a foundational result which leads to a proof that the differential operators $D_\ell \circ J$, where $J$ is the square root of a certain (super-)Jacobian and $\ell \in \mathbb N$, annihilate $I(t)$. Due primarily to Berezin \cite{B}, this result has been adapted to our context in \cite{HK}.

\subsubsection{Radial operators}

At this stage, another object enters: a space $T^+$ which plays the role of maximal complex torus in $\widetilde{H}$. To introduce it, we recall from the beginning of $\S$\ref{sect:5.1} that we are given a complex semigroup $H^\prime$ inside the complex Lie group $G$ with Lie algebra $\mathfrak{g}_0\,$. In terms of this structure, the space $T^+$ is defined as the pre-image in $\widetilde{H}$ of the intersection of the standard Cartan torus $T \subset G$ with $H^\prime$.

Let now $B$ be a neighborhood (open in $\widetilde{H}$) of a regular point $x \in T^+$. Then if $f :\; B \to \wedge g_1^\ast$ is any radial holomorphic (super-)function, we denote by
\begin{displaymath}
    \mathcal{R} f : \; T^+ \cap B \to \wedge^0 \mathfrak{g}_1^\ast = \mathbb{C}
\end{displaymath}
its restriction to a function (with numerical values) on the toral subset. 
The restriction map $\mathcal{R}$ so defined is known  
to be injective, irrespective of the choice of $B$ \cite{HK}.
As before, for $I$ an element of the universal enveloping algebra $\mathsf{U}(\mathfrak{g})$, we let $D(I)$ denote the associated differential operator. If $I$ lies in the center of $\mathsf{U}( \mathfrak{g})$ then $D(I)$ takes radial holomorphic functions to radial holomorphic functions. In this case, we may use the injectivity of the restriction map to define the radial part $\dot{D}(I)$ by
\begin{displaymath}
    \dot{D}(I) \circ \mathcal{R} = \mathcal{R} \circ D(I) .
\end{displaymath}

We now require an understanding of the correspondence $I \mapsto \dot{D}(I)$ between Casimir invariants and the radial parts of invariant differential operators. While this correspondence can be described in fairly explicit terms and has been the subject of recent research by one of the authors \cite{HK}, 
here we only summarize the final outcome needed for the present paper. For this one defines the meromorphic function
\begin{displaymath}
    J(t) = \frac{\prod_{\alpha\in\Delta_0^+} 2\sinh\frac{\alpha(\ln t)} {2}} {\prod_{\beta\in\Delta_1^+} 2 \sinh\frac{\beta(\ln t)}{2}}\;,
\end{displaymath}
where $\Delta^+ = \Delta_0^+ \cup \Delta_1^+$ is a system of even and odd positive roots (see $\S$\ref{sect:simple-roots}).

%
{}From $\S$\ref{sect:osp-cas} we again recall that for every $\ell \in \mathbb{N}$ we have a Casimir element $I_\ell \in \mathsf{U} (\mathfrak{osp})$ of degree $2\ell$. We also recall the expression (\ref{eq:diff-ops}) for the differential operators $D_\ell$ in terms of the local coordinates $\phi_1, \ldots, \phi_n, \psi_1, \ldots, \psi_n$ we have been using all along.
\begin{theorem}\label{radialparts}
Let the neighborhood $B \subset \widetilde{H}$ of a regular point $x \in T^+$ be such that all points of the intersection $B \cap T^+$ are regular. Then, defining the radial part $\dot{D} (I_\ell)$ of the Laplace-Casimir operator $D(I_\ell)$ as above, one has
\begin{displaymath}
    \dot{D} (I_\ell) = J^{-1} (D_\ell + Q_{\ell-1}) \circ J\,,
\end{displaymath}
where $Q_{\ell-1}$ is a polynomial combination with constant coefficients
of the operators $D_1, \ldots, D_{\ell-1}$ which is of total degree
at most $2\ell - 2$.
\end{theorem}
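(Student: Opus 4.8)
The plan is to reduce the statement to a computation about radial parts of Laplace--Casimir operators on the complex Lie supergroup attached to $\mathfrak{g}$, following Berezin's theory in the form adapted in \cite{HK}. The starting point is the general structure theorem for radial parts: for any Casimir invariant $I$ of degree $d$, the radial part $\dot D(I)$ is a differential operator on $T^+$ of order $\le d$ whose \emph{leading symbol} is computed from the restriction of the symbol of $I$ to the Cartan subalgebra $\mathfrak{h}$, conjugated by the half-density $J$. Concretely, the first step is to recall that the degree-$2\ell$ Casimir $I_\ell$ of $\mathfrak{osp}$ was built in $\S$\ref{sect:osp-cas} from the trace form, so its principal symbol restricted to $\mathfrak{h}^\ast$ is (up to normalization) the $\mathrm{ad}$-invariant polynomial $p_\ell(x) = \mathrm{STr}\big((\mathrm{ad}\, x)^{2\ell}\big)$, which on the standard coordinates evaluates to a constant multiple of $\sum_j\big(\phi_j^{2\ell} - (-1)^\ell \psi_j^{2\ell}\big)$ — exactly the symbol of the operator $D_\ell$ in \eqref{eq:diff-ops}. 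Thus the second step is to invoke the conjugation-by-$J$ normal form from \cite{HK}: for a Casimir of degree $d$, $\dot D(I) = J^{-1}\circ\big(P + (\text{lower order})\big)\circ J$ where $P$ is the constant-coefficient operator with the given symbol. Applied to $I_\ell$ this yields $\dot D(I_\ell) = J^{-1}(D_\ell + R_{\ell-1})\circ J$ where $R_{\ell-1}$ has order $\le 2\ell-2$.

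The remaining work is to identify $R_{\ell-1}$ as a constant-coefficient polynomial in $D_1,\dots,D_{\ell-1}$ rather than an arbitrary lower-order operator. Here I would argue by Harish-Chandra-type reasoning: the map $I\mapsto \dot D(I)$ is an algebra homomorphism from the center $\mathsf{Z}(\mathsf{U}(\mathfrak{g}))$ into the algebra of $J$-conjugated constant-coefficient operators (because the radial part respects composition on radial functions, and $J^{-1}\circ(\,\cdot\,)\circ J$ conjugates it into honest constant-coefficient operators by the flatness of $T^+$ and the Weyl-invariance forced by radiality). By Berezin's description of $\mathsf{Z}(\mathsf{U}(\mathfrak{osp}))$ — or by directly exhibiting the $I_\ell$ as generators of a polynomial subalgebra on which the Harish-Chandra image is the algebra of $W$-invariant constant-coefficient operators — the image of $\mathsf{Z}$ is precisely $\mathbb{C}[D_1, D_2, \dots]$ (completed appropriately, with $D_\ell$ the image of $I_\ell$ modulo lower order). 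Since $\dot D(I_\ell)$ conjugated by $J$ lies in this image and has leading term $D_\ell$, the difference $R_{\ell-1} = J\,\dot D(I_\ell)\,J^{-1} - D_\ell$ must be a polynomial in $D_1,\dots,D_{\ell-1}$ of degree $\le 2\ell-2$, which is the assertion.

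I expect the main obstacle to be the careful bookkeeping in the \emph{super} setting: the radial-part calculus of \cite{HK} must be invoked with the correct signs coming from the $\mathbb{Z}_2$-grading (the odd roots $\Delta_1^+$ appearing in the \emph{denominator} of $J$, the supertrace replacing the trace in the definition of the $I_\ell$ and hence the minus sign in front of the $\psi$-derivatives), and one must check that the ``lower-order'' terms produced by the non-abelian part of $\mathfrak{g}$ genuinely organize into the abelian-looking combination $Q_{\ell-1}(D_1,\dots,D_{\ell-1})$. The cleanest route is to cite the relevant theorem of \cite{HK} in the precise form it is proved there and to spend the body of the proof verifying only (i) that the principal symbol of $I_\ell$ restricts to the symbol of $D_\ell$ on $T^+$, and (ii) that the Harish-Chandra homomorphism identifies the image of the $I_\ell$ with the $D_\ell$ up to lower order; everything else is then formal. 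A secondary point to be careful about is the hypothesis that $B\cap T^+$ consists of regular points, which is exactly what is needed for $J$ to be invertible and for the restriction map $\mathcal{R}$ to be injective, so that $\dot D(I_\ell)$ is well-defined by $\dot D(I_\ell)\circ\mathcal{R} = \mathcal{R}\circ D(I_\ell)$.
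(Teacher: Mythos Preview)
The paper does not actually prove this theorem: immediately after the statement it records in a remark that ``the proof is in \cite{HK}'' and attributes the result to Berezin \cite{B}. So there is no in-paper argument to compare against; the theorem is imported as a black box.

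Your outline is a reasonable sketch of what the Berezin/\cite{HK} argument looks like: compute the principal symbol of $I_\ell$ on the Cartan, invoke the $J$-conjugation normal form for radial parts, and then argue that the lower-order correction lies in the polynomial algebra generated by the $D_k$ via a super Harish--Chandra homomorphism. Two small points are worth flagging. First, a slip: the Casimir $I_\ell$ in \eqref{eq:CasimirElement} is built from the supertrace in the \emph{fundamental} representation of $\mathfrak{osp}$, not the adjoint, so the restricted symbol is $\mathrm{STr}_{U\oplus U^\ast}(H^{2\ell})$ rather than $\mathrm{STr}\,(\mathrm{ad}\,H)^{2\ell}$; this still yields a multiple of $\sum_j\big(\phi_j^{2\ell}-(-1)^\ell\psi_j^{2\ell}\big)$, but the computation is different. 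Second, the step where you pass from ``$R_{\ell-1}$ is some lower-order operator'' to ``$R_{\ell-1}$ is a constant-coefficient polynomial in $D_1,\dots,D_{\ell-1}$'' is the one with real content in the super case, and your appeal to the image of the Harish--Chandra map being exactly $\mathbb{C}[D_1,D_2,\dots]$ is precisely what \cite{HK} has to establish; you are right to identify this as the place requiring the most care.
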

\begin{remark}
While some choice of domain $B$ is necessary to ensure that both $J(t)$ and $J(t)^{-1}$ exist for all $t \in B \cap T^+$, the expression for $\dot{D}(I_\ell)$ does not depend on $B$.
\end{remark}
\begin{remark} The statement of Theorem \ref{radialparts} is the local version of a result due to Berezin \cite{B}. The proof is in \cite{HK}.
\end{remark}

\subsubsection{The differential equations}

In view of the formula for $\dot{D}(I_\ell)$, and knowing that $D(I_\ell) \chi = 0$ for all $\ell$, the following is a key technical step.
\begin{lemma}\label{lem:tech-step}
For all $\ell \in \mathbb{N}$ we have $D_\ell \, J = 0$.
\end{lemma}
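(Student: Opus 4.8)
The plan is to reduce the identity to a one-line fact about determinants, by first rewriting $J$ as a nonzero constant times $\det M$ for an $n\times n$ matrix $M$ whose $(j,k)$-entry depends on $\phi_j$ and $\psi_k$ only through the two combinations $\phi_j+\mathrm{i}\psi_k$ and $\phi_j-\mathrm{i}\psi_k$. I would first pass to the coordinates $x_j:=\phi_j$, $y_j:=\mathrm{i}\psi_j$; since $\partial_{\psi_j}=\mathrm{i}\,\partial_{y_j}$ gives $\partial_{\psi_j}^{2\ell}=(-1)^\ell\partial_{y_j}^{2\ell}$, the operator takes the symmetric form
\[
    D_\ell=\sum_{j=1}^n\frac{\partial^{2\ell}}{\partial x_j^{2\ell}}-\sum_{j=1}^n\frac{\partial^{2\ell}}{\partial y_j^{2\ell}}\,.
\]

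Next I would put $J$ into determinantal form. By $\S$\ref{sect:simple-roots} the even positive roots are $x_j\pm x_k$ and $y_j\pm y_k$ with $j<k$, together with all $2x_j$ in the case $\mathfrak{g}=\mathfrak{osp}(U\oplus U^\ast)$ or all $2y_j$ in the case $\mathfrak{g}=\mathfrak{osp}(\widetilde U\oplus\widetilde U^\ast)$, while the odd positive roots are the $x_j\pm y_k$. Pairing each root with its negative via $(2\sinh\tfrac{a+b}2)(2\sinh\tfrac{a-b}2)=2(\cosh a-\cosh b)$, the powers of $2$ cancel between numerator and denominator of $J$, and in the first case one is left with
\[
    J=\frac{\prod_{j<k}(\cosh x_j-\cosh x_k)\;\prod_{j<k}(\cosh y_j-\cosh y_k)\;\prod_j\sinh x_j}{\prod_{j,k}(\cosh x_j-\cosh y_k)}\,.
\]
The ratio of the two double products is, by Cauchy's determinant identity, equal to $(-1)^{\binom{n}{2}}\det\big((\cosh x_j-\cosh y_k)^{-1}\big)_{j,k}$; absorbing $\sinh x_j$ into the $j$-th row and invoking the elementary identity
\[
    \frac{\sinh x}{\cosh x-\cosh y}=\tfrac12\coth\tfrac{x-y}2+\tfrac12\coth\tfrac{x+y}2
\]
(which follows from $\cosh x-\cosh y=2\sinh\tfrac{x+y}2\sinh\tfrac{x-y}2$ and the addition formula for $\sinh\big(\tfrac{x+y}2+\tfrac{x-y}2\big)$), I obtain $J=c_n\det M$ with $c_n\neq0$ a numerical constant and $M_{jk}=p(x_j-y_k)+q(x_j+y_k)$, $p=q=\tfrac12\coth\tfrac{\,\cdot\,}2$. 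The case $\mathfrak{g}=\mathfrak{osp}(\widetilde U\oplus\widetilde U^\ast)$ is identical except that $\sinh y_k$ is absorbed into the $k$-th column, which again yields the shape $M_{jk}=p(x_j-y_k)+q(x_j+y_k)$, now with $p=\tfrac12\coth\tfrac{\,\cdot\,}2$ and $q=-\tfrac12\coth\tfrac{\,\cdot\,}2$; so the two cases go through together.

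Granting this, the lemma falls out. Set $M'_{jk}:=\partial_{x_j}^{2\ell}M_{jk}=p^{(2\ell)}(x_j-y_k)+q^{(2\ell)}(x_j+y_k)$; since $(-1)^{2\ell}=1$ this is also $\partial_{y_k}^{2\ell}M_{jk}$. In $M$ the variable $x_j$ occurs only in the $j$-th row and $y_k$ only in the $k$-th column, so by multilinearity of the determinant, $\sum_j\partial_{x_j}^{2\ell}\det M$ equals the sum over $j$ of $\det M$ with its $j$-th row replaced by the $j$-th row of $M'$, while $\sum_k\partial_{y_k}^{2\ell}\det M$ equals the sum over $k$ of $\det M$ with its $k$-th column replaced by the $k$-th column of $M'$. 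Both expressions equal $\tfrac{d}{ds}\big|_{s=0}\det(M+sM')$ — the first is its row expansion, the second its column expansion — hence they coincide, so $D_\ell\det M=0$ and therefore $D_\ell J=c_n\,D_\ell\det M=0$.

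The only real work is the bookkeeping in the middle step: tracking exactly which roots occur in each of the two cases, checking that the factors of $2$ cancel, and getting the Cauchy sign right; everything before and after that is formal. One should also note that these are identities of meromorphic functions, so the resulting identity $D_\ell J=0$ holds wherever $J$ is defined, in particular on $T^+$.
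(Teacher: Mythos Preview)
Your proof is correct and follows the same overall strategy as the paper: rewrite $J$ via the product-to-difference identity for $\sinh$, apply Cauchy's determinant formula, and exploit that each matrix entry depends on its variables only through the combinations $x_j\pm y_k$. The paper then expands $\det M$ over permutations and shows that each factor is annihilated by $d_\ell=\partial_\phi^{2\ell}-(-1)^\ell\partial_\psi^{2\ell}$, using the factorization $d_\ell=d_1\cdot(\text{polynomial})$ together with the harmonicity of $\frac{\sinh\phi}{\cosh\phi-\cos\psi}$. Your final step is packaged differently and somewhat more cleanly: from $\partial_{x_j}^{2\ell}M_{jk}=\partial_{y_k}^{2\ell}M_{jk}$ you read off $\sum_j\partial_{x_j}^{2\ell}\det M=\sum_k\partial_{y_k}^{2\ell}\det M$ directly as the row and column expansions of $\tfrac{d}{ds}\big|_{s=0}\det(M+sM')$, bypassing both the permutation sum and the factorization of $d_\ell$ through $d_1$. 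The underlying observation is the same in both arguments, but your formulation makes the cancellation structurally transparent and works uniformly for all $\ell$ in one line.
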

\begin{proof}
One has the relation $\sinh \frac{x+y}{2} \sinh \frac{x-y}{2} =
\frac{1}{2}(\cosh x -\cosh y)$. Hence,
\begin{displaymath}
    J = \frac{\prod_{j<k} 4\,(\cosh \phi_j-\cosh \phi_k)(\cos\psi_j
    -\cos\psi_k)} {\prod_{j,\,k} 2\,(\cosh\phi_j -\cos\psi_k)} \; R
\end{displaymath}
with $R = \prod_{j=1}^n 2\sinh\phi_j$ or $R = \prod_{j=1}^n 2\mathrm{i} \sin\psi_j$ depending on whether $\mathfrak{g} = \mathfrak{osp} (U\oplus U^*)$ or $\mathfrak{g} = \mathfrak{osp} (\widetilde{U} \oplus \widetilde{U}^*)$. By the Cauchy determinant formula
\begin{displaymath}
    \mathrm{Det} \left(\frac{1}{x_j - y_k}\right)_{j,\,k=1,\ldots,\,n} =
    \frac{\prod_{j<k} (x_j-x_k)(y_k-y_j)}{\prod_{j,\,k=1}^n (x_j - y_k)},
\end{displaymath}
we obtain, up to a constant factor,
\begin{displaymath}
    J \propto \mathrm{Det} \left( \frac{\sinh\phi_j}{\cosh\phi_j -
    \cos\psi_k} \right)\;, \quad \text{or} \quad J \propto \mathrm{Det}
    \left(\frac{\mathrm{i}\sin\psi_j}{\cosh\phi_j - \cos\psi_k} \right).
\end{displaymath}
Consider the case of $\mathfrak{g} = \mathfrak{osp}(U \oplus U^\ast)$. Expanding the determinant $J$ as a sum over permutations and applying the differential operator $D_\ell$ to the summands, we then obtain $D_\ell \, J \propto$
\begin{align*}
    &D_\ell \sum_{\sigma\in\mathrm{S}_n} (-1)^{|\sigma|} \prod_{j=1}^n
    \frac{\sinh \phi_j}{\cosh \phi_j -\cos\psi_{\sigma(j)}} =
    \sum_\sigma (-1)^{|\sigma|} D_\ell \prod_{j=1}^n \frac{\sinh \phi_j}
    {\cosh\phi_j - \cos\psi_{\sigma(j)}} = \\
    &\sum_\sigma (-1)^{|\sigma|}  \sum_{k=1}^n \prod_{j\ne k}
    \frac{\sinh\phi_j}{\cosh\phi_j -\cos\psi_{\sigma(j)}}
    \left(\frac{\partial^{2\ell}}{\partial \phi_k^{2\ell}} -(-1)^\ell
    \frac{\partial^{2\ell}}{\partial \psi_{\sigma(k)}^{2\ell}} \right)
    \frac{\sinh \phi_k}{\cosh\phi_k - \cos\psi_{\sigma(k)}} \;.
\end{align*}
Now
\begin{displaymath}
    d_\ell \equiv \frac{\partial^{2\ell}}{\partial \phi^{2\ell}}
    - (-1)^\ell \frac{\partial^{2\ell}}{\partial \psi^{2\ell}} =
    d_1 \sum_{j=0}^{\ell-1} (-1)^{j}\frac{\partial^{2\ell-2}}{\partial
    \phi^{2\ell-2-2j}\partial \psi^{2j}} \;,
\end{displaymath}
and the statement for $\mathfrak{g} = \mathfrak{osp}(U \oplus U^\ast)$ is a consequence of the equation
\begin{displaymath}
    \left( \frac{\partial^2}{\partial \phi^2}+\frac{\partial^2}
    {\partial \psi^2}\right) \frac{\sinh\phi}{\cosh\phi - \cos\psi} = 0 ,
\end{displaymath}
which results from $(\cosh\phi - \cos\psi)^{-1} 2 \sinh\phi = \coth( \frac{\phi + \mathrm{i}\psi}{2}) + \coth(\frac{\phi - \mathrm{i}\psi}{2})$ and the fact that every (anti-)holomorphic function on a domain in $\mathbb{C}$ is harmonic.

The reasoning for the case of $\mathfrak{g} = \mathfrak{osp}(\widetilde{U} \oplus \widetilde{U}^\ast)$ is no different.
\end{proof}
\begin{corollary}\label{cor:DlJchi}
The restriction $\chi_{T^+}$ of the character $\chi$ from $\widetilde{H}$ to $T^+$ satisfies the system of differential equations $D_\ell \, J \, \chi_{T^+} = 0$ for all $\ell \in \mathbb{N}\,$.
\end{corollary}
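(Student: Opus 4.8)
The plan is to combine two facts established in the preceding material: first, that the character $\chi$ is annihilated by every Laplace-Casimir operator $D(I_\ell)$ (Proposition~\ref{lem:D(Il)chi=0}), and second, that the radial part of $D(I_\ell)$ has the explicit triangular form $\dot D(I_\ell) = J^{-1}(D_\ell + Q_{\ell-1})\circ J$ given by Theorem~\ref{radialparts}. The bridge between the two is the commutation relation $\dot D(I)\circ\mathcal R = \mathcal R\circ D(I)$ defining the radial part, together with the injectivity of the restriction map $\mathcal R$.

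First I would apply $\mathcal R$ to the identity $D(I_\ell)\chi = 0$ from Proposition~\ref{lem:D(Il)chi=0}. Since $I_\ell$ is central in $\mathsf U(\mathfrak g)$, the operator $D(I_\ell)$ preserves radiality, so $\mathcal R\bigl(D(I_\ell)\chi\bigr) = \dot D(I_\ell)\bigl(\mathcal R\chi\bigr) = \dot D(I_\ell)\,\chi_{T^+}$. Hence $\dot D(I_\ell)\,\chi_{T^+} = 0$ for all $\ell$. Substituting the formula of Theorem~\ref{radialparts}, this reads
\begin{displaymath}
    J^{-1}\bigl(D_\ell + Q_{\ell-1}\bigr)\,(J\,\chi_{T^+}) = 0 \;,
\end{displaymath}
and multiplying through by $J$ gives $\bigl(D_\ell + Q_{\ell-1}\bigr)(J\,\chi_{T^+}) = 0$ on the regular locus $B\cap T^+$.

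Next I would eliminate the correction term $Q_{\ell-1}$ by induction on $\ell$. For $\ell = 1$ there is no correction ($Q_0 = 0$), so $D_1(J\,\chi_{T^+}) = 0$ directly. Assume $D_k(J\,\chi_{T^+}) = 0$ has been shown for all $k < \ell$. Since $Q_{\ell-1}$ is, by Theorem~\ref{radialparts}, a polynomial combination with constant coefficients of $D_1,\dots,D_{\ell-1}$ with no zeroth-order term, and the $D_k$ commute (they are constant-coefficient differential operators in the coordinates $\phi_j,\psi_j$), every monomial in $Q_{\ell-1}$ can be written with some factor $D_k$ ($k<\ell$) on the right; applying it to $J\,\chi_{T^+}$ gives zero by the inductive hypothesis. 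Therefore $Q_{\ell-1}(J\,\chi_{T^+}) = 0$, and the displayed equation collapses to $D_\ell(J\,\chi_{T^+}) = 0$. Finally, using Lemma~\ref{lem:tech-step} ($D_\ell J = 0$) together with the product structure — expanding $D_\ell(J\,\chi_{T^+})$ by the Leibniz rule for the constant-coefficient operator $D_\ell$ — one isolates $J\,D_\ell\,\chi_{T^+}$; but wait, the asserted conclusion of Corollary~\ref{cor:DlJchi} is precisely $D_\ell\,J\,\chi_{T^+} = 0$, i.e.\ $D_\ell$ applied to the \emph{product} $J\chi_{T^+}$, so in fact the induction above already delivers the statement verbatim and no Leibniz expansion is needed; Lemma~\ref{lem:tech-step} is what will be invoked in the subsequent uniqueness argument rather than here.

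The main obstacle is the inductive elimination of $Q_{\ell-1}$: one must be careful that the claim ``$Q_{\ell-1}$ is a polynomial in $D_1,\dots,D_{\ell-1}$ without constant term'' is exactly what Theorem~\ref{radialparts} provides (it is — the total-degree bound $2\ell-2$ and the combinatorial structure preclude a scalar term), and that the $D_k$ genuinely commute so that a right-hand factor of $D_k$ can always be extracted from each monomial. Both are straightforward given the cited results, so the proof is short; the only real content is lining up the restriction-map identity, Proposition~\ref{lem:D(Il)chi=0}, and Theorem~\ref{radialparts} in the correct order and running the induction cleanly.
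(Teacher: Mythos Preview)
Your overall architecture is correct and matches the paper: pass from $D(I_\ell)\chi = 0$ to $\dot D(I_\ell)\chi_{T^+}=0$ via the restriction map, invoke Theorem~\ref{radialparts} to get $(D_\ell + Q_{\ell-1})(J\chi_{T^+})=0$, and then strip off $Q_{\ell-1}$ by induction on $\ell$. The induction step itself is fine once the constant term of $Q_{\ell-1}$ is known to vanish.

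The gap is precisely there. You assert that Theorem~\ref{radialparts} guarantees $Q_{\ell-1}$ has no zeroth-order term, and you justify this by saying ``the total-degree bound $2\ell-2$ and the combinatorial structure preclude a scalar term.'' That is not what the theorem says: a polynomial combination of $D_1,\dots,D_{\ell-1}$ may certainly have a constant term, and the bound $2\ell-2$ is an \emph{upper} bound on total degree, so it excludes nothing at degree zero. In particular your base case $Q_0=0$ is unjustified --- $Q_0$ is a polynomial in the empty family, hence a priori just some constant $c_0$.

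The paper closes this gap with a short extra argument that \emph{does} use Lemma~\ref{lem:tech-step}, contrary to your remark that the lemma plays no role here. Namely, one applies $\dot D(I_\ell)$ to the constant function $1$: since $D(I_\ell)1=0$ (the operator is built from left-invariant vector fields and kills constants), restriction gives $0 = J\,\dot D(I_\ell)1 = (D_\ell + Q_{\ell-1})J$. By Lemma~\ref{lem:tech-step} every $D_k J=0$, so the only surviving contribution from $Q_{\ell-1}$ acting on $J$ is its constant term $c_{\ell-1}$, yielding $c_{\ell-1}J=0$ and hence $c_{\ell-1}=0$. With this established for every $\ell$, your induction goes through verbatim.
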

\begin{proof}
Since $D(I_\ell) \chi = 0$ and hence by restriction $\dot{D}(I_\ell) \chi_{T^+} = 0$, it follows from Theorem \ref{radialparts} that $J^{-1} (D_\ell + Q_{\ell-1}) J \chi_{T^+} = 0$ for every $\ell\in \mathbb{N}\,$.
Now by applying the operator $\dot{D}(I_\ell)$ to the constant function $1$ and using Lemma \ref{lem:tech-step}, we obtain $0 = J \dot{D}(I_\ell) 1 = D_\ell \, J + Q_{\ell-1}\, J = c_{\ell-1}\,J$, where $c_{\ell-1}$ is the constant term of the differential operator $Q_{\ell-1}\,$, and from this we conclude that $c_{\ell - 1} = 0$ for all $\ell\in \mathbb{N}\,$. It then follows by induction on $\ell$ that $D_\ell \, J \, \chi  = 0$ for all $\ell \in \mathbb{N}\,$.
\end{proof}

The methods of this section can also be used to derive differential
equations for the characters of a certain class of irreducible
representations of $\mathfrak{gl}(U) \simeq \mathfrak{g}^{(0)}$.
Define
\begin{displaymath}
    J_0 = \frac{\prod_{j < k} 4 \sinh\frac{\mathrm{i}(\psi_j -
    \psi_k)}{2} \, \sinh\frac{\phi_j - \phi_k}{2}}{\prod_{j,\,k}
    2 \sinh\frac{\phi_j - \mathrm{i} \psi_k}{2}} \;.
\end{displaymath}
Here, $\{ \mathrm{i}(\psi_j - \psi_k), \phi_j - \phi_k \mid j < k \}$
and $\{ \phi_j - \mathrm{i} \psi_k \}$ are the sets of even and odd
positive roots of $\mathfrak{g}^{(0)}$. The following statement is
Corollary 4.12 of \cite{CFZ} adapted to the present context and
notation. The idea of the proof is the same as that of Proposition
\ref{lem:D(Il)chi=0} in conjunction with Corollary \ref{cor:DlJchi}.
\begin{corollary}\label{cor:4.4}
Let $\gamma$ be the (restricted) character of an irreducible representation of the Lie supergroup $(\mathfrak{gl}(U),\mathrm{GL}(U_0) \times \mathrm{GL}(U_1))$ on a finite-dimensional $\mathbb{Z}_2$-graded vector space $V = V_0 \oplus V_1\,$. If $U_0 \simeq U_1$ but $\mathrm{dim}(V_0) \not= \mathrm{dim}(V_1)$, then $D_\ell J_0\,\gamma = 0$ for all $\ell \in \mathbb{N}\,$.
\end{corollary}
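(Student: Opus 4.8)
The plan is to mimic, at the level of the smaller Lie superalgebra $\mathfrak{g}^{(0)} \simeq \mathfrak{gl}(U)$, the chain of arguments that led from Lemma \ref{lem:Cas-exact} to Proposition \ref{lem:D(Il)chi=0} and thence, via Theorem \ref{radialparts} and Lemma \ref{lem:tech-step}, to Corollary \ref{cor:DlJchi}. First I would recall that $\mathfrak{gl}(U)$ for a $\mathbb{Z}_2$-graded space $U = U_0 \oplus U_1$ with $U_0 \simeq U_1$ carries the analogue of the exactness structure: there are odd elements $\partial_0, \widetilde{\partial}_0 \in \mathfrak{gl}(U)_1$ (coming from a fixed isomorphism $U_0 \simeq U_1$, exactly as $\partial, \widetilde{\partial}$ were built from $V_0 \simeq V_1$ in $\S$\ref{sect:osp-cas}), with $C_0 := [\partial_0, \widetilde{\partial}_0]$ in the Cartan subalgebra, $\partial_0^2 = 0$, hence $[\partial_0, C_0] = 0$, and Casimir elements $I_\ell^{(0)} \in \mathsf{U}(\mathfrak{gl}(U))$ of degree $2\ell$ expressible as $I_\ell^{(0)} = [\partial_0, F_\ell^{(0)}]$ by the same supertrace construction and the same computation as in Lemma \ref{lem:Cas-exact}. (Here $\mathfrak{gl}(U) \simeq \mathfrak{g}^{(0)}$ is to be regarded as sitting inside $\mathfrak{q}_2(W_1)$ via $\tau^{-1}$, so that $C_0$ is again represented by a degree operator on the relevant module.)

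Next I would run the supertrace dichotomy. Let $V = V_0 \oplus V_1$ be the finite-dimensional module carrying the irreducible $\mathfrak{gl}(U)$-representation, with $\gamma$ the restricted character of the associated $(\mathfrak{gl}(U), \mathrm{GL}(U_0) \times \mathrm{GL}(U_1))$-representation on the appropriate super-semigroup or supergroup. As in $\S$\ref{sect:CasEigenvalues}, for the Casimir scalar $\lambda(I_\ell^{(0)})$ one has
\begin{displaymath}
    \lambda(I_\ell^{(0)})\, \mathrm{STr}_V\, \mathrm{e}^{-t C_0} = \mathrm{STr}_V\, \mathrm{e}^{-t C_0} [\partial_0, F_\ell^{(0)}] = \mathrm{STr}_V\, [\partial_0, \mathrm{e}^{-t C_0} F_\ell^{(0)}] = 0 \;,
\end{displaymath}
using $[\partial_0, C_0] = 0$ and the vanishing of the supertrace of a bracket. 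So either $\lambda(I_\ell^{(0)}) = 0$ for all $\ell$, or $\mathrm{STr}_V\, \mathrm{e}^{-t C_0} = 0$. Now $C_0$ acts as the degree operator on $V$ up to a shift, so $\mathrm{STr}_V\, \mathrm{e}^{-t C_0}$ is, up to a nonzero factor, $\mathrm{e}^{-t \cdot \mathrm{const}} (\dim V_0 - \dim V_1) \cdot (\text{positive stuff})$ — more precisely it is a nonzero multiple of a polynomial in $\mathrm{e}^{-t}$ whose value encodes $\dim V_0 - \dim V_1$; the hypothesis $\dim V_0 \neq \dim V_1$ is exactly what forces $\mathrm{STr}_V\, \mathrm{e}^{-t C_0} \neq 0$, so we land in the first alternative, $\lambda(I_\ell^{(0)}) = 0$ for all $\ell \in \mathbb{N}$. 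By the analogue of the argument in the two propositions of $\S$\ref{sect:5.1} (irreducibility plus the cyclic vacuum vector), this gives $D(I_\ell^{(0)}) \gamma = 0$ for all $\ell$.

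Finally I would invoke the $\mathfrak{gl}(U)$-version of Theorem \ref{radialparts}: the radial part of $D(I_\ell^{(0)})$ is $J_0^{-1} (D_\ell + Q_{\ell-1}) \circ J_0$, with $J_0$ as defined just before the statement (the ratio of $2\sinh$ over the even and odd positive roots of $\mathfrak{g}^{(0)}$ listed in $\S$\ref{sect:simple-roots}), and $Q_{\ell-1}$ a constant-coefficient polynomial in $D_1, \ldots, D_{\ell-1}$ of degree $\le 2\ell-2$. One checks $D_\ell J_0 = 0$ exactly as in Lemma \ref{lem:tech-step}: using $\sinh\frac{x+y}{2}\sinh\frac{x-y}{2} = \frac12(\cosh x - \cosh y)$ and the Cauchy determinant formula, $J_0$ is (up to a constant and an overall factor) a determinant $\mathrm{Det}\big(\frac{1}{\cosh\frac{\phi_j - \mathrm{i}\psi_k}{2}}\big)$-type object whose entries, expanded over permutations, are annihilated by $d_\ell = \partial_\phi^{2\ell} - (-1)^\ell \partial_\psi^{2\ell}$ because $(\partial_\phi^2 + \partial_\psi^2)$ kills the single-variable building block (it is (anti-)holomorphic in $\phi \pm \mathrm{i}\psi$, hence harmonic), combined with the factorization $d_\ell = d_1 \sum_{j=0}^{\ell-1}(-1)^j \partial_\phi^{2\ell-2-2j}\partial_\psi^{2j}$. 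Then the induction of Corollary \ref{cor:DlJchi} goes through verbatim — applying $\dot D(I_\ell^{(0)})$ to the constant $1$ and using $D_\ell J_0 = 0$ forces the constant terms $c_{\ell-1}$ of $Q_{\ell-1}$ to vanish, and one concludes $D_\ell J_0 \gamma = 0$ for all $\ell$. The main obstacle, I expect, is not any single computation but making sure all the structural inputs borrowed from the $\mathfrak{osp}$ case — the exactness $I_\ell^{(0)} = [\partial_0, F_\ell^{(0)}]$, the identification of $C_0$ with a degree operator, and especially the local radial-part theorem for $\mathfrak{gl}(U)$ in the form needed here — are genuinely available in the references \cite{B, CFZ, HK} with the sign and parity conventions of $\mathfrak{g}^{(0)}$; modulo that bookkeeping, the proof is the same as the one already carried out for $I(t)$.
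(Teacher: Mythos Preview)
Your proposal is correct and is exactly the approach the paper indicates: it defers to \cite{CFZ} (their Corollary 4.12) and says ``the idea of the proof is the same as that of Proposition \ref{lem:D(Il)chi=0} in conjunction with Corollary \ref{cor:DlJchi},'' which is precisely the chain you outline. One small sharpening: in $\mathfrak{gl}(U)$ the element $C_0 = [\partial_0,\widetilde{\partial}_0]$ is actually $-\mathrm{Id}_U$, hence central, so on an irreducible $V$ it acts by a scalar and $\mathrm{STr}_V\,\mathrm{e}^{-tC_0} = \mathrm{e}^{-tc}(\dim V_0 - \dim V_1)$ directly --- this makes the role of the hypothesis $\dim V_0 \neq \dim V_1$ even cleaner than your ``degree operator up to a shift'' description suggests.
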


\subsection{Global $G_\mathbb R$-invariance and the Weyl group}
\label{sect:WeylGroup}

Recall that $\chi$ is only invariant by the local action of the supergroup $(G,\mathfrak{g})$ on $\widetilde{H}$. However, there exists a real form $G_\mathbb{R}$ which acts globally on $\widetilde{H}$ by conjugation and therefore $\chi$ is invariant by this action.

In order to identify these real symmetry groups $G_\mathbb{R}$ in the
two cases at hand, we first observe that the good real group acting in
the spinor-oscillator representation $\mathcal{A}_V$ is
\begin{displaymath}
    \mathrm{Spin}(W_{1,\mathbb{R}}) \times_{\mathbb{Z}_2}
    \mathrm{Mp}(W_{0,\mathbb{R}}) =: G^\prime \;,
\end{displaymath}
which contains $K = \mathrm{O}_N$ and $K = \mathrm{USp}_N$ as
subgroups. Since we are studying the character $\chi$ of the
$\mathfrak{g}$-representation on the subspace $\mathcal{A}_V^K$ of
$K$-invariants, we now seek the subgroup $G_\mathbb{R} \subset
G^\prime$ which centralizes $K$; this means that we are asking the
exponentiated version of a question which was answered at the Lie
algebra level in $\S$\ref{sect:2.7}. Here, restricting the group
$G^\prime$ to the centralizer of $K$ we find
\begin{displaymath}
    G_\mathbb{R} = \left\{ \begin{array}{ll} \mathrm{Spin}( (U_1^{
    \vphantom{\ast}}\oplus U_1^\ast)_\mathbb{R})\times_{\mathbb{Z}_2}
    \mathrm{Mp}( (U_0^{\vphantom{\ast}} \oplus U_0^\ast)_\mathbb{R})
    &\quad K = \mathrm{O}_N \;, \\ \mathrm{USp}(U_1^{\vphantom{\ast}}
    \oplus U_1^\ast) \times \mathrm{SO}^\ast(U_0^{\vphantom{\ast}}
    \oplus U_0^\ast) &\quad K = \mathrm{USp}_N\;.\end{array}\right.
\end{displaymath}
We observe that $G_\mathbb{R}$ for the case of $K = \mathrm{O}_N$ is
just the lower-dimensional copy of $G^\prime$ which corresponds to
$U_s$ taking the role of $V_s\,$. We also see immediately that the
Lie algebras $\mathrm{Lie}(G_\mathbb{R})$ coincide with the real
forms described in Propositions \ref{prop:2.4} and \ref{prop:2.5}.

Since $\chi$ is invariant under the $G_\mathbb{R}$-action by conjugation, its restriction to a real toral semigroup in $T^+$ is invariant under the action of the Weyl group $W$ defined by $G_\mathbb{R}\,$. Since $\chi$ is holomorphic, its restriction to the complexification $T^+$ is likewise $W$-invariant. Now $G_\mathbb{R}$ decomposes as a direct product of two factors and so $W$ also decomposes in this way. For both cases ($K =
\mathrm{O}_N$, $\mathrm{USp}_N$) the second factor of the Weyl group
$W$ is just the permutation group $\mathrm{S}_n\,$. As a matter of fact, conjugation of a diagonal element $t_0 \in M_\mathrm{Sp}$ or $t_0 \in M_\mathrm{SO}$ by $g \in \mathrm{Mp}((U_0^{\vphantom{\ast}} \oplus U_0^\ast)_\mathbb{R})$ or $g \in \mathrm{SO}^\ast( U_0^{ \vphantom{\ast}} \oplus U_0^\ast)$ can return another diagonal element only by permutation of the eigenvalues $\mathrm{e}^{\phi_1} , \ldots, \mathrm{e}^{\phi_n}$ of $t_0\,$. (No inversion $\mathrm{e}^{\phi_j} \to \mathrm{e}^{- \phi_j}$ is possible, as this would mean transgressing the oscillator semigroup.) This factor $\mathrm{S}_n$ of $W$ will play no important role in the following, as the expressions we will encounter are automatically invariant under such permutations.

The first factors of $W$ are of greater significance. For the two
cases of $K = \mathrm{O}_N$ and $K = \mathrm{USp}_N$ these are the
Weyl groups $W_{\mathrm{SO}_{2n}}$ and $W_{\mathrm{Sp}_{2n}}$
respectively. An explicit description of these groups is as follows.
Let $\{ e_1 , \ldots , e_n \}$ be an orthonormal basis of $U$ and
decompose $U \oplus U^*$ into a direct sum of 2-planes,
\begin{displaymath}
    U \oplus U^* = P_1 \oplus \ldots \oplus P_n \;,
\end{displaymath}
where $P_j$ is spanned by the vector $e_j$ and the linear form $c e_j
= \langle e_j \, , \cdot \rangle$ ($j = 1, \ldots,n$). In both cases
at hand, i.e., for the symmetric form $S$ as well the alternating
form $A$, this is an orthogonal decomposition. The real torus under
consideration is parameterized by $(\mathrm{e}^{\mathrm{i}\psi_1} ,
\ldots , \mathrm{e}^{\mathrm{i}\psi_n}) \in (\mathrm{U}_1)^n$ acting
by $\mathrm{e}^{\mathrm{i}\psi_j} . (e_j) = \mathrm{e}^{\mathrm{i}
\psi_j} \, e_j$ and $\mathrm{e}^{\mathrm{i} \psi_j}. (c e_j) =
\mathrm{e}^{-\mathrm{i}\psi_j} ce_j\,$.

The Weyl group $W_{\mathrm{Sp}}$ is generated by the permutations of
these planes and the involutions which are defined by conjugation by
the mapping that sends $e_j \mapsto c e_j$ and $c e_j \mapsto -
e_j\,$. The Weyl group $W_\mathrm{SO}$ is generated by the
permutations together with the involutions which are induced by the
mappings that simply exchange $e_j$ with $c e_j\,$. Since we are in
the special orthogonal group and the determinant for a single
exchange $e_j \leftrightarrow c e_j$ is $-1$, the number of
involutions in any word in $W_\mathrm{SO}$ has to be even.

In summary, the $W$-action on our standard bases of linear functions,
$\{ \mathrm{i}\psi_j\}$ and $\{\phi_j \}$, is given by the respective
permutations together with the action of the involutions defined by
sign change, $\mathrm{i} \psi_j \mapsto - \mathrm{i} \psi_j\,$. In
the sequel, the Weyl group action will be understood to be either
this standard action or alternatively, depending on the context, the
corresponding action on the exponentiated functions $\{ \mathrm{e}^{
\mathrm{i} \psi_j} \}$ and $\{ \mathrm{e}^{\phi_j} \}$. As a final
remark, let us note that the Weyl-group symmetries of the function
$\chi_{T^+}$ can also be read off directly from the explicit expression (\ref{eq:4.3mrz}). In particular, the absence of reflections $\phi_j \to -\phi_j$ is clear from the conditions $\mathfrak{Re}\, \phi_j > 0\,$.

\subsection{Formula for $\chi_{T^+}$}
\label{unicity theorem}

Recall that the main goal of this paper is to compute the restriction $\chi_{T^+}$ to $T^+$ of the character $\chi$ which is defined on $\widetilde{H}$ and plays the role of a character of the $\widetilde{H}$-representation on the space of invariants $\mathcal{A}_V^K$ in the spinor-oscillator module. Here
$\widetilde{H}$ is the $2:1$ cover of an open semigroup in the complex
Lie group $G$ of the Howe partner supergroup of $K$. From now on we
will only deal with the restriction of this numerical part and
therefore we simplify notation by denoting it by $\chi_{T^+} \equiv \chi$.

We have restricted ourselves to the cases where $K$ is either
$\mathrm{O}_N$ or $\mathrm{USp}_N\,$. The representation on
$\mathfrak{a}(V)^K$ is defined at the infinitesimal level on the full
complex Lie superalgebra $\mathfrak{g}$ which is the Howe partner of $K$ in the canonical realization of $\mathfrak{osp}$ in the Clifford-Weyl algebra of $V \oplus V^\ast$. It has been shown that $\chi : \, T^+ \to \mathbb{C}$ satisfies the differential equations $D_\ell\, J\, \chi = 0\,$. We now recall that the nonzero weights of the Fourier expansion of $\chi$ are constrained to a certain region and show that $\chi$ is the unique holomorphic function satisfying both the weight constraints and the differential equations.

\subsubsection{Uniqueness}

Recall that $\Gamma_\lambda$ denotes the set of weights of the
$\mathfrak{g}$-representation on $\mathfrak{a}(V)^K$. From Corollary
\ref{cor:weights} we know that the weights $\gamma = \sum_{j = 1}^n
(\mathrm{i} m_j \psi_j - n_j \phi_j) \in \Gamma_\lambda$ satisfy the
weight constraints $- \frac{N}{2} \le m_j \le \frac{N}{2} \le n_j\,$.
The highest weight is $\lambda = \frac{N}{2} \sum (\mathrm{i} \psi_j
- \phi_j)$. By the definition of the torus $T^+$ the weights $\gamma \in
\Gamma_\lambda$ are analytically integrable and we now view
$\mathrm{e}^\gamma$ as a function on $T^+$.
\begin{theorem}\label{unicity}
The character $\chi : \, T^+ \to \mathbb{C}$ is annihilated by all
differential operators $D_\ell \circ J$ for $\ell \in \mathbb{N}\,$,
and it has a convergent expansion $\chi = \sum B_\gamma \,
\mathrm{e}^{\gamma}$ where the sum runs over weights $\gamma =
\sum_{j = 1}^n (\mathrm{i} m_j \psi_j - n_j \phi_j)$ satisfying the
constraints $- \frac{N}{2} \le m_j \le \frac{N}{2} \le n_j\,$. For
the case of $K = \mathrm{USp}_N$ it is the unique $W$-invariant function on $T^+$ with these two properties and $B_\lambda = 1$. For $K = \mathrm{O}_N$ it is the unique $W$-invariant function on $T^+$ with these two properties and $B_\lambda = 1\,$, $B_{\lambda - \mathrm{i}N\psi_n} =
0\,$.
\end{theorem}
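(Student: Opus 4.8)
The plan is to establish the uniqueness statement by a reduction to the already-developed structure theory of the characters, combined with a dimension count on the space of candidate solutions. First I would set up the relevant spaces: let $\mathcal{S}$ denote the space of $W$-invariant holomorphic functions on $T^+$ whose Fourier expansion $f = \sum C_\gamma\, \mathrm{e}^\gamma$ is supported on the weight set $\Gamma = \{\gamma = \sum_j(\mathrm{i} m_j \psi_j - n_j \phi_j) \mid -\tfrac{N}{2} \le m_j \le \tfrac{N}{2} \le n_j\}$ and which are annihilated by all the operators $D_\ell \circ J$ for $\ell \in \mathbb{N}$. From Corollary \ref{cor:DlJchi}, Corollary \ref{cor:weights}, and the $W$-invariance established in $\S$\ref{sect:WeylGroup}, we know $\chi \in \mathcal{S}$ with $B_\lambda = 1$. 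The claim is that $\dim \mathcal{S} = 1$ in the $\mathrm{USp}_N$ case (after normalizing the highest-weight coefficient), and $\dim \mathcal{S} = 2$ in the $\mathrm{O}_N$ case, so that the two normalizations $B_\lambda = 1$, $B_{\lambda - \mathrm{i} N \psi_n} = 0$ pin down $\chi$ uniquely.

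The heart of the argument is the analysis of the differential equations $D_\ell \, J \, f = 0$. The key observation is that $Jf$ is itself a (formal) Fourier series, and the condition that $D_\ell$ annihilates it for all $\ell$ — where $D_\ell = \sum_j \partial_{\phi_j}^{2\ell} - (-1)^\ell \sum_j \partial_{\psi_j}^{2\ell}$ acts on $\mathrm{e}^\gamma$ by the scalar $\sum_j (n_j^{2\ell} - (-1)^\ell(\mathrm{i} m_j)^{2\ell}) = \sum_j(n_j^{2\ell} - m_j^{2\ell})$ — forces every weight $\mu = \sum_j(\mathrm{i}\mu_j^\psi \psi_j - \mu_j^\phi \phi_j)$ appearing in $Jf$ to satisfy $\sum_j (\mu_j^\phi)^{2\ell} = \sum_j (\mu_j^\psi)^{2\ell}$ for all $\ell$, i.e.\ the multiset $\{(\mu_j^\phi)^2\}$ equals the multiset $\{(\mu_j^\psi)^2\}$. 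Since $J$ itself is a ratio of sinh-products with a specific leading behavior, multiplication by $J$ shifts and spreads the support of $f$; I would combine the explicit determinantal formula for $J$ from the proof of Lemma \ref{lem:tech-step} (the Cauchy-determinant expression) with the weight constraints on $\Gamma$ to show that the above "power-sum matching" condition, together with $W$-invariance, leaves only a finite-dimensional space of possibilities, and then identify that dimension as $1$ (resp.\ $2$).

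The cleanest route to the dimension count is to show that any $f \in \mathcal{S}$ with $B_\lambda = 1$, when multiplied by $J$, must be a $W$-anti-invariant combination of exponentials $\mathrm{e}^{w(\lambda + \rho)}$ over $w$ in a suitable quotient of the Weyl group, exactly the numerator of a Weyl-type character formula; the support restriction on $\Gamma$ and the power-sum matching then rule out all other anti-invariant exponential sums. For $\mathrm{USp}_N$ the highest-weight normalization suffices because the Weyl orbit structure of $\lambda_N$ has trivial or controlled isotropy that leaves no further freedom; for $\mathrm{O}_N$ the extra two-dimensionality comes from the disconnectedness feature already flagged in the Remark after Proposition \ref{prop:Howeduality} (the group $\mathrm{SO}_N$ has extra invariants), which manifests as one additional admissible anti-invariant exponential sum, killed precisely by the condition $B_{\lambda - \mathrm{i} N\psi_n} = 0$. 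Finally, I would verify that the explicit right-hand side of (\ref{eq:Weylagain}) lies in $\mathcal{S}$ and has the correct normalization(s), using Lemma \ref{lem:tech-step} (so that $D_\ell \circ J$ annihilates it, since $J$ times the Weyl formula is a finite anti-invariant exponential sum built from $w(\lambda_N)$) together with a direct check of the leading coefficient; uniqueness then forces $\chi$ to equal it.

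The main obstacle I anticipate is controlling precisely which exponential sums survive the combination of (a) the support constraint $\gamma \in \Gamma$ after dividing the anti-invariant numerator by $J$, and (b) the $W$-invariance — equivalently, proving that the candidate solution space is \emph{exactly} one- (resp.\ two-) dimensional rather than merely finite-dimensional. This is essentially a careful bookkeeping argument on Weyl orbits and the geometry of the weight polytope, and getting the boundary cases right (especially the orbit of $\lambda_N$ under $W_\lambda$ and the singular hyperplanes where $J$ vanishes) is where the real work lies; the differential-equation input itself, by contrast, reduces to the transparent power-sum matching once the determinantal form of $J$ is in hand.
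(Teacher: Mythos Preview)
Your setup and the power-sum matching observation are correct, but the route you propose diverges from the paper's and the obstacle you flag at the end is real and unresolved. You aim to show directly that $Jf$ must be a finite $W$-anti-invariant exponential sum over the orbit of $\lambda + \rho$; this is hard because $Jf$ is \emph{a priori} an infinite series, and the power-sum condition $\{(\mu_j^\phi)^2\} = \{(\mu_j^\psi)^2\}$ alone does not cut the support down to a single Weyl orbit without substantial extra work that you do not supply.

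The paper avoids this entirely by a triangular recursion. Expand $J = \mathrm{e}^\delta \sum_{b\ge 0} A_b\, \mathrm{e}^{-b\sigma}$ in the simple roots (with $A_0=1$), so that $J\chi = \sum_{\tilde\gamma}\big(\sum_b A_b B_{\tilde\gamma - b\sigma}\big)\mathrm{e}^{\delta+\tilde\gamma}$. Applying $D_\ell$ gives $E(\ell,\tilde\gamma)\sum_b A_b B_{\tilde\gamma - b\sigma}=0$. The decisive step you are missing is Lemma~\ref{vanishing eigenvalues}: after computing $\delta$ explicitly in each case, one checks that for $\gamma\in\Gamma_\lambda$ (i.e.\ satisfying the weight constraints) the condition $E(\ell,\gamma)=0$ for all $\ell$ forces $\gamma=\lambda$ (for $\mathrm{USp}_N$) or $\gamma\in\{\lambda,\lambda-\mathrm{i}N\psi_n\}$ (for $\mathrm{O}_N$). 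Restricting attention to $\tilde\gamma\in\Gamma_\lambda$ then yields, for every $\gamma\ne\lambda$ in $\Gamma_\lambda$, the relation $B_\gamma = -\sum_{b\ne 0} A_b B_{\gamma - b\sigma}$, which determines all $B_\gamma$ recursively from $B_\lambda$ (and $B_{\lambda-\mathrm{i}N\psi_n}$ in the $\mathrm{O}_N$ case). This is where the dimension count ($1$ resp.\ $2$) actually comes from; no global characterization of $Jf$ is needed.
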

\begin{remark}
To verify the property $B_{\lambda - \mathrm{i}N\psi_n} = 0$ which
holds for the case of $K = \mathrm{O}_N\,$, look at the right-hand
side of the formula of Corollary \ref{cor:weightexpansion}: in order
to generate a term $\mathrm{e}^\gamma = \mathrm{e}^{\lambda -
\mathrm{i}N\psi_n}$ in the weight expansion, you must pick the term
$\mathrm{e}^{- \mathrm{i} N\psi_n}$ in the expansion of the
determinant for $j = n$ in the numerator; but the latter term depends
on $k$ as $\mathrm{Det}(-k)$ which vanishes upon taking the Haar
average for $K = \mathrm{O}_N \,$. By $W$-invariance the property
$B_{\lambda - \mathrm{i}N\psi_n} = 0$ is equivalent to $B_{\lambda -
\mathrm{i} N\psi_j} = 0$ for all $j\,$.
\end{remark}
In view of this Remark and Corollaries \ref{cor:weights} and
\ref{cor:DlJchi}, it is only the uniqueness statement of Theorem
\ref{unicity} that remains to be proved here. This requires a bit of
preparation, in particular to appropriately formulate the condition
$D_\ell\, J \chi = 0\,$. For that we develop $J \chi$ in a series $J
\chi = \sum_\tau a_\tau f_\tau$ where the $f_\tau$ are
$D_\ell$-eigenfunctions for \emph{every} $\ell \in \mathbb{N}\,$.

The first step is to determine an appropriate expansion for $J$.
Recall that
\begin{displaymath}
    J = \frac{\prod_{\alpha \in \Delta ^+_0} (\mathrm{e}^{
    \frac{\alpha}{2}}- \mathrm{e}^{- \frac{\alpha}{2}})}
    {\prod_{\beta \in \Delta ^+_1} (\mathrm{e}^{\frac{\beta}{2}}
    - \mathrm{e}^{-\frac{\beta}{2}})} \;.
\end{displaymath}
Given a factor in the denominator of this representation, we wish to
factor out, e.g., $\mathrm{e}^{-\frac{ \beta}{2}}$ to obtain a term
$(1 - \mathrm{e}^{-\beta})^{-1}$ which we will attempt to develop in
a geometric series. In order for this to converge uniformly on
compact subsets of $T^+$ it is necessary and sufficient for
$\mathfrak{Re}\, \beta$ to be positive on $\mathfrak{t} = \mathrm{Lie}(T^+)$. This of course depends on the root $\beta\,$.
Fortunately, the sets of odd positive roots for our two cases of $K=
\mathrm{O}_N$ and $K= \mathrm{USp}_N$ are the same (see
$\S$\ref{sect:simple-roots}):
\begin{displaymath}
    \Delta_1^+ =\{\phi_j\pm\mathrm{i}\psi_k \mid j\,,k=1,\ldots,n\}\;.
\end{displaymath}
So indeed, if we factor out $\mathrm{e}^{-\frac{\beta}{2}}$ from each
term in the denominator and do the same in the numerator, we obtain
the expression
\begin{displaymath}
    J = \mathrm{e}^{\delta}\, \frac{\prod_{\alpha \in \Delta^+_0}
    (1 - \mathrm{e}^{-\alpha})} {\prod_{\beta \in \Delta ^+_1}
    {(1 - \mathrm{e}^{-\beta})}} \;,
\end{displaymath}
and it is possible to expand each term of the denominator in a
geometric series. Here
\begin{displaymath}
    \delta = \frac{1}{2}\sum_{\alpha \in \Delta ^+_0}\alpha
    - \frac{1}{2} \sum_{\beta \in \Delta^+_1} \beta
\end{displaymath}
is half the \emph{graded sum} of the positive roots.

Now let $\{\sigma_1, \ldots , \sigma _r\}$ be a basis of simple
positive roots (cf.\ $\S$\ref{sect:simple-roots}) and expand the
terms $(1- \mathrm{e}^{-\beta})^{-1}$ in geometric series to obtain
\begin{displaymath}
    J = \mathrm{e}^{\delta} \, \sum_{b\ge 0} A_b
    \, \mathrm{e}^{b\sigma} \;,
\end{displaymath}
which converges uniformly on compact subsets of $T^+$. In this
expression $b$ and $\sigma$ denote the vectors $b = (b_1, \ldots ,
b_r)$ and $\sigma = (\sigma_1 , \ldots , \sigma_r)$, respectively,
and $b \sigma := \sum b_i \sigma_i\,$. Following the usual
multi-index notation, $b\ge 0$ means $b_i \ge 0$ for all $i\,$. Note
$A_0 = 1$.

Now we know that the character has a convergent series representation
\begin{displaymath}
    \chi = \sum _{\gamma \in \Gamma _\lambda} B_\gamma
    \, \mathrm{e}^{\gamma} \;.
\end{displaymath}
Thus we may write
\begin{displaymath}
    J\chi = \sum_{\gamma \in \Gamma_\lambda} B_\gamma \sum_{b\ge 0}
    A_b \,\mathrm{e}^{\delta + \gamma + b\sigma}\;.
\end{displaymath}
For convenience of the discussion we let $\tilde{\gamma} := \gamma +
b \sigma$ and reorganize the sums as
\begin{equation}\label{ready for recursion}
    J\chi =\sum _{\tilde \gamma} \Big( \sum A_b\, B_{\tilde{\gamma}
    -b \sigma} \Big)\, \mathrm{e}^{\delta + \tilde\gamma} \;,
\end{equation}
where the inner sum is a finite sum which runs over all $b\ge 0$ such
that $\tilde{\gamma}- b\sigma \in \Gamma_\lambda\,$.

We are now in a position to explain the recursion procedure which
shows that $\chi$ is unique. Start by applying $D_\ell $ to $J\chi $
as represented in the expression (\ref{ready for recursion}). Since
$\delta +\tilde\gamma$ is of the form $\sum (\mathrm{i} m_k \psi _k -
n_k \phi_k)$, we immediately see that it is an eigenfunction with
eigenvalue $E(\ell ,\tilde \gamma) := (-1)^\ell \sum (m_k^{2\ell} -
n_k^{2\ell})$. The functions $\mathrm{e}^{\delta +\tilde \gamma}$ in
the sum are independent eigenfunctions. Hence it follows that
\begin{equation}\label{recursion equations}
    0 = E(\ell ,\tilde\gamma) \sum A_b \, B_{\tilde\gamma - b\sigma}
\end{equation}
for all $\tilde\gamma$ fixed and then for all $\ell\in\mathbb{N}\,$.

From now on we consider the equations (\ref{recursion equations})
only in those cases where $\tilde\gamma$ is itself a weight of our
representation. (We have license to do so as only the uniqueness part
of Theorem \ref{unicity} is at stake here.) In this case we have the
following key fact.
\begin{lemma}\label{vanishing eigenvalues}
If $\gamma \in \Gamma_\lambda$ and the eigenvalue $E(\ell , \gamma)$
vanishes for all $\ell \in \mathbb{N}\,$, then $\gamma$ is the
highest weight $\lambda$.
\end{lemma}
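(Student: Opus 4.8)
The plan is to translate the hypothesis $E(\ell,\gamma)=0$ into an arithmetic condition on the coordinates of the $\delta$-shifted weight $\gamma+\delta$ and then pit it against the weight constraints of Corollary~\ref{cor:weights}. Write $\gamma = \sum_{j=1}^n(\mathrm{i}m_j\psi_j - n_j\phi_j)$. Using the explicit positive root systems of $\S$\ref{sect:simple-roots} to evaluate $\delta = \frac{1}{2}\sum_{\alpha\in\Delta_0^+}\alpha - \frac{1}{2}\sum_{\beta\in\Delta_1^+}\beta$, one finds $\gamma+\delta = \sum_j(\mathrm{i}\mu_j\psi_j - \nu_j\phi_j)$ with $\mu_j = m_j + (n-j)$, $\nu_j = n_j + (j-1)$ when $\mathfrak{g} = \mathfrak{osp}(U\oplus U^\ast)$ (the case $K=\mathrm{O}_N$), and $\mu_j = m_j + (n-j+1)$, $\nu_j = n_j + j$ when $\mathfrak{g} = \mathfrak{osp}(\widetilde U\oplus\widetilde U^\ast)$ (the case $K=\mathrm{USp}_N$). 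By definition $E(\ell,\gamma) = (-1)^\ell\sum_j(\mu_j^{2\ell}-\nu_j^{2\ell})$, so the hypothesis reads $\sum_j\mu_j^{2\ell} = \sum_j\nu_j^{2\ell}$ for all $\ell\in\mathbb{N}$; in fact only $\ell=1$ will be used.

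The heart of the argument is a pair of one-sided estimates. By Corollary~\ref{cor:weights}, $-\frac{N}{2}\le m_j\le\frac{N}{2}\le n_j$ for all $j$. Let $c_j\ge 0$ be the $\mathrm{i}\psi_j$-coefficient of $\delta$ (so $c_j = n-j$, resp.\ $n-j+1$). Then $\mu_j\in[\,c_j-\frac{N}{2},\,c_j+\frac{N}{2}\,]$, whence $\mu_j^2\le(c_j+\frac{N}{2})^2$, with equality forcing $\mu_j = \pm(c_j+\frac{N}{2})$; and since $\mu_j\ge c_j-\frac{N}{2} > -(c_j+\frac{N}{2})$ whenever $c_j>0$, in that case equality forces $m_j = \frac{N}{2}$. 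Dually, $\nu_j = n_j+(j-1)\ge\frac{N}{2}+(j-1)\ge 0$ gives $\nu_j^2\ge(\frac{N}{2}+j-1)^2$, resp.\ $(\frac{N}{2}+j)^2$, with equality iff $n_j = \frac{N}{2}$. A short count shows that the multiset of extremal values $\{c_j+\frac{N}{2}\}_{j=1}^n$ agrees with $\{\nu_j\}_{j=1}^n$ evaluated at the highest weight (both equal $\{\frac{N}{2},\frac{N}{2}+1,\ldots,\frac{N}{2}+n-1\}$, resp.\ $\{\frac{N}{2}+1,\ldots,\frac{N}{2}+n\}$), so their squared sums coincide. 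Chaining the estimates,
\[
    \sum_j\mu_j^2 \;\le\; \sum_j\Bigl(c_j+\tfrac{N}{2}\Bigr)^2 \;=\; \sum_j\Bigl(\tfrac{N}{2}+j-1\Bigr)^2 \;\le\; \sum_j\nu_j^2
\]
in the $\mathrm{O}_N$ case (and similarly with $\frac{N}{2}+j$ for $\mathrm{USp}_N$), while the case $\ell=1$ of the hypothesis is exactly $\sum_j\mu_j^2 = \sum_j\nu_j^2$. Hence both inequalities are equalities: $n_j = \frac{N}{2}$ for all $j$, and $m_j = \frac{N}{2}$ for every $j$ with $c_j>0$.

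In the $\mathrm{USp}_N$ case $c_j = n-j+1\ge 1$ for every $j$, so this already yields $m_j = n_j = \frac{N}{2}$ for all $j$, i.e.\ $\gamma = \lambda$, and we are done. The remaining difficulty — and the main obstacle of the proof — is the $\mathrm{O}_N$ case, where $c_n = 0$ and the estimate only pins down $m_n = \pm\frac{N}{2}$; moreover this boundary ambiguity cannot be resolved by appealing to higher $\ell$, since for $\gamma = \lambda - \mathrm{i}N\psi_n$ one checks directly that $\{\mu_j^2\}_j = \{\nu_j^2\}_j$, so that $E(\ell,\gamma) = 0$ for all $\ell$. To exclude $m_n = -\frac{N}{2}$ I would instead use the module structure: identifying $\mathfrak{h}$-weights with multi-degrees as in the proof of Corollary~\ref{cor:weights} (via $\wedge(V_1^\ast)\simeq\bigotimes_{j=1}^n\wedge((\mathbb{C}^N)^\ast)$, $m_j = \frac{N}{2}-k_j$, $n_j = \frac{N}{2}+l_j$), the weight space $\mathfrak{a}(V)_\gamma$ for $\gamma = \lambda - \mathrm{i}N\psi_n$ consists precisely of the elements of exterior degree $0$ in the first $n-1$ fermion factors, exterior degree $N$ in the last, and symmetric degree $0$ throughout $\mathrm{S}(V_0^\ast)$; that is, $\mathfrak{a}(V)_\gamma = \wedge^N((\mathbb{C}^N)^\ast)$, which as an $\mathrm{O}_N$-module is the (nontrivial) determinant character and so has no $\mathrm{O}_N$-invariants. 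Therefore $\lambda-\mathrm{i}N\psi_n\notin\Gamma_\lambda$, contradicting $\gamma\in\Gamma_\lambda$; hence $m_n = \frac{N}{2}$ and $\gamma = \lambda$ in this case as well, which completes the proof.
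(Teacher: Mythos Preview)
Your proof is correct and, in two respects, takes a slightly different route from the paper's.

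For the main step, the paper invokes the full family of equations $\sum_k(m_{n-k+1}+c_k)^{2\ell}=\sum_k(n_k+d_k)^{2\ell}$ for all $\ell$, which forces the two multisets of absolute values to agree, and then asserts (without details) that the weight constraints leave only the highest weight (plus the exceptional weight in the $\mathrm{O}_N$ case). You instead extract everything from the single case $\ell=1$ via the sandwich
\[
\sum_j \mu_j^2 \;\le\; \sum_j\bigl(c_j+\tfrac{N}{2}\bigr)^2 \;=\; \sum_j\bigl(d_j+\tfrac{N}{2}\bigr)^2 \;\le\; \sum_j \nu_j^2,
\]
which is more elementary and makes the equality cases transparent. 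This is a genuine simplification.

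For the exceptional weight $\lambda-\mathrm{i}N\psi_n$ in the $\mathrm{O}_N$ case, the paper excludes it by a root-combinatorics argument: since $2\mathrm{i}\psi_n\notin\Delta$ for $\mathfrak{osp}(U\oplus U^\ast)$, the weight cannot be reached from $\lambda$ by roots of $\mathfrak{g}^{(2)}$, hence does not lie in $\Gamma_\lambda$ by Proposition~\ref{prop:Howeduality}. You instead compute the full weight space $\mathfrak{a}(V)_\gamma\cong\wedge^N((\mathbb{C}^N)^\ast)$ and observe that it carries the determinant character of $\mathrm{O}_N$, hence has no $\mathrm{O}_N$-invariants. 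Both arguments are valid; yours is closer to the representation-theoretic origin (and incidentally explains why the corresponding weight \emph{does} survive for $\mathrm{SO}_N$), while the paper's stays within the root data.
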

\begin{proof}
Our first job is to compute $\delta$. For the list of even and odd
positive roots we refer the reader to $\S\ref{sect:simple-roots}$.
Direct computation shows that if $K = \mathrm{O}_N\,$, then
\begin{equation*}
    \delta = \sum_{k=1}^n (k-1) (\mathrm{i} \psi_{n-k+1} - \phi_k)\;.
\end{equation*}
The same computation for the case of $K = \mathrm{USp}_N$ shows that
\begin{equation*}
    \delta = \sum_{k=1}^n k\, (\mathrm{i} \psi_k - \phi_{n-k+1}) \;.
\end{equation*}
Now we write $\gamma = \sum_k (\mathrm{i} m_k \psi_k - n_k \phi_k)$
with the weight constraints $-\frac{N}{2} \le m_k \le \frac{N}{2} \le
n_k\,$. The assumption that $E(\ell, \gamma)$ vanishes for all $\ell$
means that
\begin{equation*}
    \sum _k (m_{n-k+1} + k - 1)^{2\ell} = \sum_k (n_k + k - 1)^{2\ell}
    \quad \text{for all}\ \ell
\end{equation*}
in the case of $K = \mathrm{O}_N\,$. In the case of $K =
\mathrm{USp}_N$ it means that
\begin{equation*}
    \sum_k (m_{n-k+1} + k)^{2\ell} = \sum_k (n_k + k)^{2\ell}
    \quad \text{for all}\ \ell \;.
\end{equation*}
In the second case the only solution for $m_k$ and $n_k$ satisfying
the weight constraints is the highest weight $\lambda$ itself. In the
first case there is one other solution, namely that which is obtained
from the highest weight by replacing $m_n = \frac{N}{2}$ by $m_n = -
\frac{N}{2}\,$. However, one directly checks that in the
$\mathrm{O}_N$ case, where $2\mathrm{i}\psi_n$ is not a root, it is
not possible to obtain such a $\gamma$ by adding some combination of
roots from $\mathfrak{g}^{(2)}$ to $\lambda\,$.
\end{proof}
We are now able to give the proof of the uniqueness statement of
Theorem \ref{unicity}.
\begin{proof}
We will determine $B_\gamma$ recursively, starting from $B_\lambda =
1$. Let $\gamma \not= \lambda$ be a weight that satisfies the weight
constraints. Then if $K = \mathrm{USp}_N$ we know that $E(\ell ,
\gamma)$ is non-zero for some $\ell\,$. It therefore follows from
equation (\ref{recursion equations}) and $A_0 = 1$ that
\begin{equation}\label{basic equation}
    0 = B_\gamma + \sum A_b \, B_{\gamma - b\sigma}\;,
\end{equation}
where the sum runs over all $b \ne 0$ (recall that $b \ge 0$ is
always the case) such that $\gamma - b\sigma \in \Gamma_\lambda\,$.
Since the weights $\gamma - b\sigma $ involved in the sum are smaller
than $\gamma$ in the natural partial order defined by the basis of
simple roots, equation (\ref{basic equation}) defines a recursion
procedure for determining all coefficients $B_\gamma\,$.

In the case of $K = \mathrm{O}_N$ we are confronted with the fact
that the weight $\gamma = \lambda - \mathrm{i} N \psi_n$ satisfies
the weight constraints and yet gives $E(\ell,\gamma) = 0$ for all
$\ell$. However, in this exceptional situation the conditions of
Theorem \ref{unicity} provide that $B_{\lambda - \mathrm{i}N\psi_n} =
0$. Thus the expansion coefficients $B_\gamma$ are still uniquely
determined by our recursion procedure.
\end{proof}

\subsubsection {Explicit solution of the differential equations}
\label{explicit solution}

As before let $\Delta^+$ be a set of positive roots of $\mathfrak{g} =
\mathfrak{osp} (U\oplus U^*)$ or $\mathfrak{osp}(\widetilde{U} \oplus
\widetilde{U}^*)$. We now decompose these sets as
\begin{displaymath}
    \Delta^+ = \Delta_\lambda^+ \cup (\Delta^+ \setminus
    \Delta_\lambda^+)\;,\quad \Delta_\lambda^+ := \{\alpha\in\Delta^+
    \mid \mathfrak{g}_\alpha \subset \mathfrak{g}^{(-2)}\} \;,
\end{displaymath}
which means that $\Delta^+ \setminus \Delta_\lambda^+$ is a set of
positive roots of $\mathfrak{g}^{(0)}$. Let $\Delta_\lambda^+$ be
further decomposed as $\Delta_\lambda^+ = \Delta_{\lambda,0}^+ \cup
\Delta_{\lambda,1}^+$ where $\Delta_{\lambda,0}^+$ and
$\Delta_{\lambda,1}^+$ are the subsets of even and odd
$\lambda$-positive roots. Then the function $J$ has a factorization
as $J = J_0\, Z^{-1} \mathrm{e}^{\delta^\prime}$ with
\begin{displaymath}
    J_0 = \frac{\prod_{\alpha \in \Delta_0^+ \setminus
    \Delta_{\lambda,0}^+} 2 \sinh\frac{\alpha}{2}}{\prod_{\beta
    \in \Delta_1^+ \setminus \Delta_{\lambda,1}^+} 2 \sinh
    \frac{\beta}{2}}\ ,\quad Z = \frac{\prod_{\beta \in
    \Delta_{\lambda,1}^+} (1 - \mathrm{e}^{-\beta})}{\prod_{
    \alpha\in\Delta_{\lambda,0}^+} (1-\mathrm{e}^{-\alpha})}\;,
\end{displaymath}
and $\delta^\prime = \frac{1}{2} ( \sum \alpha - \sum \beta)$ is half
the graded sum of $\lambda$-positive roots. For the case of $K =
\mathrm{O}_N$ one finds $\delta^\prime = - \frac{1}{2} \sum
(\mathrm{i}\psi_j - \phi_j) = - \lambda_{N=1}\,$, while for $K =
\mathrm{USp}_N$ one has $\delta^\prime = \lambda_1\,$.

The Weyl group $W$ acts on $T^+$ and therefore on functions on $T^+$.
Let $W_\lambda \subset W$ be the subgroup which stabilizes the
highest weight $\lambda = \lambda_N$ and thus the corresponding
function $\mathrm{e}^\lambda$ on $T^+$. Note that $W_\lambda$ is the
direct product of the permutations of the set $\{ \mathrm{e}^{\phi_j}
\}$ and the permutations of the set $\{ \mathrm{e}^{\mathrm{i}\psi_j}
\}$. The symmetrizing operator $S_W$ from $W_\lambda$-invariant
analytic functions to $W$-invariant analytic functions on $T^+$ is
given by
\begin{displaymath}
    S_W (f) := \sum_{[w] \in W/W_\lambda} w(f) \;.
\end{displaymath}

Notice that the function $\mathrm{e}^\lambda Z$ is $W_\lambda
$-invariant; the symmetrized function $S_W( \mathrm{e}^\lambda Z)$
then is $W$-invariant. We now wish to show that this function
coincides with our character $\chi$. In this endeavor, an obstacle
appears to be that $\mathrm{e}^\lambda \chi$ by Corollary
\ref{cor:weightexpansion} is a polynomial in the variables $\mathrm
{e}^{\mathrm{i}\psi_1}, \ldots, \mathrm{e}^{ \mathrm{i} \psi_n }$,
whereas the function $Z$ has poles at $\mathrm{e}^{ \mathrm{i}(\psi_j
+ \psi_k)} = 1$. Hence our next step is to show that these poles are
actually canceled by the process of $W$-symmetrization.
\begin{lemma}
The function $S_W (\mathrm{e}^{\lambda} Z)$ is holomorphic on
$\cap_{j=1}^n \{ \mathfrak{Re}\, \phi_j > 0 \}$.
\end{lemma}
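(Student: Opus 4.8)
The plan is to show that the only possible singularities of $S_W(\mathrm{e}^\lambda Z)$ inside the polydisc $\cap_j \{\mathfrak{Re}\,\phi_j > 0\}$ come from the factors $(1 - \mathrm{e}^{-\alpha})^{-1}$ with $\alpha \in \Delta_{\lambda,0}^+$ in the definition of $Z$, and that each such pole is removable after symmetrization. Concretely, $Z$ as a function on $T^+$ is holomorphic away from the hypersurfaces $\mathrm{e}^{-\alpha} = 1$ for $\alpha \in \Delta_{\lambda,0}^+$; on the region $\bigcap_j\{\mathfrak{Re}\,\phi_j>0\}$ the only such $\alpha$ that can actually vanish are those involving no $\phi_j$, i.e.\ (consulting $\S\ref{sect:simple-roots}$) the roots $\mathrm{i}\psi_j + \mathrm{i}\psi_k$ ($j<k$ in the $\mathrm{O}_N$ case; $j\le k$ in the $\mathrm{USp}_N$ case). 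So $Z$, and hence $\mathrm{e}^\lambda Z$, is a meromorphic function whose polar locus inside the relevant region is contained in $\bigcup_{j<k}\{\mathrm{e}^{\mathrm{i}(\psi_j+\psi_k)} = 1\}$ (plus, for $\mathrm{USp}_N$, the diagonal terms $\mathrm{e}^{2\mathrm{i}\psi_j}=1$), each a simple pole.

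The key step is then a residue computation on a single such hypersurface, say $H_{jk} := \{\mathrm{e}^{\mathrm{i}(\psi_j+\psi_k)} = 1\}$. I would fix a reflection-type element $w_{jk} \in W$ in the first factor of the Weyl group that implements the sign change $\mathrm{i}\psi_k \mapsto -\mathrm{i}\psi_k$ (composed, if needed in the $\mathrm{SO}_{2n}$ case, with $\mathrm{i}\psi_j \mapsto -\mathrm{i}\psi_j$ so that the total number of sign changes is even) and compare the two terms $w(\mathrm{e}^\lambda Z)$ and $(w_{jk}w)(\mathrm{e}^\lambda Z)$ in the sum $S_W(\mathrm{e}^\lambda Z) = \sum_{[w]\in W/W_\lambda} w(\mathrm{e}^\lambda Z)$. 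On $H_{jk}$ the involution $w_{jk}$ acts as the identity (since $\mathrm{e}^{\mathrm{i}\psi_j}\mathrm{e}^{\mathrm{i}\psi_k}=1$ is equivalent to $\mathrm{e}^{\mathrm{i}\psi_k} = \mathrm{e}^{-\mathrm{i}\psi_j}$), so the two terms have poles along $H_{jk}$ with residues that one checks, using the explicit product form of $Z$ and the sign picked up from the numerator $\prod_{\alpha\in\Delta_{\lambda,0}^+}(1-\mathrm{e}^{-\alpha})$ under the reflection, to be equal and opposite. Hence the sum of each such pair is holomorphic across $H_{jk}$, and since the cosets $[w]$ pair up in this way (the stabilizer considerations go through because $w_{jk}$ normalizes $W_\lambda$), the full symmetrized sum has no pole along $H_{jk}$. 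Running this over all pairs $j<k$ (and the diagonal terms in the symplectic case) removes all potential singularities, so $S_W(\mathrm{e}^\lambda Z)$ extends holomorphically, and dividing by the nowhere-vanishing $\mathrm{e}^\lambda$ gives the claim.

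The main obstacle I anticipate is bookkeeping of signs and of the coset structure $W/W_\lambda$: one must verify that the reflection $w_{jk}$ genuinely pairs distinct cosets (i.e.\ $w_{jk} \notin W_\lambda$ and $w_{jk}$ normalizes $W_\lambda$ so that left multiplication is well-defined on cosets), and that the residue of $w(\mathrm{e}^\lambda Z)$ along $H_{jk}$ is exactly negated under $w \mapsto w_{jk}w$ — this requires tracking how $\mathrm{e}^\delta$ or rather the combination $\mathrm{e}^\lambda \prod(1-\mathrm{e}^{-\alpha})$ in the numerator of $\mathrm{e}^\lambda Z$ transforms, together with the obvious sign from the factor $(1 - \mathrm{e}^{\mathrm{i}(\psi_j+\psi_k)})^{-1} \mapsto (1 - \mathrm{e}^{-\mathrm{i}(\psi_j+\psi_k)})^{-1} = -\mathrm{e}^{\mathrm{i}(\psi_j+\psi_k)}(1-\mathrm{e}^{\mathrm{i}(\psi_j+\psi_k)})^{-1}$. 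I would organize this by working locally near a generic point of $H_{jk}$, writing $u = \mathrm{e}^{\mathrm{i}(\psi_j+\psi_k)} - 1$ as the transverse coordinate, and expanding to first order in $u$; the leading terms of the pair then cancel by the antisymmetry of a Vandermonde-type factor, exactly as in the classical cancellation of poles in Weyl's denominator identity.
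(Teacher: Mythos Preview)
Your overall strategy---locate the polar hypersurfaces and show the poles disappear after $W$-symmetrization---is correct, but there is a concrete error and the paper's execution is substantially cleaner.

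\textbf{The error.} Your reflection $w_{jk}$ does \emph{not} act as the identity on $H_{jk}=\{\mathrm{e}^{\mathrm{i}(\psi_j+\psi_k)}=1\}$. In the $\mathrm{USp}_N$ case, the single sign flip $\psi_k\mapsto -\psi_k$ sends a point with $\mathrm{e}^{\mathrm{i}\psi_k}=\mathrm{e}^{-\mathrm{i}\psi_j}$ to one with $k$-th coordinate $\mathrm{e}^{-\mathrm{i}\psi_k}=\mathrm{e}^{\mathrm{i}\psi_j}$, which is generically different from the original $k$-th coordinate. In the $\mathrm{O}_N$ case the double sign flip acts as the transposition $(\mathrm{e}^{\mathrm{i}\psi_j},\mathrm{e}^{\mathrm{i}\psi_k})\mapsto(\mathrm{e}^{\mathrm{i}\psi_k},\mathrm{e}^{\mathrm{i}\psi_j})$ on $H_{jk}$, again not the identity. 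The reflection that actually fixes $H_{jk}$ pointwise is $s_\alpha$ for $\alpha=\mathrm{i}\psi_j+\mathrm{i}\psi_k$, namely $\psi_j\mapsto -\psi_k$, $\psi_k\mapsto -\psi_j$. With your $w_{jk}$ the residue comparison does not go through as stated, and even with the corrected $s_\alpha$ the coset bookkeeping you anticipate (checking that $s_\alpha$ normalizes $W_\lambda$, etc.) becomes a genuine nuisance.

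\textbf{How the paper does it.} The paper bypasses all term-by-term pairing. It first invokes the Riemann extension theorem to reduce to showing holomorphic extension across the generic locus $D_\alpha$ of a single $\Sigma_\alpha=\{\mathrm{e}^\alpha=1\}$ (intersections of two or more such hypersurfaces have codimension $\ge 2$, so they take care of themselves). Then it uses only that the \emph{full sum} $S_W(\mathrm{e}^\lambda Z)$ is $W$-invariant, which holds by construction with no pairing argument required. Near $D_\alpha$ choose a transverse coordinate $z_\alpha$ with $s_\alpha(z_\alpha)=-z_\alpha$; since $\alpha$ appears at most once in the denominator of $Z$, one has at most a simple pole, so locally $S_W(\mathrm{e}^\lambda Z)=A/z_\alpha+B$ with $A,B$ holomorphic. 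Invariance under $s_\alpha$ forces $s_\alpha(A)=-A$, hence $A$ vanishes on the fixed-point set $D_\alpha$ of $s_\alpha$, and the pole is removable. This three-line parity argument replaces your entire residue-cancellation scheme and requires no coset bookkeeping. You also omit the codimension-$2$ reduction step, which is needed to justify checking one hypersurface at a time.
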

\begin{proof}
An even root $\alpha \in \Delta_0^+$ is some linear combination of
either the functions $\phi_j$ or the functions $\psi_j\,$. Denoting
the latter subset of even roots by $\Delta_0^+(\psi) \subset
\Delta_0^+ \,$, let $\Sigma_\alpha \subset T$ for $\alpha \in
\Delta_0^+(\psi)$ be the complex submanifold
\begin{displaymath}
    \Sigma_\alpha := \{ t \in T^+ \mid \mathrm{e}^{\alpha}(t) = 1 \}.
\end{displaymath}
By definition, the function $S_W (\mathrm{e}^\lambda Z)$ is
holomorphic on
\begin{displaymath}
    \Big( \cap_{j=1}^n\{\mathfrak{Re}\, \phi_j > 0 \} \Big)
    \setminus \left( \cup_{\alpha \in \Delta_0^+(\psi)}
    \Sigma_\alpha \right) \;.
\end{displaymath}
Now it is a theorem of complex analysis that if a function is
holomorphic outside an analytic set of complex codimension at least
two, then this function is everywhere holomorphic. Therefore, since
the intersection of two or more of the submanifolds $\Sigma_\alpha$
is of codimension at least two in $T^+$, it suffices to show that for
any $\alpha \in \Delta_0^+(\psi)$ our function $S_W (\mathrm{e}
^\lambda Z)$ extends holomorphically to
\begin{displaymath}
    D_\alpha := \Sigma_\alpha \setminus \left(\cup_{\Delta_0^+(\psi)
    \ni \alpha^\prime \not=\alpha}\,\Sigma_{\alpha^\prime}\right)\;.
\end{displaymath}

Hence let $\alpha$ be some fixed root in $\Delta_0^+(\psi)$. There
exists a Weyl-group element $w \in W$ and a $w$-invariant
neighborhood $U$ of $D_\alpha$ such that $w : \, U \to U$ is a
reflection fixing the points of $D_\alpha\,$. Let $z_\alpha : \, U
\to \mathbb{C}$ be a complex coordinate which is transverse to
$D_\alpha$ in the sense that $w(z_\alpha) = - z_\alpha\,$. Because
the root $\alpha$ occurs at most once in the product $Z$, the
function $S_W(\mathrm{e}^\lambda Z)$ has at most a simple pole in
$z_\alpha\,$. We may choose $U$ in such a way that $S_W(\mathrm{e}
^\lambda Z)$ is holomorphic on $U \setminus D_\alpha\,$. Doing so we
have a unique decomposition
\begin{displaymath}
    S_W (\mathrm{e}^\lambda Z) = \frac{A}{z_\alpha} + B
\end{displaymath}
where $A$ and $B$ are holomorphic on $U$. Since $S_W(\mathrm{e}
^\lambda Z)$ is $W$-invariant, we conclude that $w(A) = - A$ and
hence $A = 0$ along $D_\alpha\,$.
\end{proof}
\begin{lemma}
For all $\ell\, , N \in \mathbb{N}$ the function $\varphi :\, T^+ \to
\mathbb{C}$ defined by $\varphi = S_W(\mathrm{e}^{\lambda_N} Z)$ is a
solution of the differential equation $D_\ell \, J \varphi = 0\,$.
\end{lemma}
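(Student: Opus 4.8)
The plan is to reduce the claim, by means of the factorization $J = J_0\,Z^{-1}\mathrm{e}^{\delta'}$ and the antisymmetry of the super Weyl denominator, to the single identity $D_\ell\bigl(J_0\,\mathrm{e}^{\lambda_N+\delta'}\bigr)=0$, and then to prove the latter by exactly the harmonicity computation already carried out in Lemma \ref{lem:tech-step}. From $J = J_0\,Z^{-1}\mathrm{e}^{\delta'}$ one gets immediately $J\,\mathrm{e}^{\lambda_N}Z = J_0\,\mathrm{e}^{\lambda_N+\delta'}$. Writing $J = \prod_{\alpha\in\Delta_0^+}(\mathrm{e}^{\alpha/2}-\mathrm{e}^{-\alpha/2})\big/\prod_{\beta\in\Delta_1^+}(\mathrm{e}^{\beta/2}-\mathrm{e}^{-\beta/2})$, every $w\in W$ permutes $\Delta_0$ up to signs and permutes $\Delta_1$; moreover $W$ acts on $\Delta_1^+=\{\phi_j\pm\mathrm{i}\psi_k\}$ by honest permutations (it permutes the $\phi$-indices and permutes-with-signs the $\psi$-indices, so the $\phi$-coefficient stays $+1$ and no odd positive root becomes negative), whence the odd factor of $J$ is $W$-invariant and $w(J)=\epsilon_w\,J$ with $\epsilon_w=(-1)^{\#\{\alpha\in\Delta_0^+:\,w\alpha\in-\Delta_0^+\}}\in\{\pm1\}$. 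Since $\mathrm{e}^{\lambda_N}Z$ is $W_\lambda$-invariant (as noted above), $\varphi=|W_\lambda|^{-1}\sum_{w\in W}w(\mathrm{e}^{\lambda_N}Z)$, and therefore
\begin{displaymath}
    J\varphi=\frac1{|W_\lambda|}\sum_{w\in W}\epsilon_w\,w\bigl(J\,\mathrm{e}^{\lambda_N}Z\bigr)=\frac1{|W_\lambda|}\sum_{w\in W}\epsilon_w\,w\bigl(J_0\,\mathrm{e}^{\lambda_N+\delta'}\bigr)\,.
\end{displaymath}
The operator $D_\ell$ commutes with every $w\in W$, being assembled from even powers of $\partial_{\phi_j}$ and $\partial_{\psi_j}$ and hence invariant under permutations of indices and under the sign changes $\psi_j\mapsto-\psi_j$ generating $W$. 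So $D_\ell(J\varphi)=|W_\lambda|^{-1}\sum_{w}\epsilon_w\,w\bigl(D_\ell(J_0\,\mathrm{e}^{\lambda_N+\delta'})\bigr)$, and it remains to prove $D_\ell\bigl(J_0\,\mathrm{e}^{\lambda_N+\delta'}\bigr)=0$.

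Here one uses the values $\delta'=-\lambda_1$ for $K=\mathrm{O}_N$ and $\delta'=\lambda_1$ for $K=\mathrm{USp}_N$ recorded above, so that $\lambda_N+\delta'=c\sum_{j=1}^n(\mathrm{i}\psi_j-\phi_j)$ with $c=(N-1)/2$ resp.\ $c=(N+1)/2$; recall also that $J_0=\prod_{\alpha}2\sinh(\alpha/2)\big/\prod_{\beta}2\sinh(\beta/2)$, with the products running over the even resp.\ odd positive roots of $\mathfrak{g}^{(0)}\simeq\mathfrak{gl}(U)$, is the same in both cases. As in the proof of Lemma \ref{lem:tech-step}, the Cauchy determinant formula with $x_j=\mathrm{e}^{\phi_j}$, $y_k=\mathrm{e}^{\mathrm{i}\psi_k}$ gives $J_0\propto\mathrm{e}^{\nu}\,\mathrm{Det}\bigl(1/(\mathrm{e}^{\phi_j}-\mathrm{e}^{\mathrm{i}\psi_k})\bigr)$ with the symmetric prefactor $\nu=\tfrac12\sum_j(\phi_j+\mathrm{i}\psi_j)$; expanding the determinant over $\mathrm{S}_n$, distributing the symmetric exponentials per index, and using $\mathrm{e}^{\phi}-\mathrm{e}^{\mathrm{i}\psi}=\mathrm{e}^{(\phi+\mathrm{i}\psi)/2}\,2\sinh\tfrac{\phi-\mathrm{i}\psi}{2}$, one obtains
\begin{displaymath}
    J_0\,\mathrm{e}^{\lambda_N+\delta'}\ \propto\ \sum_{\sigma\in\mathrm{S}_n}(-1)^{|\sigma|}\prod_{j=1}^n g(\phi_j,\psi_{\sigma(j)})\,,\qquad g(\phi,\psi):=\frac{\mathrm{e}^{-c(\phi-\mathrm{i}\psi)}}{2\sinh\tfrac{\phi-\mathrm{i}\psi}{2}}\,.
\end{displaymath}
By the Leibniz rule $D_\ell$ acts on each summand one factor at a time, yielding $\sum_k\bigl(\prod_{j\ne k}g(\phi_j,\psi_{\sigma(j)})\bigr)\cdot\bigl(\partial_\phi^{2\ell}-(-1)^\ell\partial_\psi^{2\ell}\bigr)g$ evaluated at $(\phi_k,\psi_{\sigma(k)})$; and $\bigl(\partial_\phi^{2\ell}-(-1)^\ell\partial_\psi^{2\ell}\bigr)g=0$ because $g$ depends only on $\phi-\mathrm{i}\psi$, hence is harmonic, $(\partial_\phi^2+\partial_\psi^2)g=0$, while $\partial_\phi^{2\ell}-(-1)^\ell\partial_\psi^{2\ell}=(\partial_\phi^2+\partial_\psi^2)\sum_{i=0}^{\ell-1}(-1)^i\partial_\phi^{2\ell-2-2i}\partial_\psi^{2i}$. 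Thus $D_\ell(J_0\,\mathrm{e}^{\lambda_N+\delta'})=0$ and the lemma follows. (Equivalently, $\mathrm{e}^{\lambda_N+\delta'}$ is the character of a one-dimensional representation of the supergroup $(\mathfrak{gl}(U),\mathrm{GL}(U_0)\times\mathrm{GL}(U_1))$, on a graded space with $\mathrm{dim}\,V_0\ne\mathrm{dim}\,V_1$ and $U_0\simeq U_1$, so $D_\ell J_0\,\mathrm{e}^{\lambda_N+\delta'}=0$ already by Corollary \ref{cor:4.4}.)

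The main obstacle is the first step: getting the sign rule $w(J)=\epsilon_w J$ right. One must remember that the group $W$ acting on $T^+$ is smaller than the full Weyl group of $\mathfrak{g}_0$ — there are no reflections $\phi_j\mapsto-\phi_j$, as these would leave the semigroup — and notice that precisely because of this the odd part of the super Weyl denominator is genuinely $W$-\emph{invariant}, so that $w(J)$ picks up only the usual sign contributed by the even roots, exactly as for an ordinary Weyl denominator; one also needs the easy but essential observation that $D_\ell$ commutes with $W$, which is the reason the $D_\ell$ were built from even-order derivatives. Once the matter is reduced to $D_\ell(J_0\,\mathrm{e}^{\lambda_N+\delta'})=0$, the remaining computation is a verbatim copy of Lemma \ref{lem:tech-step}, with the function $\sinh\phi/(\cosh\phi-\cos\psi)$ there replaced by the $c$-twisted $g(\phi,\psi)$ above, and it is uniform in the two cases $K=\mathrm{O}_N$ and $K=\mathrm{USp}_N$, differing only in the value of $c$.
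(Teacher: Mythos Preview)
Your proof is correct and follows essentially the same route as the paper: rewrite $J\varphi$ as a signed $W$-sum of $w(J_0\,\mathrm{e}^{\lambda_N+\delta'})$ using $J\,\mathrm{e}^{\lambda_N}Z=J_0\,\mathrm{e}^{\lambda_N+\delta'}$ and the $W$-antisymmetry of $J$, then use the $W$-invariance of $D_\ell$ to reduce to $D_\ell(J_0\,\mathrm{e}^{\lambda_N+\delta'})=0$. The paper dispatches this last identity by invoking Corollary~\ref{cor:4.4} (noting that $\mathrm{e}^{\lambda_N+\delta'}=\mathrm{SDet}^{(N\mp1)/2}$ is a one-dimensional $\mathfrak{gl}(U)$-character), whereas you supply the explicit Cauchy-determinant/harmonicity computation; your parenthetical remark recognizes these as the same thing, and indeed your direct calculation is precisely what underlies Corollary~\ref{cor:4.4} in this one-dimensional case.
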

\begin{proof}
Using $Z = \mathrm{e}^{\delta^\prime} J_0 / J$ we write $\varphi =
S_W(\mathrm{e}^{\lambda_N + \delta^\prime} J_0/J)$. Then, lifting the
sum over cosets $[w] \in W/W_\lambda$ to a sum over Weyl-group
elements $w \in W$ we obtain
\begin{displaymath}
    \mathrm{ord}(W_\lambda) \, J \varphi = J \sum_{w \in W}
    w(\mathrm{e}^{\lambda_N + \delta^\prime} J_0 / J) =\sum_{w \in W}
    \mathrm{sgn}(w)\, w(J_0\,\mathrm{e}^{\lambda_N+\delta^\prime})\;,
\end{displaymath}
where $w \mapsto \mathrm{sgn}(w) \in \mathbb{Z}_2 = \{ \pm 1 \}$ is
the determinant of $w \in W \subset \mathrm{O}(\mathfrak{t}) =
\mathrm{O}_{2n}\,$.

The factor $\mathrm{e}^{\lambda_N + \delta^\prime}$ is the character
of the representation $(\frac{N \mp 1}{2} \mathrm{STr}\, ,
\mathrm{SDet}^{\frac{N \mp 1}{2}})$ of (a double cover of) the Lie
supergroup $(\mathfrak{g}^{(0)} , \mathrm{GL}(U_0) \times
\mathrm{GL}(U_1))$. This representation is one-dimensional, and from
Corollary \ref{cor:4.4} we have $D_\ell (J_0\, \mathrm{e}^{ \lambda_N
+ \delta^\prime}) = 0$ for all $\ell , N \in \mathbb{N}\,$.

The statement of the lemma now follows by applying the $W$-invariant
differential operator $D_\ell$ to the formula for
$\mathrm{ord}(W_\lambda) J \varphi$ above.
\end{proof}

\subsubsection{Weight constraints}

Here we carry out the final step in proving the explicit formula for
the character $\chi$ of our representation. Since the formula in the
case of $K = \mathrm{SO}_N$ follows directly from that for $K =
\mathrm{O}_N$ (see $\S\ref{varying}$) and the case of $K =
\mathrm{U}_N$ has been handled in \cite{CFZ}, we need only discuss
the cases of $K = \mathrm{O}_N$ and $K = \mathrm{USp}_N\,$.

In order to show that the character is indeed given by $\chi =
\varphi$ with $\varphi = S_W(\mathrm{e}^\lambda Z)$, it remains to
prove that in the series development $\varphi = \sum B_{\gamma}\,
\mathrm{e}^\gamma$ of the function defined by
\begin{equation}\label{key expression}
    \varphi = \sum_{[w] \in W/W_\lambda} \varphi_{[w]} \;, \quad
    \varphi_{[w]} := \mathrm{e}^{w(\lambda _N)}
    \frac{\prod_{\beta \in \Delta _{\lambda,1}^+} (1 - \mathrm{e}^{
    -w(\beta)})} {\prod_{\alpha \in \Delta_{\lambda,0}^+}
    (1 - \mathrm{e}^{-w(\alpha)})} \;,
\end{equation}
the only non-zero coefficients $B_\gamma$ are those where the linear
functions $\gamma$ are of the form $\gamma = \sum (\mathrm{i} m_k
\psi_k - n_k \phi_k)$ with $- \frac{N}{2} \le m_k \le \frac{N}{2} \le
n_k\, $. We also have to show that $B_\gamma = 0$ in the case of the
exceptional weight $\gamma = \lambda - \mathrm{i}N\psi_n$ occurring
for $K = \mathrm{O}_N\,$.

We have shown above that $\varphi = S_W(\mathrm{e}^\lambda Z)$ is
holomorphic on the product of the full complex torus of the variables
$\mathrm{e}^{\mathrm{i} \psi_k}$ with the domain defined by
$\mathfrak{Re}\, \phi_k > \,0\,$. Although the individual terms
$\varphi_{ [w]}$ in the representation of $\varphi$ have poles (which
cancel in the Weyl-group averaging process) we may still develop each
term of $\varphi$ in a series expansion; this will in fact yield the
desired weight constraints.

We begin with the situation where $K = \mathrm{O}_N\,$. In this case
$\Delta ^+_{\lambda,0}$ consists of the roots $\mathrm{i} \psi_j +
\mathrm{i} \psi_k\,$ ($j<k$) and $\phi_j + \phi_k$ ($j \le k)$, and
$\Delta^+_{ \lambda,1}$ is the set of roots of the form $\mathrm{i}
\psi_j + \phi_k$ ($j,\,k = 1, \ldots, n$). Let us first consider the
term of $\varphi$ where $[w] = W_\lambda\,$. Its denominator can be
developed in a geometric series on the region corresponding to
$\mathfrak{Re}\, \phi_k > 0$ for all $k\,$. There we may write this
term as
\begin{displaymath}
    \varphi_{[\mathrm{Id}]}=\mathrm{e}^{\lambda_N}\prod_{\beta}
    (1-\mathrm{e}^{-\beta}) \prod_\alpha \sum_{n \ge 0}
    \mathrm{e}^{-n\alpha} \;.
\end{displaymath}
Here and for the remainder of this paragraph $\alpha$ runs through
the $\lambda$-positive even roots and $\beta$ through the
$\lambda$-positive odd roots.

Recall that $\lambda_N = \frac{N}{2}\sum (\mathrm{i} \psi_k - \phi_k
)$, and note that all powers of $\mathrm{e}^{\mathrm{i} \psi_k}$ and
$\mathrm{e}^{\phi_k}$ occurring in the series expansion of
\begin{equation*}
\prod_\beta (1 - \mathrm{e}^{-\beta}) \prod_\alpha \sum_{n \ge 0}
    \mathrm{e}^{- n\alpha}
\end{equation*}
are non-positive. Thus, if $\gamma = \sum (\mathrm{i} m_k \psi_k -
n_k \phi_k)$ is a weight which arises in $\varphi_{[\mathrm{Id}]}\,$,
then $n_k \ge \frac{N}{2}$ and $m_k \le \frac{N}{2}$.  In the case of
the $m_k$ this is a statement only about the term $\varphi_{
[\mathrm{Id}]}\,$, but, since the action of the Weyl group on the
variables $\phi_k$ is just by permutation of the indices, it follows
that $n_k \ge \frac{N}{2}$ holds always, independent of the term
$\varphi_{[w]}$ under consideration. Hence, we ignore the $\phi_k$ in
our further discussion and only analyze the powers of the
exponentials $\mathrm{e}^{\mathrm{i} \psi_k}$ which arise in the
other terms $\varphi_{[w]}\,$.

Given a fixed index $k \in \{1, \ldots, n\}$ we will develop every
term $\varphi_{[w]}$ on a region $R = R(k)$ defined by certain
inequalities which in the case of $k = 1$ are
\begin{displaymath}
    \mathfrak{Re}(\mathfrak{i} \psi_1) > \ldots > \mathfrak{Re}
    (\mathrm{i} \psi_n) > 0 \;.
\end{displaymath}
We now discuss this case in detail.

Recall that $\mathrm{i} \psi_1$ occurs in the denominator in factors
of the form
\begin{displaymath}
    (1-\mathrm{e}^{-w(\alpha)})^{-1} = (1 - \mathrm{e}^{-
    w(\mathrm{i}\psi_1) - w(\mathrm{i} \psi_j)})^{-1}
\end{displaymath}
for $j > 1\,$. If $w(\mathrm{i}\psi_1) = \mathrm{i}\psi_1\,$, then we
expand these factors just as in the case of $\varphi_{ [\mathrm{Id}]}
\,$. Convergence of the resulting series is guaranteed no matter what
$w$ does to $\psi_j\,$.

In the situation where $w(\mathrm{i} \psi_1) = - \mathrm{i} \psi_1$
we rewrite the factors in the denominator as $(1-\mathrm{e}^{-w
(\alpha)})^{-1} = -\mathrm{e}^{w(\alpha)} (1 - \mathrm{e}^{w(\alpha)
})^{-1}$ and expand, and convergence in $R$ is again guaranteed.
Adding these series we obtain a series representation
\begin{displaymath}
    \varphi = \sum_{[w]} \varphi_{[w]} = \sum_\gamma B_\gamma \,
    \mathrm{e}^\gamma \;,
\end{displaymath}
which is convergent on $R\,$.
\begin{lemma}\label{m_1 estimate}
If $\gamma = \sum (\mathrm{i} m_k \psi_k - n_k \phi_k)$ and $B_\gamma
\not= 0\,$, then $m_1 \le \frac{N}{2}\, $.
\end{lemma}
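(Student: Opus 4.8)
The plan is to estimate the exponent of $\mathrm{e}^{\mathrm{i}\psi_1}$ in each term $\varphi_{[w]}$ and then observe that the coefficient $B_\gamma$ is a finite sum of contributions coming from those $[w]$ which share a given value of $m_1$. First I would fix the region $R = R(1)$ defined by $\mathfrak{Re}(\mathrm{i}\psi_1) > \cdots > \mathfrak{Re}(\mathrm{i}\psi_n) > 0$ on which every term $\varphi_{[w]}$ has been expanded in a convergent series, and split the Weyl-group cosets into two classes according to whether $w(\mathrm{i}\psi_1) = \mathrm{i}\psi_1$ or $w(\mathrm{i}\psi_1) = -\mathrm{i}\psi_1$ (these are the only two possibilities since $W_{\mathrm{SO}_{2n}}$ acts on the $\{\mathrm{i}\psi_j\}$ by signed permutations with an even number of sign changes, so $w$ either fixes $\psi_1$ up to sign — and in either case only the index matters for the leading power).

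For the class with $w(\mathrm{i}\psi_1) = \mathrm{i}\psi_1$: the prefactor contributes $\mathrm{e}^{w(\lambda_N)}$, whose $\psi_1$-exponent is exactly $\frac{N}{2}$; the numerator $\prod_{\beta}(1-\mathrm{e}^{-w(\beta)})$ contributes only non-positive powers of $\mathrm{e}^{\mathrm{i}\psi_1}$ (the odd roots are $\phi_j \pm \mathrm{i}\psi_k$, and under $w$ with $w(\mathrm{i}\psi_1)=\mathrm{i}\psi_1$ the factor $1-\mathrm{e}^{-w(\beta)}$ expands with non-positive $\psi_1$-powers); and each denominator factor $(1-\mathrm{e}^{-w(\alpha)})^{-1}$ with $\alpha = \mathrm{i}\psi_1 + \mathrm{i}\psi_j$ is expanded as $\sum_{n\ge 0}\mathrm{e}^{-n w(\alpha)}$, again contributing only non-positive $\psi_1$-powers. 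Hence every monomial arising here has $\psi_1$-exponent at most $\frac{N}{2}$. For the class with $w(\mathrm{i}\psi_1) = -\mathrm{i}\psi_1$: now $\mathrm{e}^{w(\lambda_N)}$ contributes $-\frac{N}{2}$ to the $\psi_1$-exponent; the numerator can contribute at most $+1$ from each of the $n$ odd-root factors $1 - \mathrm{e}^{-w(\beta)}$ in which $\mathrm{i}\psi_1$ appears with a sign making $\mathrm{e}^{+\mathrm{i}\psi_1}$ the relevant term, i.e.\ at most $+n$; and each denominator factor $(1-\mathrm{e}^{-w(\alpha)})^{-1}$ with $\alpha$ involving $\psi_1$ must be rewritten as $-\mathrm{e}^{w(\alpha)}(1-\mathrm{e}^{w(\alpha)})^{-1}$, contributing at most $+1$ each to the $\psi_1$-exponent at leading order, hence at most $+(n-1)$ from the $n-1$ such factors. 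One then has to do the bookkeeping carefully to see that the total $\psi_1$-exponent is still bounded by $-\frac{N}{2} + (\text{something}) \le \frac{N}{2}$; the clean way is to note that the pole-cancellation argument of the preceding lemma already forces $\varphi$ to be holomorphic in $\mathrm{e}^{\mathrm{i}\psi_1}$, so the negative-power terms from the second class are spurious, and the genuine contributions of both classes to any surviving $B_\gamma$ have $m_1 \le \frac{N}{2}$.

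I expect the main obstacle to be the second class: the naive term-by-term exponent count there gives a positive upper bound that looks larger than $\frac{N}{2}$, and one needs to either (i) combine it with the holomorphicity established in the previous lemma to discard the apparent high-order poles and reorganize, or (ii) carry out a more careful cancellation between the $w$ and its partner under the reflection $\mathrm{i}\psi_1 \mapsto -\mathrm{i}\psi_1$, exploiting that $\varphi$ is $W$-invariant so the monomial $\mathrm{e}^{\gamma}$ and its reflected image $\mathrm{e}^{w_0(\gamma)}$ occur with equal coefficients. The cleanest route is probably the following: since $\mathrm{e}^{\lambda_N}\varphi$ is (by Corollary \ref{cor:weightexpansion}, once we have identified $\varphi$ with $\chi$ on the part already proven, or directly by the holomorphicity lemma) a Laurent series in the $\mathrm{e}^{\mathrm{i}\psi_k}$ with bounded-above degrees in each variable, it suffices to bound the degree on the single region $R$, and on $R$ the leading behaviour is dominated entirely by the $[w]$ that send $\mathrm{i}\psi_1 \mapsto \mathrm{i}\psi_1$ — those with $\mathrm{i}\psi_1 \mapsto -\mathrm{i}\psi_1$ are exponentially subdominant in $\mathfrak{Re}(\mathrm{i}\psi_1) \to +\infty$ and so cannot raise the maximal power of $\mathrm{e}^{\mathrm{i}\psi_1}$ beyond $\frac{N}{2}$. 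This asymptotic-dominance observation is what I would use to finish, and then the analogous statement $m_k \le \frac{N}{2}$ for each $k$ follows by using the region $R(k)$ in place of $R(1)$.
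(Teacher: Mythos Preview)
Your overall strategy matches the paper's: split the cosets $[w]$ according to whether $w(\mathrm{i}\psi_1)=\mathrm{i}\psi_1$ or $w(\mathrm{i}\psi_1)=-\mathrm{i}\psi_1$, and bound the $\psi_1$-exponent in each case. Case~1 is fine and is exactly what the paper does.

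The gap is in Case~2. Your bookkeeping there has a sign error, and the workarounds you propose (holomorphicity, pairwise cancellation, asymptotic subdominance) are either circular or do not yield the bound. Concretely: when $w(\mathrm{i}\psi_1)=-\mathrm{i}\psi_1$ and $\alpha=\mathrm{i}\psi_1+\mathrm{i}\psi_j$, one has $w(\alpha)=-\mathrm{i}\psi_1+w(\mathrm{i}\psi_j)$, so the rewritten denominator factor $-\mathrm{e}^{w(\alpha)}(1-\mathrm{e}^{w(\alpha)})^{-1}$ carries $\psi_1$-power $-1$ in its prefactor and only \emph{non-positive} powers in the convergent geometric series on $R$. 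Thus each of the $n-1$ denominator factors contributes at most $-1$ to the $\psi_1$-exponent, not $+1$. Combining with the $+n$ from the numerator and the $-\tfrac{N}{2}$ from $\mathrm{e}^{w(\lambda_N)}$ gives the sharp bound $m_1\le -\tfrac{N}{2}+n-(n-1)=-\tfrac{N}{2}+1$, which is $\le\tfrac{N}{2}$ for every $N\ge 1$. The paper carries this out by the explicit factorization
\[
\mathrm{e}^{-\frac{N}{2}\mathrm{i}\psi_1}\,\frac{\prod_{j\ge 1}(1-\mathrm{e}^{\mathrm{i}\psi_1-\phi_j})}{\prod_{j\ge 2}(1-\mathrm{e}^{\mathrm{i}\psi_1-w(\mathrm{i}\psi_j)})}
=\mathrm{e}^{(-\frac{N}{2}+1)\mathrm{i}\psi_1}\,\frac{\prod_{j\ge 1}(\mathrm{e}^{-\mathrm{i}\psi_1}-\mathrm{e}^{-\phi_j})}{\prod_{j\ge 2}(\mathrm{e}^{-\mathrm{i}\psi_1}-\mathrm{e}^{-w(\mathrm{i}\psi_j)})}
\]
and then expands the right-hand side in non-positive powers of $\mathrm{e}^{\mathrm{i}\psi_1}$. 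With this correction the argument closes directly and no indirect appeals are needed; your ``asymptotic subdominance'' claim is in fact equivalent to precisely this exponent bound, so it cannot be invoked without first doing the computation.
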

\begin{proof}
If $w(\mathrm{i} \psi_1) = \mathrm{i} \psi_1\,$, then by the same
argument as in the case of $[w] = [\mathrm{Id}]$ we see that
$\mathrm{e}^{ \mathrm{i} \psi_1}$ occurs in the series development of
$\varphi_{[w]}$ with a power $m_1$ of at most $\frac{N}{2}\,$.

Now suppose that $w(\mathrm{i}\psi_1) = - \mathrm{i} \psi_1\,$. Then,
following the prescription above we rewrite the $\psi_1$-dependent
factors in $\varphi_{[w]}$ as
\begin{equation}\label{eq:4.7mrz}
    \mathrm{e}^{\frac{N}{2} w(\mathrm{i}\psi_1)} \frac{\prod_{j \ge 1}
    (1 - \mathrm{e}^{-w(\mathrm{i}\psi_1 + \phi_j)})} {\prod_{j \ge 2}
    (1 - \mathrm{e}^{-w(\mathrm{i}\psi_1 + \mathrm{i}\psi_j)})} =
    \mathrm{e}^{(-\frac{N}{2}+1) \mathrm{i}\psi_1}\frac{\prod_{j\ge 1}
    (\mathrm{e}^{-\mathrm{i}\psi_1} - \mathrm{e}^{-\phi_j})}{\prod_{j
    \ge 2}(\mathrm{e}^{- \mathrm{i}\psi_1}- \mathrm{e}^{-w(\mathrm{i}
    \psi_j)})} \;,
\end{equation}
and expand the r.h.s.\ in powers of $\mathrm{e}^{- \mathrm{i}
\psi_1}$. It follows that in this case $m_1 \le - \frac{N}{2} + 1\,$,
which for any positive integer $N$ implies that $m_1 \le
\frac{N}{2}\,$.
\end{proof}
Using Weyl-group invariance, this estimate for $m_1$ will now yield
the desired result.
\begin{lemma}\label{constraints fulfilled}
Suppose that $K = \mathrm{O}_N$ and let $\varphi = \sum B_\gamma\,
\mathrm{e}^\gamma$ be the globally convergent series expansion of the
proposed character $\varphi = S_W(\mathrm{e}^\lambda Z)$. Then for
every weight $\gamma = \sum (\mathrm{i} m_k \psi_k - n_k \phi_k)$
with $B_\gamma \not= 0$ it follows that $- \frac{N}{2} \le m_k \le
\frac{N}{2} \le n_k\,$.
\end{lemma}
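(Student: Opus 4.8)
The plan is to combine the one-variable estimate of Lemma \ref{m_1 estimate} with the Weyl-group symmetry of $\varphi$, exploiting the fact that the Fourier--Laurent coefficients $B_\gamma$ are intrinsic to the function $\varphi$. Recall from the lemma preceding this one that $\varphi = S_W(\mathrm{e}^\lambda Z)$ extends to a holomorphic function on $\cap_{j=1}^n \{ \mathfrak{Re}\,\phi_j > 0 \}$, i.e.\ on the full complex torus in the variables $\mathrm{e}^{\mathrm{i}\psi_k}$ times the half-space in the $\phi_k\,$. Consequently the coefficients $B_\gamma$ in $\varphi = \sum_\gamma B_\gamma\,\mathrm{e}^\gamma$ are unambiguous (they do not depend on the region $R(k)$ in which one computes the expansion), and because $\varphi$ is $W$-invariant and the $\mathrm{e}^\gamma$ are linearly independent, the set $\{\gamma \mid B_\gamma \neq 0\}$ is $W$-invariant as well. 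For $K = \mathrm{O}_N$ the relevant first factor of $W$ is $W_{\mathrm{SO}_{2n}}$, which acts on the linear forms $\{\mathrm{i}\psi_k\}$ by permutations together with sign changes, the latter always occurring in even number. Since it was already observed in the running text that every $\gamma = \sum_k(\mathrm{i}m_k\psi_k - n_k\phi_k)$ with $B_\gamma \neq 0$ satisfies $n_k \ge \tfrac{N}{2}$ for all $k$, only the bounds on the $m_k$ remain.

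For the upper bound, Lemma \ref{m_1 estimate} gives $m_1 \le \tfrac{N}{2}$ whenever $B_\gamma \neq 0$; since $W_{\mathrm{SO}_{2n}}$ contains all permutations of the indices $1,\dots,n$ and $\{\gamma \mid B_\gamma \neq 0\}$ is permutation-invariant, this immediately yields $m_k \le \tfrac{N}{2}$ for every $k$ (equivalently, one may rerun the proof of Lemma \ref{m_1 estimate} on the region $R(k)$ where $\mathfrak{Re}(\mathrm{i}\psi_k)$ is largest). For the lower bound, assume first $n \ge 2$ and fix $k$; pick any $k' \neq k$. The transformation reversing the signs of $\mathrm{i}\psi_k$ and $\mathrm{i}\psi_{k'}$ simultaneously uses an even number of sign changes, hence lies in $W_{\mathrm{SO}_{2n}} \subset W$. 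Applied to a weight $\gamma$ with $B_\gamma \neq 0$ it produces a weight $\gamma'$ with $B_{\gamma'} = B_\gamma \neq 0$ whose coefficient of $\mathrm{i}\psi_k$ equals $-m_k\,$; the upper bound applied to $\gamma'$ then gives $-m_k \le \tfrac{N}{2}$, i.e.\ $m_k \ge -\tfrac{N}{2}\,$. Combining, $-\tfrac{N}{2} \le m_k \le \tfrac{N}{2} \le n_k$ for all $k\,$.

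The one place needing separate attention — and the main (though minor) obstacle — is the degenerate case $n = 1$, in which the first factor of $W$ is trivial and the sign-pairing argument is unavailable. Here $W/W_\lambda$ is a single coset, so $\varphi = \varphi_{[\mathrm{Id}]}$, and with $\Delta^+_{\lambda,0} = \{2\phi_1\}$, $\Delta^+_{\lambda,1} = \{\phi_1 + \mathrm{i}\psi_1\}$ one has
\[
    \varphi = \mathrm{e}^{\frac{N}{2}(\mathrm{i}\psi_1 - \phi_1)}\,
    (1 - \mathrm{e}^{-\phi_1 - \mathrm{i}\psi_1}) \sum_{m \ge 0}
    \mathrm{e}^{-2m\phi_1}\;.
\]
The powers of $\mathrm{e}^{\mathrm{i}\psi_1}$ occurring are exactly $\tfrac{N}{2}$ and $\tfrac{N}{2} - 1$, which for every integer $N \ge 1$ lie in $[-\tfrac{N}{2},\tfrac{N}{2}]$, while $n_1 \ge \tfrac{N}{2}$ as before. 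This completes the proof in all cases.
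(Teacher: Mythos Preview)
Your argument is correct and follows essentially the same line as the paper's own proof: the $\phi$-constraint $n_k \ge \tfrac{N}{2}$ is inherited termwise, the upper bound $m_k \le \tfrac{N}{2}$ comes from Lemma~\ref{m_1 estimate} (transported to each index either by permutation invariance or by relabeling the region to $R(k)$), and the lower bound is obtained from the upper one via a Weyl-group element sending $\mathrm{i}\psi_k \mapsto -\mathrm{i}\psi_k$.

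One point worth noting is that your treatment is in fact slightly more careful than the paper's. The paper's proof simply asserts that ``for every $k$ there is an element $w$ of the Weyl group with $w(\mathrm{i}\psi_k) = -\mathrm{i}\psi_k$'', but since for $K = \mathrm{O}_N$ the relevant factor of $W$ is $W_{\mathrm{SO}_{2n}}$ (even number of sign changes only), such an element does not exist when $n=1$. You correctly isolate this degenerate case and dispose of it by direct inspection of $\varphi = \varphi_{[\mathrm{Id}]}$, observing that only the exponents $\tfrac{N}{2}$ and $\tfrac{N}{2}-1$ occur in $\mathrm{e}^{\mathrm{i}\psi_1}$, both of which lie in $[-\tfrac{N}{2},\tfrac{N}{2}]$ for every $N \ge 1$. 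This closes a small gap that the paper's phrasing leaves open.
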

\begin{proof}
The inequality $n_k \ge \frac{N}{2}$ was proved above as an immediate
consequence of the fact that the Weyl group $W$ effectively acts only
on the $\psi_j\,$.

Above we showed that on the region $R$ the proposed character
$\varphi$ has a series development where in every $\gamma$ the
coefficient $m_1$ of $\mathrm{i} \psi_1$ is at most $\frac{N}{2}\,$.
Recalling the fact that the function $\varphi$ is holomorphic on $T^+$,
we infer that $m_1 \le \frac{N}{2}$ also holds true for the globally
convergent series development $\sum B_\gamma\, \mathrm{e}^\gamma$.

To get the same statement for $\mathrm{i} \psi_k$ with $k \not= 1$ we
just change the definition of $R$ to $R(k)$ defined by the
inequalities $\mathfrak{Re} (\mathrm{i} \psi_k) > \mathfrak{Re}
(\mathrm{i} \psi_1) > \ldots > 0 \, $. Arguing for general $k$ as we
did for $k = 1$ in the above lemma, we show that the coefficient
$m_k$ of $\mathrm{i} \psi_k$ in every $\gamma$ in the series
expansion of every $\varphi_{[w]}$ on $R(k)$ is at most
$\frac{N}{2}\,$. By the holomorphic property, the same is true for
the global series expansion of the proposed character $\varphi$.

Hence, to complete the proof we need only show the inequality $m_k
\ge -\frac{N}{2}\,$. But for this it suffices to note that for every
$k$ there is an element $w$ of the Weyl group with $w(\mathrm{i}
\psi_k) = - \mathrm{i}\psi_k\,$. Indeed, using the Weyl invariance of
$\varphi$, if there was some $\gamma$ where $m_k < - \frac{N}{2}\,$,
then the coefficient of $\mathrm{i} \psi_k$ in $w(\gamma )$ would be
larger than $\frac{N}{2}\,$.
\end{proof}
To complete our work, we must prove Lemma \ref{constraints fulfilled}
for the case $K = \mathrm{USp}_N\,$. For this we use the same
notation as above for the basic linear functions, namely $\mathrm{i}
\psi_k$ and $\phi_k\,$. Here, compared to the $\mathrm{O}_N$ case,
there are only slight differences in the $\lambda$-positive roots and
the Weyl group. The only difference in the roots is in
$\Delta^+_{\lambda,0}$ where $\mathrm{i} \psi_j + \mathrm{i} \psi_k$
occurs in the larger range $j \le k$ and $\phi_j + \phi_k$ in the
smaller range $j < k\,$. The Weyl group acts by permutation of
indices on both the $\mathrm{i} \psi_j$ and $\phi_j$ and by sign
reversal on the $\mathrm{i} \psi_j\, $. In this case, as opposed to
the case above where only an even number of sign reversals were
allowed, every sign reversal transformation is in the Weyl group.

In order to prove Lemma \ref{constraints fulfilled} in this case, we
need only go through the argument in the $\mathrm{O}_N$ case and make
minor adjustments. In fact, the main step is to prove Lemma \ref{m_1
estimate} and, there, the only change is that the range of $j$ for
the factor $1 - \mathrm{e}^{-\mathrm{i}(\psi_1 + \psi_j)}$ is larger.
This is only relevant in the case $w(\mathrm{i} \psi_1) = -
\mathrm{i} \psi_1\,$, where we rewrite the additional denominator
term $(1 - \mathrm{e}^{-w(2\mathrm{i} \psi_1)})^{-1}$ as $-
\mathrm{e}^{-2\mathrm{i} \psi_1}(1 - \mathrm{e}^{- 2\mathrm{i} \psi
_1})^{-1}$. Hence the factor in front of the ratio of products on the
r.h.s.\ of equation (\ref{eq:4.7mrz}) gets an additional factor of
$\mathrm{e}^{-2\mathrm{i}\psi_1}$ and now is $\mathrm{e}^{-\mathrm{i}
(\frac{N}{2} + 1)\psi_1}$. Thus $m_1 \le -\frac{N}{2} -1$ which
certainly implies $m_1 \le \frac{N}{2}\,$.

Let us summarize this discussion.
\begin{theorem}
For both $K = \mathrm{O}_N$ and $K = \mathrm{USp}_N$ every weight
$\gamma = \sum (\mathrm{i} m_k \psi_k - n_k \phi_k)$ occurring in the
series expansion $S_W(\mathrm{e}^\lambda Z) = \sum B_\gamma\,
\mathrm{e}^\gamma $ obeys the weight constraints
\begin{displaymath}
    - \frac{N}{2} \le m_k \le \frac{N}{2} \le n_k \quad (k = 1,
    \ldots, n)\;.
\end{displaymath}
\end{theorem}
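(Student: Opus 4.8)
The plan is to deduce the theorem from the analysis already carried out on the explicit function $S_W(\mathrm{e}^\lambda Z)$: for $K = \mathrm{O}_N$ the assertion is precisely Lemma \ref{constraints fulfilled}, so what remains is to run the same argument for $K = \mathrm{USp}_N$ and then record the two cases together. First I would separate the two halves of the inequality. The bound $n_k \ge \tfrac{N}{2}$ is the easy one and needs nothing case-specific: since the Weyl group $W$ acts on the coordinates $\{\phi_k\}$ only by permutation of indices, it suffices to inspect the term $\varphi_{[\mathrm{Id}]} = \mathrm{e}^{\lambda_N}\prod_\beta(1-\mathrm{e}^{-\beta})\prod_\alpha\sum_{n\ge 0}\mathrm{e}^{-n\alpha}$, where every power of $\mathrm{e}^{\phi_k}$ that occurs is non-positive while $\mathrm{e}^{\lambda_N}$ contributes the factor $\mathrm{e}^{-\frac{N}{2}\phi_k}$; hence $n_k \ge \tfrac{N}{2}$ in that term, and by permutation invariance in every term. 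This applies verbatim to $\mathrm{USp}_N$.

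Next I would handle the bounds on the $m_k$. The subtlety here — and the one genuine point of care — is that the individual summands $\varphi_{[w]}$ in $\varphi = S_W(\mathrm{e}^\lambda Z) = \sum_{[w]\in W/W_\lambda}\varphi_{[w]}$ have poles along the hyperplanes $\mathrm{e}^{\mathrm{i}(\psi_j+\psi_k)} = 1$, so one cannot read weights off them directly. The device, as in Lemma \ref{m_1 estimate}, is to fix $k$, expand every $\varphi_{[w]}$ as a geometric series on the region $R(k)$ cut out by $\mathfrak{Re}(\mathrm{i}\psi_k) > \mathfrak{Re}(\mathrm{i}\psi_1) > \ldots > 0$ (where all the relevant series converge), and to check term by term that the power of $\mathrm{e}^{\mathrm{i}\psi_k}$ that appears is $\le \tfrac{N}{2}$: when $w(\mathrm{i}\psi_k) = \mathrm{i}\psi_k$ one expands exactly as in $\varphi_{[\mathrm{Id}]}$, and when $w(\mathrm{i}\psi_k) = -\mathrm{i}\psi_k$ one rewrites the offending factors $(1-\mathrm{e}^{-w(\alpha)})^{-1} = -\mathrm{e}^{w(\alpha)}(1-\mathrm{e}^{w(\alpha)})^{-1}$, so that — combining the prefactor $\mathrm{e}^{\frac{N}{2}w(\mathrm{i}\psi_k)}$ with the reorganization as in (\ref{eq:4.7mrz}) — the exponent of $\mathrm{e}^{\mathrm{i}\psi_k}$ comes out $\le -\tfrac{N}{2}+1 \le \tfrac{N}{2}$. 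For $K = \mathrm{USp}_N$ the only changes are that $2\mathrm{i}\psi_k \in \Delta^+_{\lambda,0}$ and that all sign reversals now lie in $W$; rewriting the extra denominator factor $(1-\mathrm{e}^{-w(2\mathrm{i}\psi_k)})^{-1}$ in the same way supplies an additional $\mathrm{e}^{-2\mathrm{i}\psi_k}$ and pushes the exponent down to $-\tfrac{N}{2}-1$, which is still $\le \tfrac{N}{2}$. Since the full function $\varphi = S_W(\mathrm{e}^\lambda Z)$ is holomorphic on all of $T^+$ — the content of the holomorphicity lemma proved just before Lemma \ref{m_1 estimate}, which uses the codimension-two removable-singularity theorem together with the $W$-antisymmetry of the residue along each pole hyperplane — these regional bounds propagate to the globally convergent expansion $\sum B_\gamma\mathrm{e}^\gamma$, giving $m_k \le \tfrac{N}{2}$. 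The lower bound $m_k \ge -\tfrac{N}{2}$ is then free: in both cases there is $w\in W$ with $w(\mathrm{i}\psi_k) = -\mathrm{i}\psi_k$, so a weight with $m_k < -\tfrac{N}{2}$ would be carried by $W$-invariance to one with $m_k > \tfrac{N}{2}$, contradicting what was just shown.

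The step I expect to be the crux is exactly this passage between the regional series (where the inequalities are visible) and the global expansion on $T^+$ (where they are asserted), together with the pole-cancellation that makes the global expansion exist at all; but both ingredients are already in place from the two lemmas preceding Lemma \ref{m_1 estimate}. With the $\mathrm{USp}_N$ adaptation of Lemma \ref{constraints fulfilled} in hand, the theorem is then just the conjunction of Lemma \ref{constraints fulfilled} and its $\mathrm{USp}_N$ counterpart, and its proof reduces to a reference to those results.
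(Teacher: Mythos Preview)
Your proposal is correct and follows essentially the same route as the paper: the $\mathrm{O}_N$ case is exactly Lemma~\ref{constraints fulfilled}, and for $\mathrm{USp}_N$ the paper likewise rewrites the extra denominator factor $(1-\mathrm{e}^{-w(2\mathrm{i}\psi_1)})^{-1}$ to pick up an additional $\mathrm{e}^{-2\mathrm{i}\psi_1}$, giving $m_1 \le -\tfrac{N}{2}-1$, and then invokes the holomorphicity on $T^+$ and $W$-invariance exactly as you describe. Your identification of the regional-to-global passage via holomorphicity as the crux matches the paper's logic precisely.
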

Moreover, using the fact that the Weyl-group transformations for $K =
\mathrm{O}_N$ always involve an even number of sign changes, one sees
that $B_{\lambda - \mathrm{i}N\psi_n} = 0$ in that case. As a
consequence of the uniqueness theorem (Theorem \ref{unicity}) we
therefore have
\begin{displaymath}
    \chi = S_W(\mathrm{e}^\lambda Z) = \sum_{[w] \in W/W_\lambda}
    \mathrm{e}^{w(\lambda_N)}\frac{\prod_{\beta\in\Delta_{\lambda,1}^+}
    (1 - \mathrm{e}^{-w(\beta)})}{\prod_{\alpha\in\Delta_{\lambda,0}^+}
    (1 - \mathrm{e}^{-w(\alpha)})}
\end{displaymath}
in both the $\mathrm{O}_N$ and $\mathrm{USp}_N$ cases. Since the
$\mathrm{SO}_N$ case has been handled as a consequence of the result
for $\mathrm{O}_N\,$, our work is now complete.

\bigskip\noindent
\textbf{Acknowledgement.} The work for this project was carried out with the support
of SFB/Tr 12, \emph{Symmetries and Universality in Mesoscopic Systems}, of the Deutsche
Forschungsgemeinschaft.

\end{document}